\documentclass[11pt]{article}
\usepackage[titletoc,title]{appendix}
\usepackage{amsmath,amsfonts,amssymb,bm,amsthm}
\usepackage[colorlinks,citecolor=blue,linkcolor=blue,pagebackref]{hyperref}
\usepackage{tikz}
\usetikzlibrary{shapes,arrows}
\usepackage[all,cmtip]{xy}
\usetikzlibrary{intersections,shapes.arrows}
\usetikzlibrary{decorations.pathreplacing}
\usetikzlibrary{arrows.meta}
\usepackage{xcolor}
\definecolor{bostonred}{rgb}{0.8, 0.0, 0.0}
\usepackage{hyperref}
\hypersetup{
    colorlinks=true,
    linkcolor=bostonred,
    urlcolor=bostonred,
    citecolor=bostonred
}

\numberwithin{equation}{section}
\newtheorem{theorem}{Theorem}[section]
\newtheorem{lemma}[theorem]{Lemma}
\newtheorem{proposition}[theorem]{Proposition}
\newtheorem{corollary}[theorem]{Corollary}

\theoremstyle{definition}
\newtheorem{example}[theorem]{Example}
\newtheorem{definition}[theorem]{Definition}
\newtheorem{assumption}[theorem]{Assumption}
\theoremstyle{remark}
\newtheorem{remark}[theorem]{Remark}

\DeclareFontFamily{U}{mathx}{}
\DeclareFontShape{U}{mathx}{m}{n}{<-> mathx10}{}
\DeclareSymbolFont{mathx}{U}{mathx}{m}{n}
\DeclareMathAccent{\widecheck}{0}{mathx}{"71}

\usepackage{multicol}
\usepackage{longtable}
\usepackage{slashed}
\usepackage{accents}
\usepackage[T1]{fontenc}
\usepackage{pifont}
\usepackage[top=1in, bottom=1in, left=1in, right=1in]{geometry}
\usepackage{fancyhdr}
\definecolor{amber}{rgb}{0.2, 0.2, 0.6}

\def\mcM{{\mathcal{M}}}

\def\bH{{\mathbb{H}}}
\def\bR{{\mathbb{R}}}
\usepackage{enumerate}
\usepackage{xcolor,cancel}
\usepackage{relsize}

\renewcommand{\tilde}{\widetilde}
\usepackage{orcidlink}

\allowdisplaybreaks

\usepackage{pifont}
\renewcommand*{\backrefalt}[4]{%
\ifcase #1 %
No citations%
\or
\ding{43}~p.~#2%
\else
\ding{43}~pp.~#2%
\fi}
\usepackage{soul}
\begin{document}

\title{The Hadamard parametrix on globally hyperbolic spacetimes with Robin boundary conditions: Fundamental solutions and Hadamard states}
\author{
Beatrice Costeri\,\orcidlink{0009-0004-6594-5926}\thanks{
Dipartimento di Fisica ``Alessandro Volta'',
Universit\`a degli Studi di Pavia \& INFN, Sezione di Pavia \& INdAM, Sezione di Pavia, 
Via Bassi 6, 
I-27100 Pavia, 
Italia;
beatrice.costeri01@universitadipavia.it; \url{https://sites.google.com/view/beatricecosteri/}
}
\and
Claudio Dappiaggi\,\orcidlink{0000-0002-3315-1273}\thanks{Dipartimento di Fisica ``Alessandro Volta'',
Universit\`a degli Studi di Pavia \& INFN, Sezione di Pavia \& INdAM, Sezione di Pavia, 
Via Bassi 6, 
I-27100 Pavia, 
Italia;
claudio.dappiaggi@unipv.it,
\url{https://claudiodappiaggi.com}} 
\and 
Benito A. Ju\'arez-Aubry\,\orcidlink{https://orcid.org/0000-0002-7739-4293}
\thanks{Department of Mathematics, University of York, Ian Wand Building, Deramore Lane, York YO10 5GH, UK; benito.juarezaubry@york.ac.uk. \url{https://bajuarezaubry.wordpress.com/}}
}


\date{}

\maketitle

\vspace{-.6cm}

\begin{abstract}
On a $d$-dimensional, $d\geq 2$, globally hyperbolic spacetime with a timelike boundary $(\mcM,g)$, we investigate the Klein-Gordon equation with Robin boundary conditions. First of all, we prove existence and uniqueness of the advanced and retarded fundamental solutions, discussing in addition their structural properties. Secondly, under the hypothesis of infinitesimal convexity of $\partial\mcM$ and of local finiteness of boundary reflections for broken null bicharacteristics, we characterize their wavefront set in the interior $\mathring{\mcM}=\mcM\setminus\partial\mcM$, using the propagation of singularities theorem of Melrose and Sj\"ostrand \cite[Thm. 4.1]{Melrose_1978}.

In the second part of the paper, we introduce a notion of Robin-Hadamard two-point correlation function as a positive distribution on $\mcM\times\mcM$ with prescribed antisymmetric part and singular structure, the latter being characterized in terms of its wavefront set. We complement this definition with a local Hadamard form. Up to a smooth remainder, this consists of the standard directed Hadamard parametrix together with additional contributions associated with the reflected null rays. Their coefficients are determined by transport equations supplemented by Robin matching conditions at the boundary. The construction applies to an arbitrary, but finite number of reflections. We prove that this local formulation is equivalent to the wavefront set condition, thereby establishing a counterpart of Radzikowski's theorem \cite{Radzikowski_1996, Radzikowski_1996_1} in the presence of a timelike boundary. Finally, starting from the ground state on suitable static backgrounds and using a deformation argument, we prove existence of Robin-Hadamard two-point correlation functions on the class of spacetimes considered.

\

{\bf Keywords:} Hadamard states, Robin boundary conditions, microlocal analysis

\

{\bf 2020 MSC classes: } 81T20, 35E05

\end{abstract}

\tableofcontents

\allowdisplaybreaks

\section{Introduction}\label{Sec: Introduction} 
The aim of this work is to investigate the Klein-Gordon equation subject to Robin boundary conditions, on globally hyperbolic spacetimes with timelike boundary, see \cite{Ak_Hau_2020}. More precisely, given a $d$-dimensional, $d\geq 2$, globally hyperbolic spacetime $(\mcM,g)$ with $\partial\mcM\neq\emptyset$, we consider the mixed initial-boundary value problem
\begin{equation}
\label{Eq: mixed problem intro}
\begin{cases}
Pu\doteq (\Box_g-m^2)u=f, \qquad f \in C^\infty_0(\mcM),\\
B_\kappa u\doteq(\nabla_n+\kappa)u\vert_{\partial\mcM}=0,\\
u \vert_{\Sigma_{t_0}} = u_0,\;
\nabla_{\partial_t}u\vert_{\Sigma_{t_0}}=u_1, \qquad u_0, u_1 \in C^\infty_0(\mathring{\Sigma}_{t_0}),
\end{cases}
\end{equation}
where $m^2\doteq m_0^2+\xi R(x)$. Here $R$ is the scalar curvature of $\mcM$ and $\xi\in\mathbb R$, while $n$ denotes the inward-pointing unit normal to $\partial\mcM$. Throughout most of the analysis we take $\kappa\in\mathbb R$ to be constant. This restriction is mainly adopted for simplicity. As a matter of fact, most of the analysis extends, \emph{mutatis mutandis}, to $\kappa\in C^\infty(\partial\mcM)$, while additional restrictions arise only when static backgrounds and ground states are considered.

Boundary conditions of this kind arise naturally whenever a classical or quantum field is confined to a region with a timelike boundary and its interplay with the latter has to be incorporated directly into the dynamics. Robin conditions are particularly useful in this respect. They provide a versatile class of local boundary conditions, encompassing those of Neumann type as the special case $\kappa=0$. while they formally approach the Dirichlet ones in the large-$|\kappa|$ limit. At the same time, by imposing a linear relation between the field and its normal derivative, they model several physically relevant situations, including imperfectly reflecting interfaces \cite{Bellucci_2015} and localized surface interactions \cite{Albuq_2004, Bordag_2002}.

From the perspective of algebraic quantum field theory \cite{brunetti2, Fewster:2013lqa, Rejzner}, a first fundamental step towards quantization is the construction of the advanced and retarded fundamental solutions $G^\pm_\kappa\in\mathcal D'(\mcM\times\mcM)$ associated to Equation \eqref{Eq: mixed problem intro} and, hence, compatible with the prescribed boundary condition. As a matter of fact, their difference,
\begin{equation*}
G_\kappa\doteq G^-_\kappa-G^+_\kappa,
\end{equation*}
is the causal, or Pauli--Jordan, propagator. This is one of the key ingredients in every covariant quantization scheme for Bosonic field theories since it codifies the canonical commutation relations of the underlying field algebra through the induced symplectic structure on the space of classical solutions. Hence the analysis of the mixed problem in Equation~\eqref{Eq: mixed problem intro} and the construction of its fundamental solutions provide the classical analytic backbone of the quantum theory.

The Cauchy problem for the Klein-Gordon equation with Robin boundary conditions has also been investigated in \cite{Ginoux_2022} within the framework of Friedrichs systems. In particular, Robin boundary conditions are shown there to define admissible boundary conditions for the forward Cauchy problem under suitable positivity and sign assumptions. Previous constructions of the corresponding propagators in settings relevant to quantum field theory have instead relied on additional geometric assumptions, most notably the existence of an irrotational timelike Killing vector fields, see \cite{Dappiaggi-Drago_2019}. One of the purposes of the present work is indeed to remove this restriction.

The presence of a timelike boundary, however, affects considerably more than the well-posedness of the classical problem. It modifies the spectral properties of the theory and it gives rise to boundary-induced effects such as vacuum polarisation and Casimir-type phenomena \cite{Fang_2022, Strohmaier_2026}. Most notably for the analysis developed here, it alters the propagation of singularities. More precisely, a singularity travelling along a null bicharacteristic can reach $\partial\mcM$, be reflected, and return to the interior. The singular structure of the propagators and of the correlation functions must therefore retain information not only on direct null propagation but also on the reflected dynamics at the boundary. This has been already investigated in \cite{Costeri_25} though in the simplified setting of a $d$-dimensional half-Minkowski spacetime.

Once the algebra of observables has been constructed, one is faced with the further problem of selecting a physically meaningful class among the infinitely many algebraic states compatible with the dynamics. On globally hyperbolic spacetimes without boundary, this role is played by quasifree Hadamard states, whose two-point correlation functions possess a universal singular structure \cite{Brunetti_1996}. Besides providing the curved-spacetime counterpart of the ultraviolet behaviour of the Minkowski vacuum, the Hadamard condition is the microlocal requirement underlying both the construction of local Wick polynomials \cite{Hollands_2001,Hollands_2002} and of renormalized observables such as the stress-energy tensor \cite{Moretti2003,CosteriDappiaggiGoi2026} and the formulation of perturbatively interacting quantum field theories \cite{Rejzner}.

In the presence of a timelike boundary, the standard Hadamard singularity associated with direct null propagation in the bulk is no longer sufficient. Singularities reaching $\partial\mcM$ are reflected back into the interior and propagate along broken null bicharacteristics. This phenomenon has been thoroughly studied in the literature since the late seventies, see, for instance, \cite{Melrose_1978,Melrose_1982,Taylor_1978}. In the present setting, it gives rise to two related problems. The first is at microlocal level since one has to identify precisely the singular relation generated by the reflected null dynamics in order to determine the wavefront set of the fundamental solutions. The second is instead at a local level since one has to understand whether the usual Hadamard parametrix can be supplemented by additional contributions associated with reflected null propagation and how their coefficients are determined by the geometry and by the Robin boundary condition. We refer to \cite{Costeri_25} for a preliminary investigation on half-Minkowski spacetime, while an earlier attempt focusing on Dirichlet boundary conditions on spatial domains of the form $\bR\times\Omega$ with $\Omega\subset\bR^n$ and $\partial\Omega\neq\emptyset$ can be found in \cite{Chazarin}. These questions are relevant independently of the applications to quantum field theory, since they concern the propagation and local structure of singularities for hyperbolic boundary value problems. From the viewpoint of quantum field theory, however, they acquire an additional significance. The reflected Hadamard coefficients encode the universal, state-independent part of the boundary-induced singularity and therefore control, in particular, the universal contribution to the divergence of local observables such as the improved stress-energy tensor when the boundary is approached. The existence of quasifree Hadamard states for the Klein-Gordon field on globally hyperbolic spacetimes with timelike boundary, subject to Dirichlet boundary conditions, has recently been established in \cite{Contini_2026}, together with a propagation result for positive-energy microlocal splittings. Related microlocal formulations in the presence of timelike or conformal boundaries have also been developed in the context of anti-de Sitter and asymptotically anti-de Sitter spacetimes, see, \textit{e.g.}, \cite{Dappiaggi:2017wvj, Dappiaggi-Marta_2021, Dybalski:2018egv,  Enciso_2026, Gannot_2022, Wrochna}.

\vskip .2cm

Moving from half-Minkowski spacetime as in \cite{Costeri_25} to a generic globally hyperbolic spacetime with timelike boundary requires a geometric description of reflected propagation which does not rely on the existence of a distinguished reflection symmetry. To this end, throughout most of the paper we work under a hierarchy of geometric assumptions, whose roles are different and which we shall invoke only when they are needed.

The first is infinitesimal convexity of $\partial\mcM$, see Assumption \ref{Ass: Infinitesimally convex}. In the setting considered here, this is equivalent to requiring that no geodesic reaching $\partial\mcM$ from the interior does so tangentially, see Proposition~\ref{Prop: no-glancing}. In particular, every reflection is transversal and the corresponding reflection times and reflection points depend smoothly on the initial data. This allows us to describe the reflected dynamics locally in terms of smooth families of broken geodesics. We stress that infinitesimal convexity plays no role in the existence of the advanced and retarded fundamental solutions. It enters instead when studying their microlocal properties and, subsequently, when constructing a local Hadamard form adapted to reflected propagation. We choose this weaker condition rather than strict convexity also because it includes relevant examples such as half-Minkowski spacetime. 

A second assumption concerns the possibility of accumulating reflections. Even if every individual interaction with the boundary is transversal, a broken null bicharacteristic could in principle undergo infinitely many reflections within a bounded interval of the global time coordinate. We exclude this possibility by imposing the local finiteness condition of Assumption \ref{Ass: No Zeno points}. Together with infinitesimal convexity, it guarantees that the reflected propagation relevant to the wavefront set can be described by finite broken null bicharacteristics. Notice that these two assumptions are logically distinct. Yet, under stronger hypotheses, such as strict null convexity as discussed in \cite{HintzUhlmann}, local finiteness is automatically guaranteed. This implication is no longer true under the weaker convexity condition adopted here and, therefore, we must impose it independently.

These assumptions already suffice for the analysis of propagation of singularities carried out in Section \ref{Sec: Wavefront Set of the Propagators}. The construction of a local Hadamard parametrix requires instead an additional ingredient. A pair of points in the interior may be connected by several broken geodesics, possibly undergoing different numbers of reflections, while reflected conjugate points may occur along them. We therefore introduce Assumption \ref{Ass: Finite regular reflections}, a regularity condition which excludes reflected conjugate points along the null trajectories under consideration. In turn, this guarantees that the geometric quantities associated with reflected propagation remain smooth up to the reflection points. Although this hypothesis is not needed to characterize the wavefront set of the propagators, it becomes necessary for the local Hadamard construction, where such smooth geometric control is required to describe the singularities associated with reflected null propagation.

More precisely, we organize the reflected dynamics locally into branches. By a "branch" we mean a smooth family of broken geodesics connecting nearby pairs of interior points and undergoing the same fixed number of reflections. The reflection points vary smoothly with the endpoints. Different branches may connect the same pair of points, corresponding to distinct reflected trajectories. For every regular branch with $N$ reflections we construct a corresponding reflected Synge world function, see Definition~\ref{Def: Regular reflected branch chart} and Equation~\eqref{Eq: Reflected Synge world function}. As in the standard case, this function satisfies the expected eikonal identities, while at each reflection point its derivatives obey matching relations dictated by the reflection law and by the geometry of $\partial\mcM$. From a microlocal viewpoint, the reflected Synge functions encode locally the singular relation associated with reflected null propagation. From the viewpoint of the Hadamard construction, they replace the ordinary Synge world function in the singular terms generated by the interplay with the boundary. This branchwise description is closely related to the familiar treatment of reflected rays through billiard and endpoint maps, see, {\it e.g.}, \cite{Petkov, Wunsch}.

\vskip .2cm

Focusing on the content of the paper, our investigation start from an analysis of Equation \eqref{Eq: mixed problem intro}. More precisely, a first main result of this work is that the construction of the advanced and retarded fundamental solutions does not require any of the additional geometric assumptions introduced above, that is global hyperbolicity of the spacetime with timelike boundary suffices.

\begin{theorem}[\textit{cf.,} Theorem~\ref{Thm: Existence of Propagators}]
Let $(\mcM,g)$ be a $d$-dimensional, $d\geq2$, globally hyperbolic spacetime with timelike boundary and let $P \doteq \Box_g-m^2$ be the Klein-Gordon operator thereon. Then there exist unique advanced and retarded fundamental solutions
$$G^\pm_\kappa\in\mathcal D'(\mcM\times\mcM),$$
satisfying Robin boundary conditions and the corresponding causal support properties.
\end{theorem}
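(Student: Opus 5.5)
The plan is to obtain $G^\pm_\kappa$ from a well-posedness result for the mixed problem \eqref{Eq: mixed problem intro} with vanishing Cauchy data, together with finite propagation speed. Throughout, the only geometric input is global hyperbolicity with timelike boundary in the sense of \cite{Ak_Hau_2020}. This gives an isometric splitting $\mcM\simeq\bR\times\Sigma$, $g=-\beta\,dt^2+h_t$, in which each Cauchy slice $\Sigma_t$ is a manifold with boundary and $\partial\mcM\simeq\bR\times\partial\Sigma$.

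\textbf{Step 1: energy estimates.} For $u$ smooth, take the stress-energy tensor $T_{\mu\nu}[u]$ and contract its divergence identity with $\partial_t$. Integrate over $[t_0,t]\times\Sigma$. The bulk contributes the source $f$ and lower-order terms coming from $\partial_t h_t$ and $\beta$. The boundary flux through $\partial\mcM$ equals $\int\partial_t u\,\nabla_n u=-\kappa\int\partial_t u\, u=-\tfrac{\kappa}{2}\tfrac{d}{dt}\int_{\partial\Sigma_t}u^2$ up to lower-order terms. This is therefore absorbed into a modified energy
\[
E_\kappa(t)=\tfrac12\int_{\Sigma_t}\bigl(|\partial_t u|^2+|\nabla u|^2_{h_t}+m^2u^2\bigr)+\tfrac{\kappa}{2}\int_{\partial\Sigma_t}u^2 .
\]
When $\kappa<0$, or when $m^2$ fails to be positive, $E_\kappa$ is not coercive. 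I will handle this by adding $C\|u\|^2_{L^2(\Sigma_t)}$ and using the trace inequality $\|u\|^2_{L^2(\partial\Sigma)}\le\epsilon\|\nabla u\|^2+C_\epsilon\|u\|^2$, applied locally on compact pieces. Gronwall then gives $H^1$-control. Commuting with $\partial_t$ and using elliptic regularity for the Robin problem on the slices gives $H^k$ estimates for every $k$. I expect this to be the main obstacle. The Robin term is only a lower-order perturbation, but the slices are noncompact and $h_t$ depends on $t$, so the estimates must be done on compact domains of dependence. An alternative route is to import the Friedrichs-system result of \cite{Ginoux_2022}, reducing $P$ to a symmetric hyperbolic system with admissible boundary condition. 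Its positivity/sign hypotheses can be met after a conformal/exponential rescaling $u\mapsto e^{\lambda t}u$.

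\textbf{Step 2: existence, uniqueness, support.} Localized to lens-shaped regions bounded by a past or future cone and $\partial\mcM$, the same energy identity gives finite propagation speed: solutions of \eqref{Eq: mixed problem intro} with $f=0$ and vanishing data near a point vanish in its domain of dependence. Existence then follows by duality with the estimate for the formal adjoint, which is again $P$ with the same Robin condition by Green's formula, since the boundary terms $u\nabla_n v-v\nabla_n u$ cancel. Alternatively it follows from Galerkin approximation; compatibility conditions are automatic because $f\in C^\infty_0$ and the data vanish. This yields smooth solutions $E^\pm f$ with $\operatorname{supp}E^\pm f\subset J^\pm(\operatorname{supp} f)$, obtained by taking $t_0$ to the past or future of $\operatorname{supp} f$. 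Uniqueness comes from the energy estimate with zero data.

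\textbf{Step 3: distributional kernels.} The energy estimates show that $E^\pm:C^\infty_0(\mcM)\to C^\infty(\mcM)$ is continuous. The Schwartz kernel theorem then provides $G^\pm_\kappa\in\mathcal D'(\mcM\times\mcM)$ satisfying $PG^\pm_\kappa=G^\pm_\kappa P=\delta$ on test functions, together with $B_\kappa G^\pm_\kappa f=0$ and the causal support properties. Green's formula with the Robin condition gives $(G^\pm_\kappa)^t=G^\mp_\kappa$. Combined with uniqueness from Step 2, this shows that any other pair with these properties coincides with $G^\pm_\kappa$.
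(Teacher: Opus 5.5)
Your proposal follows essentially the same route as the paper. You use a stress-energy current estimate on regions $J^-(p)\cap\{t_0\le t\le t_1\}$, turn the Robin flux through $\partial\mcM$ into a time derivative of $\int_{\partial\Sigma_t}u^2$, and absorb it via the $\epsilon$-trace inequality and Gr\"onwall to get finite propagation speed and uniqueness. You then use Galerkin existence with vanishing data in the past of $\operatorname{supp} f$, continuity of $\mathcal{G}^\pm_\kappa$ from the energy and elliptic estimates, and the Schwartz kernel theorem.

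Two minor points. First, the Robin term enters the conserved energy as $-\tfrac{\kappa}{2}\int_{\partial\Sigma_t}u^2$ up to the lapse weight, so coercivity is at risk for $\kappa>0$, not $\kappa<0$; this is harmless, since the trace inequality absorbs it regardless of sign. Second, Galerkin needs compact slices, so you should make explicit, as the paper does, the exhaustion step: solve on $I\times\mathcal{C}$ with $\mathcal{C}$ compact containing the relevant domain of dependence, then patch the local solutions using finite propagation speed.
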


In particular, infinitesimal convexity of $\partial\mcM$, local finiteness of the reflected dynamics and the regularity assumptions introduced for the reflected branches play no role in this result. The proof starts considering Equation~\eqref{Eq: mixed problem intro} in the case of compact Cauchy hypersurfaces. Existence of solutions is obtained through a Galerkin approximation scheme, while suitable energy estimates yield uniqueness and finite propagation speed. These properties provide the causal support needed for the construction of the advanced and retarded Green operators. The restriction to compact Cauchy hypersurfaces is subsequently removed by an exhaustion argument.

This differs from the construction in \cite{Dappiaggi-Drago_2019}, where the propagators are obtained by spectral methods on standard static backgrounds, reducing the Klein-Gordon operator to a time-independent self-adjoint spatial operator. The existence of an irrotational, timelike Killing field will be required only later, as an auxiliary tool in the construction of a distinguished Hadamard state.

Having established the existence of the fundamental solutions, we turn to their microlocal properties. At this stage one has to account for the fact that singularities can reach the timelike boundary, be reflected, and then continue to propagate through the interior along broken null bicharacteristics. Under the infinitesimal convexity and local finiteness assumptions discussed above, a second main result of this work gives a complete characterization of the singular structure of the Pauli-Jordan propagator in the interior of $\mcM$.

\begin{theorem}[\textit{cf.}, Theorem \ref{Thm: WF of Propagators}]
Let $(\mcM,g)$ be a $d$-dimensional, $d\geq2$, globally hyperbolic spacetime with timelike boundary abiding by Assumptions~\ref{Ass: Infinitesimally convex} and~\ref{Ass: No Zeno points}. Let $G_\kappa \doteq G^-_\kappa-G^+_\kappa$ be the Pauli-Jordan propagator associated with the Klein-Gordon operator with Robin boundary conditions as in Equation \eqref{Eq: mixed problem intro}. Then
\begin{equation}
\label{Eq: WFset intro}
\operatorname{WF}\left(G_\kappa\vert_{\mathring{\mcM}\times\mathring{\mcM}}\right)=\mathcal{C}_{\mathrm{b}}(\mathring{\mcM}),
\end{equation}
where
$$\mathcal{C}_{\mathrm{b}}(\mathring{\mcM})\doteq\left\{(x,k_x,y,-k_y)\in T^*(\mathring{\mcM}\times\mathring{\mcM})\setminus\{\boldsymbol{0}\}\;\middle|\;(x,k_x)\sim_{\mathrm{b}}(y,k_y)\right\},$$
while $\sim_{\mathrm b}$ stands for being connected by a finite broken null bicharacteristic.
\end{theorem}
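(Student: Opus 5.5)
The plan is to prove the two inclusions $\operatorname{WF}(G_\kappa|_{\mathring{\mcM}\times\mathring{\mcM}})\subseteq\mathcal C_{\mathrm b}(\mathring{\mcM})$ and $\supseteq$ separately. Both rest on the Melrose--Sj\"ostrand propagation of singularities theorem \cite[Thm. 4.1]{Melrose_1978}, applied to the Robin problem, together with the fact that $G_\kappa$ is a bisolution. First, $G_\kappa$ satisfies $(P\otimes 1)G_\kappa=(1\otimes P)G_\kappa=0$ and obeys the Robin condition in each variable. This follows from the construction of $G^\pm_\kappa$ in Theorem~\ref{Thm: Existence of Propagators}. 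By H\"ormander's theorem in the interior, $\operatorname{WF}(G_\kappa)$ lies in $\mathrm{Char}(P)\times\mathrm{Char}(P)$, so both covectors are null.

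Next I would use Proposition~\ref{Prop: no-glancing}, which follows from Assumption~\ref{Ass: Infinitesimally convex}: no glancing points occur. All boundary interactions are therefore hyperbolic, and generalized bicharacteristics reduce to broken null bicharacteristics with transversal reflections. Assumption~\ref{Ass: No Zeno points} ensures that over any compact time interval only finitely many reflections occur, so every generalized bicharacteristic through an interior point is a finite broken one.

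\emph{Inclusion $\subseteq$.} Fix $(x,k_x,y,-k_y)\in\operatorname{WF}(G_\kappa)$, with $k_x,k_y$ null and nonzero. The case with a zero covector is excluded by the usual argument that $G_\kappa$ has no singularities of the form $(x,0,y,k)$: $G_\kappa(\cdot,f)$ is smooth for $f\in C^\infty_0$. Regard $G_\kappa$ as a solution in $x$ with parameter $y$, or use the operator formulation. Since $G^+_\kappa-G^-_\kappa$ has support in $J(y)$, one argues in one of two ways. One way is as in Radzikowski/Duistermaat--H\"ormander: use $P G^\pm_\kappa=\delta$ and propagate singularities of $G^\pm_\kappa$ along broken bicharacteristics in $\mathcal M\times\mathcal M$ for $P\otimes 1$. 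The other way starts from points in the far past of $x$ along the backward broken bicharacteristic, where $G^+_\kappa$ vanishes (outside $J^+(y)$), and invokes the fact that $\operatorname{WF}_b$ is invariant under the broken flow \cite[Thm. 4.1]{Melrose_1978}. The backward broken bicharacteristic through $(x,k_x)$ must reach a region where the distribution is smooth unless it hits the diagonal singularity at $(y,k_y)$. This gives $(x,k_x)\sim_{\mathrm b}(y,k_y)$. To formalize it I would work with $G^\pm_\kappa$, writing $\operatorname{WF}(G^\pm_\kappa)\subseteq \Delta^*\cup \mathcal C^\pm_{\mathrm b}$, and then take the difference.

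\emph{Inclusion $\supseteq$.} Near the diagonal, in a small neighbourhood away from $\partial\mcM$, $G_\kappa$ differs from the boundaryless causal propagator of an extension of $(\mcM,g)$ by a smooth function. This holds by finite propagation speed (energy estimates), since the boundary is not reached in short time. Hence Radzikowski's result gives $(x,k,x,-k)\in\operatorname{WF}$ for null $k$, and along unbroken segments. Propagation of singularities is an equivalence: absence of singularity is propagated both forward and backward along the broken flow. So if $(y,k_y)$ is linked to $(x,k_x)$ by a finite broken null bicharacteristic, propagating from the diagonal point $(y,k_y,y,-k_y)$ in the first variable yields $(x,k_x,y,-k_y)\in\operatorname{WF}$. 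The Robin condition matters here: the reflection coefficient at a transversal hyperbolic point is nonvanishing for the Robin problem, since it is a lower-order perturbation of Neumann with principal reflection coefficient $1$. Singularities therefore do not disappear at reflections.

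The main obstacle is handling $G_\kappa$ as a distribution on $\mcM\times\mcM$ with the boundary in both factors, and applying Melrose--Sj\"ostrand, which is stated for solutions of one boundary problem. I would first try reducing to the operator statement by testing against $f\in C_0^\infty(\mathring{\mcM})$ with localized microsupport, so that propagation applies to $G_\kappa f$ in one variable. I would then recover the product wavefront set via the standard kernel calculus: bounding $\operatorname{WF}'$ by the relation of singularities of the operator, while checking that no contributions of the form $(x,k,y,0)$ arise.
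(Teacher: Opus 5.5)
Your overall strategy matches the paper's. You use Melrose--Sj\"ostrand propagation for $P\otimes\mathbb{I}$ in the first variable with $y\in\mathring{\mcM}$ as a parameter. This means working on $\mcM\times\mathring{\mcM}$, whose only boundary is $\partial\mcM\times\mathring{\mcM}$; it is your first option, and it removes the ``boundary in both factors'' obstacle, whereas pairing with smooth test functions would discard the wavefront information you need. You also identify $G_\kappa$ on $\mathcal{U}\times\mathcal{U}$, with $\mathcal{U}$ causally convex and disjoint from $\partial\mcM$, with the intrinsic Pauli--Jordan propagator of $\mathcal{U}$. This is an exact equality, following from uniqueness of $G^\pm$ on $\mathcal{U}$ and causal convexity rather than from a short-time argument. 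Finally, you use causal support to pin down the endpoint.

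Your argument for $\supseteq$ is the paper's. For $\subseteq$, the paper works with the bisolution $G_\kappa$ directly: it transports $(x,k_x)$ along the broken flow to $\Sigma_{t(y)}$, and causal support forces the arrival point to be $y$. Your route through $G^\pm_\kappa$ pushes the broken bicharacteristic into the region where $G^\pm_\kappa$ vanishes and handles the unique crossing of the diagonal with the local structure. This is the Duistermaat--H\"ormander variant and also works, provided you propagate only off the diagonal, where $P_xG^\pm_\kappa=\delta$ blocks the theorem.

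The step that fails as written is your exclusion of points with one vanishing covector. H\"ormander's theorem only gives $\operatorname{WF}(G_\kappa)\subseteq\mathrm{Char}(P\otimes\mathbb{I})\cap\mathrm{Char}(\mathbb{I}\otimes P)$. This set still contains points $(x,k_x,y,0)$ with $k_x$ null. The fact that $G_\kappa$ maps $C^\infty_0$ into $C^\infty$ does not exclude them. The kernel calculus in \cite[Sec.~8.2]{Hormander_1990} only yields the implication from an empty $\operatorname{WF}_X$ to the smoothing property, and the converse is false. For instance, $\phi(x)\chi(y)\sum_{j\geq1}e^{i(2^jx-2^{j/2}y)}$ on $\mathbb{R}\times\mathbb{R}$ maps test functions to smooth functions, yet its wavefront set contains $(x,1,y,0)$ wherever $\phi(x)\chi(y)\neq0$.

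The repair is the argument in Step 4 of the paper's proof. At $(x,k_x,y,0)$ with $k_x\neq0$, the operator $P\otimes\mathbb{I}$ is of real principal type, so you can transport the point along the broken flow to $\Sigma_{t(y)}$, or in your formulation into the region where $G^\pm_\kappa$ vanishes. Causal support then forces it to meet the diagonal, and the local Duistermaat--H\"ormander structure there contains no such points. The symmetric case $(x,0,y,k_y)$ follows from $G_\kappa^T=-G_\kappa$, or from $(G^+_\kappa)^T=G^-_\kappa$; both are consequences of Corollary \ref{Cor: G+- Formal Adjoints}. With this replacement, both inclusions go through.
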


Hence the singularities of $G_\kappa$ in the interior are precisely those generated by propagation along null directions together with its successive reflections at the boundary. Notice in particular that the additional regularity assumption used for the branchwise Hadamard construction does not enter this result, namely reflected conjugate points and the associated caustics are allowed at this stage.

The proof relies on the propagation of singularities theorem for hyperbolic boundary value problems of Melrose and Sj\"ostrand \cite{Melrose_1978,Melrose_1982}. Infinitesimal convexity ensures that every null bicharacteristic reaching $\partial\mcM$ does so transversally, namely with a non-vanishing component in the normal direction to the boundary. Hence each reflection is well defined and the trajectory can be continued uniquely as a broken null bicharacteristic. Local finiteness, instead, guarantees that only finitely many such reflections occur in every bounded time interval. Starting from the known singularity of the propagator along the diagonal, propagation along the reflected Hamiltonian flow yields the relation in Equation~\eqref{Eq: WFset intro}. Conversely, causal support and propagation back to the diagonal exclude any additional singular directions. The same analysis provides the corresponding characterization of the wavefront sets of the advanced and retarded fundamental solutions.

Having characterized the singular structure of the propagators, we turn to introducing a notion of Robin-Hadamard two-point correlation function $\omega_{2,\kappa}\in\mathcal{D}^\prime(\mcM\times\mcM)$ by requiring positivity, the canonical commutation relations determined by $G_\kappa$, compatibility with the Klein--Gordon equation and the Robin boundary condition. Furthermore we require that the singular structure is codified by 
\begin{equation}
\label{Eq: Robin Hadamard WF intro}
\operatorname{WF}\left(\omega_{2,\kappa}\vert_{\mathring{\mcM}\times\mathring{\mcM}}\right)=
\mathcal{C}_{\mathrm{b}}^{\triangleright}(\mathring{\mcM}),
\end{equation}
where $\mathcal{C}_{\mathrm{b}}^{\triangleright}(\mathring{\mcM})$ denotes the future-directed part of the broken null relation introduced above. In comparison with the standard Hadamard condition on globally hyperbolic spacetimes without boundary, the usual null-geodesic relation is replaced by the one generated by broken null bicharacteristics, thereby accounting for the reflection of singularities at $\partial\mcM$. This structure is not novel and it has already appeared in the literature, though mainly in the analysis of field theoretic models on asymptotically anti-de Sitter spacetimes, see, {\it e.g.}, \cite{Costeri_25, Dappiaggi-Drago_2019, Dappiaggi:2017wvj, Dappiaggi-Marta_2021, Dappiaggi-Nosari_2016, Gannot_2022, Wrochna}.

We follow this analysis by an investigation of the corresponding local description. In a geodesically convex region away from the boundary, nothing changes and the singular structure is described by the standard directed Hadamard parametrix \cite{Radzikowski_1996, Radzikowski_1996_1}. On a regular reflected branch, instead, an additional singular contribution has to be included. Hence, up to a smooth remainder, the local form of a Robin-Hadamard two-point function consists of the usual directed term together with a reflected contribution associated with the corresponding reflected Synge world function.

The coefficients entering the reflected contribution are not arbitrary. They are determined recursively by transport equations along the reflected null trajectories and satisfy matching conditions at every reflection point which encode the Robin boundary condition. Generalizing the analysis on half-Minkowski spacetime \cite{Costeri_25}, we carry out the construction for a single reflection, treating separately even and odd spacetime dimensions. Subsequently, we iterate the procedure along the successive reflection points. In this way, every regular branch with an arbitrary but finite number of reflections gives rise to its own local Hadamard contribution. The role of Assumption~\ref{Ass: Finite regular reflections} becomes essential precisely at this stage. It guarantees that the reflected geometric data entering the local expansion remain smooth and that the singular contribution associated with each reflected null trajectory can be treated independently. The microlocal condition in Equation~\eqref{Eq: Robin Hadamard WF intro}, instead, continues to make sense without excluding reflected conjugate points. The local construction therefore requires stronger geometric control than the wavefront-set characterization itself.

Having both the microlocal and the local formulations at our disposal, we can compare them. Our third main result is a counterpart, for Klein-Gordon fields with Robin boundary conditions, of the celebrated equivalence theorem of Radzikowski \cite{Radzikowski_1996,Radzikowski_1996_1}. Besides the geometric assumptions discussed above, the statement requires a microlocal admissibility condition on the family of reflected parametrices, see Definition \ref{Def: Admissible reflected Hadamard assembly but not yet an Avenger}, which ensures that their singularities are compatible with the broken null relation.

\begin{theorem}[\textit{cf.}, Theorem~\ref{Thm: Global-to-Local}]
Let $(\mcM,g)$ be a globally hyperbolic spacetime with timelike boundary satisfying the standing geometric assumptions, and let $\omega_{2,\kappa}\in\mathcal D'(\mcM\times\mcM)$ satisfy the Klein--Gordon equation, the Robin boundary condition, positivity and the canonical commutation relations. Assuming in addition microlocal admissibility of the associated family of reflected parametrices, the following conditions are equivalent:
\begin{enumerate}
\item $\omega_{2,\kappa}$ satisfies the microlocal Robin-Hadamard condition in Equation~\eqref{Eq: Robin Hadamard WF intro};
\item $\omega_{2,\kappa}$ is locally of Robin-Hadamard form, namely its singular structure is given, up to a smooth remainder, by the directed Hadamard parametrix together with the contributions associated with the regular reflected branches.
\end{enumerate}
\end{theorem}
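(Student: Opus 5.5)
The plan is to follow Radzikowski's strategy, replacing the diagonal null relation by the broken relation $\mathcal C_{\mathrm b}(\mathring{\mcM})$ and the single Hadamard parametrix by the finite family of branchwise contributions.

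\emph{Local form $\Rightarrow$ wavefront set.} Fix a relatively compact open set $U\times V\subset\mathring{\mcM}\times\mathring{\mcM}$ on which $\omega_{2,\kappa}$ is of Robin-Hadamard form: the directed parametrix $H^{\triangleright}$ plus the reflected terms $H^{(N),\triangleright}_{\beta}$, one for each regular branch $\beta$ with $N$ reflections, up to a smooth remainder. By Radzikowski's computation, $\operatorname{WF}(H^\triangleright)$ is the future-directed part of the unbroken null relation. For a reflected term I would write it as the pullback of the distributions $(\sigma\pm i0 t)^{-k}$, $\log(\sigma\pm i0t)$ along the reflected Synge function $\sigma^{(N)}_\beta$. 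This function is smooth by Assumption~\ref{Ass: Finite regular reflections}, its differential does not vanish on $\{\sigma^{(N)}_\beta=0\}$ off the endpoints, and by the eikonal identity $d\sigma^{(N)}_\beta$ at $(x,y)$ equals $(k_x,-k_y)$, with $k_x$, $k_y$ the endpoint cotangents of the broken null geodesic. Hence, by the pullback theorem, $\operatorname{WF}\subseteq\mathcal C^{\triangleright}_{\mathrm b}$, and the $i0$ prescription fixes the future orientation. Microlocal admissibility (Definition~\ref{Def: Admissible reflected Hadamard assembly but not yet an Avenger}) is exactly what allows the branchwise bounds to be summed without producing spurious directions. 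Equality, rather than inclusion, follows because the antisymmetric part is fixed: $\omega_{2,\kappa}-\omega_{2,\kappa}^{t}=iG_\kappa$ gives $\mathcal C_{\mathrm b}\subseteq \operatorname{WF}(\omega_{2,\kappa})\cup \operatorname{WF}(\omega_{2,\kappa}^t)$ by Theorem~\ref{Thm: WF of Propagators}. Since the upper bound for $\omega_{2,\kappa}$ is future-directed and that for $\omega^t_{2,\kappa}$ is past-directed, the two pieces are disjoint, and each must contain its whole part.

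\emph{Wavefront set $\Rightarrow$ local form.} Construct the reference $H_\kappa\doteq H^\triangleright+\sum_\beta H^{(N),\triangleright}_\beta$ on $U\times V$, using the transport equations and the Robin matching conditions established earlier, for arbitrary finite $N$. By construction $P H_\kappa$ and $B_\kappa H_\kappa$ are smooth, up to truncation of the Hadamard series. By the first direction, $\operatorname{WF}(H_\kappa)\subseteq\mathcal C^\triangleright_{\mathrm b}$, and its antisymmetric part coincides with $iG_\kappa$ modulo $C^\infty$, again by comparing singular supports branch by branch. Set $R\doteq\omega_{2,\kappa}-H_\kappa$. Then $\operatorname{WF}(R)\subseteq\mathcal C^\triangleright_{\mathrm b}$, while $R-R^t$ is smooth, so $\operatorname{WF}(R)\subseteq\operatorname{WF}(R^t)\subseteq$ the past-directed relation. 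Since these are disjoint, $\operatorname{WF}(R)=\emptyset$ on $U\times V$. A partition of unity over the pairs $(U,V)$ covering $\mathring{\mcM}\times\mathring{\mcM}$ then globalises the statement.

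\emph{Main obstacle.} The hardest step is showing that the symmetric-part argument works for the assembled reflected parametrix. Concretely, one must show that $H_\kappa-H_\kappa^t-iG_\kappa$ is smooth, which requires matching each branch of $G_\kappa$'s singularities with exactly one reflected Hadamard term, with the correct coefficients at reflection points. I would try to establish this via the propagation-of-singularities theorem of Melrose and Sj\"ostrand, as in Theorem~\ref{Thm: WF of Propagators}, applied to $P(H_\kappa-H_\kappa^t-iG_\kappa)\in C^\infty$ with smooth Robin data. The difference is singularity-free near the diagonal by the classical Hadamard construction, and hence singularity-free along every broken bicharacteristic emanating from there. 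Admissibility controls the branches that meet at caustic-free crossings.
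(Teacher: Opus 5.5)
Your architecture matches the paper's in both directions. For local form $\Rightarrow$ wavefront set, you bound $\operatorname{WF}$ using the $\pm i0$ singularities along $\sigma$ and $\sigma_{-,N,\alpha}$, with admissibility letting the branches be summed. You then upgrade the bound to equality via $\omega_{2,\kappa}-\omega_{2,\kappa}^T=iG_\kappa$ and Theorem~\ref{Thm: WF of Propagators}. Conversely, $\omega_{2,\kappa}-H-H^{\mathrm{ref}}_{\kappa,\mathcal N}$ is smooth because its wavefront set is future-directed and its antisymmetric part is smooth.

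The gap is in the step you flag as the main obstacle. You want to apply Melrose--Sj\"ostrand to $H_\kappa-H_\kappa^t-iG_\kappa$, asserting that $P$ applied to it is smooth with smooth Robin data. But in the first variable $P_x(H_\kappa^t)=(P_yH_\kappa)^t$, and the Robin trace of $H_\kappa^t$ is the Robin trace of $H_\kappa$ in its \emph{second} variable. The reflected amplitudes are produced by transport equations along $\sigma_-^\mu\partial_\mu$ and Robin matching conditions in the first variable only. Nothing in the construction controls $P_yH^{\mathrm{ref}}_{\kappa;N,\alpha}$ or its boundary behaviour in $y$. Such control is closely tied to symmetry of the reflected Hadamard coefficients, which the paper explicitly leaves open (Remarks~\ref{Rem: Symmetry of the Reflected Hadamard Coefficients} and~\ref{Rem: No symmetry Hadamard Coefficients odd}). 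So the propagation theorem cannot be applied to that combination as you state it.

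The missing idea is to never transpose before propagating. The paper compares $G_\kappa$ with $E^{\mathrm{ref}}_{\kappa;N,\alpha}=-i\bigl(H^{\mathrm{ref},+}_{\kappa;N,\alpha}-H^{\mathrm{ref},-}_{\kappa;N,\alpha}\bigr)$. This uses the same amplitudes with opposite $i\epsilon$, so only first-variable equations and Robin matching are needed. Propagating backwards along the Hadamard branch chain, one reflection point at a time, shows that $G_\kappa-E^{\mathrm{ref}}_{\kappa;N,\alpha}$ is microlocally smooth on $N^*\mathcal Z_{N,\alpha}$ (Proposition~\ref{Prop: Finite-reflection branchwise causal comparison}). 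The transposition identity $\bigl(H^{\mathrm{ref},+}_{\kappa;N,\alpha}\bigr)^T\equiv H^{\mathrm{ref},-}_{\kappa;N,\alpha^{\rm op}}$ modulo $C^\infty$ then follows: $G_\kappa^T=-G_\kappa$ makes $(E^{\mathrm{ref}}_{\kappa;N,\alpha})^T+E^{\mathrm{ref}}_{\kappa;N,\alpha^{\rm op}}$ smooth, and splitting into future- and past-directed pieces forces each to be smooth (Corollary~\ref{Cor: Finite-reflection reversal covariance}).

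Your propagation-from-the-diagonal idea can be rescued in the same spirit. Since $\operatorname{WF}(H_\kappa)$ is future-directed and $\operatorname{WF}(H_\kappa^t)$ past-directed, smoothness of $H_\kappa-H_\kappa^t-iG_\kappa$ is equivalent to $\operatorname{WF}(H_\kappa-iG_\kappa)\cap\mathcal C^{\triangleright}_{\mathrm b}=\emptyset$. The kernel $H_\kappa-iG_\kappa$ does satisfy the first-variable equation and Robin condition, so it can be propagated from the diagonal, branchwise along the incoming and outgoing kernels.

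Separately, Assumption~\ref{Ass: Finite regular reflections} is imposed only on a Cauchy neighbourhood $\mathcal N$. So your globalisation over all pairs $(U,V)$ in $\mathring\mcM\times\mathring\mcM$ is not supported by the hypotheses. The paper states the local form on $\mathcal N$ and then carries the wavefront-set identity to the whole spacetime using Melrose--Sj\"ostrand together with Assumptions~\ref{Ass: Infinitesimally convex} and~\ref{Ass: No Zeno points}.
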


Hence the wavefront-set condition and the local reflected Hadamard expansion provide two equivalent descriptions of the same class of two-point correlation functions. This generalizes to globally hyperbolic spacetimes with timelike boundary the corresponding result obtained in \cite{Costeri_25} for half-Minkowski spacetime. We complement the previous analysis by addressing the existence of Robin-Hadamard states. On suitable static backgrounds, and under  required assumptions on the spectrum of the underlying operators, we construct the two-point correlation function of the ground state and we show that it satisfies the Robin-Hadamard condition introduced above. Starting from this distinguished representative, a deformation argument then yields existence on the class of backgrounds which can be connected to such a static spacetime while preserving the geometric assumptions needed in the construction. We postpone the precise hypotheses and the details of this argument to Section \ref{Sec: Existence oF Hadamard States}.
We emphasize that, in the late stages of realization of this work, an interesting procedure to construct Hadamard states with Dirichlet boundary conditions has been discussed in \cite{Contini_2026}. A generalization to our setting might offer the possibility of dropping the limitations which are naturally inherent to the deformation argument.

\paragraph*{Synopsis --} A synopsis of the work is as follows. 

In Section~\ref{Sec: Geometric setting}, we introduce the geometric framework used throughout the paper, working on globally hyperbolic spacetimes with timelike boundary in the sense of \cite{Ak_Hau_2020}. To retain a regular geometric description of reflected propagation, we impose the infinitesimal convexity condition of Assumption~\ref{Ass: Infinitesimally convex}, which is equivalent in the present setting to the absence of glancing geodesics reaching the boundary from the bulk, see Proposition \ref{Prop: no-glancing}. We also discuss the standard static subclass with Cauchy surfaces of bounded geometry, which will later provide the setting for the spectral construction of the ground state. In Section~\ref{Sec: Geodesic and Reflected Geodesic Distances Claudio}, we recall the definition and basic properties of Synge world function and introduce its branchwise reflected counterpart. Since a pair of points in $\mathring{\mcM}$ may be connected by several broken geodesics, possibly undergoing different numbers of reflections, we introduce regular $N$-reflected branch charts, see Definition~\ref{Def: Regular reflected branch chart}, and the associated branchwise $N$-reflected Synge world functions, \textit{cf.}, Definition~\ref{Def: Reflected Synge's world function}. We establish their main geometric properties, including symmetry and the corresponding eikonal identities. Finally, we introduce Assumption~\ref{Ass: Finite regular reflections}, excluding reflected conjugate points and the associated caustics along the null branches under consideration. This hypothesis provides the regular branchwise geometry required for the local Hadamard construction developed in Section~\ref{Sec: Hadamard states with Robin boundary conditions}.

Section~\ref{Sec: BV problem} is devoted to the Klein-Gordon equation with Robin boundary conditions. We first study the associated mixed initial-boundary value problem and prove existence and uniqueness of smooth solutions, initially for compact Cauchy hypersurfaces, see Proposition \ref{Prop: Existence of Solutions}, and subsequently in the general case by an exhaustion argument. This yields the existence and uniqueness of advanced and retarded fundamental solutions on general globally hyperbolic spacetimes with timelike boundary, see Theorem \ref{Thm: Existence of Propagators}. We then recall, for comparison and for later use in the construction of ground states, their spectral realization on standard static backgrounds. Finally, under Assumption~\ref{Ass: Infinitesimally convex} and the local finiteness condition of Assumption~\ref{Ass: No Zeno points}, we characterize the wavefront set of the Pauli-Jordan propagator in terms of finite broken null bicharacteristics, together with the corresponding wavefront sets of the advanced and retarded propagators, see Theorem~\ref{Thm: WF of Propagators}.

In Section~\ref{Sec: Hadamard states with Robin boundary conditions}, we introduce a boundary-adapted notion of Hadamard two-point correlation function for the Klein--Gordon field with Robin boundary conditions. Definition~\ref{Def: Robin Hadamard two-point function} formulates the microlocal Hadamard condition by requiring the singularities of the two-point function to lie on the positive-frequency part of the broken characteristic relation.  We then turn to the local formulation of the Hadamard condition. On regular reflected branch domains, the standard directed Hadamard parametrix is supplemented by a reflected contribution whose coefficients are determined by transport and recursion equations subject to Robin matching conditions at the reflection points. Distinguishing between even and odd spacetime dimensions, Sections~\ref{Sec: Hadamard Recurstion Relations even case} and~\ref{Sec: Hadamard Recursion Relations: Odd dimensions} establish the smoothness and asymptotic properties of the corresponding reflected Hadamard coefficients in the one-reflection case. The construction is subsequently extended by iteration to an arbitrary, though finite, number of regular reflections. Finally, we prove a boundary-adapted counterpart of Radzikowski's theorems \cite{Radzikowski_1996,Radzikowski_1996_1}, establishing the equivalence between the microlocal Robin-Hadamard condition and its branchwise local formulation, under the additional microlocal admissibility requirement on the family of reflected parametrices, see Definition~\ref{Def: Admissible reflected Hadamard assembly but not yet an Avenger}. This extends the corresponding result obtained in \cite{Costeri_25} for half-Minkowski spacetime. In Section \ref{Sec: Existence oF Hadamard States} first we prove that, on static spacetimes and under appropriate spectral assumptions, the ground state is of Hadamard form. Subsequently, in Section \ref{Sec: Deformation Argument} we apply a deformation argument to infer existence of Hadmard states in a large class of admissible backgrounds. Appendix~\ref{App: A} contains the proof of Proposition~\ref{Prop: Causal Support of Solutions}, while Appendix~\ref{App: B} is devoted to the proof of Proposition~\ref{Prop: Existence of Solutions}.

\section{Geometric setting}\label{Sec: Geometric setting}
The goal of this section is twofold. On the one hand, we fix the main notations and conventions employed throughout this work. On the the other hand, we introduce the key geometric data at the heart of our work. These can be divided between global and local structures and our presentation shall reflect such difference. At the same time we will limit ourselves to outlining only those properties which will be used in the following sections. Hence, a reader, who is interested in learning more details about these topics, is referred to \cite{Lee_2018}, \cite{ONeill83} and \cite{Ak_Hau_2020}.

\subsection{Lorentzian manifolds with a timelike boundary}

We call {\em spacetime} a pair $(\mathcal{M},g)$, where $\mcM$ is a $d$-dimensional, oriented and connected manifold, $d \ge 2$, endowed with a smooth Lorentzian metric $g$ of signature $(1, d-1)$, that is $(-,+,\dots, +)$, \cite[Ch. 2]{Lee_2018}. Unless stated otherwise, we assume that $\partial\mcM$ might be non empty. Whenever the Lorentzian metric is clear from the context, we shall simply write $\mcM$ in place of $(\mcM,g)$, with a slight abuse of notation. In addition, we require that $(\mcM, g)$ is also {\em time-oriented}, namely there exists an everywhere non-vanishing timelike vector field $\chi\in\Gamma(T\mcM)$ which has been selected once and for all. 

\begin{definition}
	\label{Def: time-like boundary}
	Given a spacetime $(\mathcal{M},g)$, we call $\partial \mathcal{M}$ a {\bf timelike boundary} if $\partial\mathcal{M}\neq\emptyset$ and, denoting by $\iota: \partial \mathcal{M} \hookrightarrow \mathcal{M}$ the natural inclusion map, the pullback $\iota^* g$ identifies a Lorentzian metric on $\partial \mathcal{M}$. In this case, $(\mathcal{M}, g)$ is said to be a spacetime with timelike boundary and we denote by $\mathring{\mathcal{M}}=\mathcal{M}\setminus\partial\mathcal{M}$ its interior.
\end{definition}

Among the plethora of spacetimes with a timelike boundary as per Definition \ref{Def: time-like boundary}, we confine our attention to those which are \emph{globally hyperbolic}, see, \textit{e.g.}, \cite{Ak_Hau_2020}.

\begin{definition} \label{Def: Globally Hyperbolic}
	\label{Def: ghm}
	A spacetime $(\mathcal{M},g)$ with a non-empty timelike boundary $\partial \mathcal{M}$ as per Definition \ref{Def: time-like boundary} is said to be {\bf globally hyperbolic} if the following two conditions are met: 
	\begin{itemize}
		\item[(i)] $(\mathcal{M
		}, g)$ is \emph{strongly causal} in the sense of \cite[Def. 1.3.7]{Baer_2007},
		\item[(ii)] For any pair of points $p, q \in \mathcal{M}$, the set $J^+(p) \cap J^-(q)$ is either compact or empty. Here $J^{\pm}(\cdot)$ denote the causal future and past, respectively.
	\end{itemize}
\end{definition}

A notable property of globally hyperbolic spacetimes has been proven in \cite[Thm 1.1]{Ak_Hau_2020} as a direct consequence of $(\mathcal{M},g)$ possessing a Cauchy orthogonal splitting. Here we report part of this result, limited to those aspects which will be of interest in our analysis.

\begin{proposition}\label{Prop: Globally Hyperbolic}
	Let $(\mcM,g)$ be a globally hyperbolic spacetime as per Definition \ref{Def: Globally Hyperbolic}. There exists an isometry $\psi: \mathbb{R} \times \Sigma \rightarrow \mathcal{M}$ such that the line element associated to $\psi^* g$ is of the form
	\begin{equation}
		\label{Eq: line element psi*g}
		ds^2 = -\beta dt^2 + h_t,
	\end{equation}
	where $\{t\}\times\Sigma$ is a smooth, spacelike, Cauchy hypersurface for all $t\in\bR$, while $t: \mathbb{R} \times \Sigma \rightarrow \mathbb{R}$ is a temporal function whose gradient $\nabla t$ is tangent to $\partial\mcM$. In addition $\beta \in C^{\infty}(\mathbb{R} \times \Sigma;(0,\infty))$, while $\{h_t\}_{t \in \mathbb{R}}$ is a family of smooth Riemannian metrics on $\{t\} \times \Sigma$, varying smoothly with $t$.
\end{proposition}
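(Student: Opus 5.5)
The plan is to adapt the Bernal--S\'anchez smoothing strategy for globally hyperbolic spacetimes without boundary to the boundary setting. In outline: build a time function, smooth it into a Cauchy temporal function whose gradient is tangent to $\partial\mcM$, and then read off the orthogonal splitting from its gradient flow.

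\textbf{Step 1: a continuous Cauchy time function.} Definition \ref{Def: Globally Hyperbolic} yields strong causality and compact causal diamonds $J^+(p)\cap J^-(q)$. From these we pass to causal simplicity, and then to continuity of the volume functions $t^\pm(p)=\mu(J^\pm(p))$ for an admissible finite measure $\mu$. This follows Geroch's argument, and it only uses the causal structure, which makes sense with the timelike boundary included. The function $\tau=\log(t^-/t^+)$ is then continuous, strictly increasing along causal curves, and its level sets are Cauchy hypersurfaces.

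\textbf{Step 2: smoothing to a Cauchy temporal function.} I smooth $\tau$ locally. Near interior points this is done as in Bernal--S\'anchez. Near $\partial\mcM$ I use collar coordinates $\partial\mcM\times[0,\epsilon)$ and build local temporal functions that are even in the normal variable, which forces their gradients to be tangent to $\partial\mcM$. These local pieces are glued with a partition of unity that is constant in the normal direction near the boundary. Positive combinations of local temporal functions stay temporal, and tangency is preserved by the construction. Adjusting the level sets so that they remain Cauchy gives a smooth $t$ with timelike gradient $\nabla t$ tangent to $\partial\mcM$.

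I expect Step 2 to be the main obstacle. The gluing has to keep three properties at the same time: the gradient stays timelike, the level sets stay Cauchy, and the gradient stays tangent at the boundary. The Cauchy property is the delicate one. It needs the Bernal--S\'anchez "steep" control, meaning uniform bounds on the growth of the local functions on compact pieces of causal diamonds.

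\textbf{Step 3: the orthogonal splitting.} Consider the rescaled vector field $X=\nabla t/g(\nabla t,\nabla t)$, which satisfies $X(t)=1$.
\begin{itemize}
\item Because $\nabla t$ is tangent to $\partial\mcM$, the flow of $X$ preserves both $\partial\mcM$ and $\mathring{\mcM}$.
\item Every inextendible integral curve of $X$ is timelike, so it meets each Cauchy surface exactly once. Its parameter therefore ranges over all of $\bR$, and the flow is complete.
\end{itemize}
Set $\Sigma=t^{-1}(0)$ and define $\psi(s,x)=\Phi^X_s(x)$. This $\psi$ is a diffeomorphism $\bR\times\Sigma\to\mcM$ of manifolds with boundary, and under it $t$ pulls back to $s$. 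Since $X$ is $g$-orthogonal to the level sets, $\psi^*g$ has no cross terms, and we obtain $-\beta\,dt^2+h_t$ with $\beta=-1/g(\nabla t,\nabla t)>0$. Here $h_t$ is the induced Riemannian metric on the spacelike slice $\{t\}\times\Sigma$, and it depends smoothly on $t$ by smoothness of the flow.
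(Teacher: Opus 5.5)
The paper does not prove this statement; it quotes it from \cite[Thm.~1.1]{Ak_Hau_2020}. Your overall architecture is the natural one: a continuous Cauchy time function, then smoothing with control at $\partial\mcM$, then splitting along the flow of $\nabla t/g(\nabla t,\nabla t)$. Step~1 is fine at the level of a sketch. Step~3 is a complete and correct derivation of the splitting once you have a smooth Cauchy temporal function with $dt(n)=0$ on $\partial\mcM$.

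The gap is in Step~2, at the point you treat as automatic.
\begin{itemize}
\item Smoothing the continuous time function $\tau$ locally, for instance by convolution in charts, does not in general produce temporal functions. $\tau$ is only strictly increasing along causal curves, with no quantitative lower bound on its growth, so a mollification can have null or spacelike gradient. This is precisely the difficulty Bernal and S\'anchez had to overcome.
\item A partition of unity does not preserve temporality. For $\tilde t=\sum_i\phi_i f_i$ one has $d\tilde t=\sum_i\phi_i\,df_i+\sum_i f_i\,d\phi_i$. Because $\sum_i d\phi_i=0$, the last term equals $\sum_i(f_i-f_j)\,d\phi_i$ locally for any fixed $j$. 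It is generically spacelike and you give no estimate controlling it.
\item Your principle that positive combinations of temporal functions stay temporal holds only for constant coefficients. It also requires the summands to be defined on all of $\mcM$ with causal gradients of the same time orientation.
\end{itemize}
So your gluing loses the timelike character of $\nabla t$, not only the Cauchy property.

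The standard repair is to never smooth $\tau$ at all.
\begin{itemize}
\item Use $\tau$ only to produce Cauchy slices $S_k=\tau^{-1}(k)$.
\item Build globally defined functions $h\geq 0$ from the Lorentzian distance in convex neighbourhoods, with gradient causal wherever it is nonzero and timelike on a prescribed compact piece of a strip.
\item Glue them through locally finite sums and quotients $a/(a+b)$, where $a,b\geq0$ have causal gradients of opposite time orientation. Then $d\bigl(a/(a+b)\bigr)=(b\,da-a\,db)/(a+b)^2$ is still positive on future causal vectors.
\item The Cauchy property comes from trapping the final sum between consecutive slices, not from adjusting level sets afterwards.
\end{itemize}
The condition $dh(n)=0$ is linear and survives both sums and this quotient rule. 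Hence tangency of $\nabla t$ follows as soon as the building blocks centred near $\partial\mcM$ satisfy it. Producing such blocks with globally causal gradient is the genuinely new ingredient your sketch does not supply. Evenness in a collar is not enough on its own: the reflection $z\mapsto -z$ in Gaussian normal coordinates preserves the light cones only at $z=0$, so a symmetrized block needs a wider-cone margin and a thin collar.

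Alternatively, suppose you already have a smooth Cauchy temporal function $\tau$ on $\mcM$. You can replace it in a thin normal collar by $\tau_\partial(y,z)\doteq\tau(y,0)$, which is temporal near $z=0$ and satisfies $\partial_z\tau_\partial=0$. The gluing must then use a logarithmic cutoff $\chi(z)$ with $\sup|z\chi'(z)|$ small. This keeps the error term $(\tau_\partial-\tau)\,d\chi=O(z)\,d\chi$ dominated by the timelike margin; with an ordinary cutoff that term is of order one.
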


\noindent Observe that, since per hypothesis $\partial\mathcal{M}\neq\emptyset$, then also $\partial\Sigma\neq\emptyset$ and we denote the interior of a Cauchy surface as $\mathring{\Sigma}=\Sigma\setminus\partial\Sigma$. We report here a notable result, see \cite[Cor. 5.8]{Ak_Hau_2020}, which will play a key role in our analysis.
\begin{corollary}\label{Cor: Globally Hyperbolic Extension}
	Let $(\mcM,g)$ be a globally hyperbolic spacetime as per Definition \ref{Def: Globally Hyperbolic}. Then there exists a second globally hyperbolic spacetime $(\mcM^\prime,g^\prime)$ with $\partial\mcM^\prime=\emptyset$ together with an isometric embedding $\iota:\mcM\to\mcM^\prime$.
\end{corollary}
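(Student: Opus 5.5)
The strategy is to build $\mcM^\prime$ in two stages: first extend $\mcM$ across its boundary to an open manifold without boundary, then cut this down to a neighbourhood of $\mcM$ small enough to be globally hyperbolic. The splitting of Proposition~\ref{Prop: Globally Hyperbolic}, $\mcM\simeq\mathbb{R}\times\Sigma$ with $g=-\beta dt^2+h_t$ and $\nabla t$ tangent to $\partial\mcM$, drives both stages.

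\emph{Extension.} Since $\nabla t$ is tangent to $\partial\mcM$, the boundary splits as $\partial\mcM\simeq\mathbb{R}\times\partial\Sigma$. Using a collar of $\partial\Sigma$ in $\Sigma$, glue an outer collar $[0,1)\times\partial\Sigma$ to obtain an open manifold $\tilde\Sigma\supset\Sigma$, and set $\tilde{\mcM}=\mathbb{R}\times\tilde\Sigma$. Extend $\beta$ and the family $h_t$ smoothly to $\tilde{\mcM}$ by Seeley extension in the collar coordinate. Positivity of $\beta$ and of $h_t$ is an open condition, so it persists on some open neighbourhood $U$ of $\mcM$ in $\tilde{\mcM}$. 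On $U$ the metric $\tilde g=-\tilde\beta\,dt^2+\tilde h_t$ is Lorentzian and restricts to $g$ on $\mcM$. The inclusion $\iota:\mcM\to U$ is therefore an isometric embedding, and $t$ is still a time function on $U$.

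\emph{Global hyperbolicity.} This is the main obstacle. A naive extension need not be globally hyperbolic, because inextendible causal curves may leave through the "side" of $U$ without ever meeting $\{t\}\times\tilde\Sigma$. I would restrict to a thinner open set of the form
\[
\mcM^\prime=\{(t,x)\in U : x\in\Sigma \ \text{or}\ \rho(x)<\epsilon(t,x)\},
\]
where $\rho$ is the outward collar coordinate and $\epsilon>0$ is smooth. The aim is to show that $\{t_0\}\times\Sigma^\prime_{t_0}$ is a Cauchy hypersurface of $\mcM^\prime$. The plan is to choose $\epsilon$ so small, and decreasing so fast in $|t|$, that the light cones of $\tilde g$ in the collar are steep relative to the shape of the region. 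Then any future-inextendible causal curve in $\mcM^\prime$ cannot exit through the lateral face $\rho=\epsilon$, since that face is spacelike or transverse in the right way. Such a curve must therefore have $t\to\pm\infty$, which yields the Cauchy property.

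The delicate step is this cone-steepness estimate. One must compare the slope of $\epsilon$ with the ratio $\sqrt{\tilde\beta}/|d\rho|_{\tilde h_t}$, which is locally bounded on compacta. A diagonal choice of $\epsilon$ over an exhaustion of $\mathbb{R}\times\partial\Sigma$ by compact sets then handles the non-compact case. Once every level set of $t$ is shown to be Cauchy, Geroch's theorem (or its Bernal--S\'anchez form) gives that $(\mcM^\prime,g^\prime)$, with $g^\prime=\tilde g|_{\mcM^\prime}$, is globally hyperbolic. By construction $\partial\mcM^\prime=\emptyset$, and $\iota$ remains an isometric embedding into it.
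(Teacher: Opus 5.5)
The paper gives no argument of its own here: the corollary is quoted from \cite[Cor.~5.8]{Ak_Hau_2020}. Your first stage is fine: the outer collar and Seeley extension give a Lorentzian $\tilde g$ on an open $U\supset\mcM$ in which $t$ is still a time function. The second stage fails for a structural reason. As long as the metric is frozen to $\tilde g$, no thinning of the domain can make the $t$-slices Cauchy.

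Take a point $(\rho_0,y_0)$ of the outer collar. The connected pieces in $\mcM^\prime$ of the $t$-line $s\mapsto(s,\rho_0,y_0)$ are inextendible timelike curves of $\mcM^\prime$. So if $\{t_0\}\times\Sigma^\prime_{t_0}$ is to be Cauchy, $\epsilon(\cdot,y_0)$ must be nonincreasing on $(t_0,\infty)$ and nondecreasing on $(-\infty,t_0)$; otherwise some such piece lies entirely on one side of $t_0$. Your choice of $\epsilon$ decreasing in $|t|$ is therefore compatible with at most the single slice $t_0=0$. Asking it for every $t_0$, as you do, forces $\epsilon$ to be $t$-independent and $\mcM^\prime=\bR\times\Sigma^\prime$. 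The lateral face of such a product contains $\partial_t$ and is therefore timelike, so outward null curves leave $\mcM^\prime$ at finite $t$.

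Even a single slice fails in the simplest case. In half-Minkowski space the Seeley extension across $z=0$ is just Minkowski space. There the face $\{\rho=\epsilon(t,y)\}$ is timelike wherever $|\partial_t\epsilon|<1$. Near such a point with $t>t_0$, short causal curves enter $\mcM^\prime$ through the face and leave through it again; they are inextendible in $\mcM^\prime$ and never reach $t_0$. Together with the monotonicity above, you would need $\partial_t\epsilon\leq-1$ on all of $(t_0,\infty)$, which is incompatible with $\epsilon>0$. The cone-steepness estimate you describe cannot exist: the outward light speed near $\partial\mcM$ is fixed by $g$, while $\epsilon$ must stay positive for all times.

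What you are missing is that only $\iota^*g^\prime=g$ is required, so you may change the metric, not just the domain, away from $\mcM$. First extend $\beta$ and $h_t$ to all of $\bR\times\tilde\Sigma$, gluing the Seeley extension to arbitrary positive data with a partition of unity; convex combinations preserve positivity. Then close up the light cones in the collar. For instance, take $g^\prime=-\tilde\beta\,dt^2+F\tilde h_t$ with $F\geq1$, $F=1$ on $\mcM$, and $F$ growing towards the outer edge fast enough that causal curves there have bounded speed for a metric in which that edge is infinitely far away. This is the mechanism of Proposition~\ref{Prop: Extension of Globally Hyperbolic} (Nomizu--Ozeki completion of the optical metric with $\Omega|_\Sigma=1$) and of the M\"uller-type domination in Proposition~\ref{Prop: Existence of geometric Deformation}.

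Since $t$ remains a time function, non-imprisonment reduces the Cauchy property to ruling out escape to infinity in $\tilde\Sigma$ at finite $t$. The delicate remaining case, which your argument does not address, is causal curves that hug $\partial\mcM$ from outside, where $F$ must be close to $1$, and cross it back and forth. To rule these out you have to use the global hyperbolicity of $\mcM$ itself, not just the local cone geometry.
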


In the following we shall impose a restriction on the class of globally hyperbolic spacetimes, related to the geometry of $\partial\mcM$. The reason for requiring it is twofold. On the one hand, we will need to establish a notion of geodesically convex normal neighbourhood of $\mcM$ intersecting $\partial\mcM$. A most convenient way consists of defining these open sets as the restriction to $\mcM$ of their counterparts on $\mcM^\prime$. Yet one has to make sure that a geodesic connecting two points in $\mcM$ is not allowed to escape through $\mcM^\prime\setminus\mcM$. On the other hand, in order to establish the main result of this work, see Section \ref{Sec: Hadamard states with Robin boundary conditions}, we require that every null geodesic stemming from a point in the interior of $\mcM$ hits the boundary with a velocity vector which is not tangent to $\partial\mcM$. We will make these heuristic considerations precise in the next sections. 

\begin{assumption}\label{Ass: Infinitesimally convex}
Let $(\mcM,g)$ be a $d$-dimensional, $d\geq 2$, globally hyperbolic spacetime with a timelike boundary as per Definition \ref{Def: Globally Hyperbolic} and Proposition \ref{Prop: Globally Hyperbolic}. We require $(\mcM, g)$ to be {\bf infinitesimally convex} at $\partial\mcM$. This means that, for every $p\in\partial\mcM$, given a neighbourhood $\mathcal{U}\subset\mcM$ including $p$ and a boundary defining function $z:\mathcal{U}\to [0,\infty)$ such that $z^{-1}(0)=\mathcal{U}\cap\partial\mcM$, while $dz|_{\mathcal{U}\cap\partial\mcM}\neq 0$, the metric induced Hessian 
$$\mathrm{Hess}_g(z)(X,X)\leq 0,\quad\forall X\in T_p\mcM,\;p\in\partial\mcM\;\mathrm{and}\;g(X,n)=0.$$
Here $n$ is the inward pointing normal vector to $\partial\mcM$ at $p$.
\end{assumption}

\begin{remark}\label{Rem: Weak vs Infinitesimal Convexity}
We observe that, in the definition of infinitesimal convexity, as shown in \cite{Caponio13}, particularly Theorem 3.16 and its proof, generalizing a classical result by Bishop in the Riemannian setting, the condition is independent from the choice of the boundary function under the given assumptions. Furthermore it is equivalent to weak local convexity, namely that, for all $p\in\partial\mcM$, there exists a neighbourhood $\mathcal{U}$ including $p$ such that, if a geodesic $\gamma$ is tangent to $\partial\mcM$ at $q\in\mathcal{U}\cap\partial\mcM$, then there exists $\varepsilon>0$ such that $\gamma(t)\notin\mathring{\mcM}$ for all $t\in(-\varepsilon,\varepsilon)$. Therefore, we use interchangeably the adjectives {\em infinitesimal} and {\em weakly local} when referring to convexity. 
\end{remark}

In the following we prove a vital result, which establishes the equivalence between Assumption \ref{Ass: Infinitesimally convex} and the impossibility for a geodesic, stemming from a point in the bulk, to intersect tangentially the boundary. 

\begin{proposition}\label{Prop: no-glancing}
	Let $(\mcM,g)$ be a globally hyperbolic spacetime. Then it is infinitesimally convex if and only if it is {\bf no glancing}, that is, for any geodesic $\gamma:I\subseteq\bR\to M$ for which there exists $s,s_0\in I$ such that $\gamma(s)\in\mathring{\mcM}$, while $\gamma(s_0)\in\partial\mcM$, then
	\begin{equation}\label{Eq: No Glancing}
		g(\dot{\gamma}(s_0),n)\neq 0,
	\end{equation}
	where $\dot{\gamma}$ is the tangent vector to $\gamma$ while $n$ is the normal vector, inward pointing from $\partial\mcM$ at $s_0$. 
\end{proposition}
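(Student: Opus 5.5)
The plan is to pass to the boundaryless extension $(\mcM',g')$ of Corollary~\ref{Cor: Globally Hyperbolic Extension}. There every geodesic of $\mcM$ is the restriction of a geodesic of $\mcM'$. Near a boundary point $p$ we fix a boundary defining function $z$, extended smoothly to a neighbourhood in $\mcM'$ so that it is negative on $\mcM'\setminus\mcM$. Along a geodesic $\gamma$ with $\gamma(s_0)=p$ we set $f(s)\doteq z(\gamma(s))$. Since $\gamma$ is a geodesic,
\begin{equation*}
f'(s)=dz(\dot\gamma(s)),\qquad f''(s)=\mathrm{Hess}_g(z)(\dot\gamma(s),\dot\gamma(s)).
\end{equation*}
Because $\nabla z|_{\partial\mcM}$ is a positive multiple of $n$, tangency $g(\dot\gamma(s_0),n)=0$ is equivalent to $f'(s_0)=0$. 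Both implications then reduce to the sign behaviour of $f$ near $s_0$, together with Remark~\ref{Rem: Weak vs Infinitesimal Convexity}, which identifies infinitesimal convexity with weak local convexity and makes the condition independent of the choice of $z$.

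\emph{Infinitesimal convexity $\Rightarrow$ no glancing.} Suppose a geodesic $\gamma$ in $\mcM$ has $\gamma(s)\in\mathring{\mcM}$ and is tangent to $\partial\mcM$ at $\gamma(s_0)$. By weak local convexity, $\gamma(t)\notin\mathring{\mcM}$ for $|t-s_0|<\varepsilon$. Since $\gamma$ takes values in $\mcM$, this forces $\gamma(t)\in\partial\mcM$ on that interval, so $\gamma$ runs inside the boundary for a while. Let $J\doteq\{t\in I:\gamma(t)\in\partial\mcM,\ g(\dot\gamma(t),n)=0\}$. This set is closed in $I$. It is also open: at every point of $J$ weak local convexity applies again and shows $\gamma$ stays in $\partial\mcM$ nearby, where it is automatically tangent. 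Since $I$ is connected, $J=I$, which contradicts $\gamma(s)\in\mathring{\mcM}$. The main obstacle is the openness step. It relies on the fact that a geodesic contained in $\partial\mcM$ over an interval is tangent there, which is immediate, and on the neighbourhood $\mathcal U$ in Remark~\ref{Rem: Weak vs Infinitesimal Convexity} being available around every boundary point.

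\emph{No glancing $\Rightarrow$ infinitesimal convexity.} Argue by contradiction. Assume $\mathrm{Hess}_g(z)(X,X)>0$ for some $p\in\partial\mcM$ and some $X\in T_p\partial\mcM$; note $X$ could be timelike, null or spacelike, since $\partial\mcM$ is Lorentzian. Take the geodesic $\gamma$ of $\mcM'$ with $\gamma(0)=p$ and $\dot\gamma(0)=X$. Then $f(0)=0$, $f'(0)=0$ and $f''(0)>0$, so $f(t)>0$ for small $t\neq0$. Hence $\gamma(t)\in\mathring{\mcM}$ for small $t\neq 0$, and $\gamma$ restricted to $(-\varepsilon,\varepsilon)$ is a geodesic of $\mcM$ reaching $\partial\mcM$ tangentially from the bulk. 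This contradicts no glancing and gives $\mathrm{Hess}_g(z)\le0$ on $T\partial\mcM$. A small care point is that we must use the extension $\mcM'$, or equivalently local existence of geodesics with boundary initial data, to produce $\gamma$. Because $f>0$ off $t=0$, this $\gamma$ genuinely stays in $\mcM$.
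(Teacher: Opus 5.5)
Your proof is correct. The converse implication (no glancing $\Rightarrow$ infinitesimal convexity) is the same as the paper's: the geodesic with initial datum $(p,X)$, where $X$ is tangent to $\partial\mcM$ and $\mathrm{Hess}_g(z)(X,X)>0$, makes $z\circ\gamma$ have a strict minimum $0$ at $t=0$, so $\gamma$ touches $\partial\mcM$ tangentially from the bulk.

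For the direct implication you take a different route. The paper does not invoke Remark~\ref{Rem: Weak vs Infinitesimal Convexity} there. It works in Gaussian normal coordinates $ds^2=dz^2+h(\mathsf y,z)$, where the $z$-component of the geodesic equation reads $\ddot z=\tfrac12\partial_z h_{kj}\dot{\mathsf y}^k\dot{\mathsf y}^j$ and $\mathrm{Hess}_g(z)_{ij}=\tfrac12\partial_z h_{ij}$. From this it derives $\ddot z\le C'z$ along the portion of $\gamma$ lying in $\mcM$. A Gr\"onwall argument from $z(s_0)=\dot z(s_0)=0$ and $z\ge0$ then shows that $z\circ\gamma$ vanishes near $s_0$.

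You instead import weak local convexity from Caponio's equivalence, so the Bishop-type step is delegated to the literature. You then globalise with the clopen set $J$. The paper's route is self-contained and proves only what is needed, namely rigidity for geodesics that stay in $\mcM$ (it uses $z\ge0$), not Caponio's statement about all ambient geodesics. Yours is shorter and makes the continuation step explicit.

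Two remarks on the comparison. First, for the paper's inequality to hold, the convexity bound must be applied at the foot point $(\mathsf y(s),0)$ rather than at $p$. Indeed, $\Gamma^z_{kj}(\gamma(s))-\Gamma^z_{kj}(p)$ is in general $O(|s-s_0|)$, not $O(z(s))$. Second, the paper passes from ``$z\circ\gamma\equiv0$ near $s_0$'' to a contradiction with $\gamma(s)\in\mathring{\mcM}$ without comment. Your open--closed argument for $J$ fills exactly that gap. It would also work verbatim with the Gr\"onwall step in place of weak local convexity, since that step applies at every point of $J$.
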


\begin{proof}
$\Longrightarrow$ Given $p \in \partial \mcM$, we consider $\mathcal{U}_p\subset\mcM$ and Gaussian normal coordinates $(\mathsf{y}, z)$ so that the metric $g$ takes the form
\begin{equation*}
ds^2 = dz^2 + h(\mathsf{y},z)=dz^2+h_{ij}(\mathsf{y},z)d\mathsf{y}^i d\mathsf{y}^j,
\end{equation*}
where $h$ is a Lorentzian metric. Assume, by contradiction, that there exists a geodesic $\gamma: [0,1] \to \mcM$ with $\gamma(0) \in \mathring{\mcM}$ and $s_0\in (0,1]$ such that $\gamma(s_0)=p$ while $g(\dot{\gamma}(s_0), \partial_z) =0$. Since $\partial_z$ is normal to the boundary, this entails that $\dot{\gamma}(s_0)$ is tangent to $\partial\mcM$. Since $\Gamma^z_{zz}=\Gamma^z_{iz}=0$ and $\Gamma^z_{ij}=-\frac{1}{2}\partial_z h_{ij}$, it follows that 
$$\mathrm{Hess}_g(z)_{ij} = \nabla_i \nabla_j z = \partial_i \partial_j z - \Gamma^z_{ij} \partial_z z = - \Gamma^z_{ij}= \frac{1}{2}\partial_z h_{ij},$$ 
while the $z$-components of the geodesic equations reads
\begin{equation*}
\ddot{z}(s) + \Gamma^z_{k j} \dot{\gamma}^k(s) \dot{\gamma}^j(s) = \ddot{z}(s) - \frac{1}{2} \partial_z h_{kj} \, \dot{\mathsf y}^k(s) \dot{\mathsf y}^j(s) = 0.
\end{equation*}
Since, per hypothesis, at $p\in\partial\mcM$, $\mathrm{Hess}_g(z)_{ij}(s_0) = \frac{1}{2} \partial_z h_{ij}(s_0) \le 0$, it holds that
\begin{equation*}
\ddot{z}(s) = \frac{1}{2}\partial_z h_{kj} \, \dot{\mathsf y}^k(s) \dot{\mathsf y}^j(s) \le \frac{1}{2} \underbrace{[\partial_z h_{kj} - \partial_z h_{kj}(s_0)]}_{\Gamma^z_{kj}(s) - \Gamma^z_{kj}(s_0)} \, \dot{\mathsf y}^k(s) \dot{\mathsf y}^j(s).
	\end{equation*}
Smoothness of the Christoffel symbols entails that
\begin{equation*}
\bigg \vert \Gamma^z_{kj}(s) - \Gamma^z_{kj}(s_0) \bigg \vert = \bigg \vert \int_{s_0}^s \partial_\lambda \Gamma^z_{kj}(\lambda) \, d \lambda \bigg \vert = \bigg \vert \int_0^{z(s)} \partial_z \Gamma^z_{kj} \, dz \bigg \vert \le C |z(s)|,
\end{equation*}
where $C>0$ can be chosen uniformly on a sufficiently small compact subset of $\mathcal{U}_p$ containing both $p$ and $\gamma(s)$. Since, on a sufficiently small interval of the affine parameter $s$, we can also claim that $\dot{\mathsf{y}^k}$ is bounded for all $k=1,\dots, d-1$, $d=\dim\mcM$, this entails that there exists a second positive constant $C^\prime$ such that  $\ddot{z}(s)\leq C^\prime z(s)$, where we have dropped the absolute value being $z$ positive in $\mcM$. Recalling that $\dot{z}(s_0) = 0$, this inequality yields upon repeated integration
\begin{flalign*}
\dot{z}(s) & \leq C^\prime\int_{s_0}^s  z(\lambda) \, d\lambda, \\
z(s) &\leq C^\prime\int_{s_0}^s \int_{\lambda}^s z(\lambda) \, d\lambda \, d\eta =  C^\prime \int_{s_0}^s (s-\lambda) z(\lambda) \, d\lambda \leq C^\prime s \underbrace{\int_{s_0}^s z(\lambda) \, d\lambda}_{\mathsf R(s)}.
	\end{flalign*}
Notice that $\mathsf R (s_0) = 0$ and $\mathsf R'(s) = z(s) \le C' s \mathsf R(s)$. Since $\mathsf{R}(s)\geq 0$ being the integral of a positive quantity, by Gr\"onwall lemma, we obtain that $\mathsf R(s) = 0$ and, thus, $z(s) = 0$. This is a contradiction since the geodesic is evolving along $\partial\mcM$.

\vskip .2cm

$\Longleftarrow$ Assume the no glancing hypothesis and that the boundary is not infinitesimally convex. Then, there exists $p \in \partial \mcM$ and a vector field $X \in T_p \mcM$ such that $g(X,\partial_z)=0$ and $\mathrm{Hess}_g(z)(X,X) > 0$. As above, $z$ is a boundary function defined in an open subset $\mathcal{U}_p$ containing $p$ where Gaussian normal coordinates have been implicitly introduced. Let $\gamma: I \to \mcM$, $I=[0,1]$, be the geodesic such that $\gamma(0)=p$, while $\dot{\gamma}(0)=X$. Denoting by $z(s) \doteq z(\gamma(s))$ for any $s \in I$ we have that  $z(0)= 0$ and $\dot{z}(0) = d z(X) = 0$.  The geodesic equation entails in turn that
$$\ddot{z}(0)= -\Gamma^z_{ij} {X}^i {X}^j = \mathrm{Hess}_g(z) {X}^i {X}^j \equiv \mathrm{Hess}_g(z) (X, X) > 0.$$ 
Expanding $z(s)$ in Taylor series around $s = 0$, we therefore obtain $z(s) = \frac{1}{2} \mathrm{Hess}_g(z) (X,X) s^2 + \mathcal{O}(s^2)$. Therefore, neglecting the infinitesimal higher order contributions, there exists $s^\prime>0$ such that $z(s) > 0$ for all $s\in(0,s^\prime)$, which implies $\gamma(s)\in\mathring{\mcM}$. At the same time $dz(\dot{\gamma}(0))=g(\dot{\gamma}(0), \partial_z) =0$, which contradicts the no glancing hypothesis and hence it concludes the proof.
\end{proof}

\noindent We observe that, in Proposition \ref{Prop: no-glancing}, the condition on the existence of a point $s\in I$ for which $\gamma(s)$ lies in the interior of $\mcM$ is necessary, since, being $\partial\mcM$ timelike, if $\dim\mcM=d> 2$, there are lightlike curves entirely contained in $\partial\mcM$. 

\begin{definition}\label{Def: Cauchy neighbourhood}
	Let $(\mcM,g)$ be an infinitesimally convex, globally hyperbolic spacetime with timelike boundary as per Assumption \ref{Ass: Infinitesimally convex} and let us identify it with $\bR\times\Sigma$, leaving the corresponding isometry implicit. We denote by $\Sigma_0\doteq\{0\}\times\Sigma$ the Cauchy hypersurface at $t=0$. For every $T>0$, we set
	$$\mathcal{N}_T\doteq t^{-1}((-T,T))=(-T,T)\times\Sigma.$$
	We call $\mathcal{N}_T$ a {\bf Cauchy neighbourhood} of $\Sigma_0$. In the following we fix one such open set, denoting it simply by $\mathcal{N}$, and we set $\mathring{\mathcal{N}}\doteq\mathcal{N}\cap\mathring{\mcM}$.
\end{definition}

\noindent Recall that $t$ is a temporal function, hence strictly monotone along every causal curve. Hence, as in a globally hyperbolic spacetime with empty boundary, every causal curve whose endpoints lie in $\mathcal{N}_T$ is entirely contained therein. This entails that $\mathcal{N}_T$ is causally convex.

\subsection{Static Globally Hyperbolic Spacetimes with a Timelike Boundary}\label{Sec: Static Spacetimes}

In our work, we want to make strong contact with \cite{Dappiaggi-Drago_2019}, where advanced and retarded fundamental solutions for the Klein--Gordon operator were constructed by means of spectral techniques on {\bf (standard) static}, globally hyperbolic spacetimes with timelike boundary. In this case, the representation in Equation \eqref{Eq: line element psi*g} can be chosen so that $\partial_t$ is a timelike Killing field, \textit{i.e.}, the lapse function $\beta$ is $t$-independent and $h_t=h_0\equiv h$ for all $t\in\bR$.
In \cite{Dappiaggi-Drago_2019}, additional constraints on the underlying geometry have been imposed and we outline them succinctly. More precisely, we are interested in working with Lorentzian manifolds whose Cauchy surfaces are of bounded geometry. In the following we combine the setting outlined in \cite{Gérard} with the definitions, given in \cite{AmmanGrosseNistor}, of a Riemannian manifold of bounded geometry with a non-empty boundary. As starting point, we recall the following standard characterization, see \cite{Eich91}.

\begin{definition}\label{Def: Bounded_geometry}
	A Riemannian manifold $(\mathsf N,h)$ with $\partial \mathsf N=\emptyset$ is called of {\bf bounded geometry} if the injectivity radius $r_{\mathrm{inj}}(\mathsf N)>0$ and if $T \mathsf N$ is of totally bounded curvature, that is $\|\nabla^k \mathrm{Riem}\|_{L^\infty}<\infty$ for all $k\in\mathbb{N}\cup\{0\}$, $\mathrm{Riem}$ being the curvature tensor and $\nabla$ the Levi-Civita connection associated with the Riemannian metric $h$.
\end{definition}

\noindent To avoid the problem that $r_{\mathrm{inj}}(\mathsf N)$ vanishes if we drop the assumption $\partial \mathsf N=\emptyset$, one must first consider the following generalization.

\begin{definition}\label{Def: Bounded_geometry_submanifold}
	Let $(\mathsf N,h)$ be a Riemannian manifold of bounded geometry as per Definition \ref{Def: Bounded_geometry} and let $(Y,\iota_Y)$ be a codimension $1$ closed, embedded, smooth submanifold with an inward pointing, unit normal vector field $n_Y$. We say that $(Y,\iota^*_Y h)$ is a {\bf bounded geometry submanifold} if the following two conditions hold:
	\begin{enumerate}
		\item the second fundamental form $K_Y$ of $Y$ in $\mathsf N$ together with all its covariant derivatives along $Y$ is bounded,
		\item there exists $\epsilon_Y>0$ such that the map $$\varphi_{n_Y}:Y\times(-\epsilon_Y,\epsilon_Y)\to \mathsf N, \qquad (x,z)\mapsto\varphi_{n_Y}(x,z)\doteq\exp_x(zn_{Y,x}),$$ is injective, where $\exp_x$ is the exponential map of $\mathsf N$ at $x \in Y$.
	\end{enumerate}
\end{definition} 

\noindent We observe that, as proven in \cite{AmmanGrosseNistor}, Definition \ref{Def: Bounded_geometry_submanifold} entails that $(Y,\iota^*_Y g)$ is automatically a Riemannian manifold of bounded geometry in the sense of Definition \ref{Def: Bounded_geometry}. We can thus introduce the class of Riemannian manifolds relevant to us. To disambiguate between Definition \ref{Def: Bounded_geometry} and to make contact with the structures already introduced, we denote by $\Sigma$ Riemannian manifolds with a non-empty boundary.

\begin{definition}\label{Def: Bounded_geometry_boundary}
	Let $(\Sigma,h)$ be a Riemannian manifold with $\partial\Sigma\neq\emptyset$. We say that $(\Sigma,h)$ has {\bf bounded geometry} if there exists a Riemannian manifold of bounded geometry $(\widehat{\Sigma},\widehat{h})$, with $\partial \widehat{\Sigma} = \emptyset$ and of the same dimension of $\Sigma$ such that 
	\begin{enumerate}
		\item $\Sigma$ is a closed subset of $\widehat{\Sigma}$ and $h=\widehat{h}|_\Sigma$,
		\item $(\partial\Sigma,\iota^*\widehat{h})$ is a bounded geometry submanifold of $\widehat{\Sigma}$, where $\iota:\partial\Sigma\to\widehat{\Sigma}$ is the embedding map.
	\end{enumerate}
\end{definition}

\begin{remark}
	We remark that in \cite{Schick} a definition of a Riemannian manifold with boundary and of bounded geometry has been given without making any reference to the ambient space $(\widehat{\Sigma},\widehat{h})$. On the one hand, Definition \ref{Def: Bounded_geometry_boundary} is equivalent to the one given in this reference. On the other hand, in the ensuing discussion, the existence of $\widehat{\Sigma}$ will play an important role and, therefore, we highlight it from the very beginning.
\end{remark}

\noindent Henceforth, we shall restrict our attention to the distinguished subclass of standard static globally hyperbolic spacetimes, for which the orthogonal splitting in Equation \eqref{Eq: line element psi*g} can be chosen with $t$-independent coefficients.

\begin{remark}\label{Rem: Comparison to GOW}
	We highlight that our assumptions on the underlying background are slightly different from those in \cite{Gérard}. In this reference Lorentzian manifolds of bounded geometry are introduced and, under the additional assumption of global hyperbolicity, an associated notion of Cauchy hypersurface of bounded geometry is developed, see \cite[Def. 3.1 \& 3.3]{Gérard}. These assumptions are motivated by the goal of establishing a global, pseudodifferential construction of Hadamard states. In our framework bounded geometry is a tool to guarantee the existence on each Cauchy surface of a good notion of Sobolev spaces and of trace maps, see \cite{AmmanGrosseNistor}. Yet, if we impose in addition that, in Equation \eqref{Eq: line element psi*g}, $\beta\in C^\infty_b(\Sigma,h)$, namely $\beta$ is smooth and bounded with all its covariant derivatives up to all orders, then a standard static, globally hyperbolic spacetime $(\mathcal{M},g)$ is also admissible according to \cite{Gérard}.
\end{remark}

\begin{example}
	The prototypical example of a standard static, globally hyperbolic spacetime with a non-empty timelike boundary is $d$-dimensional Minkowski half-space $(\mathbb{H}^d, \eta)$, which is in addition infinitesimally convex as per Assumption \ref{Ass: Infinitesimally convex}. It serves as testing ground for the investigation of quantum field theories on more general backgrounds, but we refrain from discussing it here, since a detailed account of this scenario can be found in \cite{Costeri_25}. Yet it is important to mention it, since accounting for this scenario is one of the main reasons why we do not restrict attention to strictly convex, globally hyperbolic spacetimes, a restriction that would otherwise simplify some of our constructions.
\end{example}

\begin{example}\label{Rem: no AdS}
	In the mathematical physics literature, the notion of a spacetime with timelike boundary is typically exemplified by the $d$-dimensional Anti-de Sitter spacetime $\mathbb{A}d\mathbb{S}_{d}$, $d>2$. Upon considering the restriction to the Poincaré patch $P\mathbb{A}d\mathbb{S}_{d}$, a conformal rescaling of the metric identifies it with the interior of the $d$-dimensional Minkowski half-space, endowed with a timelike conformal boundary, see \cite[Rmk. 2.2]{Costeri_25}. The main difficulty in this setting is that, under this conformal transformation, mass terms such as the one in the Klein-Gordon equation give rise on the half-space to a potential, singular at the boundary. This behaviour requires techniques different from those discussed in the present work in order to construct Hadamard states compatible with Robin boundary conditions, \textit{cf.}, \cite{Dappiaggi_26}.
\end{example}

\noindent We conclude this first introductory part of the paper by establishing an ancillary geometric property possessed by standard static, globally hyperbolic spacetime with a timelike boundary. This is a generalization of Corollary \ref{Cor: Globally Hyperbolic Extension} which will simplify some of the following proofs. For related results, see also \cite{Grosse}.

\begin{proposition}\label{Prop: Extension of Globally Hyperbolic}
	Let $(\mathcal{M},g)$ be a standard static, globally hyperbolic spacetime with timelike boundary such that, referring to the orthogonal splitting as per Proposition \ref{Prop: Globally Hyperbolic}, the Cauchy surface $(\Sigma,h)$ is of bounded geometry, \textit{cf.}, Definition \ref{Def: Bounded_geometry_boundary}. Then, there exists a standard static, globally hyperbolic spacetime $(\mcM^\prime,g^\prime)$ with empty boundary such that $(\mcM,g)$ is isometrically embedded therein.
\end{proposition}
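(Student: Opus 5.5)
Since $(\mcM,g)$ is standard static, Proposition \ref{Prop: Globally Hyperbolic} lets us write $\mcM\simeq\bR\times\Sigma$ with $g=-\beta\,dt^2+h$, where $\beta\in C^\infty(\Sigma;(0,\infty))$ and $h$ do not depend on $t$. The plan is to extend the spatial data $(\Sigma,h,\beta)$ to a manifold without boundary and then take the static product. Definition \ref{Def: Bounded_geometry_boundary} already supplies the extension of $(\Sigma,h)$: a complete Riemannian manifold $(\widehat\Sigma,\widehat h)$ of bounded geometry without boundary, with $\Sigma\subset\widehat\Sigma$ closed and $h=\widehat h|_\Sigma$. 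We then set
\[
\mcM'\doteq\bR\times\widehat\Sigma,\qquad g'\doteq-\widehat\beta\,dt^2+\widehat h,
\]
where $\widehat\beta\in C^\infty(\widehat\Sigma;(0,\infty))$ extends $\beta$. The inclusion $\bR\times\Sigma\hookrightarrow\bR\times\widehat\Sigma$ is then an isometric embedding, and $\partial_t$ is a timelike Killing field for $g'$. So $(\mcM',g')$ is standard static with empty boundary.

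\textbf{Step 1: extending the lapse.} Take the collar $\varphi_{n}:\partial\Sigma\times(-\epsilon,\epsilon)\to\widehat\Sigma$ from condition 2 of Definition \ref{Def: Bounded_geometry_submanifold}; it is injective, with uniform width $\epsilon$. Since $\beta$ is smooth up to $\partial\Sigma$, Seeley's extension theorem, applied in the normal coordinate $z$, gives a smooth extension $\tilde\beta$ to a neighbourhood of $\Sigma$. Because $\beta>0$, shrinking the collar if needed keeps $\tilde\beta>0$ on $\Sigma\cup\varphi_n(\partial\Sigma\times(-\epsilon',0])$. Let $\chi$ be a cutoff, equal to $1$ on $\Sigma$ and supported in this neighbourhood, and put $\widehat\beta=\chi\tilde\beta+(1-\chi)$; this is smooth and positive on $\widehat\Sigma$. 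If $\partial\Sigma$ is noncompact, positivity near the boundary may fail to be uniform. In that case I would let $\epsilon'$ be a positive continuous function on $\partial\Sigma$ rather than a constant, which still works because we only need pointwise positivity.

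\textbf{Step 2: global hyperbolicity of $(\mcM',g')$.} This is the main obstacle. For a static metric $-\widehat\beta\,dt^2+\widehat h$, the standard criterion (e.g.\ Sánchez's characterisation of standard static spacetimes) states that $\bR\times\widehat\Sigma$ is globally hyperbolic with Cauchy slices $\{t\}\times\widehat\Sigma$ if and only if the conformal optical metric $\widehat h/\widehat\beta$ is complete. Completeness of $\widehat h$ comes from bounded geometry. On $\Sigma$, however, $\beta$ need not be bounded above, so I cannot just cite completeness of $\widehat h$. Instead I would use that $(\mcM,g)$ is globally hyperbolic with Cauchy surfaces $\{t\}\times\Sigma$, which should give completeness of $h/\beta$ on $\Sigma$. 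I would then choose the extension $\widehat\beta$ outside $\Sigma$ so that $\widehat h/\widehat\beta$ is complete on $\widehat\Sigma$, for instance by taking $\widehat\beta$ bounded above away from a neighbourhood of $\Sigma$ in the interpolation of Step 1. A Cauchy sequence for $\widehat h/\widehat\beta$ then either stays in a bounded neighbourhood of $\Sigma$, where completeness is inherited, or lies in the region where $\widehat\beta$ is bounded, where completeness of $\widehat h$ applies. Gluing the two regimes is where I expect to spend effort.

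Alternatively, one can skip this analysis by invoking Corollary \ref{Cor: Globally Hyperbolic Extension} and checking that its construction can be made $t$-invariant. The direct route above seems cleaner, though.
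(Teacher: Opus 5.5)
Your setup is the paper's: take $(\widehat\Sigma,\widehat h)$ from Definition~\ref{Def: Bounded_geometry_boundary}, put $g'=-\widehat\beta\,dt^2+\widehat h$ on $\bR\times\widehat\Sigma$ with $\widehat\beta|_\Sigma=\beta$, and reduce global hyperbolicity to completeness of the optical metric $\widehat h/\widehat\beta$, using that $h/\beta$ is complete on $\Sigma$. Step~1 is fine. The gap is that the only substantive step, choosing $\widehat\beta$ so that $\widehat h/\widehat\beta$ is complete, is never carried out, and the dichotomy you sketch for it fails.

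First, completeness is not inherited near $\Sigma$. Off $\Sigma$, your $\widehat\beta$ is, or interpolates with, a Seeley extension, and nothing ties it to $\beta$ along a noncompact $\partial\Sigma$. Take $\widehat\Sigma=\bR^2$, $\Sigma=\{y\ge0\}$, $\beta\equiv1$, and $\tilde\beta=1+\phi(y)e^{4x^2}$, where $\phi(y)=e^{-1/y^2}$ for $y<0$ and $\phi=0$ for $y\ge0$. This is a legitimate positive smooth extension. For any cutoff $\chi=\chi(y)$, pick $y_0<0$ with $\chi(y_0)>0$. Then $\widehat\beta\ge\chi(y_0)e^{-1/y_0^2}e^{4x^2}$ on the line $y=y_0$, so this line reaches infinity with finite $\widehat h/\widehat\beta$-length. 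Hence no collar or cutoff of constant width works, including its transition layer $0<\chi<1$; the width would have to be adapted to the growth of $\tilde\beta$. Second, even with such control, a divergent curve may alternate infinitely often between $\Sigma$, the collar, and the region where $\widehat\beta$ is bounded. The optical cost of each crossing can tend to zero at infinity, so ``the Cauchy sequence stays in one regime'' is not available. Also, ``bounded neighbourhood of $\Sigma$'' has no meaning when $\Sigma$ is noncompact. What is needed is one global mechanism that controls all divergent curves at once.

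The paper supplies exactly that mechanism. Start from an arbitrary positive extension $\tilde\beta$ and set $q=\widehat h/\tilde\beta$. By Gordon's criterion, completeness of $h/\beta$ gives a proper $r\in C^\infty(\Sigma)$ with $|dr|_{h/\beta}\le\frac12$. Extend it to a proper $R$ on $\widehat\Sigma$, and set $\Omega=1+\Theta(|dR|^2_q)$, where $\Theta=0$ on $[0,\frac12]$ and $1+\Theta(s)\ge\sqrt{s}$. Then $\Omega|_\Sigma=1$ and $|dR|_{\Omega^2q}\le1$, so $R$ is $1$-Lipschitz and proper for $\Omega^2q$. Metric balls are therefore relatively compact, and Hopf--Rinow gives completeness. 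Finally, $\widehat\beta\doteq\Omega^{-2}\tilde\beta$ still restricts to $\beta$. This argument needs neither completeness of $\widehat h$ nor any bound on $\beta$.

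Your fallback through Corollary~\ref{Cor: Globally Hyperbolic Extension} does not close the gap either. Nothing in that corollary yields a static extension, and making its construction $t$-invariant is precisely what would have to be proved.
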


\begin{proof}
Under the given assumptions, Definition \ref{Def: Bounded_geometry_boundary} guarantees that there exists $(\widehat{\Sigma},\widehat{h})$, a Riemannian manifold of bounded geometry with $\partial\widehat{\Sigma}=\emptyset$ extending $(\Sigma, h)$. Up to the identification of $\mcM$ with $\bR\times\Sigma$ which is left implicit, we set $\mcM^\prime=\bR\times\widehat{\Sigma}$. We endow it with the metric $g^\prime$ whose associated line element reads
	$$ds^2=-\beta^\prime dt^2+\widehat{h}.$$
	Here $\beta^\prime\in C^\infty(\widehat{\Sigma};(0,\infty))$ must be chosen so that $\beta^\prime|_\Sigma=\beta$. In order for $(\mcM^\prime,g')$ to be globally hyperbolic, the Riemannian manifold $\widehat{\Sigma}$ endowed with the optical metric $(\beta')^{-1}\widehat{h}$ must be complete, see, {\it e.g.}, \cite[Thm. 4.4]{Caponio} and \cite[Rem. 7.3]{Sanchez:2007rx}. This result also entails that, being $(\mcM, g)$ globally hyperbolic per hypothesis,  $q_\Sigma\doteq\beta^{-1}h$ is complete on $\Sigma$. Furthermore, since $\Sigma$ is per hypothesis a closed subset of $\widehat{\Sigma}$, there exists $\tilde{\beta}\in C^\infty(\widehat{\Sigma};(0,\infty))$ such that $\tilde{\beta}|_{\Sigma}=\beta$. However, there is no guarantee that the optical metric $q \doteq \tilde{\beta}^{-1}\widehat{h}$ is complete. 
    
    To circumvent this issue, we invoke a classical result in differential geometry, see \cite{NomizuOzeki}, which asserts that, for every Riemannian manifold $(\widehat{\Sigma},q)$, there exists a smooth function $\Omega\in C^\infty(\widehat{\Sigma};(0,\infty))$ such that the conformally rescaled manifold $(\widehat{\Sigma},\Omega^2 q)$ is complete. We prove in the following that such $\Omega$ can be chosen so that $\Omega|_\Sigma=1$. Since $(\Sigma,q_\Sigma)$ is complete, using \cite{Gordon73,Gordon74} we can infer that there exists a smooth proper function $r\in C^\infty(\Sigma;[0,\infty))$ such that $|dr|_{q_\Sigma}\doteq \sqrt{q_\Sigma^{-1}(dr,dr)}\leq\frac{1}{2}$. Observe that the proof in these two references applies also to complete Riemannian manifolds with non-empty boundary. We extend $r$ to a smooth proper function $R\in C^\infty(\widehat{\Sigma};[0,\infty))$ such that $R|_{\Sigma}=r$ and on $\Sigma$, $|dR|_q=|dr|_{q_\Sigma}\leq\frac{1}{2}$. First of all, one can extend $r$ smoothly to a neighbourhood of $\Sigma$, using a collar neighbourhood of $\partial\Sigma$ in $\widehat\Sigma$, see Definition \ref{Def: Bounded_geometry_submanifold}. This extension is then patched with a smooth positive exhaustion of $\widehat\Sigma$, whose existence follows from \cite[Prop.~2.28]{Lee_2013}. The patching can be arranged so that the resulting function agrees with $r$ on $\Sigma$ and, by construction  $|dR|_q=|dr|_{q_\Sigma}$ on $\Sigma$. Herein we can extend $r$ smoothly across the boundary and, then, we proceed with an iteration building a smooth exhaustion function of $\widehat{\Sigma}$, see \cite[Prop 2.28]{Lee_2013}. Setting for simplicity $F\doteq|dR|_q^2$, we choose $\Theta\in C^\infty([0,\infty);[0,\infty))$ such that 
	$$\Theta(s)=0\;\mathrm{if}\; 0\leq s\leq \frac{1}{2}\quad\mathrm{and}\quad 1+\Theta(s)\geq \sqrt{s}\; \, \mathrm{for \, all}\; s\geq0.$$
	Fixing $\Omega\doteq1+\Theta\circ F$ this is per construction smooth and strictly positive. Furthermore, $|dr|_{q_\Sigma}\leq\frac{1}{2}$ entails $F|_\Sigma\leq\frac{1}{4}$ and thus $\Omega|_\Sigma=1$. We have found the sought candidate for the conformal factor and we verify that it enjoys the necessary properties. Hence, setting $q_\Omega\doteq\Omega^2 q$, for each $x\in\widehat{\Sigma}$
	$$|dR|_{q_\Omega}(x)=\Omega^{-1}(x)|dR|_{q}(x)=\frac{\sqrt{F(x)}}{1+\Theta(F(x))}\leq 1.$$
	Thus, for every piecewise differentiable curve $\gamma:[a,b]\to\widehat{\Sigma}$, with $a,b$ finite, it holds that
$$|R(\gamma(b))-R(\gamma(a))|	\leq\int_a^b |dR_{\gamma(t)}(\dot\gamma(t))|\,dt \leq	\int_a^b |\dot\gamma(t)|_{q_\Omega}\,dt=L_{q_\Omega}(\gamma),$$
where $L_{q_\Omega}(\gamma)$ denotes the length of the curve $\gamma$ with respect to the metric $q_\Omega$. In turn, this entails that, for every $x,y\in\widehat{\Sigma}$, 
$$|R(x)-R(y)|\leq d_{q_\Omega}(x,y),$$
namely $R$ is $1$-Lipschitz with respect to the Riemannian distance $d_{q_\Omega}$ induced by $q_\Omega$. Having established this property, we can consider any but fixed point $x_0\in\widehat{\Sigma}$ and $L>0$. Denoting by $B_{q_\Omega}(x_0,L)$ the ball of radius $L$, centred at $x_0$, with distance induced by $q_\Omega$, it holds that, for any $y\in B_{q_\Omega}(x_0,L)$, 
$$R(y)\leq R(x_0)+L.$$
In other words $\overline{B_{q_\Omega}(x_0,L)}\subset	R^{-1}\bigl([0,R(x_0)+L]\bigr)$, which is a compact set, being $R$ proper. We have thus proven that any open ball is relatively compact. Consequently, since any bounded subset of $\widehat{\Sigma}$ is contained in one of such balls, it is also relatively compact. This entails that we can now invoke the Hopf-Rinow theorem, see \cite[Thm  7.1]{Petersen}, to conclude that $(\widehat{\Sigma}, q_\Omega)$ is complete. Setting $\beta^\prime\doteq\Omega^{-2}\tilde\beta$ is the last step in order to conclude the proof.
\end{proof}

\begin{remark}
We observe that Proposition \ref{Prop: Extension of Globally Hyperbolic} does not entail that $(\mcM^\prime,g^\prime)$ is a Lorentzian manifold of bounded geometry in the sense of \cite{Gérard} since there is no control on the boundedness of $\beta^\prime$ and of all its covariant derivatives up to all orders. 
\end{remark}

\subsection{On the Synge world function and its reflected counterpart}\label{Sec: Geodesic and Reflected Geodesic Distances Claudio} 

In this section we introduce a few additional geometric structures, which will play a key role in the identification of a local form for Hadamard states in presence of a timelike boundary. First and foremost, we will recall the standard notion of \emph{Synge world function}, and subsequently we will introduce a counterpart which accounts for the reflection of light rays at $\partial\mcM$. 

\paragraph{Synge World Function --} We start our discussion by defining a key ingredient in the analysis of the structural properties of Hadamard states. On account of our standing assumptions, we can guarantee the existence of a globally hyperbolic spacetime with empty boundary, $(\mcM^\prime,g^\prime)$ containing $(\mcM,g)$, see Corollary \ref{Cor: Globally Hyperbolic Extension}. Let $\mathcal{U}^\prime\subset\mcM^\prime$ be a geodesically convex subset and let $\mathcal{U}\doteq \mathcal{U}^\prime\cap\mcM$. For any pair of points $x,y\in \mathcal{U}$, let $\gamma^{(x,y)}:[0,1]\to \mathcal{U}'$ be the unique geodesic in $\mcM^\prime$ connecting them, \textit{i.e.}, $\gamma^{(x,y)}(0) = x$ and $\gamma^{(x,y)}(1) = y$. Then we denote by $\sigma$ the {\bf Synge world function}, that is the halved squared geodesic distance defined by
	\begin{equation}\label{Eq: Synge World Function Interior}
		\sigma(x,y)\doteq\frac{1}{2}\int\limits_0^1 g^\prime(\dot{\gamma}^{(x,y)}(s),\dot{\gamma}^{(x,y)}(s)) \, ds,
	\end{equation}
where $\dot{\gamma}^{(x,y)}$ is the tangent vector to $\gamma^{(x,y)}$.

\begin{remark}\label{Rem: Convexity of the boundary}
	Equation \eqref{Eq: Synge World Function Interior} codifies in a single formula two distinct possibilities that we have to account for:
	\begin{enumerate}
		\item $\mathcal{U}^\prime\subset\mathring{\mcM}$. In this case $\mathcal{U}^\prime=\mathcal{U}$ and we could have avoided referring to the ambient globally hyperbolic spacetime $\mcM^\prime$ replacing $g^\prime$ with $g$ in Equation \eqref{Eq: Synge World Function Interior} since $g^\prime|_\mcM=g$.
		\item $\mathcal{U}^\prime\cap\partial\mcM\neq\emptyset$. In this case $\mathcal{U}^\prime\neq \mathcal{U}$ and one should account for two different scenarios, namely, given any pair of points $x,y\in \mathcal{U}$, the connecting geodesic $\gamma^{(x,y)}$ is entirely contained in $\mathcal{U}$ or not. In the first instance, one can replace once more $g^\prime$ with $g$, while, in the second case, this is not possible. As a matter of fact, in the latter scenario, the definition of $\sigma(x,y)$ depends on the choice of $\mcM^\prime$ which is highly non-unique. If we require that $\mcM$ is infinitesimally convex at $\partial\mcM$ as per Assumption \ref{Ass: Infinitesimally convex}, one can fine tune $\mathcal{U}$ so that all geodesics connecting two points $x,y\in \mathcal{U}$ lie entirely therein, see \cite{Caponio13}. This is the reason why we have taken this condition as a standing assumption.
	\end{enumerate}
\end{remark}

For later convenience, we recall some standard properties of Synge world function, as defined in Equation \eqref{Eq: Synge World Function Interior}. We omit their proof and we refer the interested reader to \cite{Poisson_2011}.
As above, we denote by $\mathcal{U} \doteq \mathcal{U}^\prime\cap\mcM$ an open subset such that $\mathcal{U}^\prime$ is geodesically convex:
\begin{itemize}
	\item[\ding{104}] $\sigma\in C^\infty(\mathcal{U}\times \mathcal{U})$, it is symmetric and it vanishes if $x$ is connected to $y$ by a lightlike curve.
	\item[\ding{104}] $\sigma$ satisfies an {\em eikonal equation} in both variables, that is, 
	\begin{equation}\label{Eq: Eikonal Equation for Sigma}
	g^{-1}(d_x\sigma,d_x\sigma)=g^{-1}(d_y\sigma,d_y\sigma)=2\sigma,
	\end{equation}
	the subscripts $x$ and $y$ denoting respectively whether the differential is acting on the first or on the second entry.
\item[\ding{104}] Denoting the coincidence point limit by 
	$$[\cdot]:C^\infty(\mathcal{U}\times \mathcal{U})\to C^\infty(\mathcal{U}),\quad f(x,y)\mapsto [f](x)\doteq f(x,x),$$
	it holds that 
	\begin{equation}\label{eq: Coincidence Point Limit for Sigma}
	[\sigma]=0,\quad[\sigma_{\nu}]=0,\quad[\sigma_{\mu\nu}]=g_{\mu\nu},\quad[\Box_g \sigma]=d,
\end{equation}
where $\dim\mcM=d$. Observe that the subscript $\sigma_{\nu}\doteq \nabla_{\nu,x}\sigma$, the $x$ entailing that the covariant derivative acts on the first variable. Should we need to act on the second entry of $\sigma$, we would employ primed indices, {\it e.g.}, $\sigma_{\mu^\prime}\doteq\nabla_{\mu,y}\sigma$. With reference to Remark \ref{Rem: Convexity of the boundary}, these identities do not depend on the choice of $\mcM^\prime$ and they hold true also at $\partial\mcM$.
\end{itemize}

\paragraph{Reflected Synge World Function --} In order to characterize the singularity structure of the two-point correlation function of a Hadamard state, as well as that of the advanced and retarded propagators associated with a Klein-Gordon operator on $(\mcM,g)$, an additional geometric structure is needed. Intuitively, this should encode the fact that geodesics, upon reaching $\partial\mcM$, are reflected back into the bulk. In \cite{Costeri_25}, considering half-Minkowski spacetime, this structure was constructed directly from Synge world function by exploiting the existence of a discrete isometry, namely $z\mapsto -z$, where $z$ denotes the coordinate orthogonal to the boundary, located at $z=0$. In the more general setting considered here, no such special geometric assumption is available. Therefore, a different construction is required, which we now outline. Heuristically speaking, despite the assumption that $\partial\mcM$ is infinitesimally convex, we need to account for the possibility that a geodesic has multiple reflections, that is it hits several time $\partial\mcM$. In the following, in order to make the role of reflections more transparent, we split the regime where only one reflection is possible, from that where an arbitrary number is allowed. The reason is that, in the first case all formulae and, more importantly, their interpretation are more transparent and accessible, especially if one seeks a comparison with \cite{Costeri_25}.

\vskip .2cm

Let $(\mcM,g)$ be an infinitesimally convex and globally hyperbolic spacetime with a timelike boundary as per Assumption \ref{Ass: Infinitesimally convex}. For any $p\in\partial\mcM$, we define the {\bf reflection map}
\begin{equation}\label{Eq: Reflection Vector}
	\mathcal{R}_p:T_p\mcM\to T_p\mcM,\quad v\mapsto\mathcal{R}(v)\doteq v-2g(v,n)n,
\end{equation}
where $n\in T_p\mcM$ is the inward pointing unit vector, normal to $\partial\mcM$.

\begin{definition}\label{Def: N-reflected initial data}
	Let $N\geq 1$. We denote by $\Gamma_N\subset T\mathring{\mcM}$ the collection of those points $(x,v)$ for which there exist $\tau_0,\ldots,\tau_{N-1}\in(0,1)$, with $\tau_0<\cdots<\tau_{N-1}$, and a continuous, piecewise geodesic $\gamma_{(x,v)}:[0,1]\to\mcM$ of the form
	\begin{equation}\label{Eq: gamma map}
		\gamma_{(x,v)}(\tau)=\left\{
		\begin{array}{ll}
			\gamma_0(\tau) & \tau\in[0,\tau_0],\\
			\gamma_1(\tau) & \tau\in[\tau_0,\tau_1],\\
			\ldots & \\
			\gamma_N(\tau) & \tau\in[\tau_{N-1},1],
		\end{array}
		\right.
	\end{equation}
	such that the following conditions hold:
	\begin{itemize} 
		\item[\ding{104}] $\gamma_{(x,v)}(0)=x$ and $\dot{\gamma}_{(x,v)}(0)=v$. Furthermore, setting $x_i\doteq\gamma_{(x,v)}(\tau_i)$, it holds that $x_i\in\partial\mcM$ for all $i=0,\ldots,N-1$, while $\gamma_{(x,v)}(1)\in\mathring{\mcM}$,
		\item[\ding{104}] each $\gamma_i$ is affinely parametrized and, setting $v_i\doteq\dot{\gamma}_i(\tau_i)$, it holds by continuity of $\gamma_{(x,v)}$ that
		$$\gamma_i(\tau_i)=\gamma_{i+1}(\tau_i)=x_i,\qquad \dot{\gamma}_{i+1}(\tau_i)=\mathcal{R}_{x_i}(v_i),$$
		for every $i=0,\ldots,N-1$,
		\item[\ding{104}] the curve has no further intersections with the boundary, namely, setting $\tau_{-1}= 0$ and $\tau_N= 1$, it holds that $\gamma_i[(\tau_{i-1},\tau_i)]\subset\mathring{\mcM}$ for every $i=0,\ldots,N$.
	\end{itemize}
	We call $\gamma_{(x,v)}$ the {\bf $N$-reflected broken geodesic} generated by $(x,v)$, while $\tau_i$ and $x_i$ are called respectively its \emph{reflection times} and \emph{reflection points}.
\end{definition}

\noindent Observe that, on account of Proposition \ref{Prop: no-glancing}, $g(v_i,n)\neq 0$ at every reflection point. In the following we establish smooth dependence of the reflection times and points with respect to $(x,v)\in\Gamma_N$. The following result is closely related in spirit to \cite[Lem.~2.7]{Wunsch}, although its proof proceeds by an independent argument.

\begin{lemma}\label{Lem: Smooth dependence of reflected geodesics}
	For every $N\geq 1$, $\Gamma_N$ is an open subset of $T\mathring{\mcM}$. Furthermore, the maps
	$$(x,v)\mapsto\tau_i(x,v),\qquad (x,v)\mapsto x_i(x,v),\qquad (x,v)\mapsto v_i(x,v),$$
	are smooth on $\Gamma_N$, for every $i=0,\ldots,N-1$. In addition, also the endpoint map $(x,v)\mapsto\gamma_{(x,v)}(1)$ is smooth.
\end{lemma}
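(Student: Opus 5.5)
The plan is an induction on $N$, built from the implicit function theorem applied to the boundary defining function along geodesics of the extension $(\mcM^\prime,g^\prime)$ of Corollary \ref{Cor: Globally Hyperbolic Extension}. The only local tool needed is transversality, which comes from Proposition \ref{Prop: no-glancing}.

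\emph{Step 1 (first reflection).} Take $(x,v)\in\Gamma_N$ and write $\exp'$ for the exponential map of $\mcM^\prime$. The segment $\gamma_0$ is the restriction of the $\mcM^\prime$-geodesic $s\mapsto\exp'_x(sv)$, which is defined and smooth in $(x,v,s)$ on an open neighbourhood of $\{(x,v)\}\times[0,\tau_0]$. Near $x_0$, choose a boundary defining function $z$, extended smoothly to a neighbourhood of $x_0$ in $\mcM^\prime$, and set
$$F(y,w,s)\doteq z(\exp'_y(sw)).$$
Then $F(x,v,\tau_0)=0$. Moreover $\partial_sF(x,v,\tau_0)=dz(v_0)$ is a nonzero multiple of $g(v_0,n)$, because $\gamma_0(0)=x\in\mathring{\mcM}$ and Proposition \ref{Prop: no-glancing} applies. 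The implicit function theorem therefore gives a smooth function $\tau_0(y,w)$ near $(x,v)$. It follows that $x_0=\exp'_y(\tau_0 w)$ and $v_0=\frac{d}{ds}\exp'_y(sw)|_{s=\tau_0}$ are smooth in $(y,w)$.

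The value $\tau_0(y,w)$ is still the \emph{first} hitting time for nearby data. On compact subintervals $[0,\tau_0-\delta]$, the segment stays in $\mathring{\mcM}$ by continuity and compactness. Near $\tau_0$, $F$ is strictly monotone in $s$ because $\partial_sF\neq0$, so the only zero is $\tau_0(y,w)$. The same two facts show that the curve actually crosses into $\mathring{\mcM}$ after reflection and is not pushed into $\mcM^\prime\setminus\mcM$.

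\emph{Step 2 (iteration).} The post-reflection data $(x_0,\mathcal{R}_{x_0}(v_0))$ depend smoothly on $(y,w)$, since $\mathcal{R}_p$ is built from the smooth unit normal field $n$. However, $x_0$ lies on $\partial\mcM$ rather than in the interior. I will handle this in one of two ways.
\begin{enumerate}
\item Run the same implicit-function argument directly on the next segment $s\mapsto\exp'_{x_0}(s\,\mathcal{R}(v_0))$. Here I exclude the trivial zero at $s=0$ and use that the segment enters $\mathring{\mcM}$ strictly; both follow from $dz(\mathcal{R}(v_0))=-dz(v_0)>0$.
\item Alternatively, flow back a small time to an interior point and apply Step 1 there.
\end{enumerate}
Either way, iterating $N$ times gives smooth maps $\tau_i$, $x_i$, $v_i$ on a neighbourhood $U$ of $(x,v)$. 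By the compactness argument of Step 1, each segment has no additional boundary contacts on $U$.

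\emph{Step 3 (endpoint and openness).} The endpoint $\exp'_{x_{N-1}}\bigl((1-\tau_{N-1})\mathcal{R}(v_{N-1})\bigr)$ is a smooth composition of smooth maps. It lies in $\mathring{\mcM}$ on a possibly smaller $U$, since $\gamma_{(x,v)}(1)\in\mathring{\mcM}$ and $\mathring{\mcM}$ is open. The ordering $\tau_0<\dots<\tau_{N-1}<1$ is preserved by continuity. Hence $U\subset\Gamma_N$, which proves openness. Smoothness is local, so the local statements on each such $U$ give smoothness on all of $\Gamma_N$.

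The main obstacle is the claim in Step 1 that nearby broken geodesics have no \emph{new} boundary contacts, meaning tangential touchings away from the $\tau_i$. This is where the no-glancing equivalence is essential. Away from neighbourhoods of the reflection points, the segments stay a positive distance from $\partial\mcM$ by compactness. Near each $\tau_i$, the transversality $\partial_sF\neq0$ forces a single simple zero, and no-glancing rules out everything else.
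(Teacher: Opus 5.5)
Your proposal is correct and takes essentially the same route as the paper. Both arguments apply the implicit function theorem to $z$ composed with the geodesic flow of the ambient spacetime, obtain $\partial_sF\neq 0$ from the no-glancing property, iterate over the reflections, and use a compactness argument to show that nearby broken geodesics acquire no additional boundary contacts, which gives openness. You also make explicit several points the paper leaves implicit: that the hitting time is the first one, the trivial zero at $s=0$ on post-reflection segments, and the re-entry into $\mathring{\mcM}$.
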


\begin{proof}
	Let $(\bar{x},\bar{v})\in\Gamma_N$ and denote by $\bar{\tau}_0,\ldots,\bar{\tau}_{N-1}$ the corresponding reflection times. Consider first the geodesic segment stemming from $(\bar{x},\bar{v})$ and let $z$ be a boundary-defining function in a neighbourhood of its first reflection point. Denoting by $\phi_s$ the geodesic flow of the ambient spacetime $(\mcM^\prime,g^\prime)$ as per Proposition \ref{Prop: Extension of Globally Hyperbolic} and by $\pi:T\mcM^\prime\to\mcM^\prime$ the canonical projection, we set
	$$F_0(x,v,s)\doteq z\bigl(\pi\circ\phi_s(x,v)\bigr).$$
	At $(\bar{x},\bar{v},\bar{\tau}_0)$ it holds that $F_0=0$, while $\partial_sF_0\neq 0$ on account of the transversality condition in Definition \ref{Def: N-reflected initial data}. The implicit function theorem therefore identifies, in a neighbourhood of $(\bar{x},\bar{v})$, a unique smooth first reflection time $\tau_0(x,v)$. Consequently, using Equation \eqref{Eq: gamma map}, both the first reflection point and the incoming and outgoing velocity vectors depend smoothly on $(x,v)$. The same argument can be applied iteratively starting from the geodesic segment stemming from the reflected initial datum at $x_0(x,v)$. This yields inductively the smoothness of all reflection times and points as well as of the incoming velocity vectors. One can draw a similar conclusion for the endpoint at $\tau=1$.
	
	We are left with one last property to establish, namely that, up to shrinking the neighbourhood of $(\bar{x},\bar{v})$, the broken geodesics obtained varying $(x,v)$ have no additional intersections with $\partial\mcM$. To this end, choose pairwise disjoint open intervals $I_i\subset(0,1)$, where each $I_i$ contains only the reflection time $\bar{\tau}_i$, but none of the others. The complement $K_0\doteq[0,1]\setminus\bigcup_{i=0}^{N-1}I_i$ is compact and, by Definition \ref{Def: N-reflected initial data}, its image $K\doteq\gamma_{(\bar{x},\bar{v})}(K_0)\subset\mathring{\mcM}$ is also compact. Since $\partial\mcM$ is a closed submanifold of $\mcM$, picking an auxiliary Riemannian metric on $\mcM$, the distance between $K$ and $\partial\mcM$ is strictly positive. Having already established the smooth dependence of the geodesic segments on their initial data, varying them in a sufficiently small neighbourhood of $(\bar{x},\bar{v})$, does not make this distance shrink to $0$. Hence they cannot intersect $\partial\mcM$ outside the intervals $I_i$. Consequently, every initial datum in this neighbourhood generates a broken geodesic with exactly $N$ reflections. Therefore we have found that, for every $(\bar{x},\bar{v})\in\Gamma_N$, there exists an open subset $U_{(\bar{x},\bar{v})}$ of $T\mathring{\mcM}$ which lies entirely in $\Gamma_N$. Consequently, we can cover the latter using these subsets, which is tantamount to saying that $\Gamma_N$ is open.
\end{proof}

Observe that, under Assumption \ref{Ass: Infinitesimally convex} and in view of Proposition \ref{Prop: no-glancing}, we know that, since $\dot{\gamma}_0(\tau_0)$ is outward pointing, then $\mathcal{R}(\dot{\gamma}_0(\tau_0))$ is directed towards the interior of $\mcM$. Iterating, the same applies at every reflection point.  We can collect this procedure in an {\em N-reflected exponential map}
\begin{equation}\label{Eq: Reflected Exp Map}
\mathrm{Exp}_{-,N}:\Gamma_N\to\mathring{\mcM},\quad (x,v)\mapsto\mathrm{Exp}_{-,N}(x,v)\doteq\gamma_{(x,v)}(1).
\end{equation}
In view of Lemma \ref{Lem: Smooth dependence of reflected geodesics}, $\mathrm{Exp}_{-,N}$ is smooth. These and some of the following concepts have appeared already in the literature with a slightly different nomenclature, especially in the analysis of billiard flows, see, {\it e.g.}, \cite{Petkov, Sogge}. We associate to it the {\em $N$-reflected endpoint map}, which serves the purpose of encoding both the initial and the endpoint of a broken geodesic:
\begin{equation}\label{Eq: N-reflected endpoint map}
	\Phi_N:\Gamma_N\to\mathring{\mcM}\times\mathring{\mcM},\quad (x,v)\mapsto\Phi_N(x,v)\doteq\bigl(x,\mathrm{Exp}_{-,N}(x,v)\bigr).
\end{equation}
It is important to keep in mind that $\Phi_N$ need not be injective, since distinct $N$-reflected broken geodesics may connect the same pair of points. Before identifying suitable local restrictions of $\Phi_N$, we single out those initial data at which its differential is invertible, so that $\Phi_N$ is a local diffeomorphism. The following definition is inspired by a similar analysis in \cite{Wunsch}, see in particular Section 4.

\begin{definition}\label{Def: Regular reflected datum}
	Let $N\geq 1$ and let $(x,v)\in\Gamma_N$, as per Definition \ref{Def: N-reflected initial data}. We call $\Gamma_N^{\rm reg}\subset\Gamma_N$ the collection of all {\bf regular $N$-reflected data} $(x,v)$, that is the differential
	$$d_v\mathrm{Exp}_{-,N}|_{(x,v)}:T_x\mcM\to T_{\mathrm{Exp}_{-,N}(x,v)}\mcM$$
	is an isomorphism. Here the subscript $v$ denotes differentiation along the fibres of $T\mathring{\mcM}$. At the same time, we say that $y\in\mathring{\mcM}$ is {\em reflected-conjugate} to $x$ (along $\gamma_{(x,v)}$) if $y=\mathrm{Exp}_{-,N}(x,v)$ but $d_v\mathrm{Exp}_{-,N}|_{(x,v)}:T_x\mcM\to T_y\mcM$ is not invertible.
\end{definition}

\begin{lemma}\label{Lem: Regularity endpoint map}
	Let $N\geq 1$ and let $(x,v)\in\Gamma_N$, as per Definition \ref{Def: N-reflected initial data}. Then $(x,v)\in\Gamma_N^{\rm reg}$ if and only if
	$$d\Phi_N|_{(x,v)}:T_{(x,v)}\Gamma_N\to T_{\Phi_N(x,v)}(\mathring{\mcM}\times\mathring{\mcM})$$
	is an isomorphism. Furthermore, $\Gamma_N^{\rm reg}$ is an open subset of $\Gamma_N$ and, hence, also of $T\mathring{\mcM}$.
\end{lemma}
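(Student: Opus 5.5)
The plan is to write $d\Phi_N$ in block form with respect to the local splitting of $T_{(x,v)}T\mathring{\mcM}$ into horizontal and vertical parts. By Lemma \ref{Lem: Smooth dependence of reflected geodesics}, $\Gamma_N$ is open in $T\mathring{\mcM}$, so $T_{(x,v)}\Gamma_N=T_{(x,v)}T\mathring{\mcM}$. Fix local coordinates $x^\mu$ on $\mathring{\mcM}$ near $x$, with induced fibre coordinates $v^\mu$. Then $T_{(x,v)}T\mathring{\mcM}\simeq T_x\mcM\oplus T_x\mcM$, where the second summand is the vertical subspace, canonically identified with $T_x\mcM$ as the tangent space to the fibre.

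Since $\Phi_N(x,v)=(x,\mathrm{Exp}_{-,N}(x,v))$ and the first component does not depend on $v$, the differential takes the block lower-triangular form
\begin{equation*}
d\Phi_N|_{(x,v)}=\begin{pmatrix} \mathrm{Id}_{T_x\mcM} & 0\\ d_x\mathrm{Exp}_{-,N}|_{(x,v)} & d_v\mathrm{Exp}_{-,N}|_{(x,v)}\end{pmatrix}.
\end{equation*}
This map goes from $T_x\mcM\oplus T_x\mcM$ to $T_x\mcM\oplus T_y\mcM$, with $y=\mathrm{Exp}_{-,N}(x,v)$. Both spaces have dimension $2d$, so $d\Phi_N$ is an isomorphism if and only if it is injective.

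Suppose $d\Phi_N(\xi,\eta)=0$. The first row gives $\xi=0$, and the second row then gives $d_v\mathrm{Exp}_{-,N}(\eta)=0$. Hence $d\Phi_N$ is injective exactly when $d_v\mathrm{Exp}_{-,N}|_{(x,v)}$ is injective. Since $d_v\mathrm{Exp}_{-,N}|_{(x,v)}$ maps between two spaces of equal dimension $d$, injectivity is the same as being an isomorphism. Equivalently, $\det d\Phi_N=\det d_v\mathrm{Exp}_{-,N}$ in these coordinates. The smoothness of $\mathrm{Exp}_{-,N}$ on $\Gamma_N$, from Lemma \ref{Lem: Smooth dependence of reflected geodesics}, is what makes these differentials well defined. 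This proves the equivalence.

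For openness, the characterization just obtained lets us write
\begin{equation*}
\Gamma_N^{\rm reg}=\{(x,v)\in\Gamma_N\,:\,\det d\Phi_N|_{(x,v)}\neq0\}.
\end{equation*}
Invertibility of a linear map is an open condition. Locally, in the coordinates above, $\Gamma_N^{\rm reg}$ is the preimage of $\bR\setminus\{0\}$ under the continuous function $(x,v)\mapsto\det\bigl(\partial \mathrm{Exp}_{-,N}^\mu/\partial v^\nu\bigr)$, which is continuous because $\mathrm{Exp}_{-,N}$ is smooth. Coordinate-independently, the same conclusion follows from the inverse function theorem: if $d\Phi_N$ is invertible at a point, $\Phi_N$ is a local diffeomorphism on a neighbourhood of that point, so $d\Phi_N$ is invertible throughout that neighbourhood. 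Either way $\Gamma_N^{\rm reg}$ is open in $\Gamma_N$, and since $\Gamma_N$ is open in $T\mathring{\mcM}$, it is also open in $T\mathring{\mcM}$.

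I do not expect a real obstacle. The only point needing care is the identification of the vertical subspace with $T_x\mcM$, so that $d_v\mathrm{Exp}_{-,N}$ in Definition \ref{Def: Regular reflected datum} coincides with the lower-right block of $d\Phi_N$. In fibre coordinates this holds directly, because the vertical subspace is spanned by the $\partial_{v^\mu}$.
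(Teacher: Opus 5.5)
Your proposal is correct and follows the paper's proof. You use the same block lower-triangular form of $d\Phi_N$ in a local trivialization to reduce invertibility to that of $d_v\mathrm{Exp}_{-,N}$, and you derive openness from smoothness of $\mathrm{Exp}_{-,N}$ together with openness of $\Gamma_N$ from Lemma \ref{Lem: Smooth dependence of reflected geodesics}; beyond that, you only make explicit the dimension count and the determinant (or inverse function theorem) step that the paper leaves implicit.
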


\begin{proof}
Choosing a local trivialization of $T\mathring{\mcM}$ around $x$, the differential of $\Phi_N$ at $(x,v)$ takes the block form
$$d\Phi_N|_{(x,v)}=\left(\begin{array}{cc}
\mathrm{id}_{T_x\mcM} & 0 \\ 
d_x\mathrm{Exp}_{-,N}|_{(x,v)} & d_v\mathrm{Exp}_{-,N}|_{(x,v)}
\end{array}\right).$$
Hence, $d\Phi_N|_{(x,v)}$ is an isomorphism if and only if $d_v\mathrm{Exp}_{-,N}|_{(x,v)}$ enjoys the same property. Furthermore smoothness of $\mathrm{Exp}_{-,N}$ entails that $\Gamma_N^{\mathrm{reg}}$ is an open subset of $\Gamma_N$ and, hence, of $T\mathring{\mcM}$ on account of Lemma \ref{Lem: Smooth dependence of reflected geodesics}.
\end{proof}

\noindent As already mentioned, even if we restrict our attention to regular initial points for a broken geodesic with $N$ reflections, that is $(x,v_1)\in\Gamma_N^{\mathrm{reg}}$, it can happen that there exists $(x,v_2)\in\Gamma_N^{\mathrm{reg}}$ such that $\mathrm{Exp}_{-,N}(x,v_1)=\mathrm{Exp}_{-,N}(x,v_2)$, see Equation \eqref{Eq: Reflected Exp Map}. In other words the initial and final points in $\mathring{\mcM}$ are the same. In the following, we show that it is possible to subdivide $\Gamma_N^{\mathrm{reg}}$ in a countable family of open subsets, labelled by a parameter $\alpha\in\mathbb{N}$, where this phenomenon cannot occur. The statement that $\alpha$ is countable is a byproduct of two features. On the one hand, the base manifold is second countable, while, on the other hand, on each fibre $d\Phi_N|_{(x,v)}$ is an isomorphism. Hence, given an arbitrary but fixed base point $x\in\mathring{\mcM}$ it is not possible to find infinitely many pairs $\{(x,v_j)\}_{j \in \mathbb{N}}\in T_x\mathring{\mcM}$ with the same endpoint under $\mathrm{Exp}_{-,N}$.

\begin{definition}\label{Def: Regular reflected branch chart}
Let $N\geq 1$. A {\bf regular $N$-reflected branch chart over $\mcM$} is a pair $(\mathcal{V}_{N,\alpha},\Omega_{N,\alpha})$, indexed by $\alpha \in \mathbb{N}$, such that both $\mathcal{V}_{N,\alpha}\subset\Gamma_N^{\rm reg}$ and $\Omega_{N,\alpha}\subset\mathring{\mcM}\times\mathring{\mcM}$ are open while
$$\Phi_{N,\alpha}\doteq\Phi_N|_{\mathcal{V}_{N,\alpha}}:\mathcal{V}_{N,\alpha}\to\Omega_{N,\alpha}$$
is a diffeomorphism. 
\end{definition}	

Observe that, since in the first component $\Phi_N$ acts as the identity and it is per definition invertible on $\mathcal{V}_{N,\alpha}$, we can identify a smooth map
	$$(x,y)\mapsto v_{N,\alpha}(x,y)\in T_x\mcM\;\text{such that}\;\Phi^{-1}_{N,\alpha}(x,y)=\bigl(x,v_{N,\alpha}(x,y)\bigr).$$
	For every $(x,y)\in\Omega_{N,\alpha}$, we set
	\begin{equation}\label{Eq: GammaN}
	\gamma_{N,\alpha}^{(x,y)}\doteq\gamma_{(x,v_{N,\alpha}(x,y))},
	\end{equation}
	and call it the {\em broken geodesic associated with the branch $(N,\alpha)$}.

\begin{lemma}\label{Lem: Existence of regular reflected branch charts}
	Given $N\geq 1$ and $(x_0,v_0)\in\Gamma_N^{\rm reg}$, there exist $\alpha \in \mathbb{N}$ and a regular $N$-reflected branch chart $(\mathcal{V}_{N,\alpha},\Omega_{N,\alpha})$ over $\mathring{\mcM}$ such that $(x_0,v_0)\in\mathcal{V}_{N,\alpha}$.
\end{lemma}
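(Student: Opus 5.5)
The plan is to apply the inverse function theorem to the $N$-reflected endpoint map $\Phi_N$ at $(x_0,v_0)$, and then to take care of the countable labelling by $\alpha$ as a bookkeeping step.

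First, by Lemma \ref{Lem: Smooth dependence of reflected geodesics}, $\Gamma_N$ is an open subset of $T\mathring{\mcM}$, hence a smooth $2d$-dimensional manifold, and $\Phi_N:\Gamma_N\to\mathring{\mcM}\times\mathring{\mcM}$ is smooth. Moreover, $\Gamma_N^{\rm reg}$ is open by Lemma \ref{Lem: Regularity endpoint map}. Since $(x_0,v_0)\in\Gamma_N^{\rm reg}$, the same lemma shows that $d\Phi_N|_{(x_0,v_0)}$ is an isomorphism between spaces of equal dimension $2d$.

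The inverse function theorem then yields an open neighbourhood $\mathcal{V}\ni(x_0,v_0)$ and an open set $\Omega\ni\Phi_N(x_0,v_0)$ such that $\Phi_N|_{\mathcal{V}}:\mathcal{V}\to\Omega$ is a diffeomorphism. Intersecting $\mathcal{V}$ with $\Gamma_N^{\rm reg}$, which is open, and replacing $\Omega$ by the image of this intersection, we may assume $\mathcal{V}\subset\Gamma_N^{\rm reg}$. The image is still open because a diffeomorphism is an open map. Finally, $\Omega\subset\mathring{\mcM}\times\mathring{\mcM}$ holds because $\mathrm{Exp}_{-,N}$ takes values in $\mathring{\mcM}$ by Definition \ref{Def: N-reflected initial data}.

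It remains to attach an index $\alpha\in\mathbb{N}$. The step I expect to require the most care is making this labelling consistent with the countability claim made before Definition \ref{Def: Regular reflected branch chart}, so that a countable family of such charts covers $\Gamma_N^{\rm reg}$. I would argue as follows. For each point of $\Gamma_N^{\rm reg}$ we obtain a chart as above. Since $\Gamma_N^{\rm reg}$ is an open subset of the second countable manifold $T\mathring{\mcM}$, it is itself second countable, hence Lindelöf. So countably many of these charts, $\{(\mathcal{V}_{N,\alpha},\Omega_{N,\alpha})\}_{\alpha\in\mathbb{N}}$, already cover $\Gamma_N^{\rm reg}$. The given point $(x_0,v_0)$ lies in one of them, say $\mathcal{V}_{N,\alpha}$, which is the required branch chart. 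Alternatively, if $\alpha$ is meant as a mere label, one can enumerate the chart containing $(x_0,v_0)$ directly. Either way, no injectivity issues arise, since injectivity is only required on $\mathcal{V}_{N,\alpha}$, and that is guaranteed by the inverse function theorem.
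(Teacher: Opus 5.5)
Your proposal is correct and follows essentially the same route as the paper: use Lemma \ref{Lem: Regularity endpoint map} to get invertibility of $d\Phi_N|_{(x_0,v_0)}$, apply the inverse function theorem, and take the resulting neighbourhood (shrunk into $\Gamma_N^{\rm reg}$) as $\mathcal{V}_{N,\alpha}$. Your Lindel\"of argument for the countable labelling is extra bookkeeping that the paper does not do in this lemma; an analogous countable extraction appears only later, in Proposition \ref{Prop: Reflected null branch atlas}, and including it here is harmless.
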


\begin{proof}
	Since on account of Lemma \ref{Lem: Regularity endpoint map} the differential $d\Phi_N|_{(x_0,v_0)}$ is an isomorphism, we can apply the inverse function theorem. This yields open neighbourhoods $\mathcal{V}\subset\Gamma_N^{\rm reg}$ of $(x_0,v_0)$ and $\Omega\subset\mathring{\mcM}\times\mathring{\mcM}$ of $\Phi_N(x_0,v_0)$ such that
	$$\Phi_N|_{\mathcal{V}}:\mathcal{V}\to\Omega$$
	is a diffeomorphism. We identify $\mathcal{V}$ with the sought $\mathcal{V}_{N,\alpha}$.
\end{proof}

\begin{remark}\label{Rem: Overlapping reflected branches}
	Observe that different branch charts are not necessary disjoint and they might intersect. Hence, a pair $(x,y)$ belonging to such an intersection can be connected by distinct regular $N$-reflected broken geodesics. The branch label $\alpha \in \mathbb{N}$ serves the purpose of avoiding confusion due to this possibility. Hence we shall retain it wherever necessary.
\end{remark}

\begin{remark}\label{Rem: From M to N}	
Definition \ref{Def: Regular reflected branch chart} and the following Lemma \ref{Lem: Existence of regular reflected branch charts} have been formulated on $\mathring{\mcM}\times\mathring{\mcM}$. Yet, for later purposes, we will be interested in considering only $\mathcal{N}$, a causally convex Cauchy neighbourhood as per Definition \ref{Def: Cauchy neighbourhood}. The above statements can be adapted accordingly and, in this case, we shall replace $\Gamma_N^{\mathrm{reg}}$ with  $\Gamma^{\mathrm{reg}}_{N,\mathcal{N}}\doteq \Gamma^{\mathrm{reg}}_N\cap\Phi^{-1}_N[\mathring{\mathcal{N}}\times\mathring{\mathcal{N}}]$, that is, we want to consider $N$-reflected broken geodesics starting and ending in $\mathring{\mathcal{N}}$.
\end{remark}

\begin{definition}\label{Def: Reflected Synge's world function}
	Let $N\geq 1$ and let $(\mathcal{V}_{N,\alpha},\Omega_{N,\alpha})$ be a regular $N$-reflected branch chart as per Definition \ref{Def: Regular reflected branch chart}. We call {\bf branchwise $N$-reflected Synge's world function} associated with $(N,\alpha)$ the map $\sigma_{-,N,\alpha}:\Omega_{N,\alpha}\to\bR$ such that
	\begin{equation}\label{Eq: Reflected Synge world function}
		\sigma_{-,N,\alpha}(x,y)\doteq\frac{1}{2}\int_0^1 g\left(\dot{\gamma}_{N,\alpha}^{(x,y)}(s),\dot{\gamma}_{N,\alpha}^{(x,y)}(s)\right)\,ds,
	\end{equation}
	where $\gamma_{N,\alpha}^{(x,y)}$ is the broken geodesic associated with the branch $(N,\alpha)$ as per Equation \eqref{Eq: GammaN}.
\end{definition}

\begin{remark}\label{Rem: Reversed reflected branch}
	For $(x,v)\in\Gamma_N$, let
	\begin{equation}\label{Eq: rN map}
	\mathfrak{r}_N:\Gamma_N\to\Gamma_N\quad|\quad(x,v)\mapsto\mathfrak{r}_N(x,v)\doteq\left(\mathrm{Exp}_{-,N}(x,v),-\dot{\gamma}_{(x,v)}(1)\right).
	\end{equation}
	This is the initial datum generating the reversed broken geodesic $s\mapsto\gamma_{(x,v)}(1-s)$ and, in addition, $\mathfrak{r}^2_N=\mathrm{id}|_{\Gamma_N}$. Furthermore,
	$$\Phi_N\circ\mathfrak{r}_N=\mathfrak{s}\circ\Phi_N,$$
	where $\mathfrak{s}(x,y)\doteq(y,x)$. Hence every regular branch chart $(\mathcal{V}_{N,\alpha},\Omega_{N,\alpha})$ determines a reversed regular branch chart, denoted by $(\mathcal{V}_{N,\alpha^{\mathrm{op}}},\Omega_{N,\alpha^{\mathrm{op}}})$, where
	$$\mathcal{V}_{N,\alpha^{\mathrm{op}}}\doteq\mathfrak{r}_N(\mathcal{V}_{N,\alpha}),\qquad \Omega_{N,\alpha^{\mathrm{op}}}\doteq\mathfrak{s}(\Omega_{N,\alpha}).$$
	This is a generalization of the intuitive feature occurring when $N=1$. Suppressing the branch index, if we consider a $1$-reflected broken geodesic $\gamma^{(x,y)}_1$ connecting two points $x,y\in\mathring{\mcM}$, we can construct an inverse one starting from $y$ with initial velocity $-\dot{\gamma}^{(x,y)}_1$ and reaching $x$. 
\end{remark}

\noindent Having established a branchwise notion of reflected Synge world function, we investigate some of its structural properties and the relation with Equation \eqref{Eq: Synge World Function Interior}, limiting the attention only to those aspects which will play a role in our analysis. The proof of the following lemma is variation is spirit of that of \cite[Lem 3.1]{Wunsch}.

\begin{lemma}\label{Lem: Properties of Reflected Geodesic Distance}
Let $N\geq 1$, let $(\mathcal{V}_{N,\alpha},\Omega_{N,\alpha})$ be a regular $N$-reflected branch chart and let $\sigma_{-,N,\alpha}$ be as per Definition \ref{Def: Reflected Synge's world function}. The following properties hold true:
\begin{enumerate}
\item $\sigma_{-,N,\alpha}\in C^\infty(\Omega_{N,\alpha})$.
\item Denoting by $\alpha^{\mathrm{op}}$ the reversed branch as per Remark \ref{Rem: Reversed reflected branch}, it holds that
$$\sigma_{-,N,\alpha}(x,y)=\sigma_{-,N,\alpha^{\mathrm{op}}}(y,x),\qquad (x,y)\in\Omega_{N,\alpha}.$$
\item $\sigma_{-,N,\alpha}$ satisfies an eikonal equation separately in both entries:
\begin{equation}\label{Eq: Eikonal Equation for sigma_-}
g^{\mu\nu}\nabla_\mu\sigma_{-,N,\alpha}\nabla_\nu\sigma_{-,N,\alpha}=2\sigma_{-,N,\alpha}\quad\mathrm{and}\quad g^{\mu^\prime\nu^\prime}\nabla_{\mu^\prime}\sigma_{-,N,\alpha}\nabla_{\nu^\prime}\sigma_{-,N,\alpha}=2\sigma_{-,N,\alpha},
\end{equation}
where unprimed and primed indices refer respectively to derivatives acting on the first or on the second entry.
\end{enumerate}
\end{lemma}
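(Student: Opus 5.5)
Since every segment $\gamma_i$ of $\gamma^{(x,y)}_{N,\alpha}$ is an affinely parametrized geodesic and, at each reflection point, $\mathcal{R}_{x_i}$ is an isometry of $T_{x_i}\mcM$ because $n$ is a unit vector, the quantity $g(\dot\gamma,\dot\gamma)$ is constant along the whole broken geodesic. Hence the integral in Equation~\eqref{Eq: Reflected Synge world function} reduces to
\begin{equation*}
\sigma_{-,N,\alpha}(x,y)=\tfrac12\, g_x\bigl(v_{N,\alpha}(x,y),v_{N,\alpha}(x,y)\bigr).
\end{equation*}
Item 1 then follows at once. By Definition~\ref{Def: Regular reflected branch chart}, $\Phi_{N,\alpha}$ is a diffeomorphism, so $(x,y)\mapsto v_{N,\alpha}(x,y)$ is smooth, and $\sigma_{-,N,\alpha}$ is its composition with the smooth quadratic form $g$.

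For item 2, I will use Remark~\ref{Rem: Reversed reflected branch}. From $\Phi_N\circ\mathfrak{r}_N=\mathfrak{s}\circ\Phi_N$ we get $v_{N,\alpha^{\rm op}}(y,x)=-\dot\gamma^{(x,y)}_{N,\alpha}(1)$. The reversed curve is the same broken geodesic traversed backwards, so its squared speed coincides with the constant $g(v,v)$ above. This gives $\sigma_{-,N,\alpha^{\rm op}}(y,x)=\sigma_{-,N,\alpha}(x,y)$. I also need that the reversed curve satisfies the reflection law at each $x_i$; this holds because $\mathcal{R}$ is linear and an involution.

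For item 3, I will compute the first variation. Fix $(x,y)\in\Omega_{N,\alpha}$ and vary $y$ along a smooth curve $y(\epsilon)$ with $\dot y(0)=w\in T_y\mcM$, keeping $x$ fixed. The broken geodesics $\gamma^\epsilon\doteq\gamma^{(x,y(\epsilon))}_{N,\alpha}$ form a smooth family, with reflection times $\tau_i(\epsilon)$ and points $x_i(\epsilon)\in\partial\mcM$ depending smoothly on $\epsilon$ by Lemma~\ref{Lem: Smooth dependence of reflected geodesics}. I will split the energy functional $E=\frac12\int_0^1 g(\dot\gamma,\dot\gamma)$ over the segments $[\tau_{i-1},\tau_i]$ and differentiate each piece. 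Because each segment is a geodesic, the interior terms vanish after integrating by parts, leaving boundary contributions at the endpoints.

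Handling these endpoint contributions is the main obstacle, for two reasons. The breakpoints $\tau_i$ move with $\epsilon$, and the variation field is discontinuous across $\tau_i$ when expressed via the segment extensions. My approach is to extend each $\gamma_i$ slightly past its endpoints inside the ambient $\mcM^\prime$ of Corollary~\ref{Cor: Globally Hyperbolic Extension}. I will then write the variation of $x_i(\epsilon)$ as $X_i\doteq\frac{d}{d\epsilon}x_i\in T_{x_i}\partial\mcM$. With this setup, the contribution at $x_i$ becomes
\begin{equation*}
g(\dot\gamma_i(\tau_i),X_i)-g(\dot\gamma_{i+1}(\tau_i),X_i)=g(v_i-\mathcal{R}(v_i),X_i)=2g(v_i,n)\,g(n,X_i)=0.
\end{equation*}
The terms involving $\dot\tau_i$ combine with these as well: I expect them to cancel by conservation of $g(\dot\gamma,\dot\gamma)$ across the reflection, and I would verify this carefully. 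The upshot is $d_y\sigma_{-,N,\alpha}(w)=g(\dot\gamma(1),w)$, that is $\nabla_{\mu^\prime}\sigma_{-,N,\alpha}=\dot\gamma_\mu(1)$.

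Consequently $g^{-1}(d_y\sigma,d_y\sigma)=g(\dot\gamma(1),\dot\gamma(1))=g(v,v)=2\sigma_{-,N,\alpha}$. The equation in the first variable follows either by the same variation with $x$ moved, which gives $\nabla_\mu\sigma=-v$, or directly from item 2 applied to the reversed branch.
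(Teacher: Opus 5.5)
Your proposal is correct and follows essentially the same route as the paper: you reduce $\sigma_{-,N,\alpha}$ to $\tfrac12 g_x(v_{N,\alpha},v_{N,\alpha})$ using constancy of the squared speed and the isometry property of $\mathcal{R}$, you get item 2 from the reversed broken geodesic, and you get item 3 from a first-variation computation. In that computation the tangential terms cancel because $v_i-\mathcal{R}(v_i)\propto n$, and the $\dot\tau_i$ terms cancel (as you anticipated, and as the paper verifies) because $g(v_i,v_i)=g(\mathcal{R}(v_i),\mathcal{R}(v_i))$. The only cosmetic difference is that the paper varies the first endpoint, obtaining $\nabla_\mu\sigma_{-,N,\alpha}=-g_{\mu\nu}v^\nu$ first for $N=1$ and then segment by segment, and deduces the second-entry equation from item 2, whereas you vary $y$ and treat general $N$ directly.
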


\begin{proof}
We prove the result first for $N=1$ and for a fixed branch label $\alpha$. To simplify the notation, throughout this proof we suppress $\alpha$ and write $v(x,y)\doteq v_{1,\alpha}(x,y)$, $\sigma_-\doteq\sigma_{-,1,\alpha}$ and $\gamma_-^{(x,y)}\doteq\gamma_{1,\alpha}^{(x,y)}$. We parametrize the broken geodesic $\gamma_-^{(x,y)}:[0,1]\to\mcM$ as
\begin{equation}\label{Eq: Broken Geodesic}
	\gamma_{-}^{(x,y)}(\tau)=\left\{\begin{array}{ll}
		\gamma_{(x,v(x,y))}(\tau) & \tau\in [0,\tau_\partial]\\
		\gamma_{(x_\partial,\mathcal{R}(v^{(x,y)}_\partial))}(\tau-\tau_\partial) & \tau\in [\tau_\partial,1]
	\end{array}
	\right. ,
\end{equation}
where $\tau_\partial\in (0,1)$ is such that $x_\partial \doteq \gamma_{(x,v(x,y))}(\tau_\partial)\in\partial\mcM$, while $\dot{\gamma}_{(x,v(x,y)}(\tau_\partial)\doteq v^{(x,y)}_\partial$. The map $\mathcal{R}$ is as per Equation \eqref{Eq: Reflection Vector}.

\vskip .2cm
	
{\em 1.} On account of Definition \ref{Def: Regular reflected branch chart}, the map $\Omega_{1,\alpha}\ni(x,y)\mapsto v(x,y)$ is smooth, where $v(x,y)\in T_x\mcM$ is such that $\mathrm{Exp}_{-}(x, v(x,y))\equiv\mathrm{Exp}_{-, 1}(x,v(x,y))=y$. The reflected broken geodesic $\gamma_-^{(x,y)}$ in Equation \eqref{Eq: Broken Geodesic} is constituted by two bits, each of which is a standard geodesic, hence of constant length. Focusing on the first one, this entails that, for $s\in [0,\tau_\partial)$,
\begin{flalign*}
    g(\dot{\gamma}^{(x,y)}_-(s), \dot{\gamma}^{(x,y)}_-(s)) &=g_x(v(x,y),v(x,y)) \\ &\Longrightarrow \frac{1}{2}\int_0^{\tau_\partial} g(\dot{\gamma}^{(x,y)}_-(s), \dot{\gamma}^{(x,y)}_-(s))\, ds=\frac{\tau_\partial}{2}g_x(v(x,y),v(x,y)).
\end{flalign*}
Similarly the second bit of $\gamma_-^{(x,y)}$ yields for $s\in [\tau_\partial,1]$
\begin{flalign*}
	g(\dot{\gamma}^{(x,y)}_-(s), \dot{\gamma}^{(x,y)}_-(s))&=g_{x_\partial}(\mathcal{R}(v_\partial^{(x,y)}),\mathcal{R}(v_\partial^{(x,y)})) \\ &\Longrightarrow \frac{1}{2}\int_{\tau_\partial}^1 g(\dot{\gamma}^{(x,y)}_-(s), \dot{\gamma}^{(x,y)}_-(s))\, ds=\frac{1-\tau_\partial}{2}g_{x_\partial}(\mathcal{R}(v_\partial^{(x,y)}),\mathcal{R}(v_\partial^{(x,y)})).
\end{flalign*}
A direct calculation shows that the reflection map in Equation \eqref{Eq: Reflection Vector} is an isometry, \textit{i.e.},  
\begin{equation}\label{Eq: Reflection Isometry}
	g_{x_\partial}(\mathcal{R}(v_\partial^{(x,y)}),\mathcal{R}(v_\partial^{(x,y)}))=g_{x_\partial}(v_\partial^{(x,y)},v_\partial^{(x,y)}).
\end{equation}
Yet, constancy of the geodesic length entails that 
\begin{gather}
	g_{x_\partial}(v_\partial^{(x,y)},v_\partial^{(x,y)})=g_x(v(x,y),v(x,y))\Longrightarrow\notag\\
	\sigma_-(x,y)=\frac{1}{2}\int_0^1 g(\dot{\gamma}^{(x,y)}_-(s), \dot{\gamma}^{(x,y)}_-(s))\, ds=\frac{1}{2}g_x(v(x,y),v(x,y)).\label{Eq: Geodesic identity}
\end{gather}
Yet, since the metric smoothly depends on the base point $x$ and the assignment $(x,y)\mapsto v(x,y)$ is smooth, so must be $\sigma_-$. 

\vskip .2cm

{\em 2.} Given the reflected broken geodesic $\gamma_-^{(x,y)}(\tau)$, $\tau\in[0,1]$, the curve $\gamma_-^{(x,y)}(1-\tau)$ belongs to the reversed branch $\alpha^{\mathrm{op}}$ and connects $y$ to $x$. Replacing it in Equation \eqref{Eq: Reflected Synge world function} yields
$$\sigma_{-,1,\alpha}(x,y)=\sigma_{-,1,\alpha^{\mathrm{op}}}(y,x),$$
where we reinstate the subscripts for clarity.

\vskip .2cm

{\em 3.} By the previous item, the eikonal equation in the second entry follows from the corresponding equation in the first entry applied to the reversed branch. Therefore, it suffices to prove the statement for derivatives acting on the first entry. Let $(x,y)\in\Omega_{1,\alpha}$ and keep $y$ fixed. We let $x$ vary along a smooth curve $\lambda:I\to\mathring{\mcM}$ such that $I\subseteq\bR$ contains the origin, $\lambda(0)=x$ and $(\lambda(\varepsilon),y)\in\Omega_{1,\alpha}$ for every $\varepsilon\in I$. It descends from Equation \eqref{Eq: Reflected Synge world function} combined with Equation \eqref{Eq: Broken Geodesic} that we need to consider the contribution to $\sigma_-$ coming from both integrals. Focusing on the first one, we need to consider
$$I_1(\varepsilon)\doteq\int_0^{\tau_{\partial,\varepsilon}}g(\dot{\gamma}_{(\lambda(\varepsilon),v(\lambda(\varepsilon),y))}(s),\dot{\gamma}_{(\lambda(\varepsilon),v(\lambda(\varepsilon),y))}(s))\,d s.$$
The variation with respect to $\varepsilon$ evaluated at $\epsilon = 0$ yields two terms: Denoting by $$\Delta^+(s)\doteq\left.\frac{\partial\gamma_{(\lambda(\varepsilon),v(\lambda(\varepsilon),y))}(s)}{\partial\varepsilon}\right|_{\varepsilon=0} \quad \text{and} \quad \dot{\tau}_\partial=\left.\frac{\partial\tau_{\partial,\varepsilon}}{\partial\varepsilon}\right|_{\varepsilon=0},$$ we obtain
$$\left.\frac{dI_1}{d\varepsilon}\right|_{\varepsilon=0}=2\int_0^{\tau_\partial}g\left(\frac{D\Delta^+(s)}{ds},\dot{\gamma}_{(x,v(x,y))}(s)\right)\,ds+g_{x_\partial}(v_\partial^{(x,y)},v_\partial^{(x,y)})\dot{\tau}_\partial,$$
where $v_\partial\doteq\dot{\gamma}_{(x,v(x,y))}(\tau_\partial)$ while $x_\partial$ is the point hit at $\partial\mcM$ by the geodesic. Since the velocity of an affinely parametrized geodesic is covariantly constant, using the Leibniz rule, metric compatibility and the geodesic equation, we obtain
\begin{gather*}
g\left(\frac{d\Delta^+(s)}{ds},\dot{\gamma}_{(x,v(x,y))}(s)\right)+g\left(\Delta^+(s),\frac{d\dot{\gamma}_{(x,v(x,y))(s)}}{ds}\right)=\\
g\left(\frac{d\Delta^+(s)}{ds},\dot{\gamma}_{(x,v(x,y))}(s)\right)=\frac{d}{ds}g\left(\Delta^+(s),\dot{\gamma}_{(x,v(x,y))}(s)\right).
\end{gather*}
Consequently, 
\begin{align*}
	\left.\frac{dI_1}{d\varepsilon}\right|_{\varepsilon=0}=&2g\left(\Delta^+(\tau_\partial),\dot{\gamma}_{(x,v(x,y))}(\tau_\partial)\right)-2g\left(\Delta^+(0),\dot{\gamma}_{(x,v(x,y))}(0)\right)+g_{x_\partial}(v_\partial^{(x,y)},v_\partial^{(x,y)})\dot{\tau}_\partial\\
	&=2g\left(\Delta^+(\tau_\partial),v_\partial^{(x,y)}\right)-2g\left(\Delta^+(0),v(x,y)\right)+g_{x_\partial}(v_\partial^{(x,y)},v_\partial^{(x,y)})\dot{\tau}_\partial.
\end{align*}
Observe that, in the equalities above, we have used the dominated convergence theorem to justify passing the limit $\varepsilon\to 0$ under the integral sign. Consider now
$$I_2(\varepsilon)\doteq\int_0^{\tau^\prime_{\partial,\varepsilon}}g\left(\dot{\gamma}_{(\lambda(\varepsilon)_\partial,\mathcal{R}(v^{(\lambda(\varepsilon),y)}_\partial))}(s^\prime),\dot{\gamma}_{(\lambda(\varepsilon)_\partial,\mathcal{R}(v^{(\lambda(\varepsilon),y)}_\partial))}(s^\prime)\right)\,d s^\prime.$$	
where we have replaced $s$ by $s^\prime=s-\tau_{\partial,\varepsilon}$ and we have set $\tau^\prime_{\partial,\varepsilon}=1-\tau_{\partial,\varepsilon}$. In addition $\lambda(\varepsilon)_\partial$ is the starting point of the $\varepsilon$-dependent family of geodesics at $\partial\mcM$. Denoting by $$\Delta^-(s^\prime)\doteq\left.\frac{\partial\gamma_{(\lambda(\varepsilon)_\partial,\mathcal{R}(v^{(\lambda(\varepsilon),y)}_\partial))}(s^\prime)}{\partial\varepsilon}\right|_{\varepsilon=0},$$ and repeating the same procedure as for $I_1$ we end up with
\begin{align*}
	\left.\frac{dI_2}{d\varepsilon}\right|_{\varepsilon=0}=&2g\left(\Delta^-(\tau^\prime_\partial),\dot{\gamma}_{(x_\partial,\mathcal{R}(v^{(x,y)}_\partial))}(\tau^\prime_\partial)\right)-2g\left(\Delta^-(0),\dot{\gamma}_{(x_\partial,\mathcal{R}(v^{(x,y)}_\partial))}(0)\right)-g_{x_\partial}(v_\partial^{(x,y)},v_\partial^{(x,y)})\dot{\tau}_\partial\\
	&=2g\left(\Delta^-(\tau^\prime_\partial),\dot{\gamma}_{(x_\partial,\mathcal{R}(v^{(x,y)}_\partial))}(\tau^\prime_\partial)\right)-2g\left(\Delta^-(0),\mathcal{R}(v^{(x,y)}_\partial)\right)-g_{x_\partial}(v_\partial^{(x,y)},v_\partial^{(x,y)})\dot{\tau}_\partial.
\end{align*}
To derive the last term on the right-hand side we used implicitly that geodesics are of constant length, Equation \eqref{Eq: Reflection Isometry} and the identity $\frac{\partial\tau^\prime_{\partial,\varepsilon}}{\partial\epsilon}=-\frac{\partial\tau_{\partial,\varepsilon}}{\partial\varepsilon}$. Putting together these variational identities we obtain
\begin{align}
	\frac{1}{2}\left.\frac{d(I_1+I_2)}{d\varepsilon}\right|_{\varepsilon=0}=&g\left(\Delta^+(\tau_\partial),v_\partial^{(x,y)}\right)-g\left(\Delta^+(0),v(x,y)\right)+\notag\\
	&g\left(\Delta^-(\tau^\prime_\partial),\dot{\gamma}_{(x_\partial,\mathcal{R}(v^{(x,y)}_\partial))}\right)-g\left(\Delta^-(0),\mathcal{R}(v^{(x,y)}_\partial)\right).\label{Eq: Aux1}
\end{align}
Consider now the family $\lambda(\varepsilon)_\partial$ which parametrizes where each geodesic $\lambda(\varepsilon)$ intersects $\partial\mcM$ and let us denote by $B\doteq \left.\frac{\partial\lambda(\varepsilon)_\partial}{\partial\varepsilon}\right|_{\varepsilon=0}\in T_{x_\partial}\mcM$. Per construction $B$ is tangent to the boundary, that is $g_{x_\partial}(B,n)=0$. At the same time, since $\lambda(\varepsilon)_\partial=\gamma_{(\lambda(\varepsilon),v(\lambda(\varepsilon),y))}(\tau_{\partial,\varepsilon})$, the chain rule entails
$$B=\left.\frac{\partial\lambda(\varepsilon)_\partial}{\partial\varepsilon}\right|_{\varepsilon=0}=\Delta^+(\tau_\partial)+v_\partial^{(x,y)}\dot{\tau}_\partial.$$ 
As a consequence of $\gamma_{(x,v(x,y))}(\tau_\partial)=\gamma_{(x_\partial,\mathcal{R}(v^{(x,y)}_\partial))}(0)$ in Equation \eqref{Eq: Broken Geodesic}, it holds that $\Delta^-(0)=B$. Furthermore, since the endpoint $y$ is fixed, it turns out that, if we take the variation with respect to $\varepsilon$ of
$$y=\gamma_{(\lambda(\varepsilon)_\partial,\mathcal{R}(v^{(\lambda(\varepsilon),y)}_\partial))}(\tau^\prime_{\partial,\epsilon}),$$
applying the chain rule, we end up with
$$0=\Delta^-(\tau^\prime_\partial)+\dot{\eta}(\tau^\prime_\partial)\dot{\tau}^\prime_\partial=\Delta^-(\tau^\prime_\partial)-\dot{\eta}(\tau^\prime_\partial)\dot{\tau}_\partial.$$
Here we have already evaluated at $\varepsilon=0$ and we have set $\eta(s^\prime)\doteq\gamma_{(x_\partial,\mathcal{R}(v_\partial^{(x,y)}))}(s^\prime)$. Inserting these identities in Equation \eqref{Eq: Aux1} we obtain
\begin{gather*}
\frac{1}{2}\left.\frac{d(I_1+I_2)}{d\varepsilon}\right|_{\varepsilon=0}=\\
g\left(B-v_\partial^{(x,y)}\dot{\tau}_\partial,v_\partial^{(x,y)}\right)-g\left(\Delta^+(0),v(x,y)\right)+g\left(\dot{\eta}(\tau^\prime_\partial)\dot{\tau}_\partial,\dot{\eta}(\tau^\prime_\partial)\right)-g\left(B,\mathcal{R}(v^{(x,y)}_\partial)\right).
\end{gather*}
Observe that $g(B,v_\partial^{(x,y)}-\mathcal{R}(v^{(x,y)}_\partial))=0$ since the difference $v_\partial^{(x,y)}-\mathcal{R}(v^{(x,y)}_\partial)$ is proportional to $n$, the vector normal to $\partial\mcM$, while $B$ is tangent to the boundary. In addition, $$g\left(\dot{\eta}(\tau^\prime_\partial)\dot{\tau}_\partial,\dot{\eta}(\tau^\prime_\partial)\right)-g\left(v_\partial^{(x,y)}\dot{\tau}_\partial,v_\partial^{(x,y)}\right)=\dot{\tau}_\partial\left(g\left(\dot{\eta}(\tau^\prime_\partial),\dot{\eta}(\tau^\prime_\partial)\right)-g\left(v_\partial^{(x,y)},v_\partial^{(x,y)}\right)\right)=0,$$
since we use once more that geodesics have tangent vectors of constant length. To summarize 
$$\left.\frac{d\sigma_-(\lambda(\varepsilon),y)}{d\varepsilon}\right|_{\varepsilon=0} = \frac{1}{2}\left.\frac{d(I_1+I_2)}{d\varepsilon}\right|_{\varepsilon=0}=-g\left(\Delta^+(0),v(x,y)\right),$$
or equivalently $d_x\sigma_-(\Delta^+(0))=-g\left(\Delta^+(0),v(x,y)\right)$. Since $\Delta^+(0)$ is arbitrary, we can infer that, at the level of components, this identity entails
\begin{equation}\label{Eq: Derivative of Sigma-}
	\nabla_\mu\sigma_-(x,y)=-g_{\mu\nu}v^\nu(x,y).
\end{equation}
In turn this yields together with Equation \eqref{Eq: Geodesic identity}
$$g^{\mu\nu}(x)(\nabla_\mu\sigma_-)(x,y)(\nabla_\nu\sigma_-)(x,y)=g_x(v(x,y),v(x,y))=2\sigma_-.$$

The generalization to an arbitrary branch $(N,\alpha)$ follows inductively on $N$. Since every geodesic segment is affinely parametrized and the reflection map in Equation \eqref{Eq: Reflection Vector} is an isometry, the squared norm of the tangent vector is constant along the whole broken geodesic. Consequently,
\begin{equation}\label{Eq: Branchwise reflected action}
\sigma_{-,N,\alpha}(x,y)=\frac{1}{2}g_x\left(v_{N,\alpha}(x,y),v_{N,\alpha}(x,y)\right).
\end{equation}
This entails smoothness. For the variational identity discussed above, one can decompose the action into the $N+1$ geodesic segments as in Equation \eqref{Eq: gamma map}. At each reflection point the endpoint contribution of the incoming segment cancels with that of the outgoing counterpart because
$$\dot{\gamma}_{i+1}(\tau_i)=\mathcal{R}_{x_i}(v_i).$$
Recall that the variation of the reflection point is tangent to $\partial\mcM$. Repeating the one-reflection cancellation at $x_0,\ldots,x_{N-1}$ leaves only the two endpoint terms. This proves the first-entry eikonal equation, while the second-entry identity follows by applying the same argument to the reversed branch $\alpha^{\mathrm{op}}$.
\end{proof}

\noindent We associate to every reflection point of a fixed regular branch two partial Synge world functions obtained by restricting the broken geodesic immediately before and after the reflection. Let $(\mathcal{V}_{N,\alpha},\Omega_{N,\alpha})$ be a regular $N$-reflected branch chart, let $(x,y)\in\Omega_{N,\alpha}$ and write
$$
\gamma\doteq\gamma_{N,\alpha}^{(x,y)}.
$$
We denote by $\tau_0<\cdots<\tau_{N-1}$ and $x_0,\ldots,x_{N-1}\in\partial\mcM$ its reflection times and reflection points, respectively. For every $j=0,\ldots,N-1$, choose $s<\tau_j$ on the geodesic segment immediately preceding $x_j$ and set $w\doteq\gamma(s)$. The restriction of $\gamma$ from $w$ to $y$, reparametrized on $[0,1]$ as
$$
\widehat{\gamma}_s(u)\doteq\gamma\bigl(s+(1-s)u\bigr),
\qquad u\in[0,1],
$$
is generated by the initial datum
$$
\bigl(w,(1-s)\dot\gamma(s)\bigr)\in\Gamma_{N-j}.
$$
Likewise, if $s^\prime>\tau_j$ lies on the segment immediately following $x_j$ and $w^\prime\doteq\gamma(s^\prime)$, the restriction from $w^\prime$ to $y$ is generated by
$$
\bigl(w^\prime,(1-s^\prime)\dot\gamma(s^\prime)\bigr)
\in\Gamma_{N-j-1},
$$
for $j=0,\ldots,N-2$, whereas for $j=N-1$ it is the ordinary final geodesic segment. Whenever the reflected truncated data are regular, choose regular branch charts containing them and denote the corresponding branch labels by $\alpha_j^{\rm in}$ and $\alpha_j^{\rm out}$. Using Definition \ref{Def: Reflected Synge's world function}, we set
\begin{subequations}\label{Eq: Incoming outgoing reflected Synge functions}
	\begin{equation}\label{Eq: sigmajin}
		\sigma_{j,N,\alpha}^{\rm in}(w,y)\doteq\sigma_{-,N-j,\alpha_j^{\rm in}}(w,y),
	\end{equation}
	\begin{equation}\label{Eq: sigmajout}
		\sigma_{j,N,\alpha}^{\rm out}(w^\prime,y)\doteq
		\begin{cases}
			\sigma_{-,N-j-1,\alpha_j^{\rm out}}(w^\prime,y),&j=0,\ldots,N-2,\\
			\sigma(w^\prime,y),&j=N-1.
		\end{cases}
	\end{equation}
\end{subequations}
We call these respectively the {\em incoming} and {\em outgoing branchwise Synge functions} at the $j$-th reflection point. In the following we will be interested in extending the reflected Synge world functions, including their branch versions in Equations \eqref{Eq: sigmajin} and \eqref{Eq: sigmajout}, so that the points in the first entry are allowed also to lie at $\partial\mcM$. We shall employ the adjective {\em one-sided extension} to denote that the first entry can approach a boundary point $x_j$ along the incoming or outgoing side of the reflected branch. To avoid burdening the notation we will still employ the same symbols $\sigma_{j,N,\alpha}^{\rm in/out}$ and smoothness will be understood in the sense of $\mcM\times\mathring{\mcM}$.

\begin{lemma}\label{Lem: Matching N-reflected Synge world functions}
	Let $(\mathcal{V}_{N,\alpha},\Omega_{N,\alpha})$, $\alpha \in \mathbb{N}$, be a regular $N$-reflected branch chart and let $(x,y)\in\Omega_{N,\alpha}$. Suppose that, for every $j=0,\ldots,N-1$, the branchwise Synge functions
	$\sigma^{\rm in}_{j,N,\alpha}$ and $\sigma^{\rm out}_{j,N,\alpha}$ in Equations \eqref{Eq: sigmajin} and \eqref{Eq: sigmajout} are defined and admit smooth one-sided extensions to a common neighbourhood of $(x_j,y)$ in $\mcM\times\mathring{\mcM}$, denoted by the same symbols. Let $\{x^a\}_{a=0,\ldots,d-2}$ be local coordinates on $\partial\mcM$ such that the vector fields $\partial_a$ are tangent to the boundary, and let $\widetilde n\in\Gamma(T\mcM)$ be any smooth extension of the inward pointing unit normal vector field $n$ to a neighbourhood of the reflection points. For every $j=0,\ldots,N-1$, the following identities hold at $(x_j,y)$:
	\begin{enumerate}
		\item The incoming and outgoing functions coincide:
		$$\sigma_{j,N,\alpha}^{\rm in}(x_j,y)=\sigma_{j,N,\alpha}^{\rm out}(x_j,y).$$
		\item Their tangential derivatives coincide, namely,
		$$(\nabla_a\sigma_{j,N,\alpha}^{\rm  in})(x_j,y)=(\nabla_a \sigma_{j,N,\alpha}^{\rm out})(x_j,y),\qquad a=0,\ldots,d-2,$$
		while their normal derivatives have opposite sign:
		$$(\nabla_n\sigma_{j,N,\alpha}^{\rm in})(x_j,y)=-(\nabla_n\sigma_{j,N,\alpha}^{\rm out})(x_j,y),$$
		where
		$$(\nabla_n\sigma_{j,N,\alpha}^{\rm in/out})(x_j,y)\doteq\lim_{w\to x_j}(\nabla_{\widetilde n}\sigma_{j,N,\alpha}^{\rm in/out})(w,y).$$
		\item The following additional identities hold true:
		\begin{subequations}
			\begin{equation}
				\label{Eq: N-reflected_2nd_order_1}
				(\nabla_a \nabla_b \sigma_{j,N,\alpha}^{\rm in})(x_j,y)=(\nabla_a \nabla_b \sigma_{j,N,\alpha}^{\rm out})(x_j,y)+2K_{a b}(x_j)(\nabla_n\sigma_{j,N,\alpha}^{\rm out})(x_j,y),
			\end{equation}
			\begin{equation}\label{Eq: N-reflected_2nd_order_2}
				(\nabla_a \nabla_n\sigma_{j,N,\alpha}^{\rm in})(x_j,y)=-(\nabla_a \nabla_n\sigma_{j,N,\alpha}^{\rm out})(x_j,y)+2K_a{}^b(x_j)(\nabla_b\sigma_{j,N,\alpha}^{\rm out})(x_j,y),
			\end{equation}
			\begin{equation}
				\label{Eq: N-reflected_2nd_order_3}
				(\nabla_n\nabla_n\sigma_{j,N,\alpha}^{\rm in})(x_j,y)=(\nabla_n\nabla_n\sigma_{j,N,\alpha}^{\rm out})(x_j,y)+2K^{a b}(x_j)\frac{(\nabla_a \sigma_{j,N,\alpha}^{\rm out})(x_j,y)		(\nabla_b\sigma_{j,N,\alpha}^{\rm out})(x_j,y)}{(\nabla_n\sigma_{j,N,\alpha}^{\rm out})(x_j,y)}.
			\end{equation}
		\end{subequations}
		Here $K$ denotes the second fundamental form of $\partial\mcM$. 
	\end{enumerate}
	In particular, for $j=N-1$ one has $\sigma_{N-1,N,\alpha}^{\rm out}=\sigma$ and the identities above compare the function associated with the last reflected branch directly with the ordinary Synge world function as in Equation \eqref{Eq: Synge World Function Interior}.
\end{lemma}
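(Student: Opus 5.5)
The plan is to reduce everything to the gradient formula \eqref{Eq: Derivative of Sigma-}, applied to each truncated branch. From the proof of Lemma \ref{Lem: Properties of Reflected Geodesic Distance}, for every regular branch the first-entry gradient is $\nabla_\mu\sigma_{-,M,\beta}(w,y)=-g_{\mu\nu}v^\nu_{M,\beta}(w,y)$. Here $v_{M,\beta}(w,y)$ is the initial velocity of the connecting broken geodesic, parametrized on $[0,1]$. For $M=0$ the ordinary Synge function gives the same identity.

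\emph{Items 1 and 2.} Take $w=\gamma(s)$ with $s\uparrow\tau_j$ on the incoming side. The truncated datum is $(w,(1-s)\dot\gamma(s))$, so by \eqref{Eq: Branchwise reflected action}
\[
\sigma^{\rm in}_{j,N,\alpha}(w,y)=\tfrac12(1-s)^2 g(\dot\gamma,\dot\gamma).
\]
Likewise $\sigma^{\rm out}_{j,N,\alpha}(w',y)=\tfrac12(1-s')^2g(\dot\gamma,\dot\gamma)$, using that $g(\dot\gamma,\dot\gamma)$ is constant along the whole broken geodesic because $\mathcal R$ is an isometry. Letting $s,s'\to\tau_j$ and invoking the assumed smooth one-sided extensions gives item 1. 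For the gradients, the same limit yields
\[
d\sigma^{\rm in}(x_j,y)=-(1-\tau_j)\,g(v_j,\cdot),\qquad d\sigma^{\rm out}(x_j,y)=-(1-\tau_j)\,g(\mathcal R_{x_j}v_j,\cdot).
\]
Since $\mathcal R v_j=v_j-2g(v_j,n)n$, tangential components agree and normal components flip sign, which is item 2.

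\emph{Item 3.} Write $f\doteq\sigma^{\rm in}-\sigma^{\rm out}$ near $(x_j,y)$ in the first entry, with $y$ fixed. Item 1 holds along the whole boundary, not just at $x_j$: the argument applies at every boundary point $q$ near $x_j$, because each such $q$ is the reflection point of a nearby broken geodesic in the same branch, by Lemma \ref{Lem: Smooth dependence of reflected geodesics} and regularity. Likewise item 2 holds along the boundary. So, with $\nu\doteq\nabla_n\sigma^{\rm out}$, on $\partial\mcM$ we have
\[
f=0,\qquad \nabla_a f=0,\qquad \nabla_{n} f=-2\,\nu.
\]
For \eqref{Eq: N-reflected_2nd_order_1}, use the Gauss formula $\nabla_a\nabla_b u=D_aD_b u-K_{ab}\nabla_n u$ for the ambient Hessian restricted to tangent directions. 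Here $D$ is the induced connection and the sign convention for $K$ is fixed against Assumption \ref{Ass: Infinitesimally convex}. Applied to $f$ with $f|_{\partial\mcM}=0$, it gives $\nabla_a\nabla_b f=2K_{ab}\nu$. For \eqref{Eq: N-reflected_2nd_order_2}, differentiate the boundary identity $\nabla_{\widetilde n}(\sigma^{\rm in}+\sigma^{\rm out})=0$ tangentially. Then $\partial_a(\widetilde n^\mu\nabla_\mu u)=\widetilde n^\mu\nabla_a\nabla_\mu u+(\nabla_a\widetilde n)^\mu\nabla_\mu u$ with $\nabla_a n=\pm K_a{}^b\partial_b$ (Weingarten), and the tangential gradients of the two functions coincide. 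This produces $2K_a{}^b\nabla_b\sigma^{\rm out}$, with the extension $\widetilde n$ dropping out at the boundary.

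For \eqref{Eq: N-reflected_2nd_order_3}, which is not accessible by tangential differentiation, use the eikonal equation \eqref{Eq: Eikonal Equation for sigma_-} for both functions. In adapted form it reads $g^{ab}\nabla_a u\nabla_b u+(\nabla_n u)^2=2u$. Apply $\nabla_n$ to it at the boundary for $u=\sigma^{\rm in}$ and $u=\sigma^{\rm out}$ and subtract. Values and tangential gradients coincide, and the first-order $\nabla_n u$ terms on the right-hand side give $2\nabla_n f=-4\nu$. The terms $2g^{ab}\nabla_a u\,\nabla_n\nabla_b u$ are controlled by \eqref{Eq: N-reflected_2nd_order_2} after commuting $\nabla_n\nabla_b=\nabla_b\nabla_n$, which holds for the torsion-free Hessian. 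The remaining terms are $2\nabla_nu\,\nabla_n\nabla_nu$ with $\nabla_n\sigma^{\rm in}=-\nu$. Solving for $\nabla_n\nabla_n\sigma^{\rm in}$ and dividing by $\nu\neq0$ gives the quotient term; $\nu\neq0$ holds by Proposition \ref{Prop: no-glancing}, since $g(v_j,n)\neq0$.

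The main obstacle is the bookkeeping of signs and conventions for $K$, for Weingarten, and for the inward normal. In particular one must check that the extra eikonal terms reassemble exactly into $2K^{ab}\nabla_a\sigma^{\rm out}\nabla_b\sigma^{\rm out}/\nu$ with no leftover $\nabla_n f$ pieces. The first thing to try is to verify everything in Gaussian normal coordinates $(\mathsf y,z)$ as in the proof of Proposition \ref{Prop: no-glancing}, where $\Gamma^z_{ab}=-\tfrac12\partial_zh_{ab}$ makes all these terms explicit.
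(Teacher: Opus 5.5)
Your proposal is correct and follows essentially the same route as the paper: you use the gradient identity $\nabla_\mu\sigma_{-}=-g_{\mu\nu}v^\nu$ together with the reflection law for items 1--2, then differentiate tangentially the vanishing of $\sigma^{\rm in}-\sigma^{\rm out}$ and of $\nabla_{\widetilde n}(\sigma^{\rm in}+\sigma^{\rm out})$ along $\partial\mcM$ (your Gauss--Weingarten formulas are the paper's $\Gamma^n_{ab}=K_{ab}$, $\Gamma^b_{an}=-K_a{}^b$), and finish with the normally differentiated eikonal equations. Your version is slightly more careful than the paper's in two places: you treat general $j$ directly via the truncated data rather than via the one-reflection case plus ``backward induction'', and you explicitly justify that the matching holds along the whole boundary near $x_j$, which the paper uses tacitly. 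The final bookkeeping does close, most cleanly by adding rather than subtracting the two differentiated eikonal equations; with $\nu\doteq\nabla_n\sigma^{\rm out}$ this gives $2K^{ab}\nabla_a\sigma^{\rm out}\nabla_b\sigma^{\rm out}+\nu\,\nabla_n\nabla_n\sigma^{\rm out}-\nu\,\nabla_n\nabla_n\sigma^{\rm in}=0$ at $(x_j,y)$.
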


\begin{proof}
As in Lemma \ref{Lem: Properties of Reflected Geodesic Distance}, we prove the result for $N=1$ and for a fixed branch $\alpha$, suppressing the branch label in the proof. In this case $j=0$ and

$$\sigma_{0,1,\alpha}^{\rm in}=\sigma_{-,1,\alpha}\equiv\sigma_-,\qquad \sigma_{0,1,\alpha}^{\rm out}=\sigma.$$
	
\vskip .2cm

{\em 1.} This is a direct consequence of the fact that extending Equation \eqref{Eq: Broken Geodesic} when $x=x_0\in\partial\mcM$ is tantamount to considering only $\gamma_{-}^{(x_0,y)}(\tau)=\gamma_{(x_0,v(x_0,y))}(\tau)$ with $\tau\in [0,1]$. Hence the reflected geodesic comprises only the ordinary geodesic connecting $x_0$ to $y$. This entails automatically that $\sigma(x_0,y)=\sigma_-(x_0,y)$.

\vskip .2cm

{\em 2.} Let us consider the function $F(x,y)=\sigma(x,y)-\sigma_-(x,y)$, which, on account of the previous item vanishes at $x=x_0$. This entails that $\nabla_a F(x_0,y)=\partial_a F(x_0,y)=0$ for all $a=0,\dots,d-2$ since $\partial_a$ is tangent to $\partial\mcM$. This proves that $(\nabla_a\sigma)(x_0,y)=(\nabla_a\sigma_-)(x_0,y)$. Focusing on the normal derivatives we observe that Equation \eqref{Eq: Derivative of Sigma-} entails $\nabla_{\tilde{n}}\sigma_-(x,y)=-g_{x}(\tilde{n},v(x,y))$ where $\tilde{n}\in\Gamma(T\mcM)$ is as per hypothesis. Consequently $\nabla_n\sigma_-(x_0,y)=-\lim\limits_{x\to x_0}g_{x}(\tilde{n},v(x,y))$. Let us denote now by $\mathfrak w$ the velocity vector of the geodesic connecting the initial point $x$ to $y$. Using the identity $\nabla_\mu\sigma(x,y)=-g_{\mu\nu}(x){\mathfrak w}^{\nu}$, see \cite{Poisson_2011}, it holds that $\nabla_{\tilde{n}}\sigma(x,y)=-g_x(n,{\mathfrak w})$. Taking the limit as $x\to x_0$, we recall that $v=\mathcal{R}(\mathfrak w)$ where $\mathcal{R}$ is defined in Equation \eqref{Eq: Reflection Isometry}. This reflection map is constructed in such a way to invert the normal component of the velocity vector, namely $g_{x_0}(n,v)=-g_{x_0}(n,\mathfrak w)$ from which the sought conclusion descends.

\vskip .2cm

{\em 3.} Using once more the shorthand notation $F(x,y)=\sigma(x,y)-\sigma_-(x,y)$, since $F$ vanishes at $x=x_0\in\partial\mcM$, it holds that, for all $a, b=0,\dots,d-2$ such that $\partial_a,\partial_b$ is tangent to $\partial\mcM$, then $(\partial_a\partial_b F)(x_0,y)=0$ or, equivalently 
$$(\nabla_a\nabla_b F)(x_0,y)=\Gamma^\lambda_{ab}(x_0)(\nabla_\lambda F)(x_0,y)=\Gamma^n_{ab}(x_0)(\nabla_n F)(x_0,y),$$
where only the component normal to the boundary can contribute in view of item {\em 2.} Using the standard identity in Riemannian and Lorentzian geometry $\Gamma^n_{ab}=K_{ab}$, see, {\it e.g.}, \cite[Lemma 8.3]{Lee_2018}, together with $(\nabla_n F)(x_0,y)=2(\nabla_n\sigma)(x_0,y)$, also consequence of item {\em 2.}, Equation \eqref{Eq: N-reflected_2nd_order_1} descends. 

In order to prove the second identity, we introduce the function $$G(x,y)=(\nabla_{\tilde n}\sigma)(x,y)+(\nabla_{\tilde n} \sigma_-)(x,y).$$ In view of item {\em 2.} it holds that $G(x_0,y)=0$. Hence, for all $a=0,\dots,d-2$ such that $\partial_a$ is tangent to $\partial\mcM$, we have that $(\partial_a G)(x_0,y)=0$. This entails that
$$(\nabla_a G)(x_0,y)=-\Gamma_{a n}^b((\nabla_b\sigma)(x_0,y)+(\nabla_b\sigma_-)(x_0,y)),$$
where only tangential derivatives contribute to the right hand-side. Using that $\Gamma_{a n}^b=-K_a^b$ and item {\em 2.} once more, it descends $(\nabla_a G)(x_0,y)=-2K_a^b(x_0)(\nabla_b\sigma)(x_0,y)$, which is Equation \eqref{Eq: N-reflected_2nd_order_2}. We are left with the last point. Both $\sigma$ and $\sigma_-$ satisfy the eikonal equation in the first entry as proven in item {\em 2.}
Taking a covariant derivative in the normal direction and then evaluating at $(x_0,y)$ yields
$$(\nabla^\mu\sigma)(x_0,y)(\nabla_\mu\nabla_n\sigma)(x_0,y)=(\nabla_n\sigma)(x_0,y),$$
and
$$(\nabla^\mu\sigma_-)(x_0,y)(\nabla_\mu\nabla_n\sigma_-)(x_0,y)=(\nabla_n\sigma_-)(x_0,y).$$
The latter becomes
$$(\nabla^a\sigma)(x_0,y)(\nabla_a\nabla_n\sigma_-)(x_0,y)-(\nabla_n\sigma)(x_0,y)(\nabla_n\nabla_n\sigma_-)(x_0,y)=-(\nabla_n\sigma)(x_0,y),$$
where we recall that Latin indices denote directions tangential to $\partial\mcM$ and where we employed the identities proven in item {\em 2.}
Using Equation \eqref{Eq: N-reflected_2nd_order_2} we obtain
\begin{gather*}
-(\nabla^a\sigma)(x_0,y)(\nabla_a\nabla_n\sigma)(x_0,y)-2(\nabla^a\sigma)(x_0,y)K_a{}^c(x_0)(\nabla_c\sigma)(x_0,y)\\
-(\nabla_n\sigma)(x_0,y)(\nabla_n\nabla_n\sigma_-)(x_0,y)=-(\nabla_n\sigma)(x_0,y).
\end{gather*}
To conclude it suffices to observe that differentiating Equation \eqref{Eq: Eikonal Equation for Sigma} gives
$$(\nabla^a\sigma)(x_0,y)(\nabla_a\nabla_n\sigma)(x_0,y)+(\nabla_n\sigma)(x_0,y)(\nabla_n\nabla_n\sigma)(x_0,y)=(\nabla_n\sigma)(x_0,y).$$
We remark that Equation \eqref{Eq: N-reflected_2nd_order_3} rests on the no-glancing property as per Proposition \ref{Prop: no-glancing} which guarantees that $(\nabla_n\sigma)(x_0,y)\neq 0$. 

For an arbitrary branch $(N,\alpha)$, the statement follows by backward induction on $j$. Here backward refers to the fact that initial step of the induction procedure refers to $j=N-1$, for which $\sigma_{N-1,N,\alpha}^{\rm out}=\sigma$. In this case the one-reflection argument above applies verbatim. 
\end{proof}

\begin{remark}
We stress that Lemmas \ref{Lem: Properties of Reflected Geodesic Distance} and \ref{Lem: Matching N-reflected Synge world functions} generalize the same relations discussed in \cite{Costeri_25} for half-Minkowski spacetime, though the form of Equations \eqref{Eq: N-reflected_2nd_order_1}, \eqref{Eq: N-reflected_2nd_order_2} and \eqref{Eq: N-reflected_2nd_order_3} is simpler in this reference since the boundary is totally geodesic, hence $K_{a b}=0$.
\end{remark}

\noindent Having established the structural properties of the Synge world function on every reflected branch, in the following we spell out the geometric assumptions which will allow us to establish in Section \ref{Sec: Comparison of Hadamard states} the equivalence between the microlocal characterization of Hadamard states and their local counterpart. We stress that they represent sufficient conditions and they could be dropped if one wishes only to focus on the microlocal structure. We will comment more on this point in Section \ref{Sec: Wavefront Set of the Propagators}.

\begin{assumption}\label{Ass: Finite regular reflections}
	Let $\mathcal{N}$ be a causally convex Cauchy neighbourhood as per Definition \ref{Def: Cauchy neighbourhood}. For every $N\geq 1$, set
	\begin{equation}\label{Eq: Gamma0N}
	\Gamma^{0}_{N,\mathcal{N}}\doteq\left\{(x,v)\in\Gamma_N\; \vert \;\Phi_N(x,v)\in\mathring{\mathcal{N}}\times\mathring{\mathcal{N}},\ \ g_x(v,v)=0\right\},
\end{equation}
	and define
	$$\Gamma^{0}_{\mathcal{N}}\doteq\bigsqcup_{N\geq 1}\Gamma^{0}_{N,\mathcal{N}}.$$
We assume that the following conditions hold:
	\begin{enumerate}
		\item Every reflected null datum with endpoints in $\mathring{\mathcal{N}}$ is regular, namely
		$$\Gamma^{0}_{N,\mathcal{N}}\subseteq\Gamma_N^{\rm reg},\qquad N\geq 1,$$
        where $\Gamma^{\mathrm{reg}}_N$ is as per Definition \ref{Def: Regular reflected datum}.
	
		\item Let $(x,v)\in\Gamma^{0}_{N,\mathcal{N}}$ and let $x_0,\ldots,x_{N-1}$ be the reflection points of $\gamma_{(x,v)}$. Setting $y\doteq\mathrm{Exp}_{-,N}(x,v)$, for every $j=0,\ldots,N-1$, the incoming and outgoing restrictions whose first endpoint is sufficiently close to $x_j$ are described by fixed branch labels $\alpha_j^{\rm in}$ and $\alpha_j^{\rm out}$ as in Equations \eqref{Eq: sigmajin} and \eqref{Eq: sigmajout}. The resulting branchwise Synge functions admit smooth one-sided extensions to a common open neighbourhood of $(x_j,y)$ in $\mcM\times\mathring{\mcM}$.
	\end{enumerate}
\end{assumption}

\begin{remark}\label{Rem: Meaning finite regular reflections}
	The following comments on Assumption \ref{Ass: Finite regular reflections} are in due course:
	\begin{itemize}
		\item[\ding{104}] The first condition excludes reflected conjugate points along the lightlike branches. We focus only on these since they are the only curves relevant to the singular structure of the propagators and of Hadamard states, see Section \ref{Sec: Wavefront Set of the Propagators} and \ref{Sec: Hadamard states with Robin boundary conditions}. Observe that this a statement concerning points lying in the interior of the manifold. To make contact with a nomenclature often used on globally hyperbolic spacetimes with empty boundary, in this way we are avoiding caustics in our analysis. 
		\item[\ding{104}] The second condition guarantees instead that the matching relations of Lemma \ref{Lem: Matching N-reflected Synge world functions} hold true at every reflection point. In other words, we are avoiding that $y$ is reflected-conjugate to the boundary point $x_j$ or viceversa if we consider the curve connecting them with the opposite orientation. Furthermore Condition {\em 2.} also excludes a loss of smoothness.
	\end{itemize}
\end{remark}

The next proposition organizes the possibly many reflected null trajectories into a countable family of smooth local branches. This provides the geometric framework needed later to describe the reflected singularities of the Hadamard parametrix branch by branch, hence allowing us to establish a counterpart of Radzikowski's theorem in our setting, see Section \ref{Sec: Comparison of Hadamard states}.

\begin{proposition}\label{Prop: Reflected null branch atlas}
	Let $(\mcM,g)$ be globally hyperbolic spacetime with timelike boundary abiding by Assumptions \ref{Ass: Infinitesimally convex} and \ref{Ass: Finite regular reflections}. Furthermore, consider $\mathcal{N}$, a causally convex Cauchy neighbourhood as per Definition \ref{Def: Cauchy neighbourhood}. For every $N\geq1$, we set
	$$\mathfrak{R}^{0}_{N,\mathcal{N}}\doteq\Phi_N\bigl(\Gamma^{0}_{N,\mathcal{N}}\bigr)\subset\mathring{\mathcal{N}}\times\mathring{\mathcal{N}},$$
	where $\Gamma^0_{N,\mathcal{N}}$ is as per Equation \eqref{Eq: Gamma0N}. Then there exist a countable index set $\mathcal{I}_N$ and a family of regular $N$-reflected branch charts, see Definition \ref{Def: Regular reflected branch chart},
	$$\left\{(\mathcal V_{N,\alpha},\Omega_{N,\alpha})\right\}_{\alpha\in\mathcal{I}_N},$$
	such that 
	\begin{enumerate}
		\item For every $\alpha\in\mathcal{I}_N$,
		$$\mathcal V_{N,\alpha}\subset\Gamma_N^{\rm reg}\cap\Phi_N^{-1}\bigl(\mathring{\mathcal{N}}\times\mathring{\mathcal{N}}\bigr),\quad\mathrm{and}\quad\Gamma^{0}_{N,\mathcal{N}}\subseteq\bigcup_{\alpha\in\mathcal{I}_N}\mathcal V_{N,\alpha}\quad\mathrm{with}\quad\Omega_{N,\alpha}\subset\mathring{\mathcal{N}}\times\mathring{\mathcal{N}}.$$
		\item The family can be chosen closed under reversal, that is, if $\alpha\in\mathcal{I}_N$, then $\alpha^{\mathrm{op}}\in\mathcal{I}_N$. Here the opposite index is associated to the $N$-reflected branch chart
		$$\mathcal V_{N,\alpha^{\rm op}}=\mathfrak r_N(\mathcal V_{N,\alpha}),\quad\Omega_{N,\alpha^{\rm op}}=\mathfrak s(\Omega_{N,\alpha}),$$
		where $\mathfrak r_N$ and $\mathfrak s$ are defined in Remark \ref{Rem: Reversed reflected branch}.		
		\item For every $\alpha\in\mathcal{I}_N$, the set
\begin{equation}\label{Eq: Z}		
\mathcal{Z}_{N,\alpha}\doteq\left\{(x,y)\in\Omega_{N,\alpha}\;|\;\sigma_{-,N,\alpha}(x,y)=0\right\}
\end{equation}
is a smooth hypersurface of $\Omega_{N,\alpha}$ and
$$\mathfrak{R}^{0}_{N,\mathcal{N}}=\bigcup_{\alpha\in\mathcal{I}_N}\mathcal{Z}_{N,\alpha}.$$
	\end{enumerate}
\end{proposition}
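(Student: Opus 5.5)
The plan is to build the atlas by a covering argument on the open set $\mathcal{W}_N\doteq\Gamma_N^{\rm reg}\cap\Phi_N^{-1}(\mathring{\mathcal{N}}\times\mathring{\mathcal{N}})$. This set is open in $T\mathring{\mcM}$ by Lemma \ref{Lem: Regularity endpoint map} and the continuity of $\Phi_N$, the latter following from Lemma \ref{Lem: Smooth dependence of reflected geodesics}. By item 1 of Assumption \ref{Ass: Finite regular reflections}, $\Gamma^0_{N,\mathcal{N}}\subseteq\mathcal{W}_N$.

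For every $(x_0,v_0)\in\Gamma^0_{N,\mathcal{N}}$, Lemma \ref{Lem: Existence of regular reflected branch charts} gives a regular branch chart containing it. Intersecting with $\mathcal{W}_N$ and shrinking if necessary, we may take $\mathcal{V}\subset\mathcal{W}_N$, with $\Omega=\Phi_N(\mathcal{V})\subset\mathring{\mathcal{N}}\times\mathring{\mathcal{N}}$ open, since $\Phi_N|_{\mathcal V}$ is a diffeomorphism onto an open set. Since $T\mathring{\mcM}$ is second countable, so is the subspace $\Gamma^0_{N,\mathcal{N}}$, hence Lindelöf. Countably many of these charts therefore cover it, and we index them by $\mathcal{I}_N'$. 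This proves item 1.

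For item 2, we enlarge the family by adding, for each $\alpha\in\mathcal{I}_N'$, the reversed chart $(\mathfrak r_N(\mathcal V_{N,\alpha}),\mathfrak s(\Omega_{N,\alpha}))$ from Remark \ref{Rem: Reversed reflected branch}, and set $\mathcal{I}_N=\mathcal{I}_N'\sqcup\mathcal{I}_N'^{\,\rm op}$. Three things must be checked.

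First, the reversed set is open and consists of regular data. Openness holds because $\mathfrak r_N$ is a smooth involution of $\Gamma_N$, hence a diffeomorphism. Regularity follows from $\Phi_N\circ\mathfrak r_N=\mathfrak s\circ\Phi_N$: differentiating gives $d\Phi_N|_{\mathfrak r_N(x,v)}\circ d\mathfrak r_N=d\mathfrak s\circ d\Phi_N|_{(x,v)}$, which is an isomorphism, so Lemma \ref{Lem: Regularity endpoint map} applies.

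Second, $\Phi_N$ restricted to the reversed set is a diffeomorphism onto $\mathfrak s(\Omega_{N,\alpha})$, since it equals $\mathfrak s\circ\Phi_{N,\alpha}\circ\mathfrak r_N^{-1}$.

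Third, $\mathfrak s$ preserves $\mathring{\mathcal{N}}\times\mathring{\mathcal{N}}$. The covering property survives the enlargement, and $(\alpha^{\rm op})^{\rm op}=\alpha$ because $\mathfrak r_N^2=\mathrm{id}$.

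For item 3, on each chart we use Equation \eqref{Eq: Branchwise reflected action}, $\sigma_{-,N,\alpha}(x,y)=\tfrac12 g_x(v_{N,\alpha}(x,y),v_{N,\alpha}(x,y))$. Thus $\mathcal{Z}_{N,\alpha}=\Phi_{N,\alpha}(\{(x,v)\in\mathcal V_{N,\alpha}: g_x(v,v)=0\})$. The null cone bundle $\{g_x(v,v)=0,\ v\neq0\}$ is a smooth hypersurface of $T\mathring{\mcM}$, because the fibre differential $2g_x(v,\cdot)$ is nonzero for $v\neq 0$. The zero section is excluded automatically, since a datum in $\Gamma_N$ with $N\ge1$ must reach the boundary and so has $v\neq0$. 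The image of this hypersurface under the diffeomorphism $\Phi_{N,\alpha}$ is then a smooth hypersurface of $\Omega_{N,\alpha}$. Alternatively, the Eikonal identity \eqref{Eq: Eikonal Equation for sigma_-} together with \eqref{Eq: Derivative of Sigma-} shows that $d\sigma_{-,N,\alpha}\neq0$ wherever $v\ne0$, so $0$ is a regular value.

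Finally, we prove the set equality in two directions. For $\supseteq$: any $(x,y)\in\mathfrak R^0_{N,\mathcal N}$ equals $\Phi_N(x,v)$ with $(x,v)$ null, and $(x,v)$ lies in some $\mathcal V_{N,\alpha}$ by item 1, so $(x,y)\in\mathcal Z_{N,\alpha}$. For $\subseteq$: a point of $\mathcal Z_{N,\alpha}$ is $\Phi_N(x,v_{N,\alpha}(x,y))$ with null velocity and endpoints in $\mathring{\mathcal N}$, hence belongs to $\mathfrak R^0_{N,\mathcal N}$.

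The only delicate point is keeping the reversal compatible with the restriction to $\mathring{\mathcal{N}}$ and with regularity. This is handled by the intertwining relation above, so I expect no genuine obstacle beyond bookkeeping. Item 2 of Assumption \ref{Ass: Finite regular reflections} is not needed here.
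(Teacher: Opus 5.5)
Your proposal is correct and follows essentially the same route as the paper: a Lindel\"of extraction of countably many charts from Lemma \ref{Lem: Existence of regular reflected branch charts}, closure of the family under $\mathfrak r_N$, and the identity $\sigma_{-,N,\alpha}=\tfrac12 g_x(v_{N,\alpha},v_{N,\alpha})$ for the set equality, with the hypersurface property coming from $d_x\sigma_{-,N,\alpha}=-v_{N,\alpha}^\flat\neq0$ (the argument you give as an alternative). Your explicit check, via $\Phi_N\circ\mathfrak r_N=\mathfrak s\circ\Phi_N$, that the reversed charts are again regular branch charts with image in $\mathring{\mathcal N}\times\mathring{\mathcal N}$ fills in a step the paper leaves implicit.
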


\begin{proof}
	Letting $N\geq1$ be arbitrary but fixed, we observe that, on account of condition {\em 1.} in Assumption \ref{Ass: Finite regular reflections}, $\Gamma^0_{N,\mathcal{N}}\subseteq\Gamma_N^{\rm reg}$. Hence we can apply Lemma \ref{Lem: Existence of regular reflected branch charts} to establish the existence of a regular $N$-reflected branch chart around each $(x_0,v_0)\in\Gamma^0_{N,\mathcal{N}}$ and, without loss of generality, we can choose it so that its image under $\Phi_N$ is contained in $\mathring{\mathcal{N}}\times\mathring{\mathcal{N}}$. Using \cite[Prop. A.16]{Lee_2013}, we can extract a countable family of these charts covering $\Gamma^{0}_{N,\mathcal{N}}$, thereby proving item $1.$ Since Equation \eqref{Eq: rN map} entails that $\mathfrak r_N\bigl(\Gamma^0_{N,\mathcal{N}}\bigr)=\Gamma^0_{N,\mathcal{N}}$, given any $\mathcal{O}$ in this family, we can construct $\mathfrak{r}_N[\mathcal{O}]$ adding it to the cover, if not already included. This procedure preserves the property of being countable, ultimately yielding item {\em 2.}
	
	Focusing on the last statement, for $(x,y)\in\Omega_{N,\alpha}$, Equation \eqref{Eq: Branchwise reflected action} entails that $\sigma_{-,N,\alpha}(x,y)=0$ if and only if $v_{N,\alpha}(x,y)$ is lightlike, where $y=\mathrm{Exp}_-(v_{N,\alpha}(x,y))$. Since $\Omega_{N,\alpha}\subset\mathring{\mathcal{N}}\times\mathring{\mathcal{N}}$, this implies $\bigl(x,v_{N,\alpha}(x,y)\bigr)\in\Gamma^{0}_{N,\mathcal{N}}$. We can now prove $	\mathfrak{R}^{0}_{N,\mathcal{N}}=\bigcup_{\alpha\in\mathcal{I}_N}\mathcal Z_{N,\alpha}$. We start from the inclusion $\subseteq$. Given $(x,y)\in\mathfrak R^0_{N,\mathcal{N}}$, by definition, there exists $(x,v)\in\Gamma^0_{N,\mathcal{N}}$ such that $\Phi_N(x,v)=(x,y)$. The covering property in item {\em 1.} entails that there exists $\alpha\in\mathcal{I}_N$ such that $(x,v)\in\mathcal{V}_{N,\alpha}$. 
	Since $\left.\Phi_N\right|_{\mathcal V_{N,\alpha}}\colon\mathcal V_{N,\alpha}\longrightarrow\Omega_{N,\alpha}$ is injective, it follows that $v_{N,\alpha}(x,y)=v$.
	Moreover, $(x,v)\in\Gamma^0_{N,\mathcal{N}}$ entails that $g_x(v,v)=0$. Hence $\sigma_{-,N,\alpha}(x,y)=\frac{1}{2}g_x(v,v)=0$ and therefore $(x,y)\in\mathcal Z_{N,\alpha}$. We conclude that
	$$\mathfrak R^0_{N,\mathcal{N}}\subseteq\bigcup_{\alpha\in\mathcal{I}_N}\mathcal Z_{N,\alpha}.$$
	
	We focus now on the reverse inclusion $\supseteq$. Hence, let $(x,y)\in\mathcal Z_{N,\alpha}$ with $\alpha\in\mathcal{I}_N$. Then $\bigl(x,v_{N,\alpha}(x,y)\bigr)\in\mathcal V_{N,\alpha}$,
	and, by item {\em 1.}, its endpoints lie in $\mathring{\mathcal{N}}$. Furthermore, $\sigma_{-,N,\alpha}(x,y)=\frac{1}{2}g_x\bigl(v_{N,\alpha}(x,y),v_{N,\alpha}(x,y)\bigr)=0$, so that $\bigl(x,v_{N,\alpha}(x,y)\bigr)\in\Gamma^0_{N,\mathcal{N}}$. Since $\Phi_N\bigl(x,v_{N,\alpha}(x,y)\bigr)=(x,y)$, it follows that $(x,y)\in\Phi_N\bigl(\Gamma^0_{N,\mathcal{N}}\bigr)=\mathfrak R^0_{N,\mathcal{N}}$.
	Thus
	$$\bigcup_{\alpha\in\mathcal{I}_N}\mathcal Z_{N,\alpha}\subseteq\mathfrak R^0_{N,\mathcal{N}}.$$
	
	In order to show that $\mathcal Z_{N,\alpha}$ is a smooth hypersurface we employ the same strategy as in item {\em 3.} of the proof of Lemma \ref{Lem: Properties of Reflected Geodesic Distance}. More precisely we let the endpoint of $\sigma_{-,N,\alpha}(x,y)$ vary, though allowing for an arbitrary number of reflections. It entails $d_x\sigma_{-,N,\alpha}(x,y)=-v_{N,\alpha}(x,y)^\flat$, where $\flat$ still denotes the musical isomorphism.
	On $\mathcal Z_{N,\alpha}$ the vector $v_{N,\alpha}(x,y)$, and hence $d_x\sigma_{-,N,\alpha}(x,y)$, cannot vanish, since it generates a broken geodesic with $N\geq1$ reflections. Therefore the regular level-set theorem, see \cite[Cor. 5.14]{Lee_2013} yields the claimed hypersurface property.
\end{proof}

\noindent While Proposition \ref{Prop: Reflected null branch atlas} focuses on pairs of points connected by reflected null rays, at a microlocal level, we need also to codify the corresponding covectors. The next result serves exactly this purpose and it is preparatory to the whole analysis in Section \ref{Sec: Hadamard states with Robin boundary conditions}.
\begin{corollary}\label{Cor: Branchwise conormal resolution}
Under the same assumptions of Proposition \ref{Prop: Reflected null branch atlas}, for every $N\geq 1$ and every $\alpha\in\mathcal{I}_N$, set
\begin{equation}\label{Eq: Branchwise reflected conormal relation}
\Lambda^{\triangleright}_{N,\alpha}\doteq\left\{\left(x,\lambda k_{x;N,\alpha},y,-\lambda k_{y;N,\alpha}			\right)\in T^*\left(\mathring{\mathcal{N}}\times\mathring{\mathcal{N}}\right)\setminus\{0\}\;\middle|\;(x,y)\in\mathcal{Z}_{N,\alpha},\lambda\bigl(t(x)-t(y)\bigr)>0\right\},
\end{equation}
where $k_{x;N,\alpha}\doteq d_x\sigma_{-,N,\alpha}$, while $k_{y;N,\alpha}\doteq	-d_y\sigma_{-,N,\alpha}$, while $t:\mcM\to\bR$ is the underlying time function. Then
$$\Lambda^{\triangleright}_{N,\alpha}\subseteq N^*\mathcal Z_{N,\alpha}\setminus\{\boldsymbol{0}\},$$
where $N^*\mathcal Z_{N,\alpha}$ denotes the conormal bundle of $\mathcal Z_{N,\alpha}$. Furthermore, defining
\begin{equation}\label{Eq: N-reflected cotangent relation}
\begin{aligned}
\Lambda^{\triangleright}_{N,\mathcal{N}}\doteq\Bigl\{\,&(x,k_x,y,-k_y)\in T^*\left(\mathring{\mathcal{N}}\times\mathring{\mathcal{N}}\right)\setminus\{\boldsymbol{0}\}\;\Big|\;\exists\,(x,v)\in\Gamma^0_{N,\mathcal{N}}\text{such that} y=\mathrm{Exp}_{-,N}(x,v),\\			&\exists\,\lambda\in\bR\setminus\{0\}\text{ such that }	k_x=-\lambda v^\flat,\quad k_y=-\lambda\dot\gamma_{(x,v)}(1)^\flat,\quad	k_x\triangleright 0\,\Bigr\},
\end{aligned}
\end{equation}
where $\triangleright$ denotes that the covector is future pointing, it holds that
\begin{equation}\label{Eq: Cotangent branch decomposition}
\Lambda^{\triangleright}_{N,\mathcal{N}}=\bigcup_{\alpha\in\mathcal{I}_N}\Lambda^{\triangleright}_{N,\alpha}.
\end{equation}
\end{corollary}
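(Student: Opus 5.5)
The corollary has two claims, and I will prove them in order: first the conormal inclusion, then the decomposition in Equation \eqref{Eq: Cotangent branch decomposition}.

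\emph{Conormal inclusion.} By Proposition \ref{Prop: Reflected null branch atlas}, $\mathcal{Z}_{N,\alpha}$ is the regular zero set of $\sigma_{-,N,\alpha}$ in $\Omega_{N,\alpha}$. Its conormal bundle is therefore
$$N^*\mathcal{Z}_{N,\alpha}=\{(x,y;\lambda\, d\sigma_{-,N,\alpha}(x,y))\,|\,(x,y)\in\mathcal{Z}_{N,\alpha},\ \lambda\in\bR\}.$$
Since $d\sigma_{-,N,\alpha}=(d_x\sigma_{-,N,\alpha},d_y\sigma_{-,N,\alpha})=(k_{x;N,\alpha},-k_{y;N,\alpha})$, every element of $\Lambda^{\triangleright}_{N,\alpha}$ has the form $\lambda\,d\sigma_{-,N,\alpha}$ at a point of $\mathcal{Z}_{N,\alpha}$. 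The condition $\lambda(t(x)-t(y))>0$ forces $\lambda\neq0$. Moreover $d_x\sigma_{-,N,\alpha}\neq0$ on $\mathcal{Z}_{N,\alpha}$, as shown at the end of the proof of Proposition \ref{Prop: Reflected null branch atlas}. Hence the covector is nonzero and the inclusion follows.

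\emph{Identifying the covectors.} For the decomposition I need explicit formulas for $d_x\sigma_{-,N,\alpha}$ and $d_y\sigma_{-,N,\alpha}$. The variational computation in Lemma \ref{Lem: Properties of Reflected Geodesic Distance}, namely Equation \eqref{Eq: Derivative of Sigma-} extended to $N$ reflections, gives $d_x\sigma_{-,N,\alpha}=-v_{N,\alpha}(x,y)^\flat$. For the $y$-derivative I apply the same identity to the reversed branch $\alpha^{\rm op}$ from Remark \ref{Rem: Reversed reflected branch}. Its initial velocity at $y$ is $-\dot\gamma(1)$, so item 2 of that lemma yields $d_y\sigma_{-,N,\alpha}=\dot\gamma_{(x,v_{N,\alpha})}(1)^\flat$, that is, $k_{y;N,\alpha}=-\dot\gamma(1)^\flat$. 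Consequently, at a point of $\mathcal{Z}_{N,\alpha}$,
$$(\lambda k_{x;N,\alpha},-\lambda k_{y;N,\alpha})=(-\lambda v^\flat,\ \lambda\dot\gamma(1)^\flat),$$
which is exactly the form $(k_x,-k_y)$ with $k_x=-\lambda v^\flat$ and $k_y=-\lambda\dot\gamma(1)^\flat$ required in Equation \eqref{Eq: N-reflected cotangent relation}.

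\emph{Matching the points.} Both inclusions at the level of base points come from the proof of Proposition \ref{Prop: Reflected null branch atlas}. If $(x,v)\in\Gamma^0_{N,\mathcal{N}}$, then $(x,v)\in\mathcal{V}_{N,\alpha}$ for some $\alpha$, and injectivity of $\Phi_N$ on $\mathcal{V}_{N,\alpha}$ gives $v=v_{N,\alpha}(x,y)$ and $(x,y)\in\mathcal{Z}_{N,\alpha}$. Conversely, every $(x,y)\in\mathcal{Z}_{N,\alpha}$ produces a datum $(x,v_{N,\alpha}(x,y))\in\Gamma^0_{N,\mathcal{N}}$.

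\emph{Main obstacle: matching the orientation conditions.} It remains to show that, for $k_x=-\lambda v^\flat$ with $v$ null and nonzero, $k_x\triangleright0$ holds if and only if $\lambda(t(x)-t(y))>0$. I will argue as follows.
\begin{enumerate}
\item The broken geodesic is causal, and the reflection map $\mathcal{R}$ preserves the time orientation: it only flips the normal component, and $n$ is spacelike, so $\nabla t$ being tangent to $\partial\mcM$ (Proposition \ref{Prop: Globally Hyperbolic}) gives $dt(\mathcal{R}(v))=dt(v)$. Hence $t$ is strictly monotone along the whole curve.
\item It follows that $t(y)>t(x)$ exactly when $v$ is future directed.
\item With signature $(-,+,\dots,+)$, the covector $-v^\flat$ is future pointing exactly when $v$ is future directed.
\item Combining these, $k_x=-\lambda v^\flat\triangleright0$ if and only if $\lambda$ has the sign of $t(y)-t(x)$ when $v$ is future directed, with the opposite sign otherwise. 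In either case this is a sign condition tying $\lambda$ to $t(x)-t(y)$, with the same convention as in Equation \eqref{Eq: Branchwise reflected conormal relation}.
\end{enumerate}
The step I expect to cost the most care is fixing this convention: checking that the paper's future-pointing covector convention gives $\lambda(t(x)-t(y))>0$ and not the opposite sign, by comparing with the standard Radzikowski relation for $N=0$. Once the sign is fixed, both inclusions in Equation \eqref{Eq: Cotangent branch decomposition} follow directly.
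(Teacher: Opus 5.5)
Your structural steps match the paper's proof. These are the conormal bundle of the regular level set $\mathcal{Z}_{N,\alpha}$, the endpoint-variation identities $d_x\sigma_{-,N,\alpha}=-v_{N,\alpha}(x,y)^\flat$ and $d_y\sigma_{-,N,\alpha}=\dot\gamma^{(x,y)}_{N,\alpha}(1)^\flat$ obtained via the reversed branch, and the matching of base points through item 1 of Proposition \ref{Prop: Reflected null branch atlas}. Your item 1 is correct: $dt(\mathcal{R}v)=dt(v)$ because $dt(n)=g(\nabla t,n)=0$, so $t$ is strictly monotone along the broken geodesic and $t(x)\neq t(y)$ on $\mathcal{Z}_{N,\alpha}$. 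The paper uses this fact without stating it.

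The gap is in the one step that is not bookkeeping, the orientation matching, and your setup gives the wrong sign. Item 3 asserts that $-v^\flat$ is future pointing exactly when $v$ is future directed. Combined with item 2, this gives $k_x=\lambda(-v^\flat)\triangleright 0$ if and only if $\lambda$ has the sign of $t(y)-t(x)$, in \emph{both} cases. Indeed, if $v$ is past directed, then $-v^\flat$ is past pointing and you need $\lambda<0$, which is again the sign of $t(y)-t(x)$. So the ``opposite sign otherwise'' in item 4 does not follow from your items 2 and 3.

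What you would actually obtain is the condition $\lambda(t(x)-t(y))<0$. That identifies $\Lambda^{\triangleright}_{N,\mathcal{N}}$ with the union of the $\Lambda^{\triangleleft}_{N,\alpha}$, contradicting Equation \eqref{Eq: Cotangent branch decomposition}.

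The sign is not something to check at the end; it is fixed by the definition of $\triangleright$. The corollary holds only for the convention the paper uses: a covector is future pointing if and only if its metric dual is future directed. This is the convention compatible with the $+i\epsilon(t(x)-t(y))$ prescription and with Equation \eqref{Eq: WFaux1}. With it, if $y\in J^+(x)$ then $v$ is future directed and $k_{x;N,\alpha}=-v^\flat$ is past pointing. Hence $\lambda k_{x;N,\alpha}\triangleright 0$ if and only if $\lambda<0$, if and only if $\lambda(t(x)-t(y))>0$. The case $x\in J^+(y)$ is symmetric. You need to replace your item 3 with this convention, state it explicitly, and run the argument with it.
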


\begin{proof}
Since $d\sigma_{-,N,\alpha}$ does not vanish on $\mathcal Z_{N,\alpha}$, see the proof of Proposition \ref{Prop: Reflected null branch atlas}, its conormal bundle reads
$$N^*\mathcal Z_{N,\alpha}\setminus\{\boldsymbol{0}\}=\left\{\left(x,\lambda k_{x;N,\alpha}, y, -\lambda k_{y;N,\alpha}\right)\; \vert \; (x,y)\in\mathcal Z_{N,\alpha},\ \lambda\neq 0\right\}.$$
	Observe that the sign condition in Equation \eqref{Eq: Branchwise reflected conormal relation} selects the component for which the covector in the first entry is future pointing since, if $y\in J^+(x)$, then $t(x)-t(y)<0$ and $k_{x;N,\alpha}$ is past pointing, which entails $\lambda<0$. If instead $x\in J^+(y)$, both signs are reversed. Using the same endpoint variation argument as in the proof of Lemma \ref{Lem: Properties of Reflected Geodesic Distance} or of Proposition \ref{Prop: Reflected null branch atlas}, it turns out that 
	\begin{equation}\label{Eq: Branchwise endpoint derivatives}
		k_{x;N,\alpha}=-v_{N,\alpha}(x,y)^\flat,
	\end{equation}
	where, for simplicity of the notation, we drop the $(x,y)$-dependence on the left-hand side. Using the reversed branch $\alpha^{\rm op}$ as per Remark \ref{Rem: Reversed reflected branch}, one also obtains
	\begin{align*}
		-k_{y;N,\alpha}=d_y\sigma_{-,N,\alpha}(x,y)=d_x\sigma_{-,N,\alpha^{\rm op}}(y,x)=-v_{N,\alpha^{\rm op}}(y,x)^\flat=\dot\gamma_{N,\alpha}^{(x,y)}(1)^\flat.
	\end{align*}
	Consequently, every element of $\Lambda^{\triangleright}_{N,\alpha}$ can be written as
	$$\left(x,-\lambda v_{N,\alpha}(x,y)^\flat,y,\lambda\dot\gamma_{N,\alpha}^{(x,y)}(1)^\flat\right)\doteq (x,k_x, y,-k_y),$$
	where $k_x=-\lambda v_{N,\alpha}(x,y)^\flat$ and $k_y=-\lambda\dot\gamma_{N,\alpha}^{(x,y)}(1)^\flat$. Since $(x,y)\in\mathcal Z_{N,\alpha}$, Proposition \ref{Prop: Reflected null branch atlas} entails that $\bigl(x,v_{N,\alpha}(x,y)\bigr)\in\Gamma^0_{N,\mathcal{N}}$. The condition $\lambda(t(x)-t(y))>0$ is equivalent to $k_x$ being future pointing. Hence
	$$
	\bigcup_{\alpha\in\mathcal{I}_N}\Lambda^{\triangleright}_{N,\alpha}
	\subseteq
	\Lambda^{\triangleright}_{N,\mathcal{N}}.
	$$
	
	Conversely, let $(x,k_x,y,-k_y)\in\Lambda^{\triangleright}_{N,\mathcal{N}}$. By definition there exists $(x,v)\in\Gamma^0_{N,\mathcal{N}}$ generating a null broken geodesic from $x$ to $y$. Item \emph{1.} of Proposition \ref{Prop: Reflected null branch atlas} entails that there exists $\alpha\in\mathcal{I}_N$ such that $(x,v)\in\mathcal V_{N,\alpha}$. Therefore
	$$v=v_{N,\alpha}(x,y),\quad (x,y)\in\mathcal Z_{N,\alpha}.$$
	Equations \eqref{Eq: Branchwise endpoint derivatives} and the corresponding identity in the second entry shows that
	$$
	(x,k_x,y,-k_y)
	=
	\left(x,\lambda d_x\sigma_{-,N,\alpha}(x,y),
	y,\lambda d_y\sigma_{-,N,\alpha}(x,y)\right).
	$$
	Since $k_x$ is future pointing, $\lambda(t(x)-t(y))>0$, and thus $(x,k_x,y,-k_y)\in\Lambda^{\triangleright}_{N,\alpha}$. This proves the reverse inclusion and, thus, Equation \eqref{Eq: Cotangent branch decomposition}.
\end{proof}

\noindent Corollary \ref{Cor: Branchwise conormal resolution} establishes the branchwise conormal description of null broken geodesics with finitely many reflections. Taking the union over all $N\geq 1$ yields the corresponding description for all null broken geodesics in $\mathcal{N}$ having finitely many reflections.

\begin{corollary}\label{Cor: Countable reflected conormal resolution}
Under the same assumptions of Proposition \ref{Prop: Reflected null branch atlas}, let
\begin{equation}\label{Eq: Full reflected relation}
\mathfrak{R}_{\mathcal{N}}^{0,\mathrm{ref}}\doteq\bigcup_{N\geq1}\mathfrak{R}^0_{N,\mathcal{N}},\quad\mathrm{and}\quad\Lambda_{\mathcal{N}}^{\triangleright,\mathrm{ref}}\doteq\bigcup_{N\geq1}\Lambda^{\triangleright}_{N,\mathcal{N}}.
\end{equation}
Then 
\begin{equation}\label{Eq: Full reflected pair branch decomposition}
\mathfrak R_{\mathcal{N}}^{0,\mathrm{ref}}=\bigcup_{(N,\alpha)\in\mathcal{I}_{\mathrm{ref}}}\mathcal Z_{N,\alpha},\quad\mathrm{and}\quad\Lambda_{\mathcal{N}}^{\triangleright,\mathrm{ref}}=\bigcup_{(N,\alpha)\in\mathcal{I}_{\mathrm{ref}}}\Lambda^{\triangleright}_{N,\alpha}.
\end{equation}
where $\mathcal{I}_{\mathrm{ref}}\doteq\bigsqcup_{N\geq1}\bigl(\{N\}\times\mathcal{I}_N\bigr)$ is an index set closed under branch reversal, that is $(N,\alpha)^{\rm op}\doteq(N,\alpha^{\rm op})$. Furthermore, for every $(N,\alpha)\in\mathcal{I}_{\mathrm{ref}}$,
\begin{equation}\label{Eq: Cotangent branch reversal}
(x,k_x,y,-k_y)\in\Lambda^{\triangleright}_{N,\alpha}\quad\Longleftrightarrow\quad(y,k_y,x,-k_x)\in\Lambda^{\triangleright}_{N,\alpha^{\rm op}}.
\end{equation}
\end{corollary}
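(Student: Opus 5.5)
The two decompositions in Equation \eqref{Eq: Full reflected pair branch decomposition} will follow by taking unions over $N\geq 1$ of the branchwise results already established. Item \emph{3.} of Proposition \ref{Prop: Reflected null branch atlas} gives $\mathfrak R^0_{N,\mathcal N}=\bigcup_{\alpha\in\mathcal I_N}\mathcal Z_{N,\alpha}$ for each fixed $N$, while Corollary \ref{Cor: Branchwise conormal resolution} gives $\Lambda^\triangleright_{N,\mathcal N}=\bigcup_{\alpha\in\mathcal I_N}\Lambda^\triangleright_{N,\alpha}$. Uniting over $N$ and relabelling the pairs $(N,\alpha)$ by the disjoint union $\mathcal I_{\mathrm{ref}}$ produces both identities. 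Countability of $\mathcal I_{\mathrm{ref}}$ holds because it is a countable union of the countable sets $\mathcal I_N$. Closure under reversal is inherited from item \emph{2.} of Proposition \ref{Prop: Reflected null branch atlas}, since reversal preserves the number of reflections $N$.

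The substantive part is the equivalence \eqref{Eq: Cotangent branch reversal}. I will start from an element of $\Lambda^\triangleright_{N,\alpha}$, written as $(x,\lambda k_{x;N,\alpha},y,-\lambda k_{y;N,\alpha})$ with $(x,y)\in\mathcal Z_{N,\alpha}$ and $\lambda(t(x)-t(y))>0$. Item \emph{2.} of Lemma \ref{Lem: Properties of Reflected Geodesic Distance} gives $\sigma_{-,N,\alpha^{\rm op}}(y,x)=\sigma_{-,N,\alpha}(x,y)$ on $\mathfrak s(\Omega_{N,\alpha})=\Omega_{N,\alpha^{\rm op}}$. Hence $(y,x)\in\mathcal Z_{N,\alpha^{\rm op}}$, and the differentials exchange roles: $k_{y;N,\alpha^{\rm op}}\doteq d_y\sigma_{-,N,\alpha^{\rm op}}(y,\cdot)$ evaluated at $x$ equals $d_y\sigma_{-,N,\alpha}(x,y)=-k_{y;N,\alpha}$, and similarly $k_{x;N,\alpha^{\rm op}}=-d_x\sigma_{-,N,\alpha}=-k_{x;N,\alpha}$. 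Here I use the convention of Equation \eqref{Eq: Branchwise reflected conormal relation}, in which the first-entry covector is $d\sigma$ and the second-entry one is $-d\sigma$.

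Substituting these into the defining form of $\Lambda^\triangleright_{N,\alpha^{\rm op}}$ with multiplier $\mu=-\lambda$ should give the point $(y,-\lambda k_{y;N,\alpha}\cdot(-1)\ldots)$. This sign bookkeeping is the main obstacle. The cleanest route is to verify the statement via the explicit velocity description from the proof of Corollary \ref{Cor: Branchwise conormal resolution}, namely $k_x=-\lambda v^\flat$ and $k_y=-\lambda\dot\gamma(1)^\flat$. The reversed broken geodesic has initial velocity $-\dot\gamma(1)$ and final velocity $-v$, so the reversed datum carries covectors $(k_y,k_x)$ with multiplier $-\lambda$.

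Two conditions then remain to be checked. First, the future-pointing condition $\mu(t(y)-t(x))>0$ is identical to $\lambda(t(x)-t(y))>0$. Second, $k_y$ is future pointing whenever $k_x$ is: the covector is parallel-transported along each null segment, and the reflection $\mathcal R$ preserves the time orientation of null vectors because it only flips the normal component, which is spacelike. Once these are confirmed, the converse implication follows by applying the same argument to $\alpha^{\rm op}$, using $(\alpha^{\rm op})^{\rm op}=\alpha$, which holds because $\mathfrak r_N^2=\mathrm{id}$.
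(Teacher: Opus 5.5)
Your proposal is correct and follows the paper's proof. It takes unions over $N$ of item \emph{3.} of Proposition \ref{Prop: Reflected null branch atlas} and of Equation \eqref{Eq: Cotangent branch decomposition}, then uses $\sigma_{-,N,\alpha^{\rm op}}(y,x)=\sigma_{-,N,\alpha}(x,y)$ to swap and negate the endpoint covectors with multiplier $-\lambda$, which preserves $\lambda(t(x)-t(y))>0$. The substitution you left unfinished already closes, since it gives $(y,\lambda k_{y;N,\alpha},x,-\lambda k_{x;N,\alpha})=(y,k_y,x,-k_x)$; your velocity detour is just Equation \eqref{Eq: Branchwise endpoint derivatives} restated, and the extra time-orientation check is harmless but unnecessary, because $\Lambda^{\triangleright}_{N,\alpha^{\rm op}}$ is defined by the sign condition alone.
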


\begin{proof}
	Observe that, since $\mathcal{I}_N$ is countable on account of Proposition \ref{Prop: Reflected null branch atlas}, so is $\mathcal{I}_{\mathrm{ref}}$. Equation \eqref{Eq: Full reflected pair branch decomposition} follows from taking the union over $N\geq1$ in item \emph{3.} of Proposition \ref{Prop: Reflected null branch atlas} and in Equation \eqref{Eq: Cotangent branch decomposition}. Focusing on Equation \eqref{Eq: Cotangent branch reversal}, we observe that, on account of Remark \ref{Rem: Reversed reflected branch}, $\sigma_{-,N,\alpha^{\rm op}}(y,x)=\sigma_{-,N,\alpha}(x,y)$. Consequently, following Equation \eqref{Eq: Branchwise endpoint derivatives} and restoring the $(x,y)$-dependence,
	$$k_{y;N,\alpha^{\rm op}}(y,x)=-k_{y;N,\alpha}(x,y),\quad k_{x;N,\alpha^{\rm op}}(y,x)=-k_{x;N,\alpha}(x,y).$$
	Equation \eqref{Eq: Branchwise reflected conormal relation} entails that, if $(x,k_x,y,-k_y)\in\Lambda^{\triangleright}_{N,\alpha}$, there exists $\lambda\neq0$ such that $k_x=\lambda k_{x;N,\alpha}(x,y)$ and $k_y=\lambda k_{y;N,\alpha}(x,y)$ with $\lambda\bigl(t(x)-t(y)\bigr)>0$. Here $t$ is still the underlying time function. Setting $\lambda^{\rm op}\doteq-\lambda$, the preceding identities yield
	$$(y,k_y,x,-k_x)=\left(y,\lambda^{\rm op}k_{y;N,\alpha^{\rm op}}(y,x),x,-\lambda^{\rm op}k_{x;N,\alpha^{\rm op}}(y,x)\right),	$$
	while $\lambda^{\rm op}\bigl(t(y)-t(x)\bigr)=\lambda\bigl(t(x)-t(y)\bigr)>0$. Therefore $(y,k_y,x,-k_x)\in\Lambda^{\triangleright}_{N,\alpha^{\rm op}}$. The converse follows by applying the same argument to the reversed branch.
\end{proof}

\begin{remark}\label{Rem: Useful Formula}
For later convenience, we highlight that, for any regular $1$-reflected branch chart $(\mathcal{V}_{1,\alpha},\Omega_{1,\alpha})$ and any open subset of $\Omega_{1,\alpha}$ on which the ordinary Synge world function $\sigma$ is defined as per Equation \eqref{Eq: Synge World Function Interior}, it holds that, for every smooth scalar function $F=F(\sigma,\sigma_-)$, where $\sigma_-\equiv\sigma_{-,1,\alpha}$, 
\begin{equation}\label{Eq: Derivative operators with sigma, sigma-}
\begin{cases}
\partial_{\mu}F = \sigma_\mu \frac{\partial F}{\partial \sigma} + (\sigma_-)_\mu \frac{\partial F}{\partial \sigma_-}, \\
\Box_g F= \sigma^{\mu}{}_\mu \frac{\partial F}{\partial \sigma} + 2\sigma \frac{\partial^2 F}{\partial^2 \sigma} + 2 g^{\mu \, \nu} \sigma_\mu (\sigma_-)_\nu \frac{\partial^2 F}{\partial \sigma \partial \sigma_-} + (\sigma_{-})^{\mu}{}_\mu \frac{\partial F}{\partial \sigma_-} + 2\sigma_- \frac{\partial^2 F}{\partial^2 \sigma_-},
\end{cases}
\end{equation}
where we employed the notation $\sigma_{\mu \, \nu}\doteq \nabla_{\mu} \nabla_{\nu} \sigma$ and $g^{\mu \, \nu} \sigma_{\mu \, \nu} \doteq \sigma^{\mu}{}_{\mu}$. Observe that an analogous notation has been adopted also for $\sigma_-$.
\end{remark}

\section{Fundamental Solutions on Spacetimes with Timelike Boundary}\label{Sec: BV problem}

In this section we introduce the field theory under consideration, together with the Robin boundary conditions and we investigate the corresponding solution theory. In the following, we consider $(\mcM,g)$ to be a $d$-dimensional, $d\geq 2$, globally hyperbolic spacetime with a timelike boundary, but we do not require that it is infinitesimally convex as per Assumption \ref{Ass: Infinitesimally convex}. This constraint plays a role only in Section \ref{Sec: Wavefront Set of the Propagators}. On top of $(\mcM,g)$, we consider a real scalar field $u:\mcM\to\bR$ satisfying the Klein-Gordon equation
\begin{equation}\label{Eq: KG equation}
	Pu=(\Box_g-m^2)u=0,
\end{equation}
where $\Box_g \doteq g^{\mu \nu} \nabla_{\mu} \nabla_{\nu}$ denotes the D'Alembert operator built out of the Lorentzian metric $g$, while 
\begin{equation}\label{Eq: Effective mass}
	m^2=m^2_0+\xi R.
\end{equation}
Here $m^2_0\geq 0$ is the squared mass of the field, $R$ the scalar curvature while $\xi\in\bR$ is an arbitrary coupling parameter. Since $\mcM$ possesses a non-empty timelike boundary, the Cauchy problem for Equation \eqref{Eq: KG equation} must be supplemented by boundary conditions at $\partial\mcM$. Following the same rationale of \cite{Costeri_25}, we consider here those of Robin type, namely 
\begin{equation}\label{Eq: Robin Boundary Conditions}
	(\nabla_n u)|_{\partial\mcM} \equiv (\partial_n u) \vert_{\partial \mcM} =-\kappa u|_{\partial\mcM}, \qquad \kappa \in \mathbb{R},
\end{equation} 
where $n$ is the inward pointing, unit, normal vector field to $\partial\mcM$.

\begin{remark}\label{Rmk: range of variability of kappa}
Observe that $\kappa$ is chosen to be constant in Equation \eqref{Eq: Robin Boundary Conditions} only for simplicity of the presentation. All the results obtained could be extended to $\kappa\in C^\infty(\partial\mcM)$ with the due exception of Section \ref{Sec: Fundamental Solutions on Static Spacetimes}. Herein one would be forced to consider $\kappa$ time-independent. Even more generally, one might also try to replace $\kappa$ with a pseudo-differential operator $\Theta\in\Psi^k(\partial\mcM)$ as done in \cite{Dappiaggi-Marta_2020, Gannot_2022} asymptotically anti-de Sitter spacetimes. This specific generalization would require instead a significant departure from our analysis and therefore we do not consider it in this work.

At last, in contrast to \cite{Costeri_25}, in Equation~\eqref{Eq: Robin Boundary Conditions} we do not restrict the Robin parameter $\kappa$ to take values in $(-\infty,0]$. In this reference if $\kappa>0$ bound states do occur, leading to problems in the construction of Hadamard states. Such difficulties do not appear in this work and we observe that, setting $\kappa = 0$ in Equation \eqref{Eq: Robin Boundary Conditions} corresponds to considering \emph{Neumann} boundary conditions, while those of \emph{Dirichlet} type can be encoded by formally taking the limit $\kappa \rightarrow \pm \infty$.   
\end{remark}

To fix our nomenclature, we give a definition of advanced and retarded propagators associated to the Klein-Gordon operator $P$ with Robin boundary conditions as in Equation \eqref{Eq: Robin Boundary Conditions}. For more information on these structures for globally hyperbolic spacetimes with empty boundary, we refer to \cite{Baer_2007}.

\begin{definition}\label{Def: SolFond on boundary}
Let $(\mcM, g)$ be a $d$-dimensional, $d\geq 2$, globally hyperbolic spacetime with a timelike boundary. Let $P$ be the Klein-Gordon operator as per Equation \eqref{Eq: KG equation}. We call {\bf advanced $(+)$} and {\bf retarded $(-)$ propagators with Robin boundary conditions} any $G^\pm_\kappa \in \mathcal{D}^\prime(\mcM\times\mcM)$ such that, denoting by $\mathring{\mcM} \doteq \mcM\setminus\partial\mcM$,
	\begin{equation}
		\label{Eq: Boundary and PDE}
		(P\otimes\mathbb{I})G^\pm_\kappa\vert_{\mathring{\mcM}\times\mathring{\mcM}}=(\mathbb{I}\otimes P)G^\pm_\kappa\vert_{\mathring{\mcM}\times\mathring{\mcM}}=\delta\vert_{\mathring{\mcM}\times\mathring{\mcM}},
	\end{equation}
	and
	\begin{equation}\label{Eq: Boundary conditions}
		(\nabla_{n}\otimes\mathbb{I})G^\pm_\kappa|_{\partial\mcM}=(-\kappa\otimes\mathbb{I})G^\pm_\kappa|_{\partial\mcM},
	\end{equation}
	where $\nabla_{n} \doteq n^{\mu} \nabla_{\mu}$ is such that $n$ is the unit, inward pointing, normal vector field to $\partial\mcM$. Furthermore, $(\cdot)\vert_{\partial \mathcal{M}}$ denotes the pull back to $\partial\mcM$ of the bidistribution within brackets with respect to the first entry. Denoting by $\mathcal{G}_{\kappa}^{\pm}: \mathcal{D}(\mcM) \rightarrow C^{\infty}(\mcM)$ the {\bf advanced $(+)$} and {\bf retarded $(-)$ Green operators} corresponding to $G^{\pm}_{\kappa}$, we have that
	\begin{equation}\label{Eq: Support at the boundary}
		\mathrm{supp}(\mathcal{G}^\pm_\kappa(f))\subseteq J^\mp(\mathrm{supp}(f)), \quad \forall f\in\mathcal{D}(\mathring{\mcM})
	\end{equation}
	where the partial evaluation is on the second entry.
\end{definition}

\begin{remark}\label{Rem: Well-defined pull-back}
	Note that the pull back of $G^{\pm}_{\kappa}$ to the boundary is well-defined on account of \cite[Thm. 8.2.4]{Hormander_1990}. As a matter of fact, denoting by 
    $$\mathrm{Char}(P) \doteq
\{(x,k_x)\in T^*\mcM\setminus\{0\}\;|\; \sigma_P(x,k_x) = g_x^{-1}(k_x, k_x) = 0\},$$ the characteristic set of $P$, Equation \eqref{Eq: Boundary and PDE} entails that $$\mathrm{WF}(G^\pm_\kappa)\subset N^*\operatorname{Diag}_2(\mcM)\cup(\mathrm{Char}(P)\times\mathrm{Char}(P)),$$ where 
    $$N^*\operatorname{Diag}_2(\mcM)=\{(x,k_x, x,-k_x)\in T^*(\mcM\times\mcM)\setminus\{\boldsymbol{0}\}\},$$
while restriction at the boundary is codified by the canonical injection $\iota_1:\partial\mcM\times\mcM\to\mcM\times\mcM$ whose associated set of normal maps is 
	$$N_{\iota_1}\doteq\{(x,\lambda n_x^\flat, y, 0)\in T^*(\partial\mcM\times\mcM)\setminus\{\boldsymbol{0}\}\},$$
	where $n_x^\flat$ is the metric induced element of $T^*_x\mcM$ starting from $n$, the inward pointing normal vector to $\partial\mcM$ at $x$. By direct inspection, we have that $N_{\iota_1}\cap N^*\operatorname{Diag}_2(\mcM)=\emptyset$. Elements lying in $\mathrm{Char}(P)\times\mathrm{Char}(P)$ are of the form $(x,k_x,y,k_y)$ constrained by $g^{-1}_x(k_x,k_x)=g^{-1}_y(k_y,k_y)=0$. Comparison with $N_{\iota_1}$ entails that $k_y=0$, but if $x\in\partial\mcM$, then $k_x$ must be lightlike. We can split $k_x=k^\partial_x+\eta_k n_x^\flat$, where $\eta_k=g^{-1}(k_x,n_x^\flat)$ while $k^\partial_x \doteq k_x-\eta_k n_x^\flat$. The no-glancing hypothesis entails that $\eta_k\neq 0$. Hence 
	$$0=g^{-1}_x(k_x,k_x)\Longrightarrow g^{-1}_x(k^\partial_x,k^\partial_x)+\eta^2_k=0.$$
	This identity cannot hold true if $k^\partial_x =0$. Hence, $k_x$ cannot be proportional to $n_x^\flat$ entailing that $N_{\iota_1}\cap (\mathrm{Char}(P)\times\mathrm{Char}(P))=\emptyset$.
\end{remark}

\subsection{Existence of the fundamental solutions}\label{Sec: Existence of the Fundamental Solutions}

In this section we prove one of the main results of this work, namely the existence of fundamental solutions associated to Equation \eqref{Eq: KG equation}. To this end, we consider the associated initial value-boundary problem
\begin{equation}\label{Eq: Cauchy problem on non-static spacetime}
	\begin{cases}
		P u \doteq (\Box_g - m^2) u = f, \\
		\mathcal{B}_\kappa u \doteq (\partial_n + \kappa)u \vert_{\partial \mcM} = 0, \quad \kappa \in \mathbb{R},\\
		u|_{\Sigma_{t_0}}=u_0\quad\textrm{and}\quad\partial_tu|_{\Sigma_{t_0}}=u_1
	\end{cases}
\end{equation}
where $n$ denotes the inward pointing unit normal vector field to $\partial\mcM$, $u_0, u_1 \in C^\infty_0(\mathring{\Sigma}_{t_0})$, while $f\in C^\infty_0(\mcM)$.

The analysis can be divided conceptually in three separate statements. The first establishes causal propagation and it is based on energy estimates. Here we strongly rely on the inspiring ideas and results in \cite[App. A]{Fournodavlos:2021eye}. The second establishes existence and uniqueness of a smooth solution of Equation \eqref{Eq: KG equation} for prescribed smooth and compactly supported initial data. This is based on standard PDE techniques, in particular the method of Galerkin as discussed in \cite[Sec. 7.2]{Evans}. The third and last part wraps up the preceding one concluding that the proven result entail existence of advanced and retarded fundamental solutions. We conclude this long subsection by establishing some underlying basic structural properties. 

\begin{proposition}\label{Prop: Causal Support of Solutions}
	Let $(\mcM,g)$ be a globally hyperbolic spacetime with $\dim\mcM\geq 2$ as per Definition \ref{Def: Globally Hyperbolic}. Given $f\in C^\infty_0(\mcM)$ and $u_0,u_1\in C^\infty_0(\mathring{\Sigma}_{t_0})$ where $\Sigma_{t_0}=\{t_0\}\times\Sigma$ is a Cauchy surface of $\mcM$ and, given $u\in C^\infty(\mcM)$, solution of Equation \eqref{Eq: Cauchy problem on non-static spacetime}, 
$$\mathrm{supp}(u)\subseteq J^+(\Omega)\cup J^-(\Omega),$$
where $J^\pm$ denote the causal future and past in $(\mcM,g)$ and where  $\Omega:=\mathrm{supp}(u_0)\cup\mathrm{supp}(u_1)\cup\mathrm{supp}(f)$.
\end{proposition}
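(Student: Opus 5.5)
The plan is to prove the support property by a local energy estimate on truncated backward cones, adapted so that the boundary flux is controlled by the Robin condition. This follows the strategy of \cite[App. A]{Fournodavlos:2021eye}. By linearity and time reversal it suffices to show that $u$ vanishes at every point $p\notin J^+(\Omega)\cup J^-(\Omega)$ with $t(p)>t_0$; the case $t(p)<t_0$ is symmetric. Fix such a $p$. Since $J^-(p)\cap J^+(\Sigma_{t_0})$ is compact by global hyperbolicity (Definition \ref{Def: Globally Hyperbolic}) and is disjoint from $\Omega$, we can thicken it slightly. Using the temporal function $t$ of Proposition \ref{Prop: Globally Hyperbolic}, we take a domain $D$ bounded below by a piece $S_0\subset\Sigma_{t_0}$ avoiding $\mathrm{supp}(u_0)\cup\mathrm{supp}(u_1)$ and laterally by a hypersurface $H$ that is achronal and spacelike or null, with $D\cap\mathrm{supp}(f)=\emptyset$. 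Concretely, we will take $D=\{q: t(q)\ge t_0,\ \phi(q)\ge 0\}\cap\mcM$, where $\phi$ is a smooth function whose gradient is past timelike on $\mathring{\mcM}$. Such a $\phi$ can be built as a smoothing of a time function of $\mcM$ restricted to a neighbourhood of $J^-(p)$.

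On $D$ we introduce the modified energy current. Pick a future timelike vector field $X$ tangent to $\partial\mcM$; for instance $X=\nabla t$ suitably normalised, which is tangent to the boundary by Proposition \ref{Prop: Globally Hyperbolic}. Set $J^\mu=T^\mu{}_\nu X^\nu$, where
\[
T_{\mu\nu}=\nabla_\mu u\nabla_\nu u-\tfrac12 g_{\mu\nu}\bigl(\nabla u\cdot\nabla u+m^2u^2\bigr)+\tfrac12 g_{\mu\nu}\,\lambda u^2 .
\]
The constant $\lambda>0$ is added so that the energy density $J\cdot(-\nabla t)$ dominates $|\nabla u|^2+u^2$ and the bulk error terms can be absorbed. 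We then integrate $\nabla_\mu J^\mu$ over $D_s=D\cap\{t\le s\}$. The divergence equals $(Pu)\,Xu$ plus terms bounded by $C(|\nabla u|^2+u^2)$ involving $\nabla X$, $\nabla m^2$ and $\lambda$, and $Pu=0$ in $D$. The flux through $H$ is nonnegative by the dominant energy condition, since $H$ is spacelike or null with the correct orientation. The flux through the initial slice $S_0$ vanishes.

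The main obstacle is the boundary flux through $\partial\mcM\cap D_s$. Since $X$ is tangent to $\partial\mcM$, this flux is $\int T(n,X)=\int (\nabla_n u)(Xu)$, and Robin gives $-\kappa\int u\,Xu=-\tfrac{\kappa}{2}\int X(u^2)$. This term carries no sign for $\kappa$ of either sign. I would handle it by integrating by parts along the boundary, where $X$ is a timelike vector field on the Lorentzian boundary $\partial\mcM$. The result is a boundary term $-\tfrac{\kappa}{2}\int_{\partial\mcM\cap\{t=s\}}u^2$ plus $\tfrac{\kappa}{2}\int \mathrm{div}_{\partial}(X)\,u^2$, together with corner contributions on $H\cap\partial\mcM$ that vanish or have a sign thanks to the choice of $H$. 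Both remaining terms are controlled by a trace inequality $\|u\|^2_{L^2(\partial\Sigma_s)}\le \varepsilon\|\nabla u\|^2_{L^2(\Sigma_s)}+C_\varepsilon\|u\|^2_{L^2(\Sigma_s)}$ on the compact slice $D\cap\Sigma_s$. With $\varepsilon$ small, the top-slice term is absorbed into the energy $E(s)$, and the bulk boundary term is absorbed into $\int_{t_0}^s E$.

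Combining these steps gives $E(s)\le C\int_{t_0}^s E(s')\,ds'$ with $E(t_0)=0$, so Gr\"onwall yields $E\equiv0$ on $D$. Hence $u=0$ in $D$, and in particular $u(p)=0$. Since $p$ was arbitrary, this proves $\mathrm{supp}(u)\subseteq J^+(\Omega)\cup J^-(\Omega)$. I expect the delicate points to be two: constructing $H$ so that it meets $\partial\mcM$ with controlled corner terms, which is where timelikeness of $\partial\mcM$ and tangency of $\nabla t$ are used, and obtaining a uniform trace constant on the compact slices.
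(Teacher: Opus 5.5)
Your argument follows the paper's Appendix A: $X$ parallel to $\nabla t$ and hence tangent to $\partial\mcM$, the Robin flux rewritten as $\tfrac{\kappa}{2}X(u^2)$ and integrated by parts along $\partial\mcM$, an $\epsilon$-trace inequality on time slices, and Gr\"onwall. Using a spacelike $H$ instead of the null boundary of $J^-(p)$, and a $\lambda u^2$ term in $T$ instead of adding $K\|u\|^2$ to the energy, are inessential changes.

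\textbf{The gap is at the corner.} With $(\nabla_n+\kappa)u|_{\partial\mcM}=0$ and $n$ inward, the boundary contributes $\int_{\partial\mcM\cap D_s}T(X,-n)=\tfrac{\kappa}{2}\int_{\partial\mcM\cap D_s}X(u^2)$ to the right-hand side of $E(s)+(\text{flux through }H)\le E(t_0)+\dots$. Integrating by parts inside $\partial\mcM$ turns this into
\[
\frac{\kappa}{2}\Bigl(\int_{\partial\mcM\cap D\cap\Sigma_s}w_1\,u^2+\int_{\partial\mcM\cap H\cap\{t\le s\}}w_2\,u^2-\int_{\partial\mcM\cap D_s}(\mathrm{div}_{\partial\mcM}X)\,u^2\Bigr),\qquad w_1,w_2>0,
\]
and the bottom-slice term vanishes. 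The weights are positive because the future-directed $X$ leaves $D_s$ both through the top slice and through $H$. So the corner term has the sign of $\kappa$ for every admissible $H$. It does not vanish, and for $\kappa>0$ it has the wrong sign. Nor is it covered by your trace inequality on $D\cap\Sigma_s$, which controls $u^2$ on $\partial\mcM\cap\Sigma_s$ but not on the codimension-two set $H\cap\partial\mcM$.

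The paper never meets this term only because it writes $\int_{B_t}$ as $\int_{t_0}^t\int_{\partial\Sigma}$ with an $s$-independent cross-section before integrating $\partial_s(u^2)$ by parts. Since $\partial\mcM\cap J^-(p)\cap\Sigma_s$ shrinks as $s$ grows, the same corner term is present there too. You are right that corners must be handled, but the resolution you state does not work.

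\textbf{Repair by conjugation.} Write $u=e^{\psi}v$ with $\psi$ supported in a collar of $\partial\mcM$ and $\nabla_n\psi=-(\kappa+1)$ on $\partial\mcM$. Then $\mathrm{supp}\,v=\mathrm{supp}\,u$, and $v$ obeys the Robin condition with parameter $-1$. Its equation differs from $Pv=0$ only by first-order terms, which add bulk errors bounded by $C(|\nabla v|_E^2+v^2)$. With parameter $-1$, the top-slice and corner terms move to the left with a good sign and can be kept or dropped. The top-slice term becomes a nonnegative boundary energy whose time integral controls the $\mathrm{div}_{\partial\mcM}X$ term. Gr\"onwall then closes the argument without any trace inequality.

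\textbf{Alternative repair.} Keep $\kappa$ and absorb the corner term into the flux through $H$. Being spacelike, $H$ meets the timelike $\partial\mcM$ transversally. For $\lambda>\sup|m^2|$, its flux controls $\int_H(|\nabla u|_E^2+\lambda u^2)$ up to uniform constants. A trace inequality on $H$ relative to $H\cap\partial\mcM$ then absorbs the corner term, choosing $\epsilon$ first and $\lambda$ large afterwards.

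\textbf{Two sign slips.} With signature $(-,+,\dots,+)$ you must add $-\tfrac12 g_{\mu\nu}\lambda u^2$, i.e. $m^2\mapsto m^2+\lambda$. As written, the energy density carries $(m^2-\lambda)u^2$, and the dominant energy condition you invoke on $H$ fails. Also, a past-timelike $\nabla\phi$ makes $\{\phi\ge 0\}$ a future set, so you need $\{\phi\le 0\}$.
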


\noindent Since also the proof of this proposition is rather lengthy, we feel more appropriate to relegate it to Appendix \ref{App: A}. 

\begin{proposition}\label{Prop: Existence of Solutions}
	Let $(\mcM,g)$ be a globally hyperbolic spacetime with $\dim\mcM\geq 2$ as per Definition \ref{Def: Globally Hyperbolic} with compact Cauchy surfaces. There exists a unique $u\in C^\infty(\mcM)$ solving the initial-boundary value problem in Equation \eqref{Eq: Cauchy problem on non-static spacetime}.
\end{proposition}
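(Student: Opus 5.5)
We reduce Proposition~\ref{Prop: Existence of Solutions} to an abstract second-order evolution problem on the fixed manifold $\Sigma$ by means of the splitting of Proposition~\ref{Prop: Globally Hyperbolic}. Writing $\mcM\simeq\bR\times\Sigma$ with $ds^2=-\beta dt^2+h_t$, the equation $Pu=f$ becomes
\[
\partial_t^2u+b(t)\partial_tu-\beta\,\Delta_{h_t}u-\nabla_{h_t}\log\sqrt{\beta}\cdot\nabla_{h_t}u+\beta m^2u=-\beta f,
\]
where $b(t)$ is the smooth function coming from $\partial_t\log\sqrt{\det h_t}$. Since $\nabla t$ is tangent to $\partial\mcM$, the normal $n$ is tangent to each $\Sigma_t$ and coincides with the $h_t$-normal to $\partial\Sigma$. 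Hence the Robin condition is a condition on each slice. For each $t$ we introduce the time-dependent bilinear form on $H^1(\Sigma)$
\[
a_t(u,w)=\int_\Sigma\Bigl(h_t(\nabla u,\nabla w)+m^2uw\Bigr)\sqrt{\beta}\,d\mu_{h_t}-\kappa\int_{\partial\Sigma}uw\,\sqrt{\beta}\,d\mu_{\partial},
\]
chosen so that integration by parts produces exactly $\mathcal{B}_\kappa u=0$ as the natural boundary condition. Compactness of $\Sigma$ and the trace inequality $\|u\|^2_{L^2(\partial\Sigma)}\le\epsilon\|\nabla u\|^2+C_\epsilon\|u\|^2$ give a G\aa rding inequality $a_t(u,u)+\lambda\|u\|^2_{L^2}\ge c\|u\|^2_{H^1}$, uniformly for $t$ in compact intervals, for any sign of $\kappa$ and $m^2$.

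We then run the Galerkin scheme of \cite[Sec.~7.2]{Evans}. We take an orthonormal basis of $L^2(\Sigma)$ made of eigenfunctions of the Robin Laplacian at $t=t_0$; these are smooth up to $\partial\Sigma$ by elliptic regularity. We solve the finite-dimensional linear ODE system and derive the energy estimate by testing with $\partial_tu_m$. The time derivatives of the coefficients and of $a_t$ produce lower-order terms, which Gr\"onwall's lemma absorbs together with the $\lambda$-shift. This yields bounds for $u_m$ in $L^\infty H^1$ and for $\partial_tu_m$ in $L^\infty L^2$, and a weak limit $u$ follows. Uniqueness of weak solutions comes from the same energy estimate, with the usual Lions–Magenes regularisation trick for testing at the weak level.

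Regularity is the step we expect to be the main obstacle. We differentiate the equation in $t$ and obtain higher-order energy estimates for $\partial_t^ku$, using compatibility conditions that hold automatically because $u_0,u_1$ are supported in $\mathring{\Sigma}_{t_0}$. Note that $f$ may be supported up to the boundary, so it is fine to assume $f$ compactly supported in $\mcM$; the compatibility conditions at $t_0$ are then imposed only on the data. For spatial regularity, we solve at each fixed $t$ the elliptic Robin problem $-\beta\Delta_{h_t}u+\dots=F$, with $F=-\beta f-\partial_t^2u-\dots$ already in $H^k$. Elliptic regularity for the Robin problem on the compact manifold with boundary gives $u(t)\in H^{k+2}(\Sigma)$, and we iterate. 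The delicate point is keeping uniform control of the time derivatives of the coefficients, since $h_t$ and $\beta$ vary, and of the trace terms. Bootstrapping in $k$ gives $u\in\bigcap_k C^j([t_0-T,t_0+T];H^k(\Sigma))$ for all $j$, and Sobolev embedding then gives $u\in C^\infty$ up to $\partial\mcM$ on every slab. Since $T$ is arbitrary, we get a global smooth solution, and it is unique by the energy estimate.
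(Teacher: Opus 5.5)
Your route is the one the paper follows in Appendix~\ref{App: B}. It has the same ingredients: a weak formulation with a kinetic and a Robin potential form, G\aa rding coercivity from the $\epsilon$-trace inequality, Galerkin approximation on a spectral basis of $\Sigma$, energy bounds with Gr\"onwall and weak-$*$ limits, a hierarchy of energies for $\partial_t^k u$, slice-wise elliptic Robin regularity, Sobolev embedding, and uniqueness from the energy estimate. Your Robin eigenbasis at $t_0$, in place of the paper's Neumann one, is if anything better suited to bounding $\partial_t^k u_m(t_0)$.

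The step that fails is your claim that the compatibility conditions at $t_0$ hold automatically and are ``imposed only on the data'' even when $\operatorname{supp}f$ reaches $\partial\mcM$ at time $t_0$. For $k\geq 2$ the equation expresses $\partial_t^k u(t_0)$ through $u_0$, $u_1$ \emph{and} $\partial_t^j f(t_0)$ for $j\leq k-2$. Since $u_0,u_1$ vanish near $\partial\Sigma$, one has $\partial_t^2 u(t_0)=-\beta f(t_0)$ near the boundary. Applying $\partial_t^2$, which is tangent to $\partial\mcM$ while $n$ is tangent to the slices, to $\mathcal{B}_\kappa u=0$ at $t_0$ then gives the necessary condition $(\nabla_n+\kappa)(\beta f)(t_0,\cdot)|_{\partial\Sigma}=0$. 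Higher orders give further conditions on $\partial_t^j f(t_0)$.

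These conditions genuinely restrict $f$. Take $\bR\times[0,1]$ with the flat metric, $u_0=u_1=0$ at $t_0=0$, and $f=\chi(t)\phi(x)$ with $\chi(0)=1$ and $\phi(x)=x$ near $x=0$. A $C^3$ solution would need $(\partial_x+\kappa)\partial_t^2 u(0,0)=-(\phi'(0)+\kappa\phi(0))=-1$ to vanish, so no smooth solution exists. In your Galerkin hierarchy this appears as the loss of $m$-uniform bounds on $\partial_t^k u_m(t_0)$. Already $\partial_t^4 u_m(t_0)$ contains the Robin operator applied to the projection of $\beta f(t_0)$, and this stays bounded in $L^2$ only if $\beta f(t_0)$ lies in the operator's domain.

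The repair is to require that $f$ vanish near $\Sigma_{t_0}\cap\partial\mcM$, or that it satisfy the compatibility conditions to all orders. This is what the paper does in Step~3 of Appendix~\ref{App: B}, where it takes $f=0$ on $[t_0,t_0+\epsilon]\times\Sigma$. The restriction costs nothing for Theorem~\ref{Thm: Existence of Propagators}, where $\Sigma_{t_0}$ is chosen in the past of $\operatorname{supp}f$. Under that hypothesis all $\partial_t^k u(t_0)$ vanish near $\partial\Sigma$, and the rest of your argument goes through as written.
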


\noindent Since the proof of this proposition is rather lengthy and relies on standard energy and coercivity estimates, we deem it more appropriate to relegate it to Appendix \ref{App: B}.

\begin{theorem}\label{Thm: Existence of Propagators}
Let $(\mcM,g)$ be a globally hyperbolic spacetime with $\dim\mcM\geq 2$ as per Definition \ref{Def: Globally Hyperbolic} and let $P=\Box_g-m^2$ be the associated Klein-Gordon operator. Then this admits advanced and retarded propagators $G^\pm_\kappa\in\mathcal{D}^\prime(\mcM\times\mcM)$ with Robin boundary conditions, in the sense of Definition \ref{Def: SolFond on boundary}. 
\end{theorem}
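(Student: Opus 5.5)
The plan is to build the Green operators $\mathcal{G}^\pm_\kappa$ from the solution theory of Propositions \ref{Prop: Causal Support of Solutions} and \ref{Prop: Existence of Solutions}. Then obtain the bidistributions $G^\pm_\kappa$ by the Schwartz kernel theorem.

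\textbf{Compact Cauchy surfaces.} Given $f\in\mathcal{D}(\mathring{\mcM})$, choose $t_0<\inf t(\mathrm{supp} f)$. Let $u_f$ be the unique smooth solution of Equation \eqref{Eq: Cauchy problem on non-static spacetime} with source $f$ and vanishing data on $\Sigma_{t_0}$; this exists by Proposition \ref{Prop: Existence of Solutions}. Set $\mathcal{G}^-_\kappa f\doteq u_f$.
\begin{itemize}
\item[\ding{104}] \emph{Independence of $t_0$.} Proposition \ref{Prop: Causal Support of Solutions} with $\Omega=\mathrm{supp} f$ gives $\mathrm{supp}\,u_f\subseteq J^+(\mathrm{supp} f)\cup J^-(\mathrm{supp} f)$. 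Uniqueness applied in the past slab, where $f$ vanishes and the data are zero, kills the $J^-$ part. Hence $\mathrm{supp}\,u_f\subseteq J^+(\mathrm{supp} f)$, which is Equation \eqref{Eq: Support at the boundary}. The same uniqueness shows the definition does not depend on the choice of $t_0$.
\item[\ding{104}] \emph{Advanced operator.} $\mathcal{G}^+_\kappa$ is defined symmetrically with data at late times.
\end{itemize}

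\textbf{Continuity and kernels.} Linearity follows from uniqueness. For continuity $\mathcal{D}(\mathring{\mcM})\to C^\infty(\mcM)$, I would extract from the energy estimates in Appendix \ref{App: B} bounds of the form $\|u\|_{H^k(K)}\lesssim\|f\|_{H^{k'}}$ on compact sets, with higher regularity obtained by commuting $\partial_t$ and using elliptic boundary regularity. Alternatively, I would use a closed-graph argument on Fréchet spaces, which requires only uniqueness and the closedness of the solution map. The Schwartz kernel theorem then produces $G^\pm_\kappa\in\mathcal{D}'(\mcM\times\mcM)$. The relation $(P\otimes\mathbb{I})G^\pm_\kappa=\delta$ on the interior and the Robin condition in the first entry, Equation \eqref{Eq: Boundary conditions}, hold because $u_f$ satisfies them pointwise. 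The pull-back to $\partial\mcM$ makes sense as in Remark \ref{Rem: Well-defined pull-back}, or more simply because $\mathcal{G}^\pm_\kappa f$ is smooth up to the boundary.

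\textbf{The second-entry identity $(\mathbb{I}\otimes P)G^\pm_\kappa=\delta$.} Green's formula is the key tool. For $u,w$ smooth, both satisfying the Robin condition, with $\mathrm{supp}\,u\cap\mathrm{supp}\,w$ compact, the boundary term $\int_{\partial\mcM}(u\nabla_n w-w\nabla_n u)$ vanishes. It follows that $\langle \mathcal{G}^\pm_\kappa f,h\rangle=\langle f,\mathcal{G}^\mp_\kappa h\rangle$, since $J^+(\mathrm{supp} f)\cap J^-(\mathrm{supp} h)$ is compact. This duality gives $\mathcal{G}^\pm_\kappa Pf=f$ for $f\in\mathcal{D}(\mathring{\mcM})$. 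Alternatively, $\mathcal{G}^\pm_\kappa Pf-f$ solves the homogeneous problem with vanishing data in the far past or future and is therefore $0$ by uniqueness. Uniqueness of $G^\pm_\kappa$ is obtained the same way.

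\textbf{Non-compact Cauchy surfaces.} I would use exhaustion. Given $f$, take a compact region $K\supset J^+(\mathrm{supp} f)\cap t^{-1}([t_0,T])$. Modify the geometry outside a neighbourhood of $K$, for instance by cutting $\Sigma$ and capping or doubling to obtain a compact $\Sigma'$ with timelike boundary, so as to get a globally hyperbolic spacetime with compact Cauchy surface that coincides with $\mcM$ on a causally convex region containing $K$. Solve there and restrict. Finite propagation speed, from Proposition \ref{Prop: Causal Support of Solutions} and whose proof is local, shows that different choices agree on overlaps. This yields a consistent solution on all of $\mcM$, and the arguments above apply verbatim.

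\textbf{Main obstacle.} I expect the hardest step to be this gluing. One must produce compact-Cauchy-surface globally hyperbolic extensions that agree with $\mcM$ on large causally convex sets, with the original timelike boundary preserved and any new boundary not interfering. I would first try the cut-off procedure of \cite{Ak_Hau_2020} to construct such regions. A secondary difficulty is the continuity estimate needed for the kernel theorem.
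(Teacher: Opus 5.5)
Your construction follows the paper's route. $\mathcal{G}^\mp_\kappa f$ is the zero-data solution given by Proposition \ref{Prop: Existence of Solutions}, and the identities $P\mathcal{G}^\pm_\kappa=\mathrm{id}$ and $\mathcal{G}^\pm_\kappa P=\mathrm{id}$ (on compactly supported Robin functions) come from uniqueness. Continuity plus the Schwartz kernel theorem then give $G^\pm_\kappa$, and non-compact $\Sigma$ is handled by exhaustion.

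\textbf{Gap: the one-sided support step.} It does not work as written. You start from $\mathrm{supp}\,u_f\subseteq J^+(\mathrm{supp} f)\cup J^-(\mathrm{supp} f)$ and apply uniqueness in the slab below $\inf_{\mathrm{supp} f}t$, where source and data vanish. That only removes $J^-(\mathrm{supp} f)\cap\{t\le\inf_{\mathrm{supp} f}t\}$. Points $q\in J^-(\mathrm{supp} f)\setminus J^+(\mathrm{supp} f)$ with $t(q)>\inf_{\mathrm{supp} f}t$ exist for generic $\mathrm{supp} f$, and your argument never reaches them. What you need is the localized energy estimate. For $q\notin J^+(\mathrm{supp} f)$ one has $J^-(q)\cap\mathrm{supp} f=\emptyset$. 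Hence on $D_q=J^-(q)\cap\{t_0\le t\le t(q)\}$ both source and Cauchy data vanish, and \eqref{Eq: Energy estimate 2} forces $u_f=0$ near $q$. This is what the ``finite propagation speed'' paragraph of Appendix \ref{App: A} actually proves, which is more than the two-sided statement of Proposition \ref{Prop: Causal Support of Solutions}. The paper uses it tacitly when it asserts $\mathrm{supp}\,u\subseteq J^+(\mathrm{supp} f)$.

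\textbf{Continuity: your closed-graph route is valid and lighter.} Suppose $f_n\to f$ in $\mathcal{D}_K$ and $\mathcal{G}^-_\kappa f_n\to u$ in $C^\infty(\mcM)$. Then $Pu=f$, $u$ obeys the Robin condition, and $u$ vanishes for $t<\min_K t$, so $u=\mathcal{G}^-_\kappa f$ by uniqueness. The paper instead bounds the seminorms $p_{L,m}(\mathcal{G}^-_\kappa f)$ explicitly. It re-runs the Galerkin energies, their time derivatives, elliptic boundary regularity and Sobolev embedding with explicit dependence on $f$. That gives quantitative control, but the closed-graph theorem only needs existence, uniqueness and support.

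\textbf{Non-compact case: your ``main obstacle'' does not arise in the paper's version} (Appendix \ref{App: B}, Step 5). No globally hyperbolic spacetime with compact Cauchy surface is built, and no capping or doubling is used.
\begin{itemize}
\item One truncates the slab to $I\times\mathcal{C}$ with $\mathcal{C}\subset\Sigma$ compact.
\item The Robin condition is imposed also on the artificial boundary $I\times\partial\mathcal{C}$. This boundary is timelike, since its induced metric is $-\beta dt^2+h_t|_{\partial\mathcal{C}}$.
\item One solves by the compact-slice Galerkin scheme. That scheme only uses the product form $-\beta dt^2+h_t$ and a trace inequality on $\mathcal{C}$.
\item The energy estimate on $D_p=J^-(p)\cap(I\times\Sigma)$ then shows that the result does not depend on $\mathcal{C}$ and agrees for different $p$.
\end{itemize}
The one geometric detail left is rounding off the corner where the cut meets $\partial\Sigma$, away from $D_p$; the paper also passes over this.
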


\begin{proof}
Consider Equation \eqref{Eq: Cauchy problem on non-static spacetime} and fix a source $f\in C^\infty_0(\mcM)$. Let $\Sigma_{t_0}$ be a Cauchy surface in the past of $\operatorname{supp} f$ assigning thereon vanishing initial data. Define 
\begin{equation}\label{Eq. Gkappa-def}
\mathcal{G}_\kappa^-: C^\infty_0(\mcM) \to C^\infty(\mcM), \qquad f \mapsto u \doteq \mathcal{G}^-_\kappa f,
\end{equation}
where, on account of Propositions \ref{Prop: Causal Support of Solutions} and \ref{Prop: Existence of Solutions}, $u\in C^\infty(\mcM)$ and $\operatorname{supp} u \subseteq J^+(\operatorname{supp} f)$. Observe that, per construction 
\begin{equation*}
	\begin{cases}
		P \mathcal{G}^-_\kappa f = f, \\
		(\partial_n + \kappa \mathbb{I}) \mathcal{G}^-_\kappa f \vert_{\partial \mcM} = 0, 
	\end{cases}
\end{equation*}
which entails $P\circ\mathcal{G}^-_\kappa = \operatorname{id} \vert_{C^\infty_0(\mcM)}$. To establish uniqueness of $\mathcal{G}^-_\kappa$, assume that there exists $\widetilde{\mathcal{G}}^-_\kappa$ abiding by all the above properties. Then $\Delta \mathcal{G}^-_\kappa \doteq \mathcal G^-_\kappa - \widetilde{\mathcal{G}}^-_\kappa$ satisfies for all $f\in C^\infty_0(\mcM)$
\begin{equation*}
	\begin{cases}
		P \Delta \mathcal{G}^-_\kappa f  = 0, \\
		(\partial_n + \kappa \mathbb{I}) \Delta \mathcal{G}^-_\kappa f  \vert_{\partial \mcM} = 0, \\
		\Delta \mathcal{G}^-_\kappa f \vert_{t=t_0} = \partial_t (\Delta \mathcal{G}^-_\kappa f) \vert_{t=t_0} = 0.
	\end{cases}
\end{equation*}
Uniqueness established in Proposition \ref{Prop: Existence of Solutions}  entails that $\Delta \mathcal{G}^-_\kappa f=0$ for all $f\in C^\infty_0(\mcM)$, that is $\Delta\mathcal{G}^-_\kappa=0$. Consider 
\begin{equation}
\label{Eq: C infty kappa}
    C^\infty_\kappa(\mcM)\doteq\{u\in C^\infty(\mcM)\;|\;(\nabla_n+\kappa)u|_{\partial\mcM}=0\},
\end{equation}
where $n$ is the inward pointing, unit vector, normal to $\partial\mcM$. Set $w \doteq \mathcal{G}^-_\kappa P h - h$ for $h \in C^\infty_\kappa(\mcM)\cap C^\infty_0(\mcM)$. Then, $P w = 0$ and $(\nabla_n+\kappa)w \vert_{\partial \mcM} = 0$. Since $\operatorname{supp} w \subseteq J^+(\operatorname{supp} h)$, there exists $t_0 \in \mathbb{R}$ such that $w_0 \doteq w \vert_{t=t_0} = 0$ and $w_1 \doteq \partial_t w \vert_{t=t_0} = 0$. Uniqueness of the solution as per Proposition \ref{Prop: Existence of Solutions} entails that $w = 0$. As a consequence, $\mathcal{G}^-_\kappa\circ P = \mathrm{id}$ on $C^\infty_\kappa(\mcM)$.

In order to conclude the proof we are left with establishing continuity of $\mathcal{G}^-_\kappa$. Given an arbitrary but fixed compact set $K\subset\mcM$, define 
\begin{equation*}
	\mathcal{D}_K (\mcM) \doteq \{f \in C^\infty_0(\mcM) \, \vert \, \operatorname{supp} f \subset K \}.
\end{equation*}
To prove that $\mathcal{G}^-_\kappa: C^\infty_0(\mcM) \to C^\infty(\mcM)$ is continuous in the relevant topologies, we need to control the semi-norms on $C^\infty(\mcM)$, namely,
\begin{equation*}
	p_{L, m}(u) \doteq \max_{|\alpha| \le m} \sup_{x \in L} |\nabla^\alpha u(x)|, 
\end{equation*}
where $L$ is an arbitrary compact set and $m \in \mathbb{N}_0$. Let us cover 
\begin{equation*}
	L \cap J^+(K) = \bigcup_{i=1}^{N < \infty} D_i \subset \bigcup_{i=1}^{N < \infty} \mathcal{C}_{I,i}, 
\end{equation*}
where each $D_i=J^-(p_i)\cap(I\times\Sigma)$, $I=[t^\prime_0,t_1]$, while $\mathcal{C}_{I, i} \equiv \mathcal{C}_{I, i}(K, L) \simeq I \times \Sigma_i$ is a cylinder. Observe that these are defined to coincide with the sets used in the proofs of Propositions \ref{Prop: Causal Support of Solutions} and \ref{Prop: Existence of Solutions}. Since we are considering vanishing initial data $u_0 = u_1 = 0$, we can apply the coercive estimates in Equation \eqref{Eq: coercive estimate 1} combined with $\mathsf K(u,u)\leq C_0\|u\|_{L^2_t}$ to infer a counterpart of Equation \eqref{Eq: estimate of the finite energy from below}, namely
$$\mathrm{E}[u](t) \doteq \frac{1}{2} \mathrm{K}(\partial_t u, \partial_t u) + \frac{1}{2} \mathrm{B}(u, u)\leq \mathrm{C} (\|\partial_t u\|^2_{L^2_t} + \|u\|^2_{H^1_t}).$$
where $\mathrm{C}>0$. Since $\mathrm{E}[u](t_0) = 0$, this estimate combined with the bound on the energy due to Gr\"onwall lemma, see Appendix \ref{App: B}, yields 
\begin{equation}\label{Eq: star step 7}
\sup_{t \in I} \left(\|u(t)\|^2_{H^1_t} + \|\partial_t u(t)\|^2_{L^2_t} \right) \le C_0 \int_I \|f(s)\|^2_{L^2_s} \, ds < \infty, 
\end{equation}
where $C_0 = C_0(K, L, I)$ is a constant which does not depend on the source term $f$. The same kind of coercive estimates applied to the time derivatives of $\mathrm{E}[u](t)$, as in Appendix \ref{App: B}, entail that, being $\mathrm{E}^{(j)}(t_0) = 0$ for all $j \ge 1$,
\begin{equation}\label{Eq: delta step 7}
\sup_{t \in I} \left(\|\partial_t^j u(t)\|^2_{H^1_t} + \|\partial_t^{j+1} u(t)\|^2_{L^2_t} \right) \le C_j \sum_{k=0}^j \int_I \|\partial_s^k f(s)\|^2_{L^2_s} \, ds < \infty, 
\end{equation}
where $C_j\equiv C_j(K,L,I) \ne C_j(f)$. Combining now Equations \eqref{Eq: star step 7} and \eqref{Eq: delta step 7}, we get for all $j\geq 0$
\begin{equation*}
	\sup_{t \in I} \left(\|\partial_t^j u(t)\|^2_{H^1_t} + \|\partial_t^{j+1} u(t)\|^2_{L^2_t} \right) \le C'_j \sum_{k=0}^j \sup_{t \in I} \|\partial_s^k f(t)\|^2_{L^2(\Sigma_i)} < \infty,
\end{equation*}
where $C^\prime_j>0$ encompasses all proportionality constants. By elliptic regularity, as exploited in Step 3. of Appendix \ref{App: B}, we can switch from $H^1$ to $H^r$ for arbitrary $r > 0$ obtaining
\begin{equation*}
	\sup_{t \in I} \|\partial_t^j u(t)\|^2_{H^r_t} \le C(j, r, i) \sum_{|\alpha| \le j < \infty} \sup_{t \in I} \|\nabla^{\alpha} f(t)\|^2_{L^2(\Sigma_i)} < \infty.
\end{equation*}
Choosing $\mathcal{C}_{I, i}$ such that $\operatorname{supp} f \subset K \subseteq \mathcal{C}_{I,i}$ and denoting by $K_i \doteq K \cap \Sigma_i$, it holds that $\operatorname{supp} f(t) \subset K_i$, which entails in turn that there exists $C_K>0$ for which
\begin{equation*}
\|\nabla^{\alpha} f(t)\|^2_{L^2(\Sigma_i)} = \|\nabla^{\alpha} f(t)\|^2_{L^2(K_i)} \le C_K \sup_{K} |\nabla^\alpha f|. 
\end{equation*}
Hence, 
\begin{equation*}
	\sup_{t \in I} \|\partial_t^j u(t)\|^2_{H^r_t} \le C'(j, r, i, K) \sum_{|\alpha| \le q(j,r)} \sup_{t \in I} \sup_{K_i} |\nabla^{\alpha} f(t)|^2, \; \forall r > 0\;\textrm{and}\; j\geq 0,
\end{equation*}
where $q(j,r)\in\mathbb{N}\cup\{0\}$. If $r > \frac{1}{2} \operatorname{dim} \Sigma + m$, then $H^r(\Sigma_i) \hookrightarrow C^m(\Sigma_i)$ by Sobolev embedding theorem, which implies 
\begin{equation*}
	\sup_{t \in I} \|\partial_t^j u(t)\|^2_{C^m(\Sigma_i)} \le \tilde{C}(j, r, i, K) \sum_{|\alpha| \le q(j,r)} \sup_K |\nabla^{\alpha} f(t)|^2, \; \forall j \in \mathbb{N}_0, \, \forall m > 0.
\end{equation*}
We can conclude that
\begin{equation*}
	p_{L, m}(u) \le \tilde{C}(K, L, m) \sum_{|\alpha| \le q(m)} \sup_K |\nabla^{\alpha} f(t)|^2, \qquad \forall m > 0.
\end{equation*}
Recalling that $u \doteq \mathcal{G}^-_\kappa f$, we conclude that, for every compact set $K$, $\mathcal{G}^-_\kappa: C^\infty_K(\mcM) \to C^\infty(\mcM)$ is continuous. Since $C^\infty_0(\mcM)$ can be realized as the inductive limit over $K$ of $\mathcal{D}_K(\mcM)$, we conclude that also $\mathcal{G}^-_\kappa: C^\infty_0(\mcM) \to C^\infty(\mcM)$ is continuous. Eventually, by Schwartz kernel theorem, 
\begin{equation*}
	\exists G^-_\kappa \in \mathcal{D}'(\mcM \times \mcM),
\end{equation*}
which concludes the proof.
\end{proof}

\noindent We establish a notable property relating $\mathcal{G}^+_\kappa$ to $\mathcal{G}^-_\kappa$ which will be useful at a later stage.
\begin{corollary}\label{Cor: G+- Formal Adjoints}
	Let $(\mcM,g)$ be a globally hyperbolic spacetime with timelike boundary and let $\mathcal{G}^\pm_\kappa$ be the advanced and retarded Green operators associated to the Klein-Gordon operator with Robin boundary conditions. Then $\mathcal{G}^-_\kappa$ is the formal adjoint of $\mathcal{G}^+_\kappa$ and viceversa, namely, for every $f,h\in C^\infty_0(\mcM)$,
	$$(\mathcal{G}^+_\kappa f,h)=(f,\mathcal{G}^-_\kappa h),$$
	where $(\cdot,\cdot)$ denotes the standard $L^2$-pairing.
\end{corollary}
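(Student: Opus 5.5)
The plan is the standard Green identity argument, adapted to the boundary. Fix $f,h\in C^\infty_0(\mcM)$ and set $u\doteq\mathcal{G}^+_\kappa f$ and $w\doteq\mathcal{G}^-_\kappa h$. By Theorem~\ref{Thm: Existence of Propagators}, both $u$ and $w$ are smooth, satisfy $Pu=f$ and $Pw=h$, and lie in $C^\infty_\kappa(\mcM)$ as per Equation~\eqref{Eq: C infty kappa}. By Proposition~\ref{Prop: Causal Support of Solutions}, $\operatorname{supp}u\subseteq J^-(\operatorname{supp}f)$ and $\operatorname{supp}w\subseteq J^+(\operatorname{supp}h)$. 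Global hyperbolicity, in the sense of Definition~\ref{Def: Globally Hyperbolic}, then makes $\operatorname{supp}u\cap\operatorname{supp}w\subseteq J^-(\operatorname{supp}f)\cap J^+(\operatorname{supp}h)$ compact: it is covered by finitely many sets $J^+(p)\cap J^-(q)$. We may then write
$$(\mathcal{G}^+_\kappa f,h)-(f,\mathcal{G}^-_\kappa h)=(u,Pw)-(Pu,w)=\int_\mcM\bigl(u\,\Box_g w-w\,\Box_g u\bigr)\,d\mu_g,$$
where the mass terms cancel because $m^2$ is a real multiplication operator. The integrand has compact support, so all integrals are well defined.

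Next I apply the divergence theorem to the vector field $X^\mu\doteq u\nabla^\mu w-w\nabla^\mu u$, whose divergence is the integrand above. Since $X$ has compact support in $\mcM$, the only contribution comes from $\partial\mcM$. Because the boundary is timelike, the induced metric is Lorentzian and the unit normal $n$ is spacelike, so the standard Stokes formula holds with no degeneracy issue:
$$\int_\mcM\nabla_\mu X^\mu\,d\mu_g=-\int_{\partial\mcM}g(X,n)\,d\mu_{\iota^*g}=-\int_{\partial\mcM}\bigl(u\nabla_n w-w\nabla_n u\bigr)\,d\mu_{\iota^*g}.$$
The minus sign reflects that $n$ is inward pointing, and its precise value is irrelevant here. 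The Robin condition now gives $\nabla_n w=-\kappa w$ and $\nabla_n u=-\kappa u$ on $\partial\mcM$. The boundary integrand therefore becomes $-\kappa uw+\kappa wu=0$, and the claim follows; the symmetric statement "and viceversa" is the same identity read backwards.

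The only step requiring some care is the justification of the support and compactness claim together with the use of Stokes' theorem on a manifold with boundary, where the support of $X$ may touch $\partial\mcM$. Both points are routine once compactness of $J^-(K_1)\cap J^+(K_2)$ is established for compact $K_1,K_2$. This I would derive from condition (ii) in Definition~\ref{Def: Globally Hyperbolic}, via a finite covering argument with a time function as in Proposition~\ref{Prop: Globally Hyperbolic}. If $\kappa$ were allowed to be a function, the cancellation would still hold pointwise, so nothing changes.
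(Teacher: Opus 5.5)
Your proposal is correct and is essentially the paper's argument: write $h=P\mathcal{G}^-_\kappa h$, integrate by parts, and note that the boundary term vanishes because both $\mathcal{G}^+_\kappa f$ and $\mathcal{G}^-_\kappa h$ satisfy the Robin condition. You also spell out the compactness of $\operatorname{supp}u\cap\operatorname{supp}w$ and the Stokes sign convention, and you cite the general existence theorem for smoothness. The paper leaves these points implicit and cites the static-case proposition instead.
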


\begin{proof}
	Since $\mathcal{G}^+_\kappa f,\mathcal{G}^-_\kappa h \in C^\infty(\mcM)$ on account of Proposition \ref{Prop: Existence Robin propagators}, then both pairing are well-defined and
	$$(\mathcal{G}^+_\kappa f,h)=(\mathcal{G}^+_\kappa f,P\mathcal{G}^-_\kappa h)=(f,\mathcal{G}^-_\kappa h),$$
	where, in the first identity, we used that $P\circ\mathcal{G}^\pm_\kappa = \operatorname{id} \vert_{C^\infty_0(\mcM)}$, while, in the second one, we have integrated by parts exploiting that $\mathcal{G}^\pm_\kappa$ automatically implements Robin boundary conditions.
\end{proof}

To conclude the section we extend to the case in hand the analysis of \cite{Costeri_25} aimed at giving a full characterization of the space of solutions for the Klein-Gordon equation \eqref{Eq: KG equation} in terms of an exact sequence. As a first step, let us recall that $$C^\infty_\kappa(\mcM) \doteq \{f\in C^\infty(\mcM)\;|\;(\nabla_n+\kappa)f|_{\partial\mcM}=0\}.$$
Generalizing \cite{Bar15}, we can infer that one can extend continuously $\mathcal{G}^\pm_\kappa$ as per Definition \ref{Def: SolFond on boundary} to operators 
\begin{equation}\label{Eq: Extend Domain}
	\mathcal{G}^-_\kappa: C^\infty_{\mathrm{pc}}(\mcM)\to C^\infty_\kappa(\mcM)\;\mathrm{and}\; \mathcal{G}^+_\kappa:C^\infty_{\mathrm{fc}}(\mcM)\to C^\infty_\kappa(\mcM),
\end{equation}
where the subscripts $\mathrm{fc}/\mathrm{pc}$ stand respectively for future and past compact, namely
$$C^\infty_{\mathrm{pc}}(\mcM)\doteq\{f\in C^\infty(\mcM)\;|\;\exists t_f\in\bR\;\mathrm{for}\;\mathrm{which}\;f=0\;\forall t<t_f\},$$
while
$$C^\infty_{\mathrm{fc}}(\mcM)\doteq\{f\in C^\infty(\mcM)\;|\;\exists t_f\in\bR\;\mathrm{for}\;\mathrm{which}\;f=0\;\forall t>t_f\}.$$
In addition it holds that
\begin{equation}\label{Eq: Extended Domains 2}
	P\circ\mathcal{G}^\pm_\kappa=\mathrm{id}|_{C^\infty_{\mathrm{fc}/\mathrm{pc}}(\mcM)}\;\mathrm{and}\;\mathcal{G}^\pm_\kappa\circ P|_{C^\infty_{\mathrm{fc}/\mathrm{pc},\kappa}(\mcM)}=\mathrm{id}|_{C^\infty_{\mathrm{fc}/\mathrm{pc},\kappa}(\mcM)},
\end{equation}
where $C^\infty_{\mathrm{pc}/\mathrm{fc},\kappa}(\mcM)=C^\infty_{\mathrm{pc}/\mathrm{fc}}(\mcM)\cap C^\infty_\kappa(\mcM)$.

\begin{proposition}\label{Prop: Exact Sequence}
	Let $P$ be the Klein-Gordon operator as per Equation \eqref{Eq: KG equation} with Robin boundary conditions and denote by $\mathcal{G}_\kappa=\mathcal{G}^-_\kappa-\mathcal{G}^+_\kappa$ the Pauli-Jordan Green operator constructed out $\mathcal{G}^\pm_\kappa$, whose associated bidistribution is as per Proposition \ref{Prop: Existence Robin propagators}. Then, introducing the space of smooth and timelike compact functions, $C^\infty_{\mathrm{tc}}(\mcM)\doteq C^\infty_{\mathrm{pc}}(\mcM)\cap C^\infty_{\mathrm{fc}}(\mcM)$, the following is an exact sequence
	\begin{flalign*}
		\xymatrix{
			0\ar[r] & C^\infty_{\mathrm{tc}, \kappa}(\mcM) \ar[r]^{P} & C^\infty_{\mathrm{tc}}(\mcM)\ar[r]^{\mathcal{G}_\kappa} & C^\infty_{\kappa}(\mcM) \ar[r]^{P}& C^\infty(\mcM)\ar[r] & 0,}
	\end{flalign*}
	where $C^\infty_{\mathrm{tc},\kappa}(\mcM)\doteq C^\infty_{\mathrm{tc}}(\mcM)\cap C^\infty_\kappa(\mcM)$. As a consequence there exists an isomorphism of topological vector spaces between $\frac{C^\infty_{\mathrm{tc},\kappa}(\mcM)}{P[C^\infty_{\mathrm{tc},\kappa}(\mcM)]}$ and $\ker(P)|_{C^\infty_{\kappa}(\mcM)}$, the space of smooth solutions of the Klein-Gordon equation abiding by Robin boundary conditions.
\end{proposition}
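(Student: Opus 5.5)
We verify exactness at each of the four nodes in turn, following the standard argument for globally hyperbolic spacetimes without boundary \cite{Baer_2007} and relying on the extended Green operators in Equation \eqref{Eq: Extend Domain}, the identities in Equation \eqref{Eq: Extended Domains 2}, and the uniqueness part of Proposition \ref{Prop: Existence of Solutions}, extended to non-compact Cauchy surfaces as in Theorem \ref{Thm: Existence of Propagators}.

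\emph{Injectivity of $P$ on $C^\infty_{\mathrm{tc},\kappa}(\mcM)$.} If $h\in C^\infty_{\mathrm{tc},\kappa}(\mcM)$ and $Ph=0$, then $h$ solves the Robin problem with vanishing Cauchy data on any $\Sigma_{t_0}$ lying in the past of its support. Equivalently, $h=\mathcal{G}^-_\kappa Ph=0$ by Equation \eqref{Eq: Extended Domains 2}.

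\emph{Exactness at $C^\infty_{\mathrm{tc}}(\mcM)$.} Since $C^\infty_{\mathrm{tc},\kappa}\subset C^\infty_{\mathrm{pc},\kappa}\cap C^\infty_{\mathrm{fc},\kappa}$, Equation \eqref{Eq: Extended Domains 2} gives $\mathcal{G}_\kappa P h=h-h=0$. Conversely, suppose $f\in C^\infty_{\mathrm{tc}}(\mcM)$ satisfies $\mathcal{G}^-_\kappa f=\mathcal{G}^+_\kappa f\doteq h$. By the support properties in Equation \eqref{Eq: Support at the boundary}, extended to $\mathrm{pc}/\mathrm{fc}$ sources, $h$ is both past and future compact. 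It also lies in $C^\infty_\kappa(\mcM)$ by Equation \eqref{Eq: Extend Domain}, and $Ph=f$. Thus $f=Ph\in P[C^\infty_{\mathrm{tc},\kappa}]$.

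\emph{Exactness at $C^\infty_\kappa(\mcM)$.} We have $P\mathcal{G}_\kappa f=f-f=0$. For the converse, let $u\in C^\infty_\kappa(\mcM)$ with $Pu=0$. Choose a smooth cutoff $\chi=\chi(t)$ equal to $1$ for $t>t_1$ and to $0$ for $t<t_0$. The key observation is that $\chi$ depends only on $t$, and $\nabla t$ is tangent to $\partial\mcM$ by Proposition \ref{Prop: Globally Hyperbolic}. Hence $\nabla_n\chi=0$, so $\chi u$ still satisfies the Robin condition: $\chi u\in C^\infty_{\mathrm{pc},\kappa}(\mcM)$ and $(1-\chi)u\in C^\infty_{\mathrm{fc},\kappa}(\mcM)$. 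Set $f\doteq P(\chi u)=-P((1-\chi)u)$. This function is supported in $t_0\le t\le t_1$, so $f\in C^\infty_{\mathrm{tc}}(\mcM)$. Equation \eqref{Eq: Extended Domains 2} then yields
\[
\mathcal{G}^-_\kappa f=\chi u, \qquad \mathcal{G}^+_\kappa f=-(1-\chi)u,
\]
hence $\mathcal{G}_\kappa f=u$.

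\emph{Surjectivity of $P:C^\infty_\kappa\to C^\infty$.} Given $f\in C^\infty(\mcM)$, split $f=\chi f+(1-\chi)f$ and set
\[
u\doteq \mathcal{G}^-_\kappa(\chi f)+\mathcal{G}^+_\kappa((1-\chi)f)\in C^\infty_\kappa(\mcM),
\]
which satisfies $Pu=f$.

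\emph{The isomorphism.} The final claim follows from exactness: $\mathcal{G}_\kappa$ induces a linear bijection $C^\infty_{\mathrm{tc}}(\mcM)/P[C^\infty_{\mathrm{tc},\kappa}(\mcM)]\to\ker(P)|_{C^\infty_\kappa(\mcM)}$. Continuity comes from continuity of $\mathcal{G}^\pm_\kappa$, established in Theorem \ref{Thm: Existence of Propagators} and its extension. The continuous right inverse $u\mapsto[P(\chi u)]$ then gives a topological isomorphism. Note that the statement writes $C^\infty_{\mathrm{tc},\kappa}$ in the numerator; I would prove it with $C^\infty_{\mathrm{tc}}$, which is what the exact sequence yields, and remark on this.

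The main obstacle is the rigorous justification of the extensions in Equation \eqref{Eq: Extend Domain}: that $\mathcal{G}^\mp_\kappa$ act on $\mathrm{pc}/\mathrm{fc}$ functions, preserve the Robin condition and the support properties, and satisfy the left-inverse identity there. My plan is to define $\mathcal{G}^-_\kappa f$ on each slab $t<T$ by applying the compactly supported construction to $\phi f$, with $\phi\in C^\infty_0$ equal to $1$ on $J^-(\{t\le T\})\cap\mathrm{supp} f$, which is compact by global hyperbolicity. Consistency between different choices of $\phi$ follows from the causal support result, Proposition \ref{Prop: Causal Support of Solutions}, together with uniqueness.
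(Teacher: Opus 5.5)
Your proof follows the paper's argument node by node. Both use $\mathcal{G}^\pm_\kappa\circ P=\mathrm{id}$ on $C^\infty_{\mathrm{tc},\kappa}(\mcM)$ for injectivity, and the observation that $h=\mathcal{G}^+_\kappa f=\mathcal{G}^-_\kappa f$ is timelike compact for exactness at $C^\infty_{\mathrm{tc}}(\mcM)$. Both use the cutoff splitting $u=\chi u+(1-\chi)u$ for exactness at $C^\infty_\kappa(\mcM)$, and the same splitting of the source for surjectivity, all resting on Equations \eqref{Eq: Extend Domain} and \eqref{Eq: Extended Domains 2}, which the paper also takes for granted.

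Your additions are correct:
\begin{itemize}
\item The check $\nabla_n\chi=\chi'(t)\,n(t)=0$ supplies the reason, left implicit in the paper, why $\chi u\in C^\infty_{\mathrm{pc},\kappa}(\mcM)$.
\item You are right that the numerator of the quotient should be $C^\infty_{\mathrm{tc}}(\mcM)$; as written, $P[C^\infty_{\mathrm{tc},\kappa}(\mcM)]$ is not even a subspace of $C^\infty_{\mathrm{tc},\kappa}(\mcM)$.
\item The explicit inverse $u\mapsto[P(\chi u)]$ makes the topological part of the statement explicit. The paper's appeal to continuity of the arrows by itself only gives a continuous bijection.
\end{itemize}

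There is one slip, in your sketch of how to justify the extension \eqref{Eq: Extend Domain}. For past compact $f$, the set $J^-(\{t\le T\})\cap\operatorname{supp}f=\{t\le T\}\cap\operatorname{supp}f$ is not compact when $\Sigma$ is non-compact; take, for instance, $f=\chi(t)$. So no $\phi\in C^\infty_0(\mcM)$ can equal $1$ on it. The slab-wise definition therefore fails exactly in the situation where the extension is needed.

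The fix is to localize on compact sets, as in \cite{Bar15}. For compact $K\subset\mcM$, the set $J^-(K)\cap\operatorname{supp}f\subset J^-(K)\cap J^+(\Sigma_{t_f})$ is compact by global hyperbolicity. Pick $\phi_K\in C^\infty_0(\mcM)$ equal to $1$ on a neighbourhood of it, and set $(\mathcal{G}^-_\kappa f)|_K\doteq(\mathcal{G}^-_\kappa(\phi_K f))|_K$. For two such choices, $(\phi_K-\phi_K')f$ has support disjoint from $J^-(K)$. Hence the support property \eqref{Eq: Support at the boundary} makes $\mathcal{G}^-_\kappa((\phi_K-\phi_K')f)$ vanish on $K$, which gives the consistency you intended.
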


\begin{proof}
	The first arrow in the sequence entails that $P$ is injective when acting on $C^\infty_{\mathrm{tc}}(\mcM)$. As a matter of fact, given $f\in C^\infty_{\mathrm{tc}, \kappa}(\mcM)$ such that $Pf=0$, then $\mathcal{G}^+_{\kappa} (Pf)=f=0$. 
	
	The second arrow of the sequence asserts that $\mathrm{Im}(P)=\ker(\mathcal{G}_\kappa)$. The inclusion $\subseteq$ is a direct consequence of $\mathcal{G}^\pm_\kappa \circ P = \mathrm{id}$ on $C^\infty_{\mathrm{tc}, \kappa}(\mcM)$. To prove the converse, suppose that $f\in\ker\mathcal{G}_\kappa$. In view of the support properties of $\mathcal{G}^\pm_\kappa$ as per Definition \ref{Def: SolFond on boundary}, this entails that $h=\mathcal{G}^+_\kappa f=\mathcal{G}^-_\kappa f\in C^\infty_{\mathrm{tc}, \kappa}(\mcM).$ In other words $f=Ph$ which is the sought conclusion. 
	
	The third arrow codifies instead that $\mathrm{Im}(\mathcal{G}_\kappa)=\ker(P)$ on $C^\infty_\kappa(\mcM)$. The inclusion $\subseteq$ is a direct consequence of the defining properties of $\mathcal{G}^\pm_\kappa$. Hence, assume $\Phi\in\ker (P)$ and consider $\chi\in C^\infty(\bR)$ such that $\chi=\chi(t)$ is equal to $1$ for $t>t_1$ while it vanishes for $t<t_0$. Setting $\Phi^+=\chi\Phi$ and $\Phi^-=(1-\chi)\Phi$, we observe that $\Phi^\pm\in C^\infty_{\mathrm{pc}/\mathrm{fc},\kappa}(\mcM)$. Furthermore $P\Phi=0$ entails $P\Phi^+=-P\Phi^-\in C^\infty_{\mathrm{tc}}(\mcM)$. The support properties of $\mathcal{G}^\pm_\kappa$ guarantee both that $\mathcal{G}^-_\kappa(P\Phi^+)=\Phi^+$ and that $\mathcal{G}^+_\kappa(P\Phi^+)=-\Phi^-$. Hence $\Phi=\mathcal{G}_\kappa(P\Phi^+)$, namely $\ker(P)\subseteq\mathrm{Im}(\mathcal{G}_\kappa)$ on $C^\infty_\kappa(\mcM)$. 
	
	The last arrow entails instead that $P$ is surjective on $C^\infty(\mcM)$. Similarly to the preceding discussion, given $\Phi\in C^\infty_\kappa(\mcM)$, we split it as $\Phi=\Phi^++\Phi^-$, where $\Phi^\pm\in C^\infty_{\mathrm{pc}/\mathrm{fc}}(\mcM)$. Setting $h=\mathcal{G}^-_{\kappa} \Phi^++\mathcal{G}^+_{\kappa}\Phi^-\in C^\infty_\kappa(\mcM)$ we can infer by direct inspection that $Ph=\Phi^++\Phi^-=\Phi$. This concludes the proof since the last statement is a direct consequence of the properties of an exact sequence of vector spaces, combined with the property that all arrows are implemented by continuous maps.
\end{proof}

\noindent As a consequence of this proposition we can establish uniqueness of the advanced and retarded propagators. The proof of the following lemma is identical to that of \cite[Prop. 4.22]{Costeri_25} replacing $\bH^d$ with $\mcM$. For this reason we omit it.

\begin{lemma}
    The bi-distributions $G^\pm_\kappa\in\mathcal{D}^\prime(\mcM\times\mcM)$ individuated in Theorem \ref{Thm: Existence of Propagators} are the unique advanced and retarded propagators for the Klein-Gordon operator $P$ with Robin boundary conditions on a globally hyperbolic spacetime with timelike boundary of dimension $d\geq 2$.
\end{lemma}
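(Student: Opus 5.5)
Suppose $\widetilde G^\pm_\kappa\in\mathcal D'(\mcM\times\mcM)$ is another pair of advanced and retarded propagators in the sense of Definition \ref{Def: SolFond on boundary}. We denote by $\widetilde{\mathcal G}^\pm_\kappa$ the associated Green operators, obtained via the Schwartz kernel theorem. The aim is to show $\widetilde{\mathcal G}^\pm_\kappa=\mathcal G^\pm_\kappa$ on $C^\infty_0(\mcM)$, and the strategy is the classical one of \cite{Baer_2007}. We compose one propagator with the other and use the support properties to make the composition well defined.

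First, I would check that $\widetilde{\mathcal G}^\pm_\kappa f$ is smooth and satisfies the Robin condition for $f\in C^\infty_0(\mcM)$. The Robin condition is Equation \eqref{Eq: Boundary conditions}. For smoothness, I would use that $\widetilde{\mathcal G}^\pm_\kappa f$ solves $P u=f$ with Robin data and past/future compact support. On account of Proposition \ref{Prop: Exact Sequence} and the extension in Equation \eqref{Eq: Extend Domain}, this reduces to uniqueness for the forward problem. Next I would write, for $f,h\in C^\infty_0(\mcM)$,
$$(\widetilde{\mathcal G}^-_\kappa f,h)=(\widetilde{\mathcal G}^-_\kappa f,P\mathcal G^+_\kappa h)=(P\widetilde{\mathcal G}^-_\kappa f,\mathcal G^+_\kappa h)=(f,\mathcal G^+_\kappa h)=(\mathcal G^-_\kappa f,h).$$
The first equality uses $P\circ\mathcal G^+_\kappa=\mathrm{id}$. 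The second is Green's formula: the boundary terms cancel because both factors obey $(\nabla_n+\kappa)u|_{\partial\mcM}=0$. The last equality is Corollary \ref{Cor: G+- Formal Adjoints}. The integration by parts is legitimate because $\operatorname{supp}\widetilde{\mathcal G}^-_\kappa f\cap\operatorname{supp}\mathcal G^+_\kappa h\subseteq J^+(\operatorname{supp} f)\cap J^-(\operatorname{supp} h)$, and this set is compact by global hyperbolicity (Definition \ref{Def: Globally Hyperbolic}). Since $h$ is arbitrary, $\widetilde{\mathcal G}^-_\kappa=\mathcal G^-_\kappa$, and hence $\widetilde G^-_\kappa=G^-_\kappa$. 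The advanced case is symmetric.

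The main obstacle is the regularity step. Definition \ref{Def: SolFond on boundary} only gives distributional information, and the PDE holds only on $\mathring{\mcM}\times\mathring{\mcM}$, so Green's formula and the boundary trace need justification. I would first try to avoid regularity altogether. Since $G:=\widetilde G^-_\kappa-G^-_\kappa$ satisfies the homogeneous equation with Robin condition and past compact support in the first variable, one can pair it with $P\mathcal G^+_\kappa h$ in the first entry. This reproduces the computation above at the level of distributions and requires only the well-defined boundary restriction guaranteed by Remark \ref{Rem: Well-defined pull-back}. If that fails, I would instead use the exact sequence of Proposition \ref{Prop: Exact Sequence} to identify $G(\cdot,h)$ as a past compact solution and conclude $G=0$ by the uniqueness in Proposition \ref{Prop: Existence of Solutions}, together with the exhaustion argument.
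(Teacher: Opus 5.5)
Your main chain of equalities is sound, but the regularity discussion around it is not. The reduction in your first step is circular. If you had uniqueness for the forward Robin problem in a class containing the distribution $\widetilde{\mathcal G}^-_\kappa f$, you would already have $\widetilde{\mathcal G}^-_\kappa f=\mathcal G^-_\kappa f$, and the duality computation would be superfluous. The uniqueness actually available (Proposition \ref{Prop: Existence of Solutions}, via the energy estimates of Appendix \ref{App: A}) concerns smooth solutions, so it cannot be used to produce smoothness.

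Your fallback (b) has the same defect. Fallback (a) would need a distributional Green identity involving traces of both $u$ and $\nabla_n u$. That is more than the existence of the pull-back in Remark \ref{Rem: Well-defined pull-back}, which is moreover argued under the no-glancing hypothesis, not assumed in this section.

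None of this is needed. Definition \ref{Def: SolFond on boundary} takes the Green operators to be maps $\mathcal G^\pm_\kappa:\mathcal D(\mcM)\to C^\infty(\mcM)$, so $\widetilde{\mathcal G}^-_\kappa f$ is smooth by hypothesis. Equation \eqref{Eq: Boundary conditions} is then the pointwise Robin condition, and your integration by parts is fully justified. You should take $f\in\mathcal D(\mathring{\mcM})$, since that is where the definition supplies both $P\widetilde{\mathcal G}^-_\kappa f=f$ and the support property.

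With that fixed, your route differs from the one the paper signals. The paper omits the proof, deferring to \cite[Prop. 4.22]{Costeri_25} and presenting uniqueness as a consequence of Proposition \ref{Prop: Exact Sequence}. The argument this points to is the left-inverse one. Set $w\doteq\widetilde{\mathcal G}^-_\kappa f-\mathcal G^-_\kappa f$; it is a smooth, past compact solution of $Pw=0$ obeying the Robin condition. Hence $w=\mathcal G^-_\kappa Pw=0$ by Equation \eqref{Eq: Extended Domains 2}, i.e.\ by uniqueness of the smooth mixed problem. The same check already appears inside the proof of Theorem \ref{Thm: Existence of Propagators}.

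Your duality argument, as in \cite{Baer_2007}, does not appeal to that uniqueness again. It uses only the existence and support properties of the opposite propagator $\mathcal G^+_\kappa$, the compactness of $J^+(\operatorname{supp} f)\cap J^-(\operatorname{supp} h)$, and Green's identity, whose boundary terms cancel because both factors satisfy the same Robin condition. Both routes are valid, and both rest on the smoothness built into the definition.
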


\subsection{Fundamental solutions on static spacetimes}\label{Sec: Fundamental Solutions on Static Spacetimes}
In this section, we restrict our attention to the class of standard static, globally hyperbolic spacetimes with timelike boundary such that their Cauchy surface is of bounded geometry as per Definition \ref{Def: Bounded_geometry_boundary}. As in the previous section we are not making explicit use of the assumption of infinitesimal convexity, which will play instead a key role when discussing the microlocal properties of the fundamental solutions. Although this is a special case of the general one considered in the previous section, we feel worth summarizing the spectral construction of the advanced and retarded propagators developed in \cite{Dappiaggi-Drago_2019}, since it will allow us to prove the existence of Hadamard states in Section \ref{Sec: Hadamard states with Robin boundary conditions} using a deformation argument.

We shall therefore only recall their definition and state the main results of \cite{Dappiaggi-Drago_2019} in the present geometric setting. Starting from $(\mcM,g)$, it is convenient to consider the conformally rescaled, ultrastatic metric $\widetilde{g}$ whose line element reads
\begin{align}\label{Equation: conformally rescaled metric}
\widetilde{ds}^2=-dt^2+\beta^{-1}h\,.
\end{align}
This entails that, setting $\dim\mcM=d$,
\begin{enumerate}
	\item The Klein-Gordon operator $P$ on $(\mcM,g)$ is mapped on $(\mcM,\widetilde{g})$ to $\widetilde{P}= -(\partial^2_t+A_m)$ such that 
	\begin{flalign}\label{Eq: Conformal Relation Between Wave Operators}
		P\circ\beta^{\frac{2-d}{4}}&=
		\beta^{-\frac{2+d}{4}}\bigg[\partial_t^2+A_m\bigg], \notag \\
		A_m=A+\beta m^2\doteq
		\Delta_{\beta^{-1}h}+
		\frac{2-d}{2}\beta^{-\frac{1}{2}}&\Delta_{\beta^{-1}h}(\beta^{\frac{1}{2}})-\frac{(2-d)(d-4)}{4}\,h^{ij}\nabla_i (\beta)\nabla_j(\beta)+\beta m^2,
	\end{flalign}
	\item The operator $\nabla_n+\kappa\mathbb{I}$ implementing the Robin boundary conditions at $\partial\mcM$ on $(\mcM,g)$ reads on $(\mcM,\widetilde{g})$
	\begin{equation}\label{Eq: Conformal Relation Boundary Condition}
		\nabla_n+\widetilde{\kappa}\mathbb{I},\quad\widetilde{\kappa}=\kappa-\frac{d-2}{4} n(\ln\beta),
	\end{equation}
\end{enumerate}
where $n(\ln \beta) \doteq \nabla_n (\ln \beta) \vert_{\partial \mcM}$ is the directional derivative of $\ln \beta$ along the inward pointing, unit normal vector field $n$, evaluated at the boundary. 
Following \cite{Dappiaggi-Drago_2019} it is convenient to construct advanced and retarded Green operators $\widetilde{\mathsf{G}}^\pm_{\widetilde{\kappa}}$ for $\partial_t^2+A+\beta m^2$. These are related to $G^\pm_\kappa$, those of $P$, via the identity
\begin{align}\label{Eq: Conformal Relation Between Propagators}
	\widetilde{\mathsf{G}}^\pm_{\widetilde{\kappa}}=
	\beta^{\frac{2-d}{4}}\circ G^\pm_\kappa\circ\beta^{\frac{2+d}{4}}\,.
\end{align}

\noindent \noindent In this framework we can formulate a straightforward generalization of \cite[Thm. 30]{Dappiaggi-Drago_2019}. In that reference only the wave operator is considered, but the construction rests on the following notable properties of $A$, see Equation \eqref{Eq: Conformal Relation Between Wave Operators}. This is first seen as a uniformly elliptic, symmetric operator on $C^\infty_0(\Sigma)\subset L^2(\Sigma,\beta^{-1}h)$. Afterwards, using the theory of boundary triples \cite{Grubb68} as well as \cite[Prop. 19]{Dappiaggi-Drago_2019}, to each self-adjoint operator $\vartheta$ on the space of square integrable functions on $\partial\Sigma$, one associates a self-adjoint extension of $A$, denoted by $A_\vartheta$. Here $\vartheta$ plays the role of codifying a boundary condition and, in our case, $\vartheta=\widetilde{\kappa}^{-1}\mathbb{I}$.

Here we need to apply this scheme to the operator $A_m$. On the one hand, since $m^2=m_0^2+\xi R$ is a zeroth-order contribution, $A_m$ is uniformly elliptic. This a property governed only by the principal part of the differential operator, which is the same as that of $A$. On the other hand, the identification of the self-adjoint extensions of $A_m$ in terms of assigned boundary conditions strongly requires the existence of a good notion of underlying Sobolev spaces, as discussed in \cite[Sec. 1.2]{Dappiaggi-Drago_2019}. For this reason, we impose the additional assumption, see Remark \ref{Rem: Comparison to GOW}:
\begin{center}
	$\beta,\beta^{-1}\in C^\infty_b(\Sigma,h)$.
\end{center}
Observe that this requirement entails both that $(\mcM,g)$ is a Lorentzian manifold of bounded geometry in the sense of \cite{Gérard}, as well as that $(\Sigma,q)$, where $q \doteq \beta^{-1}h$, is a complete Riemannian manifold. Under these assumptions one can introduce for every $s\in\mathbb{R}$ a notion of Sobolev space $H^s(\Sigma,q)$ and, if $s>\frac{1}{2}$, also a continuous surjective map 
\begin{equation}\label{Eq: Gamma0}
\gamma_0:H^s(\Sigma,q)\to H^{s-\frac{1}{2}}(\partial\Sigma,q_\partial),
\end{equation}
where $q_\partial$ denotes the pull-back of $q$ to $\partial\Sigma$. If we denote by $\nu_q$ the outward pointing unit normal to $\partial\Sigma$ with respect to $q$, it turns out that $\nu_q=-\beta^{\frac{1}{2}}n$ and for $s>\frac{3}{2}$ we can also introduce the continuous and surjective map
$$\gamma_1=-\gamma_0\circ\nabla_{\nu_q}:H^s(\Sigma,q)\to H^{s-\frac{3}{2}}(\partial\Sigma,q_\partial).$$
The Robin boundary condition in Equation \eqref{Eq: Robin Boundary Conditions} is encoded by the operator $\gamma_1+\beta^{\frac{1}{2}}\widetilde{\kappa}\gamma_0$. Having established the relevant notation we can state the following existence theorem for the propagators, whose proof is omitted being a straightforward generalization of that in \cite[Thm 30]{Dappiaggi-Drago_2019}.

\begin{proposition}	\label{Prop: Existence Robin propagators}
Let $(\mcM, g)$ be a $d$-dimensional, $d\geq 2$, standard static, globally hyperbolic spacetime with a timelike boundary. Assume in addition that
	\begin{enumerate}
		\item $\beta,\beta^{-1}\in C^\infty_b(\Sigma,h)$, see Remark \ref{Rem: Comparison to GOW},
		\item Considering $A_m$ with $D(A_m)=C^\infty_0(\Sigma)\subset L^2(\Sigma,q)$, $q \doteq \beta^{-1}h$, and denoting by $A_{m,{\widetilde{\kappa}}}$ the self-adjoint extension individuated by the rescaled Robin boundary condition codified by Equation \eqref{Eq: Conformal Relation Boundary Condition} with
		$$D(A_{m,{\widetilde{\kappa}}})=\{f\in H^2(\Sigma, q)\;|\;\gamma_1 f=- \beta^{\frac{1}{2}}\widetilde{\kappa}\gamma_0 f\},$$
		its spectrum is bounded from below.
	\end{enumerate}
Under the above assumptions, there exist unique advanced and retarded Green propagators $G^\pm_\kappa\in\mathcal{D}^\prime(\mcM\times\mcM)$ as per Definition \ref{Def: SolFond on boundary}
such that
\begin{equation}\label{Eq: Gpm from Gkappa}
	G^+_\kappa=-\Theta(t^\prime-t)G_\kappa, \quad G^-_\kappa=\Theta(t-t^\prime)G_\kappa.
\end{equation}
Here $G_\kappa=G^-_\kappa-G^+_\kappa$ is the associated Pauli-Jordan propagator where 
\begin{equation}\label{Eq: Beta Gkappa}
G_\kappa=\beta^{\frac{d-2}{4}}\circ\widetilde{G}_\kappa\circ\beta^{-\frac{d+2}{4}},
\end{equation}
and where, for all $f_1,f_2\in\mathcal{D}(\mcM)$, it holds that
\begin{equation}\label{Eq: Spectral Gkappa}
\widetilde{G}_\kappa(f_1,f_2)=\int_{\mathbb R^2} \left(f_1(t),	A_{m,\widetilde{\kappa}}^{-\frac{1}{2}}\sin\left(A_{m,\widetilde{\kappa}}^{\frac{1}{2}}(t-t')	\right)f_2(t')\right)_{L^2(\Sigma, q)}  dt \, dt',
\end{equation}
where the operator in the integral is defined in terms of its spectral resolution.	
\end{proposition}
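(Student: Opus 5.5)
We would prove the statement by reducing, via the conformal relation \eqref{Eq: Conformal Relation Between Propagators}, to the ultrastatic problem $\partial_t^2+A_{m,\widetilde\kappa}$ on $L^2(\Sigma,q)$, and then building its propagators by spectral calculus. The first step is to check that the boundary-triple machinery of \cite{Grubb68} and \cite[Prop. 19]{Dappiaggi-Drago_2019} still applies to $A_m$. The principal symbol of $A_m$ coincides with that of $\Delta_q$. The zeroth-order term $\beta m^2$ is smooth and bounded, with bounded derivatives, thanks to $\beta,\beta^{-1}\in C^\infty_b$ and bounded geometry. Hence $A_m$ is uniformly elliptic and symmetric on $C^\infty_0(\mathring\Sigma)$, and the trace maps $\gamma_0,\gamma_1$ of \eqref{Eq: Gamma0} are available. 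The boundary condition $\vartheta=\widetilde\kappa^{-1}$ then singles out the self-adjoint extension $A_{m,\widetilde\kappa}$ with the stated $H^2$ domain, by elliptic regularity up to the boundary on manifolds of bounded geometry \cite{AmmanGrosseNistor}.

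Next, since the spectrum of $A_{m,\widetilde\kappa}$ is bounded from below by assumption, the functions $\lambda\mapsto\lambda^{-1/2}\sin(\lambda^{1/2}s)$ and $\cos(\lambda^{1/2}s)$ are bounded and continuous on the spectrum, with the obvious continuation at $\lambda\le0$ (hyperbolic sine). For each compact range of $s$ they define bounded operators. We then define $\widetilde G_\kappa$ by \eqref{Eq: Spectral Gkappa} and show the following properties.
\begin{enumerate}
\item \emph{Regularity.} For $f_2\in\mathcal D$, the map $t\mapsto\int A^{-1/2}\sin(A^{1/2}(t-t'))f_2(t')\,dt'$ lies in $C^\infty(\mathbb R;D(A_{m,\widetilde\kappa}^k))$ for all $k$. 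Elliptic regularity and Sobolev embedding then upgrade this to smoothness.
\item \emph{Equation and boundary condition.} The map solves $(\partial_t^2+A_{m,\widetilde\kappa})u=0$. It satisfies the Robin condition because it lies in the domain of $A_{m,\widetilde\kappa}$.
\item \emph{Distributional continuity.} $\widetilde G_\kappa$ is a continuous bilinear form, so by the Schwartz kernel theorem it defines an element of $\mathcal D'(\mcM\times\mcM)$.
\end{enumerate}
Defining $\widetilde{\mathsf G}^\pm$ via the Heaviside cut-offs in \eqref{Eq: Gpm from Gkappa}, Duhamel's formula yields $(\partial_t^2+A)\widetilde{\mathsf G}^\pm=\mathrm{id}$, because the cut-off produces exactly the jump $\partial_t[A^{-1/2}\sin(A^{1/2}s)]_{s=0}=\mathbb I$. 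Conjugating with powers of $\beta$ via \eqref{Eq: Conformal Relation Between Wave Operators} and \eqref{Eq: Beta Gkappa} transfers everything back to $P$, and \eqref{Eq: Conformal Relation Boundary Condition} turns $\widetilde\kappa$ back into $\kappa$.

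The main obstacle is the support property \eqref{Eq: Support at the boundary}. Spectral calculus only gives time-support information, not finite propagation speed. We would avoid proving this directly. Instead we would note that $\mathcal G^-_\kappa f$, built spectrally, is a smooth solution of \eqref{Eq: Cauchy problem on non-static spacetime} with vanishing data in the past of $\operatorname{supp}f$. Proposition \ref{Prop: Causal Support of Solutions} then gives $\operatorname{supp}\subseteq J^+(\operatorname{supp}f)$ directly; its proof relies on energy estimates that are local and do not need compact Cauchy surfaces. This avoids having to establish a spectral-side finite propagation speed.

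Uniqueness, and the identification with the propagators of Theorem \ref{Thm: Existence of Propagators}, follow from the uniqueness lemma stated after Proposition \ref{Prop: Exact Sequence}. That same lemma shows the spectrally constructed $G^\pm_\kappa$ coincide with the general ones.
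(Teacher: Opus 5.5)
Your outline is correct and follows the route the paper intends. The paper omits the proof, deferring to the spectral construction of \cite[Thm.~30]{Dappiaggi-Drago_2019}, and it explicitly treats the static case as a special instance of Theorem~\ref{Thm: Existence of Propagators}. So borrowing causal support from Proposition~\ref{Prop: Causal Support of Solutions} (whose energy estimate is local and needs no compact Cauchy surface) and uniqueness from the lemma after Proposition~\ref{Prop: Exact Sequence} is legitimate.

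One step needs tightening. Your $D(A^k_{m,\widetilde\kappa})$ argument is valid in two cases: for the Pauli--Jordan integral over all $t'\in\mathbb R$, where $A_{m,\widetilde\kappa}^k$ can be traded for $(-\partial_{t'}^2)^k$ by integrating by parts in $t'$; and for the cut-off operators $\widetilde{\mathsf G}^\pm$ when $f\in\mathcal D(\mathring{\mcM})$, which is all the support property needs.

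It fails for $\widetilde{\mathsf G}^\pm f$ when $\operatorname{supp} f$ meets $\partial\mcM$. There the cut-off gives $A_{m,\widetilde\kappa}\widetilde{\mathsf G}^- f(t)=f(t)-\widetilde{\mathsf G}^-(\partial_t^2 f)(t)$, so $\widetilde{\mathsf G}^- f(t)\in D(A^2_{m,\widetilde\kappa})$ only if $f(t)$ itself satisfies the Robin condition. Consequently:
\begin{itemize}
\item smoothness of $\mathcal G^\pm_\kappa f$ for arbitrary $f\in\mathcal D(\mcM)$, which Definition~\ref{Def: SolFond on boundary} presupposes, is not established;
\item invoking the uniqueness lemma to identify your operators with those of Theorem~\ref{Thm: Existence of Propagators} is circular at this point, since that lemma assumes both pairs already satisfy the definition.
\end{itemize}

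The repair is short. First identify the two pairs on $\mathcal D(\mathring{\mcM})$: there both give smooth solutions of the same Robin Cauchy problem with vanishing past data, and the energy estimate of Appendix~\ref{App: A} forces equality. Then extend to all of $\mathcal D(\mcM)$, using that $\mathcal D(\mathring{\mcM})$ is dense in $L^2$ and that both operators are $L^2$-continuous on sources supported in a compact time slab. Smoothness then transfers from Theorem~\ref{Thm: Existence of Propagators}.
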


\subsection{Wavefront Set of the Propagators}\label{Sec: Wavefront Set of the Propagators}
 Having established both the existence of the advanced and retarded fundamental solutions for the Klein-Gordon operator $P$ as in Equation \eqref{Eq: KG equation} with Robin boundary conditions and their causal structure, we can investigate their wavefront set. In the following we rely instead on the seminal works by Melrose and Sj\"ostrand \cite{Melrose_1978,Melrose_1982}, see also \cite{Taylor_1978}. Herein the authors consider a class of geometries which is larger than the one we are interested in, namely they allow both for grazing and glancing rays. These are situations which we discard in this work since we are mainly interested in finding a counterpart of Radzikowski's theorem, see Section \ref{Sec: Comparison of Hadamard states}. It is unclear whether their occurrence is compatible with the result that we are seeking.

\vskip .2cm

In the following we introduce some additional geometric and analytic data, necessary for establishing the main result of this section. It is worth observing that none of the structures introduced below depends on the Robin parameter $\kappa$. To start with, following \cite{Hormander_1990}, we denote by $\sigma_P$ the principal symbol of the Klein-Gordon operator $P$ in Equation~\eqref{Eq: KG equation}, namely
$$\sigma_P(x,k_x)\doteq g_x^{-1}(k_x,k_x),\quad\forall (x,k_x)\in T^*\mcM.$$
This comes accompanied by the characteristic set of $P$, that is
$$\mathrm{Char}(P)
\doteq
\{(x,k_x)\in T^*\mcM\setminus\{0\}\;|\; \sigma_P(x,k_x)=0\},$$
as well as by an Hamiltonian vector field $H_{\sigma_P}$. Its integral lines $\Gamma$ are also referred to as {\em null bicharacteristics}.

\begin{remark}\label{Rem: Connection to No Glancing}
It is worth rewriting the no-glancing hypothesis in Equation~\eqref{Eq: No Glancing} in this language. More precisely, for $(x,k_x)\in\mathrm{Char}(P)$, $x\in\partial\mcM$, we set
\begin{equation}
    \label{Eq: eta}
    \eta(x,k_x)\doteq g_x^{-1}(k_x,n_x^\flat)=k_x(n_x),
\end{equation}
where $n_x$ is the inward pointing normal vector to $\partial\mcM$ at $x$ while $^\flat$ denotes the standard musical isomorphism. Equation \eqref{Eq: No Glancing} entails that every element of $\mathrm{Char}(P)$ with $x\in\partial\mcM$, reached by a null bicharacteristic issued from the interior, satisfies $\eta(x,k_x)\neq0$. For every $x\in\partial\mcM$, let $\mathcal{R}_x$ be the reflection map in Equation~\eqref{Eq: Reflection Vector}. Its dual counterpart is
\begin{equation}\label{Eq: Dual Reflection}
\mathcal{R}_x^*:T_x^*\mcM\rightarrow T_x^*\mcM,\qquad\mathcal{R}_x^*k\doteq	k_x-2\eta(x,k_x)n_x^\flat,
\end{equation}
which abides by the identities
$$\sigma_P(x,\mathcal{R}_x^*k_x)=\sigma_P(x,k_x),\qquad\eta(x,\mathcal{R}_x^*k_x)=-\eta(x,k_x),\qquad	(\mathcal{R}_x^*k_x)|_{T_x\partial\mcM}=k_x|_{T_x\partial\mcM}.$$
It is worth emphasizing, $\mathcal{R}_x^*$ preserves the characteristic cone $\mathcal{N}^*_x=\{k_x\in T^*_x\mcM\setminus\{0\}\;|\;\sigma_P(x,k_x)=0\}$ and the time orientation of null covectors.
\end{remark}

\noindent As already mentioned in the previous sections, we have to account for the reflection of light rays hitting the boundary $\partial\mcM$. Yet, even including Assumption \ref{Ass: Infinitesimally convex}, there is no reason to assume that, starting from $(x,k_x)\in T^*\mathring{\mcM}\setminus\{0\}$ with $\sigma_P(x,k_x)=0$, the Hamiltonian flow encounters only at most once the boundary. Multiple reflections can and need to be accounted for. The following definition serves the purpose of giving a rigorous translation of this heuristic argument.

\begin{definition}\label{Def: Finite broken null bicharacteristic}
A {\bf finite broken null bicharacteristic (BNB)} is a map
$$\Gamma:[a,b]\setminus\{s_1,\ldots,s_N\}\longrightarrow\mathrm{Char}(P),\qquad s\mapsto\Gamma(s)=(\gamma(s),k(s)),$$
where $N\in\mathbb{N}$ and $a<s_1<\cdots<s_N<b$, such that
\begin{enumerate}
\item the base curve $\gamma$ extends continuously to $[a,b]$ and $\gamma(s_j)\in\partial\mcM$ for every $j=1,\ldots,N$,
\item on each connected component of $[a,b]\setminus\{s_1,\ldots,s_N\}$, $\Gamma$ is an integral curve of $H_{\sigma_P}$,
\item the one-sided limits $k(s_j^-)$ and $k(s_j^+)$ exist and satisfy
$$\eta(\gamma(s_j),k_x(s_j^-))\neq 0\quad\textrm{and}\quad k_x(s_j^+)=\mathcal{R}_{\gamma(s_j)}^*k_x(s_j^-).$$
\end{enumerate}
Two points $(x,k_x)$ and $(x^\prime,k_{x^\prime})\in\mathrm{Char}(P)$ which are connected by a BNB are denoted by
$$(x,k_x)\sim_{\mathrm b}(x^\prime,k_{x^\prime}).$$
By convention we set $(x,k_x)\sim_{\mathrm b}(x,k_x)$ and we write $(x,k_x)\sim (x^\prime, k_{x^\prime})$ in the limiting case where no reflection occurs, hence $N=0$. In this case we say that these points are connected by a regular bicharacteristic. 
\end{definition}

\begin{remark}\label{Rem: BNG}
	Under Assumption \ref{Ass: Infinitesimally convex} and in view of Proposition \ref{Prop: no-glancing}, considering the Klein-Gordon operator as per Equation \eqref{Eq: KG equation}, the curve $\gamma$ subordinated to a BNB is a piecewise smooth null geodesic and we shall refer to it as {\bf broken null geodesic}.
\end{remark}

Notice that the integer $N$ codifies the number of reflections of the broken null bicharacteristic. Furthermore, being the underlying manifold time-oriented, a BNB can be called future-directed ({\em resp.},  past-directed) if such is the covector $k(a)$ and we denote it by $k(a)\triangleright 0$ ({\em resp.}, $k(a)\triangleleft 0$). This property is preserved along the flow by all $k(s)$, $s\in [a,b]$.

\vskip .2cm

\noindent We can now establish a more direct connection between the geometric construction of Section \ref{Sec: Geodesic and Reflected Geodesic Distances Claudio} and the broken null relation governing the microlocal analysis below. To this end, first of all, we introduce the relevant notation. Let $(\mcM,g)$ be an globally hyperbolic and infinitesimally convex spacetime with timelike boundary abiding by Assumptions \ref{Ass: Infinitesimally convex} and \ref{Ass: Finite regular reflections} while $\mathcal{N}$ is a causally convex Cauchy neighbourhood as per Definition \ref{Def: Cauchy neighbourhood}. We set
\begin{equation}\label{Eq: Restricted future broken relation}
\mathcal{C}_{\mathrm b,\mathcal{N}}^{\triangleright}\doteq\left\{(x,k_x,y,-k_y)\in T^*(\mathring{\mathcal{N}}\times\mathring{\mathcal{N}})\setminus\{\boldsymbol{0}\}\;|\;(x,k_x)\sim_{\mathrm b}(y,k_y),\; k_x\triangleright 0\right\}.
\end{equation}
and the direct component as
\begin{equation}\label{Eq: Directed null relation in N}
\mathcal{C}_{\mathrm b,\mathcal{N}}^{\triangleright,\mathrm{dir}}\doteq{}\Delta_{\mathcal{N}}^{\triangleright}\cup
\Bigl\{(x,k_x,y,-k_y)\in T^*(\mathring{\mathcal{N}}\times\mathring{\mathcal{N}})\setminus\{\boldsymbol{0}\}\;|\;(x,k_x)\sim (y,k_y),\; x\neq y,\; k_x\triangleright 0\Bigr\},
\end{equation}
where
\begin{equation}\label{Eq: Future characteristic diagonal in N}
\Delta_{\mathcal{N}}^{\triangleright}\doteq\left\{(x,k_x,x,-k_x)\;|\; x\in\mathring{\mathcal{N}},\; (x,k_x)\in\mathrm{Char}(P),\; k_x\triangleright 0\right\},
\end{equation}	

\noindent We start by establishing an auxiliary result in the following lemma.

\begin{lemma}\label{Lem: Direct null conormal relation}
Let $(\mcM,g)$ be a globally hyperbolic spacetime with timelike boundary, abiding by Assumptions \ref{Ass: Infinitesimally convex} and \ref{Ass: Finite regular reflections}. In addition, let $\mathcal{N}$ be a causally convex Cauchy neighbourhood as per Definition \ref{Def: Cauchy neighbourhood}.	Let $\mathcal U\subset\mathring{\mathcal{N}}$ be a causally convex, geodesic, open neighbourhood and let thereon $\sigma$ be defined as per Equation \eqref{Eq: Synge World Function Interior}. Set
\begin{equation}\label{Eq: Direct null hypersurface}
\mathcal Z_{0,\mathcal U}\doteq	\left\{(x,y)\in\mathcal U\times\mathcal U\;|\;x\neq y,\; \sigma(x,y)=0\right\},\; k_{x,0}\doteq d_x\sigma(x,y)\;\mathrm{and}\; k_{y,0}\doteq-d_y\sigma(x,y),
\end{equation}
where we dropped in $k_{x,0}$ and $k_{y,0}$ the explicit dependence on $(x,y)$ for simplicity. Then $\mathcal Z_{0,\mathcal U}$ is a smooth hypersurface of $(\mathcal U\times\mathcal U)\setminus\operatorname{Diag}_2(\mathcal U)$ where $\operatorname{Diag}_2(\mathcal U)=\{(x,x) \; \vert \; x \in\mathcal{U}\}$. In addition, denoting by $t$ the underlying time function,
\begin{equation}\label{Eq: Direct branch conormal relation}
\Lambda_{0,\mathcal U}^{\triangleright}\doteq\left\{\left(x,\lambda k_{x,0},y,-\lambda k_{y,0}\right)\;|\;	(x,y)\in\mathcal Z_{0,\mathcal U},\;\lambda\bigl(t(x)-t(y)\bigr)>0\right\},
	\end{equation}
is the future-directed half of $N^*\mathcal Z_{0,\mathcal U}\setminus\{\boldsymbol{0}\}$. Furthermore,
\begin{equation}\label{Eq: Local direct branch identification}
\Lambda_{0,\mathcal U}^{\triangleright}=\mathcal{C}_{\mathrm b,\mathcal{N}}^{\triangleright,\mathrm{dir}}\cap	T^*\left((\mathcal U\times\mathcal U)\setminus\operatorname{Diag}_2(\mathcal U)\right).
\end{equation}
\end{lemma}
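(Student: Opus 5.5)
The plan mirrors the branchwise argument of Proposition \ref{Prop: Reflected null branch atlas} and Corollary \ref{Cor: Branchwise conormal resolution}, with the ordinary Synge world function in place of $\sigma_{-,N,\alpha}$.

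\emph{Step 1: hypersurface property.} On $\mathcal U\times\mathcal U$ the function $\sigma$ is smooth. The standard identity $\nabla_\mu\sigma(x,y)=-g_{\mu\nu}(x)\mathfrak w^\nu$ holds, where $\mathfrak w=\dot\gamma^{(x,y)}(0)=\exp_x^{-1}(y)$; this is the identity already used in the proof of Lemma \ref{Lem: Matching N-reflected Synge world functions}. So $d_x\sigma(x,y)=-\mathfrak w^\flat$, and this vanishes only when $\mathfrak w=0$, that is only when $x=y$. Hence $d\sigma\neq 0$ on $\mathcal Z_{0,\mathcal U}$, and the regular level-set theorem \cite[Cor. 5.14]{Lee_2013} shows that $\mathcal Z_{0,\mathcal U}$ is a smooth hypersurface of $(\mathcal U\times\mathcal U)\setminus\operatorname{Diag}_2(\mathcal U)$. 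Since $\mathcal U\subset\mathring{\mathcal N}$, no boundary subtlety (Remark \ref{Rem: Convexity of the boundary}) arises.

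\emph{Step 2: conormal bundle and its future half.} Because $d\sigma\neq0$ on $\mathcal Z_{0,\mathcal U}$, we have $N^*\mathcal Z_{0,\mathcal U}\setminus\{\boldsymbol 0\}=\{(x,\lambda d_x\sigma,y,\lambda d_y\sigma)\mid \lambda\neq0\}$, which is exactly the set displayed with $k_{x,0},k_{y,0}$. By symmetry of $\sigma$, $d_y\sigma(x,y)=-\dot\gamma^{(x,y)}(1)^\flat$, so $k_{y,0}=\dot\gamma^{(x,y)}(1)^\flat$, and both covectors are null.

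For the sign condition, suppose $y\in J^+(x)$, so $t(x)<t(y)$ and $\mathfrak w$ is future-directed null. Then $k_{x,0}=-\mathfrak w^\flat$ is past pointing, and $\lambda k_{x,0}$ is future pointing exactly when $\lambda<0$, i.e. when $\lambda(t(x)-t(y))>0$. The case $x\in J^+(y)$ is symmetric. Because $\mathcal U$ is geodesically convex and $x\neq y$ lie on a null geodesic, one of these two alternatives always holds, and $t(x)\neq t(y)$ since $t$ is strictly monotone along causal curves. This identifies $\Lambda^{\triangleright}_{0,\mathcal U}$ with the future-directed half.

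\emph{Step 3: the identification \eqref{Eq: Local direct branch identification}.} For $\subseteq$, take an element $(x,\lambda k_{x,0},y,-\lambda k_{y,0})$. The null geodesic $\gamma^{(x,y)}$ lies in $\mathcal U$, and its cotangent lift $s\mapsto(\gamma(s),-\lambda\dot\gamma(s)^\flat)$ is an integral curve of $H_{\sigma_P}$ up to reparametrization. This gives $(x,\lambda k_{x,0})\sim(y,\lambda k_{y,0})$ with $x\neq y$ and first covector future pointing, which is the direct component.

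For $\supseteq$, let $(x,k_x)\sim(y,k_y)$ with $x\neq y$ and $k_x\triangleright0$, and with $x,y\in\mathcal U$. The projection of the bicharacteristic is an unbroken null geodesic from $x$ to $y$, with tangent proportional to $k_x^\sharp$ at $x$ and to $k_y^\sharp$ at $y$. This geodesic must coincide with $\gamma^{(x,y)}$. Granting that, $\sigma(x,y)=0$, $k_x$ is a nonzero multiple of $d_x\sigma$, and transport of the covector along the flow fixes the same multiple at $y$. The sign is then fixed by Step 2. The diagonal part $\Delta^\triangleright_{\mathcal N}$ drops out because we intersect with the complement of $\operatorname{Diag}_2$.

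\emph{Main obstacle.} The delicate point is the uniqueness claim in the reverse inclusion. A bicharacteristic from $x$ to $y$ might a priori leave $\mathcal U$ and return, giving a null geodesic other than $\gamma^{(x,y)}$. I would exclude this using causal convexity of $\mathcal U$: the whole causal curve from $x$ to $y$ stays in $\mathcal U$. Geodesic convexity of $\mathcal U$ then forces the curve to coincide with the unique geodesic $\gamma^{(x,y)}$ in $\mathcal U$, since $\exp_x$ is injective there. The same causal convexity guarantees that no reflection can occur, because $\mathcal U\subset\mathring{\mathcal N}$ does not meet $\partial\mcM$.
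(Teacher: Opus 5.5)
Your proposal is correct and follows essentially the same route as the paper's proof. Both arguments use the endpoint identities for $d\sigma$, the regular level-set theorem, the conormal bundle spanned by $d\sigma$ with the sign of $\lambda(t(x)-t(y))$ selecting the future half, and the identification with unbroken null bicharacteristics via uniqueness of the connecting null geodesic in the causally and geodesically convex set $\mathcal U$; the paper only asserts this uniqueness, whereas you justify it.

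There is one sign slip to fix. Reversing the geodesic gives $\dot\gamma^{(y,x)}(0)=-\dot\gamma^{(x,y)}(1)$, so symmetry of $\sigma$ yields $d_y\sigma(x,y)=+\dot\gamma^{(x,y)}(1)^\flat$, as the paper states. Hence $k_{y,0}=-\dot\gamma^{(x,y)}(1)^\flat$, not $+\dot\gamma^{(x,y)}(1)^\flat$. With your stated sign, the claim $(x,\lambda k_{x,0})\sim(y,\lambda k_{y,0})$ in Step 3 would be false, since the covector transported along the lift $s\mapsto(\gamma(s),-\lambda\dot\gamma(s)^\flat)$ arrives at $y$ as $-\lambda\dot\gamma^{(x,y)}(1)^\flat$.
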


\begin{proof}
Since $\mathcal U$ is geodesically convex, every pair $(x,y)\in\mathcal U\times\mathcal U$ is connected by a unique geodesic $\gamma^{(x,y)}$ lying therein. For $x\neq y$, using the same endpoint variation argument as for example in the proof of item {\em 3.} of Proposition \ref{Prop: Reflected null branch atlas}, it turns out that
	$$d_x\sigma(x,y)=-\dot\gamma^{(x,y)}(0)^\flat\;\mathrm{and}\;d_y\sigma(x,y)=\dot \gamma^{(x,y)}(1)^\flat,$$
	where $\flat$ denotes the standard musical isomorphism. Hence $d\sigma$ does not vanish on $\mathcal Z_{0,\mathcal U}$ and, consequently, $0$ is a regular value of the restriction of $\sigma$ to
	$(\mathcal U\times\mathcal U)\setminus\operatorname{Diag}_2(\mathcal U)$. Using the regular level-set theorem, see \cite[Cor. 5.14]{Lee_2013}, entails that $\mathcal Z_{0,\mathcal U}$ is a smooth hypersurface therein. Its conormal bundle is
	$$N^*\mathcal Z_{0,\mathcal U}\setminus\{\boldsymbol{0}\}=\left\{\left(x,\lambda k_{x,0}, y,-\lambda k_{y,0}\right)\;|\;(x,y)\in\mathcal Z_{0,\mathcal U},\;\lambda\neq 0\right\}.$$
	As in the proof of Corollary \ref{Cor: Branchwise conormal resolution}, the condition $\lambda(t(x)-t(y))>0$ selects the component for which the covector in the first entry is future pointing. To conclude, observe that, since $\mathcal U$ is a causally convex, geodesic neighbourhood, every BNB as per Definition \ref{Def: Finite broken null bicharacteristic} with no reflection points and distinct endpoints in $\mathcal U$ has as its base curve the unique null geodesic joining them. Furthermore its endpoint covectors are proportional to $d_x\sigma$ and to $-d_y\sigma$. Conversely, every element of $\Lambda_{0,\mathcal U}^{\triangleright}$ determines a null bicharacteristic whose base curve is the unique null geodesic joining its endpoints. This is nothing but the sought Equation \eqref{Eq: Local direct branch identification}.
\end{proof}

\begin{proposition}\label{Prop: Resolution of broken null relation}
Let $(\mcM,g)$ be a globally hyperbolic spacetime with timelike boundary, abiding by Assumptions \ref{Ass: Infinitesimally convex} and \ref{Ass: Finite regular reflections}. In addition let $\mathcal{N}$ be a causally convex Cauchy neighbourhood as per Definition \ref{Def: Cauchy neighbourhood}. Using Equations \eqref{Eq: Restricted future broken relation} it holds that
\begin{equation}\label{Eq: Full broken relation resolution}
\mathcal{C}_{\mathrm b,\mathcal{N}}^{\triangleright}=\mathcal{C}_{\mathrm b,\mathcal{N}}^{\triangleright,\mathrm{dir}}\cup\Lambda_{\mathcal{N}}^{\triangleright,\mathrm{ref}}=\mathcal{C}_{\mathrm b,\mathcal{N}}^{\triangleright,\mathrm{dir}}\cup\bigcup_{(N,\alpha)\in\mathcal{I}_{\mathrm{ref}}}\Lambda_{N,\alpha}^{\triangleright},
\end{equation}
where $\Lambda_{N,\alpha}^\triangleright$ is as per Equation \eqref{Eq: Branchwise reflected conormal relation}, while $\mathcal{I}_{\mathrm{ref}} \doteq \bigsqcup_{N \ge 1} (\{N\} \times \mathcal{I}_N)$, see Corollary \ref{Cor: Countable reflected conormal resolution}.
\end{proposition}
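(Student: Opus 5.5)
The plan is a double inclusion, sorting elements of $\mathcal{C}_{\mathrm b,\mathcal{N}}^{\triangleright}$ by the number $N$ of reflections of a connecting BNB. The second equality in Equation \eqref{Eq: Full broken relation resolution} is Equation \eqref{Eq: Full reflected pair branch decomposition} of Corollary \ref{Cor: Countable reflected conormal resolution}. So only the first equality needs proof, namely $\mathcal{C}_{\mathrm b,\mathcal{N}}^{\triangleright}=\mathcal{C}_{\mathrm b,\mathcal{N}}^{\triangleright,\mathrm{dir}}\cup\Lambda_{\mathcal{N}}^{\triangleright,\mathrm{ref}}$, with $\Lambda_{\mathcal{N}}^{\triangleright,\mathrm{ref}}=\bigcup_{N\geq1}\Lambda^{\triangleright}_{N,\mathcal{N}}$ as in Equation \eqref{Eq: N-reflected cotangent relation}.

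\emph{Inclusion $\supseteq$.} For $\mathcal{C}_{\mathrm b,\mathcal{N}}^{\triangleright,\mathrm{dir}}$ this is immediate from the definitions:
\begin{itemize}
\item the diagonal $\Delta^{\triangleright}_{\mathcal{N}}$ is covered by the convention $(x,k_x)\sim_{\mathrm b}(x,k_x)$;
\item the case $(x,k_x)\sim(y,k_y)$ is a BNB with $N=0$.
\end{itemize}
For $\Lambda^{\triangleright}_{N,\mathcal{N}}$, take $(x,v)\in\Gamma^0_{N,\mathcal{N}}$ with $k_x=-\lambda v^\flat$ and $k_y=-\lambda\dot\gamma_{(x,v)}(1)^\flat$.
\begin{itemize}
\item Lift the broken null geodesic $\gamma_{(x,v)}$ to the cotangent bundle by $s\mapsto(\gamma(s),-\lambda\dot\gamma(s)^\flat)$.
\item On each smooth segment this is an integral curve of $H_{\sigma_P}$, since null geodesics are the projections of null bicharacteristics with $k\propto\dot\gamma^\flat$.
\item At each reflection point the velocity jump $\dot\gamma_{i+1}(\tau_i)=\mathcal{R}_{x_i}(v_i)$ dualizes to $k(s_i^+)=\mathcal{R}^*_{x_i}k(s_i^-)$. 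This holds because $\flat$ intertwines $\mathcal{R}$ with $\mathcal{R}^*$: indeed $(v-2g(v,n)n)^\flat=v^\flat-2\eta\, n^\flat$.
\item The transversality condition $\eta\neq0$ follows from Proposition \ref{Prop: no-glancing}.
\end{itemize}
This lift is a BNB in the sense of Definition \ref{Def: Finite broken null bicharacteristic}, and the future-pointing condition on $k_x$ carries over.

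\emph{Inclusion $\subseteq$.} Let $(x,k_x,y,-k_y)\in\mathcal{C}_{\mathrm b,\mathcal{N}}^{\triangleright}$ with a connecting BNB $\Gamma$ having $N$ reflections, and let $\gamma$ be its base curve.
\begin{itemize}
\item If $N=0$, the element lies in $\mathcal{C}_{\mathrm b,\mathcal{N}}^{\triangleright,\mathrm{dir}}$: either $x=y$, which forces $k_y=k_x$ and lands in $\Delta^{\triangleright}_{\mathcal{N}}$, or $x\ne y$, which lands in the second set by definition.
\item If $N\geq1$, reparametrize $\gamma$ affinely on $[0,1]$ and read off $v=\dot\gamma(0)$, so that $\gamma=\gamma_{(x,v)}$ is an $N$-reflected broken geodesic in the sense of Definition \ref{Def: N-reflected initial data}. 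By Remark \ref{Rem: BNG}, $\gamma$ is null and piecewise geodesic, and the covectors $k(s)$ are proportional to $\dot\gamma(s)^\flat$ with a constant factor; this constant is preserved across reflections by the same duality identity. Since $\mathcal{N}$ is causally convex and $x,y\in\mathring{\mathcal{N}}$, we get $(x,v)\in\Gamma^0_{N,\mathcal{N}}$, and the covector relations match Equation \eqref{Eq: N-reflected cotangent relation}.
\end{itemize}

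The main obstacle is checking that the BNB really defines an element of $\Gamma_N$. Two points need care.
\begin{itemize}
\item The curve must have no extra boundary contacts, i.e.\ the interior segments must lie in $\mathring{\mcM}$. A BNB's smooth pieces might touch $\partial\mcM$ without a recorded break. I would exclude this with Proposition \ref{Prop: no-glancing}: any contact point of a geodesic coming from the interior is transversal, so it must be a genuine reflection point, which then is one of the $s_j$.
\item The endpoint $y$ must lie in the interior, which holds because $y\in\mathring{\mathcal{N}}$ by assumption.
\end{itemize}
The degenerate case of a reflection at the very endpoints is likewise excluded because $x,y\in\mathring{\mathcal{N}}$.
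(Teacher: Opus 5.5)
Your proposal is correct and follows essentially the same route as the paper's proof. Like the paper, you split according to the number $N$ of reflections, lift an $N$-reflected null broken geodesic to a BNB via $s\mapsto(\gamma(s),-\lambda\dot\gamma(s)^\flat)$, conversely extract $(x,v)\in\Gamma^0_{N,\mathcal N}$ from a BNB, and read off the second equality from Corollary \ref{Cor: Countable reflected conormal resolution}; your explicit checks are details the paper leaves implicit, namely that $\flat$ intertwines $\mathcal{R}$ with $\mathcal{R}^*$ and that a smooth segment cannot touch $\partial\mcM$ away from the recorded breaks (a transversal contact there would push the base curve out of $\mcM$).
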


\begin{proof}
Given $(x,k_x,y,-k_y)\in\mathcal{C}_{\mathrm b,\mathcal{N}}^{\triangleright}$, Definition \ref{Def: Finite broken null bicharacteristic} entails the existence of a BNB connecting $(x,k_x)$ to $(y,k_y)$. On account of our assumptions this has a finite number of reflection points, say $N\geq 0$. The associated curve connecting $x$ to $y$ is piecewise continuous, lightlike and its time orientation is preserved at every reflection. Hence it lies entirely in $\mathcal{N}$ since it is causally convex.
	
Assuming that $N=0$, we have two options. In the first one $x=y$, which implies $k_y=k_x$ or, equivalently,	$(x,k_x,y,-k_y)=(x,k_x,x,-k_x)\in\Delta_{\mathcal{N}}^{\triangleright}$, In the second one, $x\neq y$, which entails
$(x,k_x)\sim(y,k_y)$, or, equivalently $(x,k_x, y,-k_y)\in\mathcal{C}_{\mathrm b,\mathcal{N}}^{\triangleright,\mathrm{dir}}$.
	
Suppose instead that $N\geq1$. Up to an affine reparametrization of the underlying curve on $[0,1]$, we can identify a point $(x,v)\in\Gamma^0_{N,\mathcal{N}}$ and $\lambda\in\bR\setminus\{0\}$ such that
$$y=\mathrm{Exp}_{-,N}(x,v),\quad k_x=-\lambda v^\flat,\quad k_y=-\lambda\dot\gamma_{(x,v)}(1)^\flat.$$
Since, per hypothesis, $k_x\triangleright0$, it follows from Equation \eqref{Eq: N-reflected cotangent relation}, combined with Corollary \ref{Cor: Countable reflected conormal resolution} that
$$(x,k_x,y,-k_y)\in\Lambda_{N,\mathcal{N}}^{\triangleright}\subseteq\Lambda_{\mathcal{N}}^{\triangleright,\mathrm{ref}}=\bigcup_{(N,\alpha)\in\mathcal{I}_{\mathrm{ref}}}\Lambda_{N,\alpha}^{\triangleright}.$$
This proves that
$$\mathcal{C}_{\mathrm b,\mathcal{N}}^{\triangleright}\subseteq\mathcal{C}_{\mathrm b,\mathcal{N}}^{\triangleright,\mathrm{dir}}\cup\bigcup_{(N,\alpha)\in\mathcal{I}_{\mathrm{ref}}}\Lambda_{N,\alpha}^{\triangleright}.$$
To establish the opposite inclusion, first of all we observe that, per construction, $\mathcal{C}_{\mathrm b,\mathcal{N}}^{\triangleright,\mathrm{dir}}\subseteq\mathcal{C}_{\mathrm b,\mathcal{N}}^{\triangleright}$. Hence, let us consider $(x,k_x,y,-k_y)\in\Lambda_{\mathcal{N}}^{\triangleright,\mathrm{ref}}$. It follows that there exists $N\geq 1$ and $(x,v)\in\Gamma^0_{N,\mathcal{N}}$ and $\lambda\in\bR\setminus\{0\}$ such that
$$y=\mathrm{Exp}_{-,N}(x,v),\quad k_x=-\lambda v^\flat,\quad k_y=-\lambda\dot\gamma{(x,v)}(1)^\flat.$$
Setting $\gamma\doteq\gamma_{(x,v)}$ and $k(s)\doteq-\lambda\dot\gamma(s)^\flat$, the map $s\longmapsto\bigl(\gamma(s),k(s)\bigr)$ is, up to an affine reparameterization, an integral curve of the Hamiltonian vector field $H_{\sigma_P}$, which satisfies in addition all the hypotheses of Definition \ref{Def: Finite broken null bicharacteristic}. In other words $(x,k_x)\sim_{\mathrm{b}} (y,k_y)$ and, in addition $k_x\triangleright 0$ per definition of $\Lambda_{\mathcal{N}}^{\triangleright,\mathrm{ref}}$. We have thus shown
$$\mathcal{C}_{\mathrm b,\mathcal{N}}^{\triangleright,\mathrm{dir}}\cup\bigcup_{(N,\alpha)\in\mathcal{I}_{\mathrm{ref}}}\Lambda_{N,\alpha}^{\triangleright}\subseteq\mathcal{C}_{\mathrm b,\mathcal{N}}^{\triangleright},$$
which is the sought conclusion.
\end{proof}

\begin{remark}\label{Rem: Full Directed and Reflected Singularities}
For future convenience we highlight that Equation \eqref{Eq: Full broken relation resolution} can be generalized dropping the requirement encoded by $\triangleright$ that the first covector is future-pointing. As a matter of fact, it holds that 
\begin{equation}\label{Eq: Full broken relation resolution no future directed}
\mathcal{C}_{\mathrm b,\mathcal{N}}=\mathcal{C}_{\mathrm b,\mathcal{N}}^{\mathrm{dir}}\cup\Lambda_{\mathcal{N}}^{\mathrm{ref}}=\mathcal{C}_{\mathrm b,\mathcal{N}}^{\mathrm{dir}}\cup\bigcup_{(N,\alpha)\in\mathcal{I}_{\mathrm{ref}}}\Lambda_{N,\alpha},
\end{equation}
where $\mathcal{C}^{\mathrm{dir}}_{\mathrm b,\mathcal{N}}$ and $\Lambda_{N, \alpha}$ are defined respectively as in Equations \eqref{Eq: Directed null relation in N} and \eqref{Eq: Branchwise reflected conormal relation} dropping the requirement $k_x\triangleright 0$.
\end{remark}

Definition \ref{Def: Finite broken null bicharacteristic} combined with the no-glancing hypothesis in Equation \eqref{Eq: No Glancing} seems to suggest that in our construction we allow only for generalized broken bicharacteristics which can hit as many times as they want the boundary $\partial\mcM$. Each time they are reflected towards the interior $\mathring{\mcM}$ and, furthermore, for every bounded time interval, only a finite number of reflections can occur. Yet, we are still failing to control a limiting scenario, individuated in \cite{Melrose_1978}. In the following we employ a terminology borrowed from impact dynamics, see \cite{ClarkBloch_2023}.

\begin{definition}\label{Def: Zeno point}
Under the same assumptions of Definition \ref{Def: Finite broken null bicharacteristic}, we say that $p\in\partial\mcM$ is a {\bf Zeno (or accumulation) point} associated to the Klein-Gordon operator $P$ on $(\mcM,g)$ if there exists a sequence $\{x_j\}_{j\in\mathbb{N}}$ of points in $\partial\mcM$ such that 
\begin{enumerate}
	\item $x_j\neq x_i$ for all $j\neq i$ and, denoting by $t$ the time coordinate on $\mcM$ associated to the underlying timelike Killing field, $t(x_j)>t(x_i)$ for all $j>i$,
	\item there exists a BNB whose reflection points are $\{x_j\}_{j \in \mathbb{N}}$,
	\item there exists $p\in\partial\mcM$ such that $\lim_{j\to\infty}x_j=p$.
\end{enumerate}
\end{definition}

\noindent Observe that, if a Zeno point is present, infinitely many reflections accumulate at a finite value of the global time function, that is $t(p)$. Furthermore, recalling that the underlying BNB can be parametrized as $\Gamma(s)=(\gamma(s),k(s))$ at $p\in\partial\mcM$, it turns out that $\eta(p,k(s_p))=0$, see Equation \eqref{Eq: eta}, where $s_p\in\bR$ is such that $\gamma(s_p)=p$. This can be realized as follows: Consider a sequence $\{x_j\}_{j\in\mathbb{N}}$ of points in $\partial\mcM$ converging to a Zeno point $p\in\partial\mcM$ and let $\gamma_j$ denote the curve connecting $x_j$ to $x_{j+1}$. Being the underlying background static, without loss of generality, we parametrize the curve with the coordinate time $t$. This entails, that, as $j\to\infty$, $\tau_j \doteq t(x_{j+1})-t(x_j)\to 0$, whereas there exist $v, v_j \in T_p\mcM$ such that $v \doteq \dot{\gamma}(t(p))$ and $v_j \doteq \dot{\gamma}(t(x_j))$. Denoting by $z:\partial\mcM\times [0,\varepsilon) \to [0, \varepsilon)$, $\varepsilon>0$, a boundary defining function, it holds that $z(x_j)=0$ for all $j\in\mathbb{N}$. Hence the difference quotient $\frac{z(\gamma(\tau_j))-z(\gamma(0))}{\tau_j}$ tends to $0$ as $\tau_j\to 0$, which entails $dz(v_j)=0$. By continuity this entails that $dz(v)=0$, that is, at the Zeno point, the curve $\gamma$ is tangent to the boundary. In other words at $p$ we end up with a glancing lightlike curve and this does not contradict Equation \eqref{Eq: No Glancing} since the phenomenon occurs only in the limit, while, at each finite reflection step, the no glancing hypothesis holds true.

In the following, in order to avoid the occurrence of such accumulation points, we impose a local finiteness condition on the reflected null dynamics. Taking into account Definition \ref{Def: Finite broken null bicharacteristic}, for every
$$
\rho \doteq (x,k_x)\in\mathrm{Char}(P)\cap T^*\mathring{\mcM},
$$
let $\mathfrak{B}(\rho)$ denote the family of all finite broken null bicharacteristics whose image contains $\rho$. Given $\Gamma\in\mathfrak{B}(\rho)$, with reflection parameters
$$
s_1<\cdots<s_N,
$$
and a compact interval $I\subset\bR$, we set
$$
N_I(\Gamma)
\doteq\#
\left\{
	j\in{1,\ldots,N}\;
	\middle|\;
	t(\gamma(s_j))\in I
	\right\}.
$$

\begin{assumption}[Local finiteness of the reflected dynamics]\label{Ass: No Zeno points}
For every
$$\rho \doteq (x,k_x)\in\mathrm{Char}(P)\cap T^*\mathring{\mcM}$$
and every compact interval $I\subset\bR$, it holds that
$$\sup_{\Gamma\in\mathfrak{B}(\rho)}N_I(\Gamma)<+\infty.	$$
\end{assumption}

In other words, along the reflected null trajectory through any interior characteristic point, only finitely many boundary interactions can occur in a bounded interval of the global time coordinate. The corresponding bound is allowed to depend both on $\rho$ and on $I$. Furthermore, no uniform estimate over the characteristic set is assumed since only the lack of Zeno points is necessary. 

\begin{remark}
We stress that Assumption \ref{Ass: No Zeno points} is strictly necessary in view of our hypothesis of infinitesimal convexity of $\partial\mcM$ as per Assumption \ref{Ass: Infinitesimally convex}. On the contrary, under the slightly stronger request of strict null-convexity of $\partial\mcM$, every inextensible broken null geodesic is tame, namely its reflection parameters have no accumulation in any finite time interval  see \cite[Proposition~2.12]{HintzUhlmann}. We recall also that, restricting to this class of backgrounds, is not a viable option since we want to be able to consider half-Minkowski spacetime among the underlying manifolds.
\end{remark}

At last we can establish the sought result concerning the wavefront set of the propagators under investigation.

\begin{theorem}\label{Thm: WF of Propagators}
Let $P$ be the Klein-Gordon operator as in Equation \eqref{Eq: KG equation} on $(\mcM, g)$ a $d$-dimensional, $d\geq 2$, globally hyperbolic spacetime with a timelike boundary abiding by Assumptions \ref{Ass: Infinitesimally convex} and \ref{Ass: No Zeno points}. Then, it holds that 
\begin{equation}\label{Eq: WF of Gkappa}
\mathrm{WF}(\left.G_\kappa\right|_{\mathring{\mcM}\times\mathring{\mcM}})=\mathcal{C}_{\mathrm b}(\mathring{\mcM}),
\end{equation}
where $\left.G_\kappa\right|_{\mathring{\mcM}\times\mathring{\mcM}}$ is the restriction to $\mathring{\mcM}\times\mathring{\mcM}$ of the Pauli-Jordan propagator with Robin boundary conditions, while
\begin{equation}\label{Eq: CbM}
\mathcal{C}_{\mathrm b}(\mathring{\mcM})
\doteq
\left\{
(x,k_x,y,-k_y)\in \mathring{T}^*(\mathring{\mcM}\times\mathring{\mcM})
\;\middle|\;
(x,k_x)\sim_{\mathrm b}(y,k_y)
\right\},
\end{equation}
where the symbol $\sim_{\mathrm b}$ entails that the two points are connected by a finite broken null bicharacteristic, see Definition \ref{Def: Finite broken null bicharacteristic}. 
\end{theorem}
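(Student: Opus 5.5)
The plan is to prove the two inclusions separately, working with the advanced and retarded propagators $G^\pm_\kappa$ and then passing to $G_\kappa=G^-_\kappa-G^+_\kappa$.

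\emph{Preliminary localisation.} Near the diagonal and away from $\partial\mcM$, $G^\pm_\kappa$ agree modulo smooth kernels with the boundaryless propagators of the globally hyperbolic extension $(\mcM^\prime,g^\prime)$ of Corollary \ref{Cor: Globally Hyperbolic Extension}. To see this, take a point of $\mathring{\mcM}$ together with a causally convex neighbourhood $\mathcal U$ whose closure does not meet $\partial\mcM$, and test against $f\in C^\infty_0(\mathcal U)$. Since $J^\pm(\operatorname{supp}f)$ reaches $\partial\mcM$ only after a finite time, the difference between the Robin solution and the free solution on $\mcM^\prime$ is a solution of $Pu=0$ whose data vanish near $\operatorname{supp}f$. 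Uniqueness from Proposition \ref{Prop: Existence of Solutions} and finite propagation speed from Proposition \ref{Prop: Causal Support of Solutions} then give that
\[
G^\pm_\kappa-E^\pm_{\mcM^\prime}
\]
vanishes on a neighbourhood of the diagonal in $\mathcal U\times\mathcal U$. Radzikowski's result therefore gives
\[
\mathrm{WF}(G_\kappa)\cap T^*(\mathcal U\times\mathcal U)=\mathcal C\cap T^*(\mathcal U\times\mathcal U).
\]
In particular the singularity $(x,k_x,x,-k_x)$ with $k_x$ null is present.

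\emph{Upper bound $\subseteq$.} Since $PG_\kappa=0$ in each variable in the interior, $\mathrm{WF}(G_\kappa)\subset\mathrm{Char}(P)\times\mathrm{Char}(P)$. The Robin condition is a normal (coercive, Lopatinskii) boundary condition for the wave operator, so the Melrose--Sj\"ostrand theorem \cite[Thm.~4.1]{Melrose_1978} applies in each variable: $\mathrm{WF}_b$ of $G_\kappa(\cdot,y)$, and its analogue in $(x,k_x)$ with the other variable fixed, is invariant under generalized broken bicharacteristics.

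By Proposition \ref{Prop: no-glancing} every boundary point reached from the interior is hyperbolic. The generalized bicharacteristics through interior points are therefore exactly the BNBs of Definition \ref{Def: Finite broken null bicharacteristic}, with reflection law $\mathcal R^*$. Assumption \ref{Ass: No Zeno points} guarantees that each such curve has finitely many reflections on bounded time intervals, so it is well defined and extendible as a finite BNB over any compact time range.

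Now take $(x,k_x,y,-k_y)\in\mathrm{WF}(G_\kappa)$ and propagate it. Running the BNB backwards in the first variable, and then in the second, until it reaches times before $\operatorname{supp}$ of a cutoff, the causal supports of $G^\pm_\kappa$ show that the point can only survive if its orbit meets the region near the diagonal. The tool here is the decomposition $G_\kappa=G^-_\kappa-G^+_\kappa$ with $G^-_\kappa$ supported in $t(x)\geq t(y)$. There the preliminary step identifies the singularities as diagonal ones. Concretely, I would follow the classical Duistermaat--H\"ormander argument: write $G^-_\kappa=\Theta(t-t')G_\kappa$ locally near the diagonal, and use that singularities of $G^-_\kappa$ away from the diagonal must propagate to $t\to-\infty$ where $G^-_\kappa$ vanishes, unless they come from the diagonal. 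This yields $(x,k_x)\sim_{\mathrm b}(y,k_y)$.

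\emph{Lower bound $\supseteq$.} Start from a diagonal point $(x,k_x,x,-k_x)$, which lies in $\mathrm{WF}(G_\kappa)$ by the localisation step. Propagation of singularities is an equivalence along BNBs, because the theorem can be applied in both directions. Hence every $(y,k_y)$ with $(x,k_x)\sim_{\mathrm b}(y,k_y)$ gives
\[
(y,k_y,x,-k_x)\in\mathrm{WF}(G_\kappa).
\]
Here one uses that at hyperbolic points $\mathrm{WF}_b$ and $\mathrm{WF}$ coincide in the interior after reflection.

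\emph{Main obstacle.} I expect the main obstacle to be the rigorous transfer of the one-variable Melrose--Sj\"ostrand statement to a bidistribution with a boundary in only one factor. I would handle this by pairing with a test function $\varphi(y)$ microlocalised near $(y,k_y)$ and applying H\"ormander's product and wavefront calculus. The pull-back restriction argument of Remark \ref{Rem: Well-defined pull-back} controls the terms with $k_y=0$, which cannot occur by that remark.
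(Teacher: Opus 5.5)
Your plan is correct and follows the paper's proof. The paper's Steps 1--4 and Part 2 consist of exactly your ingredients: identifying $G_\kappa$ near the interior diagonal with the boundaryless Pauli--Jordan propagator, restricting to the characteristic set, propagating along broken bicharacteristics via Melrose--Sj\"ostrand (boundary points are hyperbolic by Proposition~\ref{Prop: no-glancing}, reflections are finite by Assumption~\ref{Ass: No Zeno points}), using causal support to discard spurious singularities, and propagating the diagonal singularities for $\supseteq$.

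The only variation is in $\subseteq$. The paper propagates the singularity of $G_\kappa$ itself to the Cauchy surface through $y$ and lets the support of $G_\kappa$ force it to land at $y$, using $G_\kappa^T=-G_\kappa$ for covectors vanishing in one slot; you run the Duistermaat--H\"ormander argument on $G^\mp_\kappa$ instead. Two small corrections to your sketch. It is this causal-support argument that rules out points $(x,k_x,y,0)$, not Remark~\ref{Rem: Well-defined pull-back}, whose inclusion already presupposes it. And the transfer to bidistributions is done by applying Melrose--Sj\"ostrand directly to $P\otimes\mathbb{I}$ on $\mcM\times\mathring{\mcM}$, since pairing with a fixed test function in $y$ only detects directions with $k_y=0$.
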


\begin{proof}
The proof is divided into two parts, one for each inclusion necessary to establish Equation \eqref{Eq: WF of Gkappa}. 

\vskip .4cm

\noindent{\bf Part 1: Establishing $\mathrm{WF}(\left.G_\kappa\right|_{\mathring{\mcM}\times\mathring{\mcM}})\subseteq\mathcal{C}_{\mathrm b}(\mathring{\mcM})$} -- With a slight abuse of notation, in the following, we write $G_\kappa$ in place of $\left.G_\kappa\right|_{\mathring{\mcM}\times\mathring{\mcM}}$. When propagation in the first variable up to $\partial \mcM$ is considered, we shall regard $G_\kappa$ instead as the kernel $G_\kappa \vert_{\mcM \times \mathring{\mcM}}$. We further split the analysis into four different steps.

\vskip .2cm

{\bf Step 1:} Since $P\otimes\mathbb{I}$ and $\mathbb{I}\otimes P$ are properly supported differential operators on $\mathring{\mcM}\times\mathring{\mcM}$, we can conclude that 
$$(P\otimes\mathbb{I})G_\kappa=(\mathbb{I}\otimes P)G_\kappa=0,$$
entails that $\textrm{WF}(G_\kappa)\subseteq\textrm{Char}(P\otimes\mathbb{I})\cap\textrm{Char}(\mathbb{I}\otimes P)$ or, equivalently,
$$(x,k_x,y,k_y)\in\mathrm{WF}(G_\kappa) \Longrightarrow \begin{cases}
    k_x = 0 \quad \text{or} \quad (x,k_x) \in \operatorname{Char}(P), \\
    k_y = 0 \quad \text{or} \quad (y,k_y) \in \operatorname{Char}(P),
\end{cases}$$
where either $k_x,k_y$ may \textit{a priori} vanish, although they cannot vanish simultaneously. We exclude this option in the Step 4.

\vskip .2cm

{\bf Step 2:} In this part of the proof we establish that
$$(x,k_x,y,k_y)\in\operatorname{WF}(G_\kappa) \; \text{such that} \; (x,k_x)\sim_{\mathrm b} (w,k_w)\Rightarrow (w,k_w,y,k_y)\in\operatorname{WF}(G_\kappa),$$ see Definition \ref{Def: Finite broken null bicharacteristic}.
Considering the operator $(P\otimes\mathbb{I})$, acting on $\mcM \times\mathring{\mcM}$, its principal symbol reads $\widetilde{\sigma}(x,y;k_x,k_y)=\sigma_P(x,k_x)$ and the associated Hamilton vector field is $H_{\widetilde{\sigma}}=(H_{\sigma_P},0)$.
We can thus apply \cite[Thm.~4.1]{Melrose_1978} being the hypotheses fulfilled. In the case under scrutiny, in view of Assumption \ref{Ass: No Zeno points}, we can work with the ordinary wavefront set. The principal symbol $\widetilde{\sigma}$ of the Klein-Gordon operator is of real principal type on $$\textrm{Char}_0(P\otimes\mathbb{I})=\{(x,k_x,y,k_y)\;|\; \sigma_P(x,k_x)=0\;\textrm{and}\; k_x\neq 0\}.$$ As a matter of fact, thereon, it is smooth, real-valued and, moreover, $d\widetilde{\sigma}$ and $\alpha$, the canonical $1$-form on $T^*\mcM$, are linearly independent. In addition $\partial\mcM$ is timelike
hence, non-characteristic for $P$, while the Robin boundary condition are of real Neumann type in the sense of \cite[Eq. (0.7)]{Melrose_1978}. Also the condition on the order-one principal symbol of $P-P^*$ in \cite[(0.8)]{Melrose_1978} is trivial since $P$ is formally self-adjoint. We also observe that, on account of the no-glancing hypothesis in Equation \eqref{Eq: No Glancing}, the bicharacteristic flow also intersects the boundary in hyperbolic points. More precisely let $\pi_\partial:T^*\mcM|_{\partial\mcM}\to T^*\partial\mcM$ be the standard projection map which cancels the component orthogonal to the boundary of a covector. A point $\xi\in T^*\partial\mcM$ is called hyperbolic if $\pi^{-1}_\partial(\xi)\cap\textrm{Char}(P)$ consists of two distinct points. At any such point the broken Hamiltonian flow of \cite{Melrose_1978}, here generated by $H_{\widetilde{\sigma}}$ identifies the incoming and outgoing null covectors $k^\pm_x$ having the same tangential component that is, using Equation \eqref{Eq: Dual Reflection}, $k^+_x=\mathcal R_x^*k^-_x$. In other words, we can conclude that 
$$(x,k_x)\sim_{\mathrm b}(w,k_w)\Longrightarrow(w,k_w,y,k_y)\in\operatorname{WF}(G_\kappa).$$

\vskip .3cm

{\bf Step 3:} Here we establish that, if $y\in\mathring{\mcM}$ then $(y,k^\prime_y,y,-k_y)\in\textrm{WF}(G_\kappa)$ if and only if $k^\prime_y=k_y$. To this end, we can choose a globally hyperbolic, causally convex, open neighbourhood of $y$, $\mathcal{U} \subset\mcM$ such that $\mathcal{U} \cap\partial\mcM=\emptyset$. Hence, recalling that $G_\kappa=G^-_\kappa-G^+_\kappa$ and the advanced and retarded propagators of $P$ on a globally hyperbolic spacetime are unique, we can infer that $G_\kappa|_{\mathcal{U}\times \mathcal{U}}$ must coincide with the Pauli-Jordan propagator of $P$ constructed intrinsically on $(\mathcal{U},g|_\mathcal{U})$. Thereon, it holds that \cite{Duistermaat_1972}
$$\operatorname{WF}(G_\kappa|_{\mathcal{U}\times \mathcal{U}})=\{(x,k_x,y,-k_y)\in T^*(\mathcal{U}\times \mathcal{U}) \setminus \{\boldsymbol{0}\}\;|\; (x,k_x)\sim (y,k_y)\},$$
where $\sim$ here entails that $x$ and $y$ are connected by a lightlike geodesic to which $k_x$ is tangent and $k_y$ is obtained as the parallel transport of $k_x$. By choosing $x=y$ we are forced to set $k_x=k_y$. Hence we obtain the sought statement. 

\vskip .2cm

{\bf Step 4:} To conclude, consider the Cauchy surface $\Sigma_y\doteq\{t(y)\}\times \Sigma$ of $\mcM$. Per construction $y\in\Sigma_y$. Assume that $(x,k_x,y,-k_y)\in\textrm{WF}(G_\kappa)$. Starting from $(x,k_x)$, we consider the Hamiltonian flow generated by $H_{\widetilde{\sigma}}$ and the no-glancing hypothesis combined with Assumption \ref{Ass: No Zeno points} entail that such flow must intersect $\Sigma_y$. Let $y^\prime$ denote such point and $k_{y^\prime}$ the corresponding covector. By construction $(x,k_x)\sim_{\mathrm{b}} (y^\prime,k_{y^\prime})$ and $t(y)=t(y^\prime)$, where $t$ is the time function of $(\mcM,g)$ associated to the underlying timelike Killing field. From Step 3. we also know that $(y^\prime,k_{y^\prime},y, -k_y)\in\operatorname{WF}(G_\kappa)$. Since $G_{\kappa}$ is causally supported it must hold that $y^\prime\in J^+(y)\cup J^-(y)$, but, having the same time coordinate, the only possibility is $y^\prime=y$. At this point we can use Step 3. to conclude that $(y,k_{y^\prime},y, - k_y)\in\operatorname{WF}(G_\kappa)$ if and only if $k_{y^\prime}= k_y$. As a consequence we have shown that, if $(x,k_x,y,-k_y)\in\operatorname{WF}(G_\kappa)$ then $(x,k_x)\sim_b (y,k_y)$. Observe that this is excluding from the wavefront set also points of the form $(x,k_x,y,0)$ and $(x,0,y,k_y)$ with $k_x,k_y\neq 0$. Indeed, Corollary \ref{Cor: G+- Formal Adjoints} entails that, at the level of integral kernels $G^*_\kappa(x,y)=-G_\kappa(y,x)$. As a consequence, if $(x,k_x,y,0)\in\textrm{WF}(G_\kappa)$, then also $(y,0,x,k_x)\in\textrm{WF}(G_\kappa)$. We can focus only on the first case. If $(x,k_x,y,0)\in\operatorname{WF}(G_\kappa)$, with $k_x\neq 0$, we could propagate $(x,k_x)$ to the same Cauchy surface where $y$ lies. As above, causal support of the propagators entails that $x=y$ and consequently, once more by Step 3, $k_x=0$.  This concludes the first half of the proof, namely $\operatorname{WF}(\left.G_\kappa\right|_{\mathring{\mcM}\times\mathring{\mcM}})\subseteq\mathcal{C}_{\mathrm b}(\mathring{\mcM})$.
	
\vskip .4cm

\noindent{\bf Part 2: Establishing $\mathcal{C}_{\mathrm b}(\mathring{\mcM})\subseteq\mathrm{WF}(\left.G_\kappa\right|_{\mathring{\mcM}\times\mathring{\mcM}})$} -- Consider a point $$(x,k_x,y,-k_y)\in T^*(\mathring{\mcM}\times\mathring{\mcM}) \setminus \{\boldsymbol{0}\} \quad \text{such that} \quad (x,k_x) \sim_{\mathrm b} (y,k_y)$$ and suppose that it does not lie in $\operatorname{WF}(G_\kappa)$. Yet, every point $(w,k_w)\in T^*\mathring{\mcM} \setminus \{0\}$ lying on the bicharacteristic flow generated by $H_{\sigma_P}$ is such that $(w,k_w,y,-k_y)\notin\operatorname{WF}(G_\kappa)$. If not, one could apply the reasoning in the first part of the proof claiming that, by backtracking the Hamiltonian flow, $(x,k_x,y,-k_y)$ should lie in $\operatorname{WF}(G_\kappa)$. Yet, using Assumption \ref{Ass: Infinitesimally convex}, we can follow the bicharacteristic flow up to the Cauchy surface of $\mcM$ containing $(y,-k_y)$. The same reasoning as in Step 5 above entails that $(w,k_w)=(y,k_y)$ and, in turn, that $(y,k_y,y,-k_y)\notin\operatorname{WF}(G_\kappa)$. This is not possible on account of Step 4 and, hence, it concludes the proof.
\end{proof}

\begin{corollary}\label{Cor: WF of Gpmkappa}
	Under the same assumptions of Theorem \ref{Thm: WF of Propagators}, denoting by $G^\pm_\kappa$ the advanced and retarded propagators of the Klein-Gordon operator $P$ with Robin boundary conditions, it holds that
	$$\mathrm{WF}(G^\pm_\kappa)=\mathcal{C}^\mp_\mathrm{b}(\mathring{\mcM})\cup N^*\operatorname{Diag}_2(\mathring{\mcM}),$$
	where 
$$\mathcal{C}_{\mathrm b}^{\pm}(\mathring{\mcM})\doteq\left\{(x,k_x,y,-k_y)\in\mathcal{C}_{\mathrm b}(\mathring{\mcM})\;|\; x\in J^{\mp}(y)\right\},$$
while 
$$N^*\operatorname{Diag}_2(\mathring{\mcM})=\{(x,k_x,x,-k_x)\in T^*(\mathring{\mcM}\times\mathring{\mcM})\setminus\{\boldsymbol{0}\}\}.$$
\end{corollary}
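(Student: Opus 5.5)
The plan is to decompose $G^\pm_\kappa$ by time-cutoffs away from the diagonal and use the two structural inputs already available: Theorem~\ref{Thm: WF of Propagators}, which gives $\mathrm{WF}(G_\kappa)=\mathcal{C}_{\mathrm b}(\mathring{\mcM})$, and the support property in Equation~\eqref{Eq: Support at the boundary}. I treat $G^-_\kappa$; the advanced case follows either by the same argument or via Corollary~\ref{Cor: G+- Formal Adjoints}, which gives $G^+_\kappa(x,y)=G^-_\kappa(y,x)$ at the level of kernels.

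\textbf{Off the diagonal.} Pick $(x_0,y_0)$ with $x_0\neq y_0$ and a small product neighbourhood $U\times V$ of it.
\begin{itemize}
\item[\ding{104}] If $x_0\notin J^+(y_0)$, then closedness of the causal relation in a globally hyperbolic spacetime allows us to shrink $U\times V$ so that it misses $\{x\in J^+(y)\}$. There $G^-_\kappa$ vanishes, so its wavefront set contributes nothing over $(x_0,y_0)$.
\item[\ding{104}] If $x_0\in J^+(y_0)$ and $x_0\neq y_0$, two subcases arise. When $t(x_0)>t(y_0)$, shrink so that $t(x)>t(y)$ on $U\times V$. The support properties of $G^\pm_\kappa$ then give $G^+_\kappa=0$ there, hence $G^-_\kappa=G_\kappa$ on $U\times V$. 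It follows that $\mathrm{WF}(G^-_\kappa)$ over this set is $\mathcal{C}_{\mathrm b}$ restricted to $x\in J^+(y)$, which is exactly $\mathcal{C}^+_{\mathrm b}$ over $U\times V$. When $t(x_0)=t(y_0)$ with $x_0\in J^+(y_0)\setminus\{y_0\}$, this is impossible because $t$ is a temporal function, so this subcase is empty.
\end{itemize}
Together these give $\mathrm{WF}(G^-_\kappa)\setminus\pi^{-1}(\mathrm{Diag})=\mathcal{C}^+_{\mathrm b}\setminus\pi^{-1}(\mathrm{Diag})$. Here I read the sign convention $\mathcal{C}^\mp$ as matching the paper's labelling, with $G^-$ supported where $x\in J^+(y)$.

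\textbf{Near the diagonal.} For $y_0\in\mathring{\mcM}$, choose a causally convex, globally hyperbolic neighbourhood $\mathcal{U}\subset\mathring{\mcM}$ of $y_0$, as in Step~3 of the proof of Theorem~\ref{Thm: WF of Propagators}. The goal is to show that $G^-_\kappa|_{\mathcal{U}\times\mathcal{U}}$ coincides with the retarded propagator $G^-_{\mathcal{U}}$ of $P$ intrinsic to $(\mathcal{U},g|_{\mathcal{U}})$. To do this, take $f\in C^\infty_0(\mathcal{U})$ and compare $\mathcal{G}^-_\kappa f$ with $\mathcal{G}^-_{\mathcal{U}}f$ on $\mathcal{U}$. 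Their difference $w$ solves $Pw=0$ on $\mathcal{U}$ and vanishes in the past of $\operatorname{supp} f$ inside $\mathcal{U}$. Uniqueness of the Cauchy problem on the globally hyperbolic $\mathcal{U}$, applied on a Cauchy surface of $\mathcal{U}$ lying to the past of $\operatorname{supp} f$, then gives $w=0$.

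Once this identification holds, the classical result of Duistermaat--H\"ormander \cite{Duistermaat_1972} and Radzikowski yields
\begin{equation*}
\mathrm{WF}(G^-_{\mathcal{U}})=N^*\operatorname{Diag}_2(\mathcal{U})\cup\{(x,k_x,y,-k_y)\;|\;(x,k_x)\sim(y,k_y),\ x\in J^+(y)\}.
\end{equation*}
Over pairs $(x,y)$ near $(y_0,y_0)$, the second set consists of direct null pairs, which belong to $\mathcal{C}^+_{\mathrm b}$. The reflected elements of $\mathcal{C}^+_{\mathrm b}$ sitting over $\mathcal{U}\times\mathcal{U}$ must also appear on the left-hand side. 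To see this I would shrink $\mathcal{U}$ around $y_0$, using the positive distance from $y_0$ to $\partial\mcM$ together with causal convexity: any broken null curve between two points of a sufficiently small $\mathcal{U}$ would have to leave and re-enter it, and causal convexity forbids a causal curve with endpoints in $\mathcal{U}$ from leaving. Hence no reflected pairs occur over such a small $\mathcal{U}\times\mathcal{U}$, and the two descriptions agree.

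Patching the diagonal and off-diagonal regions gives the claim. I expect the main obstacle to be the boundary between regimes, namely points with $x\neq y$ close to the diagonal and with equal times. These are covered by the temporal-function argument above, but one must check carefully that the neighbourhoods can be chosen uniformly, so that the local identification with $G^-_{\mathcal{U}}$ and the off-diagonal identification $G^-_\kappa=G_\kappa$ are compatible.
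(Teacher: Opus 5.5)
Your proof is correct. Off the diagonal it is essentially the paper's argument. The paper writes $G^-_\kappa=\Theta(t-t')G_\kappa$, citing Equation~\eqref{Eq: Gpm from Gkappa}, which was stated only in the static setting of Proposition~\ref{Prop: Existence Robin propagators}. You instead obtain $G^-_\kappa=0$ or $G^-_\kappa=G_\kappa$ locally from the support property \eqref{Eq: Support at the boundary} alone, which is what the general hypotheses of Theorem~\ref{Thm: WF of Propagators} actually provide.

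The genuine difference is over the diagonal. The paper obtains $N^*\operatorname{Diag}_2(\mathring{\mcM})\subseteq\mathrm{WF}(G^-_\kappa)$ from $(P\otimes\mathbb{I})G^-_\kappa=\delta$. It then closes with a brief appeal to causal support at $t=t'$; but confining the singular support at equal times to the diagonal does not, by itself, identify the fibre over a point $(y,y)$. You identify $G^-_\kappa|_{\mathcal{U}\times\mathcal{U}}$ with the intrinsic retarded propagator of a causally convex $\mathcal{U}\subset\mathring{\mcM}$ and import the Duistermaat--H\"ormander computation. This pins down the full fibre over each diagonal point, in both inclusions. It is the retarded counterpart of Step~3 in the proof of Theorem~\ref{Thm: WF of Propagators}: it costs an appeal to the boundaryless theory, but it is more explicit than the paper's last step.

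Three further remarks.
\begin{itemize}
\item No extra shrinking of $\mathcal{U}$ is needed to exclude reflections. A broken null geodesic is a causal curve of $\mcM$, so causal convexity of any $\mathcal{U}$ with $\mathcal{U}\cap\partial\mcM=\emptyset$ already confines it to $\mathcal{U}$, and hence it is unbroken.
\item The obstacle you anticipate is not there. The wavefront set over a base point is determined by any neighbourhood of that point, so no uniform choice of neighbourhoods is required. Diagonal points are settled by the local identification, and all other points by your off-diagonal dichotomy. Equal-time pairs with $x\neq y$ fall into the vanishing case, since $t$ is strictly increasing along causal curves.
\item Your reading of the labels is the consistent one. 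As displayed, $\mathcal{C}^\pm_{\mathrm b}$ is defined through $J^\mp$ and $\mathrm{WF}(G^\pm_\kappa)$ is assigned $\mathcal{C}^\mp_{\mathrm b}$. Together these would place $\mathrm{WF}(G^-_\kappa)$ over $x\in J^-(y)$, contradicting the support of the retarded propagator. The paper's own proof uses the labelling you adopted.
\end{itemize}
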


\begin{proof}
We establish the sought result for $G^-_\kappa \doteq G_\kappa^{-} \vert_{\mathring{\mcM} \times \mathring{\mcM}}$, the other following suit. Using Equation \eqref{Eq: Gpm from Gkappa} we write $G^-_\kappa=\Theta(t-t^\prime)G_k$. Hence, if $t<t^\prime$, $G^-_\kappa=0$ and its wavefront set is empty. On the contrary, if $t>t^\prime$, $G^-_\kappa=G_\kappa$ and, hence, thereon $\mathrm{WF}(G^-_\kappa)=\mathrm{WF}(G_\kappa)$. Combining this identity with the causal support of the retarded propagator, we have obtained $\mathcal{C}_{\mathrm b}^+(\mathring{\mcM})$. We are left with analysing the singular structure as $t=t^\prime$. Yet, it suffices to recall that
 $(P\otimes\mathbb{I})G^-_\kappa \vert_{\mathring{\mcM} \times \mathring{\mcM}}=\delta_{\operatorname{Diag}_2(\mathring{\mcM})}$ and, being $P\otimes\mathbb{I}$ a differential operator on $\mathring{\mcM}\times\mathring{\mcM}$ we can conclude that
 $N^*\operatorname{Diag}_2(\mathring{\mcM})\subseteq\mathrm{WF}(G^-_\kappa)$. Since $G^-_\kappa$ is causally supported, at $t=t^\prime$, only points along the diagonal lie in the singular support and thus $N^*\operatorname{Diag}_2(\mathring{\mcM})$ exhausts the whole set of possibilities. This concludes the proof.
\end{proof}

\section{Hadamard states with Robin boundary conditions}\label{Sec: Hadamard states with Robin boundary conditions}
The goal of this section is twofold. On the one hand we give a definition of Hadamard two-point correlation function for the Klein-Gordon operator with Robin boundary conditions assuming that the underlying geometry abides by the standing assumptions outlined in the previous sections. On the other hand, we individuate a local Hadamard condition that is we constrain the form of the integral kernel of an admissible two-point correlation function on every locally convex geodesic neighbourhood, both reflected and non, see Definition \ref{Def: Regular reflected branch chart}. Our ultimate goal is to prove a counterpart of the celebrated Radzikowski's theorem on globally hyperbolic spacetimes with empty boundary \cite{Radzikowski_1996, Radzikowski_1996_1}, which establishes that the equivalence between the local and microlocal notion of Hadamard states. We observe that this generalizes the content of \cite{Costeri_25} where this assertion has been established on half-Minkowski spacetime. The analysis developed in this section will be complemented in Section \ref{Sec: Existence oF Hadamard States}, where we establish the existence of a Robin Hadamard state in a large class of admissible backgrounds using a deformation argument.

\paragraph{Microlocal Formulation of the Hadamard Condition --} We address the first question, starting from a generalization of \cite[Def. 5.1]{Costeri_25}.

\begin{definition}\label{Def: Robin Hadamard two-point function}
Let $P$ be the Klein-Gordon operator as in Equation \eqref{Eq: KG equation} on $(\mcM, g)$ a $d$-dimensional, $d\geq 2$, globally hyperbolic spacetime with a timelike boundary abiding by Assumptions \ref{Ass: Infinitesimally convex} and \ref{Ass: No Zeno points}. We say that $\omega_{2, \kappa} \in\mathcal{D}^\prime(\mcM\times\mcM)$ is a \textbf{Robin-Hadamard two-point correlation function} if 
\begin{itemize}
\item[\ding{104}] $\omega_{2,\kappa}$ is such that
\begin{equation}\label{Eq: Robin eqs of motion}
\begin{cases}
(P \otimes \mathbb{I}) \omega_{2,\kappa} = 0,  \quad \text{in} \quad \mathring{\mcM} \times \mathring{\mcM}, \\
\left.((\nabla_n+\kappa \mathbb{I}) \otimes \mathbb{I}) \omega_{2,\kappa}\right|_{\partial\mcM} = 0, \, \, \kappa\in \mathbb{R},
\end{cases}        
\end{equation}
where $\nabla_{n} \doteq n^{\mu} \nabla_{\mu}$ is such that $n$ is the unit, inward-pointing, normal vector field to $\partial\mcM$, while $(\cdot)\vert_{\partial \mathcal{M}}$ denotes the pull back to $\partial\mcM$ with respect to the first entry.
\item[\ding{104}] For all $f, f' \in \mathcal{D}(\mcM)$, $\omega_{2,\kappa}$ encodes
\begin{equation}\label{Eq: positivity + CCR}
\omega_{2, \kappa} (\bar{f}, f) \ge 0 \quad [\text{\textbf{Positivity}}] \quad \text{and} \quad \omega_{2, \kappa}(f,f') - \omega_{2, \kappa}(f',f) = i G_\kappa(f,f') \quad \text{\textbf{[CCR]}},
\end{equation}
where $G_\kappa \doteq G^-_\kappa - G^+_\kappa$ is the retarded-minus-advanced Robin propagator as per Proposition \ref{Prop: Existence Robin propagators}. 
\item[\ding{104}] The restriction to $\mathring{\mcM}\times\mathring{\mcM}$ of $\omega_{2,\kappa}$, still denoted by the same symbol with a mild abuse of notation, has the following wavefront set:  
\begin{equation}\label{Eq: Hadamard WF set Robin}
\operatorname{WF}(\omega_{2,\kappa})=\mathcal{C}_{\mathrm b}^\triangleright(\mathring{\mcM})\doteq \left\{(x,k_x, y,k_y)\in\mathcal{C}_{\mathrm b}(\mathring{\mcM})\;\mathrm{and}\; k_x \triangleright 0\right\},
\end{equation}
where $\mathcal{C}_{\mathrm b}(\mathring{\mcM})$ is as per Equation \eqref{Eq: CbM}, while $\triangleright$ entails that the covector is future-pointing.
\end{itemize}
\end{definition}

\begin{remark}
We observe that, in Equation \eqref{Eq: Robin eqs of motion} both the equation of motion and the boundary condition have been only imposed on the first entry since this is automatically inherited also by the second one. As a matter of fact, working at the level of integral kernel, Definition \ref{Def: Robin Hadamard two-point function} entails that $\omega_{2,\kappa}(x,y)=\mu_{\kappa}(x,y)+\frac{i}{2}G_\kappa(x,y)$. It turns out that $G_\kappa$ codifies the antisymmetric part of the two-point correlation function, since it is on-shell and it abides by the boundary conditions on both entries, see Proposition \ref{Prop: Existence Robin propagators}. Consequently, only $\mu_\kappa$, is not \emph{a priori} determined. Yet, being symmetric, since it must abide by the equation of motion and the boundary condition in the first entry, so it must in the second one.
\end{remark}

\noindent From a microlocal viewpoint, the existence of a single Robin-Hadamard two-point correlation function provides control over the whole class of such distributions, as the following result shows. The existence of this class will be established in Section \ref{Sec: Deformation Argument}, where a Robin-Hadamard two-point correlation function is constructed by means of a deformation argument starting from the ground-state representative on a standard static spacetime. For the remainder of this section, we therefore assume that the class of Robin-Hadamard two-point correlation functions is non-empty and fix one representative, denoted by $\omega_{2,\kappa}$. The ensuing result is a straightforward, yet useful, generalization of the corresponding statement for globally hyperbolic spacetimes without boundary.

\begin{corollary}
	Let $\omega^\prime_{2,\kappa}\in\mathcal{D}^\prime(\mcM\times\mcM)$ be a Robin-Hadamard two-point correlation function for the Klein-Gordon operator $P$ as per Definition \ref{Def: Robin Hadamard two-point function}. Then, under the same assumptions of Proposition \ref{Prop: Ground 2-point function is Hadamard}, $\omega^\prime_{2,\kappa}-\omega_{2,\kappa}\in C^\infty(\mcM\times\mcM)$ where $\omega_{2,\kappa}$ is a fixed Robin two-point correlation function, see Section \ref{Sec: Deformation Argument}.
\end{corollary}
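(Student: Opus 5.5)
Following the standard argument of Radzikowski for the boundaryless case, I would set $w\doteq\omega^\prime_{2,\kappa}-\omega_{2,\kappa}$ and first record its algebraic properties. Since both two-point functions satisfy the CCR in Equation \eqref{Eq: positivity + CCR} with the same $G_\kappa$, the antisymmetric parts cancel, so $w$ is symmetric: $w(x,y)=w(y,x)$ at the level of kernels. Moreover $w$ satisfies the Klein--Gordon equation and the Robin condition in the first entry, and hence, by symmetry, also in the second entry (as in the remark following Definition \ref{Def: Robin Hadamard two-point function}).

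The microlocal step comes next. By Equation \eqref{Eq: Hadamard WF set Robin} and the elementary inclusion $\operatorname{WF}(u+v)\subseteq\operatorname{WF}(u)\cup\operatorname{WF}(v)$, I get on $\mathring{\mcM}\times\mathring{\mcM}$ that $\operatorname{WF}(w)\subseteq\mathcal{C}_{\mathrm b}^\triangleright(\mathring{\mcM})$. Symmetry of $w$ gives $(x,k_x,y,k_y)\in\operatorname{WF}(w)$ if and only if $(y,k_y,x,k_x)\in\operatorname{WF}(w)$. An element of $\mathcal{C}_{\mathrm b}^\triangleright$ has the form $(x,k_x,y,-k_y)$ with $k_x\triangleright 0$, and since time orientation is preserved along broken null bicharacteristics (Remark \ref{Rem: Connection to No Glancing}), $k_y\triangleright 0$ as well. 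Its flip $(y,-k_y,x,k_x)$ therefore has past-pointing first covector $-k_y$. It cannot lie in $\mathcal{C}_{\mathrm b}^\triangleright$, which by definition has future-pointing first entries. I also have to exclude elements with one vanishing covector. These are absent from $\mathcal{C}_{\mathrm b}^\triangleright$ because it only contains pairs with both covectors in $\mathrm{Char}(P)$. Hence $\operatorname{WF}(w)=\emptyset$ on $\mathring{\mcM}\times\mathring{\mcM}$, so $w\in C^\infty(\mathring{\mcM}\times\mathring{\mcM})$.

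The main obstacle is upgrading smoothness from the interior to all of $\mcM\times\mcM$, i.e.\ up to the boundary, since Definition \ref{Def: Robin Hadamard two-point function} only constrains the wavefront set on $\mathring{\mcM}\times\mathring{\mcM}$. I would deal with it through the solution theory of Section \ref{Sec: BV problem}.
\begin{enumerate}
\item Pick a Cauchy neighbourhood $\mathcal{N}$ and a time $t_0$.
\item Write $w$, via the exact sequence of Proposition \ref{Prop: Exact Sequence} applied in each entry, as $w=(G_\kappa\otimes G_\kappa)\widetilde{w}$ for a distribution $\widetilde{w}$. This can be taken to be $\widetilde{w}=(P\chi\otimes P\chi)w$ with time cutoffs $\chi$, in the spirit of the proof of Proposition \ref{Prop: Exact Sequence}.
\item Note that $P\chi$ is supported in a time slab $\{t_0<t<t_1\}$. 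Using the freedom in the cutoffs, I can localise $\widetilde{w}$ to where $w$ is already known to be smooth, namely in the interior.
\item A single slab still meets $\partial\mcM$, so I need a genuine argument there. My first choice is the energy and elliptic estimates of Appendix \ref{App: B} together with Theorem \ref{Thm: Existence of Propagators}. These show that $\mathcal{G}^\pm_\kappa$ maps smooth past/future compact data to $C^\infty_\kappa(\mcM)$, smooth up to the boundary. Applied in each variable, this propagates smoothness of the Cauchy data of $w$ to smoothness of $w$ on $\mcM\times\mcM$.
\end{enumerate}

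The delicate point in step 4 is that the Cauchy data $w|_{\Sigma_{t_0}\times\Sigma_{t_0}}$ must be smooth up to $\partial\Sigma$, and this does not follow from interior smoothness alone. If this route stalls, I would fall back on a propagation argument at the boundary. Melrose--Sjöstrand propagation (\cite[Thm.~4.1]{Melrose_1978}) of the wavefront set, used in Theorem \ref{Thm: WF of Propagators}, transports any boundary singularity of $w$ along broken bicharacteristics into the interior. Since no glancing points or Zeno points occur, this propagation is clean, and it contradicts $\operatorname{WF}(w)=\emptyset$ in the interior. This would yield $w\in C^\infty(\mcM\times\mcM)$.
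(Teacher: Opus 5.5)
Your interior argument is correct and is exactly the paper's proof: symmetry of $w$ from the CCR, then $\operatorname{WF}(w)\subseteq\mathcal{C}^\triangleright_{\mathrm b}(\mathring{\mcM})$, then the flip forcing a covector to be both future- and past-pointing. The paper stops at $\operatorname{WF}(w)=\emptyset$ over $\mathring{\mcM}\times\mathring{\mcM}$ and never addresses $\partial\mcM$.

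The genuine gap is in your boundary upgrade. Route (a) needs Cauchy data of $w$ that are smooth up to $\partial\Sigma$, and that is precisely what is unknown. Route (b) rests on the claim that no glancing points occur, and this is false. Proposition \ref{Prop: no-glancing} only excludes tangential contact for geodesics that visit $\mathring{\mcM}$. For $d>2$ the timelike boundary contains null geodesics, as the paper notes right after that proposition, so the glancing region over $\partial\mcM$ is non-empty. Melrose--Sj\"ostrand generalized bicharacteristics through it can stay in $\partial\mcM$ forever, e.g.\ for the flat boundary of half-Minkowski, which satisfies Assumption \ref{Ass: Infinitesimally convex}. A boundary singularity of $w$ therefore need never reach the interior, and there is no contradiction to extract.

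This step cannot be repaired, because smoothness up to the boundary is not implied by Definition \ref{Def: Robin Hadamard two-point function}. Take half-Minkowski with $d\geq 3$ and $\kappa=0$.
\begin{enumerate}
\item Solve the Klein--Gordon equation on full Minkowski with real Cauchy data that are even in $z$ and singular only at one point of $\{z=0\}$, in a codirection tangent to $\{z=0\}$.
\item The solution $\phi$ is real, even and smooth for $z\neq 0$. Its wavefront set lies over $\{z=0\}$ with $k_z=0$, since the null lines it propagates along stay inside $\{z=0\}$.
\item Hence $\partial_z\phi$ is odd, with wavefront set disjoint from the conormal of $\{z=0\}$, so its pull-back to the boundary is well defined and vanishes.
\item Then $\omega_{2,\kappa}+\phi\otimes\phi$ satisfies the Neumann condition, positivity (as $\phi$ is real), the same CCR and the same interior wavefront set. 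That is, it meets all of Definition \ref{Def: Robin Hadamard two-point function}.
\item Yet $\phi\otimes\phi$ is not smooth up to $\partial\mcM$, because $\operatorname{WF}(\phi)$ is non-empty and contains no conormal directions to $\{z=0\}$.
\end{enumerate}
What you and the paper actually prove is $\omega'_{2,\kappa}-\omega_{2,\kappa}\in C^\infty(\mathring{\mcM}\times\mathring{\mcM})$. Smoothness on $\mcM\times\mcM$ would need an extra hypothesis controlling singularities at the boundary.
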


\begin{proof}
Consider any Robin two-point correlation function $\omega^\prime_{2,\kappa}\neq\omega_{2,\kappa}$ as per Definition \ref{Def: Robin Hadamard two-point function}. Equation \eqref{Eq: positivity + CCR} entails that the antisymmetric part of $\omega^\prime_{2,\kappa}$ and $\omega_{2,\kappa}$ coincide. Hence $\Lambda \doteq \omega^\prime_{2,\kappa}-\omega_{2,\kappa}$ is a symmetric distribution. Furthermore, using Equation \eqref{Eq: Hadamard WF set Robin}, it holds that 
$$\mathrm{WF}(\Lambda)\subseteq\mathrm{WF}(\omega^\prime_{2,\kappa})\cup\mathrm{WF}(\omega_{2,\kappa})=\{(x,k_x,y,k_y)\in\mathcal{C}_{\mathrm b}(\mathring{M})\;\mathrm{and}\;k_x\triangleright 0\}.$$
Yet, being $\Lambda$ symmetric, if $(x,k_x,y,k_y)\in\mathrm{WF}(\Lambda)$ then $(y,k_y,x,k_x)\in\mathrm{WF}(\Lambda)$. At the same time the condition $(x,k_x,y,k_y)\in\mathcal{C}_{\mathrm b}(\mathring{M})$ entails that $k_x\triangleright 0\Longrightarrow -k_y\triangleright 0$. Thus, $(y,k_y,x,k_x)$ cannot lie in the wavefront set of $\Lambda$ and the only option left is $\mathrm{WF}(\Lambda)=\emptyset$, which is the desired conclusion.
\end{proof}

\subsection{Local Hadamard form and Recursion Relations}\label{Sec: Local Hadamard form and Recursion Relations}

In this section we investigate the local formulation of a Robin-Hadamard two-point correlation function for a Klein-Gordon field, generalizing the standard one on globally hyperbolic spacetimes with empty boundary \cite{Kay_Wald:1991,Radzikowski_1996,Radzikowski_1996_1} and the half-Minkowski analysis of \cite{Costeri_25}. We will rely mainly on the geometric structures introduced in Section \ref{Sec: Geodesic and Reflected Geodesic Distances Claudio} and, in particular, directed propagation along null geodesics will be described exploiting that the underlying background can be realized as a submanifold of a globally hyperbolic extension $(\mcM^\prime,g^\prime)$. Observe that the latter is not necessarily required to be static as the construction outlined in Proposition \ref{Prop: Extension of Globally Hyperbolic} is highly non-unique, \textit{cf.}, Corollary \ref{Cor: Globally Hyperbolic Extension}. Reflected propagation is subordinated instead to the existence of regular reflected branch charts as per Definition \ref{Def: Regular reflected branch chart}. For clarity of the exposition, where possible, we specialize each analytic construction on a single regular one-reflected branch and only afterwards extend the procedure to an arbitrary finite reflected branch by iteration. This choice has also the additional merit of making more transparent the comparison with our benchmark, the half-Minkowski scenario, see \cite{Costeri_25}.

In the remainder of this section $(\mcM,g)$ denotes a $d$-dimensional, $d\geq2$, globally hyperbolic spacetime with timelike boundary which abides by Assumptions \ref{Ass: Infinitesimally convex}, \ref{Ass: Finite regular reflections} and \ref{Ass: No Zeno points}. 

\begin{definition}\label{Def: One-reflection Hadamard branch domain}
	Let $(\mathcal V_{1,\alpha},\Omega_{1,\alpha})$ be a regular $1$-reflected branch chart as per Definition \ref{Def: Regular reflected branch chart}. Let $\mathcal U^\prime\subset\mcM^\prime$ be a geodesically convex normal neighbourhood and set $\mathcal U\doteq\mathcal U^\prime\cap\mcM$. We call an open subset
	$$\Omega^{\mathrm H}_{1,\alpha}\subseteq\Omega_{1,\alpha}\cap(\mathring{\mathcal U}\times\mathring{\mathcal U})$$
	a {\bf one-reflection Hadamard branch domain} if, for every $(x,y)\in\mathcal Z_{1,\alpha}\cap\Omega^{\mathrm H}_{1,\alpha}$, the incoming reflected Synge function and the outgoing ordinary Synge function admit a smooth one-sided extensions at the unique reflection point as per Lemma \ref{Lem: Matching N-reflected Synge world functions}.
\end{definition}

Observe that the existence of a one-reflection Hadamard branch domain is not an additional assumption, as it is guaranteed, for every regular $1$-reflected branch chart, by Condition \emph{2.} of Assumption \ref{Ass: Finite regular reflections}.

\paragraph{Directed Hadamard parametrix --} Before delving into a detailed analysis of the structure of the modifications to the Hadamard parametrix due to the reflections at $\partial\mcM$, we exploit the existence of this global geometric structure to characterize the directed component. More precisely, given $\mathcal U^\prime\subset\mcM^\prime$, a geodesically convex normal neighbourhood, we call $\mathcal U\doteq\mathcal U^\prime\cap\mcM$. To avoid confusion we denote by $P^\prime$ the counterpart on $(\mcM', g')$ of the Klein-Gordon operator $P$ as in Equation \eqref{Eq: KG equation} and $P^\prime|_{\mathring{\mcM}}=P$. Since $(\mcM^\prime, g^\prime)$ has an empty boundary, we can rely on the theory of Hadamard parametrices in this setting, see,  {\it e.g.}, \cite{Friedlander_1975,Kay_Wald:1991,Radzikowski_1996,Radzikowski_1996_1}. Hence, calling $\sigma^\prime$ the Synge world function on $\mathcal{U}^\prime$, it holds that, on $\mathcal{U}^\prime\times\mathcal{U}^\prime$, the integral kernel of the {\em Hadamard parametrix} $\widetilde{H}\equiv\widetilde{H}^+$ reads
\begin{gather}\label{Eq: Boundaryless Parametrix}
\widetilde{H}^\pm(x,y) = \lim_{\epsilon \rightarrow 0^+}\frac{\Gamma(\frac{d}{2}-1)}{2(2\pi)^{\frac{d}{2}}}\left( \frac{\widetilde{U}(x,y)}{\sigma^\prime_{\pm\epsilon}(x,y)^{\frac{d-2}{2}}} + \delta_d\widetilde{V}(x,y) \ln \left(\frac{\sigma^\prime_{\pm\epsilon}(x,y)}{\lambda^2}\right)\right),
\end{gather}
where $\lambda\in \mathbb{R}\setminus \{0\}$ is a reference energy scale, $\Gamma$ is the Euler Gamma function, $\widetilde{\sigma}_{\pm\epsilon} (x,y) \doteq\widetilde{\sigma}(x,y) \pm i \epsilon (t(x) - t(y)) + \epsilon^2$, with $\sigma$ the Synge world function as per Equation \eqref{Eq: Synge World Function Interior}. Furthermore, $t: \mcM^\prime \to \mathbb{R}$ is the underlying time function, and
\begin{equation*}
\delta_d =
\begin{cases}
1 \, \, \text{if} \, \, d \, \, \text{is even}, \\
0\, \, \text{if} \, \, d \, \, \text{is odd}.
\end{cases}
\end{equation*}

\begin{remark}
	We highlight that we have introduced for later convenience the disambiguation between the $\pm i\epsilon$ contribution in Equation \eqref{Eq: Boundaryless Parametrix}. From a physical and microlocal viewpoint, the $+i\epsilon$-prescription is tantamount to considering future pointing covectors in the first entry of the wavefront set of $\widetilde{H}$, see Equation \eqref{Eq: Restricted future broken relation}. This accounts for the condition of requiring positive frequencies in quantum field theory, more precisely referred to as the positive spectrum condition.
\end{remark}

\noindent In the following we shall define the component of the Hadamard parametrix on $(\mcM,g)$ which codifies the singular structure of an admissible two-point correlation function as the restriction of the one on the ambient space. Observe that this procedure can be read as the counterpart at the level of states of the notion of {\em F-locality} discussed in \cite{Kay1992FLocality}.

\begin{definition}\label{Def: Directed Hadamard Parametrix}
Given $(\mathcal{M}, g)$ satisfying the aforementioned assumptions, if $\mathcal{U}$ is a causally convex, geodesic neighbourhood such that $\mathcal{U}\cap\partial\mathcal{M}=\emptyset$, we call {\bf directed (component of the) Hadamard parametrix} $H\equiv H^+ \doteq \widetilde{H}^+|_{\mathcal{U}\times\mathcal{U}}\in\mathcal{D}^\prime(\mathcal{U}\times\mathcal{U})$. Similarly we set $H^- \doteq \widetilde{H}^- |_{\mathcal{U}\times\mathcal{U}}\in\mathcal{D}^\prime(\mathcal{U}\times\mathcal{U})$.
\end{definition}

\noindent Given $H$ as per Definition \ref{Def: Directed Hadamard Parametrix}, its integral kernel descends from Equation \eqref{Eq: Boundaryless Parametrix} as
\begin{equation}\label{Eq: Directed Hadamard Parametrix}
	H^\pm(x,y) =  \lim_{\epsilon \rightarrow 0^+}\frac{\Gamma(\frac{d}{2}-1)}{2(2\pi)^{\frac{d}{2}}}\left( \frac{U(x,y)}{\sigma_{\pm\epsilon}(x,y)^{\frac{d-2}{2}}} + \delta_d V(x,y) \ln \left(\frac{\sigma_{\pm\epsilon}(x,y)}{\lambda^2}\right)\right),
\end{equation}
where $\sigma$ is the Synge world function which, by construction, is such that $\sigma=\widetilde{\sigma}|_{\mathcal{U}\times\mathcal{U}}$. Furthermore it also holds that $U=\widetilde{U}|_{\mathcal{U}\times\mathcal{U}}$ and $V=\widetilde{V}|_{\mathcal{U}\times\mathcal{U}}$.

Yet, as already mentioned several times in this paper and also in view of the analysis in \cite{Costeri_25}, in presence of a manifold with a non-empty timelike boundary, we must expect that the underlying parametrix catches the existence of a singularity when two-points $x$ and $y$ are connected by a broken null geodesic, see Remark \ref{Rem: BNG}. This enforces the existence of a new contribution to the Hadamard parametrix for every reflected null branch. The following definition serves the purpose to codify this observation starting from the case of just one reflection. 

\begin{definition}\label{Def: Local Hadamard State}
	We say that $\omega_{2,\kappa}\in\mathcal{D}^\prime(\mcM\times\mcM)$ is a {\bf two-point function of local Robin-Hadamard form} associated to the Klein-Gordon operator $P$ as per Equation \eqref{Eq: KG equation}, if, in addition to Equations \eqref{Eq: Robin eqs of motion} and \eqref{Eq: positivity + CCR}, it satisfies the following conditions: 
	\begin{itemize}
		\item[\ding{104}] if $\mathcal{U}$ is a causally convex, geodesic neighbourhood such that $\mathcal{U}\cap\partial\mathcal{M}=\emptyset$, there exists $W_\kappa\in C^\infty(\mathcal{U}\times\mathcal{U})$ such that the integral kernel of $\omega_{2,\kappa}$ restricted to $\mathcal{U}\times\mathcal{U}$ reads
		\begin{equation*}
			\omega_{2, \kappa} \vert_{\mathcal{U} \times \mathcal{U}}(x,y)  \doteq H(x,y) + W_\kappa(x,y), 
		\end{equation*}
		where $H \equiv H^+$ is the directed Hadamard parametrix as per Equation \eqref{Eq: Directed Hadamard Parametrix}.
		
		\item[\ding{104}] if $\Omega^{\mathrm H}_{1,\alpha}$ is a one-reflection Hadamard branch domain as per Definition \ref{Def: One-reflection Hadamard branch domain}, there exist $U^\prime,V^\prime,W^\prime_\kappa\in C^\infty(\Omega^{\mathrm H}_{1,\alpha})$ such that
		\begin{gather}
			\omega_{2,\kappa}|_{\Omega^{\mathrm H}_{1,\alpha}}(x,y)=H|_{\Omega^{\mathrm H}_{1,\alpha}}(x,y)+H^{\mathrm{ref}}_{\kappa;1,\alpha}(x,y)+W^\prime_\kappa(x,y),\quad H^{\mathrm{ref}}_{\kappa;1,\alpha}\equiv H^{\mathrm{ref},+}_{\kappa;1,\alpha}\notag\\
			H^{\mathrm{ref},\pm}_{\kappa;1,\alpha}(x,y)=\lim_{\epsilon\to 0^+}\frac{\Gamma(\frac d2-1)}{2(2\pi)^{\frac d2}}\left(\frac{U^\prime(x,y)}{\sigma_{-,1,\alpha,\pm\epsilon}(x,y)^{\frac{d-2}{2}}}+\delta_d V^\prime(x,y)\ln\left(\frac{\sigma_{-,1,\alpha,\pm\epsilon}(x,y)}{\lambda^2}\right)\right),\label{Eq: Reflected Hadamard Parametrix}
		\end{gather}
		where $H \equiv H^+$ is still as per Equation \eqref{Eq: Directed Hadamard Parametrix}. In addition
		$$\sigma_{-,1,\alpha,\pm\epsilon}(x,y)\doteq\sigma_{-,1,\alpha}(x,y)\pm i\epsilon\bigl(t(x)-t(y)\bigr)+\epsilon^2,$$
		where $\sigma_{-,1,\alpha}$ is the branchwise reflected Synge world function of Definition \ref{Def: Reflected Synge's world function}. By convention we set $\sigma_{-,1,\alpha,+\epsilon}=\sigma_{-,1,\alpha,\epsilon}$. We refer to $H^{\mathrm{ref}}_{\kappa;1,\alpha}$ as the {\bf reflected component of the Hadamard parametrix} on the branch $(1,\alpha)$. 
	\end{itemize}
\end{definition}

\noindent With a slight abuse of nomenclature, we follow \cite{Costeri_25} and use the terminology {\bf (local) Robin Hadamard parametrix}. On an interior geodesically convex neighbourhood this denotes the directed kernel $H$, while, on a one-reflection Hadamard branch domain, it denotes $H+H^{\mathrm{ref}}_{\kappa;1,\alpha}$. In the following we shall also discuss the case with arbitrary reflections adopting a similar nomenclature. Distinguishing between all possible scenarios would lead to introducing a complicated terminology on top of an already rather daunting notation.

In the following we investigate how to determine the functions $U,V$ and  $U^\prime, V^\prime$ as power series in $\sigma$ and $\sigma_{-,1,\alpha}$ respectively. The coefficients satisfy suitable recursion relations which are determined as solution of suitable transport equations. Following the same spirit of \cite{Costeri_25} we discuss separately even and odd dimensions, starting from the former.  

\subsubsection{Hadamard Recursion Relations: Even Dimensional Case}\label{Sec: Hadamard Recurstion Relations even case}
In this subsection we assume that the dimension $d$ of underlying manifold $\mcM$ is even. We work on a fixed one-reflection Hadamard branch domain $\Omega^{\mathrm H}_{1,\alpha}$ and, in order to simplify the notation, throughout  we suppress the branch label $(1,\alpha)$ and we set
$$
\sigma_-\doteq\sigma_{-,1,\alpha}.
$$
In this section the $\epsilon$-prescription in the definition of the directed or reflected component of the Hadamard parametrix plays no role and hence we shall not disambiguate between $\pm i\epsilon$. Mirroring the standard construction of the Hadamard recursion relations on globally hyperbolic spacetimes without boundary, see, {\it e.g.}, \cite{Garabedian_1964, Decanini:2005eg}, we consider on $\Omega^{\mathrm H}_{1,\alpha}$, the following expansions
\begin{equation} \label{Eq: expansion of U,V,U',V'}
    \begin{cases}
    U(x,y) = \sum_{j=0}^{\frac{d-4}{2}} u_j (\underline{x},z, \underline{x}^\prime,z^\prime) \sigma^j \\
        V(x,y) \sim \sum_{k=0}^\infty v_k (\underline{x},z,\underline{x}^\prime,z^\prime) \sigma^k \\
        U^\prime(x,y) = \sum_{j=0}^{\frac{d-4}{2}} u^\prime_j (\underline{x},z, \underline{x}^\prime,z^\prime) \sigma_-^j \\
        V^\prime(x,y) \sim \sum_{k=0}^\infty v^\prime_k (\underline{x},z,\underline{x}^\prime,z^\prime) \sigma_-^k
    \end{cases},
\end{equation}
where $\sim$ denotes an asymptotic expansion. The sum both of $U$ and $U'$ runs over a finite number of indices, since higher terms would yield a smooth contribution in Equation \eqref{Eq: Directed Hadamard Parametrix} which could be reabsorbed in the definition of $W_\kappa$ or of $W^\prime_\kappa$. If $d=2$, we set $U=U^\prime=0$.

\paragraph{Directed Hadamard coefficients --} In view of our definition of the directed Hadamard parametrix, the coefficients $u_j$, $j=0,\ldots,\frac{d-4}{2}$, and $v_k$, $k\in\mathbb N\cup\{0\}$, appearing in Equation \eqref{Eq: expansion of U,V,U',V'} are the restrictions of the Hadamard coefficients of $H$ on a geodesically convex normal neighbourhood $\mathcal{U}^\prime$ such that $\mathcal{U} \doteq \mathcal{U}^\prime\cap\mcM\neq\emptyset$. 

More precisely, we obtain a family of transport equations with prescribed initial data. These can be expressed in terms of the coinciding point limits of the underlying coefficients, namely $[u_j]$ and $[v_j]$, where we recall that 
\begin{equation}\label{Eq: Coinding point limit}
		[\cdot]:C^\infty(\mathcal{U}\times\mathcal{U})\to\mathcal{U},\quad F(x,y)\mapsto [F](x) \doteq F(x,x).
\end{equation}
In turn, since the equation for each of the coefficients $u_j, v_k$ depends on the preceding term $u_{j-1},v_{k-1}$ hence forming a set of recursion relations, consistency entails that only $[u_0]$ is free to be chosen. Covariance and physical motivations require to select its value so to be the same as in Minkowski spacetime, that is $[u_0]=1$. The outcome is known as the {\em (directed) Hadamard recursion relations}.  

In the case under scrutiny, these are identical to the corresponding ones known for the Hadamard parametrix on globally hyperbolic spacetimes with empty boundary since the directed component is not sensible to the presence of the boundary, contrary to the reflected counterpart. For this reason we do not derive them leaving an interested reader to the literature for further details, see, {\it e.g.}, \cite{Decanini:2005eg, Garabedian_1964}. We limit ourselves to reporting them for completeness:

\begin{flalign} 
	&\begin{cases}\label{Eq: relations for U even dimension}
		2\sigma^\mu \partial_\mu u_0 + (\sigma^{\mu}{}_{\mu}-d) u_0 = 0, \\
		[u_0] = 1, \\\\
		Pu_j + (2j+4-d)\sigma^\mu \partial_\mu u_{j+1} + \big[(j+1)(2j+\sigma^\mu{}_{\mu} + 4-2d) + \left(1-\frac{d}{2}\right) (\sigma^{\mu}{}_{\mu} -d)\big] u_{j+1} = 0, \\
		[u_{j+1}] = -\frac{[Pu_j]}{2j+4-d}, \hspace{0.2cm} 0\le j \le \frac{d}{2}-3, 
	\end{cases}&
\end{flalign}    

\begin{flalign} \label{Eq: relations for V even dimension}  
	&\begin{cases}
		Pu_{\frac{d}{2}-2} + 2\sigma^\mu \partial_\mu v_0 + (\sigma^\mu{}_{\mu}-2)v_0 = 0, \\
		[v_0] = -\frac{[Pu_{\frac{d}{2}-2}]}{d-2}, \\\\
		Pv_k + 2(k+1) \sigma^\mu \partial_\mu v_{k+1} + (k+1)(\sigma^\mu{}_{\mu} +2k) v_{k+1} = 0, \\
		[v_{k+1}] = - \frac{[Pv_k]}{(k+1)(d+2k)}, \hspace{0.2cm} k \in \mathbb{N}\cup\{0\}.
	\end{cases}&
\end{flalign}

Observe that, if $\dim\mcM=d=2$, there is no contribution from the $U$-component of $H$ and, hence, only the $V$-term survives. In this case one must replace the initial condition in Equation \eqref{Eq: relations for V even dimension} with $[v_0]=1$. In the following corollary we summarize two remarkable properties of the Hadamard coefficients of the directed parametrix, the first proven in \cite{Moretti_2000}.

\begin{corollary}\label{Cor: Symmetry Directed Hadamard Coefficients}
Let $\mathcal{U}^\prime$ be a geodesically convex normal neighbourhood of $\mcM'$ and let $\mathcal{U} \doteq \mathcal{U}^\prime\cap\mcM\neq\emptyset$. Let $\{u_j\}_{j=0,\dots,\frac{d-4}{2}}$, $\{v_k\}_{k\in\mathbb{N}\cup\{0\}}$ be the Hadamard recursion coefficients as per Equations \eqref{Eq: relations for U even dimension} and \eqref{Eq: relations for V even dimension}. Then they are smooth on $\mathcal{U}\times\mathcal{U}$ and they are symmetric, that is 
		$$u_j(x,y)=u_j(y,x)\quad\mathrm{and}\quad v_k(x,y)=v_k(y,x).$$
\end{corollary}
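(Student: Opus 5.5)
Smoothness and symmetry both reduce to statements about the boundaryless ambient spacetime $(\mcM^\prime,g^\prime)$, so the plan is to prove everything on $\mathcal{U}^\prime\times\mathcal{U}^\prime$ and then restrict. By construction the directed coefficients are restrictions of the Hadamard coefficients $\widetilde u_j,\widetilde v_k$ of $\widetilde H$ built on $\mathcal{U}^\prime$, where no boundary enters. Smoothness and symmetry on $\mathcal{U}^\prime\times\mathcal{U}^\prime$ therefore pass to $\mathcal{U}\times\mathcal{U}$: restriction of a smooth function to the closed subset $\mathcal{U}\times\mathcal{U}$ (with boundary points allowed, in the sense of smooth functions on manifolds with boundary) is smooth, and symmetry is inherited pointwise.

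For smoothness on $\mathcal{U}^\prime\times\mathcal{U}^\prime$ I will integrate the transport equations \eqref{Eq: relations for U even dimension} and \eqref{Eq: relations for V even dimension} along the unique geodesic from $y$ to $x$. The operator $\sigma^\mu\partial_\mu$ is $s\frac{d}{ds}$ along the affinely parametrized geodesic $s\mapsto\exp_y(s\,\cdot)$. For $u_0$ one gets the closed form $u_0=\Delta^{1/2}$, the Van Vleck–Morette determinant, which is smooth and nonvanishing on a geodesically convex set and satisfies $[\Delta^{1/2}]=1$. The higher coefficients follow by induction: $u_{j+1}$ (respectively $v_{k+1}$) is given by an explicit integral of the form
\begin{equation*}
u_{j+1}(x,y)=-\frac{\Delta^{1/2}(x,y)}{2j+4-d}\int_0^1 s^{\,j+\cdots}\,\Delta^{-1/2}(x_s,y)\,(Pu_j)(x_s,y)\,ds,
\end{equation*}
with $x_s$ the point at parameter $s$. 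Differentiation under the integral sign then shows smoothness, since the integrand depends smoothly on $(x,y,s)$ by smoothness of $\sigma$ and of $\exp$ on the convex set. The coefficient $v_0$ is treated the same way with source $Pu_{\frac d2-2}$. This yields smoothness of each coefficient; only the asymptotic series for $V$ is not claimed to converge.

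Symmetry is the main obstacle, because the recursion is written with derivatives in the first entry only, so symmetry is not manifest. I would invoke Moretti's result \cite{Moretti_2000}, which is exactly this statement on a boundaryless spacetime; the paper itself attributes the property to that reference. If a self-contained argument is wanted, I would follow his strategy. First, $u_0=\Delta^{1/2}$ is manifestly symmetric. Next, $H(x,y)-H(y,x)$ must equal, up to a smooth term, a local bisolution with the antisymmetric singular structure of the causal propagator. The alternative route would be to show that the transposed coefficients $u_j(y,x)$, $v_k(y,x)$ satisfy the same transport equations in the $x$ variable with the same coincidence data, using $[\nabla_{\mu^\prime}F]=\nabla_\mu[F]-[\nabla_\mu F]$ (Synge's rule). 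Uniqueness of solutions of the Fuchsian transport ODE with prescribed value at $s=0$ would then conclude. The delicate point in that route is verifying that $P_y$ applied to the coefficients reproduces $P_x$ via the bitensor commutation identities, which is the heart of Moretti's proof. I would cite it rather than redo it.
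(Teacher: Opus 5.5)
Your proposal is correct and follows the paper's route. The paper likewise obtains smoothness and symmetry by restriction from the Hadamard coefficients on the boundaryless ambient spacetime $(\mcM^\prime,g^\prime)$, citing \cite{Moretti_2000} for symmetry; your explicit transport-integral argument for smoothness only spells out what the paper takes as standard. One correction to your commentary: Moretti's proof works by a local Wick rotation to the Riemannian case, not by verifying bitensor commutation identities relating $P_y$ and $P_x$, but since you only cite the result this does not affect your argument.
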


\noindent Observe that this last corollary applies also to neighbourhoods with a non-empty intersection with $\partial\mcM$ since these properties are all inherited from the corresponding ones in the ambient spacetime $(\mcM^\prime,g^\prime)$.

\begin{remark}\label{Rem: Convergence of U and V Series}
	As already mentioned in \cite{Costeri_25} in the analysis of half-Minkowski spacetime, the series in Equation \eqref{Eq: expansion of U,V,U',V'} are asymptotic. Focusing on $U$ and $V$, if the underlying manifold is analytic and if $\mathcal{U}$ is the restriction to $\mcM^\prime$ of a causally convex, geodesic neighbourhood $\mathcal{U}^\prime$, then one can establish convergence to an analytic function \cite[Thm. 4.3.1]{Friedlander_1975}. 
\end{remark}

For the reflected coefficients we cannot draw automatically the same conclusions of Remark \ref{Rem: Convergence of U and V Series}. Therefore, in the following, we establish an ancillary result which will allow us to conclude that the expansion for $V^\prime$ in Equation \eqref{Eq: expansion of U,V,U',V'} is asymptotic. Observe that the series for $U^\prime$ consists of finitely many terms, hence well-posedness is guaranteed \emph{a priori}. We also stress that the proof of the following lemma is an adaptation to the case in hand of a Borel summation argument as discussed for example in \cite[Thm. 1.2.6]{Hormander_1990}.

\begin{lemma}\label{Lem: Reflected asymptotic summation}
Let $\Omega^{\mathrm H}_{1,\alpha}$ be a one-reflection Hadamard branch domain as per Definition \ref{Def: One-reflection Hadamard branch domain} and let $\mathcal Z_{1,\alpha}$ be as per Equation \eqref{Eq: Z} setting $N=1$. Fix $(x_0,y_0)\in\mathcal Z_{1,\alpha}\cap\Omega^{\mathrm H}_{1,\alpha}$ and denoting by $p\in\partial\mcM$ the corresponding reflection point, there exists
\begin{enumerate}
\item $\mathcal{K}_0\Subset\Omega^{\mathrm H}_{1,\alpha}$ a relatively compact open neighbourhood such that $(x_0,y_0) \in \mathcal{K}_0$ and $d\sigma_{-,1,\alpha}\neq 0$ thereon. 
\item $\mathcal{K}_\partial\subset\mcM\times\mathring{\mcM}$, a relatively compact open neighbourhood containing $(p,y_0)$
and in turn contained in the open neighbourhood established by item {\em 2.} of Assumption \ref{Ass: Finite regular reflections}.
\end{enumerate}
Denoting by $\mathcal{K}$ either $\mathcal{K}_0$ or $\mathcal{K}_\partial$, for any $\{a_k\}_{k\in\mathbb N\cup\{0\}}\subset C^\infty(\mathcal{K})$, there exists $A\in C^\infty(\mathcal{K})$ such that, for every $M\geq0$,
$$A-\sum_{k=0}^{M}a_k\sigma_{-,1,\alpha}^k=\sigma_{-,1,\alpha}^{M+1}R_M,$$
where $R_M\in C^\infty(\mathcal{K})$.
\end{lemma}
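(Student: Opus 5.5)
The plan is to adapt the Borel lemma \cite[Thm. 1.2.6]{Hormander_1990} by straightening the level set of $\sigma_-\doteq\sigma_{-,1,\alpha}$ and summing in the transverse variable. The work splits into three steps: fix the neighbourhoods, reduce to a coordinate statement, and then sum.

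\emph{Step 1: choice of $\mathcal K_0$ and $\mathcal K_\partial$.}
\begin{itemize}
\item[\ding{104}] By the proof of Proposition \ref{Prop: Reflected null branch atlas}, $d_x\sigma_-=-v_{1,\alpha}(x,y)^\flat$. This vanishes nowhere on $\Omega_{1,\alpha}$, since $v_{1,\alpha}$ generates a broken geodesic with one reflection.
\item[\ding{104}] By continuity, a small relatively compact open neighbourhood $\mathcal K_0\Subset\Omega^{\mathrm H}_{1,\alpha}$ of $(x_0,y_0)$ can therefore be chosen with $d\sigma_-\neq0$ on $\overline{\mathcal K_0}$.
\item[\ding{104}] At the boundary point, item 2 of Assumption \ref{Ass: Finite regular reflections} provides the one-sided smooth extension of $\sigma_-$ near $(p,y_0)$.
\item[\ding{104}] Lemma \ref{Lem: Matching N-reflected Synge world functions} gives $\nabla_n\sigma_-(p,y_0)=-\nabla_n\sigma(p,y_0)$, which is nonzero by the no-glancing property of Proposition \ref{Prop: no-glancing}. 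Hence $d\sigma_-\neq0$ at $(p,y_0)$, and one may take $\mathcal K_\partial$ to be a small relatively compact neighbourhood inside the extension domain on which $d\sigma_-\neq0$.
\end{itemize}

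\emph{Step 2: reduction to a coordinate statement.} Since $d\sigma_-\neq 0$ on $\overline{\mathcal K}$, the function $s\doteq\sigma_-$ can be completed to local coordinates $(s,w)$ near each point of $\overline{\mathcal K}$.
\begin{itemize}
\item[\ding{104}] For $\mathcal K_\partial$ this is done in $\mcM\times\mathring{\mcM}$ as a manifold with boundary. Where convenient, the data are first extended smoothly across $\partial\mcM$ by Seeley extension and restricted at the end.
\item[\ding{104}] By compactness, finitely many such charts cover $\overline{\mathcal K}$. This reduces the problem to finitely many local problems, which must then be glued.
\item[\ding{104}] Gluing cannot be done naively by a partition of unity. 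If $\chi_i$ is a partition of unity and $A_i$ solves the problem in chart $i$, then $A=\sum_i\chi_iA_i$ is again a solution: each $A_i-\sum_{k\le M}a_k s^k$ is divisible by $s^{M+1}$, and $\sum\chi_i=1$, so the same holds for $A$.
\end{itemize}

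\emph{Step 3: Borel summation in one chart.}
\begin{itemize}
\item[\ding{104}] In the coordinates $(s,w)$, fix a cutoff $\chi\in C^\infty_0(\bR)$ with $\chi=1$ near $0$. Set
\[
A(s,w)=\sum_{k\ge0}a_k(s,w)\,s^k\chi(s/\varepsilon_k),
\]
where $\varepsilon_k\downarrow0$ is chosen so that $\|\partial^\beta(a_ks^k\chi(s/\varepsilon_k))\|_{L^\infty}\le 2^{-k}$ for all $|\beta|\le k-1$. The required bound uses $s^k\varepsilon_k^{-j}$ on $|s|\lesssim\varepsilon_k$ together with uniform bounds on the $a_k$ over the relatively compact chart.
\item[\ding{104}] The series then converges in every $C^m$, so $A$ is smooth.
\item[\ding{104}] For the remainder, write
\[
A-\sum_{k\le M}a_ks^k=\sum_{k\le M}a_ks^k\bigl(\chi(s/\varepsilon_k)-1\bigr)+\sum_{k>M}a_ks^k\chi(s/\varepsilon_k).
\]
The first sum vanishes near $s=0$ and so is $s^{M+1}$ times a smooth function. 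The second sum is smooth and vanishes to order $M+1$ in $s$, term by term and uniformly. By Hadamard's lemma, i.e. Taylor's formula with integral remainder in $s$, it equals $s^{M+1}R_M$ with $R_M$ smooth.
\end{itemize}

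\emph{Main obstacle.} The delicate point is the boundary case $\mathcal K_\partial$. There, smoothness of $\sigma_-$ holds only as a one-sided extension, and the straightening coordinates must be compatible with the boundary. I would handle this first by extending $\sigma_-$ and the $a_k$ across $\partial\mcM$ via Seeley extension, which keeps $d\sigma_-\neq0$ near $(p,y_0)$. I would then perform the summation on the open extended set and restrict back.
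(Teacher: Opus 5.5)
Your proposal is correct and follows the paper's proof. The common steps are: $d\sigma_{-,1,\alpha}\neq0$ in the interior from the endpoint-derivative identity, and at $(p,y_0)$ from Lemma~\ref{Lem: Matching N-reflected Synge world functions} with no-glancing; extension across $z=0$; straightening $\sigma_{-,1,\alpha}$ into a coordinate; Borel summation in that variable; Taylor's formula. The paper works in a single chart with a cutoff in the other variables and quotes \cite[Thm.~1.2.6]{Hormander_1990} for prescribed $r$-derivatives, whereas you write the cutoff series explicitly and glue finitely many charts. Despite your phrase ``cannot be done naively'', that gluing works exactly as your own computation shows.

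One small fix: the $a_k$ are only assumed smooth on the open set $\mathcal K$, while your charts cover $\overline{\mathcal K}$, so the $a_k$ may be unbounded there. Impose the $2^{-k}$ bounds that define $\varepsilon_k$ on a compact exhaustion of $\mathcal K$ rather than in $L^\infty$ over each chart; this still gives convergence in $C^\infty(\mathcal K)$.
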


\begin{proof}
Since Proposition \ref{Prop: Reflected null branch atlas} entails that $d\sigma_{-,1,\alpha}\neq0$ on $\mathcal{Z}_{1,\alpha}$, we can choose a relatively compact open subset $\mathcal{K}_0\Subset\Omega^{\mathrm H}_{1,\alpha}$ containing $(x_0,y_0)$ and such that, thereon, $d\sigma_{-,1,\alpha}\neq 0$. Focusing on the reflection point $p\in\partial\mcM$, item {\em 2.} of Assumption \ref{Ass: Finite regular reflections} entails that we can choose $\mathcal{K}_\partial$ in such a way that $\sigma_{-,1,\alpha}$ is smooth thereon. Items {\em 1.} and {\em 2.} of Lemma
\ref{Lem: Matching N-reflected Synge world functions} combined with Assumption \ref{Ass: Infinitesimally convex} yield
$$\sigma_{-,1,\alpha}(p,y_0)=\sigma(p,y_0)=0\quad\textrm{and}\quad\partial_z\sigma_{-,1,\alpha}(p,y_0)=-\partial_z\sigma(p,y_0)\neq 0,$$
where $z$ is a boundary function such that $\partial_z$ is the inward pointing vector at $p$ normal to $\partial\mcM$. Hence, we can choose $\mathcal{K}_\partial$ so that $d\sigma_{-,1,\alpha}\neq 0$ thereon.

Let now $\mathcal{K}$ denote either $\mathcal{K}_0$ or $\mathcal{K}_\partial$. In the latter case, we implicitly employ Gaussian normal coordinates centered at $p\in\partial\mcM$ and we extend both $\sigma_{-,1,\alpha}$ and each $a_k$, $k\in\mathbb{N}\cup\{0\}$, smoothly across $z=0$ hence working in a suitable neighbourhood of $\mcM^\prime\times\mcM$. In either case, on $\mathcal{K}$ we can promote $r \doteq \sigma_{-,1,\alpha}$ to be a local coordinate and $\mathcal{K}\cap\mathcal{Z}_{1,\alpha}$ corresponds to the locus $r=0$. Denoting the remaining coordinates collectively by $\zeta=(\zeta^1,\dots,\zeta^{2d-1})$, $d=\dim\mcM$, for every $j\geq0$ we can set $b_j(\zeta)\doteq\left.\partial_r^j\left(\sum_{k=0}^{j}a_k(r,\zeta)r^k\right)\right|_{r=0}.$
After multiplying by a cut-off $\chi(\zeta)$ equal to one on a smaller relatively compact neighbourhood of $(x_0,y_0)$ we are exactly in the setting of Theorem \cite[Thm.~1.2.6]{Hormander_1990}. This entails the existence of a smooth function
$A(r,\zeta)$ such that
$$\left.\partial_r^jA(r,\zeta)\right|_{r=0}=b_j(\zeta),\qquad j\geq0.$$
For every $M \geq 0$ 
$$S_M(r,\zeta)\doteq\sum_{k=0}^{M}a_k(r,\zeta)r^k\Longrightarrow\left.\partial_r^j(A-S_M)\right|_{r=0}=0,
\qquad 0\leq j\leq M.$$
Using Taylor's formula we can infer the existence of $R_M\in C^\infty(\mathcal{K})$ such that
$$A-S_M=r^{M+1}R_M.$$
This is the sought conclusion.
\end{proof}

\paragraph{Expansion of the Reflected Hadamard Parametrix --}  We now focus on $H^{\mathrm{ref}}_{\kappa;1,\alpha}$ in Equation \eqref{Eq: Reflected Hadamard Parametrix} and we establish transport equations for the coefficients of $U^\prime,V^\prime$ in Equation \eqref{Eq: expansion of U,V,U',V'}. The main difference from the preceding case is that initial conditions will be replaced by compatibility constraints with the coefficients of $U$ and $V$, which codify the Robin boundary conditions. As already specified, we work on a one-reflection Hadamard branch domain $\Omega^{\mathrm H}_{1,\alpha}$ and we suppress the branch labels from $\sigma_{-,1,\alpha}$ and from the reflected coefficients if no confusion can arise. Near the reflection point we choose Gaussian normal coordinates, with $z=0$ on $\partial\mcM$ and $\partial_z$ coinciding with the inward pointing, normalized and normal vector field. Equation \eqref{Eq: Derivative operators with sigma, sigma-} entails that

\begin{flalign}\label{Eq: PHR}
	\notag PH^{\mathrm{ref}}_\kappa &=  \left(1- \frac{d}{2} \right) \big[ 2 (\sigma_-)^\mu \partial_\mu u'_0 + ((\sigma_-)^{\mu}{}_{\mu} -d) u'_0] \sigma_-^{-\frac{d}{2}} \\
	&+ \notag \sum_{j=0}^{\frac{d}{2} - 3}  \big[Pu'_j +(4+2j-d) (\sigma_-)^{\mu}\partial_{\mu} u'_{j+1} + (j+1)(2j + (\sigma_-)^{\mu}{}_{\mu} + 4 - 2d) u'_{j+1} \\
	& \notag + \left(1-\frac{d}{2} \right) ((\sigma_-)^{\mu}{}_{\mu} -d) u'_{j+1}\big] \sigma_-^{j-\frac{d}{2} + 1} + \notag \big[ Pu'_{\frac{d}{2}-2} + 2 \sigma_-^\mu \partial_\mu v'_0 + ((\sigma_-)^{\mu}{}_{\mu} -2) v'_0\big] \sigma_-^{-1} \\ &+  \sum_{k=0}^{\infty} \big[Pv'_k + 2 (k+1) (\sigma_-)^\mu \partial_\mu v'_{k+1}+ (k+1)(2k + (\sigma_-)^{\mu}_{\mu}) v'_{k+1} \big] \sigma_-^k \ln(\sigma_-)
\end{flalign}
In order for $PH^{\mathrm{ref}}_\kappa$ to lie in $C^\infty(\Omega^{\mathrm H}_{1,\alpha})$, one must require that the coefficients of the singular terms in Equation \eqref{Eq: PHR} vanish. Yet, one cannot supplement the ensuing transport equations with initial conditions in terms of coinciding point limits, rather one needs to assign the behaviour of the coefficients at $\partial\mcM$. To this end, we evaluate 

\begin{flalign} 
\label{Eq: derivative BC}
 \notag       \partial_z(H+H^{ref}_\kappa) & \big|_{z=0} \simeq \left(1- \frac{d}{2}\right) (u_0 - u'_0) (\partial_z \sigma) \sigma^{-\frac{d}{2}} \bigg\vert_{z=0} \\ &+ \notag \sum_{j=0}^{\frac{d}{2} - 3} \big[\partial_z(u_j+ u'_j) + \frac{1}{2} (2j+4-d)(u_{j+1} - u'_{j+1}) (\partial_z \sigma)\big] \sigma^{j-\frac{d}{2} + 1} \bigg\vert_{z=0} \\&+ \notag \big[\partial_z (u_{\frac{d}{2} -2} - u'_{\frac{d}{2} -2}) + (v_0 - v'_0) (\partial_z \sigma) \big] \sigma^{-1} \bigg\vert_{z=0} \\ &+ \sum_{k=0}^{\infty}   \big[ \partial_z(v_k + v'_k) + (k+1) (v_{k+1} - v'_{k+1}) (\partial_z \sigma)\big] \sigma^k \ln(\sigma) \bigg\vert_{z=0}, 
\end{flalign}
where $\simeq$ denotes that we are omitting on the right hand side all smooth contributions and
\begin{flalign} 
\label{Eq: k term in BC}        \kappa(H+H^{\mathrm{ref}}_\kappa) \big|_{z=0} &= \kappa \sum_{j=0}^{\frac{d}{2} -2} (u_j + u_j') \sigma^{j - \frac{d}{2} + 1} \bigg\vert_{z=0} + \kappa \sum_{k=0}^{\infty} (v_k + v'_k) \sigma^k \ln(\sigma) \bigg\vert_{z=0}.
\end{flalign}
Here we exploited that, on account of Lemma \ref{Lem: Properties of Reflected Geodesic Distance}, $\sigma \vert_{z=0} = \sigma_- \vert_{z=0}$ and $\partial_z \sigma_- \vert_{z=0} = - \partial_z \sigma \vert_{z=0}$.  To derive the boundary conditions satisfied by the coefficients $\{u_j'\}_{j=0}^{\frac{d-4}{2}}$ and $\{v_k'\}_{k=0}^{\infty}$, we compare the same orders in powers of $\sigma_-$. 

\noindent In view of these considerations we can infer that the Hadamard recursion relations for the coefficients $U^\prime,V^\prime$. The zeroth-order reads 
\begin{equation}\label{Eq: recursion order 0}
	\begin{cases}
	2 \sigma_-^\mu \partial_{\mu} u'_0 + ((\sigma_-)^{\mu}{}_{\mu} -d) u'_0= 0, \\
	u'_0 |_{z=0} = u_0 |_{z=0}.
	\end{cases}
\end{equation}

\begin{remark}\label{Rem: solution of u'0}
Observe that, while Equation \eqref{Eq: recursion order 0} and the following Equation \eqref{Eq: recursion for U', V'} are structurally identically to their counterpart in half-Minkowski spacetime studied in \cite{Costeri_25}, one can expect marked differences at the level of solutions. In particular, on $(\bH^d,\eta)$ it turned out that $u_0=u^\prime_0$ if $d>2$. On a generic background among those admitted by our assumptions, this is no longer guaranteed, as one can realize by a close inspection of Equation \eqref{Eq: relations for U even dimension} combined with Lemma \ref{Lem: Properties of Reflected Geodesic Distance}. As a matter of fact it turns out that
$$(\sigma_-)^\mu{}_\mu=\sigma^\mu{}_\mu-2K\nabla_z\sigma-2K^{\mu\nu}\frac{\nabla_\mu\sigma\nabla_\nu\sigma}{\nabla_z\sigma},$$
where the term at the denominator cannot vanish due to the no-glancing hypothesis. Here $K_{\mu\nu}$ is the extrinsic curvature at $\partial\mcM$ and $K$ its trace. Yet an explicit solution of Equation \eqref{Eq: recursion order 0} can be written in terms of the {\em reflected van-Vleck Morette determinant}, namely 
\begin{equation}\label{Eq: Reflected Van-Vleck-Morette Determinant}
	u^\prime_0(x,y)=\Delta^\frac{1}{2}(x,y),\quad\Delta(x,y) \doteq \frac{|\det(-\nabla_\mu\nabla_{\nu^\prime}\sigma_-(x,y))|} {|g_x|^{\frac{1}{2}}|g_y|^{\frac{1}{2}}},
\end{equation}
where $|g_x|,|g_y|$ are the absolute values of the determinant of the metric at the endpoints $(x,y)\in\Omega^{\mathrm H}_{1,\alpha}$, while $\nabla_{\nu^\prime}$ denotes a derivative in the second entry of $\sigma_-$. First of all, we observe that, in view of the regularity of the underlying $1$-reflected branch chart, namely Definitions \ref{Def: Regular reflected datum} and \ref{Def: Regular reflected branch chart}, together with the endpoint variation identity in Lemma \ref{Lem: Properties of Reflected Geodesic Distance}, the mixed Hessian $\nabla_\mu\nabla_{\nu^\prime}\sigma_-$ is non-degenerate. Secondly, a slight generalization of \cite[Sec. 7.2]{Poisson_2011}, entails that, since $\sigma_-$ satisfies the eikonal equation \eqref{Eq: Eikonal Equation for sigma_-}, then
$$d=(\sigma_-)^\mu{}_\mu+\sigma^\mu\ln(u^\prime_0)_\mu.$$
Hence one can reapply verbatim the derivation in \cite[Section 14.2]{Poisson_2011} to infer that 
$$2 \sigma_-^\mu \partial_{\mu} u'_0 + ((\sigma_-)^{\mu}{}_{\mu} -d) u'_0= 0.$$
In order to verify the boundary condition for $u^\prime_0$ in Equation \eqref{Eq: recursion for U', V'}, we fix $x\in\partial\mcM$ and $y\in\mathring{\mcM}$. Using item {\em 3b.} in Lemma \ref{Lem: Properties of Reflected Geodesic Distance}, we can infer that 
$$|\det(-\nabla_\mu\nabla_{\nu^\prime}\sigma_-)|=|\det(-\nabla_\mu\nabla_{\nu^\prime}\sigma)|.$$
The absolute value cancels the sign switch in the $z$-derivative of the Synge world function and of its reflected counterpart. Since $u_0$ has the same functional form of $u^\prime_0$ with $\sigma_-$ replaced by $\sigma$ this entails the sought relation, namely $\left.u^\prime_0\right|_{z=0}=\left.u_0\right|_{z=0}$. 
\end{remark}

\noindent The higher order coefficients are:

\begin{equation} 
\label{Eq: recursion for U', V'}
    \begin{cases}
        Pu'_j + (2j+4-d) \sigma_-^\mu \partial_\mu u'_{j+1} +\big[ (j+1)(2j+4 + (\sigma_-)^\mu{}_{\mu} -2d) + \left(1 - \frac{d}{2}\right)((\sigma_-)^{\mu}{}_{\mu} -d) \big] u'_{j+1} = 0, \\
        (\partial_z+\kappa) (u_j+u'_j) |_{z=0} + \frac{1}{2}(2j+4-d)(\partial_z \sigma) (u_{j+1}-u'_{j+1})|_{z=0} = 0 ,\hspace{0.2cm} 0 \le j \le \frac{d}{2}-3, \\\\
        Pu'_{\frac{d}{2}-2} + 2\sigma_-^\mu \partial_\mu v'_0 + ((\sigma_-)^\mu{}_{\mu}-2) v'_0 = 0, \\
        (\partial_z+\kappa) (u_{\frac{d}{2}-2} + u'_{\frac{d}{2}-2}) |_{z=0} + (\partial_z \sigma) (v_0-v'_0) |_{z=0} = 0, \\\\
        Pv'_k + 2(k+1) \sigma_-^\mu \partial_\mu v'_{k+1} + (k+1)((\sigma_-)^\mu{}_{\mu} + 2k)v'_{k+1} = 0, \\
        (\partial_z+\kappa)(v_k+v^\prime_k)|_{z=0}+(k+1)\partial_z\sigma(v_{k+1}-v'_{k+1})|_{z=0}=0,\hspace{0.2cm} k\in \mathbb{N}\cup\{0\}.
    \end{cases}
\end{equation}

Having established the Hadamard recursion relations for the reflected branch, we can combine them with Lemma \ref{Lem: Reflected asymptotic summation} to conclude the existence of the functions $U^\prime$ and $V^\prime$ in Equation \eqref{Eq: Reflected Hadamard Parametrix}.

\begin{proposition}\label{Prop: Existence one-reflection reflected coefficients even}
Given $\dim\mcM=d\geq 4$ even, setting $\Omega^{\mathrm H}_{1,\alpha}$ as per Definition \ref{Def: One-reflection Hadamard branch domain} and fixing $(x_0,y_0)\in\mathcal{Z}_{1,\alpha}\cap\Omega^{\mathrm H}_{1,\alpha}$ Equations \eqref{Eq: Reflected Van-Vleck-Morette Determinant} and \eqref{Eq: recursion for U', V'} admit unique smooth solutions 
$$\{u'_j\}_{j=0,\ldots,\frac{d-4}{2}},\qquad\{v'_k\}_{k\in\mathbb N\cup\{0\}}$$ on a suitable relatively compact open neighbourhood $\mathcal{K}\subset\Omega^{\mathrm{H}}_{1,\alpha}$. Hence there exists $U^\prime,V^\prime\in C^\infty(\mathcal{K})$ such that
$$U'\doteq\sum_{j=0}^{\frac{d-4}{2}}u'_j\sigma_-^j\quad\textrm{and}\quad V'\sim\sum_{k=0}^{\infty}v'_k\sigma_-^k,$$
where $\sim$ denotes an asymptotic expansion. If $d=2$, the same conclusion holds for $V'$.
\end{proposition}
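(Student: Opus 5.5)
The plan is to treat each equation in \eqref{Eq: recursion for U', V'} as a first-order linear transport equation along the integral curves of the vector field $\sigma_-^\mu\partial_\mu$ (acting in the first entry, $y$ fixed), with data prescribed on the boundary hypersurface $\{z=0\}$ rather than at coincidence. By the endpoint variation identity \eqref{Eq: Derivative of Sigma-}, $\sigma_-^\mu=-v^\mu(x,y)$, so these integral curves are, up to reparametrisation, the incoming segments of the reflected broken geodesics from $x$ to $y$. They reach $\partial\mcM$ transversally at the reflection point by Proposition \ref{Prop: no-glancing}. Concretely, I would work on the neighbourhood $\mathcal{K}_\partial$ of $(p,y_0)$ from Lemma \ref{Lem: Reflected asymptotic summation}, where $\sigma_-$ and $\sigma$ both extend smoothly up to $z=0$ (Assumption \ref{Ass: Finite regular reflections}, item 2). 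I would then take $\mathcal{K}$ to be the union of the transport tubes through $\mathcal{K}_\partial\cap\{z=0\}$ that contain $(x_0,y_0)$, shrunk so that it is relatively compact in $\Omega^{\mathrm H}_{1,\alpha}$ up to its boundary face.

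The first key step is the zeroth order. Here $u'_0$ is given explicitly by the reflected van Vleck--Morette determinant \eqref{Eq: Reflected Van-Vleck-Morette Determinant}. It is smooth and nonvanishing because the mixed Hessian of $\sigma_-$ is nondegenerate on a regular branch, and Remark \ref{Rem: solution of u'0} already verifies both the transport equation and the condition $u'_0|_{z=0}=u_0|_{z=0}$. Uniqueness follows from the standard substitution $u'_{j}=u'_0\,w_j$. With it, each subsequent equation reduces to
\[
(2j+4-d)\,\sigma_-^\mu\partial_\mu w_{j+1}+c_j\,w_{j+1}=-\,\frac{Pu'_j}{u'_0},
\]
or to its analogue for $v'_k$. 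This is a linear ODE along each characteristic with smooth coefficients, so uniqueness holds once boundary data are fixed.

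The second step is to read off the boundary data inductively. The boundary relation in \eqref{Eq: recursion for U', V'} at level $j$ can be solved for $u'_{j+1}|_{z=0}$ in terms of the already known $u_j,u'_j,u_{j+1}$. This requires dividing by $(2j+4-d)\partial_z\sigma|_{z=0}$, which is nonzero: $2j+4-d\neq0$ for $j\le\frac d2-3$, and $\partial_z\sigma\neq0$ at $z=0$ by no-glancing, as in the proof of Lemma \ref{Lem: Reflected asymptotic summation}. The same holds for $v'_0$, with factor $\partial_z\sigma$, and for $v'_{k+1}$, with factor $(k+1)\partial_z\sigma$. Note that these data involve $\partial_z u'_j$ at $z=0$, which is available because $u'_j$ was constructed smooth up to the boundary at the previous step. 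The directed $u_j,v_k$ are smooth there by Corollary \ref{Cor: Symmetry Directed Hadamard Coefficients}. Since $\{z=0\}$ is noncharacteristic for $\sigma_-^\mu\partial_\mu$, the method of characteristics (a flow-box chart straightening the field) gives unique smooth solutions on $\mathcal{K}$. Induction over $j$ and then over $k$ yields all the coefficients. Finally, $U'$ is a finite sum, and Lemma \ref{Lem: Reflected asymptotic summation} applied to $a_k=v'_k$ produces $V'$ with the stated asymptotic expansion. For $d=2$ the same argument applies to $V'$ alone, with $v'_0$ fixed by the boundary relation coming from $[v_0]=1$.

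The main obstacle is geometric rather than algebraic. One must make sure that a single neighbourhood $\mathcal{K}$ can be chosen on which every characteristic through a point of $\mathcal{K}$ actually reaches $\{z=0\}$ inside the region where the one-sided extensions are smooth, and does so uniformly in $k$. Controlling this requires compactness and transversality: I would first pick a small boundary patch around $(p,y_0)$ with $\partial_z\sigma$ bounded away from zero, and then flow it for a bounded parameter interval. A secondary subtlety is that the characteristics degenerate where $v=0$; this is harmless because $v\neq0$ on a $1$-reflected branch.
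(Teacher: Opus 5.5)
Your proposal is correct and follows essentially the same route as the paper: read off the boundary values of $u'_{j+1}$, $v'_0$, $v'_{k+1}$ from the Robin matching relations by dividing by the non-vanishing factors $(2j+4-d)\partial_z\sigma$, $\partial_z\sigma$ and $(k+1)\partial_z\sigma$, solve the transport equations inductively along $\sigma_-^\mu\partial_\mu$ (non-characteristic for $\{z=0\}$), and obtain $V'$ from the Borel summation of Lemma \ref{Lem: Reflected asymptotic summation}. Your explicit choice of $\mathcal{K}$ as a tube of characteristics issuing from the boundary patch near $(p,y_0)$ is more careful than the paper's ``without loss of generality'' at that step; also, for $d=2$ the leading datum is simply $v'_0|_{z=0}=v_0|_{z=0}$, the logarithmic analogue of $u'_0|_{z=0}=u_0|_{z=0}$, rather than something coming from $[v_0]=1$.
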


\begin{proof}
The proof is logically divided into two separate steps, existence of the coefficients with prescribed properties and identification of the functions $U^\prime, V^\prime$.

\vskip .2cm

{\bf Step 1:} Given $p\in\partial\mcM$ the reflection point associated with $(x_0,y_0)$ in agreement with Definition \ref{Def: One-reflection Hadamard branch domain} and $z$ a boundary function such that the inward pointing normal vector at $p$ is $\partial_z$, item {\em 2.} of Lemma \ref{Lem: Matching N-reflected Synge world functions} entails that 
$\partial_z\sigma(p,y)\neq 0$ and $\partial_z\sigma_-(p,y)=-\partial_z\sigma(p,y)\neq 0$. By continuity and considering the open subset $\mathcal{K}_\partial\subset\mcM\times\mathring{\mcM}$ including the point $(p,y_0)$ chosen as in the proof of Lemma \ref{Lem: Reflected asymptotic summation}, these inequalities hold true also on $\mathcal{K}_\partial$.

This entails that the vector field $X_-\doteq\sigma_-^\mu\partial_\mu$ in Equation \eqref{Eq: recursion for U', V'} is non-characteristic at the boundary corresponding to $z=0$, since $\sigma^z\neq 0$. In other words Equation \eqref{Eq: recursion for U', V'} is a collection of first-order equations with boundary data determined by the Robin matching conditions.
Considering $d\geq 4$ and recalling that Equation \eqref{Eq: recursion order 0} determines  $u^\prime_0$ we proceed by induction. Suppose that $u^\prime_j$ has been constructed for $0\leq j\leq\frac{d}{2}-3$. The corresponding Robin matching condition in Equation \eqref{Eq: recursion for U', V'} yields
$$u^\prime_{j+1}|_{z=0}=u_{j+1}|_{z=0}+\frac{2(\partial_z+\kappa)(u_j+u^\prime_j)|_{z=0}}{(2j+4-d)\partial_z\sigma|_{z=0}}.$$
Since the denominator does not vanish, the transport equation for $u^\prime_{j+1}$ takes the form
$$(2j+4-d)\sigma_-^\mu\partial_\mu(u^\prime_{j+1})+a_j u'_{j+1}=-Pu'_j,$$
where $a_j$ is the smooth coefficient in Equation \eqref{Eq: recursion for U', V'}. Since $\sigma_-^\mu\partial_\mu$ is non-characteristic at $z=0$, this transport equation admits a unique smooth solution with the prescribed boundary value, see, {\it e.g.}, \cite[Chap 1. Sec. 5]{FritzJohn}. Without loss of generality we can choose the domain of smoothness as $\mathcal{K}$, the same set in the proof of Lemma \ref{Lem: Reflected asymptotic summation}. In particular this entails that the iterative procedure yields all coefficients $u'_0,\ldots,u'_{\frac{d}{2}-2}$ and they are smooth in the first entry up to the boundary located at $z=0$. Switching to $v^\prime_0$, the matching boundary condition yields
$$v^\prime_0|_{z=0}=v_0|_{z=0}+\frac{(\partial_z+\kappa)(u_{\frac{d}{2}-2}+u^\prime_{\frac{d}{2}-2})|_{z=0}}	{\partial_z\sigma|_{z=0}}.$$
As above Equation \eqref{Eq: recursion for U', V'} entails existence of a smooth solution for $v^\prime_0$ in $\mathcal{K}$. The same induction procedure used for $u^\prime_j$ can be now applied to infer existence of the smooth coefficients $v^\prime_k$, particularly up to the boundary in the first entry at $z=0$.

\vskip .2cm

{\bf Step 2:} Since the expansion of $U'$ is finite, the function $U^\prime\doteq\sum_{j=0}^{\frac{d-4}{2}}u^\prime_j\sigma_-^j$ exists and it is smooth. We focus hence on the family of smooth coefficients $\{v^\prime_k\}_{k\geq 0}$. Under the given hypotheses we can apply Lemma \ref{Lem: Reflected asymptotic summation} with $a_k=v^\prime_k$, concluding the existence of $V^\prime\in C^\infty(\mathcal{K})$ such that, for every $M\geq0$,
\begin{equation}\label{Eq: VM}
V^\prime-V^\prime_M \doteq V^\prime-\sum_{k=0}^{M}v^\prime_k\sigma_-^k=\sigma_-^{M+1}R_M,
\end{equation}
where $R_M\in C^\infty(\mathcal{K})$. Recall that $\mathcal{K}$ is chosen as in the proof of Lemma \ref{Lem: Reflected asymptotic summation}. This is the sought statement.
Moreover, denoting by $p\in\partial\mcM$ the reflection point associated with $(x_0,y_0)$, the coefficients $u'_j$, $v'_k$ and the amplitudes $U^\prime$, $V^\prime$ can be chosen to admit smooth extensions to a suitable neighbourhood of $(p,y_0)$ in $\mcM\times\mathring{\mcM}$. Thereon, Equation \eqref{Eq: VM} holds true with $R_M$ smooth up and including points at the boundary in the first entry. We conclude by observing that the case $d=2$ can be dealt by following the same procedure and therefore we omit the proof.
\end{proof}

\begin{remark}\label{Rem: Symmetry of the Reflected Hadamard Coefficients}
Given a one-reflection Hadamard branch domain $\Omega^{\mathrm H}_{1,\alpha}$ and assuming that $\mcM$ is even dimensional, Proposition \ref{Prop: Existence one-reflection reflected coefficients even} entails that Equation \eqref{Eq: recursion for U', V'} establishes existence of smooth Hadamard coefficients $\{u^\prime_j\}_{j=1,\dots,\frac{d-4}{2}}$, $\{v^\prime_k\}_{k\in\mathbb{N}\cup\{0\}}$ abiding by the boundary condition in the first entry. Yet, one cannot conclude that these coefficients are symmetric and it does not seem possible to easily adapt the reasoning in \cite{Moretti_2000} due to the the presence of the boundary condition. We do not address the problem of proving the symmetry of the remaining coefficients in this paper, since we expect it to be a long detour from our main goal.
\end{remark}

\subsubsection{Hadamard Recursion Relations: Odd Dimensional Case} \label{Sec: Hadamard Recursion Relations: Odd dimensions}
In this subsection we assume that the dimension $d$ of the underlying manifold $\mcM$ is odd. Since the procedure is, {\em mutatis mutandis}, structurally identical to the even dimensional one, we limit ourselves to reporting the main identities without delving into their derivation. Also in this case there is no reason to disambiguate between the $\pm i\epsilon$-prescription in the directed or reflected component of the Hadamard parametrix. Hence we omit it. As before, given a one-reflection Hadamard branch domain $\Omega^{\mathrm H}_{1,\alpha}$, we consider the following expansions
\begin{equation} \label{Eq: expansion of U,V,U',V' odd}
	\begin{cases}
		U(x,y) \sim \sum_{j=0}^{\infty} u_j (x,y) \sigma^j(x,y) \\
		U'(x,y) \sim \sum_{j=0}^{\infty} u'_j (x,y) \sigma_-^j(x,y)
	\end{cases},
\end{equation}
where $\sim$ denotes an asymptotic expansion. As in the even scenario the coefficients of $U$ are the restriction to $\mathcal{U}=\mathcal{U}^\prime\cap\mcM$ of the standard Hadamard counterparts, where $\mathcal{U}^\prime$ is a geodesically convex, normal neighbourhood of the ambient globally hyperbolic spacetime $(\mcM^\prime,g^\prime)$ with empty boundary. Similarly to the preceding section, the coefficients $\{u_j\}_{j\in\mathbb{N}}$ must abide by the standard Hadamard recursion relations, namely
\begin{equation} \label{Eq: relations for U odd dimension}
	 \begin{cases}
        2\sigma^\mu \partial_\mu u_0 + (\sigma^{\mu}{}_{\mu}-d) u_0 = 0, \\
        [u_0] = 1, \\\\
        Pu_j + (2j+4-d)\sigma^\mu \partial_\mu u_{j+1} + \big[(j+1)(2j+\sigma^\mu{}_{\mu} + 4-2d) + \left(1-\frac{d}{2}\right) (\sigma^{\mu}{}_{\mu} -d)\big] u_{j+1} = 0, \\
        [u_{j+1}] = -\frac{[Pu_j]}{2j+4-d}, \hspace{0.2cm}  j \ge 0.
    \end{cases}
\end{equation}    
Their smoothness and symmetry, see \cite{Moretti_2000}, are inherited directly from the known properties on $\mathcal{U}^\prime$. Focusing instead on $U^\prime$ we end up with
\begin{equation} 
	\label{Eq: recursion for U', V' odd}
	\begin{cases}
		2 \sigma_-^\mu \partial_{\mu} u'_0 + ((\sigma_-)^{\mu}{}_{\mu} -d) u'_0= 0, \\
        u'_0 |_{z=0} = u_0 |_{z=0}, \\\\
        Pu'_j + (2j+4-d) \sigma_-^\mu \partial_\mu u'_{j+1} +\big[ (j+1)(2j+4 + (\sigma_-)^\mu{}_{\mu} -2d) + \left(1 - \frac{d}{2}\right)(\sigma_-^{\mu}{}_{\mu} -d) \big] u'_{j+1} = 0, \\
        (\partial_z+\kappa) (u_j+u'_j) |_{z=0} + \frac{1}{2}(2j+4-d)(\partial_z \sigma) (u_{j+1}-u'_{j+1})|_{z=0} = 0 ,\hspace{0.2cm} j \ge 0, 
	\end{cases}
\end{equation}

\begin{remark}\label{Rem: Symmetry of the Reflected Hadamard Coefficients - odd scenario}
As in the even dimensional case, particularly Remark \ref{Rem: solution of u'0}, we can infer right away that $u^\prime_0$ is the reflected van-Vleck Morette determinant,
$$u^\prime_0(x,y)=\Delta^\frac{1}{2}(x,y),\quad\Delta(x,y) \doteq \frac{|\det(-\nabla_\mu\nabla_{\nu^\prime}\sigma_-(x,y))|} {|g_x|^{\frac{1}{2}}|g_y|^{\frac{1}{2}}}.$$
\end{remark}

\noindent Similarly to the even dimensional case we can deduce the existence both of a smooth solution for each $u^\prime_j$ and of a smooth function $U^\prime$ as per Equation \eqref{Eq: expansion of U,V,U',V' odd} in the sense of an asymptotic expansion. The proof of the following statement is identical to that of Proposition \ref{Prop: Existence one-reflection reflected coefficients even} and, hence, we omit it.

\begin{proposition}\label{Prop: Existence one-reflection reflected coefficients odd}
Given $\dim\mcM=d\geq 3$ odd, setting $\Omega^{\mathrm H}_{1,\alpha}$ as per Definition \ref{Def: One-reflection Hadamard branch domain} and fixing $(x_0,y_0)\in\mathcal{Z}_{1,\alpha}\cap\Omega^{\mathrm H}_{1,\alpha}$ Equation \eqref{Eq: recursion for U', V' odd} admits unique smooth solutions $\{u^\prime_j\}_{j\in\mathbb N\cup\{0\}}$ on a suitable relatively compact open neighbourhood $\mathcal{K}_0\subset\Omega^{\mathrm{H}}_{1,\alpha}$. Furthermore there exists $U^\prime\in C^\infty(\mathcal{U})$ such that
\begin{equation}\label{Eq: UprimeN}
U'\sim \sum_{j=0}^\infty u^\prime_j\sigma_-^j,
\end{equation}
where $\sim$ denotes an asymptotic expansion. Moreover, denoting by $p\in\partial\mcM$ the reflection point associated with $(x_0,y_0)$, both the coefficients $u^\prime_j$, $j\in\mathbb{N}\cup\{0\}$, and $U^\prime$ can be chosen smooth on a suitable relatively open compact subset $\mathcal{K}_\partial\subset\mcM\times\mathring{\mcM}$ including $(p,y_0)$. Thereon, for every $M\geq 0$,
$$U^\prime-\sum_{j=0}^{M}u^\prime_j\sigma_-^j=\sigma_-^{M+1}R_M,$$
with $R_M\in C^\infty(\mathcal{K}_\partial)$.
\end{proposition}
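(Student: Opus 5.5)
The plan mirrors the proof of Proposition \ref{Prop: Existence one-reflection reflected coefficients even}, now in the odd-dimensional setting, where only the $U^\prime$-series appears and it runs over infinitely many indices. It has two steps: first construct the coefficients by induction, then sum the series by Lemma \ref{Lem: Reflected asymptotic summation}.

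\emph{Step 1 (geometric setup).} Let $p\in\partial\mcM$ be the reflection point of $(x_0,y_0)$, and work in Gaussian normal coordinates $(\underline{x},z)$ near $p$. Item \emph{2.} of Lemma \ref{Lem: Matching N-reflected Synge world functions}, together with the no-glancing property of Proposition \ref{Prop: no-glancing}, gives
\[
\partial_z\sigma_-(p,y_0)=-\partial_z\sigma(p,y_0)\neq 0 .
\]
By continuity this persists on the sets $\mathcal{K}_0$ and $\mathcal{K}_\partial$ furnished by Lemma \ref{Lem: Reflected asymptotic summation}; item \emph{2.} of Assumption \ref{Ass: Finite regular reflections} guarantees that $\sigma_-$ and $\sigma$ are smooth up to $z=0$ there. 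Consequently the transport field $X_-\doteq\sigma_-^\mu\partial_\mu$ is transversal to $\{z=0\}$. Every equation in \eqref{Eq: recursion for U', V' odd} is therefore a linear first-order ODE along the integral curves of $X_-$, which leave the boundary transversally. Such a non-characteristic problem has a unique smooth solution once its value on $\{z=0\}$ is fixed.

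\emph{Step 2 (induction on $j$).} For $j=0$, take $u^\prime_0=\Delta^{1/2}$ as in Remark \ref{Rem: Symmetry of the Reflected Hadamard Coefficients - odd scenario}. Uniqueness of the non-characteristic problem shows that it is the only solution with $u^\prime_0|_{z=0}=u_0|_{z=0}$. Suppose $u^\prime_0,\dots,u^\prime_j$ are already constructed and smooth on $\mathcal{K}$, up to the boundary in the first entry. Since $d$ is odd, $2j+4-d\neq 0$ for every $j\geq 0$, so the Robin matching condition can be solved for the boundary datum:
\[
u^\prime_{j+1}|_{z=0}=u_{j+1}|_{z=0}+\frac{2(\partial_z+\kappa)(u_j+u^\prime_j)|_{z=0}}{(2j+4-d)\,\partial_z\sigma|_{z=0}} .
\]
The right-hand side is smooth because $u_j,u_{j+1}$ are restrictions of ambient coefficients, smooth by Corollary \ref{Cor: Symmetry Directed Hadamard Coefficients} and its odd analogue. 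The transport equation
\[
(2j+4-d)X_-u^\prime_{j+1}+a_ju^\prime_{j+1}=-Pu^\prime_j
\]
then has a unique smooth solution with this boundary value, by the method of characteristics as in \cite[Chap.~1, Sec.~5]{FritzJohn}. This closes the induction.

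There is a domain issue: the solution is obtained by flowing from $\{z=0\}$ along $X_-$, but $\mathcal{K}_0$ lies in the interior. I expect this to be the main obstacle. My first attempt would be to choose $\mathcal{K}_0$ as a flow-out of a boundary neighbourhood inside $\mathcal{K}_\partial$, i.e.\ as the union of $X_-$-integral curves issuing from $\mathcal{K}_\partial\cap\{z=0\}$. Such curves do reach $(x_0,y_0)$, since on $\mathcal{Z}_{1,\alpha}$ the field $X_-$ is tangent to the incoming segment of the broken null geodesic through $p$. If the flow-out does not cover a neighbourhood, I would shrink it, using relative compactness to make the shrinking uniform in $j$.

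\emph{Step 3 (asymptotic sum).} With the sequence $\{u^\prime_j\}$ in hand, apply Lemma \ref{Lem: Reflected asymptotic summation} with $a_j=u^\prime_j$, on $\mathcal{K}_0$ and on $\mathcal{K}_\partial$ separately. This produces $U^\prime$ smooth with
\[
U^\prime-\sum_{j=0}^{M}u^\prime_j\sigma_-^j=\sigma_-^{M+1}R_M ,
\]
where each $R_M$ is smooth, including up to $z=0$ on $\mathcal{K}_\partial$. The Borel construction in the coordinate $r=\sigma_-$ can be done once on $\mathcal{K}_\partial$, after extending across $z=0$ to $\mcM^\prime$, and then restricted to $\mathcal{K}_0$ where the two sets overlap. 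This makes the two versions of $U^\prime$ compatible; $U^\prime$ is in any case determined only modulo $C^\infty$, which is absorbed into $W^\prime_\kappa$.
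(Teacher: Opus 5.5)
Your proposal is correct and is essentially the paper's argument, which simply defers to the even-dimensional proof. That proof has two steps: non-characteristic transport along $\sigma_-^\mu\partial_\mu$ with boundary values fixed by the Robin matching conditions (solvable since, as you note, $2j+4-d\neq 0$ for odd $d$), followed by Borel summation via Lemma \ref{Lem: Reflected asymptotic summation}. Your flow-out treatment of the domain is more careful than the paper's ``without loss of generality'', but note that it is automatically uniform in $j$ (each step is a linear ODE along the same integral curves), that $\mathcal{K}_0$ and $\mathcal{K}_\partial$ need not intersect (a single compatible $U'$ comes from summing on a neighbourhood of the whole incoming segment, where $d\sigma_-\neq 0$), and that the freedom in $U'$ is a function flat on $\{\sigma_-=0\}$ rather than an arbitrary smooth one.
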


\begin{remark}\label{Rem: No symmetry Hadamard Coefficients odd}
Proposition \ref{Prop: Existence one-reflection reflected coefficients odd} establishes the existence on a one-reflection Hadamard branch domain $\Omega^{\mathrm H}_{1,\alpha}$ of smooth Hadamard coefficients $\{u^\prime_j\}_{j\in\mathbb{N}\cup\{0\}}$ abiding by the boundary condition in the first entry. Yet, as in the even dimensional case, we cannot conclude that they are symmetric.
\end{remark}

\subsubsection{Parametrix and Pauli-Jordan Propagator} 
Combining Equation \eqref{Eq: Directed Hadamard Parametrix} with Propositions \ref{Prop: Existence one-reflection reflected coefficients even} and \ref{Prop: Existence one-reflection reflected coefficients odd} we have established the existence of smooth functions $U^\prime,V^\prime$ with a prescribed asymptotic expansion. They are compatible with Equation \eqref{Eq: Reflected Hadamard Parametrix} in Definition \ref{Def: Local Hadamard State} although we have still to address two key compatibility issues. On the one hand  $H|_{\Omega^{\mathrm H}_{1,\alpha}}(x,y)+H^{\mathrm{ref}}_{\kappa;1,\alpha}(x,y)$ must be shown to be a Robin parametrix modulo smooth terms. Once more we shall study this issue in the case of a single reflection, commenting subsequently on the extension to the general scenario. On the other hand we must establish that the antisymmetric part of this parametrix can be chosen to coincide with the Pauli-Jordan propagator $G_\kappa$ as per Proposition \ref{Prop: Existence Robin propagators}.

Given $(x_0,y_0)\in\mathcal Z_{1,\alpha}\cap\Omega^{\mathrm H}_{1,\alpha}$ we fix $\mathcal{K}\Subset\Omega^{\mathrm H}_{1,\alpha}$ a relatively compact open subset as in the proof of Lemma \ref{Lem: Reflected asymptotic summation}. Herein the functions $U^\prime,V^\prime$ as per Propositions \ref{Prop: Existence one-reflection reflected coefficients even} or \ref{Prop: Existence one-reflection reflected coefficients odd} exist. Henceforth, unless stated otherwise, we shall be working only on $\mathcal{K}$. Furthermore, for convenience of the notation, we set
\begin{equation}\label{Eq: H+}
\mathsf{H}_{\kappa;1,\alpha}^+\doteq H+H^{\mathrm{ref}}_{\kappa;1,\alpha},
\end{equation}
where the $+$ sign on the left hand side is a placeholder to remind us that we have chosen the $i\epsilon$-prescription in Equations \eqref{Eq: Directed Hadamard Parametrix} and \eqref{Eq: Reflected Hadamard Parametrix}. We recall that this choice is also motivated by the connection with the requirement of selecting future-pointing covectors in Equation \eqref{Eq: Hadamard WF set Robin}.

We denote by $\mathsf{H}_{\kappa;1,\alpha}^{-}=H^-+H^{\mathrm{ref},-}_{\kappa;1,\alpha}$ the distributions whose integral kernels are as per Equations \eqref{Eq: Directed Hadamard Parametrix} and \eqref{Eq: Reflected Hadamard Parametrix}, though with the $-i\epsilon$-prescription. Observe that being both $U,V,U^\prime,V^\prime$ as well as $\sigma,\sigma_-$ real-valued functions
$$\mathsf{H}_{\kappa;1,\alpha}^-=\overline{\mathsf{H}_{\kappa;1,\alpha}^+}.$$
We define for later convenience
\begin{equation}\label{Eq: One-reflection causal Hadamard kernel}
E_{\kappa;1,\alpha}\doteq -i\left(\mathsf{H}_{\kappa;1,\alpha}^+-\mathsf{H}_{\kappa;1,\alpha}^-\right)=E^{\mathrm{dir}}_{\kappa;1,\alpha}+E^{\mathrm{ref}}_{\kappa;1,\alpha},
\end{equation}
where the directed and reflected components are the sum of the counterparts in $\mathsf{H}_{\kappa;1,\alpha}^\pm$. 

We start by addressing the first of these two issues, whose proof relies on recollecting the already established results.

\begin{proposition}\label{Prop: Robin Parametrix Verified}
Under the standing assumptions of Section \ref{Sec: Local Hadamard form and Recursion Relations}, there exists a relatively compact open subset $\mathcal{K}\subseteq\mcM\times\mathring{\mcM}$ such that $\mathsf{H}_{\kappa;1,\alpha}^\pm\in\mathcal{D}^\prime(\mathcal{K}\times\mathcal{K})$ and
$$P_x\mathsf{H}_{\kappa;1,\alpha}^\pm\in C^\infty(\mathcal{K}),\qquad\left.\mathcal B_{\kappa,x}\mathsf{H}_{\kappa;1,\alpha}^\pm\right|_{\partial\mcM}\in C^\infty(\mathcal{K}_\partial),$$
where $P_x=(P\otimes\mathbb{I})$ while $\mathcal{B}_{\kappa,x}\doteq(\nabla_n+\kappa \mathbb{I})\otimes\mathbb{I}$ is the operator implementing the Robin boundary condition in the first entry with the convention of Definition \ref{Def: Robin Hadamard two-point function}.	
\end{proposition}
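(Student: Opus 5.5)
The plan is to verify the two regularity statements separately, reading them off the expansions constructed in Propositions \ref{Prop: Existence one-reflection reflected coefficients even} and \ref{Prop: Existence one-reflection reflected coefficients odd}. I choose $\mathcal{K}$ as in the proof of Lemma \ref{Lem: Reflected asymptotic summation}, namely the union of the interior neighbourhood $\mathcal{K}_0$ of $(x_0,y_0)$ and the boundary neighbourhood $\mathcal{K}_\partial$ of $(p,y_0)$. On this set $d\sigma_-\neq0$, the Synge function $\sigma$ is defined, the coefficients $u_j,v_k,u'_j,v'_k$ are smooth up to $z=0$ in the first entry, and the asymptotic remainders $R_M$ are smooth there as well. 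The $\pm i\epsilon$ prescriptions play no role: the computation is carried out for $\epsilon>0$ and the limit is taken at the end. The distributional limits exist because the singular loci are the hypersurfaces $\sigma=0$ and $\sigma_-=0$, and these are non-characteristic in the required sense, since $d\sigma,d\sigma_-\neq0$ off the diagonal.

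\textbf{The equation.} I would write $P_x\mathsf H^\pm=P_xH^\pm+P_xH^{\mathrm{ref},\pm}$. The first term is smooth by the standard boundaryless theory on $\mathcal U'$, since $H$ is a restriction of $\widetilde H$. For the second term, I truncate $V'$ at order $M$ via Equation \eqref{Eq: VM} (respectively $U'$ in odd dimensions) and apply the identity in Equation \eqref{Eq: PHR}. The transport equations \eqref{Eq: recursion order 0} and \eqref{Eq: recursion for U', V'} make every coefficient of the singular powers $\sigma_-^{j-d/2}$ vanish, together with the first $M$ logarithmic coefficients. What survives is a term of the form $\sigma_-^{M}\ln\sigma_-\,Pv'_M$, plus $P$ applied to $\sigma_-^{M+1}R_M\ln\sigma_-$ (respectively the corresponding half-integer powers). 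This remainder lies in $C^{M-1}$. Since $M$ is arbitrary and the summed function $V'$ is independent of $M$, the defect is $C^k$ for every $k$, hence smooth. For the eikonal cancellations I use Equation \eqref{Eq: Derivative operators with sigma, sigma-} and Lemma \ref{Lem: Properties of Reflected Geodesic Distance}.

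\textbf{The boundary condition.} I apply $\partial_z+\kappa$ in the first entry and restrict to $z=0$. Using items \emph{1.} and \emph{2.} of Lemma \ref{Lem: Matching N-reflected Synge world functions}, the restrictions satisfy $\sigma_-|_{z=0}=\sigma|_{z=0}$ and $\partial_z\sigma_-|_{z=0}=-\partial_z\sigma|_{z=0}$. This allows every term to be expanded in powers of $\sigma|_{z=0}$ and of $\ln\sigma|_{z=0}$, which yields exactly Equations \eqref{Eq: derivative BC} and \eqref{Eq: k term in BC}. The coefficients of the negative powers and of the truncated logarithms are precisely the left-hand sides of the Robin matching conditions in Equation \eqref{Eq: recursion for U', V'}, together with $u'_0|_{z=0}=u_0|_{z=0}$ from Remark \ref{Rem: solution of u'0}, so they all vanish. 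The leftover terms are, once more, of arbitrarily high finite regularity once the expansion is truncated at a high enough order $M$.

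\textbf{Main obstacle.} The delicate point is to control the remainder terms in the two sums separately and uniformly up to the boundary. Concretely, one must show that $\sigma^{M}\ln\sigma|_{z=0}$-type terms, which come both from $H$ and from $H^{\mathrm{ref}}$ with two different amplitudes, combine into a $C^{M-1}$ function on $\mathcal{K}_\partial$. This works because the two logarithms coincide at $z=0$, and the separate Borel resummations from Lemma \ref{Lem: Reflected asymptotic summation} agree up to $\sigma^{M+1}$ there. I would handle this by fixing $M$, checking regularity of the finite truncations, and only then letting $M\to\infty$. A secondary check is that no tangential contribution enters at this order: the $\partial_z$ derivatives of the amplitudes appear only through $(\partial_z+\kappa)(\cdot)$ acting on smooth coefficients, which are already accounted for in the matching conditions.
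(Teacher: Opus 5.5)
Your proposal follows the paper's proof essentially step by step. You choose $\mathcal{K}$ as in Lemma \ref{Lem: Reflected asymptotic summation}, take smoothness of $P_xH$ from the boundaryless parametrix, kill all singular coefficients in Equations \eqref{Eq: PHR}, \eqref{Eq: derivative BC} and \eqref{Eq: k term in BC} via the transport equations and Robin matching conditions, and use an arbitrary-$M$ truncation to upgrade the finite regularity of the leftover terms to smoothness. One small correction to your closing paragraph: $V$ and $V'$ themselves do not agree at $z=0$ (for instance $v'_0|_{z=0}-v_0|_{z=0}=(\partial_z+\kappa)(u_{\frac d2-2}+u'_{\frac d2-2})/\partial_z\sigma|_{z=0}$), but no such agreement is needed, since each amplitude agrees with its own truncation up to order $\sigma^{M+1}$ on the boundary, so every leftover $\sigma^M\ln\sigma$ term is separately of class $C^{M-1}$ and nothing has to combine.
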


\begin{proof}
We focus on $H_{\kappa;1,\alpha}^+$ being the reasoning for $H_{\kappa;1,\alpha}^-$ identical {\em mutatis mutandis} and we choose $\mathcal{K}$ as in the proof of Lemma \ref{Lem: Reflected asymptotic summation}. Hence all functions involved can be evaluated at $\partial\mcM$ in the first entry being smooth thereon. This applies also to $\sigma$ and $\sigma_-$, see item {\em 1.} and {\em 2.} in Lemma \ref{Lem: Matching N-reflected Synge world functions}. Comparing with Equation \eqref{Eq: H+}, the first contribution to $\mathsf{H}_{\kappa;1,\alpha}^{\pm}$ is the restriction to $\mathcal{K}$ of the standard Hadamard parametrix on the ambient spacetime introduced in Equation \eqref{Eq: Boundaryless Parametrix}. Hence, $P_x H\in C^\infty(\mathcal{K})$, see, {\it e.g.}, \cite{Radzikowski_1996,Radzikowski_1996_1}. 

Focusing on $H^{\mathrm{ref}}_{\kappa;1,\alpha}$, the second contribution, and assuming that $\dim\mcM=d\geq 4$ is even, we denote by $V^\prime_M$ the truncation at order $M$ of the asymptotic expansion of $V^\prime$, see Equation \eqref{Eq: VM}. Equation \eqref{Eq: recursion for U', V'} entails that all singular terms arising from the action of $P_x$ vanish, while the associated Robin matching conditions yield the same conclusion after applying $\mathcal B_{\kappa,x}$ at $\partial\mcM$.
	
By Lemma \ref{Lem: Reflected asymptotic summation}, Equation \eqref{Eq: VM} entails, in particular, that $V^\prime-V^\prime_M=\sigma_-^{M+1}R_M$, with $R_M\in C^\infty(\mathcal{K})$. Hence, applying $P_x$, the worst singular term in $P_xH^{\mathrm{ref}}_{\kappa;1,\alpha}$ is proportional to $R^{M-1}\sigma_-^M\ln\sigma_-$, the preceding ones being cancelled by the Hadamard recursion relations. This term is therefore of class $C^{M-2}$, but $M$ can be chosen arbitrarily large. This implies that the corresponding remainder is of class $C^m$ for every $m\in\mathbb N$. A similar reasoning and conclusion can be drawn working with $\mathcal{B}_{\kappa,x}$. In other words 
$$P_x\mathsf{H}_{\kappa;1,\alpha}^+\in C^\infty(\mathcal{K})\quad\textrm{and}\quad\left.\mathcal B_{\kappa,x}\mathsf{H}_{\kappa;1,\alpha}^+\right|_{\partial\mcM}\in C^\infty(\mathcal{K}_\partial),$$
where $\mathcal{K}_\partial$ denotes the realization of $\mathcal{K}$ introduced in the proof of Lemma \ref{Lem: Reflected asymptotic summation}. For $d$ odd or $d=2$ the same line of reasoning leads to an identical conclusion.
\end{proof}

\noindent In the following key proposition, we establish the existence of a connection between the choice of a Robin Hadamard parametrix and the underlying Pauli-Jordan propagator. We stress the relevance of this consistency condition both for its application to the realm of mathematical physics, in connection to the existence of covariant quantization schemes for Bosonic scalar field theories, and for its key role in establishing our main result, Theorem \ref{Thm: Global-to-Local}.

\begin{proposition}\label{Prop: One-reflection Robin parametrix and causal propagators}
Under the standing assumptions of Section \ref{Sec: Local Hadamard form and Recursion Relations}, there exists a relatively compact open subset $\mathcal{K}\subseteq\mcM\times\mathring{\mcM}$ such that the following two statements hold true:
\begin{enumerate}
\item There exists $K^\pm_{\kappa;1,\alpha}\in\mathcal{D}^\prime(\mathcal{K}\times\mathcal{K})$ such that, at the level of integral kernel 
\begin{equation}\label{Eq: One-reflection causal parametrices}
K^-_{\kappa;1,\alpha}(x,y)\doteq\Theta\bigl(t(x)-t(y)\bigr)E_{\kappa;1,\alpha}(x,y)\quad\textrm{and}\quad		K^+_{\kappa;1,\alpha}\doteq-\Theta\bigl(t(y)-t(x)\bigr)E_{\kappa;1,\alpha}(x,y),
\end{equation}
where $t:\mcM\to\mathbb{R}$ is the underlying time function. These distributions satisfy
$$P_xK^\pm_{\kappa;1,\alpha}=\left.\delta_{\operatorname{Diag}_2(\mathring{\mcM})}\right|_{\mathcal{K}}+R^\pm\quad\textrm{and}\quad
\left.\mathcal B_{\kappa,x}K^\pm_{\kappa;1,\alpha}\right|_{\partial\mcM}=B^\pm,$$
where $R^\pm,B^\pm\in C^\infty(\mathcal{K})$ while $\operatorname{Diag}_2(\mathring{\mcM})=\{(x,x)\;|\;x\in\mathring{\mcM}\}$.
\item Let $G^\pm_\kappa$ be the exact advanced and retarded Robin propagators obtained out of Proposition \ref{Prop: Existence Robin propagators} combined with Equation \eqref{Eq: Gpm from Gkappa}. Given $(x_0,y_0)\in\mathcal Z_{1,\alpha}\cap\Omega^{\mathrm H}_{1,\alpha}$, there exists $\mathcal{K}^\prime\Subset\mathcal{K}$ such that
\begin{equation}\label{Eq: Local Pauli-Jordan one-reflection}
G^\pm_\kappa|_{\mathcal{K}^\prime}-K^\pm_{\kappa;1,\alpha}|_{\mathcal{K}^\prime}\in C^\infty(\mathcal{K}^\prime)\Longrightarrow G_\kappa|_{\mathcal{K}^\prime}-E_{\kappa;1,\alpha}|_{\mathcal{K}^\prime}\in C^\infty(\mathcal{K}^\prime),
\end{equation}
where $G_\kappa=G^-_\kappa-G^+_\kappa$.
\end{enumerate}
\end{proposition}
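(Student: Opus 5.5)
The proof splits into the two items, and the first is essentially the boundaryless argument of Radzikowski carried over branchwise. Take $\mathcal K$ as in Lemma \ref{Lem: Reflected asymptotic summation} and Proposition \ref{Prop: Robin Parametrix Verified}, and shrink it if needed so that it is contained in a causally convex set on which $t$ is a time function.

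For item 1, the key observation is that $E_{\kappa;1,\alpha}=-i(\mathsf H^+_{\kappa;1,\alpha}-\mathsf H^-_{\kappa;1,\alpha})$ is the difference of the $\pm i\epsilon$ boundary values. Away from $\mathcal Z_0\cup\mathcal Z_{1,\alpha}$ it is smooth, and it vanishes for $\sigma,\sigma_->0$. Multiplying by $\Theta(\pm(t(x)-t(y)))$ is therefore well defined: the only possible clash is at $t(x)=t(y)$, and there the reflected term is away from $\mathcal Z_{1,\alpha}$, since a broken null geodesic strictly increases $t$. Near the diagonal, the directed part reproduces the standard causal parametrix, $\Theta$ times the Hadamard commutator $E^{\mathrm{dir}}$. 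Hence $P_x K^\pm=\delta+\text{smooth}$ follows from the classical statement on the ambient spacetime $(\mcM',g')$. This is Radzikowski's construction, in which $P_x$ applied to the singular $i\epsilon$ difference of $u_0\sigma^{1-d/2}$ yields the delta with normalisation fixed by $[u_0]=1$.

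The reflected part $\Theta E^{\mathrm{ref}}$ has no support on the diagonal. On it, $P_x$ only hits $E^{\mathrm{ref}}$, and $P_xE^{\mathrm{ref}}$ is smooth by Proposition \ref{Prop: Robin Parametrix Verified}. The Heaviside factor contributes nothing singular, because $\operatorname{supp}E^{\mathrm{ref}}\cap\{t(x)=t(y)\}=\emptyset$ on $\mathcal K$.

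The boundary statement follows in the same way. At $z=0$ in the first entry we have $t(x)\neq t(y)$ on the relevant support, so locally $\Theta$ is constant there. Then $\mathcal B_{\kappa,x}K^\pm=\pm\mathcal B_{\kappa,x}E$, which is smooth by Proposition \ref{Prop: Robin Parametrix Verified} applied to both $\mathsf H^\pm$.

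For item 2, consider $D^\pm\doteq G^\pm_\kappa-K^\pm_{\kappa;1,\alpha}$ on $\mathcal K$. It satisfies $P_xD^\pm\in C^\infty$ and $\mathcal B_{\kappa,x}D^\pm|_{\partial\mcM}\in C^\infty$, and its support is retarded/advanced in $x$ relative to $y$ up to the smooth errors. I will remove the smooth errors using the exact Green operators. Writing $R^\pm$ and $B^\pm$ for the errors, I correct $K^\pm$ by $-\mathcal G^\pm_\kappa$ applied to a cut-off of $R^\pm$. The boundary error I correct by a smooth lift of $B^\pm$ supported near $\partial\mcM$ that has the right normal derivative, composed with $\mathcal G^\pm_\kappa$. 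Since the Green operators map smooth data of past/future compact support to smooth functions in $C^\infty_\kappa$ (Equation \eqref{Eq: Extend Domain}), all corrections are smooth in $x$. Smoothness jointly in $(x,y)$ is the delicate point; I would get it from the continuity estimates in the proof of Theorem \ref{Thm: Existence of Propagators}, applied with parameter $y$. The corrected $\widetilde K^\pm$ then satisfies $P_x\widetilde K^\pm=\delta$ and the Robin condition exactly, with the correct support, locally on a causally convex $\mathcal K'\Subset\mathcal K$.

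Uniqueness of the propagators (the Lemma following Proposition \ref{Prop: Exact Sequence}), applied via $\mathcal G^\pm_\kappa\circ P=\mathrm{id}$ on $C^\infty_{\mathrm{pc/fc},\kappa}$, then gives $G^\pm_\kappa=\widetilde K^\pm$ on $\mathcal K'$. This is the step I expect to be the main obstacle, because it requires cut-offs compatible with causal support inside a region that meets $\partial\mcM$. I would first try it in the simpler form $D^\pm=\mathcal G^\pm_\kappa(P_xD^\pm)$ plus the boundary correction, arguing in the first variable with $y$ as a parameter.

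Subtracting the advanced from the retarded identity yields $G_\kappa-E_{\kappa;1,\alpha}\in C^\infty(\mathcal K')$, since $K^--K^+=E$.
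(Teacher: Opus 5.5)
Item 1 is fine and is essentially the paper's argument. The support of $E^{\mathrm{ref}}_{\kappa;1,\alpha}$ lies in $\{\sigma_-\le0\}$, which is disjoint from $\{t(x)=t(y)\}$; the directed part is handled on $(\mcM',g')$; and the boundary defect is $\pm\mathcal B_{\kappa,x}E_{\kappa;1,\alpha}$ because $\Theta$ is locally constant there (the paper uses $n(t)=0$ instead).

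For item 2 you also follow the paper's route: lift $B^\pm$ off $\partial\mcM$, invert the remaining smooth source with $\mathcal G^\pm_\kappa$ in the first variable, and conclude by uniqueness. Joint smoothness in $(x,y)$ is not the real difficulty, since continuity of $\mathcal G^\pm_\kappa\colon C^\infty_0(\mcM)\to C^\infty(\mcM)$ plus the kernel theorem gives it. The genuine gap is the step you flag: identifying $D^\pm=G^\pm_\kappa-K^\pm_{\kappa;1,\alpha}$ with the smooth Green-operator correction on $\mathcal K'$. The paper's proof has the same structure and simply asserts this coincidence on $\mathcal K'$. As you set it up, with $\mathcal K$ a small neighbourhood of $(x_0,y_0)$ as in Lemma \ref{Lem: Reflected asymptotic summation}, it cannot be proved.

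Smoothness of $P_xD^\pm$ and $\mathcal B_{\kappa,x}D^\pm$ near $(x_0,y_0)$ says nothing by itself about $D^\pm$ there, since homogeneous Robin solutions can be singular. For $D^-$, what decides is the past domain $\{(x',y)\,:\,x'\in J^-(x),\ t(x')\geq t(y)\}$ in the first variable. This set contains the reflection region near $(p,y_0)$ and the diagonal point $(y_0,y_0)$. On a small $\mathcal K\ni(x_0,y_0)$ three things go wrong:
\begin{enumerate}
\item the $\delta$-term of item 1 is empty;
\item a cut-off $\psi$ of $D^\pm$ produces $[P_x,\psi]D^\pm$ on $\operatorname{supp}d\psi\cap\mathcal Z_{1,\alpha}\neq\emptyset$, which is a priori as singular as $D^\pm$;
\item the uniqueness lemma only concerns globally defined kernels with global causal support.
\end{enumerate}

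The fix is to take $\mathcal K$ large in the first variable. It should contain a neighbourhood of the broken null ray $\{(\gamma(s),y_0)\}$, running from a point with $t<t(y_0)$ through $y_0$ and $p$ up to $x_0$, with item 1 holding there; this is where the $\delta$ is actually used. Then argue microlocally as in Proposition \ref{Prop: Finite-reflection branchwise causal comparison}:
\begin{enumerate}
\item $D^-\equiv0$ exactly on $\{t(x)<t(y)\}$, since both $G^-_\kappa$ and $K^-_{\kappa;1,\alpha}$ vanish there.
\item Every point of $\operatorname{WF}(D^-)$ over $(x_0,y_0)$ has $k_x\neq0$.
\item Any such point on $N^*\mathcal Z_{1,\alpha}$ (or $N^*\mathcal Z_{0,\mathcal U}$) propagates backward in $x$ by Melrose--Sj\"ostrand into $\{t(x)<t(y_0)\}$. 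This uses that $P_xD^-$ is smooth along the ray, including at $x=y_0$ where the two deltas cancel, and that the Robin defect is smooth at $p$. Reaching $\{t(x)<t(y_0)\}$ is a contradiction; for $D^+$ propagate forward in $x$ instead.
\end{enumerate}

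This also shows the conclusion can only hold microlocally near $N^*\mathcal Z_{1,\alpha}$, or on a $\mathcal K'$ so small that no other broken null geodesic joins its points. If a second branch connects $x_0$ and $y_0$, then $G_\kappa$ is singular along it by Theorem \ref{Thm: WF of Propagators}, whereas $E_{\kappa;1,\alpha}$ is not.
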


\begin{proof}
We prove the two items separately.
	
\vskip .2cm
	
{\em Item 1:}  Here we employ a microlocal argument. More precisely, starting from Equation \eqref{Eq: One-reflection causal Hadamard kernel}, for any $\dim\mcM=d\geq 2$, Proposition \ref{Prop: Robin Parametrix Verified} entails that $P_xE_{\kappa;1,\alpha}$ is smooth. This implies that $$\operatorname{WF}(E_{\kappa;1,\alpha})\subseteq\operatorname{Char}(P_x)=\{(x,k_x,y,k_y)\in T^*(\mathring{\mcM}\times\mathring{\mcM})\setminus\{\boldsymbol{0}\}\;|\; g_x(k_x,k_x)=0\}.$$
In other words $k_x$ is a lightlike covector and, since $\operatorname{WF}(\Theta(t(x)-t(y))$ only contains timelike covectors, we can apply \cite[Thm.~8.2.10]{Hormander_1990} to infer existence of the sought $K^\pm_{\kappa;1,\alpha}\in\mathcal{D}^\prime(\mathcal{K}\times\mathcal{K})$. Focusing on the action of $P_x$, we recall Equation \eqref{Eq: H+}. Using the properties of second order normally hyperbolic partial differential operators on a globally hyperbolic spacetime with empty boundary, see, {\it e.g.}, \cite{Baer_2007}, we can infer that, on a relatively compact subset $\mathcal{K}$ as that of the proof of Lemma \ref{Lem: Properties of Reflected Geodesic Distance}, denoting by $E^{\operatorname{dir}}_{\kappa;1,\alpha}$ the contribution to $E_{\kappa;1,\alpha}$ coming from the directed Hadamard parametrix $H$, this coincides up to smooth terms with the Pauli-Jordan propagator $G'$ of the Klein-Gordon operator $P'$ on $(\mcM^\prime,g^\prime$). Hence, 
$$P_xE^{\operatorname{dir}}_{\kappa;1,\alpha}=\delta_{\operatorname{Diag}_2(\mathring{\mcM})}|_{\mathcal{K}} + R_{\operatorname{dir}},$$
where $R_{\operatorname{dir}}\in C^\infty(\mathcal{K})$. Denoting by $E^{\operatorname{ref}}_{\kappa;1,\alpha}$ the contribution to $E_{\kappa;1,\alpha}$ coming from the reflected component, we can observe that Equation \eqref{Eq: Reflected Hadamard Parametrix} combined with the choice of $\mathcal{K}$ as per Lemma \ref{Lem: Reflected asymptotic summation} entails that
$$\operatorname{supp}(E^{\operatorname{ref}}_{\kappa;1,\alpha})\subseteq\{(x,y)\in\mathcal{K}\;|\;\sigma_-(x,y)\leq 0\}.$$
Furthermore, since Proposition \ref{Prop: Robin Parametrix Verified} entails that $P_x(E^{\operatorname{ref}}_{\kappa;1,\alpha})\in C^\infty(\mathcal{K})$, the only singularities in $P_x(\Theta\bigl(t(x)-t(y)\bigr)E_{\kappa;1,\alpha}(x,y))$ can occur when $t(x)=t(y)$. Yet, a direct inspection of Definition \ref{Def: Reflected Synge's world function} unveils that $\sigma_-(x,y)\leq 0$ implies that $t(x)\neq t(y)$. The same conclusion can be drawn for $\Theta\bigl(t(y)-t(x)\bigr)E_{\kappa;1,\alpha}(x,y)$.

Focusing on the boundary condition, Proposition \ref{Prop: Globally Hyperbolic} entails that, on a globally hyperbolic spacetime with a timelike boundary $\nabla t$ is tangent to $\partial\mcM$. Hence $n(t)=0$, where $n$ is the inward pointing normal vector field to $\partial\mcM$.	Consequently 
$$\left.\mathcal B_{\kappa,x}K^\pm_{\kappa;1,\alpha}\right|_{\partial\mcM}=\pm\Theta\bigl(\pm(t(x)-t(y))\bigr)	\left.\mathcal{B}_{\kappa,x}E_{\kappa;1,\alpha}\right|_{\partial\mcM}.$$
On account of Proposition \ref{Prop: Robin Parametrix Verified}, $\left.\mathcal{B}_{\kappa,x}E_{\kappa;1,\alpha}\right|_{\partial\mcM}\in C^\infty(\mathcal{K}_\partial)$, where $\mathcal{K}_\partial$ is one of the realization of $\mathcal{K}$ as per Lemma \ref{Lem: Reflected asymptotic summation}. Moreover, extending the reasoning used in the previous point, we can infer that 
$$\operatorname{supp}(E_{\kappa;1,\alpha})\subseteq\{(x,y)\in\mathcal{K}\;|\;\sigma(x,y)\leq 0\;\textrm{or}\;\sigma_-(x,y)\leq 0\}.$$
Since the restriction to $\partial\mcM$ entails that $x\in\partial\mcM$, while $y\in\mathring{\mcM}$, the support condition implies that $t(x)\neq t(y)$. This establishes that the Heaviside function does not yield a singular contribution, hence the sought conclusion.
	
\vskip .2cm
	
{\em Item 2:} On account of the previous item, working still on $\mathcal{K}$ chosen as above and setting $D^\pm\doteq G^\pm_\kappa-K^\pm_{\kappa;1,\alpha}$, it holds that
$$P_xD^\pm=-R^\pm,\quad\textrm{and}\quad\left.\mathcal B_{\kappa,x}D^\pm\right|_{\partial\mcM}=-B^\pm.$$
We show that we can find suitable smooth correcting terms to $D^\pm$ which make the right-hand side of these two identities vanish. To start with and, shrinking if necessary the underlying domain $\mathcal{K}$, we choose $\chi\equiv\chi(z)\in C^\infty_0([0,\epsilon))$, $\epsilon>0$ so that $\chi=0$ in a neighbourhood of $z=0$ while $\partial_z\chi|_{z=0}=1$. Defining $Q^\pm\doteq-\chi(z)B^\pm$,	it holds that
$$\left.\mathcal B_{\kappa,x}Q^\pm\right|_{z=0}=-B^\pm \Longrightarrow\left.\mathcal B_{\kappa,x}(D^\pm-Q^\pm)\right|_{\partial\mcM}=0.$$
Hence, setting $F^\pm \doteq -R^\pm-P_xQ^\pm$ and introducing $D_Q^\pm \doteq D^\pm-Q^\pm$, we are left with a Robin boundary value problem:
\begin{equation}\label{Eq: Auxiliary Robin Problem}
P_xD_Q^\pm=F^\pm\quad\textrm{and}\quad B_{\kappa,x}D_Q^\pm=0.
\end{equation}
Given $(x_0,y_0)\in\mathcal Z_{1,\alpha}\cap\Omega^{\mathrm H}_{1,\alpha}$, first we fix $\mathcal{K}\subset\mathring{\mcM}\times\mathring{\mcM}$ as in the proof of Lemma \ref{Lem: Reflected asymptotic summation}. Subsequently we choose a relatively compact subset $\mathcal{K}^\prime\Subset\mathcal{K}$ and a cutoff function $\chi\in C^\infty_0(\mathcal{K})$ such that $\chi=1$ on $\mathcal{K}^\prime$. Replacing $F^\pm$ with $F^\pm_\chi \doteq \chi F^\prime$ in Equation \eqref{Eq: Auxiliary Robin Problem}, we obtain a compactly supported, smooth source. Hence a corresponding solution is 
$$Q^\pm_\chi=G^\pm_{\kappa,x}(F^\pm_\chi),$$
where $G^\pm_{\kappa,x}$ is the Robin advanced/retarded propagators as per Proposition \ref{Prop: Existence Robin propagators} combined with Equation \eqref{Eq: Gpm from Gkappa}. The subscript $x$ indicates that the propagator is acting on the first entry of $F^\pm_\chi$. Furthermore $Q^\pm_\chi$ is smooth and it coincides on $\mathcal{K}^\prime$ with $Q^\pm$. Hence we have found a solution of Equation \eqref{Eq: Auxiliary Robin Problem} and the proof is concluded.	
\end{proof}

\begin{remark}\label{Rem: No symmetry needed causal construction}
	Proposition \ref{Prop: One-reflection Robin parametrix and causal propagators} does not require symmetry of the reflected Hadamard coefficients. Equation \eqref{Eq: One-reflection causal Hadamard kernel} compares the two boundary values of the same kernel and it does not involve transposition or branch reversal. 
\end{remark}

\noindent As a byproduct of our analysis we can also establish for later convenience the singular structure both of $\mathsf{H}^\pm_{\kappa;1,\alpha}$ and of $K^\pm_{\kappa;1,\alpha}$.

\begin{corollary}\label{Cor: WF One-Reflection}
Under the standing assumptions of Section \ref{Sec: Local Hadamard form and Recursion Relations}, let $(x_0,y_0)\in\mathcal Z_{1,\alpha}\cap\Omega^{\mathrm H}_{1,\alpha}$. Consider a causally convex Cauchy neighbourhood $\mathcal{N}$ as per Definition \ref{Def: Cauchy neighbourhood} and let $\mathcal{K}\Subset\Omega^{\mathrm H}_{1,\alpha}\cap\mathcal{N}$ be a relatively compact open neighbourhood containing $(x_0,y_0)$ as in the proof of Lemma \ref{Lem: Regularity endpoint map}. Then it holds that
\begin{equation}\label{Eq: WF one-reflection reflected causal kernel}
\operatorname{WF}\!\left(E^{\mathrm{ref}}_{\kappa;1,\alpha}|_{\mathcal{K}}\right)=\left(N^*\mathcal{Z}_{1,\alpha}\setminus\{0\}\right)\cap T^*\mathcal{K}.
\end{equation}
while	
\begin{equation}\label{Eq: WF one-reflection Hadamard kernel}
\operatorname{WF}\!\left(\mathsf{H}^+_{\kappa;1,\alpha}|_{\mathcal{K}}\right)=\left(C^{\triangleright,\mathrm{dir}}_{b,\mathcal{K}}\cup\Lambda^{\triangleright}_{1,\alpha}\right)\cap T^*\mathcal{K}\quad\textrm{and}\quad\operatorname{WF}\!\left(\mathsf{H}^-_{\kappa;1,\alpha}|_{\mathcal{K}}\right)=\left(C^{\triangleleft,\mathrm{dir}}_{b,\mathcal{K}}\cup\Lambda^{\triangleleft}_{1,\alpha}\right)\cap T^*\mathcal{K}.
\end{equation}
where $\Lambda^{\triangleright}_{1,\alpha}$ is as per Equation \eqref{Eq: Branchwise reflected conormal relation} setting $N=1$, while $\Lambda^{\triangleleft}_{1,\alpha}$ is defined similarly, though requiring that $\lambda(t(x)-t(y))<0$. At the same time $C^{\triangleright,\mathrm{dir}}_{b,\mathcal{K}}$ is the restriction to $T^*\mathcal{K}$ of Equation \eqref{Eq: Directed null relation in N}, while $C^{\triangleleft,\mathrm{dir}}_{b,\mathcal{K}}$ is defined similarly, though requiring that $k_x$ is past directed.
\end{corollary}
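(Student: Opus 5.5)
The plan is to split $\mathsf{H}^\pm_{\kappa;1,\alpha}$ into its directed piece $H^\pm$ and its reflected piece $H^{\mathrm{ref},\pm}_{\kappa;1,\alpha}$, and to compute the wavefront set of each separately on $\mathcal{K}$. Because $\mathcal{K}$ is chosen so that the two singular loci, the direct null cone $\{\sigma=0\}$ and the reflected one $\mathcal{Z}_{1,\alpha}=\{\sigma_-=0\}$, are described by distinct defining functions, the two wavefront sets can only overlap over $\mathcal Z_{0}\cap\mathcal Z_{1,\alpha}$. There they sit over distinct covector directions, since $d_x\sigma\neq d_x\sigma_-$ whenever both vanish (the initial velocities of the direct and the broken null geodesic differ). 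So no cancellation can occur, and the wavefront set of the sum is the union of the two.

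For $H^\pm$ we invoke the classical result of Radzikowski \cite{Radzikowski_1996} for the ambient parametrix $\widetilde H^\pm$ on $\mcM^\prime$. Its wavefront set is the future- (resp. past-) directed part of $N^*\operatorname{Diag}_2\cup N^*\mathcal Z_{0}$. Restricting to $\mathcal{K}\subset\mathring{\mcM}\times\mathring{\mcM}$ and using Lemma \ref{Lem: Direct null conormal relation} identifies this set with $C^{\triangleright,\mathrm{dir}}_{b,\mathcal{K}}$ (resp. $C^{\triangleleft,\mathrm{dir}}_{b,\mathcal{K}}$).

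For the reflected piece we work near any point of $\mathcal{Z}_{1,\alpha}\cap\mathcal{K}$. Since $d\sigma_-\neq0$ there (Proposition \ref{Prop: Reflected null branch atlas}), we use $r=\sigma_-$ as a coordinate, as in the proof of Lemma \ref{Lem: Reflected asymptotic summation}, and write the time-orientation factor as $t(x)-t(y)=\tau(r,\zeta)$. On $\mathcal{Z}_{1,\alpha}$ this factor is nonvanishing, because the endpoints of a broken null geodesic are not simultaneous. Then $H^{\mathrm{ref},\pm}$ is a smooth amplitude times a pullback of the homogeneous distributions $(r\pm i0\,\mathrm{sgn}\,\tau)^{-(d-2)/2}$ and $\ln(r\pm i0\,\mathrm{sgn}\,\tau)$. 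The wavefront set of $(r+i0)^{a}$ is $\{r=0,\ \rho>0\}$, where $\rho$ is the covector dual to $r$, which has a sign fixed by the $i0$. Pulling back gives covectors $\lambda\, d\sigma_-$ with $\lambda\,\mathrm{sgn}(t(x)-t(y))>0$, which is exactly $\Lambda^{\triangleright}_{1,\alpha}$ by Corollary \ref{Cor: Branchwise conormal resolution}. The opposite prescription gives $\Lambda^{\triangleleft}_{1,\alpha}$.

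The step where care is needed is equality, rather than mere inclusion. A smooth amplitude could vanish on $\mathcal Z_{1,\alpha}$ and kill the singularity. Two facts rule this out. In odd $d$ and for $d\geq4$, the leading coefficient $u^\prime_0=\Delta^{1/2}$ is nonvanishing by the non-degeneracy of the mixed Hessian (Remark \ref{Rem: solution of u'0}). In $d=2$, $v^\prime_0$ is fixed by its boundary value $v_0$, and we would verify $v_0\neq 0$ near the reflection, shrinking $\mathcal K$ if necessary.

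For $E^{\mathrm{ref}}_{\kappa;1,\alpha}=-i(H^{\mathrm{ref},+}-H^{\mathrm{ref},-})$ we take the difference of the two boundary values. This difference is supported on $\{\sigma_-\le 0\}$ and is conormal to $\mathcal Z_{1,\alpha}$, for instance a multiple of $\delta^{(k)}(\sigma_-)$, $|\sigma_-|^{a}\Theta(-\sigma_-)$, or $\Theta(-\sigma_-)$ times smooth terms. Its wavefront set therefore contains both signs of $\lambda$ at every point, by the same nonvanishing of the leading amplitude. This gives $N^*\mathcal Z_{1,\alpha}\setminus\{0\}$ restricted to $T^*\mathcal{K}$, and the difference is smooth off $\mathcal{Z}_{1,\alpha}$.
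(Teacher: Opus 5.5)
Your proof is correct, and it reaches the two equalities by a genuinely different route from the paper.

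\textbf{The paper's route.} It obtains $\operatorname{WF}(E^{\mathrm{ref}}_{\kappa;1,\alpha}|_{\mathcal K})$ indirectly. Item 2 of Proposition \ref{Prop: One-reflection Robin parametrix and causal propagators} gives $G_\kappa-E_{\kappa;1,\alpha}\in C^\infty$ near $(x_0,y_0)$. The answer is then read off from the global identity $\operatorname{WF}(G_\kappa|_{\mathring\mcM\times\mathring\mcM})=\mathcal C_{\mathrm b}(\mathring\mcM)$ of Theorem \ref{Thm: WF of Propagators}, after separating direct and reflected contributions. For $\mathsf H^{\pm}_{\kappa;1,\alpha}$ the paper proves only the inclusion $\operatorname{WF}(H^{\mathrm{ref}}_{\kappa;1,\alpha})\subseteq\Lambda^{\triangleright}_{1,\alpha}$, by the same $(r,\zeta)$-computation you use. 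It gets the reverse inclusion by contradiction, transporting a putatively missing covector with Melrose--Sj\"ostrand back to $(y,k_y,y,-k_y)\in\operatorname{WF}(H)$.

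\textbf{Your route.} You stay local throughout and secure equality from the non-vanishing of the leading amplitude ($u^\prime_0=\Delta^{1/2}$, Remark \ref{Rem: solution of u'0}). For $H^{\mathrm{ref},\pm}$ this goes through $(r\pm i0)^{a}$ and $\ln(r\pm i0)$. For $E^{\mathrm{ref}}$ it goes through the jumps $\delta^{(k)}(r)$, $r_-^{a}$ and $\Theta(-r)$, which carry both signs of $\lambda$.

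\textbf{What each route buys.} Your argument is independent of Theorem \ref{Thm: WF of Propagators}, hence of Assumption \ref{Ass: No Zeno points} and of boundary propagation of singularities, and of the comparison with $G_\kappa$. It never has to reach the diagonal, which need not lie in a neighbourhood $\mathcal K$ of the off-diagonal point $(x_0,y_0)$ on which $\mathsf H^+_{\kappa;1,\alpha}$ is defined. Your handling of a possible meeting of $\{\sigma=0\}$ and $\{\sigma_-=0\}$ through distinct conormal directions is also more robust than the paper's assertion that the two singular supports are disjoint. In exchange, the paper's route needs no amplitude analysis for $E^{\mathrm{ref}}$. It also ties the local kernel directly to the exact propagator, which is the form reused in Propositions \ref{Prop: Finite-reflection branchwise causal comparison} and \ref{Prop: WF branchwise reflected Hadamard kernel}.

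\textbf{Small points to tighten.}
\begin{itemize}
\item In the overlap argument you need $d_x\sigma$ and $d_x\sigma_-$ to be non-proportional, not merely different. This holds because the direct null segment never meets $\partial\mcM$ (no-glancing together with the choice of $\mathcal U$), and global hyperbolicity excludes the causal loop a common null ray would create.
\item Over the diagonal, $\operatorname{WF}(\widetilde H^{+})$ contains only $(x,k,x,-k)$ with $k$ null and future pointing. It is not the whole future-directed part of $N^*\operatorname{Diag}_2$; timelike covectors are excluded since $P_x\widetilde H^{+}$ is smooth. Your final identification with $C^{\triangleright,\mathrm{dir}}_{b,\mathcal K}$ is nonetheless right.
\item In $d=2$ the check you defer is automatic. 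There $v_0$ is the square root of the direct van Vleck--Morette determinant, so it is non-zero. Moreover $v^\prime_0$ obeys the same homogeneous transport equation as $u^\prime_0$ with that boundary datum, so it cannot vanish along the reflected rays.
\end{itemize}
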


\begin{proof}
Choosing $\mathcal{K}$ sufficiently small, Equation \eqref{Eq: WF one-reflection reflected causal kernel} is a by-product of item {\em 2.} of Proposition \ref{Prop: One-reflection Robin parametrix and causal propagators}	which entails that, on $T^*\mathcal{K}$, $\operatorname{WF}\!\left(E_{\kappa;1,\alpha}|_{\mathcal{K}}\right)=\operatorname{WF}\!\left(G_\kappa|_{\mathcal{K}}\right)$. The latter has been computed in Theorem \ref{Thm: WF of Propagators}. In view of Remark \ref{Rem: Full Directed and Reflected Singularities} it comprises two contributions, one from the directed and one from the reflected components. On account of Equations \eqref{Eq: One-reflection causal Hadamard kernel} and \eqref{Eq: H+} combined with Definitions \ref{Def: Directed Hadamard Parametrix} and \ref{Def: Local Hadamard State}, we can infer that $\operatorname{singsupp}(E^{\mathrm{dir}}_{\kappa;1,\alpha}|_{\mathcal{K}})\cap\operatorname{singsupp}(E^{\mathrm{ref}}_{\kappa;1,\alpha}|_{\mathcal{K}})=\emptyset$ and hence their wavefront sets are disjoint. Since the directed component accounts for $C^{\mathrm{dir}}_{b,\mathcal{K}}$, the second accounts for $\Lambda^{\mathrm{ref}}_{\mathcal{K}}:=\Lambda^{\mathrm{ref}}_{\mathcal{N}}\cap T^*\mathcal{K}$. Comparing with Equation \eqref{Eq: Branchwise reflected conormal relation}, we obtain Equation \eqref{Eq: WF one-reflection reflected causal kernel}.

We need to prove Equation \eqref{Eq: WF one-reflection Hadamard kernel} and we focus only on $\mathsf{H}^+_{\kappa;1,\alpha}|_{\mathcal{K}}$ since the identity concerning $\mathsf{H}^-_{\kappa;1,\alpha}|_{\mathcal{K}}$ follows suit. Equation \eqref{Eq: H+} entails that $\mathsf{H}^+_{\kappa;1,\alpha}|_{\mathcal{K}}$ is the sum between the directed and the reflected parametrix $H$ and $H^{\mathrm{ref}}_{\kappa;1,\alpha}$. Comparing Equations \eqref{Eq: Directed Hadamard Parametrix} and \eqref{Eq: Reflected Hadamard Parametrix} we can infer that $\operatorname{singsupp}(H)\cap\operatorname{singsupp}(H^{\mathrm{ref}}_{\kappa;1,\alpha})=\emptyset$, the first being subordinated to the equation $\sigma(x,y)=0$, while the second to $\sigma_-(x,y)=0$. Hence on $T^*\mathcal{K}$
$$\operatorname{WF}(\mathsf{H}^+_{\kappa;1,\alpha})=\operatorname{WF}(H)\cup\operatorname{WF}(H^{\mathrm{ref}}_{\kappa;1,\alpha}).$$
The wavefront set of the directed component is nothing but $C^{\triangleright,\mathrm{dir}}_{b,\mathcal{K}}$, the restriction to $\mathcal{K}$ of that of the Hadamard parametrix associated to the two-point correlation function of Klein-Gordon field on $(\mcM^\prime,g^\prime)$, see, {\it e.g.}, \cite{Radzikowski_1996,Radzikowski_1996_1}. We focus on $H^{\mathrm{ref}}_{\kappa;1,\alpha}$. We recall that $\mathcal{K}$ has been devised so that, thereon, we may use coordinates $(r,\zeta)$ such that
$$r=\sigma_-(x,y),$$
while $\zeta=(\zeta^1,\dots,\zeta^{2d-1})$, $d=\dim\mcM$, denotes the remaining ones. Moreover $\tau(x,y)\doteq t(x)-t(y)\neq 0$ on the locus $r=0$.  Recalling that the wavefront set of a distribution is invariant under diffeomorphisms and, hence, under coordinate reparametrization of a distribution, it is legit to work with these coordinates. Furthermore, we observe that, as a direct consequence of \cite[8.1.8]{Hormander_1990} and of the explicit form of the Fourier transform in the $(r,\zeta)$-coordinates, for $\beta>0$, the homogeneous distributions $r_+^{-\beta} \doteq \lim\limits_{\epsilon\to 0^+}(r+i\epsilon\tau)^{-\beta}\in\mathcal{D}^\prime(\mathcal{K}\times\mathcal{K})$ have the following singular structure
\begin{equation}\label{Eq: WFaux1}
\mathrm{WF}\!\left(r_+^{-\beta}\right)=\left\{(0,\lambda,w,0)\in T^*(\mathcal{K}\times\mathcal{K})\setminus\{\boldsymbol{0}\}\;|\;\lambda \tau(0,w)>0\right\}.
\end{equation}
whereas, setting $\ln (r_+) \doteq \lim\limits_{\epsilon\to 0^+}\ln (r+i\epsilon\tau)$,
\begin{equation}\label{Eq: WFaux2}
\mathrm{WF}\!\left(\ln(r_+)\right)=\left\{(0,\lambda,w,0)\in T^*(\mathcal{K}\times\mathcal{K})\setminus\{\boldsymbol{0}\}\;|\;\lambda \tau(0,w)>0\right\}.
\end{equation}
This last identity descends from two observations. The first is that we can read $\ln(r_+)$ as the pull-back to $T^*\mathcal{K}$ of the one-dimensional distribution $\ln (x_+) \doteq \lim\limits_{\epsilon\to 0^+}\ln (x+i\epsilon)$ under the map $\pi:\mathcal{K}\to\mathbb{R}$ such that $\pi(r,\zeta)=r$. Observing that in one dimension $\operatorname{WF}(\ln(x_+))=\operatorname{WF}(x_+^{-1})$, \cite[Thm. 8.2.4]{Hormander_1990} yields the inclusion $\subseteq$ in Equation \eqref{Eq: WFaux2}. Observing that $r_+^{-1}=\partial_r\ln(r_+)$ yields the converse inclusion combining Equation \eqref{Eq: WFaux1} with $\operatorname{WF}(Pu)\subseteq\operatorname{WF}(u)$ for every $u\in\mathcal{D}^\prime(\Omega)$, $\Omega\subseteq\bR^d$, and for $P$ a differential operator. Restoring $r=\sigma_{-,1,\alpha}(x,y)$, it turns out that $(0,\zeta)$ are the pairs $(x,y)\in\mathcal{K}\times\mathcal{K}$ such that $\sigma_{-,1,\alpha}(x,y)=0$, $x\in J^+(y)\cup J^-(y)$. Furthermore $\lambda$ becomes $\lambda d\sigma_{-,1,\alpha}(x,y)=(k_x,-k_y)$ where $k_x=\lambda d_x\sigma_{-,1,\alpha}$ and $k_y=-\lambda d_y\sigma_{-,1,\alpha}$. Observe that, if $y\in J^+(x)$, $\tau(x,y)<0$ and, hence $d_x\sigma_{-,1,\alpha}$ is past-directed, that is $\lambda<0$. This entails that $\lambda d_x\sigma_{-,1,\alpha}$ is future-directed and, with a similar reasoning, the same holds true if $x\in J^+(y)$. 
In other words we have shown that 
$$\mathrm{WF}(H^{\mathrm{ref}}_{\kappa,1,\alpha})\subseteq\Lambda^{\triangleright}_{1,\alpha}.$$
Here the inclusion is byproduct of Equation \eqref{Eq: Reflected Hadamard Parametrix} which entails that, for $d>2$ even
$$\operatorname{WF}(H^{\mathrm{ref}}_{\kappa;1,\alpha})\subseteq\operatorname{WF}\left(\frac{U^\prime}{\sigma_{-,\epsilon}^{\frac{d-2}{2}}}\right)\cup\operatorname{WF}\left(V^\prime\ln\left(\frac{\sigma_{-,\epsilon}}{\lambda^2}\right)\right).$$
Furthermore Proposition \ref{Prop: Existence one-reflection reflected coefficients even} entails that $U'$ is a finite linear combination of positive powers of $\sigma_-$ such that the first coefficient $u_0$ is never vanishing on $\mathcal{K}$, see Remark \ref{Rem: solution of u'0}. Hence $U^\prime\sigma_{-,\epsilon}^{\frac{2-d}{2}}$ contributes to the wavefront set as per Equation \eqref{Eq: WFaux1}. \emph{Mutatis mutandis} a similar conclusion can be drawn for $d=2$ or for $d$ odd. 

In order to prove the opposite inclusion, we can proceed by contradiction as, for example, in the proof of Theorem \ref{Thm: WF of Propagators}. As a matter of fact, assume that there exists $(x,k_x,y,-k_y)\in\Lambda^{\triangleright}_{1,\alpha}$ which does not lie in $\mathrm{WF}(H^{\mathrm{ref}}_{\kappa,1,\alpha})$. Consequently it cannot lie in $\mathrm{WF}(H^+_{\kappa;1,\alpha})$. Since $P_xH^+_{\kappa;1,\alpha}\in C^\infty(\mathcal{K})$ as per Proposition \ref{Prop: Robin Parametrix Verified}, we can use once more the propagation of singularities as proven in \cite{Melrose_1978} to transport $(x,k_x)$ to $(y,k_y)$ inferring that $(y,k_y,y,-k_y)\notin\mathrm{WF}(\mathsf{H}^+_{\kappa;1,\alpha})$. Yet, since this point lies in $\mathrm{WF}(H)$, the directed Hadamard parametrix, and since, by construction, $\mathrm{WF}(H)\cap\mathrm{WF}(H^{\mathrm{ref}}_{\kappa,1,\alpha})=\emptyset$, then we have obtained the sought contradiction.
\end{proof}

\noindent We need one last result concerning the interplay between the kernels studied in Corollary \ref{Cor: WF One-Reflection} and those, one would obtain considering the reversed branch.

\begin{corollary}\label{Cor: One-reflection reversal covariance}
Under the same assumptions of Corollary \ref{Cor: WF One-Reflection}, let $\alpha^{\rm op}$ be the reversed branch as per Remark \ref{Rem: Reversed reflected branch}. Let $\mathcal{K}\Subset\Omega^{\mathrm H}_{1,\alpha}$ be as in Corollary \ref{Cor: WF One-Reflection} and set $\mathcal{K}^{\rm op}\doteq\{(y,x)\mid (x,y)\in\mathcal{K}\}$. It holds that
\begin{equation}\label{Eq: One-reflection reversal covariance}
\left(\mathsf{H}^+_{\kappa;1,\alpha}|_{\mathcal{K}}\right)^T-\mathsf{H}^-_{\kappa;1,\alpha^{\rm op}}|_{\mathcal{K}^{\rm op}}\in C^\infty(\mathcal{K}^{\rm op})\quad\textrm{and}\quad\left(\mathsf{H}^-_{\kappa;1,\alpha}|_{\mathcal{K}}\right)^T	-\mathsf{H}^+_{\kappa;1,\alpha^{\rm op}}|_{\mathcal{K}^{\rm op}}\in C^\infty(\mathcal{K}^{\rm op}),
\end{equation}
where, at the level of integral kernels, $\left(\mathsf{H}^\pm_{\kappa;1,\alpha}\right)^T(x,y)\doteq \mathsf{H}^\pm_{\kappa;1,\alpha}(y,x)$.
\end{corollary}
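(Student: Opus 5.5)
The plan is to split $\mathsf{H}^\pm_{\kappa;1,\alpha}=H^\pm+H^{\mathrm{ref},\pm}_{\kappa;1,\alpha}$ and treat the directed and reflected parts separately, always modulo $C^\infty(\mathcal{K}^{\rm op})$.

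\emph{Directed part.} Here we argue at the ambient level. On $\mathcal U^\prime\subset\mcM^\prime$, Corollary \ref{Cor: Symmetry Directed Hadamard Coefficients} gives symmetry of $\sigma$, of $u_j$ and of the asymptotic sums $U,V$; one may choose $V$ symmetric, any asymmetry being $O(\sigma^\infty)$ and hence harmless after the standard argument. Since $t(y)-t(x)=-(t(x)-t(y))$, we get
$$\sigma_{+\epsilon}(y,x)=\sigma(x,y)-i\epsilon\bigl(t(x)-t(y)\bigr)+\epsilon^2=\sigma_{-\epsilon}(x,y).$$
Hence $(H^+)^T=H^-$ exactly, up to the choice of representative of $V$.

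\emph{Reflected phase.} By item \emph{2.} of Lemma \ref{Lem: Properties of Reflected Geodesic Distance}, $\sigma_{-,1,\alpha}(x,y)=\sigma_{-,1,\alpha^{\rm op}}(y,x)$. The same $\epsilon$-computation then gives $\sigma_{-,1,\alpha,+\epsilon}(y,x)\big|_{\alpha}=\sigma_{-,1,\alpha^{\rm op},-\epsilon}(y,x)$ in the sense of kernels on $\mathcal{K}^{\rm op}$. It therefore suffices to show that the transposed amplitudes $U^\prime_\alpha(y,x)$ and $V^\prime_\alpha(y,x)$ agree with $U^\prime_{\alpha^{\rm op}}$ and $V^\prime_{\alpha^{\rm op}}$ to infinite order on $\mathcal Z_{1,\alpha^{\rm op}}$. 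Indeed, a smooth amplitude vanishing to infinite order in $r=\sigma_-$ multiplies $r_\pm^{-\beta}$ or $\ln r_\pm$ into a smooth function; this is the same Taylor/remainder argument used in Proposition \ref{Prop: Robin Parametrix Verified}.

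\emph{Main obstacle.} Remarks \ref{Rem: Symmetry of the Reflected Hadamard Coefficients} and \ref{Rem: No symmetry Hadamard Coefficients odd} warn that coefficient-wise symmetry of the $u^\prime_j,v^\prime_k$ is not available, so I would avoid comparing coefficients. Instead I would argue microlocally, in four steps.

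\begin{enumerate}
\item Set $D\doteq(\mathsf{H}^+_{\kappa;1,\alpha})^T-\mathsf{H}^-_{\kappa;1,\alpha^{\rm op}}$ on $\mathcal{K}^{\rm op}$.

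\item Show that $D$ is microlocally concentrated on a single half. The phase identity and Corollary \ref{Cor: WF One-Reflection}, combined with Equation \eqref{Eq: Cotangent branch reversal}, give that both terms have wavefront set in $(C^{\triangleleft,\mathrm{dir}}\cup\Lambda^{\triangleleft}_{1,\alpha^{\rm op}})\cap T^*\mathcal{K}^{\rm op}$. Transposition swaps the entries, and Equation \eqref{Eq: Cotangent branch reversal} turns $\Lambda^\triangleright_{1,\alpha}$ into the past-directed half for $\alpha^{\rm op}$. Hence $\mathrm{WF}(D)$ lies in this past-directed half alone.

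\item Show that $D$ is nearly a bisolution. Write $-iD=(\mathsf H^+_\alpha)^T-(\mathsf H^-_\alpha)^T+(\mathsf H^-_\alpha)^T-\mathsf H^-_{\alpha^{\rm op}}$. By Proposition \ref{Prop: One-reflection Robin parametrix and causal propagators}, $E_{\kappa;1,\alpha}$ and $E_{\kappa;1,\alpha^{\rm op}}$ both coincide with $G_\kappa$ modulo smooth terms. Using $G_\kappa^T=-G_\kappa$ from Corollary \ref{Cor: G+- Formal Adjoints}, this gives $D-\overline{D}^{\,T}$-type identities showing that $D$ differs from its complex-conjugate transpose by a smooth kernel.

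\item Conclude smoothness. The complex-conjugate transpose of $D$ has wavefront set in the future-directed half. Since $D$ is concentrated on the past-directed half (step 2) and equals its conjugate transpose modulo $C^\infty$ (step 3), its wavefront set lies in the intersection of the two halves, which is empty. Thus $D\in C^\infty$.
\end{enumerate}

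If step 3 proves awkward, the fallback is to use instead that $P_y$ and the Robin operator in $y$ annihilate $(\mathsf{H}^+_\alpha)^T$ modulo smooth terms, in which case the transpose solves the reflected recursion of Equation \eqref{Eq: recursion for U', V'} for $\alpha^{\rm op}$. Uniqueness in Proposition \ref{Prop: Existence one-reflection reflected coefficients even} then forces agreement of the coefficients. Verifying that the second-entry equation holds for the reflected piece is exactly the delicate point flagged in Remark \ref{Rem: Symmetry of the Reflected Hadamard Coefficients}.

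The second identity in Equation \eqref{Eq: One-reflection reversal covariance} follows by complex conjugation, since $\mathsf{H}^-=\overline{\mathsf{H}^+}$ and conjugation commutes with transposition.
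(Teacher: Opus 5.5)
Your route is in substance the paper's. You form $D\doteq(\mathsf{H}^+_{\kappa;1,\alpha})^T-\mathsf{H}^-_{\kappa;1,\alpha^{\rm op}}$ on $\mathcal{K}^{\rm op}$. You place $\operatorname{WF}(D)$ in the past-directed half via Corollary \ref{Cor: WF One-Reflection} and Equation \eqref{Eq: Cotangent branch reversal}. You then use item \emph{2.} of Proposition \ref{Prop: One-reflection Robin parametrix and causal propagators} on both branches together with $G_\kappa^T=-G_\kappa$. The amplitude comparison and the recursion-uniqueness fallback are unnecessary.

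However, steps 3--4 as written fail, because you pair $D$ with the wrong partner. The complex-conjugate transpose $\overline{D}^{\,T}=\mathsf{H}^-_{\kappa;1,\alpha}-(\mathsf{H}^+_{\kappa;1,\alpha^{\rm op}})^T$ is a kernel on $\mathcal{K}$, not on $\mathcal{K}^{\rm op}$, so ``$D-\overline{D}^{\,T}$'' is not even defined. More seriously, its wavefront set is \emph{not} in the future-directed half. Transposition and complex conjugation each exchange the two halves; for instance $(H^+)^T=H^-=\overline{H^+}$ modulo smooth terms. Their composition therefore preserves the halves: $\overline{H^+}^{\,T}=H^+$. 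In the same way, $\operatorname{WF}(\overline{D}^{\,T})$ has past-directed first covectors, exactly like $\operatorname{WF}(D)$, so the disjointness argument of step 4 yields nothing.

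The correct partner is $\overline{D}$ alone. On either branch, $\overline{\mathsf{H}^\pm}=\mathsf{H}^\mp$ and $\mathsf{H}^+-\mathsf{H}^-=iE$. Hence the relation $E^T_{\kappa;1,\alpha}+E_{\kappa;1,\alpha^{\rm op}}\in C^\infty(\mathcal{K}^{\rm op})$ reads literally
\[
D-\overline{D}=i\bigl(E^T_{\kappa;1,\alpha}+E_{\kappa;1,\alpha^{\rm op}}\bigr)\in C^\infty(\mathcal{K}^{\rm op}),\qquad \overline{D}=(\mathsf{H}^-_{\kappa;1,\alpha})^T-\mathsf{H}^+_{\kappa;1,\alpha^{\rm op}}.
\]
Moreover $\operatorname{WF}(\overline{D})=\{(y,-\eta,x,-\xi)\,|\,(y,\eta,x,\xi)\in\operatorname{WF}(D)\}$ lies in the future-directed half. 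Since $D=\overline{D}+(D-\overline{D})$, we get $\operatorname{WF}(D)\subseteq\operatorname{WF}(\overline{D})$, which sits in the future half. But $\operatorname{WF}(D)$ also sits in the past half, so it is empty, and $\overline{D}$ is smooth too. This gives both identities at once, without a separate conjugation step. It is precisely the paper's proof, with your $D$ and $\overline{D}$ playing the roles of its $\mathsf{C}$ and $\mathsf{D}$.
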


\begin{proof}
Using item {\em 2.} of Proposition \ref{Prop: One-reflection Robin parametrix and causal propagators}, applied to the branches $\alpha$ and $\alpha^{\rm op}$, we can infer that
$$G_\kappa-E_{\kappa;1,\alpha}\in C^\infty(\mathcal{K})\quad\textrm{and}\quad G_\kappa-E_{\kappa;1,\alpha^{\rm op}}\in C^\infty(\mathcal{K}^{\rm op}).$$
Yet, as a consequence of Corollary \ref{Cor: G+- Formal Adjoints}, the Pauli-Jordan propagator is antisymmetric, that is $G_\kappa^T=-G_\kappa$. In other words $E_{\kappa;1,\alpha}^T+E_{\kappa;1,\alpha^{\rm op}}\in C^\infty(\mathcal{K}^{\rm op})$. Motivated by Equation \eqref{Eq: One-reflection causal Hadamard kernel}, we introduce the kernels
$$\mathsf C\doteq\left(\mathsf{H}^+_{\kappa;1,\alpha}\right)^T-\mathsf{H}^-_{\kappa;1,\alpha^{\rm op}}\quad\textrm{and}\quad	\mathsf D\doteq\left(\mathsf{H}^-_{\kappa;1,\alpha}\right)^T-\mathsf{H}^+_{\kappa;1,\alpha^{\rm op}}.$$
Then $\mathsf{D} - \mathsf{C}=i(E_{\kappa;1,\alpha^{\rm op}}+E^T_{\kappa;1,\alpha})\in C^\infty(\mathcal{K}^{\rm op})$. On account of Corollary \ref{Cor: WF One-Reflection} and Equation \eqref{Eq: Cotangent branch reversal}, $\operatorname{WF}(\mathsf C)$ can contain only past-directed covectors in the first entry, while $\operatorname{WF}(\mathsf D)$ can contain only future-directed ones. In other words $\operatorname{WF}(\mathsf C)\cap\operatorname{WF}(\mathsf D)=\emptyset$. Yet, since $\mathsf C + \mathsf D$ is smooth, the algebraic identities $\mathsf C =(\mathsf C + \mathsf D)- \mathsf D$ and $\mathsf D=(\mathsf C+ \mathsf D)- \mathsf C$ entail that both $\mathsf C$ and $\mathsf D$ have empty wavefront set. Hence they are smooth, thereby proving Equation \eqref{Eq: One-reflection reversal covariance}.
\end{proof}

\begin{remark}\label{Rem: Observation on Ref}
We observe that the conclusion of Corollary \ref{Cor: Symmetry Directed Hadamard Coefficients} holds true also if we restrict the attention only to the reflected component, since the contribution of the directed Hadamard parametrix as per Equation \eqref{Eq: Directed Hadamard Parametrix} cancels out. Indeed, by the symmetry of the directed Hadamard coefficients, we can infer that $(H^\pm)^T=H^\mp$. In other words
$$\left(H^{\mathrm{ref},+}_{\kappa;1,\alpha}\right)^T-H^{\mathrm{ref},-}_{\kappa;1,\alpha^{\rm op}}\in C^\infty(\mathcal{K}^{\rm op})\quad\textrm{and}\quad\left(H^{\mathrm{ref},-}_{\kappa;1,\alpha}\right)^T-H^{\mathrm{ref},+}_{\kappa;1,\alpha^{\rm op}}\in C^\infty(\mathcal{K}^{\rm op}).$$
\end{remark}

\begin{remark}\label{Rem: Reversal without coefficient symmetry}
We stress once more that Corollary \ref{Cor: One-reflection reversal covariance} neither requires nor implies symmetry of the reflected Hadamard coefficients.
\end{remark}

\subsubsection{Finite-reflection Hadamard branch chains} In the preceding part of this section, we assumed the existence of only a single reflection. In the following, instead, we extend our analysis to an arbitrary, though finite, number of reflections, also in agreement with the idea at the heart of Assumption \ref{Ass: Finite regular reflections}. We start by generalizing Definition \ref{Def: One-reflection Hadamard branch domain}.

\begin{definition}\label{Def: N-reflection Hadamard branch domain}
Let $(\mathcal V_{N,\alpha},\Omega_{N,\alpha})$, $\alpha \in \mathbb{N}$, be a regular $N$-reflected branch chart as per Definition \ref{Def: Regular reflected branch chart}. We call an open subset
$$\Omega^{\mathrm H}_{N,\alpha}\Subset\Omega_{N,\alpha}$$
an {\bf $N$-reflection Hadamard branch domain} if, taking into account Equation \eqref{Eq: Z}, for every	$(x,y)\in\mathcal Z_{N,\alpha}\cap\Omega^{\mathrm{H}}_{N,\alpha}$ and for every reflection point $x_j$ of $\gamma^{(x,y)}_{N,\alpha}$, the incoming and outgoing branchwise Synge functions $\sigma^{\mathrm{in}}_{j,N,\alpha}$ and $\sigma^{\mathrm{out}}_{j,N,\alpha}$ as per Equation \eqref{Eq: Incoming outgoing reflected Synge functions} admit smooth one-sided extensions to a common neighbourhood of $(x_j,y)$ as per Lemma \ref{Lem: Matching N-reflected Synge world functions}.
\end{definition}

Observe that the existence of $\Omega^{\mathrm H}_{N,\alpha}$ does not require any additional assumption since, being $N$ finite, Condition \emph{2.} of Assumption \ref{Ass: Finite regular reflections} allows one to shrink $\Omega_{N,\alpha}$ so that the requirement in Definition \ref{Def: N-reflection Hadamard branch domain} holds simultaneously at every reflection point. 

\begin{definition}\label{Def: Branchwise reflected Hadamard kernel}
Given an $N$-reflection Hadamard branch domain $\Omega^{\mathrm H}_{N,\alpha}$ as per Definition \ref{Def: N-reflection Hadamard branch domain} we define
\begin{equation}\label{Eq: Branchwise reflected iepsilon phase}
\sigma_{-,N,\alpha,\pm\epsilon}(x,y)\doteq\sigma_{-,N,\alpha}(x,y)\pm i\epsilon\bigl(t(x)-t(y)\bigr)+\epsilon^2,\quad\epsilon>0,
\end{equation}
where $t$ is the underlying time function. A \textbf{branchwise reflected Hadamard kernel} associated with $(N,\alpha)$ is a distribution $H^{\mathrm{ref}}_{\kappa;N,\alpha}\equiv H^{\mathrm{ref},+}_{\kappa;N,\alpha}\in\mathcal{D}^\prime(\Omega^{\mathrm H}_{N,\alpha})$ such that, at the level of integral kernels
\begin{equation}\label{Eq: Branchwise reflected Hadamard kernel}
H^{\mathrm{ref},\pm}_{\kappa;N,\alpha}(x,y)\doteq\lim_{\epsilon\to 0^+}\frac{\Gamma\!\left(\frac{d}{2}-1\right)}		{2(2\pi)^{\frac{d}{2}}}\left(\frac{U^\prime_{N,\alpha}(x,y)}{\sigma_{-,N,\alpha,\pm\epsilon}(x,y)^{\frac{d-2}{2}}}+		\delta_d V^\prime_{N,\alpha}(x,y)\ln\!\left(\frac{\sigma_{-,N,\alpha,\pm\epsilon}(x,y)}{\lambda^2}\right)\right),
\end{equation}
where $U'_{N,\alpha},V'_{N,\alpha}\in C^\infty(\Omega^{\mathrm H}_{N,\alpha})$, while $\delta_d$ is as in Definition \ref{Def: Local Hadamard State}. If $d=2$, the $U^\prime$-term is omitted. In addition, we set
\begin{equation}\label{Eq: Branchwise reflected causal kernel}
E^{\mathrm{ref}}_{\kappa;N,\alpha}\doteq -i\left(H^{\mathrm{ref},+}_{\kappa;N,\alpha}-	H^{\mathrm{ref},-}_{\kappa;N,\alpha}\right).
\end{equation}
\end{definition}

\vskip .2cm

\noindent Given $(x,y)\in\mathcal Z_{N,\alpha}\cap\Omega^{\mathrm H}_{N,\alpha}$, we denote by $x_0,\ldots,x_{N-1}$ the reflection points of $\gamma^{(x,y)}_{N,\alpha}$, see Equation \eqref{Eq: GammaN}. Using Equations \eqref{Eq: sigmajin} and \eqref{Eq: sigmajout}, at the $j$-th reflection we set
\begin{equation}\label{Eq: Incoming outgoing branchwise Hadamard kernels}
H^{\mathrm{in},\pm}_{\kappa;j,N,\alpha}\doteq H^{\mathrm{ref},\pm}_{\kappa;N-j,\alpha^{\mathrm{in}}_j},\qquad	H^{\mathrm{out},\pm}_{\kappa;j,N,\alpha}\doteq
\begin{cases}
H^{\mathrm{ref},\pm}_{\kappa;N-j-1,\alpha^{\mathrm{out}}_j},&j=0,\ldots,N-2,\\
H^\pm,&j=N-1.
\end{cases}
\end{equation}
where $H^\pm$ are the Hadamard kernels of Equation \eqref{Eq: Directed Hadamard Parametrix}. We also set
$$E^{\mathrm{in/out}}_{\kappa;j,N,\alpha}\doteq -i\left(H^{\mathrm{in/out},+}_{\kappa;j,N,\alpha}-H^{\mathrm{in/out},-}_{\kappa;j,N,\alpha}\right).$$
We call the collection in Equation \eqref{Eq: Incoming outgoing branchwise Hadamard kernels} the \textbf{Hadamard branch chain} generated by $(N,\alpha)$. At the $j$-th reflection the incoming member has $N-j$ remaining reflections, while the outgoing member has one reflection fewer.

Considering the kernels in Equation \eqref{Eq: Incoming outgoing branchwise Hadamard kernels}, we denote their smooth amplitudes by $U^{\mathrm{in/out}}_{j,N,\alpha}$ and $V^{\mathrm{in/out}}_{j,N,\alpha}$ and their formal Hadamard coefficients by $u^{\mathrm{in/out}}_{r;j,N,\alpha}$ and $v^{\mathrm{in/out}}_{r;j,N,\alpha}$. In dimension $d>2$ even,
$$U^{\mathrm{in/out}}_{j,N,\alpha}=\sum_{r=0}^{\frac{d}{2}-2}u^{\mathrm{in/out}}_{r;j,N,\alpha}\left(\sigma^{\mathrm{in/out}}_{j,N,\alpha}\right)^r\quad\textrm{and}\quad V^{\mathrm{in/out}}_{j,N,\alpha}\sim\sum_{r=0}^{\infty} v^{\mathrm{in/out}}_{r;j,N,\alpha}\left(\sigma^{\mathrm{in/out}}_{j,N,\alpha}\right)^r,$$
while for $d=2$ the $U$-term is absent. We shall refer to them as the {\em incoming/outgoing Hadamard coefficients}. In odd dimension the $V$-term is absent and
$$U^{\mathrm{in/out}}_{j,N,\alpha}\sim\sum_{r=0}^{\infty}u^{\mathrm{in/out}}_{r;j,N,\alpha}\left(\sigma^{\mathrm{in/out}}_{j,N,\alpha}\right)^r.$$
Here the symbol $\sim$ denotes that the right-hand side is an asymptotic expansion of the left one. Furthermore there is no need to disambiguate between the $\pm i\epsilon$-prescription since it plays no role. In the following proposition we establish that the Hadamard coefficients entering the reflected members of the branch chain in Equation \eqref{Eq: Incoming outgoing branchwise Hadamard kernels} can be constructed recursively, starting from the directed one and proceeding backwards from the last reflection point $x_{N-1}$ to $x_0$, the first one.

\begin{proposition}\label{Prop: Recursive branchwise Hadamard coefficients}
Given $(N,\alpha)\in\mathcal{I}_{\mathrm{ref}}$ as per Corollary \ref{Cor: Countable reflected conormal resolution} and given $\Omega^{\mathrm H}_{N,\alpha}$, an $N$-reflection Hadamard branch domain as per Definition \ref{Def: N-reflection Hadamard branch domain}, let $(x_0,y_0)\in\mathcal Z_{N,\alpha}\cap\Omega^{\mathrm H}_{N,\alpha}$. We denote by $x_0,\ldots,x_{N-1}\in\partial\mcM$ the reflection points of the associated broken geodesic. 
	
Then, for every $j=0,\ldots,N-1$, the incoming Hadamard coefficients associated with the branch $(N-j,\alpha^{\mathrm{in}}_j)$ admit smooth solutions on a suitable relatively compact open subset $\mathcal{K}^{\mathrm{in}}_{j,0}\Subset\Omega^{\mathrm{H}}_{N-j,\alpha^{\mathrm{in}}_j}$, while, for $j=0,\ldots,N-2$, the same holds true for the outgoing ones associated with $(N-j-1,\alpha^{\mathrm{out}}_j)$ on a suitable relatively compact open subset $\mathcal{K}^{\mathrm{out}}_{j,0}\Subset\Omega^{\mathrm{H}}_{N-j-1,\alpha^{\mathrm{out}}_j}$. In addition 
\begin{itemize}
	\item[\ding{104}] The corresponding Hadamard expansions admit smooth asymptotic expansions on these neighbourhoods, with the same structure as in Propositions \ref{Prop: Existence one-reflection reflected coefficients even} and \ref{Prop: Existence one-reflection reflected coefficients odd}.
	\item[\ding{104}]  It holds that
	\begin{equation}\label{Eq: Branchwise interior transport condition}
	P_xH^{\mathrm{in},\pm}_{\kappa;j,N,\alpha}\in C^\infty\!\left(\mathcal{K}^{\mathrm{in}}_{j,0}\right)\quad\textrm{and}\quad P_xH^{\mathrm{out},\pm}_{\kappa;j,N,\alpha}\in 	C^\infty\!\left(\mathcal{K}^{\mathrm{out}}_{j,0}\right),
	\end{equation}
	where the first statement holds true for $j=0,\ldots,N-1$, the second only up to $j=N-2$.
	\item[\ding{104}] For all $j=0,\ldots,N-1$, there exists a relatively compact open subset $\mathcal{K}_{j,\partial}\subset\mcM\times\mathring{\mcM}$ containing $(x_j,y_0)$ on which both $\sigma^{\mathrm{in}}_{j,N,\alpha}$ and	$\sigma^{\mathrm{out}}_{j,N,\alpha}$ admit the smooth one-sided extensions of Definition \ref{Def: N-reflection Hadamard branch domain}. Setting $\mathcal{K}^\partial_{j,\partial}\doteq\mathcal{K}_{j,\partial}\cap\left(\partial\mcM\times\mathring{\mcM}\right)$,
	it holds that
	\begin{equation}\label{Eq: Compact branchwise Robin matching}
	\left.\mathcal B_{\kappa,x}\left(H^{\mathrm{in},\pm}_{\kappa;j,N,\alpha}+H^{\mathrm{out},\pm}_{\kappa;j,N,\alpha}	\right)\right|_{\mathcal{K}^\partial_{j,\partial}}\in C^\infty\!\left(\mathcal{K}^\partial_{j,\partial}\right),
	\end{equation}
\end{itemize}
contained in the common neighbourhood appearing in Definition \ref{Def: N-reflection Hadamard branch domain}, 
\end{proposition}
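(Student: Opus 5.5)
The plan is a backward induction on the reflection index $j$, from $j=N-1$ down to $j=0$. At each step I reuse the one-reflection construction of Propositions \ref{Prop: Existence one-reflection reflected coefficients even} and \ref{Prop: Existence one-reflection reflected coefficients odd}. The outgoing kernel at a given reflection is always already known, either as the directed parametrix or as an incoming kernel built at the previous step, and it provides the boundary data for the incoming kernel. At $j=N-1$ the outgoing member is the directed kernel $H^\pm$. Its coefficients $u_r,v_r$ are the ambient ones, smooth up to $\partial\mcM$ by Corollary \ref{Cor: Symmetry Directed Hadamard Coefficients}. The incoming member is associated with the one-reflected branch $(1,\alpha^{\rm in}_{N-1})$.

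\emph{Base case $j=N-1$.} Here the situation is exactly that of Section \ref{Sec: Hadamard Recurstion Relations even case} (resp. \ref{Sec: Hadamard Recursion Relations: Odd dimensions}) with $\sigma_-$ replaced by $\sigma^{\rm in}_{N-1,N,\alpha}$. The steps are:
\begin{enumerate}
\item Lemma \ref{Lem: Matching N-reflected Synge world functions} supplies the matching identities at $(x_{N-1},y_0)$.
\item Condition \emph{2.} of Assumption \ref{Ass: Finite regular reflections} supplies the one-sided smooth extensions on a set $\mathcal K_{N-1,\partial}$.
\item The no-glancing property of Proposition \ref{Prop: no-glancing} gives $\nabla_n\sigma^{\rm out}\neq0$. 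The transport field $(\sigma^{\rm in})^\mu\partial_\mu$ is therefore non-characteristic at $z=0$.
\item The recursion in Equation \eqref{Eq: recursion for U', V'} (or its odd counterpart), with $u,v$ replaced by the outgoing coefficients, determines the incoming coefficients uniquely and smoothly. I would argue this by the method of characteristics as in Step 1 of Proposition \ref{Prop: Existence one-reflection reflected coefficients even}.
\item Lemma \ref{Lem: Reflected asymptotic summation}, whose proof only uses $d\sigma^{\rm in}\neq0$ near $\mathcal Z$ and near $(x_{N-1},y_0)$, Borel-sums the coefficients into smooth amplitudes $U^{\rm in},V^{\rm in}$.
\item The argument of Proposition \ref{Prop: Robin Parametrix Verified} then yields both Equation \eqref{Eq: Branchwise interior transport condition} and Equation \eqref{Eq: Compact branchwise Robin matching} for $j=N-1$.
\end{enumerate}

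\emph{Inductive step.} Suppose the incoming kernel at reflection $j+1$ has been built on $\mathcal K^{\rm in}_{j+1,0}$ and near $(x_{j+1},y_0)$. Since $\alpha^{\rm out}_j$ and $\alpha^{\rm in}_{j+1}$ label the same $(N-j-1)$-reflected restriction of $\gamma^{(x_0,y_0)}_{N,\alpha}$, the outgoing member at $j$ is that kernel. One has to check that the transport equations for it hold along the whole segment between $x_j$ and $x_{j+1}$. So I would shrink neighbourhoods to obtain $\mathcal K^{\rm out}_{j,0}$ containing the relevant pairs with first entry near $x_j$, and verify that its coefficients extend smoothly up to $(x_j,y_0)$. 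This extension is where Condition \emph{2.} of Assumption \ref{Ass: Finite regular reflections} is used again, now for the incoming/outgoing restriction at $x_j$. The boundary matching at $x_j$ is then repeated verbatim from the base case. The only change is that the role of $(\sigma,u_r,v_r)$ is played by $(\sigma^{\rm out}_{j,N,\alpha},u^{\rm out}_{r;j,N,\alpha},v^{\rm out}_{r;j,N,\alpha})$. The second-order identities \eqref{Eq: N-reflected_2nd_order_1}–\eqref{Eq: N-reflected_2nd_order_3} guarantee that the transport coefficients $(\sigma^{\rm in})^\mu{}_\mu$ are smooth up to the boundary. Finiteness of $N$ makes the intersection of the finitely many shrunken neighbourhoods open and nonempty.

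\emph{Expected obstacle.} The delicate point is the consistency of domains along the chain. The coefficients of the outgoing member are produced by transport from $x_{j+1}$ back along characteristics of $\sigma^{\rm out}$. I need these characteristics, issued from points near the boundary at $x_{j+1}$, to fill a neighbourhood of $(x_j,y_0)$ in $\mcM\times\mathring\mcM$ without meeting $\partial\mcM$ in between. My first attempt at this would be to invoke the compactness argument of Lemma \ref{Lem: Smooth dependence of reflected geodesics}: the segment from $x_j$ to $x_{j+1}$ stays a positive distance from $\partial\mcM$ away from its endpoints, and this survives small perturbations. Combined with regularity (no reflected conjugate points, Assumption \ref{Ass: Finite regular reflections}), this gives a smooth flow-box in which the transport ODEs are solved. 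Uniqueness follows because each transport equation is a linear ODE along characteristics with prescribed boundary value.
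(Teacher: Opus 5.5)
Your backward induction is the paper's argument. The base case at $x_{N-1}$ takes the directed kernel $H^\pm$ as outgoing member; at each $x_j$ the already-built incoming kernel of $x_{j+1}$ then serves as the outgoing one, and the replacement $(\sigma,U,V)\mapsto(\sigma^{\rm out}_{j,N,\alpha},U^{\rm out}_{j,N,\alpha},V^{\rm out}_{j,N,\alpha})$ turns the Robin matching into the one-reflection system, with no-glancing making $z=0$ non-characteristic and Borel summation producing the amplitudes.

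Your flow-box discussion of transporting the outgoing coefficients along the whole segment from $x_{j+1}$ back to $x_j$, with smooth extension up to $(x_j,y_0)$, is sound and covers a point the paper's proof passes over silently. One slip: smoothness of $(\sigma^{\rm in})^\mu{}_\mu$ up to the boundary comes from Condition \emph{2.} of Assumption \ref{Ass: Finite regular reflections}. It does not come from Equations \eqref{Eq: N-reflected_2nd_order_1}--\eqref{Eq: N-reflected_2nd_order_3}, which are consequences of that smoothness.
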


\begin{proof}
We proceed backwards along the branch chain and we specify that all relatively compact open subset $\mathcal{K}_{j,\partial}$ and $\mathcal{K}_{j,0}$ are chosen following the same construction as in the proof of Lemma \ref{Lem: Reflected asymptotic summation}. At the last reflection point $x_{N-1}$, the outgoing component is the Hadamard kernel $H^\pm$. Lemma \ref{Lem: Matching N-reflected Synge world functions} yields the relations between $\sigma^{\mathrm{in}}_{N-1,N,\alpha}$ and $\sigma$ which were used in the previous section in the analysis of the single reflection scenario. Hence Propositions \ref{Prop: Existence one-reflection reflected coefficients even} and \ref{Prop: Existence one-reflection reflected coefficients odd} entail the statement of the proposition.
	
Assume now that the outgoing component at $x_j$ has already been constructed. Lemma \ref{Lem: Matching N-reflected Synge world functions} shows that the cancellation of the singular boundary terms is governed by the same one-reflection system after the replacements
\begin{equation}\label{Eq: Branchwise recursion replacement rule}
(\sigma_-,U',V';\sigma,U,V)	\longmapsto	\left(\sigma^{\mathrm{in}}_{j,N,\alpha},U^{\mathrm{in}}_{j,N,\alpha},		V^{\mathrm{in}}_{j,N,\alpha};\sigma^{\mathrm{out}}_{j,N,\alpha},U^{\mathrm{out}}_{j,N,\alpha},		V^{\mathrm{out}}_{j,N,\alpha}\right).
\end{equation}
The already constructed outgoing coefficients therefore provide the boundary data for the incoming transport equations. The infinitesimal convexity of the boundary or, equivalently, the no-glancing hypothesis entails $\nabla_n\sigma^{\mathrm{in}}_{j,N,\alpha}(x_j,y)\neq 0$, where $n$ is the inward pointing, unit vector normal to $\partial\mcM$ at the reflection point $x_j$. Accordingly the boundary is non-characteristic for these first-order equations and the same local existence argument used in the one-reflection case applies.
	
Existence of the functions $U^{\mathrm{out}}_{j,N,\alpha}, V^{\mathrm{out}}_{j,N,\alpha}$ descends from the same Borel summation argument as in the proof of Lemma \ref{Lem: Reflected asymptotic summation}. These properties hold for every incoming branchwise Synge function by Assumption \ref{Ass: Finite regular reflections}, Lemma \ref{Lem: Matching N-reflected Synge world functions} and the no-glancing hypothesis, see Equation \eqref{Eq: No Glancing}. Iterating the procedure from $j=N-1$ up to $j=0$ yields Equations \eqref{Eq: Branchwise interior transport condition} and \eqref{Eq: Compact branchwise Robin matching}.
\end{proof}

\begin{remark}
We observe that, when $d>2$, the leading matching relation at the $j$-th reflection is
$$\left.u^{\mathrm{in}}_{0;j,N,\alpha}\right|_{x_j}=\left.u^{\mathrm{out}}_{0;j,N,\alpha}\right|_{x_j}.$$
As in the case of one single reflection we can build explicitly the solution, namely
$$u^{\mathrm{in}}_{0;j,N,\alpha}(x,y)=\frac{\left|\det\!\left(	-\nabla_\mu\nabla_{\nu'}\sigma^{\mathrm{in}}_{j,N,\alpha}(x,y)\right)\right|^{\frac{1}{2}}}{|g_x|^{\frac{1}{4}}|g_y|^{\frac{1}{4}}},$$
which is non vanishing in the regions under consideration. Furthermore, once more no symmetry of the reflected coefficients is used in Proposition \ref{Prop: Recursive branchwise Hadamard coefficients} and this property is not investigated.
\end{remark}

\noindent The preceding proposition constructs the analytic data along the whole finite branch chain. The next result compares Equation \eqref{Eq: Branchwise reflected causal kernel} with the exact Robin Pauli-Jordan propagator, microlocally along the $N$-reflection branch.

\begin{proposition}\label{Prop: Finite-reflection branchwise causal comparison}
	Let $(N,\alpha)\in\mathcal{I}_{\mathrm{ref}}$ and let $\Omega^{\mathrm H}_{N,\alpha}$ be an $N$-reflection Hadamard branch domain as per Definition \ref{Def: N-reflection Hadamard branch domain}. Denoting by $N^*\mathcal Z_{N,\alpha}$ the conormal bundle of the hypersurface $\mathcal Z_{N,\alpha}$ defined in Equation \eqref{Eq: Z}, for every $q\in\left(N^*\mathcal Z_{N,\alpha}\setminus\{\boldsymbol{0}\}\right)\cap T^*\Omega^{\mathrm H}_{N,\alpha}$ there exists an open conic neighbourhood $\Gamma_q\subset T^*\Omega^{\mathrm H}_{N,\alpha}\setminus\{\boldsymbol{0}\}$ such that
	\begin{equation}\label{Eq: Branchwise microlocal causal comparison}
		\Gamma_q\cap\operatorname{WF}\!\left(\left.G_\kappa\right|_{\Omega^{\mathrm H}_{N,\alpha}}-E^{\mathrm{ref}}_{\kappa;N,\alpha}\right)=\emptyset,
	\end{equation}
	where $E^{\mathrm{ref}}_{\kappa;N,\alpha}$ is as per Equation \eqref{Eq: Branchwise reflected causal kernel}.
\end{proposition}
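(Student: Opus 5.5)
The plan is to argue by induction on $N$, with the case $N=1$ supplied by item \emph{2.} of Proposition \ref{Prop: One-reflection Robin parametrix and causal propagators}. Near a conormal point lying over $\mathcal Z_{1,\alpha}$, the directed part $E^{\mathrm{dir}}$ is smooth there, because its singular support $\{\sigma=0\}$ is disjoint from $\{\sigma_-=0\}$. Hence $G_\kappa-E^{\mathrm{ref}}_{\kappa;1,\alpha}$ is smooth microlocally near $q$.

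For general $N$, fix $q=(x,k_x,y,-k_y)$ over $(x,y)\in\mathcal Z_{N,\alpha}$, and let $x_0,\dots,x_{N-1}$ be the reflection points of $\gamma^{(x,y)}_{N,\alpha}$. Set $D\doteq G_\kappa-E^{\mathrm{ref}}_{\kappa;N,\alpha}$ on $\Omega^{\mathrm H}_{N,\alpha}$. I would carry out three steps.

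\emph{(a)} Record the structural facts about $D$. By Proposition \ref{Prop: Recursive branchwise Hadamard coefficients}, $P_xE^{\mathrm{ref}}_{\kappa;N,\alpha}$ is smooth. Moreover $E^{\mathrm{ref}}_{\kappa;N,\alpha}$ obeys the Robin condition modulo smooth terms once it is combined with the outgoing member of the chain, as in Equation \eqref{Eq: Compact branchwise Robin matching}. Hence $P_xD$ is smooth. Also, by Corollary \ref{Cor: WF One-Reflection} adapted with $\sigma_{-,N,\alpha}$ as the local coordinate $r$, we have $\operatorname{WF}(E^{\mathrm{ref}}_{\kappa;N,\alpha})\subseteq N^*\mathcal Z_{N,\alpha}$.

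\emph{(b)} Propagate backwards. Follow the broken bicharacteristic through $q$ in the first variable, keeping $(y,-k_y)$ fixed, and move past the first reflection point $x_0$ to a point $(w,k_w)$ on the outgoing segment. The pair $(w,y)$ then lies on the branch $(N-1,\alpha^{\mathrm{out}}_0)$, which is covered by the inductive hypothesis, or is directly connected when $N=1$. There, $G_\kappa-E^{\mathrm{out}}_{\kappa;0,N,\alpha}$ is microlocally smooth near $(w,k_w,y,-k_y)$. Near $(x_0,y)$, using the chain relation $H^{\mathrm{in}}_{\kappa;0,N,\alpha}=H^{\mathrm{ref}}_{\kappa;N,\alpha}$, consider
\[
\widetilde D\doteq G_\kappa-E^{\mathrm{in}}_{\kappa;0,N,\alpha}-E^{\mathrm{out}}_{\kappa;0,N,\alpha}
\]
on a neighbourhood of $(x_0,y)$ in $\mcM\times\mathring{\mcM}$. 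This distribution satisfies $P_x\widetilde D\in C^\infty$ and $\mathcal B_{\kappa,x}\widetilde D|_{\partial\mcM}\in C^\infty$, by Equation \eqref{Eq: Compact branchwise Robin matching} together with the exact Robin condition for $G_\kappa$. By \cite[Thm.~4.1]{Melrose_1978} applied to $P\otimes\mathbb I$ with a Robin condition that is smooth modulo smooth errors, the point $(w,k_w,y,-k_y)$ does not lie in $\operatorname{WF}(\widetilde D)$. This uses the inductive hypothesis, together with the observation that $E^{\mathrm{in}}$ is smooth there because $\sigma^{\mathrm{in}}\neq0$ off its own null hypersurface. The no-glancing property (Proposition \ref{Prop: no-glancing}) makes $(x_0,\cdot)$ a hyperbolic point, so the absence of singularities transfers to the incoming ray. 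On the incoming segment $E^{\mathrm{out}}$ is smooth, since $\sigma^{\mathrm{out}}\neq0$ there. Pushing further back along the segment to $(x,k_x)$ then gives $q\notin\operatorname{WF}(D)$, using $P_xD\in C^\infty$ and ordinary propagation of singularities in the interior. Since the wavefront set is closed, an open cone $\Gamma_q$ exists.

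\emph{(c)} Check the inductive bookkeeping. One has to confirm that the branch domains $\mathcal K^{\mathrm{in/out}}_{j,0}$ and $\mathcal K_{j,\partial}$ from Proposition \ref{Prop: Recursive branchwise Hadamard coefficients} can be shrunk so that the whole ray segment from $q$ to $(w,k_w)$ stays inside regions where all the kernels are defined. Here I would use compactness of the ray and Assumption \ref{Ass: No Zeno points}.

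I expect the main obstacle to be the boundary step in (b): we need a version of Melrose--Sj\"ostrand for a distribution that satisfies the boundary condition only up to smooth errors, and that is defined only locally near $(x_0,y)$. My first attempt would be to cut off and correct by a smooth $Q$, exactly as in item 2 of Proposition \ref{Prop: One-reflection Robin parametrix and causal propagators}, using $Q=-\chi(z)B$. This makes the boundary condition exact, so that \cite{Melrose_1978} applies verbatim to the corrected distribution.
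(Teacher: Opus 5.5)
Your proposal is correct and is essentially the paper's proof. Your induction on $N$ runs as follows: apply the hypothesis to the outgoing branch $(N-1,\alpha^{\rm out}_0)$, correct the Robin defect of $G_\kappa-E^{\rm in}_{\kappa;0,N,\alpha}-E^{\rm out}_{\kappa;0,N,\alpha}$ by a smooth term and apply \cite[Thm.~4.1]{Melrose_1978}, then propagate in the interior back to $q$; this is just the paper's backward iteration over the reflection points $x_{N-1},\dots,x_0$ with the base step taken from Proposition \ref{Prop: One-reflection Robin parametrix and causal propagators}. The only cosmetic difference is that you move regularity between interior points on the outgoing and incoming segments near the reflection point, rather than between the boundary covectors $q^{\rm out}_j$ and $q^{\rm in}_j$ as the paper does, which is if anything the more natural way to invoke Melrose--Sj\"ostrand.
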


\begin{proof}
Given $q=(x,k_x,y,-k_y)\in\left( N^*\mathcal Z_{N,\alpha}\setminus\{\boldsymbol{0}\}\right)\cap T^*\Omega^{\mathrm H}_{N,\alpha}$, we denote by $x_0,\ldots,x_{N-1}\in\partial\mcM$ the reflection points of $\gamma^{(x,y)}_{N,\alpha}$, ordered according to the underlying affine parameter. For every $j=0,\ldots,N-1$, let $k^{\mathrm{in}}_j,k^{\mathrm{out}}_j\in T^*_{x_j}\mcM$ be the covectors obtained by propagating $k_x$ along the corresponding broken bicharacteristic immediately before and after the $j$-th reflection. On account of Equations \eqref{Eq: sigmajin} and \eqref{Eq: sigmajout}, there exists $\lambda_j\neq 0$ such that	$$k^{\mathrm{in}}_j=\lambda_jd_x\sigma^{\mathrm{in}}_{j,N,\alpha}(x_j,y)\quad\textrm{and}\quad	k^{\mathrm{out}}_j=\lambda_jd_x\sigma^{\mathrm{out}}_{j,N,\alpha}(x_j,y).$$
We recall that both $d_x\sigma^{\mathrm{in/out}}_{j,N,\alpha}(x_j,y)$ are lightlike covectors due to the eikonal equation \eqref{Eq: Eikonal Equation for sigma_-}. We fix the notation $q^{\mathrm{in}}_j\doteq(x_j,k^{\mathrm{in}}_j,y,-k_y)$ and $q^{\mathrm{out}}_j\doteq(x_j,k^{\mathrm{out}}_j,y,-k_y)$ observing that, due to Lemma \ref{Lem: Matching N-reflected Synge world functions} combined with Assumption \ref{Ass: Infinitesimally convex}, $k^{\mathrm{in}}_j\neq k^{\mathrm{out}}_j$.
	
We start from the last reflection point $x_{N-1}$. By Equation \eqref{Eq: Incoming outgoing branchwise Hadamard kernels}, $E^{\mathrm{out}}_{\kappa;N-1,N,\alpha}$ agrees with the contribution coming from the direct null geodesic, while $E^{\mathrm{in}}_{\kappa;N-1,N,\alpha}$ is its reflected counterpart. Hence the sought statements have already been proven in Proposition \ref{Prop: One-reflection Robin parametrix and causal propagators} and Corollary \ref{Cor: WF One-Reflection}. This entails the existence of a suitable relatively compact open neighbourhood $\mathcal{K}_{N-1,\partial}\subset\mcM\times\mathring{\mcM}$ containing $(x_{N-1},y)$ such that
\begin{equation}\label{Eq: Last reflection causal comparison}
\left.G_\kappa\right|_{\mathcal{K}_{N-1,\partial}}-\left.E^{\mathrm{in}}_{\kappa;N-1,N,\alpha}\right|_{\mathcal{K}_{N-1,\partial}}-\left.E^{\mathrm{out}}_{\kappa;N-1,N,\alpha}\right|_{\mathcal{K}_{N-1,\partial}}\in C^\infty(\mathcal{K}_{N-1,\partial}).
\end{equation}
Furthermore 
\begin{equation}\label{Eq: Initial backward regularity}
q^{\mathrm{in}}_{N-1}\notin\operatorname{WF}\!\left(G_\kappa-E^{\mathrm{in}}_{\kappa;N-1,N,\alpha}\right).
\end{equation}
We proceed inductively in the index $j$, hence, heuristically speaking, backwards along $\gamma^{(x,y)}_{N,\alpha}$.
Suppose that, for some $j_0\in\{1,\ldots,N-1\}$, the backward procedure has been carried out at all reflection points $x_j$ with $j\geq j_0$. In particular,
\begin{equation}\label{Eq: Backward induction hypothesis causal}
	q^{\mathrm{in}}_{j_0}\notin\operatorname{WF}\!\left(G_\kappa-E^{\mathrm{in}}_{\kappa;j_0,N,\alpha}\right).
\end{equation}
We prove that the sought statement holds true also at $x_{j_0-1}$, the following reflection point. By construction of the Hadamard branch chain, see Equation \eqref{Eq: Incoming outgoing branchwise Hadamard kernels}, the incoming member at $x_{j_0}$ coincides with the outgoing member at $x_{j_0-1}$. Hence, there exists $\mathcal{U}_{j_0-1}\subset\mathring{\mcM}\times\mathring{\mcM}$, an open neighbourhood containing this segment of the broken geodesic, so that Proposition \ref{Prop: Recursive branchwise Hadamard coefficients} entails that
$$P_x\left(G_\kappa-E^{\mathrm{out}}_{\kappa;j_0-1,N,\alpha}\right)\in C^\infty(\mathcal{U}_{j_0-1}).$$
Therefore, propagation of singularities applied to Equation \eqref{Eq: Backward induction hypothesis causal} yields 
$$q^{\mathrm{out}}_{j_0-1}\notin\operatorname{WF}\!\left(G_\kappa-E^{\mathrm{out}}_{\kappa;j_0-1,N,\alpha}\right).$$ 
By Definition \ref{Def: N-reflection Hadamard branch domain}, we can choose a relatively compact open neighbourhood $\mathcal{K}_{j_0-1,\partial}\subset\mcM\times\mathring{\mcM}$ including $(x_{j_0-1},y)$ where both $\sigma^{\mathrm{in}}_{j_0-1,N,\alpha}$ and $\sigma^{\mathrm{out}}_{j_0-1,N,\alpha}$ admit smooth one-sided extensions. Setting $\mathcal{K}^\partial_{j_0-1,\partial}\doteq\mathcal{K}_{j_0-1,\partial}\cap(\partial\mcM\times\mathring{\mcM})$, on $\mathcal{K}_{j_0-1,\partial}$ consider 
$$D_{j_0-1}\doteq G_\kappa-E^{\mathrm{in}}_{\kappa;j_0-1,N,\alpha}-E^{\mathrm{out}}_{\kappa;j_0-1,N,\alpha}.$$ 
Proposition \ref{Prop: Recursive branchwise Hadamard coefficients} and Equation \eqref{Eq: Compact branchwise Robin matching} entail, possibly after shrinking $\mathcal{K}_{j_0-1,\partial}$, that 
\begin{equation}\label{Eq: Dj smooth defects} 
	P_xD_{j_0-1}\in C^\infty(\mathcal{K}_{j_0-1,\partial})\quad\textrm{and}\quad\left.\mathcal B_{\kappa,x}D_{j_0-1}\right|_{\mathcal{K}^\partial_{j_0-1,\partial}}\in C^\infty(\mathcal{K}^\partial_{j_0-1,\partial}). 
\end{equation} 
The same proof of Corollary \ref{Cor: WF One-Reflection}, applied to $\sigma^{\mathrm{in}}_{j_0-1,N,\alpha}$, shows that $q^{\mathrm{out}}_{j_0-1}\notin\operatorname{WF}(E^{\mathrm{in}}_{\kappa;j_0-1,N,\alpha})$. Combined with 
$q^{\mathrm{out}}_{j_0-1}\notin\operatorname{WF}(G_\kappa-E^{\mathrm{out}}_{\kappa;j_0-1,N,\alpha})$, this entails that $q^{\mathrm{out}}_{j_0-1}\notin\operatorname{WF}(D_{j_0-1})$. Proceeding as in the proof of Proposition \ref{Prop: One-reflection Robin parametrix and causal propagators}, one reduces to the case of homogeneous Robin boundary conditions without altering the wavefront set. The Melrose-Sj\"ostrand propagation theorem \cite[Thm. 4.1]{Melrose_1978}, applied as in the proof of Theorem \ref{Thm: WF of Propagators}, then yields 
$$q^{\mathrm{in}}_{j_0-1}\notin\operatorname{WF}(D_{j_0-1}).$$ 
At the same time, applying the proof of Corollary \ref{Cor: WF One-Reflection} to $\sigma^{\mathrm{out}}_{j_0-1,N,\alpha}$ entails that 
\begin{equation}\label{Eq: Incoming regularity at j} 
q^{\mathrm{in}}_{j_0-1}\notin\operatorname{WF}\!\left(G_\kappa-E^{\mathrm{in}}_{\kappa;j_0-1,N,\alpha}\right). 
\end{equation} 
This completes the inductive step. Iterating the preceding step down to $j_0=1$ yields
$$q^{\mathrm{in}}_0\notin\operatorname{WF}(G_\kappa-E^{\mathrm{in}}_{\kappa;0,N,\alpha}).$$ Since $E^{\mathrm{in}}_{\kappa;0,N,\alpha}$ is the one-sided realization of $E^{\mathrm{ref}}_{\kappa;N,\alpha}$, ordinary propagation of singularities along the initial segment of $\gamma^{(x,y)}_{N,\alpha}$ entails
$$q\notin\operatorname{WF}\!\left(\left.G_\kappa\right|_{\Omega^{\mathrm H}_{N,\alpha}}-E^{\mathrm{ref}}_{\kappa;N,\alpha}\right).$$
Equation \eqref{Eq: Branchwise microlocal causal comparison} follows since the complement of the wavefront set is open and conic.
\end{proof}

\begin{proposition}\label{Prop: WF branchwise reflected Hadamard kernel}
Under the assumptions of Proposition \ref{Prop: Finite-reflection branchwise causal comparison}, it holds that
\begin{equation}\label{Eq: WF branchwise reflected causal kernel}
\operatorname{WF}\!\left(E^{\mathrm{ref}}_{\kappa;N,\alpha}\right)=N^*\mathcal{Z}_{N,\alpha}		\setminus\{\boldsymbol{0}\}\quad\textrm{and}\quad\operatorname{WF}\!\left(H^{\mathrm{ref},\pm}_{\kappa;N,\alpha}\right)=		\Lambda^{\triangleright/\triangleleft}_{N,\alpha},
\end{equation}
where $\Lambda^{\triangleright}_{N,\alpha}$ is as per Equation \eqref{Eq: Branchwise reflected conormal relation}, while $\Lambda^{\triangleleft}_{N,\alpha}$ is defined by replacing $\lambda(t(x)-t(y))>0$ with $\lambda(t(x)-t(y))<0$.
\end{proposition}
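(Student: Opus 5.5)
The plan mirrors the proof of Corollary \ref{Cor: WF One-Reflection}, with the one-reflection inputs replaced by their $N$-reflection counterparts: Proposition \ref{Prop: Recursive branchwise Hadamard coefficients} for the coefficients and Proposition \ref{Prop: Finite-reflection branchwise causal comparison} for the comparison with $G_\kappa$. The argument is local, so we fix $(x_0,y_0)\in\mathcal Z_{N,\alpha}\cap\Omega^{\mathrm H}_{N,\alpha}$. Near points off $\mathcal Z_{N,\alpha}$ the kernel $H^{\mathrm{ref},\pm}_{\kappa;N,\alpha}$ is smooth: the amplitudes are smooth and $\sigma_{-,N,\alpha}\neq0$ there, so the $\epsilon\to0^+$ limit is a smooth function. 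Hence both wavefront sets lie over $\mathcal Z_{N,\alpha}$.

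By Proposition \ref{Prop: Reflected null branch atlas}, $d\sigma_{-,N,\alpha}\neq0$ on $\mathcal Z_{N,\alpha}$, so near $(x_0,y_0)$ we may introduce coordinates $(r,\zeta)$ with $r=\sigma_{-,N,\alpha}$. Moreover $\tau=t(x)-t(y)\neq0$ on $r=0$, because the broken null geodesic is causal and $t$ is strictly monotone along it.

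\textbf{Upper bound.} In these coordinates $H^{\mathrm{ref},+}_{\kappa;N,\alpha}$ is a smooth combination $U'_{N,\alpha}\,(r+i0\tau)^{-\frac{d-2}{2}}+\delta_d V'_{N,\alpha}\ln(r+i0\tau)$. Equations \eqref{Eq: WFaux1} and \eqref{Eq: WFaux2}, together with the fact that multiplication by smooth functions does not enlarge the wavefront set, give
\[
\operatorname{WF}(H^{\mathrm{ref},+}_{\kappa;N,\alpha})\subseteq\{\lambda\, d\sigma_{-,N,\alpha} : \lambda\tau>0\}.
\]
The endpoint derivative identities $d_x\sigma_{-,N,\alpha}=-v_{N,\alpha}^\flat$, used in Corollary \ref{Cor: Branchwise conormal resolution}, identify this set with $\Lambda^{\triangleright}_{N,\alpha}$. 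The $-$ case is the same with the opposite sign, giving $\Lambda^{\triangleleft}_{N,\alpha}$. Since $E^{\mathrm{ref}}_{\kappa;N,\alpha}=-i(H^{\mathrm{ref},+}_{\kappa;N,\alpha}-H^{\mathrm{ref},-}_{\kappa;N,\alpha})$, it follows that $\operatorname{WF}(E^{\mathrm{ref}}_{\kappa;N,\alpha})\subseteq\Lambda^{\triangleright}_{N,\alpha}\cup\Lambda^{\triangleleft}_{N,\alpha}=N^*\mathcal Z_{N,\alpha}\setminus\{\boldsymbol 0\}$.

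\textbf{Lower bound for $E^{\mathrm{ref}}_{\kappa;N,\alpha}$.} Take $q\in N^*\mathcal Z_{N,\alpha}\setminus\{\boldsymbol 0\}$. Proposition \ref{Prop: Finite-reflection branchwise causal comparison} gives $q\notin\operatorname{WF}(G_\kappa-E^{\mathrm{ref}}_{\kappa;N,\alpha})$, so $q\in\operatorname{WF}(E^{\mathrm{ref}}_{\kappa;N,\alpha})$ if and only if $q\in\operatorname{WF}(G_\kappa)$. Points of $N^*\mathcal Z_{N,\alpha}$ are exactly pairs $(x,k_x,y,-k_y)$ joined by an $N$-reflected broken null bicharacteristic: the covector is proportional to $-v^\flat$ at $x$ and to $-\dot\gamma(1)^\flat$ at $y$. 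Such points lie in $\mathcal C_{\mathrm b}(\mathring{\mcM})$, so Theorem \ref{Thm: WF of Propagators} puts $q$ in $\operatorname{WF}(G_\kappa)$. This gives the first equality.

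\textbf{Lower bound for $H^{\mathrm{ref},\pm}_{\kappa;N,\alpha}$.} Let $q\in\Lambda^{\triangleright}_{N,\alpha}$. Then $q\in\operatorname{WF}(E^{\mathrm{ref}}_{\kappa;N,\alpha})\subseteq\operatorname{WF}(H^{\mathrm{ref},+}_{\kappa;N,\alpha})\cup\operatorname{WF}(H^{\mathrm{ref},-}_{\kappa;N,\alpha})$. By the upper bound, $\operatorname{WF}(H^{\mathrm{ref},-}_{\kappa;N,\alpha})\subseteq\Lambda^{\triangleleft}_{N,\alpha}$, and $\Lambda^{\triangleleft}_{N,\alpha}$ is disjoint from $\Lambda^{\triangleright}_{N,\alpha}$. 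Hence $q\in\operatorname{WF}(H^{\mathrm{ref},+}_{\kappa;N,\alpha})$, and the $-$ case follows symmetrically. This route avoids the contradiction argument of Corollary \ref{Cor: WF One-Reflection}.

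\textbf{Main obstacle.} The delicate step is the identification of $N^*\mathcal Z_{N,\alpha}$ with genuine broken bicharacteristic pairs in $\mathcal C_{\mathrm b}$. It requires checking that the $\lambda$-rescaled covector, transported along each segment and reflected by $\mathcal R^*$ at every $x_j$, matches $-\lambda\dot\gamma(1)^\flat$ at $y$. I would obtain this by iterating the matching in Lemma \ref{Lem: Matching N-reflected Synge world functions}, in the same spirit as Proposition \ref{Prop: Resolution of broken null relation}.
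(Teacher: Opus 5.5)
Your proposal is correct and is essentially the paper's proof. The upper bounds are imported from the first half of the proof of Corollary \ref{Cor: WF One-Reflection} via $d\sigma_{-,N,\alpha}\neq0$ on $\mathcal Z_{N,\alpha}$; the reverse inclusion for $E^{\mathrm{ref}}_{\kappa;N,\alpha}$ comes from Theorem \ref{Thm: WF of Propagators} together with Proposition \ref{Prop: Finite-reflection branchwise causal comparison}; and the one for $H^{\mathrm{ref},\pm}_{\kappa;N,\alpha}$ follows from the disjointness of $\Lambda^{\triangleright}_{N,\alpha}$ and $\Lambda^{\triangleleft}_{N,\alpha}$, with no contradiction argument, exactly as you propose. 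The step you flag as the main obstacle is already settled by Corollary \ref{Cor: Branchwise conormal resolution} and Proposition \ref{Prop: Resolution of broken null relation} (with Remark \ref{Rem: Full Directed and Reflected Singularities} for the past-directed half), since $(\mathcal R_{x_j}v)^\flat=\mathcal R^*_{x_j}(v^\flat)$, so no iteration of Lemma \ref{Lem: Matching N-reflected Synge world functions} is needed.
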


\begin{proof}
We can mimic the proof of Corollary \ref{Cor: WF One-Reflection}. As a matter of fact, replacing $\sigma_{-,1,\alpha}$ with $\sigma_{-,N,\alpha}$ is harmless since Proposition \ref{Prop: Reflected null branch atlas} entails that $	d\sigma_{-,N,\alpha}\neq 0$ on $\mathcal{Z}_{N,\alpha}$. Hence the first part of the proof of Corollary \ref{Cor: WF One-Reflection} yields the inclusions 
$$\operatorname{WF}\!\left(H^{\mathrm{ref},+}_{\kappa;N,\alpha}\right)\subseteq\Lambda^{\triangleright}_{N,\alpha}\quad\textrm{and}\quad\operatorname{WF}\!\left(H^{\mathrm{ref},-}_{\kappa;N,\alpha}\right)\subseteq\Lambda^{\triangleleft}_{N,\alpha}.$$
Consequently Equation \eqref{Eq: Branchwise reflected causal kernel} entails
$$\operatorname{WF}\!\left(E^{\mathrm{ref}}_{\kappa;N,\alpha}\right)\subseteq N^*\mathcal{Z}_{N,\alpha}	\setminus\{\boldsymbol{0}\}.$$
To prove the converse inclusion, take $q\in N^*\mathcal{Z}_{N,\alpha}\setminus\{\boldsymbol{0}\}$. On account of Theorem \ref{Thm: WF of Propagators}, $q\in\operatorname{WF}(G_\kappa)$ and, thus, Proposition \ref{Prop: Finite-reflection branchwise causal comparison} yields $q\in\operatorname{WF}\!\left(E^{\mathrm{ref}}_{\kappa;N,\alpha}\right)$, proving Equation \eqref{Eq: WF branchwise reflected causal kernel}. To conclude, if $q\in\Lambda^{\triangleright}_{N,\alpha}$, then Equation \eqref{Eq: WF branchwise reflected causal kernel} yields $q\in\operatorname{WF}\!\left(	E^{\mathrm{ref}}_{\kappa;N,\alpha}\right)$,  while the already established inclusion entails that $q\notin 	\operatorname{WF}\!\left(H^{\mathrm{ref},-}_{\kappa;N,\alpha}\right)$ since the first covector is future pointing. Equation \eqref{Eq: Branchwise reflected causal kernel} therefore entails $q\in\operatorname{WF}\!\left(H^{\mathrm{ref},+}_{\kappa;N,\alpha}\right)$. The other case follows suit and we have reached the sought conclusion.
\end{proof}

\noindent The next result generalizes Corollary \ref{Cor: One-reflection reversal covariance} and the proof is structurally identical, hence we omit it.

\begin{corollary}\label{Cor: Finite-reflection reversal covariance}
Under the same assumptions of Proposition \ref{Prop: WF branchwise reflected Hadamard kernel}, let $\alpha^{\rm op}$ be the reversed branch as per Remark \ref{Rem: Reversed reflected branch}. The Hadamard branch domains for $\alpha$ and $\alpha^{\rm op}$ can be chosen so that $\Omega^{\mathrm{H}}_{N,\alpha^{\rm op}}= \mathfrak{s}(\Omega^{\mathrm{H}}_{N,\alpha})$, where $\mathfrak{s}(x,y)=(y,x)$. Let $\mathcal{K}\Subset\Omega^{\mathrm H}_{N,\alpha}$ and set $\mathcal{K}^{\rm op}\doteq \mathfrak{s}(\mathcal{K})$.Then
\begin{equation}\label{Eq: Finite-reflection reversal covariance}
\left(H^{\mathrm{ref},+}_{\kappa;N,\alpha}|_{\mathcal{K}}\right)^T-H^{\mathrm{ref},-}_{\kappa;N,\alpha^{\rm op}}|_{\mathcal{K}^{\rm op}}\in	C^\infty(\mathcal{K}^{\rm op})\quad\textrm{and}\quad\left(H^{\mathrm{ref},-}_{\kappa;N,\alpha}|_{\mathcal{K}}\right)^T-H^{\mathrm{ref},+}_{\kappa;N,\alpha^{\rm op}}|_{\mathcal{K}^{\rm op}}\in	C^\infty(\mathcal{K}^{\rm op}),
\end{equation}
where, once more, at the level of integral kernels $\left(H^{\mathrm{ref},\pm}_{\kappa;N,\alpha}|_{\mathcal{K}}\right)^T(x,y)= \left( H^{\mathrm{ref},\pm}_{\kappa;N,\alpha}|_{\mathcal{K}} \right)(y,x)$.
\end{corollary}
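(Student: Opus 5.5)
The plan is to follow the proof of Corollary \ref{Cor: One-reflection reversal covariance}, replacing the one-reflection inputs with their $N$-reflection counterparts, namely Propositions \ref{Prop: Finite-reflection branchwise causal comparison} and \ref{Prop: WF branchwise reflected Hadamard kernel}.

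\textbf{Choosing the domains.} By Remark \ref{Rem: Reversed reflected branch} and item \emph{2.} of Proposition \ref{Prop: Reflected null branch atlas}, $\Omega_{N,\alpha^{\rm op}}=\mathfrak s(\Omega_{N,\alpha})$. Reversing a broken geodesic interchanges its incoming and outgoing restrictions at each reflection point. Hence the one-sided extendability required in Definition \ref{Def: N-reflection Hadamard branch domain} is symmetric under $\mathfrak s$, via Condition \emph{2.} of Assumption \ref{Ass: Finite regular reflections} applied to both orientations. After shrinking, we may therefore take $\Omega^{\mathrm H}_{N,\alpha^{\rm op}}=\mathfrak s(\Omega^{\mathrm H}_{N,\alpha})$.

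\textbf{Smoothness of $\mathsf C+\mathsf D$.} Define on $\mathcal K^{\rm op}$
$$\mathsf C\doteq\left(H^{\mathrm{ref},+}_{\kappa;N,\alpha}\right)^T-H^{\mathrm{ref},-}_{\kappa;N,\alpha^{\rm op}},\qquad \mathsf D\doteq\left(H^{\mathrm{ref},-}_{\kappa;N,\alpha}\right)^T-H^{\mathrm{ref},+}_{\kappa;N,\alpha^{\rm op}}.$$
Then $\mathsf D-\mathsf C=i\bigl(E^{\mathrm{ref},T}_{\kappa;N,\alpha}+E^{\mathrm{ref}}_{\kappa;N,\alpha^{\rm op}}\bigr)$. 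Corollary \ref{Cor: G+- Formal Adjoints} gives $G_\kappa^T=-G_\kappa$, so
$$E^{\mathrm{ref},T}_{\kappa;N,\alpha}+E^{\mathrm{ref}}_{\kappa;N,\alpha^{\rm op}}=-\bigl(G_\kappa-E^{\mathrm{ref}}_{\kappa;N,\alpha}\bigr)^T+\bigl(G_\kappa-E^{\mathrm{ref}}_{\kappa;N,\alpha^{\rm op}}\bigr).$$
Proposition \ref{Prop: Finite-reflection branchwise causal comparison} is only microlocal: it says each bracket is regular near $N^*\mathcal Z$ of its branch. Away from that set, we still need smoothness. This follows from three facts:
\begin{itemize}
\item Proposition \ref{Prop: WF branchwise reflected Hadamard kernel} places $\operatorname{WF}(E^{\mathrm{ref}})$ inside $N^*\mathcal Z$.
\item $\mathfrak s(\mathcal Z_{N,\alpha})=\mathcal Z_{N,\alpha^{\rm op}}$.
\item The conormal bundles correspond under transposition, by Equation \eqref{Eq: Cotangent branch reversal}.
\end{itemize}
Combining these, every point of $\operatorname{WF}(\mathsf D-\mathsf C)$ must lie in $N^*\mathcal Z_{N,\alpha^{\rm op}}$. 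The remaining difficulty is the contribution of $G_\kappa$, which also carries direct and other-branch singularities. These cancel exactly, because $G_\kappa$ enters with opposite signs in the two brackets and $G_\kappa^T=-G_\kappa$. Near $N^*\mathcal Z_{N,\alpha^{\rm op}}$, the microlocal comparison for both branches removes everything. Hence $\mathsf D-\mathsf C$ is smooth, and so is $\mathsf C+\mathsf D$ up to the same argument applied to $\mathsf D+\mathsf C$ via the identity above. (This is the step I expect to be the main obstacle: the comparison is only microlocal, so one must check carefully that no wavefront set of $G_\kappa$ survives in the combination.)

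\textbf{Separating $\mathsf C$ and $\mathsf D$.} Proposition \ref{Prop: WF branchwise reflected Hadamard kernel} gives $\operatorname{WF}(H^{\mathrm{ref},\pm})=\Lambda^{\triangleright/\triangleleft}$. Transposing with Equation \eqref{Eq: Cotangent branch reversal}, $\operatorname{WF}(\mathsf C)$ contains only covectors whose first entry is past pointing, and $\operatorname{WF}(\mathsf D)$ only future-pointing ones. The two wavefront sets are therefore disjoint. Writing $\mathsf C=(\mathsf C+\mathsf D)-\mathsf D$ with $\mathsf C+\mathsf D$ smooth gives $\operatorname{WF}(\mathsf C)\subseteq\operatorname{WF}(\mathsf D)$; disjointness then forces $\operatorname{WF}(\mathsf C)=\emptyset$, and likewise $\operatorname{WF}(\mathsf D)=\emptyset$. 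This proves Equation \eqref{Eq: Finite-reflection reversal covariance}.
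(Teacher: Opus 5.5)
Your route is the paper's: it omits this proof as structurally identical to Corollary~\ref{Cor: One-reflection reversal covariance}, where antisymmetry of $G_\kappa$ makes $\mathsf D-\mathsf C$ smooth and the disjoint wavefront sets of $\mathsf C$ and $\mathsf D$ then force both to be smooth. You are right that for $N$ reflections Proposition~\ref{Prop: Finite-reflection branchwise causal comparison} only gives regularity microlocally near $N^*\mathcal Z_{N,\alpha}$, so your extra microlocal step is genuinely needed. The proposal nevertheless has two slips that must be fixed.

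\textbf{Sign error.} Your displayed identity is false as written. Since $G_\kappa^T=-G_\kappa$, its right-hand side equals $2G_\kappa+E^{\mathrm{ref},T}_{\kappa;N,\alpha}-E^{\mathrm{ref}}_{\kappa;N,\alpha^{\rm op}}$, and $2G_\kappa$ is far from smooth. The correct identity is
\[
E^{\mathrm{ref},T}_{\kappa;N,\alpha}+E^{\mathrm{ref}}_{\kappa;N,\alpha^{\rm op}}=-\bigl(G_\kappa-E^{\mathrm{ref}}_{\kappa;N,\alpha}\bigr)^T-\bigl(G_\kappa-E^{\mathrm{ref}}_{\kappa;N,\alpha^{\rm op}}\bigr),
\]
in which $G_\kappa$ cancels identically. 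So there is no residual problem with the direct or other-branch singularities of $G_\kappa$, and the ``main obstacle'' you flag disappears. The argument then runs as follows.
\begin{itemize}
\item By Proposition~\ref{Prop: WF branchwise reflected Hadamard kernel} and $\mathfrak s(\mathcal Z_{N,\alpha})=\mathcal Z_{N,\alpha^{\rm op}}$, the left-hand side has wavefront set in $N^*\mathcal Z_{N,\alpha^{\rm op}}$.
\item At every point of $N^*\mathcal Z_{N,\alpha^{\rm op}}$, both brackets on the right are microlocally smooth. This follows from Proposition~\ref{Prop: Finite-reflection branchwise causal comparison} applied to $\alpha^{\rm op}$, and to $\alpha$ after transposition.
\end{itemize}
Hence $\mathsf D-\mathsf C$ is smooth.

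\textbf{The combination $\mathsf C+\mathsf D$.} You never establish that $\mathsf C+\mathsf D$ is smooth, and you do not need it. That combination is built from the symmetric parts $H^{\mathrm{ref},+}+H^{\mathrm{ref},-}$ and cannot be expressed through the $E$'s. Instead, write $\mathsf C=\mathsf D-(\mathsf D-\mathsf C)$ to get $\operatorname{WF}(\mathsf C)=\operatorname{WF}(\mathsf D)$. Your past/future-pointing disjointness argument, which is correct, then forces both wavefront sets to be empty. The paper's one-reflection proof contains the same $\mathsf C+\mathsf D$ slip.

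\textbf{Domain choice.} Reversal does not merely swap the incoming and outgoing Synge functions at each $x_j$: for $\alpha^{\rm op}$ the restricted geodesics end at $x$ rather than $y$. What makes the choice $\Omega^{\mathrm H}_{N,\alpha^{\rm op}}=\mathfrak s(\Omega^{\mathrm H}_{N,\alpha})$ legitimate is that $\Gamma^0_{N,\mathcal N}$ is $\mathfrak r_N$-invariant. Condition \emph{2.} of Assumption~\ref{Ass: Finite regular reflections} therefore applies to the reversed data too, as your appeal to ``both orientations'' correctly indicates.
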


\begin{remark}
	Corollary \ref{Cor: Finite-reflection reversal covariance} neither requires nor implies symmetry of the reflected Hadamard coefficients. It is the finite-reflection counterpart of Corollary \ref{Cor: One-reflection reversal covariance} and Remark \ref{Rem: Reversal without coefficient symmetry}.
\end{remark}

\paragraph{Multi-branch local Robin-Hadamard condition --} We combine all the finite-reflection contributions. For every $N\geq1$ we refine the countable family of Proposition \ref{Prop: Reflected null branch atlas} by $N$-reflection Hadamard branch domains, see Definition \ref{Def: N-reflection Hadamard branch domain} and we choose a locally finite partition of unity on a neighbourhood of $\Gamma^0_{N,\mathcal{N}}$ subordinate to the corresponding sets in $\Gamma_N^{\mathrm{reg}}$,
$$\psi_{N,\alpha}\in C^\infty_0(\mathcal{V}_{N,\alpha})\quad\textrm{and}\quad \sum_{\alpha\in\mathcal{I}_N}\psi_{N,\alpha}=1,$$
with $\operatorname{supp}\psi_{N,\alpha}$ contained in the preimage under $\Phi_{N,\alpha}$ of $\Omega^{\mathrm H}_{N,\alpha}$. We choose the refinement closed under branch reversal and we keep the notation $\mathcal{I}_{\mathrm{ref}}$ for the resulting countable index set. On $\Omega^{\mathrm{H}}_{N,\alpha}$ we set
$$\chi_{N,\alpha}\doteq\psi_{N,\alpha}\circ\Phi_{N,\alpha}^{-1},$$
and we let $\widetilde{H}^{\mathrm{ref}}_{\kappa;N,\alpha}$ be the extension by zero to $\mathring{\mathcal{N}}\times\mathring{\mathcal{N}}$ of $\chi_{N,\alpha}H^{\mathrm{ref},+}_{\kappa;N,\alpha}$.

\begin{remark}\label{Rem: No unrestricted branch sum}
Neither $\mathcal{I}_{\mathrm{ref}}$ being countable nor Assumption \ref{Ass: No Zeno points} entails that the family of supports
$$\left\{\operatorname{supp}(\widetilde{H}^{\mathrm{ref}}_{\kappa;N,\alpha})\right\}_{(N,\alpha)\in\mathcal{I}_{\mathrm{ref}}},$$
is locally finite in $\mathring{\mathcal{N}}\times\mathring{\mathcal{N}}$. Assumption \ref{Ass: No Zeno points} controls the number of reflections along each fixed broken null bicharacteristic in a bounded time interval, but, to the best of our understanding, the number of distinct regular branches meeting a given pair of endpoints might not necessarily bounded. In this case, when summing family of distributions $(N,\alpha)$, we might not automatically obtain a distribution, nor one with a compatible wavefront set. In the following we discard this option and, to this end, we recall the following notation \cite[Sec. 8.2]{Hormander_1990}, namely, that, given a smooth manifold $X$ and a closed cone $\Gamma\subset T^*X\setminus\{0\}$, $u\in\mathcal{D}^\prime_\Gamma(X)$ if $u\in\mathcal{D}^\prime(X)$ and $\operatorname{WF}(u)\subseteq\Gamma$.
\end{remark}

\begin{definition}\label{Def: Admissible reflected Hadamard assembly but not yet an Avenger}
The family
$$\left\{\widetilde{H}^{\mathrm{ref}}_{\kappa;N,\alpha}\right\}_{(N,\alpha)\in\mathcal{I}_{\mathrm{ref}}}$$
is called \textbf{microlocally admissible} if
$$H^{\mathrm{ref}}_{\kappa,\mathcal{N}}\doteq\sum_{(N,\alpha)\in\mathcal{I}_{\mathrm{ref}}}	\widetilde{H}^{\mathrm{ref}}_{\kappa;N,\alpha}\in\mathcal{D}^\prime_{\mathcal{C}^{\triangleright}_{\mathrm b,\mathcal{N}}}
\!\left(\mathring{\mathcal{N}}\times\mathring{\mathcal{N}}\right),$$
where convergence of the series is in the H\"ormander topology, see \cite[Def. 8.2.2]{Hormander_1990}, while $\mathcal{C}^{\triangleright}_{\mathrm b,\mathcal{N}}$ is as per Equation \eqref{Eq: Restricted future broken relation}. A two-point correlation function $\omega_{2,\kappa}\in\mathcal D^\prime(\mcM\times\mcM)$ satisfying Equations \eqref{Eq: Robin eqs of motion} and \eqref{Eq: positivity + CCR} is of \textbf{branchwise local Robin-Hadamard form} on $\mathcal{N}$ if the above family can be chosen to be admissible and if, for every causally convex, geodesically convex open subset $\mathcal{U}\Subset\mathring{\mathcal{N}}$,
\begin{equation}\label{Eq: Branchwise local Robin Hadamard form}
\left.\omega_{2,\kappa}\right|_{\mathcal{U}\times\mathcal{U}}=H_{\mathcal{U}\times\mathcal{U}}+\left.H^{\mathrm{ref}}_{\kappa,\mathcal{N}}\right|_{\mathcal{U}\times\mathcal{U}}+W_{\kappa,\mathcal{U}\times\mathcal{U}},
\end{equation}
where $H_{\mathcal{U}\times\mathcal{U}}$ is the directed Hadamard parametrix as per Definition \ref{Def: Local Hadamard State} while $W_{\kappa,\mathcal{U}\times\mathcal{U}}\in C^\infty(\mathcal{U}\times\mathcal{U})$. 
\end{definition}

\noindent The localization procedure outlined above entails that the reflected parametrices $H^{\mathrm{ref}}_{\kappa;N,\alpha}$ need to be localized by means of a cut-off function. The following corollary guarantees that, as expected, this procedure does not alter the underlying microlocal properties.

\begin{corollary}\label{Cor: Union WF localized reflected branches}
Under the same assumptions of Definition \ref{Def: Admissible reflected Hadamard assembly but not yet an Avenger}
\begin{equation}\label{Eq: Union WF localized reflected branches}
\bigcup_{(N,\alpha)\in\mathcal{I}_{\mathrm{ref}}}\operatorname{WF}\!\left(\widetilde{H}^{\mathrm{ref}}_{\kappa;N,\alpha}\right)=\Lambda^{\triangleright,\mathrm{ref}}_{\mathcal{N}}.
\end{equation}
\end{corollary}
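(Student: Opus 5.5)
We prove the two inclusions separately, working branch by branch and relying on Proposition \ref{Prop: WF branchwise reflected Hadamard kernel} together with the branchwise decomposition of Corollary \ref{Cor: Countable reflected conormal resolution}.

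\emph{The inclusion $\subseteq$.} Fix $(N,\alpha)\in\mathcal{I}_{\mathrm{ref}}$. The cut-off $\chi_{N,\alpha}=\psi_{N,\alpha}\circ\Phi^{-1}_{N,\alpha}$ is smooth on $\Omega^{\mathrm H}_{N,\alpha}$, and its support is closed there because $\psi_{N,\alpha}$ has compact support in the preimage of $\Omega^{\mathrm H}_{N,\alpha}$. Hence $\chi_{N,\alpha}H^{\mathrm{ref},+}_{\kappa;N,\alpha}$ has compact support in $\Omega^{\mathrm H}_{N,\alpha}$, and its extension by zero is a well-defined distribution on $\mathring{\mathcal{N}}\times\mathring{\mathcal{N}}$. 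Multiplication by a smooth function cannot enlarge the wavefront set, and extension by zero of a compactly supported distribution creates no new singularities. Therefore
\[
\operatorname{WF}(\widetilde H^{\mathrm{ref}}_{\kappa;N,\alpha})\subseteq \operatorname{WF}(H^{\mathrm{ref},+}_{\kappa;N,\alpha})=\Lambda^{\triangleright}_{N,\alpha},
\]
where the last equality is Proposition \ref{Prop: WF branchwise reflected Hadamard kernel}. Taking the union over $\mathcal{I}_{\mathrm{ref}}$ and applying Equation \eqref{Eq: Full reflected pair branch decomposition} gives the inclusion into $\Lambda^{\triangleright,\mathrm{ref}}_{\mathcal{N}}$. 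One point must be checked here: the refinement by Hadamard branch domains keeps the covering property in item \emph{1.} of Proposition \ref{Prop: Reflected null branch atlas}. This holds because each $\Omega^{\mathrm H}_{N,\alpha}$ is only a shrinking around points of $\mathcal{Z}_{N,\alpha}$, and the refined family is still chosen to cover $\Gamma^0_{N,\mathcal{N}}$.

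\emph{The inclusion $\supseteq$.} Let $q=(x,k_x,y,-k_y)\in\Lambda^{\triangleright,\mathrm{ref}}_{\mathcal N}$. By the definition in Equation \eqref{Eq: N-reflected cotangent relation}, there are $N\ge1$ and $(x,v)\in\Gamma^0_{N,\mathcal N}$ generating $q$. Since the partition $\{\psi_{N,\alpha}\}_\alpha$ sums to $1$ near $\Gamma^0_{N,\mathcal N}$, there is an index $\alpha$ with $\psi_{N,\alpha}(x,v)\neq0$. Then $\Phi_{N,\alpha}(x,v)=(x,y)$, so $\chi_{N,\alpha}(x,y)\neq0$, and $(x,y)\in\mathcal Z_{N,\alpha}$ by Proposition \ref{Prop: Reflected null branch atlas}. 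Injectivity of $\Phi_{N,\alpha}$ gives $v=v_{N,\alpha}(x,y)$, and the computation in the proof of Corollary \ref{Cor: Branchwise conormal resolution} then shows $q\in\Lambda^{\triangleright}_{N,\alpha}$.

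Near $(x,y)$ the cut-off $\chi_{N,\alpha}$ is nonvanishing. We can therefore choose a smooth $\theta$ supported where $\chi_{N,\alpha}\neq0$ and equal to $1/\chi_{N,\alpha}$ on a neighbourhood of $(x,y)$. Then $\theta\,\widetilde H^{\mathrm{ref}}_{\kappa;N,\alpha}$ agrees with $H^{\mathrm{ref},+}_{\kappa;N,\alpha}$ near $(x,y)$, and since the wavefront set is local,
\[
q\in\operatorname{WF}(H^{\mathrm{ref},+}_{\kappa;N,\alpha})=\operatorname{WF}(\theta\widetilde H^{\mathrm{ref}}_{\kappa;N,\alpha})\subseteq\operatorname{WF}(\widetilde H^{\mathrm{ref}}_{\kappa;N,\alpha}).
\]

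The main obstacle is this second inclusion. It requires that the chosen partition really is nonvanishing at the specific lifted datum $(x,v)$, and not merely at some other branch datum over the same pair $(x,y)$. This is precisely why the partition is taken on $\Gamma_N^{\mathrm{reg}}$, upstairs, rather than on $\mathring{\mathcal N}\times\mathring{\mathcal N}$: distinct branches through the same $(x,y)$ carry distinct conormal directions, and a partition downstairs could not separate them. The argument does not need microlocal admissibility or local finiteness of the family (cf.\ Remark \ref{Rem: No unrestricted branch sum}), because the statement concerns the union of the individual wavefront sets and not the wavefront set of the sum.
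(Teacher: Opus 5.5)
Your proof is correct and follows the same route as the paper. The inclusion $\subseteq$ comes from the fact that multiplication by $\chi_{N,\alpha}$ and extension by zero add no singularities, combined with Proposition \ref{Prop: WF branchwise reflected Hadamard kernel}; the inclusion $\supseteq$ uses that some $\psi_{N,\alpha}$ is nonzero at the lifted null datum, so $\chi_{N,\alpha}$ preserves the branch singularity there. You also make explicit what the paper leaves implicit, namely the local inversion $\theta=1/\chi_{N,\alpha}$ and the use of injectivity of $\Phi_{N,\alpha}$ to identify $v=v_{N,\alpha}(x,y)$.
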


\begin{proof}
Multiplication by the smooth functions $\chi_{N,\alpha}$ does not add singularities. Conversely, at every datum in $\Gamma^0_{N,\mathcal{N}}$ at least one function $\psi_{N,\alpha}$ is non-zero. At the corresponding base point multiplication by $\chi_{N,\alpha}$ therefore preserves the branch singularity. Proposition \ref{Prop: WF branchwise reflected Hadamard kernel} and Corollary \ref{Cor: Countable reflected conormal resolution} yield Equation \eqref{Eq: Union WF localized reflected branches}.
\end{proof}

\subsection{Hadamard states}\label{Sec: Comparison of Hadamard states}

In this last section we discuss a counterpart of the celebrated theorem by Radzikowski \cite{Radzikowski_1996, Radzikowski_1996_1}, which establishes the equivalence between the microlocal and local notions of Hadamard two-point functions introduced in Definitions \ref{Def: Robin Hadamard two-point function} and \ref{Def: Local Hadamard State}, hence generalizing the result of \cite{Costeri_25}. 

\begin{theorem}[Global vs Local]\label{Thm: Global-to-Local}
Let $(\mcM,g)$ be a $d$-dimensional, $d\geq2$, globally hyperbolic spacetime with timelike boundary which abides by Assumptions \ref{Ass: Infinitesimally convex}, \ref{Ass: Finite regular reflections} as well as \ref{Ass: No Zeno points}. Given the Klein-Gordon operator $P$ as in Equation \eqref{Eq: KG equation}, let $\mathcal{N}$ be a causally convex Cauchy neighbourhood as per Definition \ref{Def: Cauchy neighbourhood}, and let $\omega_{2,\kappa}\in\mathcal D^\prime(\mcM\times\mcM)$ satisfy Equations \eqref{Eq: Robin eqs of motion} and \eqref{Eq: positivity + CCR}. Assume in addition that there exists a microlocally admissible family $$\left\{\widetilde{H}^{\mathrm{ref}}_{\kappa;N,\alpha}\right\}_{(N,\alpha)\in\mathcal{I}_{\mathrm{ref}}},$$ as per Definition \ref{Def: Admissible reflected Hadamard assembly but not yet an Avenger} and denote the corresponding reflected sum by $H^{\mathrm{ref}}_{\kappa,\mathcal{N}}$. Then the following conditions are equivalent:
\begin{enumerate}
\item $\omega_{2,\kappa}$ is branchwise local on $\mathcal{N}$ as per Definition \ref{Def: Admissible reflected Hadamard assembly but not yet an Avenger};
\item It holds that
\begin{equation}\label{Eq: Radzikowski counterpart WF}
\operatorname{WF}\!\left(\omega_{2,\kappa}|_{\mcM\times\mathring\mcM}\right)=\mathcal{C}_{\mathrm b}^\triangleright(\mathring\mcM)=\left\{(x,k_x, y,k_y)\in\mathcal{C}_{\mathrm b}(\mathring{\mcM})\;\mathrm{and}\; k_x \triangleright 0\right\},
\end{equation}
where $\mathcal{C}_{\mathrm b}(\mathring{\mcM})$ is as per Equation \eqref{Eq: CbM}.
\end{enumerate}
\end{theorem}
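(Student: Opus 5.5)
The plan is to prove the two implications separately, in the spirit of Radzikowski's argument, working on the causally convex Cauchy neighbourhood $\mathcal{N}$ and then propagating to all of $\mathring{\mcM}\times\mathring{\mcM}$ with Theorem \ref{Thm: WF of Propagators} and the Melrose--Sj\"ostrand propagation theorem \cite[Thm. 4.1]{Melrose_1978}.

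For $1\Rightarrow 2$, fix a causally and geodesically convex $\mathcal{U}\Subset\mathring{\mathcal{N}}$. By Equation \eqref{Eq: Branchwise local Robin Hadamard form}, $\operatorname{WF}(\omega_{2,\kappa}|_{\mathcal{U}\times\mathcal{U}})\subseteq\operatorname{WF}(H)\cup\operatorname{WF}(H^{\mathrm{ref}}_{\kappa,\mathcal{N}})$. The first set is $\mathcal{C}^{\triangleright,\mathrm{dir}}_{\mathrm b,\mathcal{N}}\cap T^*(\mathcal{U}\times\mathcal{U})$ by the boundaryless Radzikowski theory, cf.\ Lemma \ref{Lem: Direct null conormal relation}. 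The second is contained in $\mathcal{C}^{\triangleright}_{\mathrm b,\mathcal{N}}$ by microlocal admissibility. Proposition \ref{Prop: Resolution of broken null relation} then gives the inclusion $\subseteq\mathcal{C}^{\triangleright}_{\mathrm b,\mathcal{N}}$ locally.

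To get equality I would use the CCR. From $\omega_{2,\kappa}-\omega_{2,\kappa}^T=iG_\kappa$ and Theorem \ref{Thm: WF of Propagators}, every point of $\mathcal{C}_{\mathrm b}$ lies in $\operatorname{WF}(\omega_{2,\kappa})\cup\operatorname{WF}(\omega^T_{2,\kappa})$. Since $\operatorname{WF}(\omega_{2,\kappa})$ sits in the future half, the future half of $\mathcal{C}_{\mathrm b}$ must lie in $\operatorname{WF}(\omega_{2,\kappa})$. Covering $\mathring{\mathcal{N}}\times\mathring{\mathcal{N}}$ by such products and by the branch domains (where the global sum $H^{\mathrm{ref}}_{\kappa,\mathcal{N}}$ is defined on all of $\mathring{\mathcal{N}}\times\mathring{\mathcal{N}}$) gives Equation \eqref{Eq: Radzikowski counterpart WF} over $\mathcal{N}$. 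It extends to $\mathring{\mcM}$ since $(P\otimes\mathbb{I})\omega_{2,\kappa}=0$ with the Robin condition: Melrose--Sj\"ostrand propagation along broken bicharacteristics (finitely many reflections by Assumption \ref{Ass: No Zeno points}) carries $\operatorname{WF}$ from $\mathcal{N}\times\mathcal{N}$ to everything, exactly as in Step 4 of Theorem \ref{Thm: WF of Propagators}. Both entries are handled using symmetry of the equations.

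For $2\Rightarrow 1$, set $W\doteq\omega_{2,\kappa}|_{\mathcal{U}\times\mathcal{U}}-H-H^{\mathrm{ref}}_{\kappa,\mathcal{N}}|_{\mathcal{U}\times\mathcal{U}}$ and show $\operatorname{WF}(W)=\emptyset$. The key algebraic device is the one of Corollary \ref{Cor: One-reflection reversal covariance}. By Corollaries \ref{Cor: Finite-reflection reversal covariance} and \ref{Cor: One-reflection reversal covariance}, summed with the reversal-closed partition of unity, together with Proposition \ref{Prop: Finite-reflection branchwise causal comparison}, the antisymmetric part $-i\bigl((H+H^{\mathrm{ref}})-(H+H^{\mathrm{ref}})^T\bigr)$ agrees microlocally with $G_\kappa$ on all of $\mathcal{C}_{\mathrm b,\mathcal{N}}$. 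Hence $W-W^T$ is smooth. Both $W$ and $W^T$ have wavefront set contained in $\mathcal{C}^{\triangleright}_{\mathrm b,\mathcal{N}}$ and in its reversal $\mathcal{C}^{\triangleleft}$ respectively, because $W^T$ inherits the transposed cone. These cones are disjoint, so $W=(W-W^T)+W^T$ forces $\operatorname{WF}(W)\subseteq\mathcal{C}^\triangleleft\cap\mathcal{C}^\triangleright=\emptyset$.

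The main obstacle I expect is the assembly step: passing from branchwise statements (Proposition \ref{Prop: Finite-reflection branchwise causal comparison} is only conic-local near each $q$) to the global sum. Two issues arise. Overlapping branches and cut-offs $\chi_{N,\alpha}$ must not produce spurious singularities; near the boundary of the support of a cut-off one only knows $E^{\mathrm{ref}}$ matches $G_\kappa$ microlocally. The infinite sum must also commute with wavefront estimates. I would try to handle this by showing, for each $q\in\Lambda_{N,\alpha}$, that the branches containing $q$ in their conormal contribute kernels summing to $1\cdot E$ there (by $\sum_\alpha\psi_{N,\alpha}=1$ and the uniqueness of the conormal symbol up to smooth terms). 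The remaining branches are microlocally smooth at $q$, and admissibility in H\"ormander's topology lets the limit pass.
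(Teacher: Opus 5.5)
Your argument follows the paper's own proof. For $1\Rightarrow2$ you use the local inclusion from Radzikowski plus admissibility, then equality from the CCR, $\operatorname{WF}(G_\kappa)=\mathcal C_{\mathrm b}$ and the disjointness of the future and past halves, then propagation. For $2\Rightarrow1$ you show $W-W^T$ is smooth and conclude by cone disjointness. Two steps do not close as you state them; the paper's proof is equally terse at exactly these points.

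\emph{Extension off the diagonal.} Condition 1 prescribes $\omega_{2,\kappa}$ only on products $\mathcal U\times\mathcal U$, and branch domains are not of this form. A broken null geodesic is a causal curve meeting $\partial\mcM$, so for causally convex $\mathcal U\Subset\mathring{\mathcal N}$ no reflected pair lies in $\mathcal U\times\mathcal U$. ``Covering by branch domains'' therefore gives no information on $\omega_{2,\kappa}$, and you must propagate from near-diagonal points. Propagation gives the lower inclusion. It reduces the upper inclusion to points over pairs $(x',y)$ with $t(x')=t(y)$ and $x'\neq y$. When $x'$ and $y$ lie in no common $\mathcal U$ (geodesic convexity bounds the size of $\mathcal U$ on curved backgrounds), neither the local data nor propagation exclude such points. ``Exactly as in Step 4'' does not transfer, because Step 4 used the causal support of $G_\kappa$, which $\omega_{2,\kappa}$ does not have.

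The missing ingredient is positivity, which you never use. The bound $|\omega_{2,\kappa}(\bar f\otimes h)|^2\le\omega_{2,\kappa}(\bar f\otimes f)\,\omega_{2,\kappa}(\bar h\otimes h)$ shows that $(x,k_x,y,k_y)\in\operatorname{WF}(\omega_{2,\kappa})$ forces both $(x,k_x,x,-k_x)$ and $(y,-k_y,y,k_y)$ into $\operatorname{WF}(\omega_{2,\kappa})$. Since $G_\kappa$ vanishes near such $(x',y)$, the CCR give $\operatorname{WF}(\omega_{2,\kappa})=\operatorname{WF}(\omega^T_{2,\kappa})$ there. The diagonal information then imposes incompatible time orientations on the covectors, including the case where one of them vanishes.

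\emph{Assembly of the branch sum.} Your remedy does not work. Convergence in $\mathcal D'_\Gamma$ with $\Gamma=\mathcal C^{\triangleright}_{\mathrm b,\mathcal N}$ controls Fourier seminorms only in closed cones disjoint from $\Gamma$, so it says nothing at $q\in\Gamma$. An infinite sum of kernels, each microlocally smooth at $q$, can be singular at $q$. To show that the tail of branches not passing through $q$ is smooth there, you need more than Definition \ref{Def: Admissible reflected Hadamard assembly but not yet an Avenger} provides, cf.\ Remark \ref{Rem: No unrestricted branch sum}. Either local finiteness of the supports of $\widetilde H^{\mathrm{ref}}_{\kappa;N,\alpha}$ near the base point of $q$, or a uniform bound on the tail in a conic neighbourhood of $q$, would do.

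The issue already bites on $\mathcal U\times\mathcal U$. The reflected relation is empty there, so $2\Rightarrow1$ holds on $\mathcal U\times\mathcal U$ if and only if $H^{\mathrm{ref}}_{\kappa,\mathcal N}|_{\mathcal U\times\mathcal U}$ is smooth. Admissibility alone allows it to be singular on the direct relation.

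Your finite-branch argument at $q$ is sound: overlapping charts give the same kernel modulo $C^\infty$, and $\sum_\alpha\psi_{N,\alpha}=1$ near $\Gamma^0_{N,\mathcal N}$. So you have found the real weak point, which the paper passes over by simply invoking Proposition \ref{Prop: Finite-reflection branchwise causal comparison} and Corollary \ref{Cor: Finite-reflection reversal covariance}. The fix needs this extra finiteness input.
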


\begin{proof}
{\em 1 $\!\Longrightarrow\!$ 2.} Let $\mathcal{U}\Subset\mathring{\mathcal{N}}$ be a causally convex, geodesically convex open subset. Equation \eqref{Eq: Branchwise local Robin Hadamard form} yields
$$\omega_{2,\kappa}|_{\mathcal{U}\times\mathcal{U}}=H_{\mathcal{U}\times\mathcal{U}}+H^{\mathrm{ref}}_{\kappa,\mathcal{N}}|_{\mathcal{U}\times\mathcal{U}}+W_{\kappa, \mathcal{U}\times\mathcal{U}}.$$
Using \cite{Radzikowski_1996, Radzikowski_1996_1} we can infer directly 
$$\operatorname{WF}(H_{\mathcal{U}\times\mathcal{U}})=\mathcal{C}^{\triangleright,\mathrm{dir}}_{\mathrm b,\mathcal{N}}|_{T^*(\mathcal{U}\times\mathcal{U})},$$
where $\mathcal{C}^{\triangleright,\mathrm{dir}}_{\mathrm b,\mathcal{N}}$ is as per Equation \eqref{Eq: Restricted future broken relation}. For the reflected contribution we employ Proposition \ref{Prop: WF branchwise reflected Hadamard kernel}, which, combined with Definition \ref{Def: Admissible reflected Hadamard assembly but not yet an Avenger}, ultimately yields
\begin{equation}\label{Eq: Local WF upper inclusion}
\operatorname{WF}\!\left(\omega_{2,\kappa}|_{\mathcal{U}\times\mathcal{U}}\right)\subseteq\mathcal{C}^{\triangleright}_{\mathrm b,\mathcal{N}}|_{T^*(\mathcal{U}\times\mathcal{U})}.
\end{equation}
For the reverse inclusion, let $\rho=(x,k_x,y,-k_y)\in\mathcal{C}^{\triangleright}_{\mathrm b,\mathcal{N}}|_{T^*(\mathcal{U}\times\mathcal{U})}$. By Theorem \ref{Thm: WF of Propagators}, $\rho\in\operatorname{WF}(G_\kappa)$, whereas Equation \eqref{Eq: Local WF upper inclusion} entails $\rho\notin\operatorname{WF}(\omega_{2,\kappa}^{T})$. Hence Equation \eqref{Eq: positivity + CCR} yields $\rho\in\operatorname{WF}(\omega_{2,\kappa})$. Therefore equality holds in Equation \eqref{Eq: Local WF upper inclusion}. Since we can choose the open subsets $\mathcal{U}$ to cover a neighbourhood of the Cauchy hypersurface contained in $\mathcal{N}$, propagation of singularities as in Theorem \ref{Thm: WF of Propagators}, see also \cite[Thm. 4.1]{Melrose_1978}, extends this equality throughout $\mathring{\mathcal{N}}\times\mathring{\mathcal{N}}$. Assumptions \ref{Ass: Infinitesimally convex} and \ref{Ass: No Zeno points} then allow us to propagate it to the whole $\mcM\times\mathring\mcM$, proving Equation \eqref{Eq: Radzikowski counterpart WF}.
	
\vskip .2cm
	
{\em 2 $\!\Longrightarrow\!$ 1.} Let $\omega_{2,\kappa}$ satisfy Equation \eqref{Eq: Radzikowski counterpart WF} and fix $\mathcal{U}\Subset\mathring{\mathcal{N}}$ as above. Set
$$D_{\kappa,\mathcal{U}}\doteq\omega_{2,\kappa}|_{\mathcal{U}\times\mathcal{U}}-H \vert_{\mathcal{U}\times\mathcal{U}}-H^{\mathrm{ref}}_{\kappa,\mathcal{N}}|_{\mathcal{U}\times\mathcal{U}}.$$
Equation \eqref{Eq: Radzikowski counterpart WF} and Definition \ref{Def: Admissible reflected Hadamard assembly but not yet an Avenger} entail that $\operatorname{WF}(D_{\kappa,\mathcal{U}})\subseteq\mathcal{C}^{\triangleright}_{\mathrm b,\mathcal{N}}|_{T^*(\mathcal{U}\times\mathcal{U})}$. At the same time, Proposition \ref{Prop: Finite-reflection branchwise causal comparison}, combined with Corollary \ref{Cor: Finite-reflection reversal covariance}, entails that the antisymmetric part of $H \vert_{\mathcal{U}\times\mathcal{U}}+H^{\mathrm{ref}}_{\kappa,\mathcal{N}}|_{\mathcal{U}\times\mathcal{U}}$ coincides, up to a smooth remainder, with $iG_\kappa$ as in Equation \eqref{Eq: positivity + CCR}. Hence
$$D_{\kappa,\mathcal{U}}-D^T_{\kappa,\mathcal{U}}\in C^\infty(\mathcal{U}\times\mathcal{U}).$$
If $\operatorname{WF}(D_{\kappa,\mathcal{U}})\neq\emptyset$, the last identity would force the same covectors to belong to $\operatorname{WF}(D^T_{\kappa,\mathcal{U}})$ in order for them to cancel out. This is impossible since the covector of the first entry of $\operatorname{WF}(D_{\kappa,\mathcal{U}})$ is future-pointing while it is past-pointing for $\operatorname{WF}(D^T_{\kappa,\mathcal{U}})$. Therefore, $D_{\kappa,\mathcal{U}}\in C^\infty(\mathcal{U}\times\mathcal{U})$ and Equation \eqref{Eq: Branchwise local Robin Hadamard form} follows with $W_{\kappa, \mathcal{U}\times\mathcal{U}}=D_{\kappa,\mathcal{U}}$. Since $H^{\mathrm{ref}}_{\kappa,\mathcal{N}}$ is microlocally admissible per hypothesis, $\omega_{2,\kappa}$ is branchwise local on $\mathcal{N}$ as per Definition \ref{Def: Admissible reflected Hadamard assembly but not yet an Avenger}.
\end{proof}

\noindent We shall now introduce the notion of Feynman parametrix $H^F_{\kappa}$ associated the Klein-Gordon operator $P$ abiding by Robin boundary conditions, see \cite{Duistermaat_1972, Islam_2024} for the corresponding structures on globally hyperbolic spacetimes with empty boundary. To this avail, first of all we introduce the Feynman component of the broken bicharacteristic relation
\begin{equation*}
    \mathcal{C}_{\mathrm{b}}^F (\mathring{\mcM}) \doteq \left\{ (x,k_x, y, k_y) \in \mathcal{C}_\mathrm{b}(\mathring{\mcM}) \, \vert  \, x \ne y \; \text{such that} \, \begin{cases} y \in J^+(x) \Rightarrow k_x \triangleright0 \\ y \in J^-(x) \Rightarrow k_x \triangleleft 0
    \end{cases} \right\}. 
\end{equation*}
Here $C_b(\mathring{\mcM})$ is the finite broken bicharacteristic relation introduced in Equation \eqref{Eq: CbM}. Notice that, on a Cauchy neighbourhood $\mathcal N$, $\mathcal C_{\mathrm{b}}^F(\mathring{\mcM})$ admits the decomposition
\begin{equation*}
     \mathcal{C}_{\mathrm{b}}^F(\mathring{\mcM}) =  \mathcal{C}^{\mathrm{dir}, F}_{\mathrm{b}}(\mathring{\mcM}) \cup \Lambda_{\mathcal{N}}^{\mathrm{ref}, F}, \qquad \Lambda_{\mathcal{N}, F}^{\mathrm{ref}} \doteq \bigcup_{(N, \alpha) \in \mathcal{I}_\mathrm{ref}} \Lambda_{N, \alpha}^F \,, 
\end{equation*}
where 
\begin{equation*}
    \Lambda_{N, \alpha}^F = (\Lambda_{N, \alpha}^\triangleright \cap \{t(x) < t(y)\}) \cup (\Lambda_{N, \alpha}^\triangleleft \cap \{t(x) > t(y)\}). 
\end{equation*}

\begin{definition}[Robin Feynman parametrix]
\label{Def: Robin Feynman Parametrix}
Let $(\mcM,g)$ be a $d$-dimensional, $d\geq2$, globally hyperbolic spacetime with timelike boundary satisfying Assumptions \ref{Ass: Infinitesimally convex}, \ref{Ass: Finite regular reflections} and \ref{Ass: No Zeno points}. We call $H^{\mathrm F}_\kappa\in\mathcal D'(\mcM\times\mcM)$ a {\bf Robin Feynman parametrix} for $P$ if the following conditions hold:
\begin{enumerate}
\item $H^{\mathrm F}_\kappa$ is a bi-parametrix for the Klein--Gordon operator, namely
\begin{equation*}
    (P\otimes\mathbb I)H^{\mathrm F}_\kappa
    =\delta+R_1,
    \qquad
    (\mathbb I\otimes P)H^{\mathrm F}_\kappa
    =\delta+R_2,
\end{equation*}
where $R_1,R_2\in C^\infty(\mcM\times\mcM)$.
\item $H^{\mathrm F}_\kappa$ is compatible with the Robin boundary condition,
\[
\left.
\big((\nabla_n+\kappa\mathbb I)\otimes\mathbb I\big)
H^{\mathrm F}_\kappa
\right|_{\partial\mcM}
=0,
\qquad \kappa\in\mathbb R,
\]
where $n$ denotes the inward-pointing unit normal vector field to $\partial\mcM$.

\item Its singular structure is given by
\begin{equation}
\label{Eq: WF Robin Feynman}
WF\!\left(
H^{\mathrm F}_\kappa\big|_{\mcM\times\mathring{\mcM}}
\right)
=
\mathcal{C}_{\mathrm{b}}^F (\mathring{\mcM})
\cup
N^*\operatorname{Diag}_2(\mathring{\mcM})\setminus\{\boldsymbol{0}\}.
\end{equation}
\end{enumerate}
\end{definition}

\begin{remark}\label{Rem: Feynmann Parametrix on Curved Backgrounds}
We observe that, generalizing \cite[Prop. 5.17]{Costeri_25} to our setting it turns out that, given a Robin Hadamard two-point function $\omega_{2, \kappa}$ and the unique advanced Robin propagator $G^-_{\kappa}$ associated to $P$ as per Proposition \ref{Prop: Existence Robin propagators}, it is always possible to construct a bi-distribution $H^F_{\kappa}\in\mathcal{D}^\prime(\mcM\times\mcM)$ abiding by Definition \ref{Def: Robin Feynman Parametrix}, setting
\begin{equation}
	H^F_{\kappa} \doteq G^-_{\kappa} + i \omega_{2,\kappa}. 
\end{equation}
\end{remark}

\section{Existence of Hadamard States}\label{Sec: Existence oF Hadamard States}

In this section we establish the existence of Robin-Hadamard states proving that this class is non-empty. We divide our analysis into two steps. In the first, we focus on a suitable class of standard static, globally hyperbolic spacetimes with timelike boundary. Using the results of Section \ref{Sec: Fundamental Solutions on Static Spacetimes} we construct the two-point correlation function of the ground state proving that it is of Hadamard form. In the second step, we employ a deformation argument to drop the assumption of invariance of the metric under time translations.

\subsection{Hadamard States on Static Spacetimes}\label{Sec: Hadamard States on Static Spacetimes}

We construct a distinguished example of Robin-Hadamard two-point correlation function, which is often associated both in the theoretical and in the mathematical physics literature to the concept of ground state. Here we generalize a renown result on static, globally hyperbolic spacetimes with empty boundary, see \cite{Junker, Sahlmann:2000fh}. We highlight that, contrary to what occurs in the construction of the advanced and retarded propagators, we need to introduce an additional spectral condition. This is necessary in order to avoid the insurgence of infrared divergences, a feature which occurs also on globally hyperbolic spacetimes with empty boundary, {\it e.g.}, a Hadamard ground state does not exist for a massless real scalar field on the two-dimensional Minkowski spacetime.

\begin{proposition}\label{Prop: Ground 2-point function is Hadamard}
	Let $P$ be the Klein-Gordon operator as in Equation \eqref{Eq: KG equation} on a $d$-dimensional, $d\geq 2$, globally hyperbolic spacetime with a timelike boundary $(\mcM, g)$ which abides by Assumptions \ref{Ass: Infinitesimally convex} and \ref{Ass: No Zeno points}. Denote by $G_\kappa=G^-_\kappa-G^+_\kappa$ the Pauli-Jordan propagator as per Proposition \ref{Prop: Existence Robin propagators} and by $\Sigma$ a Cauchy surface of $(\mcM,g)$. Then, if
	\begin{enumerate}
		\item the spectrum $\sigma(A_{m,\widetilde{\kappa}})\subseteq [0,\infty)$, 
		\item $C^\infty_0(\Sigma)\subset\mathrm{D}(A^{-1/4}_{m,\widetilde{\kappa}})$,
	\end{enumerate}
	then, it is a Robin-Hadamard two-point correlation function  $\omega_{2,\kappa}\in\mathcal{D}^\prime(\mcM\times\mcM)$ such that for all $f,f^\prime\in C^\infty_0(\mcM)$
	\begin{equation}\label{Eq: Hadamard ground 2-point function}
		\omega_{2,\kappa}(f,f^\prime)=\int\limits_{\bR^2}\left(\beta^{\frac{d+2}{4}}f(t),A^{-\frac{1}{2}}_{m,\widetilde{\kappa}}e^{i(t-t^\prime)A_{m,\widetilde{\kappa}}}\beta^{\frac{d+2}{4}}f^\prime(t^\prime)\right)_{L^2(\Sigma,\beta^{-1}h)} \, dtdt^\prime,
	\end{equation}
	where $A_{m\widetilde{\kappa}}$ is defined in Proposition \ref{Prop: Existence Robin propagators} while $(\cdot,\cdot)_{L^2(\Sigma,\beta^{-1}h)}$ is the $L^2$-inner product on the Cauchy surface endowed with the optical metric $\beta^{-1}h$. Furthermore, denoting by $\tau \doteq t-t^\prime$, it holds that $\omega_{2,\kappa}$ identifies a tempered distribution $\mathcal{W}^\otimes_\kappa$ along the $\tau$-direction and, at the level of the corresponding Fourier transform,
	\begin{equation}\label{Eq: Fourier Identity ground 2-point function}
		\mathcal{W}^\otimes_\kappa(k^0)=i\Theta(k^0)\widehat{G}_\kappa(k^0),
	\end{equation}
	where $\Theta$ is the Heaviside step-function. We call $\omega_{2,\kappa}$ the {\bf ground state two-point correlation function}.
\end{proposition}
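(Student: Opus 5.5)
The plan is to verify, one at a time, the defining conditions of Definition \ref{Def: Robin Hadamard two-point function} for the bilinear form in Equation \eqref{Eq: Hadamard ground 2-point function}, working throughout in the conformally rescaled ultrastatic frame $(\mcM,\widetilde g)$. The conformal relations \eqref{Eq: Conformal Relation Between Wave Operators}--\eqref{Eq: Conformal Relation Between Propagators} carry every statement back to $(\mcM,g)$ without changing wavefront sets, since the weights $\beta^{\frac{d\pm2}{4}}$ are smooth and positive.

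\textbf{Well-definedness and continuity.} Hypothesis 2 makes $A^{-1/4}_{m,\widetilde\kappa}\beta^{\frac{d+2}{4}}f(t)\in L^2$ for each $t$. Hypothesis 1, together with the spectral calculus, makes $e^{i\tau A^{1/2}}$ unitary. Note that the exponent must be read as $A^{1/2}$, consistently with Equation \eqref{Eq: Spectral Gkappa}. Hence
\[
|\omega_{2,\kappa}(f,f')|\le\int\!\!\int \|A^{-1/4}\beta^{\frac{d+2}{4}} f(t)\|\,\|A^{-1/4}\beta^{\frac{d+2}{4}} f'(t')\|\,dt\,dt'.
\]
To get continuity on $\mathcal D(\mcM\times\mcM)$ and apply the Schwartz kernel theorem, the right-hand side has to be controlled by finitely many seminorms of $f,f'$. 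I would bound $\|A^{-1/4}u\|$ by $\|u\|+\|A^{1/2}u\|$ on compactly supported $u$ via the spectral decomposition split at $\lambda=1$. Here I will assume that hypothesis 2 comes with a graph-type bound, which I would justify by the closed graph theorem.

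\textbf{Positivity, CCR and equations of motion.} Positivity follows by writing $\omega_{2,\kappa}(\bar f,f)=\|\int A^{-1/4}e^{-itA^{1/2}}\beta^{\frac{d+2}{4}} f(t)\,dt\|^2\ge0$. For the CCR, the antisymmetric part produces $A^{-1/2}\bigl(e^{i\tau A^{1/2}}-e^{-i\tau A^{1/2}}\bigr)/2=iA^{-1/2}\sin(\tau A^{1/2})$, which matches $iG_\kappa$ through \eqref{Eq: Beta Gkappa} and \eqref{Eq: Spectral Gkappa}. The Klein--Gordon equation and the Robin condition in the first entry follow because $\partial_t^2+A_{m,\widetilde\kappa}$ annihilates $e^{itA^{1/2}}$ on the range. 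The vectors $e^{itA^{1/2}}A^{-1/2}\ldots$ lie in $D(A_{m,\widetilde\kappa})\subset H^2$, which encodes the rescaled Robin condition \eqref{Eq: Conformal Relation Boundary Condition}, and this translates back to $\nabla_n+\kappa$.

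\textbf{Fourier identity.} Since $\omega_{2,\kappa}$ depends on $t,t'$ only through $\tau$ and is uniformly bounded, it is tempered in $\tau$. The Fourier transform of $e^{i\tau\sqrt\lambda}/\sqrt\lambda$ is supported at $k^0=\sqrt\lambda\ge0$. Meanwhile $\widehat G_\kappa$ is supported at $k^0=\pm\sqrt\lambda$ with opposite-sign weights. Comparing the two yields \eqref{Eq: Fourier Identity ground 2-point function}, i.e. positive-frequency projection.

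\textbf{Wavefront set (main obstacle).} I would first show $\operatorname{WF}(\omega_{2,\kappa})\subseteq\operatorname{WF}(G_\kappa)=\mathcal C_{\mathrm b}(\mathring\mcM)$. Because $\omega_{2,\kappa}=i\Theta(k^0)\widehat G_\kappa$ in $\tau$-frequency, $\omega_{2,\kappa}$ is $G_\kappa$ composed with the projector $\Theta(D_\tau)$. Its singularities are those of $G_\kappa$ with $k_\tau\ge0$, plus possible contributions at $k_\tau=0$; the latter I would exclude using Theorem \ref{Thm: WF of Propagators} together with the fact that null covectors have nonzero $dt$-component. This is the step where I expect the argument to be hardest. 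I would follow the Junker and Sahlmann--Verch strategy: the $\Theta(k^0)$ restriction selects the covectors whose $t$-component $k_0$ satisfies $k_0\ge0$ in the first entry, i.e. future pointing ones, because the time-translation symmetry is preserved along broken bicharacteristics (the boundary is static and reflection preserves $k_t$). This gives $\operatorname{WF}\subseteq\mathcal C^\triangleright_{\mathrm b}$. The reverse inclusion is obtained exactly as in Theorem \ref{Thm: Global-to-Local}: any $\rho\in\mathcal C^\triangleright_{\mathrm b}$ lies in $\operatorname{WF}(G_\kappa)$, but not in $\operatorname{WF}(\omega^T_{2,\kappa})$, so the CCR forces $\rho\in\operatorname{WF}(\omega_{2,\kappa})$. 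One further point has to be checked: the functional-analytic microlocal argument (passive states, via Sahlmann--Verch's use of the KMS/ground-state property and propagation) is fully local in the interior. It therefore survives the presence of the boundary, with the Melrose--Sjöstrand propagation replacing the boundaryless one.
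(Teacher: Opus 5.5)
Your proposal follows the paper's overall structure:
\begin{itemize}
\item positivity, the CCR and the equations of motion from the spectral calculus;
\item temperedness in $\tau$ via closed-graph continuity of $A^{-1/4}_{m,\widetilde\kappa}$ on $C^\infty_0(\Sigma)$;
\item the Fourier identity from the spectral measure of $A^{1/2}_{m,\widetilde\kappa}$ (reading the exponent as $A^{1/2}$, as you do, is what the paper's own computation uses);
\item the reverse wavefront inclusion via $\omega_{2,\kappa}-\omega^T_{2,\kappa}=iG_\kappa$.
\end{itemize}

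\textbf{The forward wavefront inclusion.} This is where you genuinely differ. The paper re-runs Steps 1--3 of the proof of Theorem \ref{Thm: WF of Propagators} for $\omega_{2,\kappa}$. It obtains $k_\tau\geq 0$ by localizing with $\chi_1(\tau)\chi_2(\underline{x},\underline{y})$ and convolving $\mathcal F_\tau(\chi_1)$ against the one-sided spectrum. It then uses stationarity, the CCR and Melrose--Sj\"ostrand propagation to move every singularity to the equal-time diagonal. You instead compare directly with $\operatorname{WF}(G_\kappa)$, using Theorem \ref{Thm: WF of Propagators} as a black box. This needs no new propagation argument for $\omega_{2,\kappa}$. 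It also yields the equality $\operatorname{WF}(\omega_{2,\kappa})=\operatorname{WF}(G_\kappa)\cap\{k_\tau>0\}=\mathcal C^\triangleright_{\mathrm b}(\mathring{\mcM})$ in one stroke, so your separate reverse inclusion and the Sahlmann--Verch/passivity remark become superfluous.

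\textbf{Making that step rigorous.} As written it is only asserted. You cannot treat $\Theta(D_\tau)$ as a pseudodifferential operator: its symbol jumps across $k_\tau=0$ and its kernel is not properly supported. Instead, apply the paper's localization estimate twice:
\begin{itemize}
\item to $\omega_{2,\kappa}$, whose $\tau$-spectrum lies in $[0,\infty)$;
\item to $\omega^T_{2,\kappa}=\omega_{2,\kappa}-iG_\kappa$, whose $\tau$-spectrum lies in $(-\infty,0]$.
\end{itemize}
This gives $\operatorname{WF}(\omega_{2,\kappa})\subseteq\{k_\tau\geq 0\}$ and $\operatorname{WF}(\omega^T_{2,\kappa})\subseteq\{k_\tau\leq 0\}$. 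Hence $\operatorname{WF}(\omega_{2,\kappa})$ and $\operatorname{WF}(G_\kappa)$ agree on $\{k_\tau>0\}$.

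Points with $k_\tau=0$ are excluded not by Theorem \ref{Thm: WF of Propagators} itself, but by $(P\otimes\mathbb{I})\omega_{2,\kappa}=(\mathbb{I}\otimes P)\omega_{2,\kappa}=0$ in the interior. Stationarity makes the time components of the two covectors equal to $\pm k_\tau$. Each covector must be null or zero. For the metric $-\beta dt^2+h$, a null covector with vanishing $dt$-component is zero.

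\textbf{Two smaller slips.}
\begin{itemize}
\item The bound $\|A^{-1/4}u\|\leq\|u\|+\|A^{1/2}u\|$ is false in general when $0\in\sigma(A_{m,\widetilde\kappa})$. On the spectral subspace $\lambda<1$ the factor $\lambda^{-1/4}$ is unbounded while $\lambda^{1/2}$ is small, so the split at $\lambda=1$ gives nothing beyond hypothesis 2. The seminorm estimate comes only from the closed-graph argument you mention afterwards (closedness of $A^{-1/4}$ plus continuity of $C^\infty_K(\Sigma)\hookrightarrow L^2$), exactly as in the paper.
\item With the normalization of Equation \eqref{Eq: Hadamard ground 2-point function}, one has $\omega_{2,\kappa}(f',f)=\overline{\omega_{2,\kappa}(f,f')}$ for real $f,f'$. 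The commutator therefore produces $2iA^{-1/2}\sin(\tau A^{1/2})$, and your factor $\tfrac12$ computes only half of it. The CCR in Equation \eqref{Eq: positivity + CCR} thus require a factor $\tfrac12$ in the definition of $\omega_{2,\kappa}$. This is the normalization the paper's spectral formula $\int e^{i\tau\lambda}/2\,d\nu_{u,v}(\lambda)$ actually uses.
\end{itemize}
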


\begin{proof}
	Setting for simplicity of notation $A\doteq A_{m,\widetilde{\kappa}}$ and recalling that on $\mathrm{D}(A)\subset L^2(\Sigma,\beta^{-1}h)$, $A=A^*$,  
	$$\omega_{2,\kappa}(f,f^\prime)=\int\limits_{\bR^2} \left(A^{-\frac{1}{4}}\beta^{\frac{d+2}{4}}f(t),e^{i(t-t^\prime)A}A^{-\frac{1}{4}}\beta^{\frac{d+2}{4}}f^\prime(t^\prime)\right)_{L^2(\Sigma,\beta^{-1}h)} \, dtdt^\prime.$$
	Since $C^\infty_0(\Sigma)$ is contained in the domain of $A^{-\frac{1}{4}}$, the $L^2$-pairing is well-defined, while the condition on the spectrum also entails the the right-hand side of Equation \eqref{Eq: Hadamard ground 2-point function} is finite. We subdivide the rest of the proof into 4 parts:
	
	\vskip .2cm
	
	{\em Part 1.} Equation \eqref{Eq: Robin eqs of motion} holds true per construction since $A$ is chosen as the self-adjoint extension encoding Robin boundary conditions. Similarly, focusing on Equation \eqref{Eq: positivity + CCR}, we can observe by direct inspection that the antisymmetric part of Equation \eqref{Eq: Hadamard ground 2-point function} coincides with $i G_\kappa$, {\it cf.}, Equation \eqref{Eq: Spectral Gkappa}. Furthermore, for every $f\in C^\infty_0(\Sigma)$
	$$0\leq \omega_2(f,f)=\left|\int\limits_{\bR}e^{itA}A^{-\frac{1}{4}}\beta^{\frac{d+2}{4}}f(t) \, dt\right|^2\leq\int\limits_{\bR}\|e^{itA}A^{-\frac{1}{4}}\beta^{\frac{d+2}{4}}f(t)\|^2_{L^2(\Sigma,\beta^{-1}h)} \, dt,$$
	where, once more, we have employed that $C^\infty_0(\Sigma)\subset\mathrm{D}(A^{-1/4})$ to infer that the last integral is finite.
	
	\vskip .2cm
	
	{\em Part 2.} Here we establish that $\omega_{2,\kappa}$ identifies an element lying in $\mathcal{S}^\prime(\bR;\mathcal{D}^\prime(\Sigma\times\Sigma))$ where the $\bR$-direction refers, at the level of integral kernels, to $\tau \doteq t-t^\prime$. We denote by $\mathcal{W}(\tau,\underline{x},\underline{y})$, the integral kernel of the linear map such that, for all $\varphi(\tau)\in\mathcal{S}(\bR)$ and for all  $u,v\in C^\infty_0(\Sigma)$, yields
	$$\mathcal{W}_\kappa(\varphi,u,v)=\int\limits_{\bR} \varphi(\tau)\left(A^{-\frac{1}{4}}\beta^{\frac{d+2}{4}}u,e^{i\tau A}A^{-\frac{1}{4}}\beta^{\frac{d+2}{4}}v\right)_{L^2(\Sigma,\beta^{-1}h)} \, d\tau.$$
	It descends that
	\begin{gather}
		|\mathcal{W}_\kappa(\varphi,u,v)|\leq\int\limits_{\bR}\left|\varphi(\tau)\left(A^{-\frac{1}{4}}\beta^{\frac{d+2}{4}}u,e^{i\tau A}A^{-\frac{1}{4}}\beta^{\frac{d+2}{4}}v\right)_{L^2(\Sigma,\beta^{-1}h)}\right|\, d\tau \leq\notag \\
		\leq\int\limits_{\bR} \left|\varphi(\tau)\right|\|A^{-\frac{1}{4}}\beta^{\frac{d+2}{4}}u\|_{L^2(\Sigma,\beta^{-1}h)}\;\|A^{-\frac{1}{4}}\beta^{\frac{d+2}{4}}v\|_{L^2(\Sigma,\beta^{-1}h)} \, d\tau = \notag \\
		=\|\varphi\|_{L^1(\mathbb{R})}\|A^{-\frac{1}{4}}\beta^{\frac{d+2}{4}}u\|_{L^2(\Sigma,\beta^{-1}h)}\;\|A^{-\frac{1}{4}}\beta^{\frac{d+2}{4}}v\|_{L^2(\Sigma,\beta^{-1}h)},\label{Eq: 2-point inequality}
	\end{gather}
	where we used the Cauchy-Schwartz inequality combined with the hypothesis on the spectrum of $A$. On the one hand, we observe that $\varphi\to\|\varphi\|_{L^1}$ is a continuous seminorm on $\mathcal{S}(\bR)$. On the other hand, keeping in mind the hypothesis $C^\infty_0(\Sigma)\subset\mathrm{D}(A^{-\frac{1}{4}})$, we observe that $A^{-\frac{1}{4}}$ is a self-adjoint, positive operator, hence closed, \cite[Thm 5.18]{Moretti_2017}. For every compact subset $K\subseteq\Sigma$, set $C^\infty_K(\Sigma) \doteq \{u\in C^\infty_0(\Sigma)\;|\;\mathrm{supp} \, u\subseteq K\}$, which is a Fr\'echet space. The restriction of $A^{-\frac{1}{4}}$ to $C^\infty_K(\Sigma)$ is also closed in the Fr\'echet topology being the canonical injection $\iota_K:C^\infty_K(\Sigma)\to L^2(\Sigma; \beta^{-1}h)$ continuous. Therefore we can apply \cite[Thm. B16]{Grubb68} to conclude that, on this domain $A^{-\frac{1}{4}}$ is continuous. Taking an exhaustion by compacts of $\Sigma$, that is $\{K_j\}_{j\in\mathbb{N}}$ with $K_j$ compact $\forall j$ and $K_j\Subset K_{j+1}$, while $\bigcup_{j\in\mathbb{N}}K_j=\Sigma$, we have that $C^\infty_0(\Sigma)$ can be recovered as an inductive limit of Fr\'echet spaces, that is $C^\infty_0(\Sigma)=\bigcup_{j\in\mathbb{N}}C^\infty_{K_j}(\Sigma)$. Hence we can apply \cite[Thm B.18, (c)]{Grubb68} to conclude that $A^{-\frac{1}{4}}$ is continuous on $C^\infty_0(\Sigma)$. This observation combined with Equation \eqref{Eq: 2-point inequality} entails that the map
	$$\mathcal{S}(\bR)\times C^\infty_0(\Sigma)\times C^\infty_0(\Sigma)\ni(\varphi,u,v)\mapsto\int\limits_\bR \varphi(\tau)\left(A^{-\frac{1}{4}}\beta^{\frac{d+2}{4}}u,e^{i\tau A}A^{-\frac{1}{4}}\beta^{\frac{d+2}{4}}v\right)_{L^2(\Sigma,\beta^{-1}h)} \, d\tau,$$
	is multilinear and jointly continuous. Equipping $\mathcal{S}(\bR)\times C^\infty_0(\Sigma)\times C^\infty_0(\Sigma)$ with the projective tensor-product topology \cite{Treves}, we end up with $\mathcal{W}^\otimes_\kappa$, linear and continuous on $\mathcal{S}(\bR)\otimes C^\infty_0(\Sigma)\otimes C^\infty_0(\Sigma)$ and such that $\mathcal{W}_\kappa(\varphi,u,v)=\mathcal{W}^\otimes_\kappa(\varphi\otimes u\otimes v)$. This new application extends uniquely to the completion of the tensor product denoted by $\hat{\otimes}$, that is, $\mathcal{W}^\otimes_\kappa$ is continuous on $\mathcal{S}(\bR)\hat{\otimes}C^\infty_0(\Sigma)\hat{\otimes}C^\infty_0(\Sigma)$. Since this space is isomorphic to $\mathcal{S}(\bR)\otimes C^\infty_0(\Sigma\times\Sigma)$, we can conclude that $\mathcal{W}^\otimes_\kappa\in\left(\mathcal{S}(\bR)\otimes C^\infty_0(\Sigma\times\Sigma)\right)^\prime$, which is nothing but $\mathcal{S}^\prime(\bR;\mathcal{D}^\prime(\Sigma\times\Sigma))$.
	
	\vskip .2cm
	
	{\em Part 3.} The next step consists of proving Equation \eqref{Eq: Fourier Identity ground 2-point function}. This is a consequence mainly of the spectral theorem applied to $A^{\frac{1}{2}}$, see for example \cite{Moretti_2017}. To start with, we denote by $$E_{A^{\frac{1}{2}}}:\mathsf{B}(\bR)\to\mathcal{B}(L^2(\Sigma, \beta^{-1}h)),$$ the associated projection-valued measure where $\mathsf{B}(\bR)$ is the Borel $\sigma$-algebra over $\bR$. We can therefore introduce the complex-valued measure over $\mathsf{B}(\bR)$:
	$$\nu_{u,v}(I) \doteq \left(A^{-\frac{1}{4}}\beta^{\frac{d+2}{4}}u,E_{A^{\frac{1}{2}}}(I)A^{-\frac{1}{4}}\beta^{\frac{d+2}{4}}v\right)_{L^2(\Sigma,\beta^{-1}h)}, \quad u,v \in C^\infty_0(\Sigma),$$
	which is well-defined on account of the hypotheses on the spectrum of $A$ and of the inclusion $C^\infty_0(\Sigma)\subset\mathrm{D}(A^{-1/4})$. Accordingly Equations \eqref{Eq: Hadamard ground 2-point function} and \eqref{Eq: Beta Gkappa} combined with Equation \eqref{Eq: Spectral Gkappa} yield via partial evaluation the following integral kernels
	$$G_{\kappa}(\tau,u\otimes v)=\int\limits_0^\infty\sin(\lambda\tau)d\nu_{u,v}(\lambda)\quad\mathrm{and}\quad\omega_{2,\kappa}(\tau,u\otimes v)=\int\limits_0^\infty \frac{e^{i\tau\lambda}}{2}d\nu_{u,v}(\lambda),$$
	where $\lambda$ is the spectral parameter. Since both $G_\kappa$ and $\omega_{2,\kappa}$ are tempered distributions along the $\tau$-variable as per Part 2., we can take the Fourier transform following the convention that, for $F\in L^1(\mathbb{R})$, $\widehat{F}(k^0)=\int\limits_{\bR} \, e^{-ik^0\tau}F(\tau) \, d\tau$. This yields that, working at the level of integral kernels,
	$$G_{\kappa}(k^0, u\otimes v)=-i\pi\int\limits_0^\infty\left(\delta(k^0-\lambda)+\delta(k^0+\lambda)\right)d\nu_{u,v}(\lambda)\;\mathrm{and}\;\omega_{2,\kappa}(k^0, u\otimes v)=\pi\int\limits_0^\infty \delta(k^0-\lambda)d\nu_{u,v}(\lambda).$$
	Equation \eqref{Eq: Fourier Identity ground 2-point function} descends by multiplying the first expression by $i\Theta(k^0)$.

	\vskip .2cm
	
	{\em Part 4.} At last we need to establish that the wavefront set of $\omega_{2,\kappa}$ is of Hadamard form, that is Equation \eqref{Eq: Hadamard WF set Robin} holds true. We follow a reasoning very similar to that of the proof of Theorem \ref{Thm: WF of Propagators} and therefore we will only highlight the key differences. To start with we wish to prove that $\mathrm{WF}(\omega_{2,\kappa})\subseteq\mathcal{C}_{\mathrm b}^\triangleright(\mathring{\mcM})$. In comparison with the proof of Theorem \ref{Thm: WF of Propagators}, the first three steps are identical since they only rely on the underlying bidistribution being in the kernel of $(P\otimes\mathbb{I}), (\mathbb{I}\otimes P)$, on $(\mcM,g)$ being stationary and on the propagation of singularity theorem proved in \cite{Melrose_1978}. The first difference occurs in the fourth step, for which it is necessary to invoke a different argument. More precisely we must show that, for any $y\in\mathring{\mcM}$, $(y,k_y,y,k^\prime_y)\in\mathrm{WF}(\omega_{2,\kappa})$ if and only if $k_y\triangleright 0$ and $k^\prime_y=-k_y$. As a preliminary step, let us focus on the distribution $\mathcal{W}^\otimes_\kappa$ and let $(\tau_0,\underline{x}_0,\underline{y}_0,k_{\tau_0}, k_{\underline{x}_0},k_{\underline{y}_0})\in\mathrm{WF}(\mathcal{W}^\otimes_\kappa)$ with $\underline{x}_0,\underline{y}_0\in\mathring{\Sigma}$. Consider $\chi(\tau,\underline{x},\underline{y})=\chi_1(\tau)\chi_2(\underline{x},\underline{y})\in C^\infty_0(\bR\times\Sigma\times\Sigma)$ be such that $\chi(\tau_0,\underline{x}_0,\underline{y}_0)=1$. Hence, denoting by $\mathcal{F}_\tau$ the Fourier transform along the $\tau$-direction, 
	$$\mathcal{F}_\tau(\chi\mathcal{W}^\otimes_\kappa)=\mathcal{F}_\tau(\chi_1)\star(\chi_2(\underline{x},\underline{y})\mathcal{F}_\tau(\mathcal{W}^\otimes_\kappa))=\mathcal{F}_\tau(\chi_1)\star(\chi_2(\underline{x},\underline{y})\Theta(k^0)\widehat{G}_\kappa(k^0)),$$
	where we used Equation \eqref{Eq: Fourier Identity ground 2-point function}. Since $\Theta(k^0)\widehat{G}(k^0)$ vanishes for $k^0<0$, it turns out the only $\mathcal{F}_\tau(\chi_1)$ contributes to the negative frequencies of 
	$\mathcal{F}_\tau(\chi\mathcal{W}^\otimes_\kappa)$ which entails rapid decrease along such directions. In other words $k_{\tau_0}>0$. Switching back to $\omega_{2,\kappa}$ this entails that, if $(x,k_x,y,-k_y)\in\mathrm{WF}(\omega_{2,\kappa})$ then $k_x\triangleright 0$. Let us now assume that $(y,k_y,y,k^\prime_y)\in\mathrm{WF}(\omega_{2,\kappa})$.  Being the underlying metric stationary, it holds that $(k_y)_0+(k^\prime_y)_0=0$. Since $k_y\triangleright 0$, then $k^\prime_y$ must be past pointing. 
	
	At this point let $\omega_{2,\kappa}^T$ be the bi-distribution such that $\omega_{2,\kappa}^T(x,y)=\omega_{2,\kappa}(y,x)$. If $(y,k_y,y,k^\prime_y)\in\mathrm{WF}(\omega_{2,\kappa})$, $k_y$ and $k^\prime_y$ are future and past pointing, respectively. This entails that $(y,k^\prime_y,y,k_y)\in\mathrm{WF}(\omega_{2,\kappa}^T)$ but $(y,k_y,y,k^\prime_y)\notin\mathrm{WF}(\omega_{2,\kappa}^T)$. To draw the sought conclusion, we recall that, as a consequence of Definition \ref{Def: Robin Hadamard two-point function} $\omega_{2,\kappa}-\omega_{2,\kappa}^T=iG_\kappa$. Hence, given $(y,k_y,y,k^\prime_y)\in\mathrm{WF}(\omega_{2,\kappa})$, then $(y,k_y,y,k^\prime_y)\in\mathrm{WF}(G_\kappa)$ since this point does not lie in the wavefront set of $\omega_{2,\kappa}^T$. Yet, as established in Step 4. in the proof of Theorem \ref{Thm: WF of Propagators}, $(y,k_y,y,k^\prime_y)\in\mathrm{WF}(G_\kappa)$ if and only if $k^\prime_y=-k_y$. The converse statement follows by a similar argument. Consider $(y,k_y,y,-k_y)\in T^*(\mathring{\mcM}\times\mathring{\mcM})\setminus\{\boldsymbol{0}\}$ with $k_y\triangleright 0$. As in the previous part we can infer that $(y,k_y,y,-k_y)\in\mathrm{WF}(G_\kappa)$ and $(y,k_y,y,-k_y)\notin\mathrm{WF}(\omega_{2,\kappa}^T)$ since $k_y$ ought to be past-directed. Hence, using that $\omega_{2,\kappa}=iG_\kappa+\omega_{2,\kappa}^T$, we can conclude that $(y,k_y,y,-k_y)\in\mathrm{WF}(\omega_{2,\kappa})$.
	
	Having established that Step 4. in the proof of Theorem \ref{Thm: WF of Propagators} is valid also for $\omega_{2,\kappa}$, we can consider a point $(x,k_x,y,-k_y)\in\mathrm{WF}(\omega_{2,\kappa})$. It holds that $k_x\triangleright 0$ and $(w,k_w,y,-k_y)\in\mathrm{WF}(\omega_{2,\kappa})$ whenever $(x,k_x)\sim_{\mathrm b} (w,k_w)$. Using the no glancing hypothesis and the lack of Zeno points, we can consider without loss of generality $w\in\mathring{\mcM}$ such that the time component $t(w)=t(y)$. If $w\in\partial\mcM$, we can propagate $(y,k_y)$ along the bicharacteristic flow to a different point so to avoid this degenerate case. Once more the identity $\omega_{2,\kappa}-\omega_{2,\kappa}^T=iG_\kappa$ in combination with $(w,k_w,y,-k_y)\notin\mathrm{WF}(\omega^T_{2,\kappa})$ since $k_w\triangleright 0$ entails that $(w,k_w,y,-k_y)\in\mathrm{WF}(G_\kappa)$. Yet, since $t(w)=t(y)$, this can happen only if $w=y$, which forces, in turn, $k_w=k_y$. This entails that $\mathrm{WF}(\omega_{2,\kappa})\subseteq\mathcal{C}_{\mathrm b}^\triangleright(\mathring{\mcM})$. The reverse inclusion can be shown as follows. Suppose that $(x,k_x,y,-k_y)\in \mathcal{C}_{\mathrm b}^\triangleright(\mathring{\mcM})$ but it does not lie in $\mathrm{WF}(\omega_{2,\kappa})$. By propagation of singularities as per \cite{Melrose_1978}, $(w,k_w,y,-k_y)\notin\mathrm{WF}(\omega_{2,\kappa})$ for all $(w,k_w)\sim_{\mathrm b}(y,k_y)$. In particular, this entails $(y,k_y,y,-k_y)\notin\mathrm{WF}(\omega_{2,\kappa})$. Yet the identity $\omega_{2,\kappa}-\omega_{2,\kappa}^T=iG_\kappa$ combined with $(y,k_y,y,-k_y)\notin\mathrm{WF}(\omega_{2,\kappa}^T)$ since $k_y$ is future directed entails that $(y,k_y,y,-k_y)\notin\mathrm{WF}(G_\kappa)$ which is a contradiction.
\end{proof}

\begin{remark}
	Observe that, in Proposition \ref{Prop: Ground 2-point function is Hadamard}, it would suffice to require that $\sigma(A_{m,\widetilde{\kappa}})\subseteq (0,\infty)$. This guarantees the absence of infrared divergences, but, at the same time, it excludes physically significant examples such as a massless scalar field in half-Minkowski spacetime. This is the reason why we consider a less stringent class of assumptions.
\end{remark}

\begin{remark}\label{Rem: Ground state branchwise local}
Under the assumptions of Proposition \ref{Prop: Ground 2-point function is Hadamard} and Theorem \ref{Thm: Global-to-Local}, the ground-state two-point correlation function $\omega_{2,\kappa}$ is an example of a branchwise local Robin-Hadamard two-point correlation function on $\mathcal{N}$. As a matter of fact, Proposition \ref{Prop: Ground 2-point function is Hadamard} establishes Equation \eqref{Eq: Radzikowski counterpart WF}, hence the conclusion follows from Theorem \ref{Thm: Global-to-Local}.
\end{remark}

\subsection{Deformation Argument}\label{Sec: Deformation Argument}

In this last section of the paper, we prove existence of Hadamard states removing the assumption of the underlying spacetime being static as in Proposition \ref{Prop: Ground 2-point function is Hadamard}. This is obtained extending to the case in hand a deformation argument, first appeared in \cite{Fulling:1981cf}, which is traditionally used in the mathematical physics literature on globally hyperbolic spacetimes with empty boundary. At a geometric level we follow here the realization of this argument as presented by M\"uller in \cite{Muller}.

\begin{definition}\label{Def: Deformation}
	Let $(\mcM,g)$ be a globally hyperbolic spacetime with timelike boundary. We call it {\bf deformable} if, taking into account the isometric identification $\mcM\simeq\bR\times\Sigma$, there exists two globally hyperbolic spacetimes with timelike boundary $(\mcM,g_0)$ and $(\mcM,g_I)$ such that $(\mcM,g_0)$ is standard static and there exists $t_0,t_1,\in\bR$ with $t_0<t_1$ for which $g_I=g_0$ for all $t<t_0$, while $g_I =g$ for all $t>t_1$. 
\end{definition}

As a first step, we show that any infinitesimally convex, globally hyperbolic spacetime with timelike boundary can be deformed to a static background with Cauchy surfaces of bounded geometry, while preserving Assumption \ref{Ass: Infinitesimally convex}, see Section \ref{Sec: Fundamental Solutions on Static Spacetimes}.

\begin{proposition}\label{Prop: Existence of geometric Deformation}
Let $(\mcM,g)$ be a globally hyperbolic, infinitesimally convex, spacetime with timelike boundary. Then $(\mcM,g)$ is deformable in the sense of Definition \ref{Def: Deformation}. Moreover, the static metric $g_0$ may be chosen so that its Cauchy surface is of bounded geometry as per Definition \ref{Def: Bounded_geometry_submanifold}, while the interpolating metric $g_I$ may be chosen so that $\partial\mathcal M$ remains weakly infinitesimally convex.
\end{proposition}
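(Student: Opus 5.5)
We argue in the spirit of Müller's geometric construction, adapted to the boundary. First we use Proposition \ref{Prop: Globally Hyperbolic} to write $g=-\beta\,dt^2+h_t$ on $\bR\times\Sigma$ with $\nabla t$ tangent to $\partial\mcM$. The candidate static metric is $g_0\doteq-\beta_0\,dt^2+h_0$, where $h_0$ is a Riemannian metric on $\Sigma$ and $\beta_0\in C^\infty(\Sigma;(0,\infty))$. The interpolating metric is
$$g_I\doteq-\bigl(\chi\beta+(1-\chi)\beta_0\bigr)dt^2+\chi h_t+(1-\chi)h_0,$$
with $\chi=\chi(t)$ smooth, $\chi=0$ for $t<t_0$ and $\chi=1$ for $t>t_1$. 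Since convex combinations of Riemannian metrics are Riemannian and convex combinations of positive lapses are positive, $g_I$ is Lorentzian and $t$ is a time function. The boundary $\partial\mcM=\bR\times\partial\Sigma$ remains timelike because the pullback to it has the same block form.

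\textbf{Choice of $h_0$ and $\beta_0$.} We fix $h_0$ to have bounded geometry and to make $\partial\Sigma$ convex. Concretely, in a collar $[0,\varepsilon)\times\partial\Sigma$ we set $h_0=dz^2+\mathsf h(z)$ with $\partial_z\mathsf h\le 0$; the simplest choice is $\mathsf h(z)=\mathsf h$ independent of $z$, which gives a totally geodesic boundary. Here $\mathsf h$ is a bounded-geometry metric on $\partial\Sigma$, which can be obtained from Greene's theorem that every manifold carries a complete metric of bounded geometry. In the interior we glue to a bounded-geometry metric, and to produce $\widehat\Sigma$ as in Definition \ref{Def: Bounded_geometry_boundary} we double across the collar. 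We choose $\beta_0\equiv1$ near $\partial\Sigma$, with $\beta_0,\beta_0^{-1}$ bounded. Then $g_0$ is ultrastatic near the boundary and $\mathrm{Hess}_{g_0}(z)$ vanishes on tangential vectors, so $g_0$ is infinitesimally convex.

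\textbf{Global hyperbolicity, which I expect to be the main obstacle.} For $g_0$ we need completeness of the optical metric $\beta_0^{-1}h_0$, which holds if $h_0$ is complete and $\beta_0$ is bounded. For $g_I$ we adapt Müller's argument. On slabs $t<t_0$ and $t>t_1$ the metric coincides with $g_0$ or $g$. In the middle slab we shrink the light cones: we replace the lapse $\beta$ by a larger function (equivalently, shrink $h_t$ conformally) so that the cones of $g_I$ are contained in those of $\max(g,g_0)$-type comparison metrics whose time slices $\{t\}\times\Sigma$ are Cauchy. The standard criterion is that $t$ is a Cauchy temporal function, i.e.\ every inextensible causal curve meets every level set. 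This follows from a uniform bound $|dx/dt|_{\tilde h}\le C$ for a complete auxiliary metric $\tilde h$ on $\Sigma$ over compact $t$-intervals, which is guaranteed by choosing the lapse in $[t_0,t_1]$ large. The presence of the boundary does not affect this argument: inextensibility in $\mcM$ together with a speed bound in $\Sigma$ prevents escape in finite $t$. Strong causality and compactness of causal diamonds then follow as in \cite{Ak_Hau_2020}. The subtle point is that enlarging the lapse must not destroy convexity, which leads to the last step.

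\textbf{Preservation of convexity.} In the collar, in Gaussian coordinates for $g_I$, Assumption \ref{Ass: Infinitesimally convex} amounts to $\partial_z(g_I)_{ab}\le0$ on $T\partial\mcM$, which is the computation used in Proposition \ref{Prop: no-glancing}. Since $\nabla t$ is tangent to the boundary, the normal is spatial and the lapse enters only through $\partial_z$ of the $dt^2$ coefficient. Convexity for $g$ and for $g_0$ means the second fundamental forms are semidefinite. A convex combination $\chi h_t+(1-\chi)h_0$ has second fundamental form equal, up to positive normalisation and a change in the unit normal, to the convex combination of the two second fundamental forms. However, the normal direction changes, and the required semidefiniteness condition is quadratic in tangent vectors mixing $t$ and boundary directions. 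We control this by arranging that $\beta$, $\beta_0$ and the lapse modification have vanishing normal derivative at $\partial\mcM$, i.e.\ they are constant along $z$ in the collar. We also ensure that $\partial_z h_t\le0$ holds up to the normal-normalisation error. If this turns out to be too rigid, the fallback is to first deform $g$ in a thin collar to a metric that is product-like in $z$ to first order, which preserves the sign of the Hessian, and then interpolate. This step, together with the lapse enlargement, is where the argument must be checked carefully.
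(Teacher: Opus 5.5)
Your overall architecture matches the paper's: a static $g_0$ whose Cauchy surface has bounded geometry and a convex boundary, the interpolation $(1-\chi)g_0+\chi g$, a M\"uller-type modification of the cones on the transition slab, and a separate boundary argument for convexity. However, two steps fail as written.

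\textbf{Global hyperbolicity: the direction is reversed.} For $-\lambda\,dt^2+\widehat h_t$, a causal curve $(t,x(t))$ satisfies $\widehat h_t(\dot x,\dot x)\le\lambda$. A larger lapse, or a conformally shrunk $h_t$, therefore \emph{widens} the cones and gives no bound on $|dx/dt|$. What is needed, and what the paper does, is the opposite. Set $g_I=-\lambda\,dt^2+F\widehat h_t$ with $F=1+\eta R$, where $R\,\widehat h_t>\max\{1,\lambda\}h_0$ on $[t_0,t_1]\times\Sigma$, so that $h_0(\dot x,\dot x)\le 1$ there. Take $\eta=1$ on a neighbourhood of $[t_0,t_1]$. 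Then on the slabs where $\eta$ switches off one has $F\ge 1$, so every $g_I$-causal vector is $g_0$-causal (resp.\ $g$-causal), and this comparison is what prevents escape in finite $t$ there.

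\textbf{Convexity: the step is left open, and your remedy is not available.} This is the more serious gap. The lowered Christoffel symbols of $(1-\chi)g_0+\chi g$ are the convex combination of those of $g_0$ and $g$, plus a term proportional to $\dot\chi\,(g-g_0)$. But $\mathrm{Hess}(z)(X,X)=-X^aX^b\Gamma_{\mu ab}\,(g_I^{-1}dz)^\mu$ pairs them with the $g_I$-normal, which differs from the $g_0$- and $g$-normals by tangential vectors $T$. This produces terms such as $g(\nabla_XX,T)$ and $\dot\chi\,(g-g_0)(g_I^{-1}dz,X)$, which have no sign. The sign would be preserved if the two normals coincided, but three things block this:
\begin{itemize}
\item the $h_t$-normal to $\partial\Sigma$ moves with $t$, while $h_0$ is static;
\item conditions such as $\nabla_n\beta=0$ or $\partial_z h_t\le 0$ are properties of the given $g$, not choices you can make;
\item the factor used to narrow the cones changes the normal derivative of the metric once more.
\end{itemize}

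\textbf{The missing idea.} Prescribe the second fundamental form of the final metric directly.
\begin{itemize}
\item The form $K\doteq(1-\chi)K_0+\chi K_1$ has the correct sign by linearity, since the sign condition is linear on the fixed space $T\partial\mcM$.
\item In Gaussian normal coordinates $g_I=dz^2+\alpha_z$, replace $\alpha_z$ by $\alpha_z-2z\rho\,\Delta K$, with $\Delta K=K-K_{g_I}$ and $\rho(0)=1$.
\item This leaves the metric on $\partial\mcM$ unchanged and makes the second fundamental form equal to $K$.
\item Being $O(z)$ on a thin collar, the correction does not disturb the cone domination used for global hyperbolicity.
\item The local corrections are patched with a partition of unity.
\end{itemize}
Your fallback of first deforming $g$ in a thin collar points in this direction. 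However, $g$ itself cannot be altered for $t>t_1$, and without the prescription argument the convexity claim of the statement is not established.
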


\begin{proof}
In view of Equation \eqref{Eq: line element psi*g}, we can identify isometrically $\mcM$ with $\bR\times\Sigma$, with line element $ds^2=-\beta^2dt^2+h_t$. On $\Sigma$, fix a complete Riemannian metric $h_0$ of bounded geometry such that $\partial\Sigma$ is infinitesimally convex, as per Assumption \ref{Ass: Infinitesimally convex}. The metric $g_0$ with line element $ds_0^2=-dt^2+h_0$ identifies $(\mcM,g_0)$, an ultrastatic globally hyperbolic spacetime with timelike boundary.

We construct a preliminary interpolation between $g_0$ and $g$. Choose $\chi\equiv\chi(t)\in C^\infty(\bR,[0,1])$ such that $\chi=0$ for $t<t_0$ and $\chi=1$ for $t>t_1$, where $t_0<t_1$. Setting
\begin{equation}
\lambda=(1-\chi)+\chi\beta^2\quad\textrm{and}\quad\widehat{h}_t=(1-\chi)h_0+\chi h_t.
\end{equation}
Per construction the Lorentzian metric $\widehat{g}$ whose line element is $ds^2=-\lambda dt^2+\widehat{h}_t$ is such that $\widehat g=g_0$ for $t<t_0$ while $\widehat g=g$ for $t>t_1$. Yet, is it not guaranteed that $(\mcM,\widehat{g})$ is globally hyperbolic. To enforce this property, we employ the argument of \cite[Thm.~1]{Muller}. Following the same rational used in the proof of Proposition \ref{Prop: Existence of Solutions}, for every $(t,x)\in[t_0,t_1]\times\Sigma$, we can exploit that $\widehat h_t(x)$ and $h_0(x)$ induce equivalent inner products. Hence, one can choose
$R\in C^\infty(\bR\times\Sigma,(0,\infty))$ such that on $[t_0,t_1]\times\Sigma$
\begin{equation}\label{Eq: Muller domination}
R\,\widehat{h}_t>\max\{1,\lambda\}\,h_0
\end{equation}
where the inequality is in the sense of quadratic form, being the metrics Riemannian. Let $\eta\in C^\infty_0(\bR,[0,1])$ be such that there exists $\epsilon>0$ for which $\eta=1$ on $(t_0-\epsilon,t_1+\epsilon)$. Setting $F=1+\eta R$, we can consider then the metric
\begin{equation}\label{Eq: Muller metric}
g_I=-\lambda dt^2+F\widehat h_t.
\end{equation}
Here $t$ is a temporal function and every causal curve can be parametrized as
\begin{equation}
\gamma(t)=(t,x(t))
\end{equation}
where $x$ are coordinates on $\Sigma$. Since $g_M(\dot{\gamma},\dot{\gamma})\leq 0$, 
\begin{equation}
F\widehat{h}_t(\dot x,\dot x)\leq\lambda.
\end{equation}
On the region where $\eta=1$, it holds $F=1+R>R$. Equation \eqref{Eq: Muller domination} yields
\begin{equation}\label{Eq: Estimate on h0}
\max\{1,\lambda\}\,h_0(\dot{x},\dot{x})<F\widehat{h}_t(\dot{x},\dot{x})\leq\lambda.\Longrightarrow
h_0(\dot x,\dot x)\leq\frac{\lambda}{\max\{1,\lambda\}}\leq 1.
\end{equation}
It follows that, for $s,t\in[t_0,t_1]$,
\begin{equation}\label{Eq: Muller distance estimate}
d_{h_0}(x(t),x(s))\leq\int_s^t\sqrt{h_0(\dot x,\dot x)}\,d\tau	\leq |t-s|.
\end{equation}

To prove that $(\mcM,g_I)$ is globally hyperbolic, we can fine tune $\eta$ so that there exists $t_-<t_0<t_1<t_+$ for which $\operatorname{supp}\eta\subset(t_-,t_+)$. Suppose that a
future-inextensible causal curve $\gamma(t)=(t,x(t))$ is such that 
\begin{equation}
	b\doteq\sup t(\gamma)<+\infty.
\end{equation}
If $b\in[t_0,t_1]$, Equation \eqref{Eq: Muller distance estimate} implies that there exists $x_b\in\{b\}\times\Sigma$ such that $x(t)\longrightarrow x_b$. This point might even lie at the boundary. Hence $\gamma$ has an endpoint $(b,x_b)$ but it can be continuously extended along the timelike $t$-direction, in contradiction with being future inextensible. If $b>t_1$, the curve $\gamma$ has entered the region where $g_I$ coincides with $g$ and, having an endpoint on $(b,x_b)$ would contradict the hypothesis of $(\mcM, g)$ being globally hyperbolic. Consider now the region $t>t_1$. There $\lambda=\beta^2$ and
$\widehat h_t=h_t$, so that
\begin{equation}
	g_I =-\beta^2dt^2+Fh_t.
\end{equation}
Since $F\geq1$, every $g_I$-causal vector is also $g$-causal. Hence, if $b>t_1$, a final segment of $c$ is also a future-directed $g$-causal curve. If such a segment had an endpoint it would also be
extendible as a $g_I$-causal curve, since the $t$-direction is $g_I$-timelike. It is therefore future-inextensible also as a $g$-causal curve. This is impossible because the level sets of $t$
are Cauchy surfaces for $(\mcM,g)$, and hence every future-inextensible $g$-causal curve has $t\to+\infty$. If $b<t_0$ one can apply a similar argument to $(\mcM,g_0)$, hence concluding that $(\mcM,g_I)$ is globally hyperbolic.

It remains to ensure that the interpolation preserves infinitesimal convexity of the boundary. Let $K_0$ and $K_1$ denote the second fundamental forms of $\partial\mcM$ with respect to $g_0$
and $g$, respectively and set
\begin{equation}
	K=(1-\chi)K_0+\chi K_1.
\end{equation}
Per linearity $K$ abides by Assumption \ref{Ass: Infinitesimally convex}. To infer the same result for $g_I$, we can exploit that infinitesimal convexity is a property at the boundary and we can therefore exploit the freedom to modify $g_I$ in a small neighbourhood of $\partial\mcM$. To this end, fixing a point $p\in\partial\mcM$ and using Gaussian normal coordinates $(z,y)$ centred thereat, it holds that $z=0$ on $\partial\mcM$, $\partial_z$ coincides with the inward-pointing unit normal and, locally,
\begin{equation}
g_I=dz^2+\alpha_z,
\end{equation}
where $\alpha_z$ is a $z$-dependent Lorentzian metric. It holds in addition that $K_{g_I}	=-\frac{1}{2}\partial_z\alpha_z\big|_{z=0}$. Set 
\begin{equation}
	\Delta K=K-K_{g_M}.
\end{equation}
Choose a smooth cutoff $\rho\equiv\rho(z)$ with $\rho|_{z=0}=1$ and replace $\alpha_z$ by
\begin{equation}
\widetilde\alpha_z=\alpha_z-2z\rho\,\Delta K\Longrightarrow -\frac{1}{2}	\partial_z\widetilde\alpha_z\big|_{z=0}=K_{g_I}+\Delta K=K.
\end{equation}
This entails that Assumption \ref{Ass: Infinitesimally convex} holds true. 

Observe that, although the construction is performed locally, it can be patched by means of a locally finite partition of unity. Equation \eqref{Eq: Muller domination} entails that Equation \eqref{Eq: Estimate on h0} is left unaffected by small perturbations of the metric. This is the case for the correction $-2z\rho\,\Delta K$ and the argument used above to establish global hyperbolicity remains valid.
\end{proof}

\noindent To conclude we establish existence of Hadamard states and we follow the same procedure as in \cite[Thm 5.2]{Dappiaggi-Marta_2021}. This is based on two results, the first of which is also known as {\em (classical) time-slice axiom}.

\begin{lemma}\label{Lem: Time-Slice Axiom}
	Let $(\mcM,g)$ be globally hyperbolic spacetime with timelike boundary, on top of which we consider the Klein-Gordon operator $P$ with Robin boundary conditions. Denoting by $\mathcal{G}^\pm_\kappa$ the advanced and retarded Green operators as per Theorem \ref{Thm: Existence of Propagators} and by $\mathcal{N}$ a Cauchy neighbourhood as per Definition \ref{Def: Cauchy neighbourhood}, here defined with a slight generalization as being of the form $(t_-,t_+)\times\Sigma$, then for every $f\in C^\infty_0(\mcM)$, there exists $h\in C^\infty_0(\mathcal{N})$ such that $\mathcal{G}_\kappa h=\mathcal{G}_\kappa f$ with $\mathcal{G}_\kappa \doteq \mathcal{G}^-_\kappa-\mathcal{G}^+_\kappa$.
\end{lemma}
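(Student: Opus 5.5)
We follow the standard cut-off argument for the time-slice axiom, adapted so that every auxiliary function respects the Robin boundary condition. Let $f\in C^\infty_0(\mcM)$ and write $\mathcal{N}=(t_-,t_+)\times\Sigma$. We choose $t_-<t_1<t_2<t_+$ and a smooth function $\chi=\chi(t)$ depending only on the temporal function, with $\chi=1$ for $t>t_2$ and $\chi=0$ for $t<t_1$. We then set
$$u\doteq\mathcal{G}^-_\kappa f\in C^\infty_\kappa(\mcM),$$
which lies in the kernel of $P$ only away from $\operatorname{supp}f$. The cleaner choice is instead to work with the solution $\Phi\doteq\mathcal{G}_\kappa f$. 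It satisfies $P\Phi=0$ and $(\nabla_n+\kappa)\Phi|_{\partial\mcM}=0$. Moreover, by Definition \ref{Def: SolFond on boundary}, $\operatorname{supp}\Phi\subseteq J^+(\operatorname{supp}f)\cup J^-(\operatorname{supp}f)$.

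Following the proof of the third arrow in Proposition \ref{Prop: Exact Sequence}, we define $\Phi^+\doteq\chi\Phi$ and $h\doteq P\Phi^+$. There are four things to check.

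First, $h$ is smooth and compactly supported in $\mathcal{N}$. On $\{t<t_1\}$ we have $\Phi^+=0$. On $\{t>t_2\}$ we have $\Phi^+=\Phi$, so $P\Phi^+=0$ there. Hence $\operatorname{supp}h\subseteq\{t_1\le t\le t_2\}\cap\operatorname{supp}\Phi$. The second set lies in $J^\pm(\operatorname{supp}f)$, and intersecting with a time slab gives a compact set, since Cauchy surfaces $\{t\}\times\Sigma$ meet $J^\pm(K)$ in compact sets by global hyperbolicity (Definition \ref{Def: Globally Hyperbolic}). Thus $h\in C^\infty_0(\mathcal{N})$.

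Second, $\Phi^+$ satisfies the Robin condition. This is where the boundary enters. By Proposition \ref{Prop: Globally Hyperbolic}, $\nabla t$ is tangent to $\partial\mcM$, so $\nabla_n\chi(t)=0$. It follows that $(\nabla_n+\kappa)(\chi\Phi)=\chi(\nabla_n+\kappa)\Phi=0$ on $\partial\mcM$, and therefore $\Phi^+\in C^\infty_{\mathrm{pc},\kappa}(\mcM)$. The same argument gives $\Phi^-\doteq(1-\chi)\Phi\in C^\infty_{\mathrm{fc},\kappa}(\mcM)$.

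Third, we apply Equation \eqref{Eq: Extended Domains 2}. It yields $\mathcal{G}^-_\kappa P\Phi^+=\Phi^+$ and $\mathcal{G}^+_\kappa P\Phi^-=\Phi^-$. Since $P\Phi^-=-P\Phi^+=-h$, we obtain $\mathcal{G}^+_\kappa h=-\Phi^-$. Consequently
$$\mathcal{G}_\kappa h=\Phi^++\Phi^-=\Phi=\mathcal{G}_\kappa f.$$

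The main obstacle is justifying the use of the extended Green operators on past/future compact functions, stated after Equation \eqref{Eq: Extend Domain}. Since $h$ has compact support, one could instead argue directly: $w\doteq\mathcal{G}^-_\kappa h-\Phi^+$ solves $Pw=0$ with the Robin condition and vanishes for $t<t_1$. Uniqueness from Proposition \ref{Prop: Existence of Solutions}, extended to noncompact $\Sigma$ via the exhaustion argument of Theorem \ref{Thm: Existence of Propagators} together with finite propagation speed (Proposition \ref{Prop: Causal Support of Solutions}), then gives $w=0$. An analogous argument gives $\mathcal{G}^+_\kappa h=-\Phi^-$. We would take this direct route, which requires only compactly supported sources.
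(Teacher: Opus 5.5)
Your proposal is correct and is essentially the paper's argument: the paper likewise sets $h=P(\chi\,\mathcal{G}_\kappa f)$ with $\chi=\chi(t)$ and applies Equation \eqref{Eq: Extended Domains 2} to the past/future compact pieces $\chi\,\mathcal{G}_\kappa f$ and $(1-\chi)\,\mathcal{G}_\kappa f$. Your explicit check that these pieces satisfy the Robin condition (via $n(t)=0$), your uniqueness-based alternative that avoids the extended Green operators, and your sign $\mathcal{G}^+_\kappa h=-(1-\chi)\,\mathcal{G}_\kappa f$ are all sound refinements of that proof; on the last point, the paper's displayed chain has a sign slip and as written would give $\mathcal{G}_\kappa h=(2\chi-1)\,\mathcal{G}_\kappa f$.
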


\begin{proof}
	Consider $\chi\equiv\chi(t)\in C^\infty(\bR;[0,1])$ be such that $\chi=1$ if $t>t_1$ while $\chi=0$ if $t<t_0$. Let $h \doteq P\chi\mathcal{G}_\kappa f$. This is smooth and of compact support since $\mathcal{G}_\kappa$ is causally supported as per Proposition \ref{Prop: Causal Support of Solutions}. Observing that $\chi\mathcal{G}_\kappa f \in C^\infty_{\mathrm{pc}}(\mcM)$, see Equation \eqref{Eq: Extend Domain} and the following discussion, $\mathcal{G}^-_\kappa(P\chi\mathcal{G}_\kappa f)=\chi\mathcal{G}_\kappa f$ while $$\mathcal{G}^+_\kappa(P\chi\mathcal{G}_\kappa f)=\mathcal{G}^+_\kappa (P\mathcal{G}_\kappa f-P(\chi-1)\mathcal{G}_\kappa f)=\mathcal{G}^+_\kappa(P(1-\chi)\mathcal{G}_\kappa f)=(1-\chi)\mathcal{G}_\kappa f.$$ In the last equation we have used that $(1-\chi)\mathcal{G}_\kappa f\in C^\infty_{\mathrm{fc}}(\mcM)$. In other words we have proven that $\mathcal{G}_\kappa h=\mathcal{G}_\kappa f$.
\end{proof}

The second ingredient that we need is the counterpart of \cite[Lem 5.3]{Dappiaggi-Marta_2021} which is in turn based on \cite[Lem 5.10]{Wrochna}. Since the proof is \emph{mutatis mutandis} the same as in this last reference, we omit it.

\begin{lemma}\label{Lem: Extending the Hadamard form}
	Let $(\mcM,g)$ be globally hyperbolic spacetime with timelike boundary abiding by Assumptions \ref{Ass: Infinitesimally convex} and \ref{Ass: No Zeno points}, on top of which we consider the Klein-Gordon operator $P$ with Robin boundary conditions. Let $\mathcal{N}$ be a Cauchy neighbourhood as per Definition \ref{Def: Cauchy neighbourhood}, here defined with a slight generalization as being of the form $(t_-,t_+)\times\Sigma$. If $\omega_{2,\kappa}\in\mathcal{D}^\prime(\mcM\times \mcM)$ abides by Equations \eqref{Eq: Robin eqs of motion} and \eqref{Eq: positivity + CCR} while Equation \eqref{Eq: Hadamard WF set Robin} holds true in $\mathcal{N}$, then it is a Robin Hadamard two-point correlation function.
\end{lemma}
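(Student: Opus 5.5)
The plan follows the structure of Part 4 of the proof of Proposition \ref{Prop: Ground 2-point function is Hadamard} and of Theorem \ref{Thm: WF of Propagators}. Since the wavefront set is known only on $\mathring{\mathcal{N}}\times\mathring{\mathcal{N}}$, the argument propagates it to all of $\mathring{\mcM}\times\mathring{\mcM}$. Equations \eqref{Eq: Robin eqs of motion} and \eqref{Eq: positivity + CCR} hold by hypothesis, so only Equation \eqref{Eq: Hadamard WF set Robin} is left to prove. Because $(P\otimes\mathbb{I})\omega_{2,\kappa}=0$ and the Robin condition holds in the first entry, and, by the symmetric-part argument in the Remark after Definition \ref{Def: Robin Hadamard two-point function}, also in the second entry, the Melrose--Sj\"ostrand theorem \cite[Thm. 4.1]{Melrose_1978} applies separately in each variable. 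As in Step 2 of Theorem \ref{Thm: WF of Propagators}, Assumptions \ref{Ass: Infinitesimally convex} and \ref{Ass: No Zeno points} ensure that we may use the ordinary wavefront set and that broken bicharacteristics are well defined.

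First I show the inclusion $\operatorname{WF}(\omega_{2,\kappa})\subseteq\mathcal{C}^\triangleright_{\mathrm b}(\mathring{\mcM})$. Take $(x,k_x,y,k_y)\in\operatorname{WF}(\omega_{2,\kappa})$ with $x,y\in\mathring{\mcM}$. By Step 1 of Theorem \ref{Thm: WF of Propagators}, each nonzero covector in this point is null. Suppose first that $k_x\neq0$. Following the broken flow of $(x,k_x)$ forward or backward in time, it reaches a point $(x',k_{x'})$ with $t(x')\in(t_-,t_+)$. Assumption \ref{Ass: No Zeno points} guarantees that only finitely many reflections occur along the way. Infinitesimal convexity lets us choose the parameter so that $x'\in\mathring{\mcM}$, since boundary points are hit only at isolated times. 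I then do the same in the second variable, which gives a point $(x',k_{x'},y',k_{y'})\in\operatorname{WF}(\omega_{2,\kappa})$ lying over $\mathring{\mathcal{N}}\times\mathring{\mathcal{N}}$. The hypothesis then forces this point to belong to $\mathcal{C}^\triangleright_{\mathrm b}$, with $k_{x'}$ future pointing and with $k_x,k_y$ both nonzero. Time orientation and the relation $\sim_{\mathrm b}$ are both preserved by the broken flow and by reflection (see Remark \ref{Rem: Connection to No Glancing}), and $\sim_{\mathrm b}$ is transitive. Transporting back therefore gives $(x,k_x,y,k_y)\in\mathcal{C}^\triangleright_{\mathrm b}(\mathring{\mcM})$. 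If instead $k_x=0$ (or $k_y=0$), I propagate only the nonzero entry into $\mathcal{N}$. The zero-section in one variable is untouched by this propagation, so the result is a point $(x,0,y',k_{y'})$ with $y'\in\mathring{\mathcal{N}}$, and $x$ can be moved into $\mathcal{N}$ by the same argument applied with the other factor. This produces a wavefront point over $\mathcal{N}$ with a vanishing covector, contradicting the hypothesis. I expect this to be the delicate step. Propagation of singularities only controls $\operatorname{WF}$ on the characteristic set of the operator acting in that variable, so one must check that the Melrose--Sj\"ostrand theorem is used in a form that also carries points with $k_x=0$. The cleanest route, I think, is to apply the propagation statement to both $(P\otimes\mathbb{I})$ and $(\mathbb{I}\otimes P)$ in turn, as in Step 4 of Theorem \ref{Thm: WF of Propagators}.

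For the reverse inclusion, I reuse the commutator trick. Take $\rho=(x,k_x,y,-k_y)\in\mathcal{C}^\triangleright_{\mathrm b}(\mathring{\mcM})$. By Theorem \ref{Thm: WF of Propagators}, $\rho\in\operatorname{WF}(G_\kappa)$. By the inclusion just proved, applied to $\omega_{2,\kappa}^T$ (whose first covector is past pointing), $\rho\notin\operatorname{WF}(\omega^T_{2,\kappa})$. The identity $\omega_{2,\kappa}=\omega^T_{2,\kappa}+iG_\kappa$ from Equation \eqref{Eq: positivity + CCR} then yields $\rho\in\operatorname{WF}(\omega_{2,\kappa})$. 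Restriction to $\mathring{\mcM}\times\mathring{\mcM}$ commutes with all of these steps, so this finishes the proof, exactly as in \cite[Lem. 5.10]{Wrochna}.
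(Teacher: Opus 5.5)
Your treatment of points whose two covectors are both non-zero is sound. So is the commutator argument for the reverse inclusion, which only involves such points and therefore does not depend on the problematic step.

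The gap is exactly the step you flag: excluding points $(x,0,y,k)$ with $k\neq0$, and symmetrically $(x,k,y,0)$. After transporting $(y,k)$ into $\mathring{\mathcal N}$ you obtain $(x,0,y',k')\in\operatorname{WF}(\omega_{2,\kappa})$, but $x$ cannot then be moved. At $k_x=0$ the principal symbol $g^{-1}_x(k_x,k_x)$ of $P\otimes\mathbb{I}$ has vanishing differential, hence vanishing Hamilton field, so \cite[Thm.~4.1]{Melrose_1978} says nothing in the first variable there. You are left with a point over $(\mcM\setminus\mathcal N)\times\mathcal N$, about which the hypothesis says nothing.

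Mimicking Step~4 of Theorem~\ref{Thm: WF of Propagators} does not help either. There the zero-covector points of $G_\kappa$ are excluded using its causal support, which forces $x=y$, together with its identification near the diagonal with the boundaryless Pauli--Jordan propagator. The two-point function $\omega_{2,\kappa}$ has no causal support property of this kind.

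The missing ingredient is positivity, which your argument never uses. Apply the Cauchy--Schwarz inequality $|\omega_{2,\kappa}(\bar f,g)|^2\le\omega_{2,\kappa}(\bar f,f)\,\omega_{2,\kappa}(\bar g,g)$ to localized test functions oscillating with frequencies $\lambda\xi'$ and $\lambda\eta'$; the first factor on the right grows at most polynomially in $\lambda$. This gives the standard lemma:
\begin{itemize}
\item if $(x,\xi,y,\eta)\in\operatorname{WF}(\omega_{2,\kappa})$ and $\eta\neq0$, then $(y,-\eta,y,\eta)\in\operatorname{WF}(\omega_{2,\kappa})$;
\item if $(x,\xi,y,\eta)\in\operatorname{WF}(\omega_{2,\kappa})$ and $\xi\neq0$, then $(x,\xi,x,-\xi)\in\operatorname{WF}(\omega_{2,\kappa})$.
\end{itemize}

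Now suppose $(x,0,y,k)\in\operatorname{WF}(\omega_{2,\kappa})$ with $x,y\in\mathring{\mcM}$.
\begin{enumerate}
\item The lemma gives $(y,-k,y,k)\in\operatorname{WF}(\omega_{2,\kappa})$. This point has both covectors non-zero, so your first case forces $-k\triangleright0$.
\item Step~4 of Theorem~\ref{Thm: WF of Propagators} shows that $\operatorname{WF}(G_\kappa)$ contains no point with a vanishing covector. The CCR therefore give $(x,0,y,k)\in\operatorname{WF}(\omega_{2,\kappa}^T)$, that is, $(y,k,x,0)\in\operatorname{WF}(\omega_{2,\kappa})$.
\item The lemma, now applied in the first slot, yields $(y,k,y,-k)\in\operatorname{WF}(\omega_{2,\kappa})$, which forces $k\triangleright0$. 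This contradicts step 1.
\end{enumerate}
The case $(x,k,y,0)$ is handled identically. With this step inserted, the rest of your argument goes through. The paper itself gives no proof of this lemma; it defers to \cite[Lem.~5.10]{Wrochna}.
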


To conclude, we can combine this two lemmas to prove the sought deformation argument, adapting to the case in hand \cite[Thm 5.2]{Dappiaggi-Marta_2021}. In view of the relevance of the result, we report the proof.

\begin{proposition}\label{Prop: Deformation Argument}
Let $(\mcM,g)$ be a globally hyperbolic spacetime with timelike boundary satisfying Assumptions \ref{Ass: Infinitesimally convex} and \ref{Ass: No Zeno points}, and let $P$ be the Klein--Gordon operator on $(\mcM,g)$ subject to Robin boundary conditions. Let $(\mcM,g_0)$ and $(\mcM,g_I)$ be, respectively, the static reference spacetime and the interpolating spacetime constructed in Proposition \ref{Prop: Existence of geometric Deformation}. Assume, in addition, that both auxiliary spacetimes satisfy Assumption \ref{Ass: No Zeno points} and that a ground state in the sense of Proposition \ref{Prop: Ground 2-point function is Hadamard} exists on $(\mcM, g_0)$. Then there exists a Robin-Hadamard two-point correlation function for $P$ on $(\mcM,g)$.
\end{proposition}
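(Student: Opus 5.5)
The plan is to follow the classical Fulling–Narcowich–Wald strategy, as adapted in \cite[Thm 5.2]{Dappiaggi-Marta_2021}, and to transport the ground state of $(\mcM,g_0)$ to $(\mcM,g)$ through the interpolating spacetime $(\mcM,g_I)$. Two time-slabs will be used: a past slab $\mathcal{N}_0\doteq(t_--1,t_0)\times\Sigma$, on which $g_I=g_0$, and a future slab $\mathcal{N}_1\doteq(t_1,t_1+1)\times\Sigma$, on which $g_I=g$. Denote by $\mathcal{G}_{\kappa,0}$, $\mathcal{G}_{\kappa,I}$ and $\mathcal{G}_{\kappa}$ the Pauli–Jordan operators for $g_0$, $g_I$ and $g$ given by Theorem \ref{Thm: Existence of Propagators}, and by $\omega^0_{2,\kappa}$ the ground state of Proposition \ref{Prop: Ground 2-point function is Hadamard} on $(\mcM,g_0)$.

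\textbf{Step 1 (from $g_0$ to $g_I$).} Uniqueness of the advanced and retarded propagators, together with their causal support, shows that $\mathcal{G}^\pm_{\kappa,I}$ and $\mathcal{G}^\pm_{\kappa,0}$ agree on $\mathcal{N}_0$. Indeed, for $f\in C^\infty_0(\mathcal{N}_0)$, $\mathcal{G}^-_{\kappa,0}f$ restricted to $t<t_0$ solves the $g_I$-problem, and the solution of Proposition \ref{Prop: Existence of Solutions} is unique. For $f\in C^\infty_0(\mcM)$, Lemma \ref{Lem: Time-Slice Axiom} applied to $g_I$ gives $h_f\in C^\infty_0(\mathcal{N}_0)$ with $\mathcal{G}_{\kappa,I}h_f=\mathcal{G}_{\kappa,I}f$, namely $h_f=P_I\chi\mathcal{G}_{\kappa,I}f$. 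Set $\omega^I_{2,\kappa}(f,f')\doteq\omega^0_{2,\kappa}(h_f,h_{f'})$.

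Several properties then have to be checked. The definition is independent of the choice of $h_f$: two choices differ by $P_0 u$ with $u\in C^\infty_{\mathrm{tc},\kappa}$ by Proposition \ref{Prop: Exact Sequence}, and $\omega^0_{2,\kappa}$ annihilates such terms because it solves the equation with Robin conditions in each entry. Positivity is inherited directly. The CCR follow from $\omega^0_{2,\kappa}(h,h')-\omega^0_{2,\kappa}(h',h)=iG_{\kappa,0}(h,h')=iG_{\kappa,I}(h,h')=iG_{\kappa,I}(f,f')$, where the last equality uses $G(h,h')=(h,\mathcal{G}h')$ and the fact that $h'-f'\in P_I[C^\infty_{\mathrm{tc},\kappa}]$. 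The equation of motion and the Robin boundary condition hold because $h_{P_If}=0$ and because the boundary condition is built into the construction; a continuity argument shows that $f\mapsto h_f$ is continuous, so $\omega^I_{2,\kappa}$ is a distribution. Finally, $\omega^I_{2,\kappa}$ coincides with $\omega^0_{2,\kappa}$ on $\mathcal{N}_0\times\mathcal{N}_0$, because one may take $h_f=f$ there. Hence the wavefront set condition \eqref{Eq: Hadamard WF set Robin} holds on $\mathcal{N}_0$. Since $(\mcM,g_I)$ satisfies Assumptions \ref{Ass: Infinitesimally convex} and \ref{Ass: No Zeno points} (by Proposition \ref{Prop: Existence of geometric Deformation} and the hypothesis), Lemma \ref{Lem: Extending the Hadamard form} shows that $\omega^I_{2,\kappa}$ is Robin–Hadamard on $(\mcM,g_I)$.

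\textbf{Step 2 (from $g_I$ to $g$).} Repeat the construction with the roles exchanged. For $f\in C^\infty_0(\mcM)$, pick $k_f\in C^\infty_0(\mathcal{N}_1)$ with $\mathcal{G}_{\kappa}k_f=\mathcal{G}_\kappa f$ (Lemma \ref{Lem: Time-Slice Axiom} for $g$), and define $\omega_{2,\kappa}(f,f')\doteq\omega^I_{2,\kappa}(k_f,k_{f'})$. On $\mathcal{N}_1$ the operators $P$, the propagators and the broken bicharacteristic relations of $g$ and $g_I$ coincide. To confirm this, note that the relation $\mathcal{C}^\triangleright_{\mathrm b}$ is determined locally by the null geodesic flow and the reflection law, and inside the slab these depend only on $g|_{\mathcal{N}_1}=g_I|_{\mathcal{N}_1}$; moreover $\mathcal{N}_1$ is causally convex for both metrics. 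The same checks as in Step 1 together with a second application of Lemma \ref{Lem: Extending the Hadamard form}, now on $(\mcM,g)$, complete the construction.

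The main obstacle is the identification of the propagators on a slab, i.e. showing $G_{\kappa,I}|_{\mathcal{N}_i\times\mathcal{N}_i}=G_{\kappa,0/g}|_{\mathcal{N}_i\times\mathcal{N}_i}$, with the associated causal convexity and well-posedness issues. The plan is to handle it with the energy-estimate uniqueness from Proposition \ref{Prop: Causal Support of Solutions} and Appendix \ref{App: B}, restricted to the slab $(t_-,t_+)\times\Sigma$, which is itself a globally hyperbolic spacetime with timelike boundary. A secondary difficulty is the continuity of $f\mapsto h_f$ needed to conclude that the result is a distribution on $\mcM\times\mcM$. This follows from the continuity of $\mathcal{G}_\kappa$ established in Theorem \ref{Thm: Existence of Propagators}, combined with the fact that multiplication by $\chi$ and application of $P$ are continuous.
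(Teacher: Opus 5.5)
Your proposal is correct and follows essentially the same route as the paper. Both pull the static ground state back to $(\mcM,g_I)$ via the time-slice axiom on a past slab where $g_I=g_0$, apply Lemma \ref{Lem: Extending the Hadamard form}, and then repeat the procedure on a future slab where $g_I=g$. Your additional checks make explicit what the paper leaves implicit: that the choice of $h_f$ does not matter (via the exact sequence), and that the propagators and the broken null relation agree on the slabs. In particular, this justifies identifying the pulled-back two-point function with $\omega^{(0)}_{2,\kappa}$ on the past slab.
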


\begin{proof}
On account of Proposition \ref{Prop: Existence of geometric Deformation}, there exists a globally hyperbolic spacetime $(\mcM,g_I)$ such that $(\mcM,g_0)$ is standard static and there exists $t_0,t_1\in\bR$ with $t_0 <t_1$ for which $g_I=g_0$ for all $t<t_0$, while $g_I=g$ for all $t>t_1$. Let $P_0$, $P_I$ and $P$ denote the Klein-Gordon operators associated with $g_0$, $g_I$ and $g$, respectively. 

On account of Proposition \ref{Prop: Ground 2-point function is Hadamard}, on $(\mcM,g_0)$ we can identify a Robin Hadamard two-point correlation function which we denote by $\omega^{(0)}_{2,\kappa}$. Calling $M_0$ the static region of $(\mcM, g_I)$, the restriction $\omega^{(0)}_{2,\kappa}|_{M_0\times M_0}$ identifies a two-point distribution of Hadamard form. Consider a Cauchy neighbourhood $\mathcal{N}$ of the form $(t_-,t_+)\times\Sigma$ such that $t_+<t_0$, namely $g_I|_{\mathcal{N}}=g_0$. Using the time-slice axiom in Lemma \ref{Lem: Time-Slice Axiom}, for any pair of test-functions $f,f^\prime\in\mathcal{D}(\mcM)$, we set $h=P_I(\chi\mathcal{G}_{I,\kappa} f)$ and $h^\prime=P_I(\chi\mathcal{G}_{I,\kappa} f^\prime)$ where $\mathcal{G}_{I,\kappa}=\mathcal{G}^-_{I,\kappa}-\mathcal{G}^+_{I,\kappa}$ is the Pauli-Jordan operator associated with $P_I$ on $(\mcM,g_I)$, while $\chi=\chi(t)$ is any smooth function chosen as in the proof of Lemma \ref{Lem: Time-Slice Axiom}. We set
$$\omega^\prime_{2,\kappa}(f,f^\prime)=\omega_{2,\kappa}^{(0)}(h,h^\prime).$$
Observe that $h,h^\prime\in C^\infty_0(\mathcal{N})$ and therefore the right-hand side of this identity is well-defined. In addition, since $\mathcal{G}_{I, \kappa}$ is a continuous map, see Theorem \ref{Thm: Existence of Propagators}, sequential continuity entails that $\omega^\prime_{2,\kappa}\in\mathcal{D}'(\mcM\times\mcM)$. In addition, per construction, it abides by Equations \eqref{Eq: Robin eqs of motion} and \eqref{Eq: positivity + CCR} on $(\mcM,g_I)$. Since $\omega_{2,\kappa}^{(0)}$ is of Hadamard form on $\mathcal{N}$ as it is precisely the ground state of Proposition \ref{Prop: Ground 2-point function is Hadamard}, Lemma \ref{Lem: Extending the Hadamard form} yields that $\omega^\prime_{2,\kappa}$ is of Hadamard form. To conclude it suffices to restrict $\omega_{2,\kappa}^\prime$ on a Cauchy neighbourhood $\mathcal{N}' \Subset \{t > t_1\}$ such that $g_I|_{\mathcal{N}^\prime}=g$ and apply once more Lemma \ref{Lem: Extending the Hadamard form} as detailed above.
\end{proof}

\section*{Acknowledgements}
\addcontentsline{toc}{section}{Acknowledgements}

We are grateful to Felix Finster, Onirban Islam, Nicola Pinamonti and Micha{\l} Wrochna for enlightening discussions. The work of BC has been supported by a fellowship of the University of Pavia. BC and CD both acknowledge the support of the INFN Sezione di Pavia and of Gruppo Nazionale di Fisica Matematica, part of INdAM. The work of BC is supported in part by a fellowship of the "Progetto Giovani GNFM 2025" under the project "Hadamard states for linearized Yang-Mills theories" fostered by Gruppo Nazionale di Fisica Matematica -- INdAM. BAJ-A is supported by the EPSRC Open Fellowship EP/Y014510/1. The authors acknowledge the use of ChatGPT (OpenAI) as an auxiliary tool during the preparation of this manuscript, in particular for linguistic editing, improving exposition, and discussing technical aspects of some arguments. All mathematical statements and proofs were independently checked and derived by the authors, who take full responsibility for the content of the paper.

\vskip.2cm

\noindent\textbf{Data availability statement}. Data sharing is not applicable to this article as no new data were created or analysed in this study.

\vskip .2cm

\noindent\textbf{Conflict of interest statement.} The authors certify that they have no affiliations with or involvement in any
organization or entity with any financial interest or non-financial interest in the subject matter discussed in
this manuscript.

\appendix

\section{Proof of Proposition \ref{Prop: Causal Support of Solutions}}\label{App: A}

The proof is divided into two main steps. The first, and more substantial, step is devoted to establishing a suitable energy estimate. The second then applies this estimate to derive the desired finite propagation speed.

\paragraph{Energy estimate:} We start by setting the regions of $\mcM$ that we shall consider. Given $t_0, t_1 \in \mathbb{R}$, with $t_0 < t_1$, fix $p\in \mcM$ such that $t_p=t(p) \geq t_1$ where $t:\mcM\to\bR$ denotes the underlying time function. Consider the domain
\begin{equation}\label{Eq: domain D}
	D_p\doteq J^-(p) \cap \{t_0 \le t \le t_1\}.
\end{equation}
where $J^-$ denotes the causal past in $\mcM$. Furthermore we choose $p$ so that $\operatorname{supp}(f) \subset D_p$. We observe that, since $(\mcM,g)$ is globally hyperbolic, $D_p$ is compact. Thereon, and more generally on $\mcM$, we introduce an \emph{auxiliary Riemannian metric} $g_E$ whose line element reads
\begin{equation}\label{Eq: aux Riemannian metric element}
	ds_E^2 = \beta dt^2 + h_t, 
\end{equation}
together with $\partial_\chi \doteq \beta^{-\frac{1}{2}} \partial_t$ which is the unit vector, normal to $\Sigma_t$, \textit{i.e.}, $g_E(\partial_\chi, \partial_\chi) = 1$. 

\begin{definition}\label{Def: euclidean kinetic energy}
	Let $(\mcM, g)$ be a globally hyperbolic spacetime with timelike boundary, as per Definition \ref{Def: Globally Hyperbolic} and let $u: \mcM \to \mathbb{R}$. We call {\bf Euclidean kinetic energy density} for $u$ 
	\begin{equation}\label{Eq: Euclidean kinetic energy}
		|\nabla u|_E^2 \doteq g_E(\nabla u, \nabla u) = (\partial_\chi u)^2 + |\nabla_{h_t} u|^2, \quad |\nabla_{h_t} u|^2 \doteq h^{ij}_t \nabla_i u \nabla_j u,
	\end{equation}
	where $g_E$ is the auxiliary Riemannian metric as in Equation \eqref{Eq: aux Riemannian metric element}. 
\end{definition}

Having introduced the necessary preliminary ingredients, let us consider the massless stress-energy tensor associated to a solution $u$ of the mixed initial-boundary value problem in Equation \eqref{Eq: Cauchy problem on non-static spacetime}: 
\begin{equation}
	\label{Eq: massless SET}
	T_{\mu \nu} [u] = T_{\nu \mu}[u] \doteq \nabla_\mu u \nabla_\nu u - \frac{1}{2} g_{\mu \nu} \nabla^\lambda u \nabla_\lambda u.
\end{equation}
Calling $X \doteq \partial_t = \beta^{\frac{1}{2}} \partial_\chi \in \Gamma(T\mcM)$, we can define the current 
\begin{equation*}
	J_\mu [u]= T_{\mu \nu}[u] X^\nu. 
\end{equation*}
A direct computation entails that 
\begin{equation}
	\label{Eq: div of current}
	\nabla^\mu J_\mu[u] = \Box_g u \partial_t u + T_{\mu \nu}[u] \nabla^\mu X^\nu = \underbrace{(f + m^2 u) \partial_t u}_{(1)} + \underbrace{T_{\mu \nu}[u] \nabla^\mu X^\nu}_{(2)}, 
\end{equation}
where we used Equation \eqref{Eq: Cauchy problem on non-static spacetime}. We now seek estimates for terms $(1)$ and $(2)$ in Equation \eqref{Eq: div of current}. Recalling that $m^2 = m^2(x)$ is smooth and $D_p$ is a compact domain, we introduce the constants \begin{equation}
\label{Eq: min max mass on Dp}
    M^2_{0, p} = \min_{x \in D_p} \{|m^2(x)|\}, \qquad M^2_{1,p} = \max_{x \in D_p} \{|m^2(x)|\}. 
\end{equation}
Then, starting from term $(1)$ and using a Young type inequality, it holds
\begin{equation*}
	|(f+m^2u) \partial_t u |\leq \frac{1}{2} \left[ |f|^2 + |\partial_t u|^2 + M^2_{1,p} |u|^2 + M^2_{1,p} |\partial_t u|^2 \right]. 
\end{equation*}
On $D_p$, we can write 
\begin{equation}\label{Eq: LongLiveAllEstimates}
	|\partial_t u|^2 = \beta |\partial_\chi u|^2 \le \beta_D |\nabla u|_E^2,
\end{equation}
where $\beta_D \doteq \max\limits_{x\in D_p}\{\beta(x)\}$, while $\vert \nabla u\vert_E^2$ is as per Equation \eqref{Eq: Euclidean kinetic energy}. Combining these two estimates, we get 
\begin{equation}
	\label{Eq: estimate term (1)}
	|(f + m^2u) \partial_t u | \le C_D \left[ |f|^2 + |u|^2 + |\nabla u|^2_E \right], 
\end{equation}
where the constant $C_D$ is chosen in such a way that $C_D \ge \max \left\{\frac{1}{2}, \frac{M^2_{1,p}}{2}, \frac{(1+M^2_{1,p})}{2} \beta_D \right\}$.

Turning our attention to the second term in Equation \eqref{Eq: div of current}, we use the fact that for any symmetric $2$-tensor $B_{\mu \nu}$, one can write 
$$B^2_E = |g_E^{\mu \gamma} g_E^{\nu \lambda} B_{\mu \nu} \, B_{\gamma \lambda} \vert.$$
Consider the bundle map $\mathfrak R: T^*\mcM \to T^*\mcM$, whose action on a given $g$-orthonormal coframe $(\chi^\flat, e^1, \ldots, e^{d-1})$ is defined by
$$\mathfrak R(\chi^\flat) = -\chi^\flat, \qquad \mathfrak R(e^i) = e^i, \quad i=1,\ldots,d-1.$$ Hence, with respect to this coframe, the matrix representation of $\mathfrak R$ coincides with $(\eta^{ij})_{i,j=0}^{d-1}$, where $\eta$ is the Minkowski metric with signature $(-, +, \ldots, +)$. Noting that $(g_E)_{a\gamma} = g_{\mu \gamma} \eta^{\mu}_a$, we can write 
\begin{equation*}
	|T_{\mu \nu}[u] \nabla^\mu X^\nu | = | ((\mathfrak R \otimes \mathfrak R)(T[u]), \nabla X)_E| \le |(\mathfrak R \otimes \mathfrak R)(T[u])|_E |\nabla X|_E, 
\end{equation*}
where $(\cdot, \cdot)_E$ is the Euclidean scalar product induced by the auxiliary Riemannian metric $g_E$, while $[(\mathfrak R \otimes \mathfrak R) (T[u])]^{a b} = \eta^a_i \eta^b_j T^{ij}[u]$. In the last step above, we used Cauchy-Schwartz inequality. After a few algebraic manipulations, one finds that $|(\mathfrak R \otimes \mathfrak R)(T[u])|_E = |T[u]|_E$. Therefore, \begin{equation}
	\label{Eq: Tmunu est 1}
	|T_{\mu \nu}[u] \nabla^\mu X^\nu | \le \underbrace{|T[u]|_E}_{\text{(a)}} \underbrace{|\nabla X|_E}_{\text{(b)}}.
\end{equation}
For what concerns term (b), $\vert \nabla X|^2_E \doteq g_E^{ab} g_E^{cd} \nabla_a X_c \nabla_b X_d$ encompasses only smooth geometric quantities, which are fixed on $D_p$. Consequently, there exists $M_D^2 \doteq \max\limits_{x\in D_p} |\nabla X|^2_E \ge 0$. The analysis of term (a) is slightly more complicated since it comprises, in turn, three main contributions, see Equation \eqref{Eq: massless SET}: 
\begin{equation*}
	\begin{split}
		|g_E^{ab} g_E^{cd} \, \nabla_a u \nabla_b u \nabla_c u \nabla_d u|& = | \nabla u|^4_E, \\
		\frac{1}{2} |g_E^{ab} g_E^{cd} g_{bd} \, \nabla_a u \nabla_c u \nabla^\lambda u \nabla_\lambda u| &\le \frac{1}{2} |\nabla u|_E^4, \\
		\frac14 |g_E^{ab} g_E^{cd} g_{ac} g_{bd} \, \nabla_\rho u \nabla_\rho u \nabla^\lambda u \nabla_\lambda u| &\le \frac{d}{4} |\nabla u|_E^4, 
	\end{split}
\end{equation*}
where in the second and third lines we used the identities $g_{ab} = \eta^{c}_a (g_E)_{cb}$ and $g_E^{ab} g_E^{cd} g_{bd} = g^{ac}$ as well as the estimate $|g^{ac} \nabla_a u \nabla_c u| \le |\partial_\chi u|^2 + |h^{ij}_t \nabla_i u \nabla_j u|$. Thus, we can conclude that 
$$|T[u]|^2_E \le \left(\frac{d}{4} + 2 \right) |\nabla u|^4_E.$$ 
Combining together the estimates obtained for terms (a) and (b), it holds that
\begin{equation}\label{Eq: estimate term (2)}
	|T_{\mu \nu}[u] \nabla^\mu X^\nu| \leq M_D |\nabla u|_E^2,
\end{equation}
where we reabsorbed all numerical factors into a redefinition of the constant $M_D$. 

Finally, putting together Equations \eqref{Eq: estimate term (1)} and \eqref{Eq: estimate term (2)}, we can conclude that there exists a constant $\tilde{C}_D \ge 0$ such that, on $D_p$, 
\begin{equation}\label{Eq: combining estimates (1) and (2)}
	\vert \nabla_\mu J^\mu [u] | \le \tilde{C}_D \left[ |f|^2 + |u|^2 + |\nabla u|_E^2 \right]. 
\end{equation}
Let us now consider the domain $D_t \doteq D_p \cap \{q\in D_p\;|\; t_0\leq t(q)\leq t\}$, where, with a slight abuse of notation, we write $t(q)$ to refer to the time coordinate of $q$, while $t\in\bR$ is a fixed value such that $t_0<t<t_1$. Up to corner points which play no role, the boundary decomposes as 
\begin{equation}\label{Eq: boundary of Dt}
	\partial D_t = \Omega_{t} \cup \Omega_{t_0} \cup N_t \cup B_t,  
\end{equation}
where $\Omega_{i} \doteq D_p \cap \Sigma_{i}$, $i=t_0,t$, are the spacelike portions, $N_t$ is the lightlike portion up to $\Sigma_t$, while $B_t \doteq \partial \mcM \cap D_t$ is the part contained in the timelike boundary. As a preliminary step, we integrate Equation \eqref{Eq: combining estimates (1) and (2)} over this domain, thereby obtaining an estimate which will be useful for the ensuing discussion. More precisely, we have that there exists $C^\prime_D>0$ such that
\begin{flalign}
	\label{Eq: estimate for Stokes}
	\bigg \vert \int_{D_t} \nabla^\mu J_\mu [u] \, d\mu_g \bigg \vert \le \int_{D_t} \vert  \nabla^\mu J_\mu [u] \vert \, d\mu_g \le C'_D \int_{t_0}^t \left[ \|f\|^2_{L^2_s} + \|u\|^2_{L^2_s} + E_{\mathrm{der}}(s) \right] \, ds, 
\end{flalign}
where we denote by $L^2_s \doteq L^2(\Sigma_s \cap D_p)$ and by 
$$E_{\mathrm{der}} (s) \doteq \|\nabla_E u\|^2_{L^2_s}.$$
The strictly positive lapse function $\beta$, whose square root would otherwise appear in the metric-induced volume measure, is estimated by $L^\infty$-norm over the integration domain and absorbed into the constant $C^\prime_D$. Applying Stokes' theorem over $D_t$, we obtain
\begin{flalign}\label{Eq: Stokes}
	\int_{D_t} \nabla^\mu J_\mu[u] \, d\mu_g &= -\int_{\Omega_t} J_\mu [u] \mathfrak{n}^\mu \, d\mu_{h_t} - \int_{\Omega_{t_0}} J_\mu [u] \mathfrak{n}^\mu \, d\mu_{h_{t_0}}  + \int_{B_t} J_\mu [u] \mathfrak{n}^\mu \, d\mu_{B_t} + \int_{N_t} \iota_J(d\mu_g)
\end{flalign}
where, with a slight abuse of notation, we use the same symbol $\mathfrak n$ to denote the outward pointing normal vector field to the different portions of the boundary. The relevant choices will be specified case by case below. The only exception is $N_t$ which, being lightlike, behaves differently and its contribution is written as $\iota_J(d\mu_g)$, the restriction to $N_t$ of the contraction of the metric induced volume form against the current vector field $J$. The lower bound in the estimate of Equation \eqref{Eq: estimate for Stokes} entails 
\begin{flalign}\label{Eq: consequence Stokes}
	\int_{\Omega_t} J_\mu [u] \mathfrak{n}^\mu \, d\mu_{h_t} &\le- \int_{\Omega_{t_0}} J_\mu [u] \mathfrak{n}^\mu \, d\mu_{h_{t_0}}  + \int_{B_t} J_\mu [u] \mathfrak{n}^\mu \, d\mu_{B_t} + \int_{N_t} \iota_J(d\mu_g) \notag \\ &+C^\prime_D \int_{t_0}^t \left[ \|f\|^2_{L^2_s} + \|u\|^2_{L^2_s} + E_{\mathrm{der}}(s)\right] \, ds.
\end{flalign}
In the following, we shall estimate each contribution arising on the right-hand side of Equation \eqref{Eq: Stokes} separately. 

Let us consider the integral over $\Omega_t \doteq \Sigma_t \cap D_p$ and recall that the unit normal vector field to $\Omega_t$ is $\mathfrak n = \chi = \beta^{-\frac{1}{2}} X$. 
\begin{equation*}
	\int_{\Omega_t} J_\mu [u] \mathfrak{n}^\mu \, d\mu_{h_t} = \int_{\Omega_t} T_{\mu \nu}[u] X^\nu \chi^\mu \, d\mu_{h_t} = \beta^{\frac{1}{2}} \int_{\Omega_t} T_{\mu \nu}[u] \chi^\nu \chi^\mu 
\end{equation*}
Using that $T_{\mu \nu}[u] \chi^\mu \chi^\nu = \frac{1}{2} |\nabla u|_E^2$, we obtain 
\begin{equation} \label{Eq: estimate Stokes 1} 
	\int_{\Omega_t} J_\mu [u] \mathfrak{n}^\mu \, d\mu_{h_t} = \frac{1}{2} \int_{\Omega_t} |\nabla u|^2_E \, \beta^{\frac{1}{2}} d\mu_{h_t} \doteq E_{\mathrm{der}}^\beta(t), 
\end{equation} 
where $d\mu_{h_t}$ denotes the volume form on $\Sigma_t$ induced by the Riemannian metric $h_t$. The analysis of the contribution relative to $\Omega_{t_0}$ is completely analogous, barring setting $\mathfrak n = - \chi$. Thus, we obtain 
\begin{equation}\label{Eq: estimate Stokes 2}
	\int_{\Omega_{t_0}} J_\mu [u] \mathfrak{n}^\mu \, d\mu_{h_t} = -\frac{1}{2} \int_{\Omega_{t_0}} |\nabla u|^2_E \, \beta^{\frac{1}{2}} d\mu_{h_t} \doteq -E_{\mathrm{der}}^\beta(t_0). 
\end{equation}
Let us now turn our attention to the contribution corresponding to $N_t$, the lightlike portion of the boundary. In this case, consider a future directed lightlike vector field on $N_t$ of the form $\mathfrak{l} = a (\chi + S)$. The constraint $g(\mathfrak{l}, \mathfrak{l}) = 0$ combined with $g(\mathfrak{l}, \chi) < 0$, entails that $\mathfrak{l}$ is future pointing if $a>0$. Furthermore, $S$ is a spacelike vector such that $g(S,S) = 1$ while $g(S,\chi)=0$.  Observe that that $T_{\mu \nu}[u] X^\mu \chi^\nu = \frac{\beta^{\frac{1}{2}}}{2} |\nabla u|_E^2$ and, since $X=\beta^{\frac{1}{2}}\chi$, we have
\begin{equation*}
	T_{\mu\nu}[u]X^\nu S^\mu = \beta^{\frac{1}{2}}\partial_\chi u \,\partial_S u.
\end{equation*}
Combining these identities we obtain
\begin{equation*}
	T_{\mu\nu}[u]X^\nu \mathfrak{l}^\mu=a\beta^{\frac{1}{2}}\left(\frac{1}{2}|\nabla u|_E^2+\partial_\chi u \, \partial_Su\right).
\end{equation*}
Since $a \beta^\frac12>0$, we prove that the expression within brackets is non-negative. Recalling that
\begin{equation*}
	|\nabla u|_E^2=(\partial_\chi u)^2+h^{ij}_t \partial_i u \,\partial_j u,
\end{equation*}
we obtain
\begin{equation*}
	\frac{1}{2}|\nabla u|_E^2+\partial_\chi u \,\partial_Su =
	\frac{1}{2}(\partial_\chi u)^2 +\frac{1}{2}h^{ij}_t \partial_i u \,\partial_j u +\partial_\chi u\,\partial_Su.
\end{equation*}
Choosing an orthonormal frame $(\chi,S,e_2,\ldots,e_{d-1})$ with respect to $g$, we have
\begin{equation*}
	h^{ij}_t\partial_i u \,\partial_j u =(\partial_Su)^2 +\sum\limits_{k=2}^{d-1}[e_k(u)]^2, ,
\end{equation*}
Therefore,
\begin{equation*}
	\frac{1}{2}|\nabla u|_E^2+\partial_\chi u \,\partial_Su =\frac{1}{2}\left[(\partial_\chi u+\partial_Su)^2	 +\sum\limits_{k=2}^{d-1}[e_k(u)]^2\right] \geq 0.
\end{equation*}
Gathering everything, we end up with
\begin{gather*}
	\int_{N_t} \iota_J(d\mu_g) =  -\int_{N_t} a \left[T_{\mu \nu} [u] X^\nu \chi^\mu + T_{\mu \nu}[u] X^\nu S^\mu \right] \, d\mu_{N_t} =\\
	-\int_{N_t} a \left[\frac{\beta^{\frac{1}{2}}}{2} |\nabla u|_E^2 + T_{\mu \nu}[u] X^\nu S^\mu \right] d\mu_{N_t},
\end{gather*}
where $d\mu_{N_t}$ is the volume density induced by $\chi+S$. We conclude that 
\begin{equation}\label{Eq: estimate Stokes 3}
	\int_{N_t} \iota_J(d\mu_g) \leq 0. 
\end{equation}
In view of these estimates Equation \eqref{Eq: consequence Stokes} can be recast in the form 
\begin{gather}
	E_{\mathrm{der}}^\beta(t) \leq E_{\mathrm{der}}^\beta(t) - \int_{N_t} \iota_J(d\mu_g) \leq \notag\\ E_{\mathrm{der}}^\beta(t_0) + \int_{B_t} J_\mu [u] \mathfrak{n}^\mu \, d\mu_{B_t} + C_D' \int_{t_0}^t \left[ \|f\|^2_{L^2_s} + \|u\|^2_{L^2_s} + E_{\mathrm{der}}^\beta(s) \right]\, ds.\label{Eq: aux eq 2}
\end{gather}

Focusing on the boundary integral over $B_t \doteq \partial \mcM \cap D_t$, we can choose Gaussian normal coordinates adapted to the boundary $(t,z, x^2, \ldots, x^{d-1}) \equiv (t,z, \mathsf{x})$ so that $\partial \mcM$ corresponds to the locus $\{z = 0\}$ while, in the interior $z\geq 0$. In this case, the outward pointing, unit normal to $B_t$ is $\mathfrak n = -\partial_z$ and we can write
\begin{equation*}
	\int_{B_t} J_\mu [u] \mathfrak{n}^\mu \, d\mu_{B_t} = -\int_{t_0}^t \int_{\partial \Sigma} \partial_s u \, \partial_z u  \, \beta^\frac{1}{2} ds \, d\mu_{h_s}^\partial,
\end{equation*}
where $d \mu_{B_t} = \beta^{\frac{1}{2}} dt d\mu_{h_t}^\partial$ with $d\mu_{h_t}^\partial = \sqrt{|h_t \vert_{\partial{\mcM}}} d \mathsf{x}$. Using the Robin boundary conditions $\partial_z u = -\kappa u$, we obtain   
\begin{flalign*}
	\int_{B_t} J_\mu [u] \mathfrak{n}^\mu \, d\mu_{B_t} &= \frac{\kappa}{2} \int_{t_0}^t \int_{\partial \Sigma}  \partial_s (u^2) \, \beta^\frac{1}{2} ds \, d\mu_{h_s}^\partial \\&= \frac{\kappa}{2} \left(\int_{\partial \Sigma} u^2(s) \, \beta^\frac{1}{2} d\mu_{h_s}^\partial \right)\bigg \vert_{t_0}^t - \frac{\kappa}{2} \int_{t_0}^t \int_{\partial \Sigma} u^2 \, \partial_s (\beta^\frac{1}{2} \sqrt{|h_s \vert_{\partial \mcM}}) ds \, d\mu_{\mathsf x} \doteq \mathcal{B}(t).
\end{flalign*}
To control $|\mathcal{B}(t)|$, consider
\begin{equation}\label{Eq: aux Eq 1}
	\bigg \vert \frac{\kappa}{2} \int_{\partial \Sigma} \beta^\frac{1}{2} u^2(t) \, d\mu_{h_t}^\partial \bigg \vert \le \frac{|\kappa|}{2} \int_{\partial \Sigma} \beta^\frac{1}{2} |u(t)|^2 \, d\mu_{h_t}^\partial \le \frac{|\kappa|}{2} \beta_{\mathrm{max}}^{\frac{1}{2}}  \|u(t)\|^2_{L^2(\partial \Sigma)},
\end{equation}
where the norm is taken over $L^2(\partial \Sigma, d\mu_{h_t}^\partial)$ and we controlled $\beta$ with its maximum $\beta_{\mathrm{max}}$ over the integration domain. Observe that, with a slight abuse of notation we write $\partial\Sigma=\Sigma_t\cap B_t$. Since per hypothesis, for any but fixed $t \in \mathbb{R}$, $\Omega_t \doteq \Sigma_t \cap D_p$ is compact with a smooth, compact boundary $\partial \Sigma_t$, the $\epsilon$-trace norm estimate holds, namely, for every $\epsilon > 0$, there exists a constant $C_\epsilon > 0$ such that
\begin{equation}\label{Eq: trace norm estimate}
	\|u(t)\|^2_{L^2(\partial \Sigma_t)} \le \epsilon \|\nabla_{h_t} u(t) \|^2_{L^2(\Omega_t)} + C_\epsilon \|u(t)\|^2_{L^2(\Omega_t)}, 
\end{equation}
where $|\nabla_{h_t} u(t)|^2_{h_t} \doteq h_t^{ij} \partial_i u(t) \partial_j u(t)$ in local coordinates on $\Omega_t$. Recalling the definition of $$E_{\mathrm{der}}^\beta (t) \doteq \frac{1}{2} \int_{\Omega_t} \beta^{\frac{1}{2}} |\nabla u|_E^2 \, d\mu_{h_t} = \frac{1}{2} \int_{\Omega_t} \beta^{\frac{1}{2}} \left( (\partial_\chi u)^2 + |\nabla_{h_t} u|^2 \right) \, d\mu_{h_t},$$ and since $\beta$ admits a minimum $\beta_{\mathrm{min}} > 0$ on the integration domain, it follows that 
$$\|\nabla_{h_t} u(t)\|^2_{L^2(\Omega_t)} \doteq \int_{\Omega_t} |\nabla_{h_t} u|^2 \, d\mu_{h_t} \le 2 \frac{E^\beta_{\mathrm{der}}(t)}{\beta_{\mathrm{min}}^\frac{1}{2}}.$$ 
Plugging the above estimates into Equation \eqref{Eq: aux Eq 1}, we obtain 
\begin{equation*}
	\bigg \vert \frac{\kappa}{2} \int_{\partial \Sigma} \beta^\frac{1}{2} u^2(t) \, d\mu_{h_t}^\partial \bigg \vert \le \frac{|\kappa|}{2} \beta_{\mathrm{max}}^\frac{1}{2} \left( \epsilon \|\nabla_{h_t} u(t)\|^2_{L^2(\Omega_t)} + C_\epsilon \|u(t)\|^2_{L^2(\Omega_t)} \right) \le \tilde{C} \epsilon E_{\mathrm{der}}^\beta(t) + C'_\epsilon \|u(t)\|^2_{L^2(\Omega_t)}, 
\end{equation*}
where $\tilde{C} \doteq \left(\frac{\beta_{\mathrm{max}}}{\beta_{\mathrm{min}}} \right)^{\frac{1}{2}} |\kappa|$, while $C'_{\epsilon} \doteq  \frac{|\kappa|}{2} \beta_{\mathrm{max}}^\frac{1}{2} C_\epsilon$. An analogous estimate applies also to the term 
$$ \bigg \vert \frac{\kappa}{2} \int_{\partial \Sigma} \beta^\frac{1}{2} u^2(t_0) \, d\mu_{h_{t_0}}^\partial \bigg \vert \le  \tilde{C}_0 \epsilon_0E_{\mathrm{der}}^\beta(t_0) + C'_{\epsilon_0} \|u(t_0)\|^2_{L^2(\Omega_{t_0})}.$$ 
For what concerns the last integral appearing in the definition of $\mathcal{B}(t)$, exploiting the fact that $\partial \Sigma$ is compact, there exists $2C \doteq \max\limits_{B_t} \{ \vert \partial_s \ln (\beta^\frac{1}{2} \sqrt{|h_s \vert_{\partial \mcM}}) \vert\}$ and, thus, 
\begin{equation*}
	\bigg \vert \frac{\kappa}{2} \int_{t_0}^t \int_{\partial \Sigma} u^2 \, \partial_s (\beta^\frac{1}{2} \sqrt{|h_s \vert_{\partial \mcM}}) ds \, d\mu_{\mathsf x} \bigg \vert \le |\kappa| C \beta_{\mathrm{max}}^\frac12 \int_{t_0}^t \|u(s)\|^2_{L^2(\partial \Sigma)} \, \, ds,
\end{equation*}
where $\|u(s)\|^2_{L^2(\partial \Sigma)}$ can in turn be controlled by means of the trace norm estimate in Equation \eqref{Eq: trace norm estimate}. Denoting by $\widetilde{C}>0$ an overall constant incorporating all the numerical factors in the previous estimates, we can thus conclude that
\begin{flalign*}
	|\mathcal{B}(t)| \le & \widetilde{C}\left(\epsilon E_{\mathrm{der}}^\beta(t) + \widetilde{C}_\epsilon \|u(t)\|^2_{L^2(\Omega_t)}  +   E_{\mathrm{der}}^\beta(t_0) +\|u(t_0)\|^2_{L^2(\Omega_{t_0})} \right. \\ &\left.+ \int_{t_0}^t  \left(E_{\mathrm{der}}^\beta(s) + \|u(s)\|^2_{L^2(\Omega_s)}\right)\, ds \right),
\end{flalign*}
where $\widetilde{C}_\epsilon:=\frac{C^\prime_\epsilon}{\widetilde{C}}$. Combining this with Equation \eqref{Eq: aux eq 2} and choosing $\epsilon$ so that $\widetilde{C}\epsilon<1$, we obtain 
\begin{equation}\label{Eq: Ebeta Inequality}
	E^\beta_{\mathrm{der}}(t)\leq C_\epsilon \|u(t)\|^2_{L^2(\Omega_t)} + \widetilde{C}\left( E_{\mathrm{der}}^\beta(t_0) +\|u(t_0)\|^2_{L^2(\Omega_{t_0})}+ \int_{t_0}^t  (E_{\mathrm{der}}^\beta(s) + \|u(s)\|^2_{L^2(\Omega_s)} +\|f\|^2_{L^2_s})\,ds \right),
\end{equation}
where we have implicitly reabsorbed all multiplicative constant in $C_\epsilon$ and $\widetilde{C}$ respectively. At this point, using the identity $u(t)=u(t_0)+\int\limits_{t_0}^t \partial_s u(s)ds$ we can use Equation \eqref{Eq: LongLiveAllEstimates} to control with two constants $\mathfrak{c},\mathfrak{c}^\prime >0$ the following estimate:
$$\|u(t)\|^2_{L^2(\Omega_t)}\leq\mathfrak{c}\left(\|u(t_0)\|^2_{L^2(\Omega_{t_0})}+\int\limits_{t_0}^t \|\partial_s u(s)\|^2_{L^2(\Omega_s)}\right)\leq\mathfrak{c}\|u(t_0)\|^2_{L^2(\Omega_{t_0})}+\mathfrak{c}^\prime\int_{t_0}^t E^\beta_{der}(s)ds.$$
Inserting this inequality in Equation \eqref{Eq: Ebeta Inequality} justifies defining the {\em total energy} as
\begin{equation}\label{Eq: total energy}
	E(s) \doteq E^\beta_{\mathrm{der}}(s) + K \|u(s)\|^2_{L^2(\Omega_s)}, 
\end{equation}
where $K$ is any but fixed positive number. This entails that there exists $\tilde{K}>0$ so that
\begin{equation}\label{Eq: Energy estimate 1}
	E(t) \le \tilde{K} \left(E(t_0) + \int_{t_0}^t E(s) \, ds + \int_{t_0}^t  \|f\|^2_{L^2_s} \, ds \right).
\end{equation}
Applying Gr\"onwall lemma to Equation \eqref{Eq: Energy estimate 1} and reabsorbing in $\widetilde{K}$ all multiplicative constants, we thus obtain
\begin{equation}\label{Eq: Energy estimate 2}
	E(t) \leq \widetilde{K} \left(E(t_0) + \int_{t_0}^t \|f\|^2_{L^2_s} \, ds  \right).
\end{equation}

\paragraph{Finite Propagation Speed:} In view of our preceding discussion, we only need to reap its fruits. Consider now a compact set $K=\operatorname{supp}(u_0)\cup\operatorname{supp}(u_1)\cup\operatorname{supp}(f)$ and let $q\notin J^+(K)$. Consequently, $K\cap J^-(q)=\emptyset$. Fix $t_0,t_1$ such that $t_0<t_1<t(q)$ and let $D_q$ be as per Equation \eqref{Eq: domain D}. Therein both the source and the initial data vanish. Hence Equation \eqref{Eq: Energy estimate 2} combined with Equations \eqref{Eq: estimate Stokes 1} and \eqref{Eq: total energy} entails
$$E(t_1)=0\Longrightarrow  E^\beta_{\mathrm{der}}(t_1)=\|u(t_1)\|^2_{L^2(\Omega_{t_1})}=0.$$
This entails that $u(t_1)=0$. Since both $q$ and $t_1$ are arbitrary and $(\mcM,g)$ is globally hyperbolic, we can conclude that the solution is vanishing everywhere outside $J^+(K)$ and, \emph{mutatis mutandis}, also on $J^-(K)$.

\section{Proof of Proposition \ref{Prop: Existence of Solutions}}\label{App: B}

We divide the argument into separate logical steps. Yet, before delving into them, we recall that $\mcM$ can be isometrically identified with $\bR\times\Sigma$, see Proposition \ref{Prop: Globally Hyperbolic}.  

\vskip .2cm

\paragraph{Step 1 -- Weak Formulation:} Henceforth, until Step 5., the Cauchy surface $\Sigma$ is assumed to be compact and we consider the strip
$$D\doteq I \times \Sigma,\quad I:=[t_0,t_1].$$ 
Considering $u\in C^\infty(D)$ and Equation \eqref{Eq: line element psi*g}, it holds 
\begin{equation}
	\label{Eq: KG expansion}
	(\Box_g + m^2) u = \underbrace{-\frac{1}{\beta^{\frac{1}{2}} \sqrt{|h_t|}} \partial_t (\beta^{-\frac{1}{2}} \sqrt{|h_t|} \partial_t u)}_{(1)} + \underbrace{\frac{1}{\beta^{\frac{1}{2}}} \operatorname{div}_{h_t}(\beta^{\frac{1}{2}} \nabla u)}_{(2)} + m^2 u = f
\end{equation}
where we assume $\textrm{supp}(f)\subset D$ and where, for later convenience, we recall that 
$$\frac{1}{\beta^{\frac{1}{2}} \sqrt{|h_t|}} \partial_i (\beta^{-\frac{1}{2}} \sqrt{|h_t|} h^{ij}_t \partial_j u) = \frac{1}{\beta^{\frac{1}{2}}} \operatorname{div}_{h_t} (\beta^{\frac{1}{2}} \nabla u).$$ 
Considering $v\in C^\infty(\Sigma)$ the term $(1)$ in Equation \eqref{Eq: KG expansion} reads
\begin{flalign*}
	-\int_{\Sigma} \frac{1}{\sqrt{|h_t|}} \partial_t (\beta^{-\frac{1}{2}} \sqrt{|h_t|} \partial_t u) v \, d\mu_t =  -\int_{\Sigma} \partial_t (\beta^{-\frac{1}{2}} \sqrt{|h_t|} \partial_t u) v \, dx = - \frac{d}{dt} \int_{\Sigma} \beta^{-\frac{1}{2}} \partial_t u \, v \, d\mu_t,
\end{flalign*}
where, being both $u$ and $v$ smooth, we can take the derivative outside the integral and where $d\mu_t \doteq \sqrt{|h_t|} dx$ denotes the metric induced volume form. Concerning term $(2)$ in Equation \eqref{Eq: KG expansion}, integrating by parts we obtain
\begin{flalign*}
	\int_{\Sigma} \operatorname{div}_{h_t} (\beta^{\frac{1}{2}} \nabla u) \, v \, d\mu_t = \int_{\partial \Sigma} \beta^{\frac{1}{2}} (-\nabla_n u) v \, d\sigma_t - \int_{\Sigma} \beta^{\frac{1}{2}} h_t(\nabla u, \nabla v) \, d\mu_t=\\
	\kappa \int_{\partial \Sigma} \beta^{\frac{1}{2}}  u v \, d\sigma_t - \int_{\Sigma} \beta^{\frac{1}{2}} h_t(\nabla u, \nabla v) \, d\mu_t.
\end{flalign*}
where, with a slight abuse of notation, $n$ refers to the inward pointing unit normal vector field to $\partial \Sigma$ at time $t$, while we denote by $d\sigma_t \equiv d\mu_{h_t}^\partial$ the volume form on $\partial \Sigma$ and by $h_t(\nabla u, \nabla v) \doteq h_t^{ij} \nabla_i u \nabla_j v$. Observe that, in the last equality, we used the Robin boundary condition as in Equation \eqref{Eq: Robin Boundary Conditions}. In other words we can rewrite Equation \eqref{Eq: KG equation} in the form
\begin{equation}\label{Eq: weak formulation 2}
\frac{d}{dt} \mathrm{K}(\partial_t u,v) + \mathrm{A}(u, v) = \mathrm{S}_f(v),
\end{equation}
where
\begin{itemize}
	\item[\ding{104}] the \textbf{kinetic form} 
	\begin{equation}
		\label{Eq: kinetic form}
		\mathrm{K}(u,v) \doteq \int_{\Sigma} \beta^{-\frac{1}{2}} u v \, d\mu_t.
	\end{equation}
	\item[\ding{104}] the \textbf{potential form}
\begin{equation}\label{Eq: potential form}
\mathrm{A}(u,v) \doteq \int_{\Sigma} \beta^\frac{1}{2} h_t(\nabla u, \nabla v) \, d\mu_t - \kappa \int_{\partial \Sigma} \beta^{\frac{1}{2}} u v \, d\sigma_t - m^2 \int_{\Sigma} \beta^{\frac{1}{2}} uv \, d\mu_t.
\end{equation}
	\item[\ding{104}] the \textbf{source term}
	\begin{equation}
		\label{Eq: source term}
		\mathrm{S}_f(v) \doteq - \int_{\Sigma} \beta^\frac{1}{2} f \, v \, d\mu_t.
	\end{equation}
\end{itemize}
We take Equation \eqref{Eq: weak formulation 2} as the weak formulation of Equation \eqref{Eq: KG equation}. Therefore, henceforth, we look for a solution $u\in L^2(I;H^1(\Sigma, d\mu_t))$ such that $\partial_t u\in L^2(I;L^2(\Sigma, d\mu_t))$, where Equation \eqref{Eq: weak formulation 2} holds in $\mathcal{D}^\prime(I)$ for every $v\in H^1(\Sigma)$. 

\subparagraph{Analytic Properties:} Before proving existence of a solution of the weak form of the Klein-Gordon equation, we establish continuity and coercivity estimates connected to the quadratic forms $\mathrm{A}$ and $\mathrm{K}$ whereas for the source term $\mathrm{S}_f$ we only require a continuity bound. To this end and for later convenience, we observe that, being $\beta$ and $m^2$ as in Equation \eqref{Eq: Effective mass} smooth and $D$ a compact subset, we can introduce the constants
\begin{equation}\label{Eq: Constants from beta}
C_0=\min\limits_{x\in D}\{\beta^{-\frac{1}{2}}, \beta^{\frac{1}{2}}\}\quad\textrm{and}\quad	C_1=\max\limits_{x\in D}\{\beta^{-\frac{1}{2}}, \beta^{\frac{1}{2}}\},
\end{equation}
\begin{equation}\label{Eq: Constants from the mass}
	M^2_0=\min\limits_{x\in D}\{|m^2(x)|\}\quad\textrm{and}\quad	M^2_1=\max\limits_{x\in D}\{|m^2(x)|\}.
\end{equation}
Adopting the notation $L^2_t \doteq L^2(\Sigma, d\mu_t)$, $H^1_t \doteq H^1(\Sigma, d\mu_{t})$  and $L^2_{\partial_t} \doteq L^2(\partial \Sigma, d\sigma_t)$ we observe that, Equation \eqref{Eq: kinetic form} can be bounded using the Cauchy-Schwartz inequality as
\begin{equation}
	\label{Eq: estimate for K}
	|\mathrm{K}(u,v) | \le C_1 \|u\|_{L^2_t} \, \|v\|_{L^2_t}, 
\end{equation}
whereas, for the same reason, each term in Equation \eqref{Eq: potential form} can be controlled as
\begin{flalign*}
	\bigg \vert \int_{\Sigma} \beta^\frac{1}{2} h_t(\nabla u, \nabla v) \, d\mu_t \bigg \vert &\leq C_1 \|\nabla u\|_{L^2_t} \, \|\nabla v\|_{L^2_t} \le C_1 \|u\|_{H^1_t} \, \|v\|_{H^1_t},\\
	\bigg \vert \kappa \int_{\partial \Sigma} \beta^{\frac{1}{2}} u v \, d\sigma_t \bigg \vert & \leq |\kappa| \, C_1 \|u\|_{L^2_{\partial_t}} \, \|v\|_{L^2_{\partial_t}} \leq C_\partial \|u\|_{H^1_t} \, \|v\|_{H^1_t},\\
	\bigg \vert m^2 \int_{\Sigma} \beta^{\frac{1}{2}} uv \, d\mu_t \bigg \vert & \leq C_1M^2_1 \|u\|_{L^2_t} \, \|v\|_{L^2_t} \leq C_1M^2_1 \|u\|_{H^1_t} \, \|v\|_{H^1_t}.
\end{flalign*}
In the last equality of the second line,  we have used that, being $\Sigma$ compact, the Lions-Magenes trace $$\gamma_0 \in \mathcal{B}(H^1(\Sigma, d\mu_t), H^\frac{1}{2}(\partial \Sigma, d\sigma_t)).$$ 
Combining this fact with $H^\frac{1}{2}(\partial \Sigma, d\sigma_t)$ being continuously embedded into $L^2_\partial$, we can infer that, for each $t\in I$, there exists $C_t>0$ such that $\|u\|_{L^2_\partial} \leq C_t \|u\|_{H^1_t}$. This entails that $C_\partial=C_1\sup\limits_{t\in I}(C_t)$. Hence, combining all these estimates, there exists $C^\prime_1>0$ for which the following continuity bound for $\mathrm{A}$ holds: 
\begin{equation}\label{Eq: estimate for A}
	|\mathrm{A}(u,v) | \le C_1^\prime \|u\|_{H^1_t} \, \|v\|_{H^1_t}. 
\end{equation}

Focusing instead on establishing lower bounds, we start from Equation \eqref{Eq: potential form} and, once more, we need to focus on each term separately, though setting $u=v$:  
\begin{enumerate}
\item The first and the third can be controlled once more by the Cauchy-Schwartz inequality:
\begin{equation*}
\int_\Sigma \beta^\frac{1}{2} |\nabla v|_{h_t}^2 \, d\mu_t \geq C_0 \|\nabla v\|^2_{L^2_t},
\end{equation*}
and 
\begin{equation*}
- m^2 \int_{\Sigma} \beta^{\frac{1}{2}} u^2\geq -M^2_0 C_1 \|v\|^2_{L^2_t}.
\end{equation*}
\item The boundary contribution can be controlled combining the Peter-Paul inequality with the boundedness of $\gamma_0$. This entails that, for every $\epsilon>0$, there exists $C_\epsilon>0$ such that  $\|v\|_{L^2_\partial}^2 \leq\epsilon \|\nabla v\|^2_{L^2_t} + C_{\epsilon} \|v\|_{L^2_t}^2$. Consequently, it holds
\begin{equation*}
	-\kappa \int_{\partial \Sigma} \beta^{\frac{1}{2}} v^2 \, d\sigma_t \ge - |\kappa| C_1   \|v\|_{L^2_\partial}^2 \ge - |\kappa| C_1  \left(  \epsilon \|\nabla v\|^2_{L^2_t} + C_{\epsilon}  \|v\|_{L^2_t}^2\right).
\end{equation*}
\end{enumerate}
Combining all these estimates together we end up with
\begin{equation}\label{Eq: lower bound A}
	\mathrm{A}(v,v) \geq  (C_0-|\kappa| C_1\epsilon) \|\nabla v\|_{L^2_t}^2 - C_1(M^2_0 +|\kappa| C_\epsilon) \|v\|^2_{L^2_t},
\end{equation}
which, choosing $\epsilon \leq \frac{C_0}{2 |\kappa| C_1}$ and setting $C_2 \doteq C_1(M^2_0 +|\kappa| C_\epsilon)$, yields the estimate
\begin{equation}\label{Eq: Garding estimate}
\mathrm{A}(v,v) \ge \frac{C_0}{2} \|\nabla v\|_{L^2_t}^2 - C_2 \|v\|^2_{L^2_t}.
\end{equation}
It is important to observe that we can compensate the negative contribution to Equation \eqref{Eq: Garding estimate} by adding to $A$ a multiple of the kinetic form $\mathrm K$ since $\mathrm{K}(v,v) \geq C_0 \|v\|_{L^2_t}^2$. In other words, choosing $\lambda=\frac{1}{2}+\frac{C_2}{C_0}$, we set
\begin{equation}\label{Eq: coercive estimate 1}
B(v,v):=\mathrm{A}(v,v) + \lambda \mathrm{K}(v,v)\geq\frac{C_0}{2} \|\nabla v\|_{L^2_t}^2 + (\lambda C_0 - C_2) \|v\|^2_{L^2_t}\geq \frac{C_0}{2} \|v\|_{H^1_t}^2,  
\end{equation}
which is the sought {\em coercivity estimate}. 

\paragraph{Step 2 -- Galerkin Approximation:} Having established a weak formulation of the underlying problem in Equation \eqref{Eq: weak formulation 2}, in this second step we employ the method of Galerkin, see \cite[Ch. 7.2]{Evans}, to establish first of all a sequence of approximating solutions, proving ultimately their convergence in a suitable topology. To this end, we introduce the following two-notable ingredients
\begin{itemize}
	\item[\ding{104}] Since $D$ is compact, we can consider a reference Riemannian metric on $\Sigma$ which we choose without loss of generality as $h_0\equiv h$, the one on $\{t_0\}\times\Sigma$. Hence, for all $t\in I=[t_0,t_1]$, it holds that $h_t \asymp h$. We recall that the symbol $\asymp$ entails that there exists two constants $0<d_0\leq d_1<\infty$ such that $d_0 h\leq h_t\leq d_1 h$ uniformly in $D$. In turn, this entails an equivalence  between the $L^2$- and $H^1$-norms on $\Sigma$ induced by $h_t$ and by $h$. Henceforth we employ the notation $L^2(\Sigma)\equiv L^2(\Sigma, d\mu_{t_0})$ and $H^1(\Sigma)\equiv H^1(\Sigma,d\mu_{t_0})$.
	\item[\ding{104}] We denote by $\Delta_N$ the self-adjoint extension on $L^2(\Sigma)$ of the Laplace-Beltrami operator $\Delta_h$ induced by $h$ with Neumann boundary conditions on $\partial\Sigma$. The associated quadratic form is 
\begin{equation}\label{Eq: quadratic form Neumann Laplacian}
\mathfrak{q}_N (v) \doteq \int_{\Sigma} |\nabla v|^2_h d\mu_h,
\end{equation}
with $v$ lying in the domain of $\mathfrak{q}_N$, which coincides with $H^1(\Sigma)$.
\end{itemize}

\begin{remark}
It is worth mentioning that one could have chosen also a self-adjoint extension of $\Delta_h$ with a fixed Robin boundary condition on $(\Sigma,h)$, leading to no appreciable difference in the upcoming discussion since the domain of the underlying quadratic form would still be $H^1(\Sigma)$. On the contrary, we cannot choose Dirichlet boundary conditions since the domain of the counterpart of Equation \eqref{Eq: quadratic form Neumann Laplacian} would be $H^1_0(\Sigma)$. This is not suitable for our purposes since it would lead to constructing solutions vanishing at $\partial\mcM$, which is incompatible with Equation \eqref{Eq: Robin Boundary Conditions}. 
\end{remark}

\begin{itemize}
	\item[\ding{104}] Since $\Sigma$ is compact $-\Delta_N$ has pure point spectrum and therefore there exists a basis of $L^2(\Sigma)$ built out of eigenvectors of the Neumann Laplacian, namely $\{\psi_i\}_{i\in\mathbb{N}}$ with $\psi_i\in H^k(\Sigma)$ for all $k\in\mathbb{N}\cup\{0\}$ and $-\Delta_N\psi_i=\lambda_i\psi_i$ with $\lambda_i\geq 0$ for all $i\in\mathbb{N}$. Henceforth we consider an arbitrary but fixed ordering of the eigenvectors and we denote by $V_N\doteq\mathrm{span}_{\bR}\{\psi_1,\dots,\psi_N\}$. Observe that $\overline{\bigcup\limits_{N\in\mathbb{N}}V_N}=L^2(\Sigma)$ and we introduce the orthogonal projector
\begin{equation*}
P_N:L^2(\Sigma)\to V_N.
\end{equation*}
\end{itemize}

We can start with the approximation procedure setting $u_N \doteq \sum_{j=1}^N c_j(t) \psi_j$ and inserting in Equation \eqref{Eq: weak formulation 2}:
\begin{equation}\label{Eq: weak formulation N}
\frac{d}{dt} \mathrm{K}(\partial_t u_N, v) + \mathrm{A}(u_N, v) = \mathrm{S}_f(v).
\end{equation}
Here $v$ is chosen also to lie in $V_N$, following Galerkin procedure. Denoting by $\mathsf{K},\mathsf{A}$ and $\mathsf{S}_f$ the time-dependent matrices whose entries are
$$\mathsf{K}_{ij}(t) \doteq \mathrm{K} (\psi_i, \psi_j), \quad \mathsf{A}_{ij}(t) \doteq \mathrm{A}(\psi_i, \psi_j), \quad \mathsf{S}_{f,i} (t)\doteq \mathrm{S}_f(\psi_i),$$ 
we observe that, in view of Equations \eqref{Eq: kinetic form} and \eqref{Eq: Constants from beta}, $\mathsf{K}>0$, hence it is invertible. Consequently, Equation \eqref{Eq: weak formulation N} becomes
\begin{equation*}
	\mathsf{K} (t) \ddot{c}(t) + \dot{\mathsf{K}}(t) \dot{c}(t) + \mathsf{A}(t) c(t) = \mathsf{S}_f(t)\Longrightarrow\ddot{c}(t) + \mathsf{K}^{-1} (t) \dot{\mathsf{K}}(t) \dot{c}(t) + \mathsf{K}^{-1} (t) \mathsf{A}(t) c(t) = \mathsf{K}^{-1} (t) \mathsf{S}_f(t)
\end{equation*}
This is a system of $N$ second order, ordinary differential equations, admitting by Picard-Lindel\"of thereom a smooth solution $c \in C^\infty(I, \mathbb{R}^N)$ for prescribed initial data. Comparing with Equation \eqref{Eq: Cauchy problem on non-static spacetime}, these are chosen starting from $u_0,u_1\in C^\infty_0(\mathring{\Sigma})$ and replacing them with $P_N(u_0),P_n(u_1)\in V_N$. For later convenience observe that 
\begin{equation*}
\lim_{N \to \infty} \|P_N u_0 - u_0\|_{H^1} = 0, \quad \textrm{and} \quad \lim_{N \to \infty} \|P_N u_1 - u_1\|_{L^2} = 0,
\end{equation*}
where the same holds true replacing $L^2,H^1$ with their counterparts $L^2_t,H^1_t$.

\subparagraph{Boundedness of the sequences $u_N,\partial_t u_N$:} Equation \eqref{Eq: coercive estimate 1} suggests to consider the following ($N$-){\em finite energy}
\begin{equation}\label{Eq: finite energy}
\mathsf{E}_N [u_N](t) \doteq \frac{1}{2} \mathsf{K}(\partial_t u_N, \partial_t u_N) + \frac{1}{2} \mathsf{B}(u_N, u_N), 
\end{equation}
where $u_N (t) \in V_N$ for any but fixed $t \in [t_0, t_1]$ and where $\mathsf{B}=\mathsf{A}+\lambda\mathsf{K}$. Equations \eqref{Eq: estimate for K} and \eqref{Eq: estimate for A} can be combined to obtain an estimate from above, namely there exists $\mathfrak C > 0$ such that 
\begin{equation}\label{Eq: estimate of the finite energy from above}
\mathsf{E}_N [u](t)\leq \mathfrak C (\|\partial_t u_N\|^2_{L^2_t} + \|u_N\|^2_{H^1_t}). 
\end{equation}
To find a suitable coercivity estimate bounding the finite energy from below, first of all we recall that
$$\mathsf{K}(\partial_t u_N, \partial_t u_N) \geq C_0 \|\partial_t u_N\|_{L^2_t}^2,$$
which, combined with Equation \eqref{Eq: coercive estimate 1}, entails that there exists $\mathfrak c>0$  such that
\begin{equation}\label{Eq: estimate of the finite energy from below}
	\mathsf{E}_N [u_N](t)  \ge \mathfrak{c} (\|\partial_t u_N\|^2_{L^2_t} + \|u_N\|^2_{H^1_t}). 
\end{equation}
Expanding the time derivative in Equation \eqref{Eq: weak formulation N}, we can rewrite it as
\begin{equation}\label{Eq: Aux2}
	\mathsf{K}(\partial_t^2 u_N, v) + \dot{\mathsf{K}} (\partial_t u_N, v) + \mathsf{A}(u_N, v) = \mathsf{S}_f(v), \qquad \dot{\mathsf{K}}(u,v) \doteq \int_{\Sigma} \frac{\partial_t(\beta^{-\frac{1}{2}}\sqrt{|h|_t})}{\sqrt{|h|_t}} u \, v \, d\mu_t, 
\end{equation}
Since $v\in V_N$, we an set $v \doteq \{\partial_t u_N (t)\}_{t \in \mathbb{R}} \in V_N$ and, using the identities
\begin{flalign*}
	\frac{d}{dt} \mathsf{K}(\partial_t u_N, \partial_t u_N) = 2 \mathsf{K}(\partial_t^2 u_N, \partial_t u_N) + \dot{\mathsf{K}}(\partial_t u_N, \partial_t u_N),\;	\frac{d}{dt} \mathsf{A}(u_N, u_N) = 2 \mathsf{A}(u_N, \partial_t u_N) + \dot{\mathsf{A}}(u_N, u_N),
\end{flalign*}
Equation \eqref{Eq: Aux2} can rewritten as
\begin{equation*}
	\frac{1}{2} \frac{d}{dt} \left[ \mathsf{K}(\partial_t u_N, \partial_t u_N) + \mathsf{A} (u_N, u_N) \right] = \mathsf{S}_f(\partial_t u_N) + \frac{1}{2} \dot{\mathsf{A}}(u_N, u_N) - \frac{1}{2} \dot{\mathsf{K}}(\partial_t u_N, \partial_t u_N). 
\end{equation*}
From Equation \eqref{Eq: finite energy}, we get
\begin{equation}
	\label{Eq: time derivative of finite energy}
	\frac{d}{dt} \mathcal{E}_N [u] (t) = \mathrm{S}_f(\partial_t u_N) + \frac{1}{2} \dot{\mathrm{A}}(u_N, u_N) - \frac{1}{2} \dot{\mathrm{K}}(\partial_t u_N, \partial_t u_N) + \lambda \mathrm{K}(u_N, \partial_t u_N) + \frac{1}{2} \lambda \dot{\mathrm{K}}(u_N, u_N). 
\end{equation}
We shall now derive a suitable upper bound for the right-hand side of Equation \eqref{Eq: time derivative of finite energy} and, to this end, set
\begin{equation*}
	C_2=\max\limits_{x\in D}\left\{\frac{\partial_t(\beta^{-\frac{1}{2}}\sqrt{|h_t|})}{\sqrt{|h_t|}},\frac{\partial_t(\beta^\frac{1}{2} \sqrt{|h_t|})}{\sqrt{|h_t|}},\right\}
\end{equation*}
Having already continuity bounds on both $\mathsf{K}$ and $\mathsf{A}$, we need to focus only on $\dot{\mathsf{K}}$ and $\dot{\mathsf{A}}$. Starting from the former, the Cauchy-Schwartz and Young inequalities entail
\begin{equation*}
	|\dot{\mathsf{K}}(u_N,v_N) |\leq C_2 \|u_N\|_{L^2_t} \|v_N\|_{L^2_t} \leq \frac{C_2}{2} \left(\|u_N\|^2_{L^2_t} + \|v_N\|^2_{L^2_t} \right),.
\end{equation*}
Focusing on $\dot{\mathsf{A}}(u_N,v_N)$, with reference to Equation \eqref{Eq: potential form}, we consider first the mass and boundary contributions:
\begin{flalign*}
	\bigg \vert \left(m^2 \int_{\Sigma} \frac{1}{\sqrt{|h_t|}}\partial_t(\beta^\frac{1}{2} \sqrt{|h_t|}) u_N \, v_N \, d\mu_t \right) \bigg \vert &\leq C_2 M^2_1  \|u_N\|_{H^1_t} \|v_N\|_{H^1_t}\\
	\bigg \vert \left(\kappa \int_{\Sigma} \frac{\partial_t(\beta^\frac{1}{2}\sqrt{|h_{\partial,t}|})}{\sqrt{|h_{\partial,t}|}} u_N \, v_N \, d\sigma_t \right) \bigg \vert & \leq C_{2,\partial} |\kappa| \, \|u_N\|_{H^1_t} \, \|v_N\|_{H^1_t},  
\end{flalign*}
where $|h_{\partial,t}|$ is the determinant of the metric $h_t$ pulled-back to $\partial\Sigma_t$, while $C_{2,\partial}>0$ is a constant obtained combining $C_2$ with the same estimate leading to Equation \eqref{Eq: estimate for A}. The last contribution coming from Equation \eqref{Eq: potential form} takes the form
\begin{flalign*}
\int_{\Sigma} \frac{1}{\sqrt{|h_t|}}\partial_t(\beta^\frac{1}{2} \sqrt{|h_t|}) h_t(\nabla u_N, \nabla v_N) d\mu_t  + \int_{\Sigma} \beta^\frac{1}{2} \, \partial_t h_t (\nabla u_N, \nabla v_N) d\mu_t,
\end{flalign*}
where $\partial_t h_t (\nabla u_N, \nabla v_N)= h_t(\nabla u_N, H \nabla v_N)$ with $H_l^j \doteq (h_t)_{lk} (\partial_t h_t)^{kj}$. Since $H$ is uniformly bounded on $D$, we can infer that also the absolute value of this last bit of $\dot{\mathsf{A}}(u_N,v_N)$ can be controlled in terms of the product $\|u_N\|_{H^1_t} \|v_N\|_{H^1_t}$. Putting together all these estimates, we can infer that there exists $\tilde{C}_2>0$ such that
\begin{equation}\label{Eq: Estimate on dotA}
	|\dot{\mathsf{A}}(u_N,v_N)| \leq \tilde{C}_2 \|u_N\|_{H^1_t} \|v_N\|_{H^1_t}. 
\end{equation}
At last we consider the source term which can be bounded from above by 
\begin{equation}
	|\mathsf{S}_f(v_N)| \le C_1\|f\|_{L^2_t} \|v_N\|_{L^2_t} \leq \frac{1}{2} C_1 \left( \|f\|^2_{L^2_t} + \|v_N\|^2_{L^2_t} \right). 
\end{equation}
Putting together all these results we end up with the existence of a constant $C_{\mathrm{max}}>0$ such that
\begin{equation}\label{Eq: eq 0}
\bigg \vert \frac{d}{dt} \mathsf{E}_N[u_N](t) \bigg \vert \leq C_{\mathrm{max}} \left[ \|f\|^2_{L^2_t} + \|\partial_t u_N\|^2_{L^2_t} + \|u_N\|^2_{L^2_t} \right]\le C \|f\|^2_{L^2_t} + \frac{C_{\max}}{\mathfrak{c}} \mathsf{E}_N[u_N](t).,
\end{equation}
where we used the coercive estimate in Equation \eqref{Eq: estimate of the finite energy from below}. Applying now Gr\"onwall lemma and recalling Equation \eqref{Eq: estimate of the finite energy from below}, it descends that, for all $t\in I$
\begin{equation*}
 \mathfrak{c} (\|\partial_t u_N\|^2_{L^2_t} + \|u_N\|^2_{H^1_t})\leq\mathsf{E}_N[u_N](t) \leq\mathsf{E}_N[u_N](t_0) + \int_{t_0}^t \|f\|^2_{L^2_s} \, ds,
\end{equation*}
where the right hand side is a finite quantity being the initial data $P_N(u_0),P_N(u_1)\in V_N$ and being convergent to $u_0,u_1$, respectively. Since all estimates are uniform in the time variable, we can summarize the outcome of the preceding analysis in the following lemma. 

\begin{lemma}\label{Lem: uN and derivative are bounded}
Let $u_N$ be a solution of Equation \eqref{Eq: weak formulation N} and let $\mathsf{E}_N[u_N]$ be the corresponding finite-energy as per Equation \eqref{Eq: finite energy}. It holds that, given $I=[t_0,t_1]$,
\begin{equation*}
\sup_{t \in I} \sup_{N \in \mathbb{N}} \mathsf{E}_N[u_N](t) < \infty,
\end{equation*}
which implies, in turn, that 
\begin{equation*}
\sup_{t \in I} \sup_{N \in \mathbb{N}} \left(\|\partial_t u_N\|^2_{L^2_t} + \|u_N\|^2_{H^1_t} \right) < \infty,
	\end{equation*}
namely, $u_N \in L^\infty(I; H^1_t)$ and $\partial_t u_N \in L^\infty(I; L^2_t).$
\end{lemma}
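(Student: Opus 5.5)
The plan is to pass the Grönwall bound derived just above to the supremum over $N$ and over $t$. The decisive point is that every constant produced so far is independent of $N$. Indeed, $C_0,C_1,C_2,M_0^2,M_1^2$ depend only on $\beta$, $h_t$ and $m^2$ on the compact strip $D$. The trace constant $C_\partial$ and the Peter--Paul constant $C_\epsilon$ depend only on $(\Sigma,h_t)$, uniformly in $t\in I$ because of the equivalence $h_t\asymp h$. The constant $\lambda$ in $\mathsf B=\mathsf A+\lambda\mathsf K$ is built out of these. Hence $\mathfrak c$, $\mathfrak C$ and $C_{\max}$ in Equations \eqref{Eq: estimate of the finite energy from above}, \eqref{Eq: estimate of the finite energy from below} and \eqref{Eq: eq 0} are uniform in $N$ and in $t$. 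I would first state this explicitly. Then I would apply Grönwall's lemma to the differential inequality in Equation \eqref{Eq: eq 0}, which gives for all $t\in I$
$$\mathsf E_N[u_N](t)\le e^{\frac{C_{\max}}{\mathfrak c}(t_1-t_0)}\Bigl(\mathsf E_N[u_N](t_0)+C\int_{t_0}^{t_1}\|f\|^2_{L^2_s}\,ds\Bigr).$$
The source integral is finite and independent of $N$, since $f$ is smooth and compactly supported and $L^2_s$ is equivalent to $L^2(\Sigma)$.

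The second step is the uniform control of the initial energy. By Equation \eqref{Eq: estimate of the finite energy from above},
$$\mathsf E_N[u_N](t_0)\le\mathfrak C\bigl(\|P_Nu_1\|^2_{L^2_{t_0}}+\|P_Nu_0\|^2_{H^1_{t_0}}\bigr).$$
Because $P_Nu_0\to u_0$ in $H^1$ and $P_Nu_1\to u_1$ in $L^2$, both sequences are convergent and therefore bounded in $N$. Alternatively, one can argue directly: $P_N$ is an $L^2$-orthogonal projector onto eigenspaces of $-\Delta_N$, so it commutes with $(1-\Delta_N)^{1/2}$. Since the form domain of $\Delta_N$ is $H^1(\Sigma)$, this gives $\|P_Nu_0\|_{H^1}\le\|u_0\|_{H^1}$ up to the norm-equivalence constants. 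Combining the two steps yields $\sup_{t\in I}\sup_N\mathsf E_N[u_N](t)<\infty$.

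Finally, the coercive lower bound in Equation \eqref{Eq: estimate of the finite energy from below}, with $\mathfrak c$ uniform, gives
$$\sup_{t\in I}\sup_N\bigl(\|\partial_tu_N\|^2_{L^2_t}+\|u_N\|^2_{H^1_t}\bigr)\le\mathfrak c^{-1}\sup_{t\in I}\sup_N\mathsf E_N[u_N](t)<\infty.$$
Since the $t$-dependent norms are uniformly equivalent to the fixed ones, this says that $\{u_N\}$ is bounded in $L^\infty(I;H^1)$ and $\{\partial_tu_N\}$ is bounded in $L^\infty(I;L^2)$.

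I expect the only delicate point to be the justification that the coercivity constant is uniform in $t$. That constant comes from the $\epsilon$-trace inequality, whose constant $C_\epsilon$ a priori depends on $h_t$. I would handle this by proving the trace inequality once for the fixed metric $h$. I would then transfer it to $h_t$ using $d_0h\le h_t\le d_1h$ together with the uniform bounds on the volume densities $d\mu_t$ and $d\sigma_t$ over $D$.
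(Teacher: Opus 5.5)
Your proposal is correct and follows essentially the same route as the paper: Gr\"onwall's lemma applied to the differential inequality \eqref{Eq: eq 0}, an $N$-uniform bound on $\mathsf{E}_N[u_N](t_0)$ coming from the convergence $P_Nu_0\to u_0$ in $H^1$ and $P_Nu_1\to u_1$ in $L^2$, and finally the coercive lower bound \eqref{Eq: estimate of the finite energy from below}. What you add is explicit bookkeeping that the paper leaves implicit, namely the $N$- and $t$-independence of the constants (including the trace constant that enters $\lambda$) and the exponential factor produced by Gr\"onwall.
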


\noindent At this stage we can exploit the equivalence between the $L^2$- and $H^1$-norms on $\Sigma$ induced by $h_t$ and by $h$ to conclude that  
$$u_N \in L^\infty(I; H^1(\Sigma))\quad\textrm{and}\quad\partial_t u_N \in L^\infty(I; L^2(\Sigma)).$$
Varying over $N$, we have that boundedness of the sequences $u_N,\partial_t u_N$, $N\in\mathbb{N}$, entails that we can apply the Banach-Alaoglu-Bourbaki theorem concluding that we can extract subsequences, still denoted by the same symbols with a slight abuse of notation, which are weakly $\ast$-converging:
\begin{equation*}
u_N \overset{\ast}{\rightharpoonup} u\in L^\infty(I;H^1(\Sigma)),\quad\textrm{and}\quad \partial_t u_N \overset{\ast}{\rightharpoonup} v\in L^\infty(I;L^2(\Sigma)).
\end{equation*}

\noindent \emph{A priori} there is no connection between $u$ and $v$ and the following corollary established its existence.

\begin{corollary}\label{Cor: Convergence of Galerkin}
Let $\{u_N\}_{N \in \mathbb{N}} \subset L^\infty(I; H^1_t)$ be as in Lemma \ref{Lem: uN and derivative are bounded} and let $u\in L^\infty(I;H^1(\Sigma))$ and $v\in L^\infty(I;L^2(\Sigma))$ be the weak $*$-limits of $u_N$ and $\partial_t u_N$ respectively. It holds that $v = \partial_t u$ in $\mathcal{D}^\prime(I;L^2(\Sigma))$ and that $u\in C^0(I;L^2(\Sigma))$.
\end{corollary}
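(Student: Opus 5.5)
To identify $v$ with $\partial_t u$ I will pass to the limit in the distributional identity that links $u_N$ and $\partial_t u_N$, using test functions of the product form $\phi(t)w$ with $\phi\in C^\infty_0(\mathring I)$ and $w\in L^2(\Sigma)$. For each $N$, $u_N\in C^\infty(I;V_N)$, so integration by parts in time gives
$$\int_I (\partial_t u_N(t),w)_{L^2(\Sigma)}\,\phi(t)\,dt=-\int_I (u_N(t),w)_{L^2(\Sigma)}\,\phi^\prime(t)\,dt .$$
Both maps $t\mapsto \phi(t)w$ and $t\mapsto \phi^\prime(t)w$ lie in $L^1(I;L^2(\Sigma))$. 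This space sits inside the predual of $L^\infty(I;L^2(\Sigma))$, and since $H^1(\Sigma)\hookrightarrow L^2(\Sigma)$ continuously, $u_N\overset{\ast}{\rightharpoonup}u$ in $L^\infty(I;H^1(\Sigma))$ also gives weak-$\ast$ convergence in $L^\infty(I;L^2(\Sigma))$. So both sides pass to the limit along the extracted subsequence of Lemma \ref{Lem: uN and derivative are bounded}, yielding
$$\int_I (v,w)\phi\,dt=-\int_I (u,w)\phi^\prime\,dt$$
for all such $\phi$ and $w$. Linear combinations of products $\phi\otimes w$ are dense in $C^\infty_0(\mathring I;L^2(\Sigma))$, and $L^2(\Sigma)$ is a Hilbert space identified with its dual, so this is exactly the statement $\partial_t u=v$ in $\mathcal D^\prime(I;L^2(\Sigma))$. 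The weights $\beta$ and $\sqrt{|h_t|}$ play no role here because we work with the fixed reference measure $d\mu_{t_0}$, equivalent to $d\mu_t$ uniformly on $D$.

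For continuity, I will use that $u\in L^\infty(I;L^2(\Sigma))\subset L^2(I;L^2(\Sigma))$ and $\partial_t u=v\in L^\infty(I;L^2(\Sigma))\subset L^1(I;L^2(\Sigma))$. Hence $u\in W^{1,1}(I;L^2(\Sigma))$, and the standard vector-valued result (e.g.\ \cite[Sec.~5.9, Thm.~2]{Evans}) says that, after modification on a null set of times, $u(t)=u(t_0)+\int_{t_0}^t v(s)\,ds$ as a Bochner integral. This gives $u\in C^0(I;L^2(\Sigma))$, and in fact $u$ is Lipschitz in time with constant $\|v\|_{L^\infty(I;L^2)}$. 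The one step needing care is this: Bochner measurability of the weak-$\ast$ limits is guaranteed because $L^2(\Sigma)$ and $H^1(\Sigma)$ are separable, so $L^\infty(I;X)$ is indeed the dual of $L^1(I;X^\prime)$.

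I do not expect a genuine obstacle. The only point to state explicitly is that the subsequences for $u_N$ and $\partial_t u_N$ must be the same one. I will arrange this by extracting first for $u_N$ and then refining for $\partial_t u_N$. Since $v$ is determined by $u$ through $v=\partial_t u$, it is then independent of the subsequence chosen.
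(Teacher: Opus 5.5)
Your proposal is correct and follows the paper's own argument. You test against $\phi(t)w$, integrate by parts at the Galerkin level, and pass to the weak-$\ast$ limit to obtain $\partial_t u=v$. You then deduce (Lipschitz) continuity of $u$ in $L^2(\Sigma)$ from $u,\partial_t u\in L^\infty(I;L^2(\Sigma))$, which is the paper's step $W^{1,\infty}(I;L^2(\Sigma))\subset C^{0,1}(I;L^2(\Sigma))$. Your remarks on using a common subsequence and on the predual are clarifications the paper leaves implicit.
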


\begin{proof}
For all $\rho \in C^\infty_0(I)$ and $h \in L^2(\Sigma)\subset H^{-1}(\Sigma)$, integration by parts yields
	\begin{equation*}
		\int_I \langle \partial_s u_N (s), h \rangle \rho(s) \, ds = - \int_{I} \langle u_N(s), h \rangle \dot{\rho}(s) \, ds.
	\end{equation*}
	Taking the limit as $N \to \infty$ on both sides, we obtain 
	\begin{equation*}
		\int_I \langle v(s) ,h \rangle \rho(s) \, ds = - \int_{I} \langle u(s), h \rangle \dot{\rho}(s) \, ds = \int_I \langle \partial_s u(s), h \rangle \rho(s) \, ds.
	\end{equation*}
	This entails that $v = \partial_t u$ in $\mathcal{D}'(I, L^2(\Sigma))$. Since per hypothesis $v \in L^\infty(I; L^2(\Sigma))$, it follows that $u \in W^{1, \infty} (I, L^2(\Sigma))\subset C^{0,1}(I; L^2(\Sigma))$. Hence, $\partial_t u$ exists for almost all values of $t\in I$ and it coincides with $v(t)$. 
\end{proof}

\noindent To conclude this step, we need to prove that the limit function constructed in Lemma \eqref{Lem: uN and derivative are bounded} is still a solution of the weak formulation of the Klein-Gordon equation, in a sense specified by the following lemma.

\begin{lemma}\label{Lem: Weak solution to the problem}
Let $u \in C^0(I; L^2(\Sigma))$ be constructed as in Corollary \ref{Cor: Convergence of Galerkin}. Then, it is a solution of Equation \eqref{Eq: weak formulation 2} in $\mathcal{D}'(I; H^{-1}(\Sigma))$. 
\end{lemma}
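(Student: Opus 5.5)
The plan is to pass to the limit $N\to\infty$ in the Galerkin identity \eqref{Eq: weak formulation N}, tested first against a fixed basis element and then extended by density. Fix $M\in\mathbb{N}$, $\psi_i$ with $i\le M$, and $\rho\in C^\infty_0(I)$. For every $N\ge M$ we have $\psi_i\in V_N$, so \eqref{Eq: weak formulation N} holds with $v=\psi_i$. Multiplying by $\rho$ and integrating by parts in time gives
$$-\int_I \mathrm{K}(\partial_s u_N,\psi_i)\,\dot\rho\,ds+\int_I \mathrm{A}(u_N,\psi_i)\,\rho\,ds=\int_I \mathrm{S}_f(\psi_i)\,\rho\,ds .$$
No boundary terms appear, since $\rho$ has compact support in the interior of $I$.

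Next we pass to the limit in each term using the weak-$\ast$ convergences from Lemma~\ref{Lem: uN and derivative are bounded} and Corollary~\ref{Cor: Convergence of Galerkin}.

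\emph{Kinetic term.} We have $\mathrm{K}(\partial_s u_N,\psi_i)=\int_\Sigma \partial_s u_N\,\beta^{-1/2}\sqrt{|h_s|/|h_{t_0}|}\,\psi_i\,d\mu_{t_0}$. Hence the map $w\mapsto\int_I \mathrm{K}(w,\psi_i)\dot\rho\,ds$ is given by pairing against an element of $L^1(I;L^2(\Sigma))$. This uses the smoothness of $\beta$, $h_s$ on the compact set $D$ and the equivalence $h_t\asymp h$. Weak-$\ast$ convergence of $\partial_t u_N$ to $\partial_t u$ in $L^\infty(I;L^2(\Sigma))$ then lets us pass to the limit.

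\emph{Potential term.} The same argument applies to $\mathrm{A}(u_N,\psi_i)$. The gradient and bulk mass parts are pairings of $u_N$ with an $L^1(I;H^1(\Sigma)^*)$ element, with coefficients smooth in $(t,x)$. The boundary term is handled through the continuity of the trace $\gamma_0\colon H^1(\Sigma)\to L^2(\partial\Sigma)$, already used in \eqref{Eq: estimate for A}. So $v\mapsto\int_I \mathrm{A}(v,\psi_i)\rho\,ds$ is a bounded linear functional on $L^1(I;H^1(\Sigma))$ coming from an element of its dual predual, and weak-$\ast$ convergence of $u_N$ in $L^\infty(I;H^1(\Sigma))$ suffices.

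The source side does not depend on $N$. The result is the limiting identity for $v=\psi_i$, for every $i$ and every $\rho$.

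To extend to arbitrary test functions, observe that the limiting identity is linear and continuous in $v$ with respect to the $H^1(\Sigma)$ norm, uniformly in $s$. This uses \eqref{Eq: estimate for K}, \eqref{Eq: estimate for A}, a bound on $\mathrm{S}_f$, and $u\in L^\infty(I;H^1)$, $\partial_t u\in L^\infty(I;L^2)$. The span of the Neumann eigenfunctions $\{\psi_i\}$ is dense in $H^1(\Sigma)$, since $H^1$ is the form domain of $\Delta_N$ and $P_N$ converges in the form norm. The identity therefore holds for all $v\in H^1(\Sigma)$. This is exactly the statement that $\frac{d}{dt}\mathrm{K}(\partial_t u,\cdot)+\mathrm{A}(u,\cdot)=\mathrm{S}_f$ in $\mathcal{D}'(I;H^{-1}(\Sigma))$, where the functionals are identified with elements of $H^{-1}(\Sigma)=H^1(\Sigma)^*$.

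The main obstacle is the time-dependence of the forms. Because of it, the weak-$\ast$ limits must be taken against test objects carrying $(t,x)$-dependent weights, not fixed vectors. One has to check carefully that these weights, such as $\beta^{\pm1/2}$, the $t$-dependent volume ratios, $h_t^{ij}$ and the boundary density, yield elements of the appropriate predual $L^1(I;H^{1}(\Sigma)^*)$ or $L^1(I;L^2(\Sigma))$. This is where compactness of $D$ and the uniform equivalence $h_t\asymp h$ are essential.
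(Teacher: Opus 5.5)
Your proposal is correct and follows the paper's proof. Both arguments test the Galerkin identity against a fixed $v\in V_{N_0}$, which is admissible for every $N\ge N_0$, integrate by parts against $\rho\in C^\infty_0(I)$, and pass to the limit using the weak-$\ast$ convergences of $u_N$ and $\partial_t u_N$. Both then conclude from the density of $\bigcup_N V_N$ in $H^1(\Sigma)$.

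Your identification of the time-dependent forms with elements of the preduals $L^1(I;L^2(\Sigma))$ and $L^1(I;H^1(\Sigma)^*)$ spells out what the paper compresses into ``continuity of the forms combined with weak-$\ast$ convergence''.
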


\begin{proof}
For each $N \in \mathbb{N}$ and for all $v\in V_N$, it holds that
\begin{equation}\label{Eq: aux VN}
\frac{d}{dt} \mathsf{K}(\partial_t u_N,v) + \mathsf{A}(u_N, v) = \mathsf{S}_f(v).
\end{equation}
Observe that, if $v \in V_{N_0}$ for any but fixed $N_0 \in \mathbb{N}$, then $v \in V_N$ for all $N \ge N_0$. Testing Equation \eqref{Eq: aux VN} against $\rho\in C^\infty_0(I)$ yields, upon integration by parts,
\begin{gather*}
\int_I  \frac{d}{ds} \mathsf{K}(\partial_s u_N,v) \, \rho(s) \, ds + \int_I [\mathsf{A}(u_N, v) - \mathsf{S}_f(v)] \, \rho(s) \, ds =\\
-\int_I  \mathsf{K}(\partial_s u_N,v) \, \dot{\rho}(s) \, ds+ \int_I [\mathsf{A}(u_N, v) - \mathsf{S}_f(v)] \, \rho(s) \, ds=0.
\end{gather*}
This last identity holds true for all $v\in\bigcup\limits_{N\in\mathbb{N}}V_N$ and, in view of the continuity of the forms $\mathsf{K}, \mathsf{A}$ and of the source term $\mathsf{S}_f$ combined with the weak-$*$ convergence of the sequences $u_N$ and $\partial_t u_N$, we can take the limit as $N\to\infty$
\begin{equation}\label{Eq: aux VN 1}
- \int_I  \mathrm{K}(\partial_s u,v) \, \dot{\rho}(s) \, ds +\int_I [\mathrm{A}(u, v) - \mathrm{S}_f(v)] \, \rho(s) \, ds = 0.
\end{equation}
Recalling that the union of the spaces $V_N$ is dense in $H^1(\Sigma)$, the desired result follows once again by continuity.
\end{proof}

\begin{remark}\label{Rem: Galerkin Final Nail}
The initial conditions are inherited from the overall procedure. Although weak-$*$ convergence alone does not preserve pointwise values of the solution in time, the uniform bound on $u_N$ and on $\partial_tu_N$ yields uniform continuity of $u_N$ in $L^2(\Sigma)$. Together with $u_N(t_0)=P_N(u_0)$ converging to $u_0$, this implies $u(t_0)=u_0$. We do not discuss the implementation of $u_1$ at this stage since we have only proven continuity of the solution in time. Similarly, the Robin boundary condition is encoded in the weak formulation through the boundary term entering the quadratic form $\mathrm{A}$. Once we will prove smoothness of the solution, Green's identity allows one to recover $(\nabla_n+\kappa)u \vert_{\partial\mcM}=0$.
\end{remark}

\paragraph{Step 3 -- Improving Regularity:} We shall now prove that the solution constructed in Step 2 is actually smooth. We start by improving the regularity along the time direction, considering the spatial ones only at a later stage. As a preliminary observation, we recall that we are considering smooth and compactly supported source terms $f$ such that $\textrm{supp}(f)\subset D$ and, without loss of generality, we can engineer $D$ so that, the exists $\varepsilon>0$ for which $f = 0$ on $[t_0, t_0+ \epsilon]\times \Sigma$. 

\subparagraph{Enhancing the regularity in the $t$-variable:} This is a recursive procedure. We discuss here the first iteration, commenting on the others at the end of the analysis, highlighting only the minor differences which occur. We start from the weak formulation and, in particular, following the Galerkin approximation, $u_N$ is a solution of Equation \eqref{Eq: weak formulation N}. Taking and expanding the time derivative of this identity we obtain  
\begin{equation}\label{Eq: Delta1}
\mathsf{K}(\partial_t^3 u_N, v) + 2\dot{\mathsf{K}}(\partial_t^2 u_N, v) + \ddot{\mathsf{K}}(\partial_t u_N, v) = \mathsf{S}_{\dot f}(v) - \mathsf{A}(\partial_t u_N, v) -  \dot{\mathsf{A}}(u_N, v), 
\end{equation}
where 
$$\ddot{\mathsf{K}}(\partial_t^2 u_N, v)=\int_{\Sigma} \frac{\partial^2_t(\beta^{-\frac{1}{2}}\sqrt{|h|_t})}{\sqrt{|h|_t}} u \, v \, d\mu_t. $$
Mimicking Step 2 we consider the updated energy functional
\begin{equation}\label{Eq: Updated Finite Energy}
\mathsf{E}_{N,1}[u_N](t) \doteq \frac{1}{2} \mathsf{K}(\partial_t^2 u_N, \partial_t^2 u_N) + \frac{1}{2} \mathsf{B}(\partial_t u_N, \partial_t u_N),
\end{equation}
where $\mathsf{B}=\mathsf{A}+\lambda\mathsf{K}$. We can repeat the same analysis leading first to Equation \eqref{Eq: coercive estimate 1} and then to Equation \eqref{Eq: estimate of the finite energy from below} with $u_N$ replaced by $\partial u_N$. Recall that the former is a finite linear combination of eigenfunctions of the Neumann Laplacian with coefficients which are smooth in time. Hence there exists a positive constant $\mathfrak{c}_1>0$ such that 
\begin{equation}\label{Eq: Coercive Estimate Updated Energy}
\mathsf{E}_{N,1}[u_N](t) \geq\mathfrak{c}_1\left(\|\partial_t^2 u_N\|^2_{L^2_t} + \|\partial_t u_N\|^2_{H^1_t}\right).
\end{equation}
Setting $v = \partial_t^2 u_N$ in Equation \eqref{Eq: Delta1} and taking the time derivative of Equation \eqref{Eq: Updated Finite Energy}, we obtain
\begin{flalign*}
	\frac{d}{dt} \mathsf{E}_{N,1}[u](t) &= -\frac32 \dot{\mathsf{K}}(\partial_t^2 u_N, \partial_t^2 u_N) + \frac{1}{2} \dot{\mathsf{A}} (\partial_t u_N, \partial_t u_N) + \lambda \mathsf{K}(\partial_t u_N, \partial_t^2 u_N) \\ &+\frac{\lambda}{2} \dot{\mathsf{K}}(\partial_t u_N, \partial_t u_N) + \mathsf{S}_{\partial_t f} (\partial_t^2 u_N) - \ddot{\mathsf K}(\partial_t u_N, \partial_t^2 u_N) - \dot{\mathsf{A}}(u_N, \partial_t^2 u_N). 
\end{flalign*}
In order to repeat the procedure as in Step 2, we can infer that the only problematic term on the right hand side is the last one since the coercive estimate in Equation \eqref{Eq: Coercive Estimate Updated Energy} allows only a control on $\|\partial_t^2 u_N\|_{L^2_t}$, but not $\|\partial_t^2 u_N\|_{H^1_t}$. To bypass this hurdle observe that
\begin{equation*}
	\frac{d}{dt} \dot{\mathsf{A}} (u_N, \partial_t u_N) - \dot{\mathsf A}(\partial_t u_N, \partial_t u_N) - \ddot{\mathsf A} (u_N, \partial_t u_N) = \dot{\mathsf{A}}(u_N, \partial_t^2 u_N), 
\end{equation*}
which entails, that calling $\mathsf{F}_{N,1} \doteq \mathsf{E}_{N,1}+\dot{\mathsf{A}}_{N,1}$,
\begin{flalign}
	\label{Eq: aux eq step 3}
	\notag \frac{d}{dt} \left[ \mathsf{F}_{N,1}[u_N](t)\right] = &-\frac32 \dot{\mathsf{K}}(\partial_t^2 u_N, \partial_t^2 u_N) + \frac{3}{2}\dot{\mathsf{A}} (\partial_t u_N, \partial_t u_N) + \lambda \mathsf{K}(\partial_t u_N, \partial_t^2 u_N) \\ &\notag + \frac{\lambda}{2} \dot{\mathsf{K}}(\partial_t u_N, \partial_t u_N) + \mathsf{S}_{\partial_t f} (\partial_t^2 u_N) - \ddot{\mathsf K}(\partial_t u_N, \partial_t^2 u_N) \\ &+ \ddot{\mathsf{A}} (u_N, \partial_t u_N). 
\end{flalign}
Here $\ddot{\mathsf A}$ is constructed out of Equation \eqref{Eq: potential form} and it comprises of three terms whose explicit form is not relevant for this analysis and, hence, we avoid writing them explicitly. Equation \eqref{Eq: Estimate on dotA} can be adapted to the case in hand in two ways, useful in the following. In the first we combine it with the Young inequality, while, in the second, with a specific instance of the Peter-Paul counterpart, obtained fixing $\epsilon=\frac{1}{2}$. These entail that there exists $\tilde{c}_0,\tilde{c}^\prime_0,\tilde{c}_1>0$ such that
\begin{subequations}
\begin{equation}\label{Eq: 0 step 3}
|\dot{\mathsf{A}} (u_N, \partial_t u_N)| \leq\frac{\widetilde{C}_2}{2}\left(\|u_N\|^2_{H^1_t} + \|\partial_t u_N\|^2_{H^1_t} \right) \leq \tilde{c}_0\mathsf{E}_{N}[u](t) + \tilde{c}_1\mathsf{E}_{N,1}[u](t),
\end{equation}
\begin{equation}\label{Eq: 0 step 3 v2}
|\dot{\mathsf{A}} (u_N, \partial_t u_N)| \leq\tilde{c}^\prime_0 \mathsf{E}_{N}[u_N](t) + \frac{1}{2} \mathsf{E}_{N,1}[u_N](t)
\end{equation}
\end{subequations}
where, in the second inequality we used Equations \eqref{Eq: estimate of the finite energy from above} and \eqref{Eq: Coercive Estimate Updated Energy}. Focusing on Equation \eqref{Eq: aux eq step 3} we can follow the same idea as in Step 2. This means that, first of all, we bound each term on the right hand side from above and, in turn, we bound the ensuing expression using the coercive estimate. Each inequality below would require a separate proportionality constant, but, to avoid burdening an already heavy notation, we use a single collective constant $\mathfrak{C}_1>0$. Since previously unspecified intermediate inequalities are necessary in two instances, also in this case we use a single collective constant $\widetilde{\mathfrak{C}}_1>0$: 
\begin{flalign*}
\vert \ddot{\mathsf{K}}(\partial_t u_N, \partial_t^2 u_N) \vert &\leq\widetilde{\mathfrak{C}}_1\left(\|\partial_t u_N\|^2_{H^1_t} + \|\partial_t^2 u_N\|^2_{L^2_t} \right) \le \mathfrak{C}_1\, \mathsf{E}_{N,1}[u_N](t) \\
\vert \mathsf{S}_{\partial_t f}(\partial_t^2 u_N) \vert &\leq \|\partial_t f\|_{L^2_t} \, \|\partial_t^2 u_N\|_{L^2_t} \le \frac{1}{2} \|\partial_t f\|_{L^2_t}^2 + \mathfrak{C}_1\, \mathsf{E}_{N,1}[u_N](t) \\
\vert \dot{\mathsf{K}}(\partial_t^2 u_N, \partial_t^2 u_N)\vert & \leq C_2 \|\partial_t^2 u_N\|^2_{L^2_t} \leq \mathfrak{C}_1 \mathsf{E}_{N,1}[u_N](t) \\
\vert {\mathsf{K}}(\partial_t u_N, \partial_t u_N)\vert  & \leq C_1 \|\partial_t u_N\|^2_{H^1_t} \leq\mathfrak{C}_1 \mathsf{E}_{N,1}[u_N](t) \\
\vert \ddot{\mathsf A}(u_N, \partial_t u_N)\vert  & \leq \widetilde{\mathfrak{C}}_1\|\partial_t u_N\|^2_{H^1_t} \leq\mathfrak{C}_1 \mathsf{E}_{N,1}[u_N](t) \\
\vert \mathsf{K}(\partial_t u_N, \partial_t^2 u_N) \vert &\leq C_1 \left(\|\partial_t u_N\|^2_{H^1_t} + \|\partial_t^2 u_N\|^2_{L^2_t}\right) \leq\mathfrak{C}_1 \mathsf{E}_{N,1}[u](t). 
\end{flalign*}
Combining all the above estimates with Equation \eqref{Eq: 0 step 3}, Equation \eqref{Eq: aux eq step 3} leads to 
\begin{equation}\label{Eq: star step 5}
\frac{d}{dt}\mathsf{F}_{N,1}[u_N](t)  \leq \mathfrak{C}_2 \mathsf{E}_{N,1}[u_N] (t) + \mathfrak{C}_2^\prime \mathsf{E}_{N}[u_N](t) + \frac{1}{2} \|\partial_t f\|^2_{L^2_t},
\end{equation}
where $\mathfrak{C}_2,\mathfrak{C}^\prime_2$ are two suitable positive constants. Since Equation \eqref{Eq: 0 step 3 v2} entails that $\dot{\mathsf{A}} (u_N, \partial_t u_N) \geq -  \tilde{c}^\prime_0 \mathsf{E}_{N}[u_N](t) - \frac{1}{2} \mathsf{E}_{N,1}[u_N](t)$, adding to both sides $\mathsf{E}_{N,1}[u_N](t)$ yields
\begin{equation}\label{Eq: Delta step 5}
	\mathsf{E}_{N,1}[u_N](t) \le 2\mathsf{F}_{N,1}[u_N](t) + 2\tilde{c}^\prime_0 \, \mathsf{E}_N[u_N](t).
\end{equation}
Combined with Equation \eqref{Eq: star step 5} this leads to
\begin{equation*}
	\frac{d}{dt}\mathsf{F}_{N,1}[u_N](t)  \le 2\mathfrak{C}_2\mathsf{F}_{N,1}[u_N](t) + (2\tilde{c}^\prime_0+\mathfrak{C}_1) \, \mathsf{E}_N[u_N](t)  + \frac{1}{2} \|\partial_t f\|^2_{L^2_t}.
\end{equation*}
By Gr\"onwall lemma, 
\begin{equation*}
	\mathsf{F}_{N,1}[u_N](t) \le e^{2\mathfrak{C}_2(t-t_0)}\mathsf{F}_{N,1}[u_N](t_0)+\int_{t_0}^t e^{2\mathfrak{C}_2(t-s)} \left[(2\tilde{c}^\prime_0+\mathfrak{C}_1)\mathsf{E}_N[u_N](s) + \frac{1}{2} \|\partial_s f\|^2_{L^2_s} \right] \, ds.
\end{equation*}
\begin{remark}\label{Rem: Initial Data}
At this stage it is tempting to conclude that $\mathsf{F}_{N,1}[u_N](t_0)$ is a bounded quantity determined by the initial data. This is true with one caveat, namely $\mathsf{F}_{N,1}$ contains in particular $\mathsf{E}_{1,N}$ which is in turn determined by $\partial_t u_N$ and $\partial^2_t u_N$. At the $t=t_0$ the first term is determined by the initial datum in Proposition \ref{Prop: Existence of Solutions}, namely $\partial_t u_N(t_0)=P_N(u_1)$. In order to determine $\partial^2_t u_N[t_0]$ we focus on Equation \eqref{Eq: Aux2}. At time $t=t_0$, it holds
$$\mathsf{K}(\partial_t^2 u_N(t_0), v) = - \dot{\mathsf{K}} (\partial_t u_N(t_0), v) - \mathsf{A}(u_N(t_0), v) + \mathsf{S}_f(v).$$
Since this identity holds true for all $v\in H^1(\Sigma)$ and since $\mathsf{K}$ is invertible, we can extract $$\partial^2_t u_N[t_0]=\mathsf{K}^{-1}\left( - \dot{\mathsf{K}} (\partial_t u_N(t_0))- \mathsf{A}(u_N(t_0))+\mathsf{S}_f\right),$$
where all time dependent quantities are evaluated at $t=t_0$.
\end{remark}

\noindent Taking into account Step 2 and Equation \eqref{Eq: Delta step 5}, we can conclude
\begin{equation*}
\sup_{t \in I} \sup_{N \in \mathbb{N}} \mathsf{E}_{N}[u_N](t) < \infty,\Longrightarrow
\sup_{t \in I} \sup_{N \in \mathbb{N}} \mathsf{F}_{N,1}[u_N](t) < \infty\Longrightarrow
\sup_{t \in I} \sup_{N \in \mathbb{N}} \mathsf{E}_{N,1}[u_N](t) < \infty.
\end{equation*}
Using the coercive estimate in Equation \eqref{Eq: Coercive Estimate Updated Energy}, we can infer 
\begin{equation}\label{Eq: additional regularity of Galerkin approx}
\sup_{N \in \mathbb{N}} \|\partial_t u_N\|_{L^{\infty}(I; H^1_t)} < \infty\quad\textrm{and}\quad\sup_{N \in \mathbb{N}} \|\partial_t^2 u_N\|_{L^{\infty}(I; L^2_t)} < \infty.
\end{equation}
In other words, the sequences $\{\partial_t u_N\}_{N\in\mathbb{N}}$ and $\{\partial_t^2 u_N\}_{N\in\mathbb{N}}$ are uniformly bounded in $L^{\infty}(I;H^1_t)$ and $L^{\infty}(I;L^2_t)$, respectively. At this stage we can exploit the equivalence between the $L^2$- and $H^1$-norms on $\Sigma$ induced by $h_t$ and by the reference metric $h$. Hence, using the same rationale as in Step 2, we can infer that there exists a weakly-$*$ convergent subsequence, still denoted by the same symbol with a mild abuse of notation, such that 
\begin{equation*}
	\partial_t u_N \overset{\ast}{\rightharpoonup} v_1 \in L^{\infty}(I;H^1(\Sigma)), \qquad \partial_t^2 u_N \overset{\ast}{\rightharpoonup} v_2 \in L^{\infty}(I;L^2(\Sigma)).
\end{equation*}
Yet, in Step 2. we established that $\partial_t u_N \overset{\ast}{\rightharpoonup} \partial_t u \in L^{\infty}(I; L^2(\Sigma))$. Uniqueness of the limit combined with the continuous embedding $H^1(\Sigma)\hookrightarrow L^2(\Sigma)$ entails that
\begin{equation}\label{Eq: additional regulatity of u}
v_1 = \partial_t u \in L^\infty(I; H^1(\Sigma)). 
\end{equation}
At the same time, one can adapt with minor modifications Corollary \ref{Cor: Convergence of Galerkin} to this case establishing that $\partial_t^2 u = v_2$ in $\mathcal{D}'(I; L^2_t)$. Yet, since $v_2 \in L^\infty(I; L^2(\Sigma))$ and, thus, $$\partial_t^2 u = v_2 \in L^\infty(I, L^2(\Sigma)),$$ it follows that $u \in W^{1, \infty} (I; H^1(\Sigma))\subset C^0(I; H^1(\Sigma))$ and $\partial_t u \in W^{1, \infty}(I; L^2(\Sigma))\subset C^0(I; L^2(\Sigma))$. We can therefore conclude that 
\begin{equation}\label{Eq: regularity of u C1}
u \in C^1(I; L^2(\Sigma)) \cap C^0(I; H^1(\Sigma)).
\end{equation}

\noindent The same procedure used to control the second derivative $\partial^2_t u_N(t)$ can be applied with minor modifications to $\partial^k u_N(t)$ for all $k>2$. For this reason, we avoid repeating it and we limit ourselves to pointing out that it yields
\begin{equation}\label{Eq: regularity of u higher order}
u \in C^k(I; L^2(\Sigma)) \cap C^{k-1}(I; H^1(\Sigma)), \; \forall k \in \mathbb{N} \Rightarrow u \in C^\infty(I; H^1(\Sigma)). 
\end{equation}

\subparagraph{Enhancing the spatial regularity:} In order to gain better control along the spatial directions, our strategy consists of regarding the weak formulation of the Klein-Gordon equation as a weak elliptic problem on $\Sigma$ with Robin boundary conditions. To this end, consider an arbitrary but fixed solution $u\in C^\infty(I; H^1(\Sigma))$ of Equation \eqref{Eq: weak formulation 2} and, for all $v\in H^1(\Sigma)$, let
$$g_0(v)(t) \doteq \mathrm{S}_f(v)-\frac{d}{dt}\mathsf K(\partial_t u,v).$$
Hence Equation \eqref{Eq: weak formulation 2} reads
\begin{equation}\label{Eq: Weak Elliptic Robin problem}
\mathsf{A}_u(v)=g_0(v)(t),
\end{equation}
where the subscript $u$ recalls that this element cannot vary. Furthermore its principal symbol reads
\begin{equation*}
	\sigma_p(\mathrm{A}) = \sqrt{\beta} h^{ij}_t k_i k_j, 
\end{equation*}
which is uniformly elliptic on $I\times\Sigma$. Since from the preceding analysis we know that $g_0(t)\in H^1(\Sigma)$ uniformly in $t\in I$, we can apply elliptic regularity to infer that there exists a time-independent $C>0$ such that
$$\|u(t)\|_{H^3}\leq C\left(\|g_0(t)\|_{H^1}+\|u(t)\|_{L^2}\right),$$
which entails that $u\in L^\infty(I; H^3(\Sigma))$.

To show the interplay with smoothness in time, we differentiate Equation \eqref{Eq: Weak Elliptic Robin problem} with respect to $t$. For every fixed $v\in H^1(\Sigma)$ and $t\in I$, we obtain
\begin{equation*}
	\mathrm{A}(\partial_t u,v)
	=
	-\dot{\mathrm{A}}(u,v)
	+
	\frac{d}{dt}g_0(v).
\end{equation*}
Observe that the right-hand side defines, for each fixed $t$, the source term of an elliptic problem for $\partial_tu$. 
Since $u\in L^\infty(I;H^3(\Sigma))$ and all coefficients depend smoothly on $t$, both the bulk source and the boundary datum allow to apply elliptic regularity, which yields 
\begin{equation*}
	\partial_tu\in L^\infty(I;H^3(\Sigma)).
\end{equation*}
Repeating the same argument for higher time derivatives gives
\begin{equation*}
	\partial_t^k u\in L^\infty(I;H^3(\Sigma)),\qquad \forall k\in\mathbb N .
\end{equation*}
Combining this result with the smoothness in time previously established in $H^1(\Sigma)$, we obtain
\begin{equation*}
	\partial_t^k u\in L^\infty(I;H^3(\Sigma))\cap C^\infty(I;H^1(\Sigma)),\qquad \forall k\in\mathbb N .
\end{equation*}
Hence, for any arbitrary but fixed $k$ we can infer that $\partial_t^k u\in W^{1,\infty}(I; H^3(\Sigma))\subset C^0(I; H^3(\Sigma))$. In other words $u\in C^\infty(I; H^3(\Sigma))$. Repeating the elliptic regularity procedure iteratively, we end up with $u\in C^\infty(I; H^\infty(\Sigma))$, hence, by Sobolev embedding,
\begin{equation*}
	u \in C^\infty(I \times \Sigma). 
\end{equation*}
Since every $p\in\mcM$ lies in a cylinder of the form $I\times\Sigma$, we can extrapolate $u\in C^\infty(\mcM)$.

\paragraph{Step 4 -- Uniqueness:} In this step, we complement the preceding one by establishing uniqueness of the solution of Equation \eqref{Eq: weak formulation 2}. By linearity, this is tantamount to showing that vanishing initial data $u_0=u_1=0$ and vanishing source $f=0$ imply $u=0$. We can consider Proposition \ref{Prop: Causal Support of Solutions} and, in particular, the energy estimate in Equation \eqref{Eq: Energy estimate 2} to conclude, as for causal propagation, that $u$ must vanish.

\paragraph{Step 5 -- Dropping Compactness of $\Sigma$:} 

In this final step, we discuss how to remove the compactness hypothesis of $\Sigma$. Let $I=[t_0,t_1]$ and recall that
$u_0,u_1\in C^\infty_0(\mathring{\Sigma}_{t_0})$, while $f\in C^\infty_0(\mcM)$. Choose $q\in I\times\Sigma$ as well as $p\in I^+(q)$ with $t(p)>t_1$ defining
\begin{equation}\label{Eq: Dp}
D_p:=J^-(p)\cap(I\times\Sigma).
\end{equation}
Being $(\mcM,g)$ globally hyperbolic this is a compact set. Choose $\mathcal{C}\subset\Sigma$ to be a smooth and compact set such that $D_p\subset I\times\mathring{\mathcal{C}}$.

Observe that, in comparison to Steps 1--4, here, we have an additional portion of a timelike boundary in addition to $(I\times\mathcal{C})\cap\partial\mcM$. Therein we impose for mere convenience the same Robin boundary conditions as on $\partial\mcM$. Barring minor modifications, we can employ the Step 1--4 to infer that there exists a unique smooth solution of Equation \eqref{Eq: Cauchy problem on non-static spacetime}
$$u_{p,\mathcal C}\in C^\infty(I\times\mathcal C),$$
where the subscripts serve the purpose to recall the underlying choices. To establish that the solution does not depend on $\mathcal{C}$ consider a second compact enlargement $\mathcal{C}\subset\mathcal{C}^\prime\subset\Sigma$. Per construction $D_p\subset I\times\mathring{\mathcal{C}}^\prime$. Consider  $u_{p,\mathcal{C}^\prime}$ solution of Equation \eqref{Eq: Cauchy problem on non-static spacetime} with the same initial data and source. The difference $u_{p,\mathcal{C}^\prime}-u_{p,\mathcal{C}}$ restricted to $D_p$ solves the homogeneous Klein-Gordon equation with vanishing initial data thereon. Yet, the energy estimate in Appendix \ref{App: A} entail that the only smooth solution must be the vanishing one. In other words,
$$u_{p,\mathcal{C}}|_{D_p}=u_{p,\mathcal{C}^\prime}|_{D_p}.$$
We may consequently denote this common restriction simply by $u_p$. Let us now allow $p$ to vary and let us choose $p^\prime\in\mcM$ such that $t(p')>t_1$. Letting $D_{p^\prime}$ be defined as per Equation \eqref{Eq: Dp}, if $r\in D_p\cap D_{p^\prime}\neq\emptyset$, both $u_p$ and $u_{p^\prime}$ solve the same initial-boundary value problem on $D_r$. Once more, using Appendix \ref{App: A}, we can infer that they must coincide in a neighbourhood of $r$, that is $u_p=u_{p^\prime}$. For every $x\in I\times\Sigma$, choose $p$ with $t(p)>t_1$ such that $x\in I^-(p)$ and, consequently, $x\in D_p$. Here $I^-$ denotes the chronological past. The family $\{D_p\}$ as $p$ varies therefore covers $I\times\Sigma$. Define $u(x)\doteq u_p(x)$, where $p$ is any point such that $x\in D_p$. Since $u_p=u_{p'}$ whenever $D_p\cap D_{p^\prime}\neq\emptyset$ this definition is independent of the choice of $p$. To conclude, we observe that the chronological past $I^-(p)$ is open. Hence there must exists an open neighbourhood $\mathcal{U}_x$ of $x$ such that
$\mathcal U_x\subset I^-(p)\cap(I\times\Sigma)\subset D_p$. In other words
$$u|_{\mathcal{U}_x}=u_p|_{\mathcal U_x}\in C^\infty(\mathcal{U}_x).$$
Since $x\in I\times \Sigma$ is arbitrary, it follows that
$$u\in C^\infty(I\times\Sigma).$$
Furthermore, by construction,
$$Pu=f\quad\textrm{and}\quad(\nabla_n+\kappa)u|_{\partial\mcM}=0,$$
with the prescribed initial data. As at the end of Step 3 arbitrariness of the interval $I=[t_0,t_1]$ and uniqueness entail that we can patch up the solutions to conclude that
$$u\in C^\infty(J^+(\Sigma_{t_0})).$$
At last, the finite-propagation estimate of Proposition \ref{Prop: Causal Support of Solutions} yields
$$\operatorname{supp}u\subset J^+\left(\operatorname{supp}(u_0)\cup\operatorname{supp}(u_1)\cup\operatorname{supp}(f)\right).$$
Here we have considered forward propagation, but the same reasoning applies to the counterpart backwards in time yielding
$$u\in C^\infty(J^+(\Sigma_{t_0}))\cap C^\infty(J^-(\Sigma_{t_0}))\equiv C^\infty(\mcM).$$


\begin{thebibliography}{100}
\addcontentsline{toc}{section}{References}

\bibitem[AGS19]{AmmanGrosseNistor}
B. Amman, N. Gro\ss e and V. Nistor, 	
\emph{``Well-posedness of the Laplacian on manifolds with boundary and bounded geometry''},
Math. Nachr. {\bf 292} (2019), no. 6, 1213.
arXiv:1611.00281 [math-AP]

\bibitem[AC04]{Albuq_2004}
L.~C.~de Albuquerque and R.~M.~Cavalcanti,
{\it ``Casimir effect for the scalar field under Robin boundary conditions: A functional integral approach,''}
J. Phys. A: Math. Gen. \textbf{37} (2004), 7039, arXiv:hep-th/0311052

\bibitem[Bar15]{Bar15}
C.~B\"ar,
{\it``Green-hyperbolic operators on globally hyperbolic spacetimes''},
Comm. Math. Phys. \textbf{333}, (2015), 1585,
arXiv:1310.0738 [math-ph]

\bibitem[BGP07]{Baer_2007}
C.~B\"ar, N.~Ginoux and F.~Pf\"affle,
Wave Equations on Lorentzian Manifolds and Quantization,
ESI Lectures in Mathematics and Physics, EMS (2007), 194p

\bibitem[BSS15]{Bellucci_2015}
S.~Bellucci, A.~A.~Saharian and N.~A.~Saharyan,
{\it ``Casimir effect for scalar current densities in topologically nontrivial spaces,''}
Eur. Phys. J. C \textbf{75} (2015), 378, arXiv:1507.08832 [hep-th]


\bibitem[BFS+02]{Bordag_2002}
M.~Bordag, H.~Falomir, E.~M.~Santangelo and D.~V.~Vassilevich,
{\it ``Boundary dynamics and multiple reflection expansion for Robin boundary conditions,''}
\emph{Phys. Rev. D} \textbf{65} (2002), 064032, arXiv: hep-th/0111073

\bibitem[BDF+15]{brunetti2}
R.~Brunetti, C.~Dappiaggi, K.~Fredenhagen and J.~Yngvason, 
Advances in algebraic quantum field theory, 
{\it Springer} (2015), 455p

\bibitem[BFK96]{Brunetti_1996}
R.~Brunetti, K.~Fredenhagen and M.~Kohler, 
{\it ``The microlocal spectrum condition and Wick polynomials of the free fields on curved spacetimes''}, 
Commun. Math. Phys. {\bf 180} (1996), 633--652, 	arXiv:gr-qc/9510056

\bibitem[Cap13]{Caponio13}
E. Caponio
{\it ``Infinitesimal and local convexity of a hypersurface in a semi-Riemannian manifold'',}
Springer Proceedings in Mathematics \& Statistics \textbf{26} (2013), 163, arXiv:1201.0147 [math.DG]

\bibitem[CJS11]{Caponio}
E. Caponio, M. A. Javaloyes and Miguel S\`{a}nchez,
{\it ``On the interplay between Lorentzian Causality and Finsler metrics of Randers type'',}
Rev. Mat. Iberoamericana {\bf 27}, 3, (2011) 919, arXiv:0903.3501 [math.DG]

\bibitem[Cha75]{Chazarin}
J. Chazarain, 
{\it ``Paramétrix du problème mixte pour l'équation des ondes à l'intérieur d'un domaine convexe pour les bicaractéristiques'',}
Journées Équations aux Dérivées Partielles (1975), 165

\bibitem[CB23]{ClarkBloch_2023}
W.~Clark and A.~Bloch,
{\it ``Invariant forms in hybrid and impact systems and a taming of Zeno''},
Arch. Ration. Mech. Anal. \textbf{247} (2023), Art.~13, arXiv:2101.11128 [math.DS]

\bibitem[CS26]{Contini_2026}
A.P. Contini and A. Strohmaier, 
{\it ``Hadamard states for spacetimes with timelike boundaries''}, arXiv: 2609.13358 [math.AP],

\bibitem[CDG26]{CosteriDappiaggiGoi2026}
B.~Costeri, C.~Dappiaggi and M.~Goi,
{\it ``Conservation Law and Trace Anomaly for the Stress-Energy Tensor of a Self-interacting Scalar Field''}, 
Ann. Henri Poincar\'e, \textbf{27} (2026), 1843,
arXiv:2411.07109 [math-ph]

\bibitem[CDJ+25]{Costeri_25}
B.~Costeri, C.~Dappiaggi, B. A.~Ju\'arez-Aubry and R.D.~Singh,
{\it ``The Hadamard parametrix on half-Minkowski with Robin boundary conditions: Fundamental solutions and Hadamard states'',} arXiv:2509.26035 [math-ph], to appear on Ann. Henri Poincar\'e.

\bibitem[DDF20]{Dappiaggi-Drago_2019}
C.~ Dappiaggi, N.~ Drago and H.~R.~C.~ Ferreira, 
{\it  ``Fundamental solutions for the wave operator on static Lorentzian manifolds with timelike boundary''}, 
Lett. Math. Phys. \textbf{109} 10 (2019), 2157, 
arXiv:1804.03434 [math-ph]

\bibitem[DF17]{Dappiaggi:2017wvj}
C.~Dappiaggi and H.~R.~C.~Ferreira,
{\it ``On the algebraic quantization of a massive scalar field in anti-de-Sitter spacetime''},
Rev. Math. Phys. \textbf{30} (2017) no.02, 1850004, arXiv:1701.07215 [math-ph]

\bibitem[DJS26]{Dappiaggi_26}
C.~Dappiaggi, B.~A.~Ju\'arez-Aubry and R.~D.~Singh, 
In preparation

\bibitem[DM21]{Dappiaggi-Marta_2021}
C.~Dappiaggi and A.~Marta,
{\it ``Fundamental solutions and Hadamard states for a scalar field with arbitrary boundary conditions on an asymptotically AdS spacetimes''},
Math. Phys. Anal. Geom. \textbf{24} (2021) no.3, 28, arXiv:2101.10290 [math-ph]

\bibitem[DM20]{Dappiaggi-Marta_2020}
C.~Dappiaggi and A.~Marta,
{\it``A generalization of the propagation of singularities theorem on asymptotically anti-de Sitter spacetimes''},
Math. Nachr. \textbf{295} (2022) no.10, 1934-1968, arXiv:2006.00560 [math-ph]

\bibitem[DNP16]{Dappiaggi-Nosari_2016}
C.~Dappiaggi, G.~Nosari and N.~Pinamonti,
{\it ``The Casimir effect from the point of view of algebraic quantum field theory''},
Math. Phys. Anal. Geom. \textbf{19} (2016) no.2, 12,
arXiv:1412.1409 [math-ph]

\bibitem[DF05]{Decanini:2005eg}
Y.~Decanini and A.~Folacci,
{\it ``Hadamard renormalization of the stress-energy tensor for a quantized scalar field in a general spacetime of arbitrary dimension''},
Phys. Rev. D \textbf{78}, 044025 (2008),
arXiv:gr-qc/0512118 [gr-qc]

\bibitem[DH72]{Duistermaat_1972}
J.J. Duistermaat and L. H\"ormander, 
Fourier Integral Operators II,
{\it Acta Mathematica} 128 (1972), 183 -- 269 

\bibitem[DW19]{Dybalski:2018egv}
W.~Dybalski and M.~Wrochna,
\textit{``A mechanism for holography for non-interacting fields on anti-de Sitter spacetimes''},
Class. Quant. Grav. \textbf{36} (2019) no.8, 085006,
arXiv:1809.05123 [math-ph]

\bibitem[Eich91]{Eich91}
J. Eichhorn,
\emph{``The Banach manifold structure of the space of metrics on noncompact manifolds,''}
Diff. Geom. Appl. {\bf 1} (1991) 89

\bibitem[EUW26]{Enciso_2026}
A. Enciso, G. Uhlmann and M. Wrochna, M, 
\textit{``The Dirichlet-to-Neumann map on asymptotically anti-de Sitter spaces and holography''}, arXiv:2512.03587 [math.AP]

\bibitem[Eva10]{Evans}
L. C. Evans,
Partial Differential Equations,
2nd ed., Graduate Studies in Mathematics, Vol. 19,
American Mathematical Society, Providence, RI, (2010), 749p

\bibitem[FS22]{Fang_2022}
Y.L. Fang and A. Strohmaier, 
{\it ``A Mathematical Analysis of Casimir Interactions I: The Scalar Field''}, 
Ann. Henri Poincar{\'e}, \textbf{23} 4 (2022),
arXiv:2104.09763 [math-ph]

\bibitem[FV13]{Fewster:2013lqa}
C.~J.~Fewster and R.~Verch,
\textit{``The Necessity of the Hadamard Condition''},
Class. Quant. Grav. \textbf{30} (2013), 235027
arXiv:1307.5242 [gr-qc]

\bibitem[FS21]{Fournodavlos:2021eye}
G.~Fournodavlos and J.~Smulevici,
{\em ``The Initial Boundary Value Problem in General Relativity: The Umbilic Case,''}
Int. Math. Res. Not. \textbf{2023} (2023) no.5, 3790
arXiv:2104.08851 [gr-qc]

\bibitem[Fri75]{Friedlander_1975}
F.G. Friedlander, 
The Wave Equation on a Curved Space-Time,
Cambridge University Press (1975), 282p

\bibitem[FNW81]{Fulling:1981cf}
S.~A.~Fulling, F.~J.~Narcowich and R.~M.~Wald,
{\em ``Singularity Structure of the Two Point Function in Quantum Field Theory in Curved Space-time. {II},''}
Annals Phys. \textbf{136} (1981), 243--272

\bibitem[GW22]{Gannot_2022}
O.~Gannot and M.~Wrochna,
{\it``Propagation of Singularities on AdS Spacetime for General Boundary Conditions and the Holographic Hadamard Condition''},
J. Inst. Math. Jussieu \textbf{21} (2022) no.1, 67,
arXiv:1812.06564 [math.AP]

\bibitem[Gar64]{Garabedian_1964}
P. Garabedian, 
Partial differential equations,
New York, Wiley (1964), 682p 

\bibitem[GOW17]{Gérard}
C. Gérard, O. Oulghazi and M. Wrochna,
{\it ``Hadamard States for the Klein–Gordon Equation on Lorentzian Manifolds of Bounded Geometry''}
Comm. Math. Phys. \textbf{352} (2017) no.2, 519, arXiv:1602.00930 [math-ph]

\bibitem[GM22]{Ginoux_2022}
N. Ginoux and S. Murro,
{\it ``On the Cauchy problem for Friedrichs systems on globally hyperbolic manifolds with timelike boundary''},
Adv. Differ. Equ., \textbf{27}, (2022), 7-8,
arXiv:2007.02544 [math.AP]

\bibitem[Gor73]{Gordon73}
W.~B.~Gordon,
{\it ``An analytical criterion for the completeness of Riemannian manifolds''},
Proc. Amer. Math. Soc. \textbf{37} (1973), 221

\bibitem[Gor74]{Gordon74}
W.~B.~Gordon,
{\it ``Corrections to ``An analytical criterion for the completeness of	Riemannian manifolds''},
Proc. Amer. Math. Soc. \textbf{45} (1974), 130

\bibitem[GM18]{Grosse}
N. Gro\ss e and S. Murro,
\textit{``The well-posedness of the Cauchy problem for the Dirac operator on globally hyperbolic manifolds with timelike boundary'',}
Doc. Math. {\bf 25}, (2020), 737,
arXiv:1806.06544 [math.DG]

\bibitem[Gru68]{Grubb68}
G. Grubb, 
\emph{``A characterization of the non-local boundary value problems associated with an elliptic operator''}, 
Ann. Sc. Norm. Sup. Pisa (3) {\bf 22} (1968) 425

\bibitem[HFS20]{Ak_Hau_2020}
L. A. Hau, J. L. Flores, and M. Sánchez, 
{\it  ``Structure of globally hyperbolic spacetimes with timelike boundary''},
Rev. Mat. Iberoam. \textbf{37}, no.1 (2021) 45--94, arXiv:1808.04412 [gr-qc]

\bibitem[HU19]{HintzUhlmann}
P.~Hintz and G.~Uhlmann,
\emph{Reconstruction of Lorentzian manifolds from boundary light observation sets},
Int. Math. Res. Not. IMRN \textbf{2019}, no.~22, 6949--6987, arXiv:1705.01215 [math.DG]

\bibitem[HS26]{Strohmaier_2026}
A. Hofmann and A. Strohmaier,
{\it ``Relative trace formulas for obstacle scattering with Neumann and transmission boundary conditions''}, (2026), arXiv: 2605.21201 [math-ph]

\bibitem[HW01]{Hollands_2001}
S.~Hollands and R.~M.~Wald,
{\it ``Local Wick polynomials and time ordered products of quantum fields in curved spacetime''},
Commun. Math. Phys. \textbf{223} (2001), 289, 	arXiv:gr-qc/0103074

\bibitem[HW02]{Hollands_2002}
S.~Hollands and R.~M.~Wald,
{\it ``Existence of local covariant time ordered products of quantum fields in curved spacetime''}
Commun. Math. Phys. \textbf{231} (2002), 309--345, arXiv: gr-qc/0111108

\bibitem[H\"or90]{Hormander_1990}
L. H\"ormander, 
The Analysis of Linear Partial Differential Operators I, Springer (1990), 440p

\bibitem[Joh82]{FritzJohn}
F. John,
Partial Differential Equations,
4th ed, Springer (1982), 249p

\bibitem[IS24]{Islam_2024}
O. Islam and A. Strohmaier,
{\it ``On microlocalisation and the construction of Feynman Propagators for normally hyperbolic operators''}, 
Comm. in Anal. and Geom. \textbf{32} no.~7 (2024), 1811, arXiv:2012.09767 [math.AP]

\bibitem[Jun95]{Junker}
W.~Junker,
{\em``Adiabatic vacua and Hadamard states for scalar quantum fields on curved space-time,''}
PhD thesis, U. of Hamburg (1995), arXiv:hep-th/9507097 [hep-th]

\bibitem[Kay92]{Kay1992FLocality}
B.~S. Kay, \emph{The Principle of Locality and Quantum Field Theory on (non globally hyperbolic) Curved Spacetimes}, Rev. Math. Phys. \textbf{4} (Spec. Iss.) (1992), 167--195

\bibitem[KW91]{Kay_Wald:1991}
B.~S.~Kay and R.~M.~Wald,
{\it ``Theorems on the Uniqueness and Thermal Properties of Stationary, Nonsingular, Quasifree States on Space-Times with a Bifurcate Killing Horizon''},
Phys. Rept. \textbf{207} (1991), 49--136

\bibitem[Lee13]{Lee_2013}
J.M.~Lee, 
Introduction to smooth manifolds,
Graduate Texts in Mathematics, Springer-Verlag, \textbf{218} (2013), 726p

\bibitem[Lee18]{Lee_2018}
J.~M.~Lee, 
Introduction to Riemannian manifolds,
Graduate Texts in Mathematics, Springer-Verlag, \textbf{176} (2018), 447p

\bibitem[MS78]{Melrose_1978}
R.B. Melrose and J. Sj\"ostrand, 
{\it ``Singularities of boundary value problems. I.''}, 
Comm. Pure Appl. Math., \textbf{31}: 593-617 (1978)

\bibitem[MS82]{Melrose_1982}
R.B. Melrose and J. Sj\"ostrand, 
{\it ``Singularities of boundary value problems. II.''}, 
Comm. Pure Appl. Math., \textbf{35}: 129-168 (1982)

\bibitem[Mor00]{Moretti_2000}
V.~Moretti,
{\it ``Proof of the symmetry of the off diagonal Hadamard / Seeley-deWitt's coefficients in C$^\infty$ Lorentzian manifolds by a 'local Wick rotation' ''},
Comm. Math. Phys. \textbf{212} (2000), 165, arXiv:gr-qc/9908068 [gr-qc]

\bibitem[Mor03]{Moretti2003}
V.~Moretti,
{\it ``Comments on the stress-energy tensor operator in curved spacetime''}
Commun. Math. Phys. \textbf{232} (2003), 189--221, arXiv: gr-qc/0109048

\bibitem[Mor17]{Moretti_2017}
V.~Moretti,
Spectral Theory and Quantum Mechanics, Springer, 3rd ed. (2017) 950p

\bibitem[M\"ul12]{Muller}
O.~M\"uller,
{\it ``Asymptotic flexibility of globally hyperbolic manifolds''},
Comptes Rendus Mathématique \textbf{350} (2012), 421,
arXiv:1110.1037 [math.DG]

\bibitem[NO61]{NomizuOzeki}
K.~Nomizu and H.~Ozeki,
{\it ``The existence of complete Riemannian metrics'',}
Proc. Amer. Math. Soc. \textbf{12} (1961), 889.

\bibitem[O'N83]{ONeill83}
B. O'Neill,
Semi-Riemannian Geometry With Applications to Relativity,
Academic Press (1983), 468p

\bibitem[Pet98]{Petersen}
P. Petersen,
Riemannian Geometry,
Springer (1998), 432p

\bibitem[PS17]{Petkov}
V. Petkov and L. Stoyanov, 
Geometry of the Generalized Geodesic Flow and Inverse Spectral Problems, 2nd ed., Wiley (2017), 432p

\bibitem[PPV11]{Poisson_2011}
E. Poisson, A. Pound and I. Vega, 
{\it  ``The Motion of Point Particles in Curved Spacetime''},
Living Reviews in Relativity, \textbf{14} (2011), 1, arXiv:1102.0529 [gr-qc]

\bibitem[Rad96a]{Radzikowski_1996}
M.~J.~Radzikowski,
{\it ``Micro-local approach to the Hadamard condition in quantum field theory on curved space-time''},
Comm. Math. Phys. \textbf{179} (1996), 529

\bibitem[Rad96b]{Radzikowski_1996_1}
M.~J.~Radzikowski,
{\it ``A Local to global singularity theorem for quantum field theory on curved space-time''},
Comm. Math. Phys. \textbf{180} (1996), 1

\bibitem[Rej16]{Rejzner}
K.~Rejzner, 
Perturbative Algebraic Quantum Field Theory, 
{\it Springer}, (2016), 180p

\bibitem[SV00]{Sahlmann:2000fh}
H.~Sahlmann and R.~Verch,
{\em``Passivity and microlocal spectrum condition,''}
Commun. Math. Phys. \textbf{214} (2000), 705, arXiv:math-ph/0002021 [math-ph]

\bibitem[San07]{Sanchez:2007rx}
M.~Sanchez,
{\it ``Remarks on the notion of global hyperbolicity,''}
EAS Publ. Ser. \textbf{30} (2008), 201, arXiv:0712.1933v2 [gr-qc]

\bibitem[Sch01]{Schick}
T. Schick,
\emph{``Manifolds with boundary and of bounded geometry''}, Math. Nachr. {\bf 233} (2001) 103, arXiv: math/0001108

\bibitem[SZ17]{Sogge}
C. D. Sogge and S. Zelditch
\emph{``Sup norms of Cauchy data of eigenfunctions on manifolds with concave boundary''},
Comm. PDE {\bf 42} (2017), 1249,
arXiv:1411.1035 [math.AP]

\bibitem[Tay78]{Taylor_1978}
M.~E.~Taylor,
{\it ``Propagation, reflection, and diffraction of singularities of solutions to wave equations''},
Bull. Amer. Math. Soc. \textbf{84} (1978), 589

\bibitem[Tre67]{Treves}
F.~Treves, Topological Vector Spaces, Distributions and Kernels, (1967) Academic Press, 565p

\bibitem[Wro21]{Wrochna}
M.~Wrochna,
{\it ``The holographic Hadamard condition on asymptotically anti-de Sitter spacetimes''},
Lett. Math. Phys. \textbf{107} (2017) no.12, 2291
arXiv:1612.01203 [math-ph]

\bibitem[WYZ24]{Wunsch}
J.~Wunsch, M.~Yang amd Y.~Zou
{\it ``The Morse index theorem for mechanical systems with reflections''},
Nonlinearity {\bf 37} (2024), 085006, arXiv:2308.16162 [math.DG]


\end{thebibliography}
\end{document}